\newcommand{\um}{\sfrac{1}{2}}
	\newcommand{\bra}[1]{\ensuremath{\left\langle#1\right|}}
	\newcommand{\ket}[1]{\ensuremath{\left|#1\right\rangle}}
	\newcommand{\matrixel}[3]{\ensuremath{\left\langle #1 \middle| #2 \middle| #3 \right\rangle}}
	\newcommand{\smatrixel}[3]{\ensuremath{\langle #1 | #2 | #3 \rangle}}
	\newcommand{\id}{\mathbbm{1}}
\renewcommand{\[}{\begin{equation}}
\renewcommand{\]}{\end{equation}}
\theoremstyle{thmstyleone}%
\newtheorem{theorem}{Theorem}[section]
\newtheorem{proposition}[theorem]{Proposition}
\newtheorem{corollary}[theorem]{Corollary}
\newtheorem{lemma}[theorem]{Lemma}
\newtheorem{assumption}[theorem]{Assumption}
\theoremstyle{thmstyletwo}%
\newtheorem{remark}{Remark}[section]%
\theoremstyle{thmstylethree}%
\newtheorem{definition}{Definition}[section]%
\numberwithin{equation}{section}
\newcommand{\modifica}[1]{#1}               
\newcommand{\rimodifica}[1]{#1}               
\begin{document}

\title[Trained quantum neural networks are Gaussian processes]{Trained quantum neural networks are Gaussian processes}

\author*[1,2,3]{\fnm{Filippo} \sur{Girardi}}\email{filippo.girardi@sns.it}

\author[4]{\fnm{Giacomo} \sur{De Palma}}\email{giacomo.depalma@unibo.it}

\affil[1]{\orgdiv{Korteweg--de Vries Institute for Mathematics}, \orgname{University of Amsterdam}, \orgaddress{\street{Science Park 105-107}, \city{Amsterdam}, \postcode{1098 XG}, \country{The Netherlands}}}

\affil[2]{\orgname{QuSoft}, \orgaddress{\street{Science Park 123}, \city{Amsterdam}, \postcode{1098 XG}, \country{The Netherlands}}}

\affil[3]{\orgname{Scuola Normale Superiore}, \orgaddress{\street{Piazza dei Cavalieri 7}, \postcode{56126}, \city{Pisa (PI)}, \country{Italy}}}

\affil[4]{\orgdiv{Department of Mathematics}, \orgname{University of Bologna}, \orgaddress{\street{Piazza di Porta San Donato 5},  \postcode{40126}, \city{Bologna (BO)}, \country{Italy}}}

\abstract{We study quantum neural networks made by parametric one-qubit gates and fixed two-qubit gates in the limit of infinite width, where the generated function is the expectation value of the sum of single-qubit observables over all the qubits. First, we prove that the probability distribution of the function generated by the untrained network with randomly initialized parameters converges in distribution to a Gaussian process whenever each measured qubit is correlated only with few other measured qubits. Then, we analytically characterize the training of the network via gradient descent with square loss on supervised learning problems. We prove that, as long as the network is not affected by barren plateaus, the trained network can perfectly fit the training set and that the probability distribution of the function generated after training still converges in distribution to a Gaussian process. Finally, we consider the statistical noise of the measurement at the output of the network and prove that a polynomial number of measurements is sufficient for all the previous results to hold and that the network can always be trained in polynomial time.}

\keywords{quantum machine learning, quantum neural networks, Gaussian processes}

\maketitle
\tableofcontents

\section{Introduction}

The last years have witnessed the surge of artificial intelligence powered by deep neural networks, which have achieved an enormous success in several fields such as speech recognition, computer vision and natural language processing, up to the recent development of large language models such as GPT-4 \cite{mnih2015human,lecun2015deep,radford2015unsupervised,schmidhuber2015deep,Goodfellow-et-al-2016,vaswani2017attention}.
Quantum neural networks constitute \modifica{a} quantum counterpart of deep neural networks.
The functions that they generate are the expectation value of a quantum observable (such as the number of qubits in the $1$ state) measured on the output of a quantum circuit made by parametric one-qubit and two-qubit gates.
The parameters are usually trained variationally with stochastic gradient descent.
Quantum neural networks have the promise to add the power of quantum computers to the capabilities of deep neural networks \cite{Cerezo_review,Schuld_2021}.
Indeed, embedding classical data into a Hilbert space whose dimension is exponential in the number of qubits may give access to computations that are classically hard to perform \cite{Havl_ek_2019, Schuld_2019a}. In particular, \cite{Liu_2021} provided an example of a family of datasets -- based on the discrete logarithm -- that only a quantum classifier can learn efficiently.
However, in all the known examples where quantum neural networks show quantum advantage, the right parameters need to be known a priori and put in by hand, and it is unknown whether they could be found variationally.
Indeed, there is no problem of practical interest yet where variationally trained quantum neural networks can provably outperform classical neural networks.
Furthermore, the training of quantum neural networks can suffer from bad local minima \cite{you2021exponentially,Anschuetz2022} or gradients whose size decreases exponentially with the number of qubits \cite{McClean_2018}, a phenomenon called ``barren plateaus''.
The possible solutions that have been proposed to avoid the issue of barren plateaus consist in changing the cost function of the network.
Ref. \cite{Cerezo_2021} considered untrained quantum neural networks with random parameters and proved that local cost functions (\emph{i.e.}, cost functions that are made by the sum of local terms acting on few qubits each) do not suffer from the issue, provided that the depth of the network is at most logarithmic in the number of qubits \cite{napp2022quantifying}.
Ref. \cite{Kiani_2022} proposed more general cost functions that arise from the quantum generalization of optimal mass transport and of the Lipschitz constant of \cite{De_Palma_2021} and are robust with respect to quantum operations acting on few qubits. However, a rigorous proof of the efficacy of these solutions is still lacking, and it has been conjectured that all the architectures for quantum neural networks that do not suffer from barren plateaus can be simulated efficiently on a classical computer \cite{cerezo2023does}.

A fundamental breakthrough in the mathematical understanding of the functioning of classical deep neural networks has been the proof that, in the limit of infinite width, the training is always able to perfectly fit the training examples while at the same time avoiding overfitting.
More precisely, a series of works \cite{Neal1996,lee2018deep,Lee2020,Han18,Hanin,Han21a,Han21b} has proved that in the limit of infinite width, the probability distribution of the function generated by a deep neural network trained with stochastic gradient descent on a supervised learning problem converges to a Gaussian process.
Such Gaussian process always perfectly fits the training examples, and its mean and covariance can be computed analytically given the training data and the architecture of the network \cite{yang2020scaling,yang2021tensorI,yang2020tensorII,yang2021tensorIIb,yang2021tensorIII,yang2022featureIV,yang2022tensorV,yang2023tensorIVb,yang2023tensorVI}.
These results prove that the limit of infinitely many parameters is smooth and analytically solvable, and provide the first rigorous mathematical foundation to classical machine learning.
The proof of these results is based on the following intermediate steps:
\begin{itemize}
\item The probability distribution of the function generated by an untrained deep neural network with random parameters converges to a Gaussian process;
\item The variation of any single parameter during the whole training is infinitesimal, but since the number of parameters is infinite, all the infinitesimal variations add up to a finite change in the generated function. Such regime is called ``lazy training''.
\end{itemize}

This classical breakthrough has stimulated the study of overparameterized quantum neural networks, which has been initiated in \cite{Larocca_2023}, while their lazy-training regime has been studied in \cite{liu2022representation}.
Ref. \cite{QLazy} proved that the training of quantum neural networks with constant depth happens in the lazy regime in the limit of infinite width and is capable of perfectly fitting the training examples.
However, quantum neural networks with constant depth do not have any hope of quantum advantage since the past light-cone of each measured qubit has constant size and its classical simulation is efficient.
Ref. \cite{liu2023analytic} considers the opposite limit, where the depth is high enough such that the number of parameters grows exponentially with the number of qubits and the network can reproduce any unitary, and proves the lazy training and the trainability of the network in such a regime.
Ref. \cite{garciamartin2023deep} proves that, in the same regime, the probability distribution of the function generated at initialization converges to a Gaussian process.

\subsection{Our results}

In this paper, we generalize to the quantum setting the breakthrough in classical machine learning presented above.
We consider quantum neural networks trained on supervised learning problems where the generated function is the expectation value of the sum of single-qubit observables over all the qubits\footnote{Our model differs from \cite{you2023analyzing}, where a single single-qubit observable, or more generally an observable with eigenvalues $\pm1$, is chosen: in that setting the convergence of the training has been proved to be slower than in the classical case. In our case, on the contrary, the convergence is exponential in the training time.}
and prove for the first time trainability in the limit of infinite width in any regime where the depth is allowed to grow with the number of qubits as long as barren plateaus do not arise. \rimodifica{To be precise, our conclusions are still valid even if only a fraction of the derivatives of the model function is exponentially suppressed, while the norm of the gradient (or, equivalently, the covariance of the model function at initialization, as in Lemma \ref{doppiastima}) is not exponentially suppressed.}
We stress that quantum circuits that are deep enough to forbid a naive efficient classical simulation but not too deep such that they do not suffer from barren plateaus do exist.
\modifica{
Indeed, barren plateaus are not present whenever the measured observable is local and the depth of the circuit scales logarithmically with the number of qubits \cite{napp2022quantifying}.
Such depth still allows the size of the past light-cone of each measured qubit to scale as a power of the number of qubits (see \autoref{combinazioni} for more details).
In this case, a na\"ive classical simulation would require exponential time.
}

Our first result states that the probability distribution of the function generated by a randomly initialized quantum neural network in the limit of infinite width converges in distribution to a Gaussian process when the parameters on which each measured qubit depends influence only a small number of other measured qubits (Theorem \ref{init}).
 
We then consider quantum neural networks trained in continuous time with gradient flow, and prove that the training happens in the lazy regime and is able to perfectly fit the training set (Theorem \ref{gradfl}).
Contrarily to \cite{QLazy}, our results allow for a depth growing with the number of qubits and are thus valid in regimes with hope of quantum advantage (see \autoref{sec:Qlazy} for a detailed comparison).
Furthermore, we prove that the dependence of the model on the parameters can be approximated with its linearized version around the value of the parameters at initialization (Theorems \ref{gronwall} and \ref{cfevolution}).
The corresponding linearized evolution equation has an analytical solution that determines the probability distribution of the function generated by the trained network, which is a Gaussian process whose mean and covariance can be computed analytically  (Corollary \ref{corgp}).
Building on such results, we prove that the probability distribution of the function generated by the trained network converges in distribution to the Gaussian process above (Theorem \ref{qnngp}).

We then focus on the issues related to real physical setting of the training of the network. On the one hand, we need to consider that the training takes place in discrete time steps, and we discuss the convergence of the gradient descent. On the other hand, we have to take into account that the function generated by a quantum neural network is the expectation value of a quantum observable, which can only be estimated via measurements. In this regard, we prove that a number of measurement polynomial in the number of qubits is enough to ensure all the previous results (Theorems \ref{unbgraddesc} and \ref{qnngpn}). In particular, the probability distribution of the function generated by the quantum neural network converges in distribution to a Gaussian process also in the presence of statistical noise.
The above results on the evolution of quantum neural networks in the noisy setting are our most important achievement.
\modifica{
Indeed, they prove the trainability of the model in polynomial time with respect to the number of qubits, which is a necessary condition for quantum advantage, and prove that the limit of infinitely many parameters is smooth.
Furthermore, thanks to the above results, the Seeger's generalization bounds for Gaussian processes \cite{seeger2002pac} can be applied to any trained quantum neural network satisfying the hypotheses of our paper in the limit of infinite width.
In particular, our results prove that such limit does not trivialize the generalization properties of quantum neural networks.
}
\rimodifica{Indeed, Seeger's generalization bound can be nontrivial even if, being infinitely wide, the network has infinitely many parameters with a finite training set\footnote{Of course, depending on the specific training set and on the specific neural tangent kernel of the considered quantum architecture, Seeger's generalization bound can become trivial, but this does not happen systematically for the only fact that the parameters are more than the training examples.}.}

Finally, differently from the main works which have studied the initialization of quantum neural networks \cite{garciamartin2023deep,rad2023deep}, we go beyond the standard assumption that the feature space is a finite set. Indeed, we rigorously prove that the convergence to a Gaussian process, both at initialization and during the training, is valid also when we consider an infinite feature space (Theorem \ref{final}). This achievement not only makes the mathematical treatment of our results complete, but also gives a precise physical insight: our proof ensures the existence of a lower bound to the rate of convergence to the Gaussian process which does not depend on the particular discretization of the feature space. In particular, the rate of convergence will not trivialize to zero as the the discretization is refined.

\modifica{
Summarizing, this paper identifies the limit of infinite width as a regime in which quantum neural networks are provably efficiently trainable whenever they do not suffer from barren plateaus.
Therefore, despite results such as \cite{you2021exponentially,Anschuetz2022} prove that problems other than barren plateaus can affect the trainability of variational quantum algorithms, we prove that barren plateaus are the only obstacle that can affect the trainability of quantum neural networks (at least in the limit of infinite width).
Therefore, our results constitute the first step to prove that useful and provably trainable quantum neural networks do exist.
}

\rimodifica{Having proved that sufficiently wide quantum neural networks that satisfy our hypotheses are efficiently trainable, we hope that our paper will stimulate the practitioners of quantum machine learning to consider wide architectures. We stress that such architectures do not actually require a quantum computer with as many qubits as their width. Indeed, as we explain in more detail in \autoref{advantages} (see, in particular, Remark \ref{rem:lab}), the expectation value of each measured one-qubit observable can be estimated independently, and such estimate requires only the qubits in the past light-cone of the measured qubit. We further stress that our hypotheses are extremely general, and any sequence of quantum neural networks with increasing width and with not too large depth will satisfy them. What are the best wide architectures for a given supervised learning problem is a fundamental question that will need to be answered by experiments.}

The article is organized as follows.

In \autoref{ch4}, we fix the setup and the notation of the paper.
In \autoref{ch5}, we prove that the function generated by a quantum neural network with random parameters in the limit of infinite width is a Gaussian process. In \autoref{ch6}, we consider quantum neural networks trained in continuous time and prove that the trained model is able to perfectly fit the training set and that the probability distribution of the generated function generated by the trained network is a Gaussian process.
In \autoref{ch7}, we consider quantum neural networks trained with discrete gradient descent taking into account the statistical noise, and prove that all the previous results are still valid.
In \autoref{infinite}, we extend our result to the general case of an infinite input space.
We conclude in \autoref{concl}.
In \autoref{ch5-5} we show, by means of a counterexample, that a circuit violating our hypotheses can generate a function that is not distributed as a Gaussian process. In \autoref{architecture}, we compute some architecture-independent bounds for the cardinalities of the light cones.

In order to improve the readability of the paper, \autoref{ch5}, \autoref{ch6}, \autoref{ch7} and \autoref{infinite} start with the presentation of the results, whose proofs are deferred to the second part of the section.

\begin{table}[t]
    \caption{Notation concerning the properties of the circuit at initialization.}
    \label{table1}
    \begin{tabularx}{\textwidth}{p{0.1\textwidth}X>{\raggedleft\arraybackslash}l} 
    \toprule
    Symbol & Description & Introduced in \\
    \midrule
      $m$ & number of qubits in the parameterized quantum circuit & \autoref{thecircuit}\\ 
      $L$ & number of layers in the parameterized quantum circuit& Def. \ref{deflayer}\\
      $[\ell \, m]$ & layer-qubit representation & Def. \ref{lqrep}\\
      $\Theta\in \mathscr{P}$ & vector of the parameters of the quantum circuit, belonging to the parameter space; typically $\mathscr{P}=[0,\pi]^{Lm}$ & \autoref{thecircuit}\\
      $|\Theta|$ & number of parameters $|\Theta|:=\dim\mathscr{P}=Lm$ & \textit{ibid} \\
      $x\in\mathcal{X}$ & input vector, belonging to the input (or feature) space $\mathcal{X}$; typically $\mathcal{X}=[0,\pi]^{\dim\mathcal{X}}$ & \textit{ibid.}\\
      $U(\Theta,x)$ & parameterized quantum circuit (unitary operator) & \textit{ibid.}\\
      $f(\Theta,x)$ & function generated by the quantum neural network, often called \textit{original model} & \textit{ibid.}\\
      $N(m)$ & normalization factor of the model & \textit{ibid.}\\
      \multicolumn{1}{l}{$\mathcal{M}_i$} & \multicolumn{1}{l}{(extended) future light cone of the parameter $i$} & \multirow{2}{*}{\hspace{-1em}\bigg\} 
      \parbox[l]{8.5em}{\raggedright Def. \ref{deflc}, Def. \ref{lightcones} and Cor. \ref{extlc}} }\\
      \multicolumn{1}{l}{$\mathcal{N}_k$} & \multicolumn{1}{l}{(extended) past light cone of the observable $k$} & \\
      $|\mathcal{M}|$ & maximal cardinality of a future light cone in the circuit & Def. \ref{defmax}\\
      $|\mathcal{N}|$ & maximal cardinality of a past light cone in the circuit & \textit{ibid.}\\
      \bottomrule
    \end{tabularx}
\end{table}
\begin{table}[t]
    \caption{Notation concerning the training of the circuit.}
    \label{table2}
    \begin{tabularx}{\textwidth}{p{0.1\textwidth}X>{\raggedleft\arraybackslash}l} 
    \toprule
    Symbol & Description & Introduced in \\
    \midrule
      $n$ & number of examples in the training set  & \autoref{datasetandexamples}\\
      $\mathcal{D}$ & training set, whose elements are denoted by $(x^{(i)},y^{(i)})$ for $i=1,\dots,n$ & \textit{ibid.} \\
      $X$ & vector containing the inputs of the training set & \textit{ibid.}\\
      $Y$ & vector containing the outputs of the training set corresponding to $X$ & \textit{ibid.}\\
      $f^{\mathrm{lin}}(\Theta,x)$ & linearized model & \autoref{analyticsol}\\
      $ \hat K_{\Theta}(x,x')$ & empirical neural tangent kernel & Def. \ref{entk}\\
      $N_K(m)$ & normalization factor of the empirical neural tangent kernel & \textit{ibid.}\\
      $ K(x,x')$ & analytic neural tangent kernel & Def. \ref{anNTK}\\
      $\bar K(x,x')$ & (uniform) limit of the analytic neural tangent kernel in the limit of infinitely many qubits & Ass. \ref{assNTK}\\
      $ \bar K$ & abbreviation for $\bar K(X,X^T)$ & \textit{ibid.}\\
      $\lambda_{\min}^K$ & smallest eigenvalue of $\bar K$ &\textit{ibid.} \\
      \bottomrule
    \end{tabularx}
\end{table}
\begin{table}[t]
    \caption{Notation concerning the training of the circuit (continuation).}
    \label{table3}
    \begin{tabularx}{\textwidth}{p{0.1\textwidth}X>{\raggedleft\arraybackslash}l} 
    \toprule
    Symbol & Description & Introduced in \\
    \midrule
      $t$ & continuous or discrete training time & \autoref{trqnn}\\
      $F(t)$ & vector containing the model function evaluated in the inputs of the dataset, i.e., $F(t)=f(\Theta_t,X)$ & \autoref{6-2}\\
      $F^{\mathrm{lin}}(t)$ & vector containing the linearized model function evaluated in the inputs of the dataset, i.e., $F^{\mathrm{lin}}(t)=f^{\mathrm{lin}}(\Theta^{\mathrm{lin}}_t,X)$ & \autoref{analyticsol} \\
      $\eta$ & learning rate, which enters the gradient flow equation and is a function of $m$ &  \makecell[tl]{\autoref{trqnn}\\ and Ass. \ref{assETA}}\\
      $\eta_0$ & resized learning rate (which does not depend on $m$) & Ass. \ref{assETA}\\
      $\mathcal{L}(\Theta)$ & cost function for the original model according to the training set & \makecell[tl]{Def. \ref{defmse}\\ and \autoref{analyticsol}} \\
      $\Theta_t$ & parameter vector evolving via gradient flow (or gradient descent) according to the cost function & \autoref{trqnn} \\
      $\mathcal{L}^{\mathrm{lin}}(\Theta)$ & cost function for the linearized model, sometimes called \textit{linearized cost function} & \autoref{analyticsol} \\
      $\Theta_t^{\mathrm{lin}}$ & parameter vector evolving via gradient flow (or descent) according to the linearized cost function & \textit{ibid.}\\
      \bottomrule
    \end{tabularx}
\end{table}

 \section{Quantum neural networks and light cones}\label{ch4}
Quantum neural networks constitute the main paradigm in quantum machine learning \cite{Cerezo_review}. In order to give a quantitative description of their behaviour at initialization and during training, in this section we will fix a rather generic architecture (see \autoref{circuit}), which covers a very large class of situations of interest. Moreover, most of our results can be straightforwardly generalized to slightly different architectures\footnote{For instance, one could consider two-qubit parameterized gates instead of one-qubit ones, or consider local observable acting on $O(1)$ qubits instead of one-qubit observables. This would produce different constants in the bounds we will present.}. 

In the whole paper, we consider a sequence of quantum neural networks with increasing number of qubits $m$, and we study the limit of infinite width $m\to\infty$.
The key point of this section is to identify some global quantities which describe the physical features of this generic architecture. We will state all our results in terms of such quantities, without introducing further assumptions on the circuit. Conversely, if one wants to check if a specific circuit satisfies our theorems, only such global quantities have to be computed\footnote{If, on the one hand, dealing only with global quantities that ignore the detailed structure is a great simplification, on the other hand we should not deduce that all the circuits which are indistinguishable in these quantities can be considered equivalent for the machine-learning problem. Indeed, the specific structure of the circuit typically may improve (or hinder) the learning process and the generalization capacity of the model.}: 
\begin{align*}
m &= \text{ number of qubits in the circuit},\\
L &= \text{ number of layers of the circuit},\\
|\mathcal{M}| &= \text{ maximal dimension of the future light cone of a parameter},\\
|\mathcal{N}| &= \text{ maximal dimension of the past light cone of an observable},\\
N(m) &= \text{ normalization constant of the model}.
\end{align*}
All these quantities may depend on the numer of qubits $m$ and, as we will discuss in \autoref{combinazioni}, their mutual growth as $m\to\infty$ will control the good behaviour of the circuit.
The physical meaning of the light cones is depicted in \autoref{lico1}, \autoref{lico2} and \autoref{geoloc}.

The main notations used in this article are summarized in \autoref{table1}, \autoref{table2} and \autoref{table3}.

\subsection{The dataset of examples}\label{datasetandexamples}
Let $\mathcal{X}\subseteq \mathbb{R}^{\dim\mathcal{X}}$ be the feature space (or input space) and let $\mathcal{Y}\subseteq\mathbb{R}$ be the output space.
Given a training set $\mathcal{D}=\{(x^{(i)},y^{(i)})\}_{i=1,\dots,n}\subseteq \mathcal{X}\times\mathcal{Y}$, we will call $n=|\mathcal{D}|$ the number of examples and we will represent it in a vectorized form as follows
\begin{align}
X=\begin{pmatrix} x^{(1)}\\x^{(2)}\\\vdots\\x^{(n)} \end{pmatrix},\qquad 
Y=\begin{pmatrix} y^{(1)}\\y^{(2)}\\\vdots\\y^{(n)} \end{pmatrix}.
\end{align}

\modifica{Given any function $g:\mathbb{R}\to\mathbb{R}$, we will often use the following notation:
\begin{align}
    g(X)\coloneqq \begin{pmatrix} g(x^{(1)})\\g(x^{(2)})\\\vdots\\g(x^{(n)}) \end{pmatrix},\qquad g(X^T)\coloneqq \begin{pmatrix} g(x^{(1)})&g(x^{(2)})&\cdots& g(x^{(n)}) \end{pmatrix}
\end{align}
Similarly, for any bivariate function $K:\mathbb{R}\times\mathbb{R}\to\mathbb{R}$ we will write $K(X,X^T)$ to indicate the $n\times n$ matrix with entries $\left(K(X,X^T)\right)_{ij}\coloneqq K(x^{(i)},x^{(j)})$ for $1\leq i,j\leq n$.}

\begin{assumption}\label{convex}
We assume that the set of the possible inputs $\mathcal{X}$ is finite.
\end{assumption}
\begin{remark}[The case of infinite input space]\label{infremark}
    In a concrete setting we will typically have a finite amount of digits to codify our inputs (over a bounded interval), so it is reasonable to suppose $\mathcal{X}$ to be finite. Even though we will present all our results under the assumption that $\mathcal{X}$ is finite -- this allows to provide simpler and less technical proofs -- all the final statements holds even when $\mathcal{X}$ is infinite. The reader interested in this technical treatment of the infinite case can consult \autoref{infinite}, where we provide complete proofs for this more general setting.
\end{remark}

\begin{definition}[Mean squared error]\label{defmse} As a cost function associated to the training set $\mathcal{D}$, we will consider the \textit{mean squared error}:
\begin{align}
\nonumber \mathcal{L}(\Theta)&=\frac{1}{n}\sum_{i=1}^n\frac{1}{2} \left(f(\Theta,x^{(i)})-y^{(i)}\right)^2\\ &=\frac{1}{2n}\|f(X)-Y\|_2^2.
\end{align}
\end{definition}

\subsection{The model function: a quantum circuit}
\subsubsection{Layers, parameters and observables}\label{thecircuit}
Following the notations of \cite{QLazy},
let $L$ be the number of parameterized layers of the circuit, which might depend on the number of qubits $m$.\\ Let $\theta_1,\dots,\theta_{Lm}$ be the parameters of the circuit, which we will compactly write in the vector
\begin{align}
\Theta=\begin{pmatrix} \theta_1\\\theta_2\\\vdots\\\theta_{Lm} \end{pmatrix},
\end{align}
whose dimension will often be denoted as $|\Theta|$.
A circuit is a finite composition of parameterized layers. We need to clearly define the structure of a layer and how the parameterized gates are arranged.

\begin{definition}\label{deflayer}
A \textit{layer} is the unitary operation $U_\ell(\Theta,x)\in \mathcal{L}(\mathcal{H})$ ($1\leq \ell\leq L$)  resulting from
\begin{enumerate}
\item the application on each qubit of a different parameterized single-qubit gate $W_i(\Theta)\in \mathcal{L}(\mathbb{C}^2)$; each parameterized gate depends on a single parameter $\theta_i$, which is different for each gate,
\end{enumerate}
followed by
\begin{enumerate}
\setcounter{enumi}{1}
\item a set of one-qubit and two-qubit gates acting on disjoint qubits, \emph{i.e.}, each qubit can be acted on by at most one gate; each gate may depend only on the input $x$; the resulting unitary operation will be called $V_\ell(x)\in\mathcal{L}(\mathcal{H})$.
\end{enumerate}
\end{definition}

\begin{figure}[ht]
\centering
\includegraphics[width=0.55\textwidth]{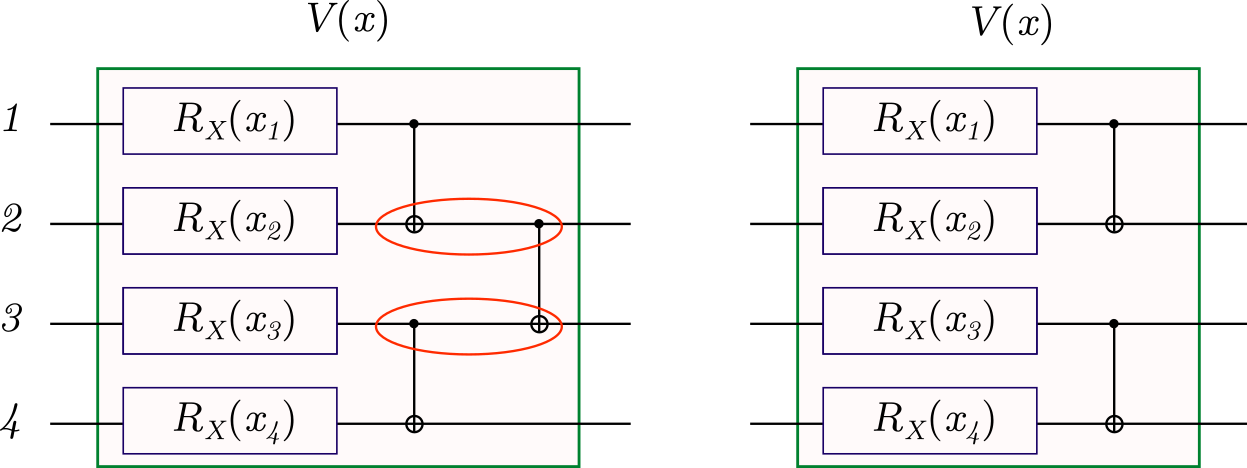}
\caption{On the left, an internal structure of $V(x)$ which is forbidden according to Definition \ref{deflayer}: \modifica{the entangling gate acting on the qubits 2 and 3 must be moved to a new layer.} The \modifica{structure} on the right is allowed.}
\label{2layer}
\end{figure}
To clarify the second point, it is useful to look at \autoref{2layer} as an example: the internal structure of $V(x)$ on the left is not allowed according to our definition of layer, since qubit 3 interacts with both qubit 2 and qubit 4. On the contrary, the structure on the right is allowed, since each qubit interacts with at most one different qubit.

In \autoref{circuit} a circuit composed of $L$ layer is represented.
\begin{figure}[ht]
\centering
\includegraphics[width=\textwidth]{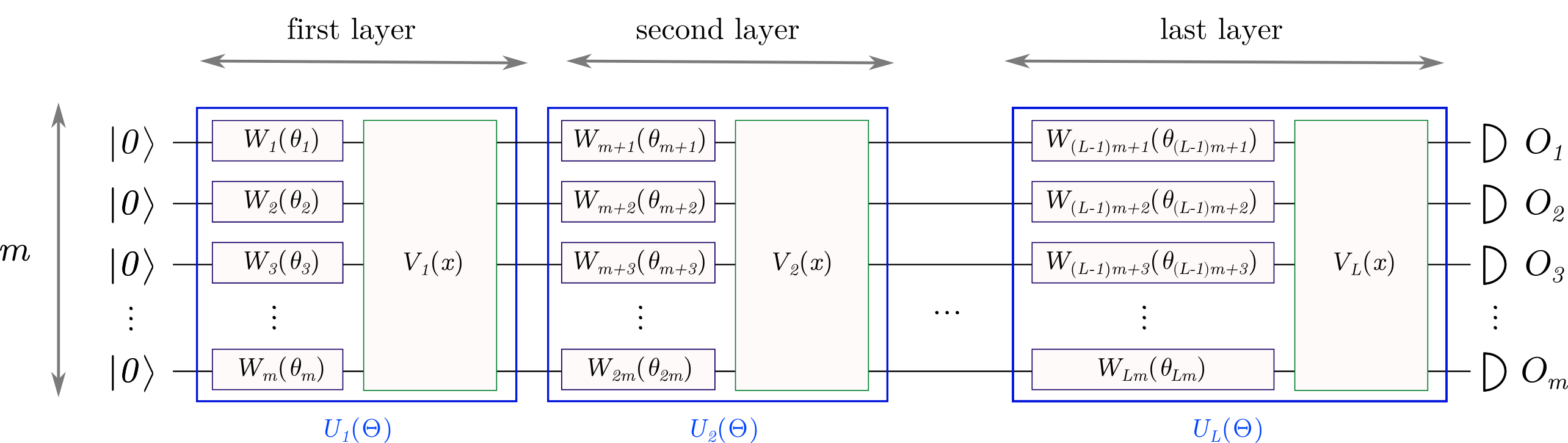}
\caption{Our parameterized quantum circuit.}
\label{circuit}
\end{figure}

We introduce a convenient notation for the indices of the parameters.

\begin{definition}[Layer-qubit representation]\label{lqrep}
Each parameter index $i\in\{1,\dots,Lm\}$ can be represented in the form $i=m(\ell-1)+k$, with $\ell\in\{1,\dots, L\}, k\in\{1,\dots,m\}$. $k$ refers to the qubit involved in the single-qubit gate parameterized by $\theta_i$, while $\ell$ refers to the layer in which such gate acts, as in  \autoref{lq}. The following compact notation, which we call \textit{layer-qubit representation} of the parameter index $i$, simplifies the above form:
\begin{align}
i=[\ell m]\equiv m(\ell-1)+k.
\end{align}
\end{definition}

\begin{figure}[ht]
\centering
\includegraphics[width=\textwidth]{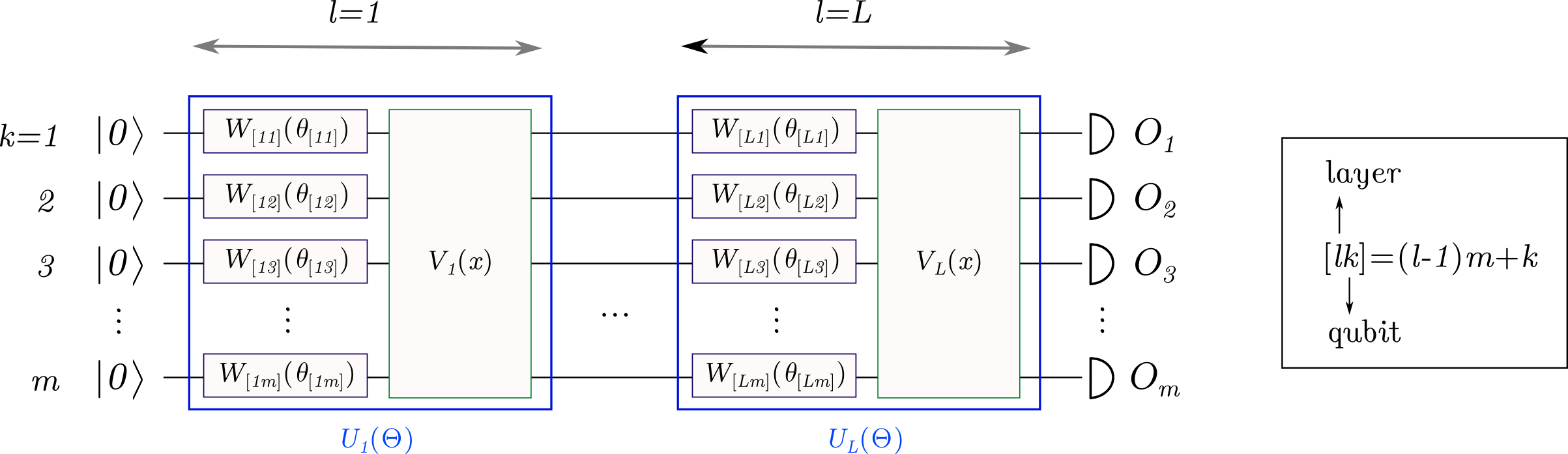}
\caption{The layer-qubit representation.}
\label{lq}
\end{figure}

Therefore, a layer $U_\ell$ can be written as
\begin{align}
\nonumber U_\ell(\Theta,x) &= V_\ell(x)\left(W_{[\ell 1]}\otimes W_{[\ell 2]}\otimes \cdots \otimes W_{[\ell m]}\right)(\Theta)\\
&=V_\ell(x)W_\ell(\Theta),
\end{align}
where
\[W_\ell(\Theta)=\left(W_{[\ell 1]}\otimes W_{[\ell 2]}\otimes \cdots \otimes W_{[\ell m]}\right)(\Theta).\]
The action of the circuit on the initial state $\ket{\psi_0}$ is described by the unitary operation
\[U(\Theta,x) = U_L(\Theta,x)\cdots U_2(\Theta,x)U_1(\Theta,x),\qquad  \ket{\psi_{out}}=U(\Theta,x)\ket{\psi_0}.\]

Let $\mathcal{O}\in\mathcal{L}(\mathcal{H})$ be the observable of the circuit, i.e., the measurement we are interested to perform on $\ket{\psi_{out}}$.
\begin{assumption}
\label{domain} 
$\mathcal{O}$ is a local observable given by the sum of single qubit observables $O_k$:
\begin{align}
\nonumber \mathcal{O}=\sum_{k=1}^m\mathcal{O}_k&=O_1\otimes \id_2\otimes\cdots\otimes \id_m\\
\nonumber &+\id_1\otimes O_2\otimes\id_3\otimes\cdots\otimes \id_m\\
\nonumber &+\dots+\\&+\id_1\otimes\cdots\otimes\id_{m-1}\otimes O_m.
\end{align}
For each $k$, we assume that $\mathrm{Tr}[O_k]=0$ and that the spectrum of $O_k$ is contained in the interval $[-1,+1]$, as in the case of Pauli observables.
We further assume that the parameterized single qubit gates of the circuit $W_i(\theta_i)$ can be written in terms of hermitian operators $\mathcal{G}_i$ with spectrum $\{-1,+1\}:$
\begin{align}
W_i(\theta_i)=e^{-i\mathcal{G}_i\theta_i}.
\end{align}
Therefore, the function $f(\,\cdot\,, x)$ is componentwise periodic with period $\pi$ for any $x\in\mathcal{X}$.
\end{assumption}

\begin{remark}\label{pauli}
Whenever the spectrum of $\mathcal{G}_i$ is $\{-1,+1\}$ for any $i=1,\dots, Lm$ (as in the case of Pauli rotation gates), in the basis diagonalizing $\mathcal{G}_i$ we have that
\[ W_i(\theta_i)= \exp\left[-i\begin{pmatrix} 1 & \\ & -1\end{pmatrix}\theta_i\right]=\begin{pmatrix} e^{-i\theta_i} & \\ & e^{i\theta_i}\end{pmatrix}\]
is a periodic function of $\theta_i$, with period\footnote{up to an irrelevant phase.} $\pi$. Therefore, the domain of $\Theta$ can be identified with the compact set $\mathscr{P}=[0,\pi]^{Lm}$.
\end{remark}

\begin{remark}
    We stress that our results strongly depends on choosing $\mathcal{O}$ as the sum of single-qubit observables over all the qubits.
    Indeed, if $\mathcal{O}$ is given by a single single-qubit observable, or more generally by an observable with eigenvalues $\pm1$, the training does not in general happen in the lazy regime \cite{you2023analyzing}.
\end{remark}

We are going to consider, as a model function, the \textit{expected value} of the measurement of $\mathcal{M}$ on the initial state $\ket{0^{m}}\equiv \otimes_{k=1}^m\ket{0}_k\in\mathcal{H}$, up to a normalization $N(m)$:
\begin{align}
\nonumber f^{(m)}(\Theta,x)&=\frac{1}{N(m)}\smatrixel{0^{m}}{U^\dagger(\Theta,x)\mathcal{O}U(\Theta,x)}{0^{m}}\\
&=\frac{1}{N(m)}\sum_{k=1}^m f^{(m)}_k(\Theta,x)\label{deflocobs},
\end{align}
where we defined
\[f^{(m)}_k(\Theta,x)=\smatrixel{0^{m}}{U^\dagger(\Theta,x)\mathcal{O}_kU(\Theta,x)}{0^{m}}.\]
$N(m)$ is determined by the covariance function of the model at initialization, i.e., when the parameters $\Theta$ are randomly initialized before the training (see Assumption \ref{zeromean}). A circuit suffering of the problem of barren plateaus would have a normalization $N(m)$ exponentially decreasing as a function of $m$.
\begin{remark}
Because of Assumption \ref{domain}, $f^{(m)}_k(\Theta,x)$ are bounded functions:
\[|f^{(m)}_k(\Theta,x)|\leq 1.\]
We will use $f(\Theta,x)$ instead of $f^{(m)}(\Theta,x)$, and $f_k(\Theta,x)$ instead of $f^{(m)}_k(\Theta,x)$ in order to have a cleaner notation.
\end{remark}

\subsubsection{The encoding of the input}\label{encinput}

The internal structure of $V_\ell(x)$ can be arbitrary as long as the second point of Definition \ref{deflayer} is ensured, i.e., each qubit interacts at most with one different qubit. This general requirement will be enough for the discussion of the case with finite $\mathcal{X}$, so the following restriction is meaningful only for the reader interested in the general case of infinite $\mathcal{X}$ (\autoref{infinite}).

Let $S_\ell$ be the set of the couples $\{i,j\}$ of qubits $i,j$ interacting in the layer $\ell$. If a qubit $i$ does not interact with any other qubit in the layer $\ell$ then we will just add the singlet $\{i\}$ to $S_\ell$. Given this definition we will suppose that the unitaries $V(x)$ have the form
\[V_\ell(x)=\prod_{E\in S_\ell}\prod_{k=1}^{\dim\mathcal{X}} U_{E,k}^{(\ell)} \prod_{j\in E} e^{-i x_k\mathcal{K}_k^{\ell,j}}\label{Vx}\]
where
\begin{enumerate}
    \item $U_{E,k}^{(\ell)}$ is an arbitrary non parameterized unitary acting only on the qubits belonging to $E$;
    \item $\mathcal{K}^{\ell,j}_k$ acts only on the qubit $j$;
    \modifica{
    \item Either $\text{Spec}(\mathcal{K}^{\ell,j}_k)=\{-1,+1\}$ (in this case, the $\ell$-th layer encodes $x_k$ nontrivially in the qubit $j$) or $\mathcal{K}^{\ell,j}_k=0$ (in this case, the $\ell$-th layer does not encode $x_k$ in the qubit $j$).
    }
\end{enumerate}
\begin{figure}[ht]
\centering
\includegraphics[width=0.88\textwidth]{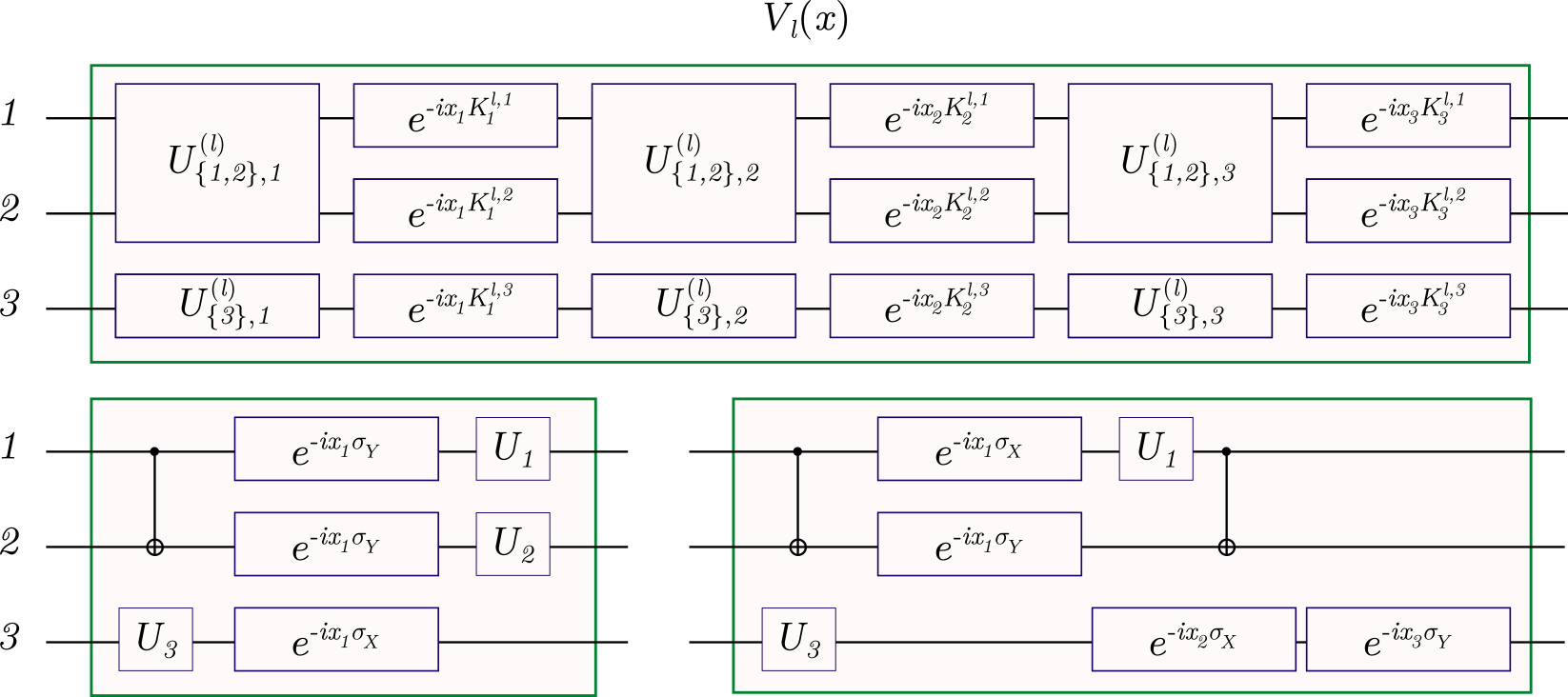}
\caption{The general structure of a feature encoding layer (above) for $m=3$, $\dim\mathcal{X}=3$ and $S_\ell = \big\{ \{1,2\},\{3\}\big\}$ according to (\ref{Vx}) with a couple of examples.}
\label{Vxfigure}
\end{figure} 
As a consequence of the third point, the model function $f(\Theta,\,\cdot\,)$ is componentwise periodic, with period $\pi$ for any $\Theta\in\mathcal{X}$, so we will consider $\mathcal{X}=[0,\pi]^{\dim\mathcal{X}}$. In conclusion,
\[\begin{dcases}
    U(\Theta,x)= V_L(x)W_L(\Theta)\cdots V_1(x)W_1(\Theta),\\
    f(\Theta,x)= \frac{1}{N(m)}\smatrixel{0^{m}}{U^\dagger(\Theta,x)\mathcal{O}U(\Theta,x)}{0^{m}}.
\end{dcases}\label{modelloconx}\]

\begin{remark}
    In Definition \ref{deflayer} we asked each qubit to be acted on \textit{by at most one gate} per layer. Even if in (\ref{Vx}) it might appear that more than one gate per qubit is acting, it is sufficient to notice that $V_\ell(x)$ can be rewritten in the form
        \[V_\ell(x)=\bigotimes_{E\in S_\ell} U_E^{(\ell)}(x),\]
    where $U_E^{(\ell)}(x)$ are single-qubit or two-qubit unitaries fixed by (\ref{Vx}), to conclude that Definition \ref{deflayer} holds.
\end{remark}

\subsection{Light cones}
\subsubsection{Light cones and their extensions}
\begin{figure}[ht]
\centering
\includegraphics[width=0.83\textwidth]{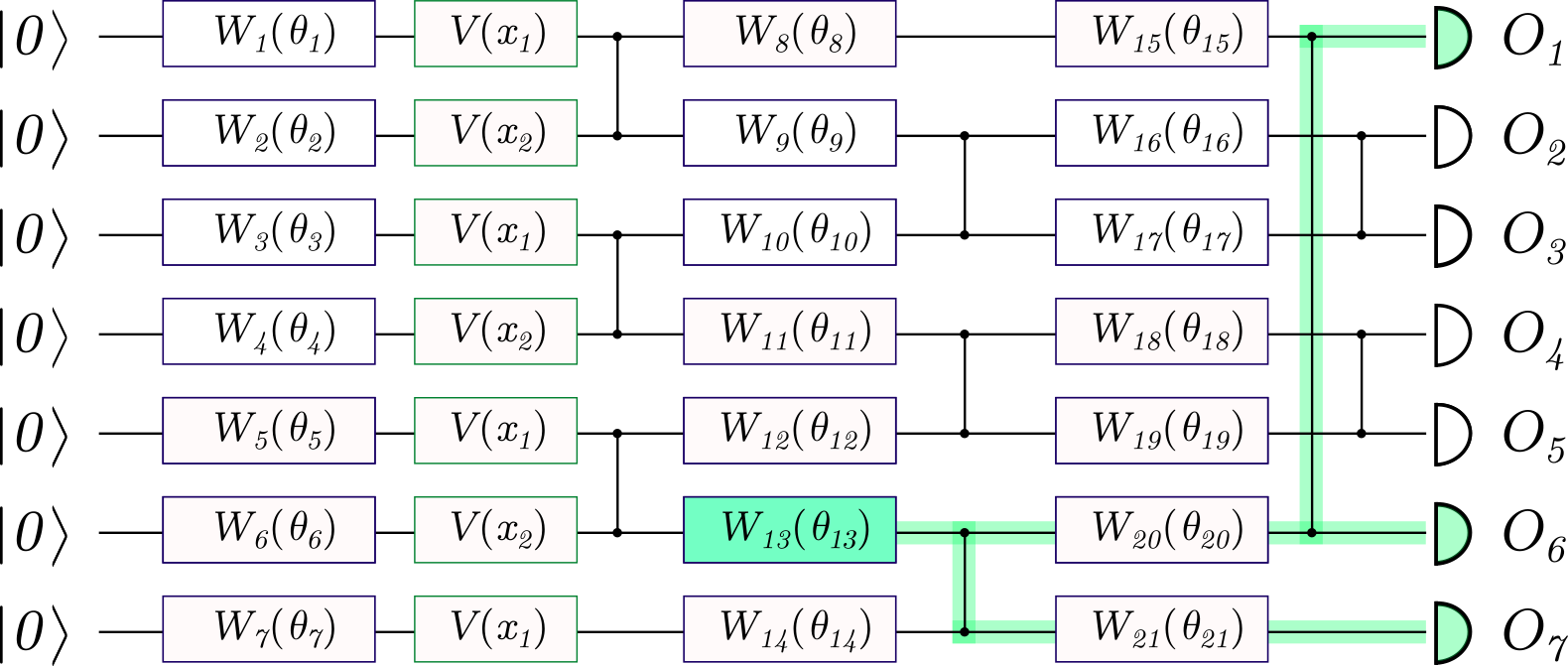}
\caption{Extended light cone $\mathcal{M}_{13}=\{1,6,7\}$ of the parameter $\theta_{13}$ for the circuit in the figure. Here $m=7$, $\mathcal{X}=[0,\pi]^2$, $|\Theta|=21$. Informally, the set $\mathcal{M}_{13}$ is the answer to the question: what are all the observables that may depend on the parameter $\theta_{13}$?}
\label{lico1}
\end{figure} 

\begin{figure}[ht]
\centering
\includegraphics[width=0.83\textwidth]{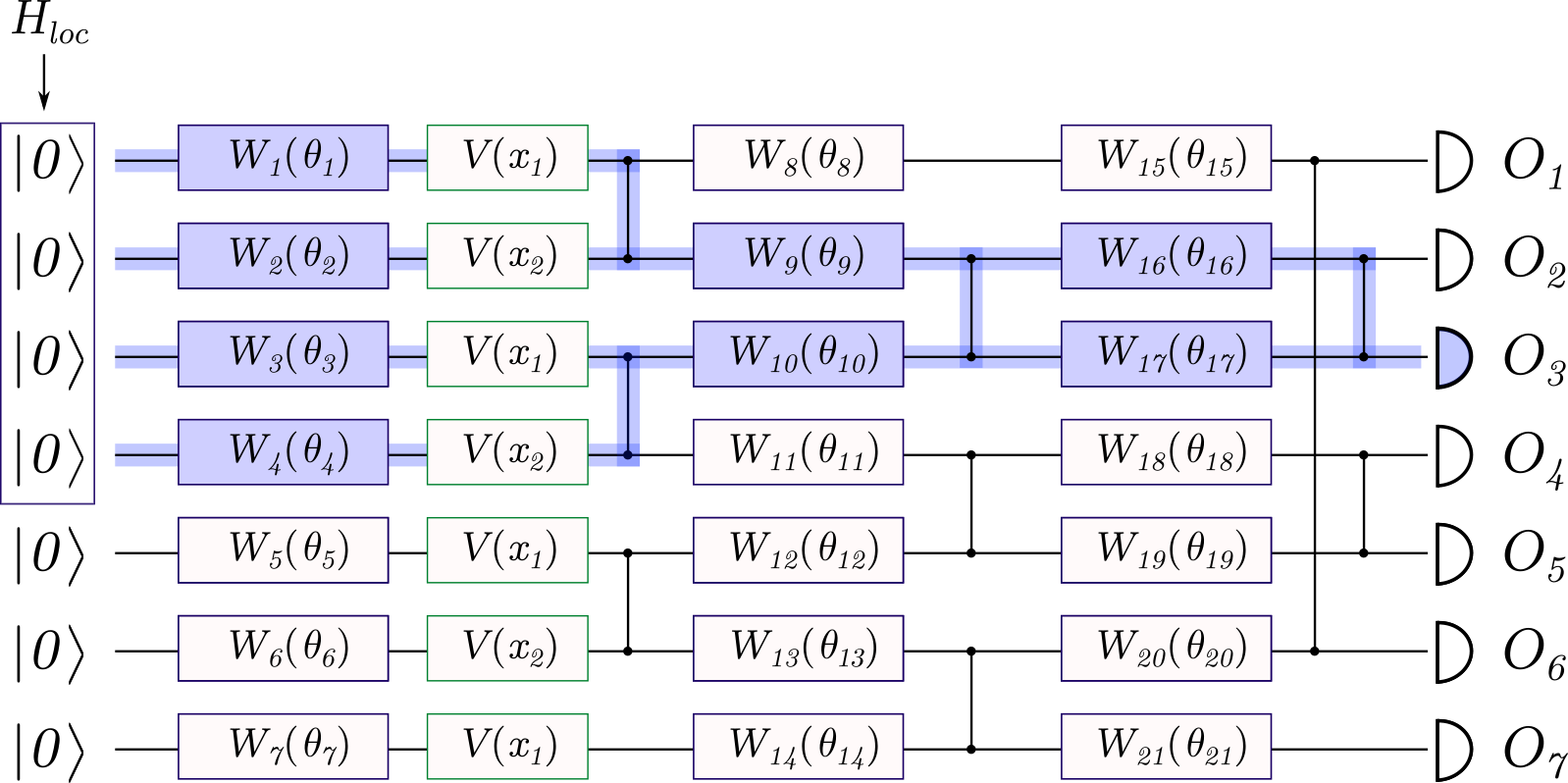}
\caption{Extended light cone $\mathcal{N}_3=\{1,2,3,4,9,10,16,17\}$ of the observable $f_3(\Theta,x)$ for the circuit in the figure. Here $m=7$, $\mathcal{X}=[0,\pi]^2$, $|\Theta|=21$. $\mathcal{H}_{\mathrm{loc}}$ is the local Hilbert space associated to $f_3(\Theta,x)$. Informally, the set $\mathcal{N}_3$ is the answer to the question: what are all the parameters that may influence the observable $f_3(\Theta,x)$?}
\label{lico2}
\end{figure}

Light cones are objects which, together with $N(m)$, characterize the \textit{global} properties of the architecture of the circuit and will be crucial to understand the interplay between
\begin{enumerate}
\item conditions for trainability,
\item conditions for quantum advantages.
\end{enumerate}
In this subsection we are going to introduce the light cone of a parameter and of a local observable, and we will define the local Hilbert space of an observable. We start from a simple definition, which is useful to understand the meaning of these sets (Definition \ref{deflc}). Then, we will give an operative definition to build some sets containing the light cones: we will call them \textit{extended light cones} (Definition \ref{lightcones}). Since the second definition allows cleaner proofs and upper bounds, we will use it in all the computations of the following sections.

\begin{definition}[Light cones]\label{deflc}
For any parameter index $i\in\{1,\dots,|\Theta|\}$ we define the \textit{future light cone} $\mathscr{L}^f_i$ \textit{of the parameter} $\theta_i$ as the subset
\begin{align}
\mathscr{L}^f_i&=\{k\in\{1,\dots,m\} : f_k(\Theta,x) \text{ depends on } \theta_i\}.
\end{align}
Besides, for any qubit index $k\in\{1,\dots,m\}$ we define the \textit{past light cone} $\mathscr{L}^p_k$ \textit{of the qubit} $k$ as the subset
\begin{align}
\mathscr{L}^p_k&=\{i\in\{1,\dots,|\Theta|\} : f_k(\Theta,x) \text{ depends on } \theta_i\}.
\end{align}
\end{definition}

Both $\mathscr{L}^f_i$ and $\mathscr{L}^p_k$ are useful to track the dependence of the observables on the parameters. More precisely, we want to understand which observables are affected by the random initialization of a parameter and by its evolution during the training. The following definition generalizes these sets (see Corollary \ref{extlc}), giving an operative way to build them by the only knowledge of the interactions between the qubits of the circuit.

For any circuit $U$, we need to define some auxiliary sets:
\begin{align}
\mathcal{I}_{\ell,k}&=\{k'\in\{1,\dots,m\}: \text{ the qubit } k \text{ interacts with} 
\\ & \phantom{=\{k'\in\{1,\dots,m\}:\;\;}\text{ the qubit } k' \text{ in the layer } \ell\}\cup \{k\}, \nonumber\\
\mathcal{J}^\ell_k&=
\begin{dcases}
\quad\mathcal{I}_{L,k}&\ell=L\\
\bigcup_{k'\in\mathcal{J}^{\ell+1}_k}\mathcal{I}_{\ell,k'}&\ell<L\\
\end{dcases},\label{J}\\
\mathcal{N}^\ell_k&=\bigcup_{k'\in\mathcal{J}^{\ell}_k}\{[\ell\, k']\}.\label{Nl}
\end{align}
\modifica{In particular $\mathcal{J}_k^1$ is the set of qubits in the past light cone of the observable $k$, i.e., the qubits involved in the computation of its expectation value.} The \textit{family of interactions} $\{\mathcal{I}_{\ell,k}\}_{\ell=1,\dots,L;k=1,\dots,m}$ will be denoted as $\mathcal{I}_U$.
When we will consider the sets (\ref{J}) and (\ref{Nl}) referred to a different circuit $U'$, we will use the notation
\[\mathcal{J}^\ell_k\Big|_{U'},\qquad \mathcal{N}^\ell_k\Big|_{U'}.\]

Now we have all the ingredients to define a generalization of the light cones.

\begin{definition}[Extended light cones]\label{lightcones}
Given any qubit index $k\in\{1,\dots,m\}$ we define the \textit{extended past light cone} $\mathcal{N}_k$ \textit{of the qubit} $k$ as the subset of the parameter indices $\{1,\dots,|\Theta|\}$ given by
\[\mathcal{N}_k=\bigcup_{\ell=1}^L\mathcal{N}^\ell_k.\]
Furthermore, given any parameter index $i\in\{1,\dots,|\Theta|\}$, we define the \textit{extended future light cone} $\mathcal{M}_i$ \textit{of the parameter} $\theta_i$ as the subset of the qubit indices $\{1,\dots,m\}$ with the following property:
\[\mathcal{M}_i=\{k\in\{1,\dots,m\}:i\in\mathcal{N}_k\}.\]
\end{definition}

\begin{lemma}[Constructive definition of the extended future light cones]\label{constructive}
As a consequence of Definition \ref{lightcones}, we have a procedure to construct the extended future light cones of the parameters using the family of interactions $\mathcal{I}_U$:
\[
\begin{dcases}
\mathcal{M}_{[L k]}=\mathcal{I}_{L,k}&\quad \ell=L,\\
\mathcal{M}_{[\ell k]}=\bigcup_{k'\in\mathcal{I}_{\ell,k}}\mathcal{M}_{[(\ell+1)\,k']} & \quad \ell<L.
\end{dcases}
\]
\end{lemma}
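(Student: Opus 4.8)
The plan is to verify directly that the family $\{\mathcal{M}_i\}$ introduced in Definition \ref{lightcones} satisfies the claimed recursion; no induction is required, since the recursion merely relates $\mathcal{M}_{[\ell k]}$ to the sets $\mathcal{M}_{[(\ell+1)\,k']}$, so it suffices to check the two cases $\ell=L$ and $\ell<L$ separately. The first step is to rewrite membership in $\mathcal{M}_{[\ell k]}$ in a workable form. By definition, $k''\in\mathcal{M}_{[\ell k]}$ iff $[\ell k]\in\mathcal{N}_{k''}=\bigcup_{\ell'=1}^L\mathcal{N}^{\ell'}_{k''}$. Now observe, from the construction \eqref{Nl}, that every element of $\mathcal{N}^{\ell'}_{k''}$ is a parameter index of the form $[\ell'\,k']$, i.e. a parameter acting in layer $\ell'$; since the layer-qubit representation is a bijection between parameter indices and pairs $(\ell,k)$, the index $[\ell k]$ can lie in $\mathcal{N}^{\ell'}_{k''}$ only when $\ell'=\ell$. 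Hence the big union over $\ell'$ collapses, and $[\ell k]\in\mathcal{N}_{k''}$ iff $[\ell k]\in\mathcal{N}^{\ell}_{k''}=\bigcup_{k'\in\mathcal{J}^{\ell}_{k''}}\{[\ell k']\}$, which holds iff $k\in\mathcal{J}^{\ell}_{k''}$. Therefore $\mathcal{M}_{[\ell k]}=\{k''\in\{1,\dots,m\}:k\in\mathcal{J}^{\ell}_{k''}\}$.

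For the base case $\ell=L$ one has $\mathcal{J}^{L}_{k''}=\mathcal{I}_{L,k''}$, so $\mathcal{M}_{[Lk]}=\{k'':k\in\mathcal{I}_{L,k''}\}$. The only nontrivial ingredient is the symmetry of the interaction relation: $k$ belongs to $\mathcal{I}_{L,k''}$ precisely when $k=k''$ or when $k$ and $k''$ are the two qubits acted on by a common gate of $V_L(x)$, and this is a symmetric condition; hence $k\in\mathcal{I}_{L,k''}\iff k''\in\mathcal{I}_{L,k}$, giving $\mathcal{M}_{[Lk]}=\mathcal{I}_{L,k}$.

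For $\ell<L$, I would unfold one step of the recursion \eqref{J}: $\mathcal{J}^{\ell}_{k''}=\bigcup_{j\in\mathcal{J}^{\ell+1}_{k''}}\mathcal{I}_{\ell,j}$, so $k\in\mathcal{J}^{\ell}_{k''}$ iff there is some $j\in\mathcal{J}^{\ell+1}_{k''}$ with $k\in\mathcal{I}_{\ell,j}$. Using the symmetry of $\mathcal{I}_{\ell}$ once more, $k\in\mathcal{I}_{\ell,j}\iff j\in\mathcal{I}_{\ell,k}$, so the condition becomes: there exists $j\in\mathcal{I}_{\ell,k}$ with $j\in\mathcal{J}^{\ell+1}_{k''}$. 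Recalling from the first step, applied at layer $\ell+1$, that $k''\in\mathcal{M}_{[(\ell+1)\,j]}\iff j\in\mathcal{J}^{\ell+1}_{k''}$, this is exactly $k''\in\bigcup_{j\in\mathcal{I}_{\ell,k}}\mathcal{M}_{[(\ell+1)\,j]}$, which yields the desired identity.

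The argument is essentially bookkeeping and has no genuine obstacle; the two points deserving care are (i) the collapse of the union $\bigcup_{\ell'}\mathcal{N}^{\ell'}_{k''}$ to the single term $\ell'=\ell$, which relies on the observation that $\mathcal{N}^{\ell'}_{k''}$ contains only layer-$\ell'$ parameter indices, and (ii) the repeated use of the symmetry of the interaction relation $\mathcal{I}_{\ell}$ (inherited from the fact that a two-qubit gate acts on an unordered pair of qubits), which is precisely what converts the ``past''-type membership condition $k\in\mathcal{J}^{\ell}_{k''}$ into the ``future''-type union over $\mathcal{I}_{\ell,k}$.
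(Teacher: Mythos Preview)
Your proof is correct and follows essentially the same route as the paper's: both first reduce $k''\in\mathcal{M}_{[\ell k]}$ to the condition $k\in\mathcal{J}^{\ell}_{k''}$ by observing that $\mathcal{N}^{\ell'}_{k''}$ contains only layer-$\ell'$ indices, and then use the symmetry of $\mathcal{I}_{\ell}$ together with the recursive definition \eqref{J} to obtain the claimed identities. Your write-up is slightly more explicit about the two key ingredients (the collapse of the union over $\ell'$ and the symmetry of the interaction relation), but the argument is the same.
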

\begin{proof}
Using (\ref{Nl}),
\[[\ell k]\in\mathcal{N}_{k'} \quad \iff \quad [\ell k]\in\mathcal{N}_{k'}^\ell \iff \quad k\in\mathcal{J}^\ell_{k'},\]
and by Definition \ref{lightcones},
\[\mathcal{M}_{[\ell k]}=\{k':[\ell k]\in\mathcal{N}_{k'}\}=\{k':[\ell k]\in\mathcal{N}_{k'}^\ell\}=\{k':k\in\mathcal{J}^\ell_{k'}\}.\]
By (\ref{J}), if $\ell< L$,
\[k \in \mathcal{J}^\ell_{k'} \hspace{0.5em} \iff \hspace{0.5em} k \in \mathcal{I}_{\ell,\bar k} \text{ for some } \bar k\in\mathcal{J}^{\ell+1}_{k'}\hspace{0.5em} \iff \hspace{0.5em} \bar k \in \mathcal{I}_{\ell, k} \text{ for some } \bar k\in\mathcal{J}^{\ell+1}_{k'},\]
so
\[\mathcal{M}_{[\ell k]}=\{k':k\in\mathcal{J}^\ell_{k'}\}=\bigcup_{\bar k\in\mathcal{I}_{\ell,k}}\{k':\bar k \in \mathcal{J}^{\ell+1}_{k'}\}=\bigcup_{\bar k\in\mathcal{I}_{\ell,k}}\mathcal{M}_{[(\ell+1)\,\bar k]}.\]
Finally,
\[\mathcal{M}_{[L k]}=\{k':k\in\mathcal{I}_{L,k'}\}=\{k':k'\in\mathcal{I}_{L,k}\}=\mathcal{I}_{L,k}.\]
\end{proof}

\begin{definition}\label{defmax} The \textit{maximal cardinalities} of the extended light cones will be denoted as
\begin{align}
|\mathcal{M}|=\max_i|\mathcal{M}_i|,\qquad |\mathcal{N}|=\max_k|\mathcal{N}_k|
\end{align}
and, for any integer $k\geq 1$, the sum of the $k$-th power of the cardinalities of $\mathcal{M}_i$ will be called
\begin{align}
\Sigma_k=\sum_{i=1}^{|\Theta|}|\mathcal{M}_i|^k.
\end{align}
\end{definition}
\begin{remark}
Clearly, the simplest upper bound to $\Sigma_k$ in terms of $|\mathcal{M}|$ is
\[\Sigma_k\leq Lm|\mathcal{M}|^k.\]
\end{remark}

\subsubsection{The circuit seen by a local observable}

\begin{figure}[ht]
\centering
\includegraphics[width=0.87\textwidth]{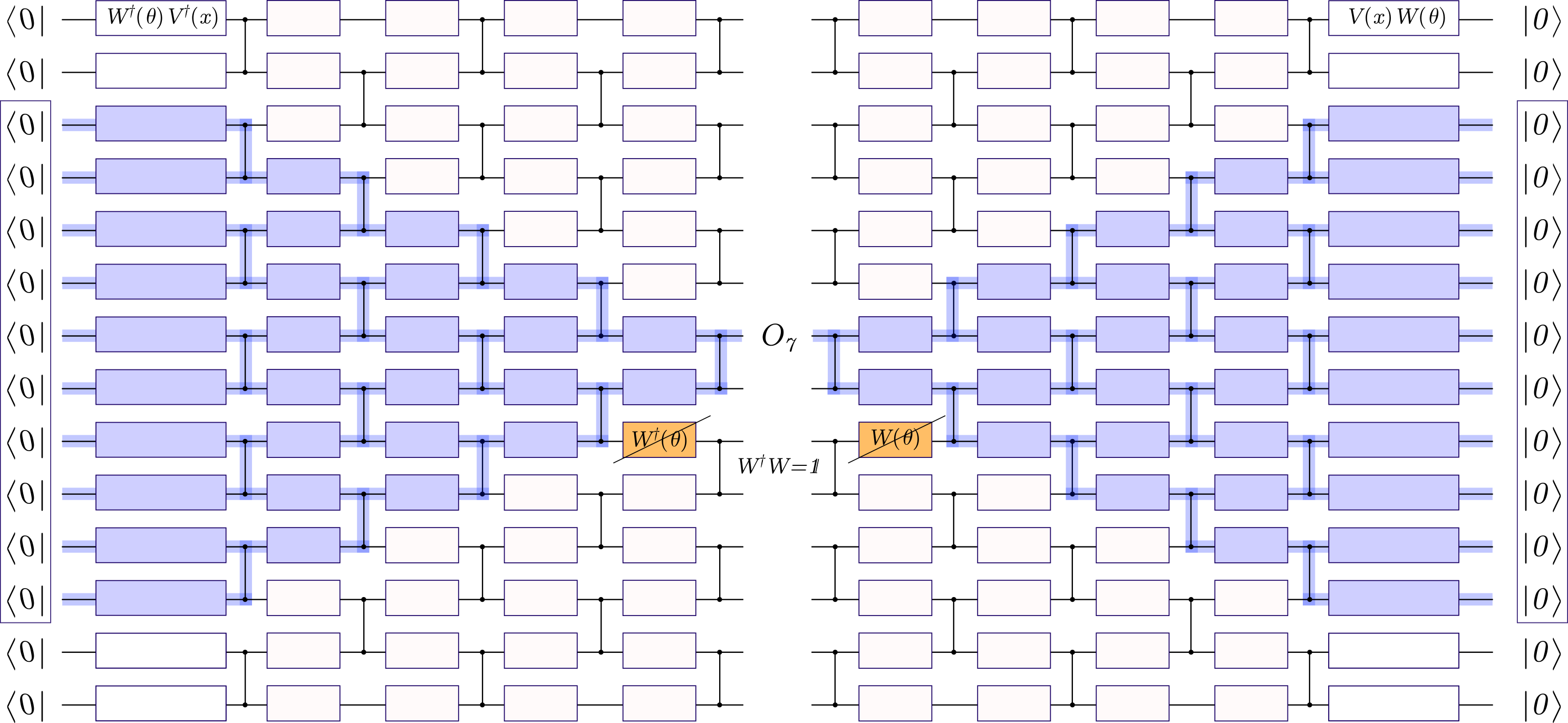}\\
\vspace{1em}
\includegraphics[width=0.87\textwidth]{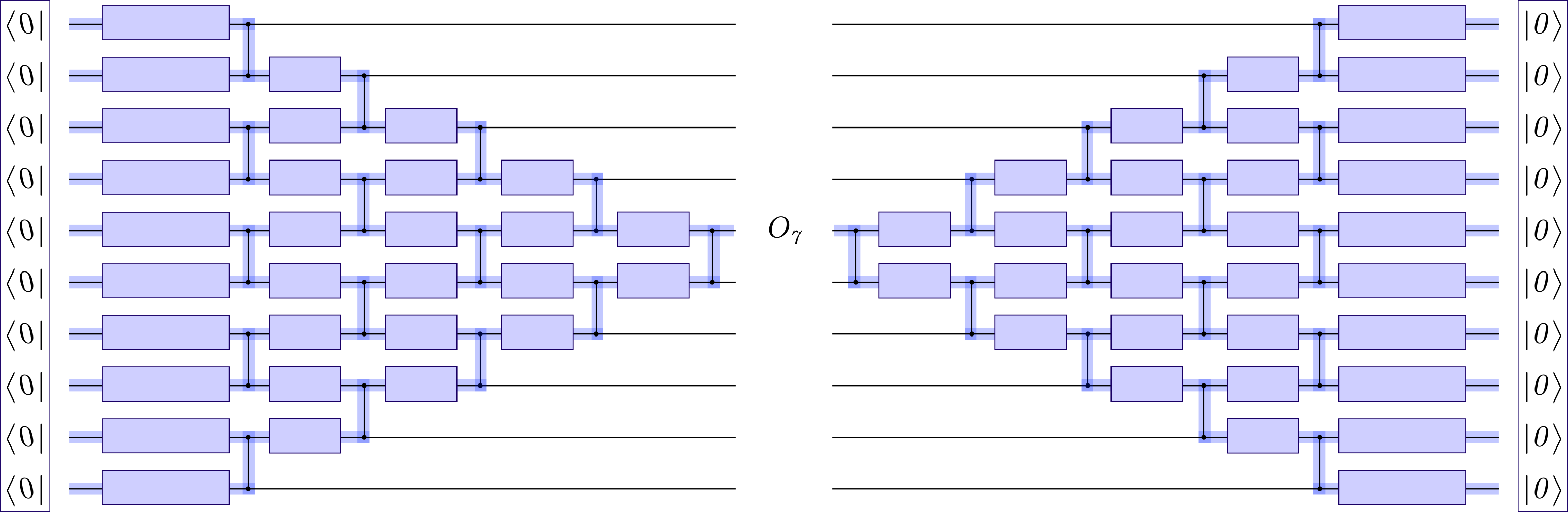}
\caption{Above: the entire circuit. Below: the part of the circuit which contributes to the computation of $\smatrixel{0^m}{U^\dagger(\Theta,x)\mathcal{O}_7U(\Theta,x)}{0^m}$. We will call it the ``pruned'' circuit.}
\label{figsimple}
\end{figure} 

The aim of this subsubsection is to determine the gates in the circuit on which the expected value of each local observable actually depends (see \autoref{figsimple}). We start from the following remark.

\begin{remark}
According to Definition \ref{deflayer}, we can rewrite
\[V_\ell(x)=\prod_{\{(k,k'): k'\in\mathcal{I}_{\ell,k}\}/\sim}V_\ell^{(k,k')},\]
where $V_\ell^{(k,k')}$ is a unitary operation acting only on the qubit $k$ and on the qubit $k'$,
and $\sim$ is the equivalence relation $(k,k')\sim (k',k)$.
The order of the product of the unitaries $V_\ell^{(k,k')}$ is irrelevant, since they commute as they act on different Hilbert spaces.
\end{remark}

\begin{definition}[Pruning of a circuit] \label{defpruning}
The \textit{pruning} operation $[\,\cdot\,]_k$ ($k=1,\dots,m$) of the circuit $U$ to which the family $\mathcal{I}_U$ is associated,
\[ \big[U\big]_k=\big[U_LU_{L-1}\dots U_1\big]_k=\left[\prod_{\ell=0}^LV_\ell(x)W_\ell(\Theta)\right]_k,\]
is the circuit obtained from $U$ by replacing, for any $k',k''\in\{1,\dots,m\}$ and $\ell\in\{1,\dots,L\}$,
\begin{align}
W_{[\ell k']} \quad\to\quad &\id\qquad\iff \qquad [\ell k']\notin\mathcal{N}^\ell_k,\\
V^{(k',k'')}_\ell \quad\to\quad &\id\qquad\iff \qquad k''\in\mathcal{I}_{\ell,k'}\text{ and }\{k',k''\}\cap\mathcal{J}^\ell_k=\emptyset.
\end{align}
\end{definition}

\begin{lemma}[Fundamental property of the pruning operation]\label{reduction}
Given any circuit $U(\Theta,x)$ according to Definition \ref{deflayer} and a local observable $\mathcal{O}_k$ acting only on the qubit $k$, for any $\ket{\psi}\in\mathcal{H}$ the following identity holds
\[\smatrixel{\psi}{U^\dagger(\Theta,x)\mathcal{O}_kU(\Theta,x)}{\psi}=\smatrixel{\psi}{\big[U(\Theta,x)\big]_k^\dagger\mathcal{O}_k\big[U(\Theta,x)\big]_k}{\psi}.\label{tesired}\]
\end{lemma}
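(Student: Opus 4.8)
The plan is to prove the pruning identity \eqref{tesired} by tracking, layer by layer starting from the observable $\mathcal{O}_k$ and working backwards towards the initial state, the support of the Heisenberg-evolved operator and showing that every gate removed by the pruning operation acts trivially on that support at the moment it would be applied. Concretely, for $\ell = L, L-1, \dots, 0$ define the partially evolved observable
\[
A_\ell = U_\ell^\dagger \cdots U_L^\dagger \, \mathcal{O}_k \, U_L \cdots U_\ell,
\]
with $A_{L+1} = \mathcal{O}_k$, and the analogous $\widetilde A_\ell$ built from the pruned circuit $[U]_k$. I would prove by downward induction on $\ell$ the invariant that $A_\ell = \widetilde A_\ell$ and that $A_\ell$ acts as the identity outside the qubits in $\mathcal{J}^\ell_k$ (equivalently, its support is contained in the qubit set $\mathcal{J}^\ell_k$, using the convention $\mathcal{J}^{L+1}_k = \{k\}$). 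The base case $\ell = L+1$ is immediate since $\mathcal{O}_k$ is supported on $\{k\}$.

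For the inductive step, suppose $A_{\ell+1} = \widetilde A_{\ell+1}$ is supported on $\mathcal{J}^{\ell+1}_k$. Conjugating by the single-qubit layer $W_\ell(\Theta) = \bigotimes_{k'} W_{[\ell k']}(\Theta)$: a factor $W_{[\ell k']}$ can change the operator only if $k' \in \mathcal{J}^{\ell+1}_k$, and since $\mathcal{J}^{\ell+1}_k \subseteq \mathcal{J}^\ell_k$ and, by \eqref{Nl}, $k' \in \mathcal{J}^{\ell+1}_k \subseteq \mathcal{J}^\ell_k$ implies $[\ell k'] \in \mathcal{N}^\ell_k$, the pruning rule keeps exactly those $W_{[\ell k']}$ that can act nontrivially and replaces the rest by the identity; hence this conjugation produces the same operator whether applied through $U$ or through $[U]_k$. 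Then conjugating by $V_\ell(x) = \prod V_\ell^{(k',k'')}$: writing the current operator's support $\mathrm{supp} \subseteq \mathcal{J}^{\ell+1}_k$ (enlarged to at most $\mathcal{J}^{\ell+1}_k$ by the previous step, since single-qubit gates do not grow support), a two-qubit factor $V_\ell^{(k',k'')}$ with $k'' \in \mathcal{I}_{\ell,k'}$ can affect the operator only when $\{k',k''\} \cap \mathrm{supp} \neq \emptyset$, in particular only when $\{k',k''\} \cap \mathcal{J}^{\ell}_k \neq \emptyset$ — wait, I need $\mathcal{J}^{\ell+1}_k$ here; but note that if $k' \in \mathcal{J}^{\ell+1}_k$ or $k'' \in \mathcal{J}^{\ell+1}_k$ then by the recursion \eqref{J} both $k'$ and $k''$ lie in $\mathcal{J}^\ell_k$, so the condition ``$\{k',k''\}\cap\mathcal{J}^\ell_k = \emptyset$'' triggering the pruning of $V_\ell^{(k',k'')}$ is never met for the gates that matter, and conversely any gate that is pruned has $\{k',k''\}$ disjoint from $\mathcal{J}^{\ell+1}_k \supseteq \mathrm{supp}$ and thus acts trivially. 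Therefore $A_\ell = \widetilde A_\ell$, and its support is contained in $\bigcup_{k' \in \mathrm{supp}} \mathcal{I}_{\ell,k'} \subseteq \bigcup_{k' \in \mathcal{J}^{\ell+1}_k} \mathcal{I}_{\ell,k'} = \mathcal{J}^\ell_k$, completing the induction. Taking $\ell = 0$ (no gates at layer $0$, $V_0 W_0 = \id$) gives $A_0 = U^\dagger \mathcal{O}_k U = [U]_k^\dagger \mathcal{O}_k [U]_k = \widetilde A_0$, and sandwiching with $\bra{\psi}\cdot\ket{\psi}$ yields \eqref{tesired}.

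The main obstacle I anticipate is purely bookkeeping rather than conceptual: one must be careful that the support of the evolved operator after the single-qubit layer $W_\ell$ is still contained in $\mathcal{J}^{\ell+1}_k$ (so that the two-qubit layer analysis uses the right qubit set), and that the equivalence-relation quotient in the product $\prod_{\{(k',k''):\,k''\in\mathcal{I}_{\ell,k'}\}/\sim} V_\ell^{(k',k'')}$ does not cause double-counting or ordering issues — but the gates in a single layer act on disjoint qubits and hence commute, so the order is irrelevant and each unordered pair is handled once. A secondary subtlety is matching the two pruning conditions in Definition \ref{defpruning} precisely to the two sub-steps (single-qubit gates $\leftrightarrow$ the $\mathcal{N}^\ell_k$ condition, two-qubit gates $\leftrightarrow$ the $\mathcal{J}^\ell_k$ condition); once the definitions \eqref{J}--\eqref{Nl} are unwound as above this is mechanical.
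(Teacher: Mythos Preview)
Your proof is correct and takes essentially the same approach as the paper's: both argue by induction over the layers that every gate outside the backward light cone commutes with the Heisenberg-evolved observable and can be dropped, the only difference being that the paper inducts on the total number of layers $L$ (peeling off $U_1$ and applying the hypothesis to the remaining $L-1$ layers on $U_1\ket{\psi}$) while you run a downward induction on $\ell$ with the explicit support invariant $\mathrm{supp}(A_{\ell+1})\subseteq\mathcal{J}^{\ell+1}_k$. One minor wording issue that does not affect correctness: the pruning keeps $W_{[\ell k']}$ for all $k'\in\mathcal{J}^\ell_k\supseteq\mathcal{J}^{\ell+1}_k$, not ``exactly those that can act nontrivially,'' but since every \emph{removed} gate has $k'\notin\mathcal{J}^\ell_k\supseteq\mathcal{J}^{\ell+1}_k$ and hence acts trivially on $A_{\ell+1}$, your conclusion that the full and pruned conjugations agree still holds.
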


\begin{proof}
We prove \eqref{tesired} by induction on $L$. If $L=0$, $U=\id$ and $[U]_k=\id$, the claim holds. Let us now consider a circuit composed of $L>0$ layers
\[U=U_LU_{L-1}\dots U_2 U_1\]
and assume the claim for $L-1$ layers.
The unitary 
\[U'=U_LU_{L-1}\dots U_2\]
represents a circuit of $L-1$ layers. Since the sets $\mathcal{J}^\ell_k$ and $\mathcal{N}^\ell_k$ are defined starting from the last layer, the circuits $U$ and $U'$ have in common
\[\mathcal{J}^\ell_k\Big|_U=\mathcal{J}^{\ell-1}_k\Big|_{U'}\quad \text{and} \quad \mathcal{N}^\ell_k\Big|_U=\mathcal{N}^{\ell-1}_k\Big|_{U'}\qquad \forall \, \ell>1 \]

Let $\ket{\psi'}=U_1\ket{\psi}$; by inductive hypothesis
\begin{align}
\nonumber\smatrixel{\psi'}{(U')^\dagger\mathcal{O}_kU'}{\psi'}
&=\smatrixel{\psi'}{\big[U'\big]_k^\dagger\mathcal{O}_k\big[U'\big]_k}{\psi'}\\
\nonumber &=\smatrixel{\psi}{U_1^\dagger\big[U'\big]_k^\dagger\mathcal{O}_k\big[U'\big]_kU_1}{\psi}\\
&=\smatrixel{\psi}{\left(V_1 W_1
\right)^\dagger\big[U'\big]_k^\dagger\mathcal{O}_k\big[U'\big]_k
\left(V_1 W_1\right)}{\psi}.
\end{align}
Suppose $\{k',k''\}\cap\mathcal{J}^\ell_k=\emptyset$ where $k''\in\mathcal{I}_{1,k'}$; then
\[ V_1^{(k',k'')}\quad \text{commutes with} \quad \big[U'\big]_k\]
since $\big[U'\big]_k$ acts as the identity on the qubits $k',k''$, due to the definition of the pruning operation combined with the elementary property $\mathcal{J}^{\ell}_k\subseteq\mathcal{J}^{\ell+1}_k$ for any $\ell=1,\dots,L-1$.
Furthermore, because $k\in \mathcal{J}^\ell_k$, both $k'$ and $k''$ are different from $k$, whence
\[ \left(V_1^{(k',k'')}\right)^\dagger \mathcal{O}_k V_1^{(k',k'')}=\mathcal{O}_k\left(V_1^{(k',k'')}\right)^\dagger V_1^{(k',k'')} = \mathcal{O}_k.
\]
Similarly, if $[\ell k']\notin\mathcal{N}^\ell_k$, then
\[ W_{[1 k']}\quad \text{commutes with} \quad \big[U'\big]_k \quad \text{and} \quad \mathcal{O}_k.\]
As a consequence, 
\begin{align}
\nonumber\smatrixel{\psi'}{(U')^\dagger\mathcal{O}_kU'}{\psi'}
&=\smatrixel{\psi}{\left(V_1 W_1
\right)^\dagger\big[U'\big]_k^\dagger\mathcal{O}_k\big[U'\big]_k
\left(V_1 W_1\right)}{\psi}\\
\nonumber&=\smatrixel{\psi}{\big[U'V_1 W_1\big]_k^\dagger\mathcal{O}_k\big[ U'V_1 W_1\big]_k}{\psi}\\
&=\smatrixel{\psi}{\big[U\big]_k^\dagger\mathcal{O}_k\big[ U\big]_k}{\psi},
\end{align}
which proves (\ref{tesired}) for $L$ layers.
\end{proof}

As a corollary, we can show the property anticipated above.

\begin{corollary}[Extended light cones generalize the light cones]\label{extlc}
The following relations hold
\[\mathscr{L}^p_k\subseteq \mathcal{N}_k,\qquad \mathscr{L}^f_i\subseteq \mathcal{M}_i.\]
\end{corollary}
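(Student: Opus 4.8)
The plan is to prove both inclusions directly from the definitions, using Lemma \ref{reduction} as the key tool for the first one. For the inclusion $\mathscr{L}^p_k \subseteq \mathcal{N}_k$, I would argue by contraposition: suppose $i = [\ell k'] \notin \mathcal{N}_k$, i.e. $[\ell k'] \notin \mathcal{N}^\ell_k$. Then by Definition \ref{defpruning}, the pruning operation $[\,\cdot\,]_k$ replaces $W_{[\ell k']}$ by the identity, so the pruned circuit $[U(\Theta,x)]_k$ does not depend on $\theta_i$ at all. By Lemma \ref{reduction},
\[
f_k(\Theta,x) = \smatrixel{0^m}{U^\dagger(\Theta,x)\mathcal{O}_k U(\Theta,x)}{0^m} = \smatrixel{0^m}{[U(\Theta,x)]_k^\dagger \mathcal{O}_k [U(\Theta,x)]_k}{0^m},
\]
and the right-hand side is manifestly independent of $\theta_i$. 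Hence $f_k(\Theta,x)$ does not depend on $\theta_i$, so $i \notin \mathscr{L}^p_k$. This proves $\mathscr{L}^p_k \subseteq \mathcal{N}_k$.

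For the second inclusion $\mathscr{L}^f_i \subseteq \mathcal{M}_i$, I would simply chase the definitions and reduce it to the first inclusion. Fix $i \in \{1,\dots,|\Theta|\}$. If $k \in \mathscr{L}^f_i$, then by Definition \ref{deflc}, $f_k(\Theta,x)$ depends on $\theta_i$, i.e. $i \in \mathscr{L}^p_k$. By the inclusion just proved, $i \in \mathcal{N}_k$, and then by the definition of the extended future light cone in Definition \ref{lightcones}, namely $\mathcal{M}_i = \{k : i \in \mathcal{N}_k\}$, we conclude $k \in \mathcal{M}_i$. Hence $\mathscr{L}^f_i \subseteq \mathcal{M}_i$. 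Note that this second part is purely formal once the first is in hand — the duality between past and future light cones at the level of the sets $\mathscr{L}$ mirrors exactly the duality built into Definition \ref{lightcones}.

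The only delicate point — and the reason the statement is phrased as a corollary of Lemma \ref{reduction} rather than proved from scratch — is the claim that replacing $W_{[\ell k']}$ by $\id$ in the pruned circuit genuinely kills all dependence on $\theta_i$. One must make sure there is no other occurrence of $\theta_i$ elsewhere in the circuit: this is guaranteed by the standing assumption (\autoref{thecircuit}, Def. \ref{lqrep}) that each parameterized single-qubit gate depends on its own distinct parameter, so $\theta_i$ appears in $U(\Theta,x)$ only through the single gate $W_i = W_{[\ell k']}$. Once this is noted, the argument above goes through verbatim. I do not expect any genuine obstacle here; the work has effectively already been done in establishing Lemma \ref{reduction} and the constructive description of the extended light cones.
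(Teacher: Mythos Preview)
Your proof is correct and follows essentially the same approach as the paper: use Lemma \ref{reduction} together with the definition of pruning to show that $f_k$ cannot depend on $\theta_{[\ell k']}$ unless $[\ell k']\in\mathcal{N}_k^\ell\subseteq\mathcal{N}_k$, and then derive $\mathscr{L}^f_i\subseteq\mathcal{M}_i$ formally from the duality in Definition \ref{lightcones}. The only difference is that you phrase the first inclusion by contraposition while the paper argues directly, which is immaterial.
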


\begin{proof}
If $f_k(\Theta,x)$ depends on $\theta_{[\ell k']}$, then, by Lemma \ref{reduction} the parameter must appear also in the pruned circuit
\[\big[U(\Theta,x)\big]_k.\]
By definition of pruning, the dependence on the parameter $\theta_{[\ell k']}$ is not removed if and only if $[\ell k']\in\mathcal{N}_k^\ell$.
Since 
\[ \mathcal{N}_k=\bigcup_{\ell=1}^L\mathcal{N}_k^\ell\]
then $\theta_{[\ell k']}\in \mathcal{N}_k$, whence
\[ \mathcal{N}_k\supseteq \mathscr{L}^p_k.\]
This implies that
\[ \mathscr{L}^f_i=\{k:i\in\mathscr{L}^p_k\}\subseteq\{k:i\in\mathcal{N}_k\}=\mathcal{M}_i.\]
\end{proof}

\begin{remark}
Since the extended light cones are immediate to construct\footnote{Instead, the original light cones may be more difficult to construct, when some cancellations of the dependencies occurs or when $\mathcal{G}_i=0$ for some $i$. The knowledege of the family $\mathcal{I}_U$ is insufficient to define $\mathscr{L}^p_k$ and $\mathscr{L}^f_i$, while is the only object involved in the construction of $\mathcal{N}_k$ and $q\mathcal{M}_i$.} using Definition \ref{lightcones}, from now on we will consider only them and, with a slight abuse of notation, we will simply refer to them as ``light cones''.
\end{remark}

In the following sections, when we will need to emphasize the dependence of a local observable $f_k(\Theta,x)$ on the subset of parameter $\{\theta_i:i\in\mathcal{N}_k\}$, we will use the notation (borrowed from \cite{QLazy}) $f_k(\Theta_{\mathcal{N}_k},x)$ instead of $f_k(\Theta,x)$.
More generally, if we will need to underline the dependence of a function $g(\Theta)$ on a subset $S\subseteq \{\theta_1,\dots,\theta_{Lm}\}$, we will write $g(\Theta_S)$ instead of $g(\Theta)$.

\subsubsection{Light cones, local Hilbert spaces and classical simulability}\label{advantages}

As we can clearly notice by looking at \autoref{figsimple} and as Theorem \ref{reduction} implies, the Hilbert space involved in the computations of the output $f_k(\Theta,x)$ of a local observable is smaller that $\mathcal{H}$. This motivates the following definition.

\begin{definition}[Local Hilbert space]
If we denote as $\mathcal{H}_1,\mathcal{H}_2,\dots,\mathcal{H}_m$ the Hilbert space associated to the qubits $1,2,\dots, m$, the \textit{local Hilbert space} associated with a local observable $\mathcal{O}_k$ is defined as
\[\mathcal{H}_{\mathrm{loc}}^k=\bigotimes_{k'\in\mathcal{J}^1_k}\mathcal{H}_{k'}.\label{locH}\]
\end{definition}
In particular
\[\dim\mathcal{H}_{\mathrm{loc}}^k=2^{|\mathcal{J}^1_k|}.\label{dimHlocJ}\]

\begin{lemma}[Sufficiency of the local Hilbert space] 
The computation of $f_k(\Theta,x)$ requires only linear-algebra operations in $\mathcal{H}_{\mathrm{loc}}^k$.
\end{lemma}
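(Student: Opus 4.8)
The plan is to combine Lemma~\ref{reduction} with a support-tracking argument. By that lemma,
\[
f_k(\Theta,x)=\smatrixel{0^m}{U^\dagger(\Theta,x)\mathcal{O}_kU(\Theta,x)}{0^m}=\smatrixel{0^m}{\big[U(\Theta,x)\big]_k^\dagger\mathcal{O}_k\big[U(\Theta,x)\big]_k}{0^m},
\]
so it is enough to show that the pruned circuit $\big[U(\Theta,x)\big]_k$ acts as the identity on every qubit outside $\mathcal{J}_k^1$. Granting this, one may write $\big[U(\Theta,x)\big]_k=\widetilde U\otimes\id$, with $\widetilde U$ a unitary on the local Hilbert space $\mathcal{H}_{\mathrm{loc}}^k=\bigotimes_{k'\in\mathcal{J}_k^1}\mathcal{H}_{k'}$ of~\eqref{locH}; since $\ket{0^m}$ is the product state $\ket{0}^{\otimes m}$ and $\mathcal{O}_k=O_k\otimes\id$ is supported on the qubit $k\in\mathcal{J}_k^1$, the expectation value factorizes, the factor outside $\mathcal{H}_{\mathrm{loc}}^k$ equals $1$, and $f_k(\Theta,x)$ becomes the expectation value of $O_k$ in the state $\widetilde U\ket{0}^{\otimes|\mathcal{J}_k^1|}\in\mathcal{H}_{\mathrm{loc}}^k$, which is computed by matrix--vector products in dimension $2^{|\mathcal{J}_k^1|}$.

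To prove that $\big[U(\Theta,x)\big]_k$ is trivial outside $\mathcal{J}_k^1$, I would first record the monotonicity $\mathcal{J}_k^{\ell+1}\subseteq\mathcal{J}_k^\ell$, which follows at once from the recursion~\eqref{J} together with $k'\in\mathcal{I}_{\ell,k'}$; hence $\mathcal{J}_k^\ell\subseteq\mathcal{J}_k^1$ for all $\ell$. Then I would examine the two types of gate that survive the pruning of Definition~\ref{defpruning}. A surviving single-qubit gate $W_{[\ell k']}$ satisfies $[\ell k']\in\mathcal{N}_k^\ell$, so $k'\in\mathcal{J}_k^\ell\subseteq\mathcal{J}_k^1$ by~\eqref{Nl}. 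A surviving two-qubit factor $V_\ell^{(k',k'')}$ with $k''\in\mathcal{I}_{\ell,k'}$ satisfies $\{k',k''\}\cap\mathcal{J}_k^\ell\neq\emptyset$; assuming $k'\in\mathcal{J}_k^\ell$ and $\ell<L$, there is $\bar k\in\mathcal{J}_k^{\ell+1}$ with $k'\in\mathcal{I}_{\ell,\bar k}$, and because each qubit is acted on by at most one gate in layer $\ell$ (Definition~\ref{deflayer}) this forces $\mathcal{I}_{\ell,k'}\subseteq\mathcal{I}_{\ell,\bar k}$; therefore $k''\in\mathcal{I}_{\ell,k'}\subseteq\bigcup_{j\in\mathcal{J}_k^{\ell+1}}\mathcal{I}_{\ell,j}=\mathcal{J}_k^\ell\subseteq\mathcal{J}_k^1$ (the case $\ell=L$ is the same with $\mathcal{J}_k^L=\mathcal{I}_{L,k}$ playing the role of $\mathcal{I}_{\ell,\bar k}$). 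Thus both qubits of every surviving $V$-factor lie in $\mathcal{J}_k^1$, and $\big[U(\Theta,x)\big]_k$ indeed acts nontrivially only on $\mathcal{H}_{\mathrm{loc}}^k$.

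The only delicate point is the inclusion $\mathcal{I}_{\ell,k'}\subseteq\mathcal{I}_{\ell,\bar k}$: it is precisely the ``each qubit is touched by at most one gate per layer'' clause of Definition~\ref{deflayer} that makes the two interaction sets coincide (both equal $\{k',\bar k\}$) when $k'$ and $\bar k$ are paired in layer $\ell$, and reduces them to singletons otherwise, so that a surviving two-qubit gate can never connect a qubit inside $\mathcal{J}_k^\ell$ to one outside it. Everything else --- the product-state factorization, the restriction of $\widetilde U$ and of $\mathcal{O}_k$ to $\mathcal{H}_{\mathrm{loc}}^k$, and the remark that evaluating an expectation value in a $2^{|\mathcal{J}_k^1|}$-dimensional space costs only linear algebra of that size --- is bookkeeping.
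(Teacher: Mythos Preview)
Your proof is correct and follows the same strategy as the paper: invoke Lemma~\ref{reduction} to pass to the pruned circuit, then argue that $[U]_k$ acts trivially outside $\mathcal{J}_k^1$ using the monotonicity $\mathcal{J}_k^{\ell+1}\subseteq\mathcal{J}_k^\ell$. The paper's version is much terser---it simply asserts that the support of $[U]_k$ lies in $\bigcup_{\ell}\mathcal{J}_k^\ell=\mathcal{J}_k^1$ without the gate-by-gate check and leaves the product-state factorization implicit---but the underlying argument is the same, and your explicit use of the ``one gate per qubit per layer'' clause to force $\mathcal{I}_{\ell,k'}=\mathcal{I}_{\ell,\bar k}$ is exactly the point that makes the support-tracking for surviving $V$-factors go through.
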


\begin{proof}
Because of Theorem \ref{reduction}, the only qubits involved in the computation of $f_k(\Theta,x)$ are the ones on which $[U]_k$ acts nontrivially. In particular, $[U]_k$ is the identity outside the tensor product of the Hilbert spaces of the qubits
\[\bigcup_{\ell=1}^L\mathcal{I}_k^\ell=\mathcal{I}_k^1,\]
where we used that $\mathcal{I}_k^{\ell+1}\subseteq \mathcal{I}_k^\ell$ for any $\ell=1,\dots,L-1$.
Such tensor product is (\ref{locH}).
\end{proof}

A necessary condition to achieve a quantum advantage with respect to classical computations is the requirement that the dynamics of a quantum circuit must be hard to simulate classically.
The linear-algebra operations required to compute the model function $f(\Theta,x)$ are of the order of $m$ times the operations required to compute a local observable $f_k(\Theta,x)$. The order of the number of these operations is $O(\mathrm{poly}(\dim \mathcal{H}_{\mathrm{loc}}))$.

\begin{lemma}[Necessary condition for quantum advantages]\label{neccondqa} 
We can estimate the dimension of the local Hilbert space:
\[2^{|\mathcal{N}|/L}\leq \max_k\dim\mathcal{H}^k_{\mathrm{loc}}\leq 2^{|\mathcal{N}|}\]
Therefore, a necessary condition to achieve a quantum advantage is that $|\mathcal{N}|$ grows superlogarithmically with the number of qubits.
\end{lemma}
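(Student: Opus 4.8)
The plan is to establish the two-sided bound on $\max_k \dim\mathcal{H}^k_{\mathrm{loc}} = \max_k 2^{|\mathcal{J}^1_k|}$ by bounding $\max_k |\mathcal{J}^1_k|$ between $|\mathcal{N}|/L$ and $|\mathcal{N}|$, after which the conclusion about $|\mathcal{N}|$ growing superlogarithmically follows immediately: classical simulation of the pruned circuit $[U]_k$ costs $O(\mathrm{poly}(\dim\mathcal{H}^k_{\mathrm{loc}})) = O(\mathrm{poly}(2^{|\mathcal{N}|}))$ operations (this is stated just before the lemma), and computing $f(\Theta,x) = \tfrac{1}{N(m)}\sum_k f_k(\Theta,x)$ costs only $m$ times this, so unless $2^{|\mathcal{N}|}$ is superpolynomial in $m$ the whole model is classically efficiently simulable; hence quantum advantage requires $|\mathcal{N}|$ superlogarithmic in $m$.

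For the upper bound $\max_k|\mathcal{J}^1_k| \le |\mathcal{N}|$: fix $k$ and recall from \eqref{Nl} that $\mathcal{N}^\ell_k = \{[\ell\,k'] : k' \in \mathcal{J}^\ell_k\}$, so $|\mathcal{N}^\ell_k| = |\mathcal{J}^\ell_k|$; in particular $|\mathcal{N}^1_k| = |\mathcal{J}^1_k|$. Since $\mathcal{N}_k = \bigcup_{\ell=1}^L \mathcal{N}^\ell_k \supseteq \mathcal{N}^1_k$, we get $|\mathcal{N}_k| \ge |\mathcal{N}^1_k| = |\mathcal{J}^1_k|$, and taking the maximum over $k$ gives $\max_k |\mathcal{J}^1_k| \le \max_k |\mathcal{N}_k| = |\mathcal{N}|$.

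For the lower bound $\max_k|\mathcal{J}^1_k| \ge |\mathcal{N}|/L$: the $L$ sets $\mathcal{N}^1_k, \dots, \mathcal{N}^L_k$ are pairwise disjoint because $\mathcal{N}^\ell_k$ consists only of parameter indices of the form $[\ell\,k']$, i.e. lying in layer $\ell$, and distinct layers give disjoint index ranges. Hence $|\mathcal{N}_k| = \sum_{\ell=1}^L |\mathcal{N}^\ell_k| = \sum_{\ell=1}^L |\mathcal{J}^\ell_k|$. Using the monotonicity $\mathcal{J}^{\ell+1}_k \subseteq \mathcal{J}^\ell_k$ for $\ell = 1,\dots,L-1$ (this is the elementary property invoked in Lemma \ref{reduction}, and it follows directly from the recursion \eqref{J} together with $k' \in \mathcal{I}_{\ell,k'}$), we have $|\mathcal{J}^\ell_k| \le |\mathcal{J}^1_k|$ for every $\ell$, so $|\mathcal{N}_k| \le L\,|\mathcal{J}^1_k|$, i.e. $|\mathcal{J}^1_k| \ge |\mathcal{N}_k|/L$. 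Choosing $k$ achieving the maximum of $|\mathcal{N}_k|$ yields $\max_k|\mathcal{J}^1_k| \ge |\mathcal{N}|/L$. Exponentiating the two bounds gives $2^{|\mathcal{N}|/L} \le \max_k \dim\mathcal{H}^k_{\mathrm{loc}} \le 2^{|\mathcal{N}|}$ as claimed. There is no real obstacle here; the only point requiring a moment's care is the layer-disjointness of the $\mathcal{N}^\ell_k$, which is what converts the union defining $\mathcal{N}_k$ into an exact sum and drives the factor $L$.
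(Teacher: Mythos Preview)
Your proof is correct and follows essentially the same approach as the paper's: both arguments identify $|\mathcal{J}^1_k|=|\mathcal{N}^1_k|=\log_2\dim\mathcal{H}^k_{\mathrm{loc}}$, bound it above by $|\mathcal{N}_k|\le|\mathcal{N}|$ via $\mathcal{N}^1_k\subseteq\mathcal{N}_k$, and bound it below via $|\mathcal{N}_k|\le L|\mathcal{J}^1_k|$ using the monotonicity $\mathcal{J}^{\ell}_k\subseteq\mathcal{J}^1_k$. Your version is in fact more careful than the paper's, which writes the (literally false) set inclusion $\mathcal{N}^1_k\supseteq\mathcal{N}^\ell_k$ where what is meant is the cardinality inequality coming from $\mathcal{J}^1_k\supseteq\mathcal{J}^\ell_k$; your explicit use of the layer-disjointness of the $\mathcal{N}^\ell_k$ makes this step transparent.
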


Any logarithmic or sublogarithmic growth of $|\mathcal{N}|$ excludes any superpolynomial quantum advantage.

\begin{proof}
By Definition \ref{lightcones}, and using the fact that $\mathcal{N}^1_k\supseteq\mathcal{N}^\ell_K$ for any $\ell = 1,\dots,L$,
\[\mathcal{N}_k=\bigcup_{\ell=1}^L\mathcal{N}^\ell_k\qquad \to\qquad |\mathcal{N}_k|\leq L|\mathcal{N}^1_k|\quad \text{and}\quad |\mathcal{N}^1_k|\leq |\mathcal{N}|,\]
Since $|\mathcal{N}^1_k|=\log_2\dim\mathcal{H}_{\mathrm{loc}}^k$,
\[\frac{|\mathcal{N}_k|}{L}\leq\log_2\dim\mathcal{H}_{\mathrm{loc}}^k\leq |\mathcal{N}|.\]
Taking the maximum over $k$,
\[\frac{|\mathcal{N}|}{L}\leq \max_k\log_2\dim\mathcal{H}^k_{\mathrm{loc}}\leq |\mathcal{N}|.\]
\end{proof}

\rimodifica{
\begin{remark}\label{rem:lab}
    The fact that the outcome of any local observable only depends on the state reduced to the local Hilbert space has operational consequences when the circuit has to be experimentally realized and gives further insights about the meaning of the infinite width limit in quantum neural networks.
    \begin{enumerate}
        \item Concerning the implementation, wide architectures do not actually require a quantum computer with as many qubits as their widths: the number of qubits which are needed in the laboratory is not $m$, but $\max_k\log\dim\mathcal{H}_{\rm loc}^k=\max_k |\mathcal{J}^1_k|$ (see \eqref{dimHlocJ}), since the expectation value of each measured one-qubit observable can be estimated independently, and such estimate requires only the qubits in the past light-cone of the measured qubit. Without producing architectures which can be naively classically simulated, it is possible to choose $|\mathcal{J}_k^1|=O((\epsilon \log m)^d)$ with $d\geq 2$ and $\epsilon>0$ smaller than a certain threshold, as we will discuss in detail in \autoref{combinazioni}. This means that the effective number of qubits needed to run a quantum neural network of size $m$ is much smaller than $m$.
        \item Since number of qubits involved in the light cones is the key determining the effective size of the implemented network and the classical (non-)simulability properties, we should think about $\max_k |\mathcal{J}^1_k|$ as the effective width of the circuit.
    \end{enumerate}
\end{remark}

}

\subsubsection{Architecture-independent bounds and geometrically local circuits}\label{sec:lattice}
Is it possible to estimate the cardinalities of the light cones without knowing the specific structure of the circuit? Yes, it is, but the bounds are very weak:
\[ |\mathcal{M}|\leq 2^L\qquad |\mathcal{N}|\leq 2^{L+1}\qquad \Sigma_n\leq 2m2^{nL}\]
For the derivation of these bounds, see \autoref{architecture}.

\begin{figure}[ht]
\centering
\includegraphics[width=0.47\textwidth]{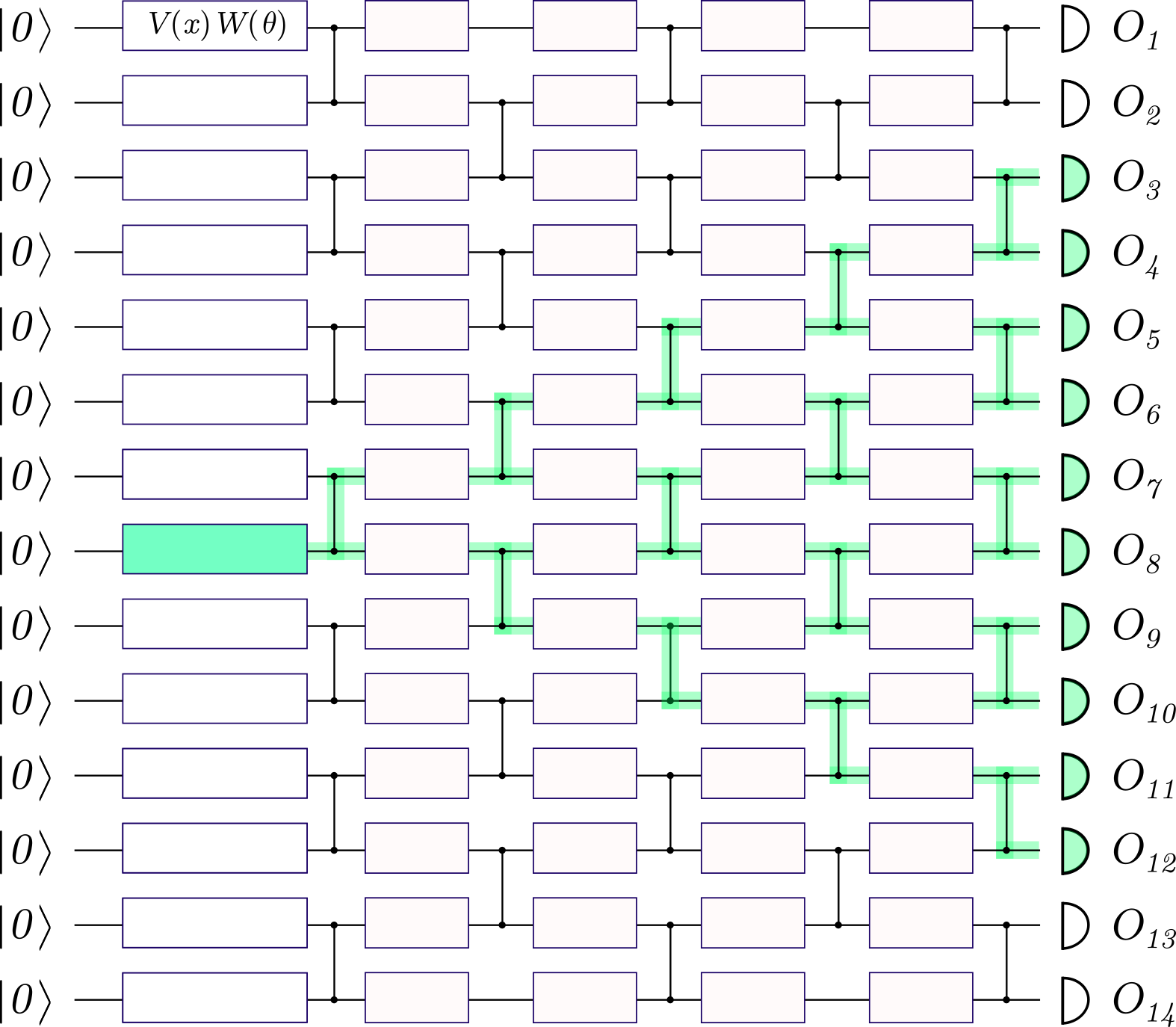}\hspace{1em}
\includegraphics[width=0.47\textwidth]{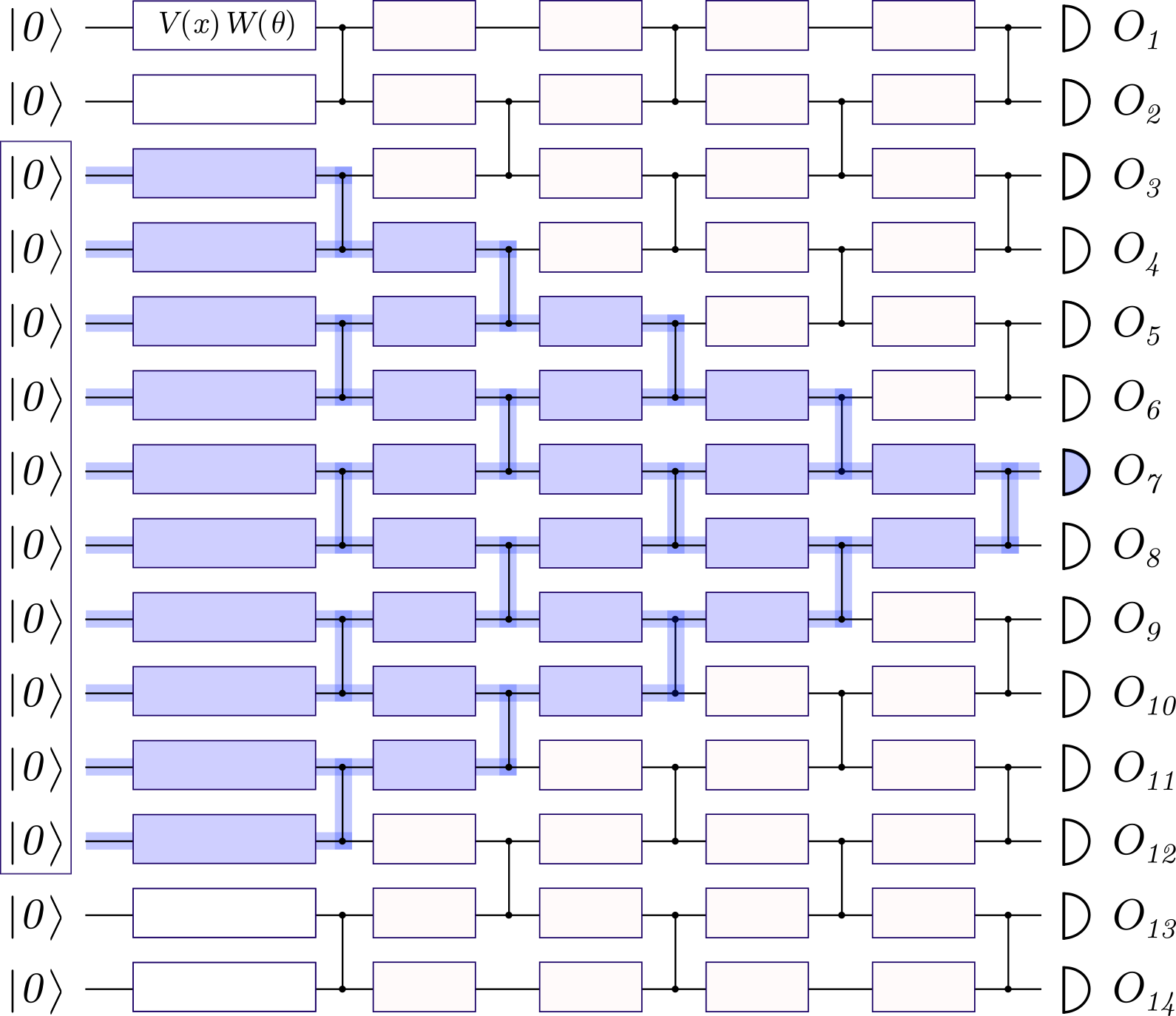}
\caption{Light cones $\mathcal{M}_{[1\,8]}$ and $\mathcal{N}_7$ for a geometrically local circuit. Since each qubit interacts only with neighbouring qubits, the growth of the light cones is polynomial in the number of layers $L$.}
\label{geoloc}
\end{figure}

All the bounds obtained for the most generic architecture using 2-qubit gates are exponential in $L$, which in general is not a good dependence if we are interested in the case $L=L(m)$. These estimates -- which accounts also for the worst case -- are typically too loose since, in a physical implementation with many qubits, the reasonable assumption that only neighbouring qubits interact\footnote{Such architectures are called \textit{geometrically local} \cite{QLazy}.} turns this exponential behaviour into a polynomial growth. This can be informally realized by looking at \autoref{geoloc}. In particular, the dependence on the number of qubits is of the order
\[\text{dim}\,\mathcal{H}_{\mathrm{loc}}=2^{O(L^d)},
\qquad
|\mathcal{N}|=O(L\cdot L^d),
\qquad
|\mathcal{M}|=O(L^d),\]
where $d$ is the dimension of the lattice on which qubits are arranged. We will not enter into the details of these circuits since in our theorems we never assume geometrical locality, and all the theorems will be stated in terms of the maximal cardinalities of the light cones.

We will come back to the problem of the exponential dependence on the number of layers in \autoref{combinazioni}.

\subsection{Random initialization of quantum circuits}
We say that a circuit is randomly initialized when the parameters $\theta_i$ are sampled from independent (not necessarily identical) distributions. At the moment, we do not add any assumption on these distributions other than the independence and the following requirement.

\begin{assumption}\label{finallayer}
The final layer is chosen so that, for any $k=1,\dots,m$, 
\[\mathbb{E}[f_k(\Theta,x)]=0.\label{2.61}\]
\end{assumption}

Now we show that any circuit $U(\Theta,x)$ with traceless local observables satisfies \modifica{(\ref{2.61})} by means of a specific final layer of local parameterized gates. To understand how this works, let us consider the simpler case in \autoref{zz} where the observables are Pauli $Z$.

\begin{figure}[ht]
\centering
\includegraphics[width=0.98\textwidth]{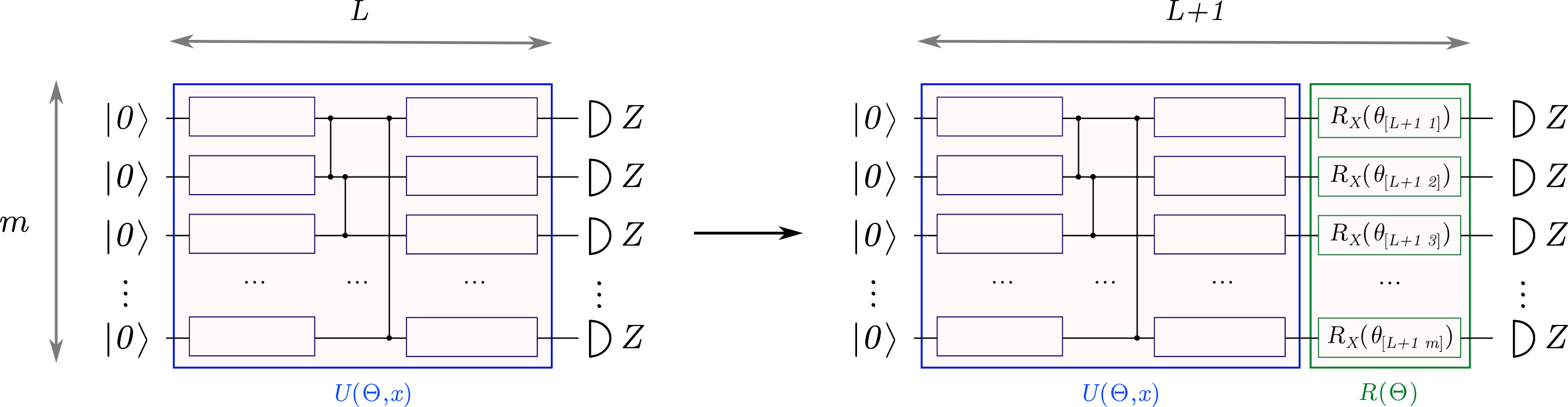}
\caption{The zero mean hypothesis can be obtained by adding a final layer.}
\label{zz}
\end{figure}

Let us add a final layer 
\[ U_{L+1}(\Theta)=R_X(\theta_{[(L+1)\, 1]})\otimes R_X(\theta_{[(L+1)\, 2]})\otimes\cdots\otimes R_X(\theta_{[(L+1)\, m]}),\]
where
\[ R_X(\theta)=e^{-i\theta \sigma_X}\qquad \sigma_X= \begin{pmatrix}  & 1\\1 & \end{pmatrix}.\]
Therefore
\begin{align}
\nonumber f'_k(\Theta,x)&=\bra{0^m}U^\dagger(\Theta,x)U^\dagger_{L+1}(\Theta)\\
\nonumber&\phantom{=\bra{0^m}U^\dagger U}
\id_1\otimes \cdots \otimes \id_{k-1}\otimes Z_k \otimes \id_{k+1}\otimes \cdots \otimes \id_m\\
\nonumber&\phantom{=\bra{0^m}U^\dagger U
\id_1\otimes \cdots \otimes \id_{k-1}\otimes Z_k \otimes \id_{k+1}\otimes}
U_{L+1}(\Theta)U(\Theta,x)\ket{0^m}\\
\nonumber&= \bra{0^m}U^\dagger(\Theta,x)\\
\nonumber&\phantom{=\bra{0^m}U}
\id_1 \otimes \cdots \otimes e^{i\theta_{[(L+1)\,k]}\sigma_X}Z_k e^{-i\theta_{[(L+1)\,k]}\sigma_X} \otimes \cdots \otimes \id_m\\
&\phantom{=\bra{0^m}U
\id \otimes \cdots \otimes e^{i\theta_{[(L+1)\,k]}X_k}Z_k e^{-i\theta_{[(L+1)\,k]}X_k} \otimes}
U(\Theta,x)\ket{0^m},
\end{align}
where $X_k$ and $Z_k$ are respectively $\sigma_X$ and $\sigma_Z$ acting on the qubit $k$.
Using the fact that
\[\sigma_i\sigma_j=\delta_{ij}+i\epsilon_{ijk}\sigma_k\]
and
\[e^{-i\theta \hat n\cdot \vec \sigma}=\id \cos \theta-i\hat n \cdot \vec\sigma\sin\theta \quad \text{with}\quad  |\hat n|=1,\]
we see that
\begin{align}
\nonumber e^{i\theta \sigma_X}\sigma_Z e^{-i\theta \sigma_X}&=
(\id \cos\theta+i\sigma_X\sin\theta)\sigma_Z(\id \cos\theta-i\sigma_X\sin\theta)\\
\nonumber&= (\id \cos\theta+i\sigma_X\sin\theta)(\sigma_Z \cos\theta+\sigma_Y\sin\theta)\\
\nonumber&= \modifica{\sigma_Z \cos^2\theta + \sigma_Y \cos\theta\sin\theta +\sigma_Y\sin\theta\cos\theta -\sigma_Z\sin^2\theta}\\
\nonumber&= \modifica{\sigma_Z(\cos^2\theta-\sin^2\theta)+\sigma_Y2\sin \theta\cos\theta}\\
&=\modifica{\sigma_Z \cos 2\theta+\sigma_Y\sin 2\theta}.
\end{align}
If $\theta$ is uniformly distributed in $[0,2\pi]$, then
\[\mathbb{E}[\cos 2\theta]=0\qquad\modifica{\text{and}\qquad\mathbb{E}[\sin 2\theta]=0}.\]
So, if, for all $k=1,\dots,m$, $\theta_{[(L+1)\, k]}$ is uniform in $[0,2\pi]$ and independent of $\theta_1,\dots,\theta_{Lm}$, then
\rimodifica{\begin{align}
\nonumber f'_k(\Theta,x) &= 
\bra{0^m}U^\dagger(\Theta,x)
\id_{1,\dots, k-1} \otimes Z_k \cos 2\theta_{[(L+1)\, k]} \otimes \id_{k+1,\dots, m}
U(\Theta,x)\ket{0^m}\\
&\quad +
\bra{0^m}U^\dagger(\Theta,x)
\id_{1,\dots, k-1} \otimes Y_k \sin 2\theta_{[(L+1)\, k]} \otimes \id_{k+1,\dots, m}
U(\Theta,x)\ket{0^m}\\
\nonumber&= f_k(\Theta,x) \cos 2\theta_{[(L+1)\, k]}+g_k(\Theta,x) \sin 2\theta_{[(L+1)\, k]}; \\[12pt]
\mathbb{E}[f'_k(\Theta,x)]&=\mathbb{E}[f_k(\Theta,x)]\mathbb{E}[\cos 2\theta_{[(L+1)\, k]}]+\mathbb{E}[g_k(\Theta,x)]\mathbb{E}[\sin 2\theta_{[(L+1)\, k]}]=0,
\end{align}
for all $k\in\{1,\dots,m\}$ where
\[
\begin{aligned}
    f_k(\Theta,x)&=\smatrixel{0^m}{U^\dagger(\Theta,x)\left(\id_{1,\dots, k-1} \otimes Z_k \otimes\id_{k+1,\dots, m}\right) U(\Theta,x)}{0^m}\\
    g_k(\Theta,x)&=\smatrixel{0^m}{U^\dagger(\Theta,x)\left(\id_{1,\dots, k-1} \otimes Y_k \otimes\id_{k+1,\dots, m}\right) U(\Theta,x)}{0^m}
\end{aligned}\]}
is the output of the original circuit, which does not depend on $\theta_{[(L+1)\, k]}$.
The general case is analogous.

The following lemma will be crucial to prove -- in terms of the cardinalities of the light cones -- that randomly initialized quantum neural networks yield model functions distributed as Gaussian processes.

\begin{lemma}[Dependent observables at random initialization]
\label{PM}
For any $k=1,\,\ldots,\,m$, let \begin{align}
\nonumber \mathcal{P}_k&=\{k'\in\{1,\dots,m\} : f_{k'}(\Theta,x) \text{ is not independent of }\phantom{\}} \\
&\phantom{=\{k'\in\{1,\dots,m\} :\}}f_{k}(\Theta,x) \text{ at random initialization}\}.
\end{align}
Then
\[|\mathcal{P}_k|\leq|\mathcal{M}||\mathcal{N}|.\]
\end{lemma}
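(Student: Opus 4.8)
The plan is to reduce the statement to a purely combinatorial counting argument about the light cones, using only two facts: that each local observable depends on the parameters inside its past light cone, and that the parameters are initialized independently. First I would recall from Corollary \ref{extlc} that $\mathscr{L}^p_k\subseteq\mathcal{N}_k$, so that $f_k(\Theta,x)=f_k(\Theta_{\mathcal{N}_k},x)$ is a (deterministic) function of the random variables $\{\theta_i:i\in\mathcal{N}_k\}$ only, and likewise $f_{k'}(\Theta,x)=f_{k'}(\Theta_{\mathcal{N}_{k'}},x)$. Since at random initialization the $\theta_i$ are mutually independent, any two functions that depend on disjoint subsets of the $\theta_i$ are independent random variables. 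Hence, if $\mathcal{N}_k\cap\mathcal{N}_{k'}=\emptyset$ then $f_k(\Theta,x)$ and $f_{k'}(\Theta,x)$ are independent, which means that membership $k'\in\mathcal{P}_k$ forces $\mathcal{N}_k\cap\mathcal{N}_{k'}\neq\emptyset$.

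Next I would translate this non-empty-intersection condition into the language of future light cones. If $i\in\mathcal{N}_k\cap\mathcal{N}_{k'}$, then in particular $i\in\mathcal{N}_{k'}$, which by Definition \ref{lightcones} is precisely the statement $k'\in\mathcal{M}_i$; and $i\in\mathcal{N}_k$. Therefore
\[
\mathcal{P}_k\subseteq\bigcup_{i\in\mathcal{N}_k}\mathcal{M}_i.
\]
Taking cardinalities and applying the union bound,
\[
|\mathcal{P}_k|\leq\sum_{i\in\mathcal{N}_k}|\mathcal{M}_i|\leq|\mathcal{N}_k|\cdot\max_i|\mathcal{M}_i|=|\mathcal{N}_k|\,|\mathcal{M}|\leq|\mathcal{N}|\,|\mathcal{M}|,
\]
using Definition \ref{defmax} for the maximal cardinalities in the last two steps. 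This gives the claimed bound.

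There is no serious obstacle here; the only point requiring a little care is the probabilistic step, namely the assertion that functions of disjoint families of independent random variables are independent. This is standard, but I would state it explicitly (for instance, the joint law of $(\Theta_{\mathcal{N}_k},\Theta_{\mathcal{N}_{k'}})$ factorizes when the index sets are disjoint, so the pushforward under $(\Theta_{\mathcal{N}_k},\Theta_{\mathcal{N}_{k'}})\mapsto(f_k,f_{k'})$ also factorizes), so that the contrapositive ``$k'\in\mathcal{P}_k\Rightarrow\mathcal{N}_k\cap\mathcal{N}_{k'}\neq\emptyset$'' is rigorously justified. Everything else is the elementary set inclusion and union bound above.
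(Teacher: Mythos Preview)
Your proposal is correct and follows essentially the same route as the paper: both argue that $k'\in\mathcal{P}_k$ forces $\mathcal{N}_k\cap\mathcal{N}_{k'}\neq\emptyset$, rewrite this as $\mathcal{P}_k\subseteq\bigcup_{i\in\mathcal{N}_k}\mathcal{M}_i$, and then bound the cardinality by $|\mathcal{N}_k|\max_i|\mathcal{M}_i|\leq|\mathcal{N}||\mathcal{M}|$. Your version is slightly more explicit about the probabilistic step (functions of disjoint families of independent variables are independent), but otherwise the arguments are the same.
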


\begin{proof} 
If the observables $f_{k}(\Theta,x)$ and $f_{k'}(\Theta,x)$ are not independent, then there is a parameter on which both the qubit $k$ and the qubit $k'$ depend:
\begin{align}
\nonumber \mathcal{P}_k&\subseteq \{k'\in\{1,\dots,m\}: \mathcal{N}_k\cap\mathcal{N}_{k'}\neq \emptyset\}\\
\nonumber &= \{k'\in\{1,\dots,m\}: \exists i\in\mathcal{N}_k\cap\mathcal{N}_{k'}\}\\
\nonumber&= \bigcup_{i\in\mathcal{N}_k}\{k'\in\{1,\dots,m\}: i\in\mathcal{N}_{k'}\}\\
\nonumber&= \bigcup_{i\in\mathcal{N}_k}\{k'\in\{1,\dots,m\}: k'\in\mathcal{M}_{i}\}\\
&= \bigcup_{i\in\mathcal{N}_k}\mathcal{M}_i.
\end{align}
Therefore \[|\mathcal{P}_k|\leq |\mathcal{N}_k|\max_i|\mathcal{M}_i|\leq |\mathcal{N}||\mathcal{M}|.\]
\end{proof}

\subsection{Some examples and the estimate of \texorpdfstring{$N(m)$}{N(m)}}\label{combinazioni}
In the theorems of convergence and of trainability that we are going to prove in the next sections, the hypotheses will have the form
\begin{align}\label{condizione}
\lim_{m\to\infty}\frac{L^\alpha m^\beta |\mathcal{M}|^\gamma|\mathcal{N}|^\delta}{N(m)}=0
\end{align}
for some $\alpha,\beta,\gamma,\delta>0$ according to the precise statement.
It is important to provide concrete examples of architectures for which it is possible to estimate $N(m),|\mathcal{M}|$ and $|\mathcal{N}|$ so that the existence of valid candidates of circuits is ensured.

\begin{figure}[ht]
\centering
\includegraphics[width=0.95\textwidth]{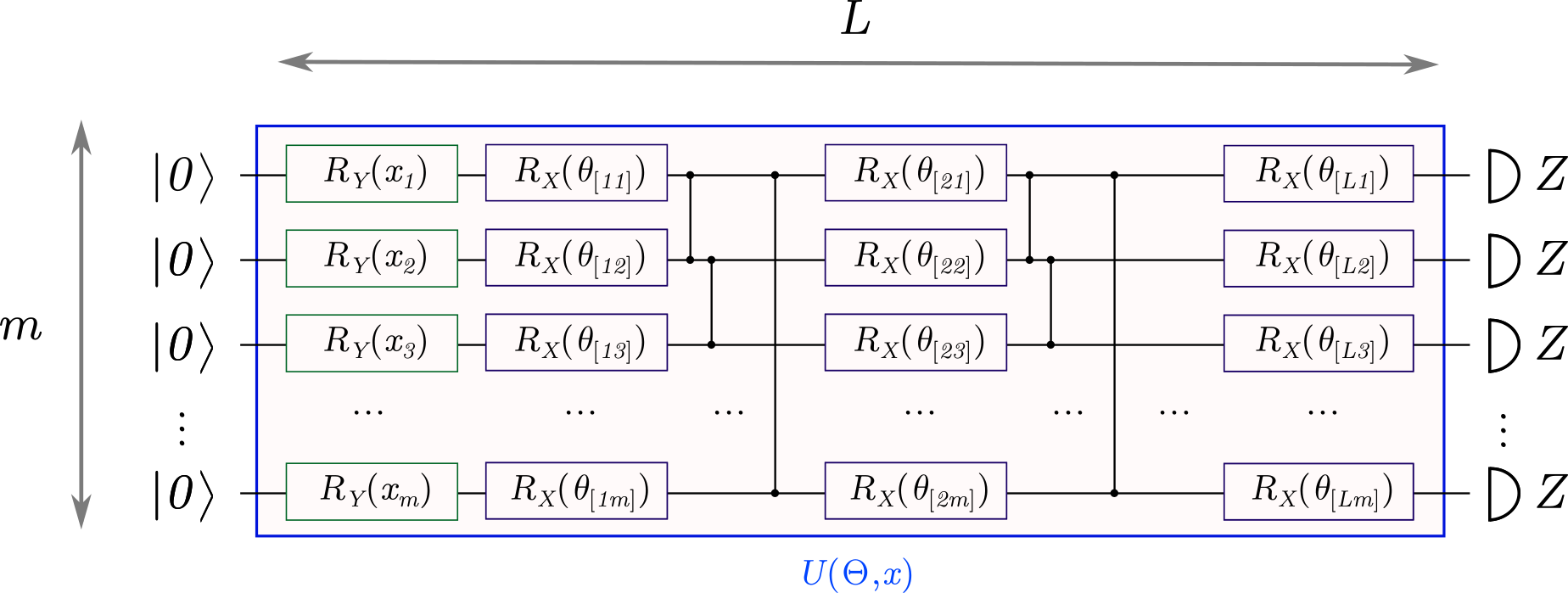}
\caption{The example described in \cite{QLazy}.}
\label{circuitQL}
\end{figure}

The example provided by E. Abedi \& al. in \cite{QLazy} (see \autoref{circuitQL}) is a circuit in which
\[ N(m)=\sqrt m, \quad |\mathcal{M}|=O(L), \quad  |\mathcal{N}|=O(L^2), \quad L \text{ is fixed}.\]
As we will see, in general $\beta<\frac{1}{2}$, therefore this kind of circuit always satisfies all our theorems. However, since $\dim\mathcal{H}_{\mathrm{loc}}=2^{O(L)}=O(1)$, such circuit is classically simulable, so we do not expect to achieve quantum advantages with such architecture.\\
A suitable class of circuits for investigating quantum advantages is the one studied by J. C. Napp in \cite{napp2022quantifying}, where the following bound is provided
\[ N(m)\geq\frac{\sqrt m}{2^{CL}}\quad \text{for some fixed } C. \label{eq:Napp}\]
\modifica{In a generic setting, the number of qubits in the past light cone of any observable $O_k$ can grow as
\[ |\mathcal{J}_k^1|=O(2^L).\]
Under the hypothesis of geometrical locality (i.e., each qubit can interact only with the nearest neighbour qubits), a $d$-dimensional lattice of qubits has
\[ |\mathcal{J}_k^1|=O(L^d).\]
Let us assume to choose $L=\epsilon\log_2 m$. Without assumptions on the geometrical locality, $|\mathcal{J}_k^1|=O(m^{\epsilon})$, which is an upper bound, so any growth $|\mathcal{J}_k^1|=\Theta(m^{\epsilon'})$ with $\epsilon'\leq\epsilon$ can be achieved by an appropriate choice of the interactions. In this case, each condition of the form (\ref{condizione}) can be satisfied for $\epsilon,\epsilon'$ small enough  
\[ \lim_{m\to\infty}\frac{(\log_2m)^\alpha m^{\beta+(\gamma+\delta)\epsilon'}}{m^{1/2-C\epsilon}}=0 \]
since in all our results the hypothesis appears with $\beta<1/2$. However, the local Hilbert spaces have a superpolynomial dimension $2^{m^{\epsilon'}}$.
In the geometrically local setting, we can choose $|\mathcal{J}_k^1|=\Theta((\epsilon\log_2m)^{d})$; if $d\geq 2$, then the local Hilbert spaces still have a superpolynomial dimension, since $2^{(\epsilon \log_2 m)^d}=m^{ \epsilon^d\log^{d-1}_2 m}$. Besides, each condition of the form (\ref{condizione}) can be satisfied for $\epsilon$ small enough.}

We should mention that the exponential decrease of $N(m)$ on the number of layers is a manifestation of the phenomenon of \textit{barren plateaus}, which will be discussed in \autoref{bp}.
We will not further investigate the particular architectures which satisfy our theorems, but our statements will be motivated by the existence of examples such as the ones we have just presented.

As a final remark, we explain the deep reason to avoid \modifica{in general} the further simplification given by Lemma \ref{cardinalities}. This lemma would reduce the hypothesis to the simpler (but stronger) form
\[ \lim_{m\to\infty}\frac{L^\alpha m^\beta 2^{(\gamma+\delta)L}}{N(m)}=0 \]
which requires only the knowledge of $N(m)$ and $L$. The exponential dependence on $L$ (combined with the behaviour $N(m)=O(\sqrt m)$, which in general is not tight) constrains $L$ to be at most logarithmically dependent on $m$. The same bottleneck appears in the case of barren plateaus. Even though this constraint does not impede a priori the achievement quantum advantages, such limit on the growth of $L$ with respect to $m$ might not be intrinsic: in particular, in the case of specific architectures avoiding barren plateaus, the simplification of Lemma \ref{cardinalities} would fictitiously rule out more interesting behaviours of $L(m)$. This is the reason why it is important to keep the explicit dependence on the maximal cardinalities of the light cones in the hypothesis of the theorems proved in the following sections.

 \section{Quantum neural networks with random parameters are Gaussian processes}\label{ch5}
In this section we are going to consider the random initialization of the parameters of the model function: $\{\theta_i\}_{i=1,\dots,Lm}$, are therefore random variables, which we assume to be independent and identically distributed.
The goal of this section is to study the conditions under which the function generated by the quantum neural network is distributed as a Gaussian process as in the classical case (Theorem \ref{init}) and to fix reasonable assumptions on the output of the randomly initialized circuits (see Assumption \ref{zeromean}).

\begin{figure}[ht]
\centering
\includegraphics[width=0.45\textwidth]{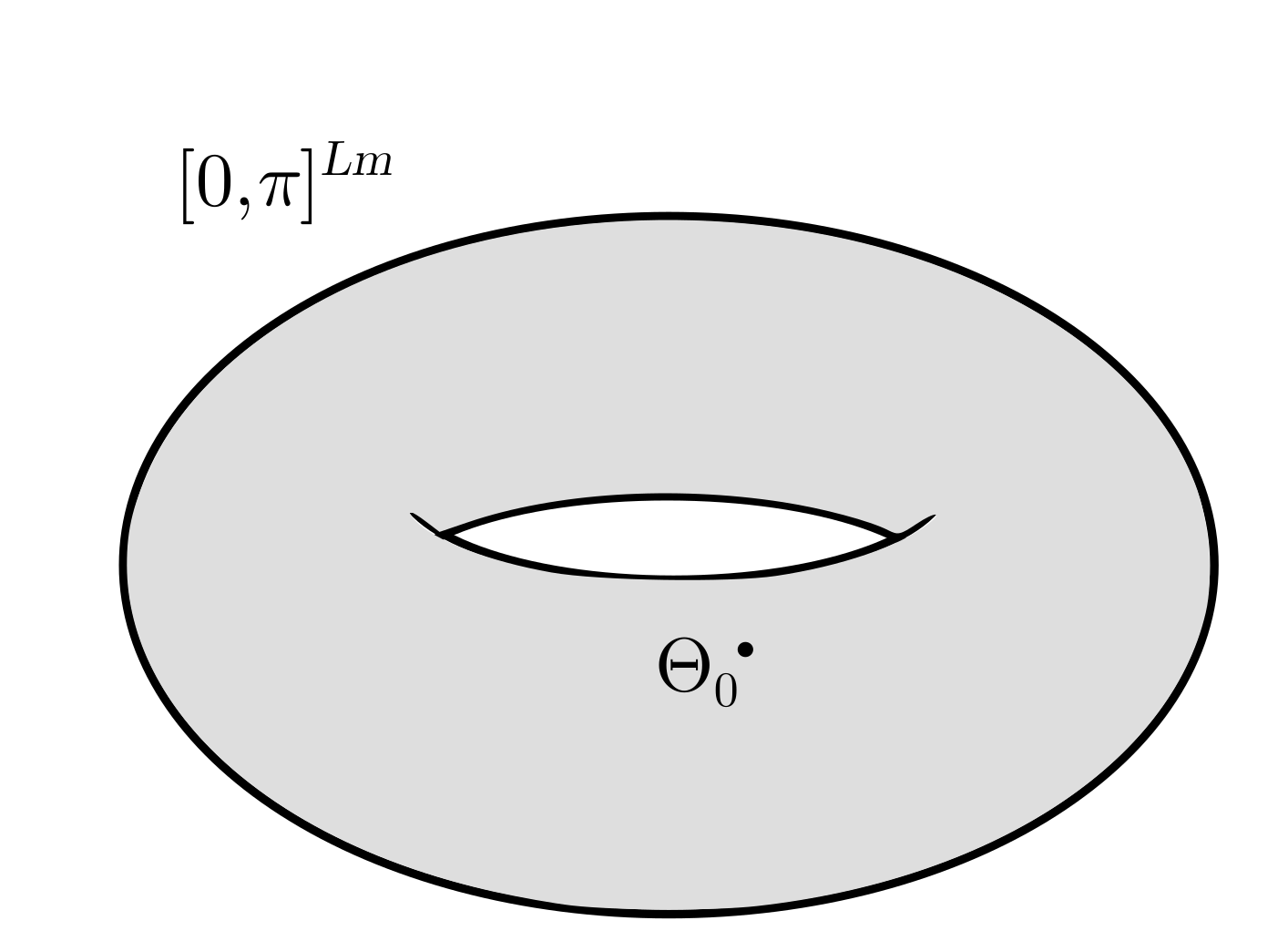}
\caption{The uniform random initialization in the parameter space $\mathscr{P}=[0,\pi]^{Lm}$.}
\label{torus}
\end{figure}

As discussed in Remark \ref{pauli}, a typical parameter space could be $\mathscr{P}=[0,\pi]^{Lm}$ with periodic boundary conditions -- which is a torus. Besides, a typical initialization could be the uniform one (with respect to Lebesgue measure on the subset $[0,\pi]^{Lm}\subseteq \mathbb{R}^{Lm}$). These particular choices are a convenient example to fix ideas on a possible initialization, but the assumptions and the theorem we will state in this section will be more general.

Differently from the classical neural networks \cite{lee2018deep}, we cannot follow an inductive procedure by performing the limit $m\to\infty$ layer by layer: in our case, the number of qubits is the same in all the layers.
Moreover, as discussed in \autoref{advantages}, the number of layer must grow with the number of qubits when we look for quantum advantages. Hence, if we proved that the output is distributed as a Gaussian process for any fixed $L$, this would be insufficient: the limit $L\to\infty$ could be performed only after the limit $m\to\infty$, but the dependence $L(m)$ requires a unique limit $m\to\infty$ in both the spatial (number of qubits) and temporal (number of layers) directions.

Our approach to face these issues is to consider the global properties of the random outputs and, in particular, their dependency relations. On the one hand, the strategy and the physical intuition behind the central limit theorem are important (see Theorem \ref{clt}). 
On the other hand, looking back at the discussion about light cones and dependence between the local measurements, it is easy to realize that the model function is a sum of functions which, in general, are not independent. However, let us fix $k\in\{1,\dots, m\}$ and consider the observable $f_k(\Theta,x)$ after a random initialization of the parameters, which are \textit{independent} random variables. For narrow enough light cones, the number of observables $f_{k'}(\Theta,x)$ whose output is not independent of the outcome of $f_k(\Theta,x)$ will be slowly growing with respect to the total number of qubits. This situation is depicted in \autoref{narrow}. Such behaviour can be thought as an approximate version of the hypothesis of the central limit theorem. If the deviations from the standard hypothesis are somehow ``small'', the result will still be valid.

Our results improve and generalize the discussion of \cite{QLazy} by means of quantitative bounds which are valid for any dependence $L(m)$ and which will be used to determine the behaviour of the circuit in the limit $m\to\infty$.

The outline of this section is the following: first, we will introduce the basic notions about the Gaussian processes and the theory of cumulants, and we will state some technical theorems from the literature to support quantitatively the intuition just presented (see \autoref{ch5-1}). Then, we will fix and discuss some assumptions about the output of the circuit (see \autoref{ch5-2}). Combining the technical results with these assumptions, we will prove the main theorem of the section (Theorem \ref{init}): a sufficient condition of the form (\ref{condizione}) to ensure the convergence to the output to a Gaussian process. Finally, we will show, by means of a counterexample, that a circuit violating the hypothesis we will have just stated does not produce, in general, an output distributed as a Gaussian process (see \autoref{ch5-5}). Even though we expect that out hypotheses can be relaxed and the statement of the convergence theorem can be extended to a larger class of circuits using a different strategy for the proof, this observation excludes a generalization to any variational circuit.

\subsection{Central limit theorem and generalizations}\label{ch5-1}
In this subsection we recall some fundamental definitions and results which serve as an introduction to our results and strategies. We start defining a Gaussian process (see \cite{Rasmussen2005} for more details).

\subsubsection{Gaussian processes and the central limit theorem}

\begin{definition}[Stochastic process]\label{stp} 
A \textit{stochastic process} is a collection of random variables $\{X_\alpha\}_{\alpha\in A}$, which defined on a common probability space.
\end{definition}
When two random variables $X$ and $Y$ have the same distribution, we write $X\sim Y$.\\
Given $\mu\in\mathbb{R}$ and $\sigma^2>0$, we write $X\sim\mathcal{N}(\mu,\sigma^2)$ when $X$ is a random variable with the following Gaussian probability density function:
\[p(x)=\frac{1}{\sqrt{2\pi\sigma^2}}e^{-\frac{(x-\mu)^2}{2\sigma^2}}, \quad x\in\mathbb{R}.\]
If $\mu\in\mathbb{R}^N$ is a vector and $\mathcal{K}\in\mathbb{R}^{N\times N}$ is a positive-definite matrix, we write $X\sim\mathcal{N}(\mu,\modifica{\mathcal{K}})$ when $X$ is a random vector of $\mathbb{R}^N$ with the following multivariate Gaussian probability density function:
\[p(x)=\frac{1}{(2\pi)^{N/2}}\frac{1}{\sqrt {\det\mathcal{K}}}e^{-\frac{1}{2}(x-\mu)^T\mathcal{K}^{-1}(x-\mu)}, \quad x\in\mathbb{R}^N.\]

\begin{definition}[Gaussian process]\label{defgp}
A stochastic process $\{X_\alpha\}_{\alpha\in A}$ is said to be a \textit{Gaussian process} (GP) if every finite collection of random variables has a multivariate Gaussian distribution, i.e., for any finite subset $F\subseteq A$ there exist a vector $\mu_F\in\mathbb{R}^{|F|}$ and a positive semidefinite covariance matrix $\mathcal{K}_F\in \mathbb{R}^{|F|\times|F|}$ such that
\[\{X_\alpha\}_{\alpha\in F}\sim \mathcal{N}(\mu_F,\mathcal{K}_F).\]
\end{definition}

In particular, if we define the mean function $\mu:A\to \mathbb{R}$ and a covariance function $\mathcal{K}:A\times A\to\mathbb{R}$ as
\begin{align}
\mu(\alpha)&=\mathbb{E}\left[X_\alpha\right],\\
\mathcal{K}(\alpha,\alpha')&=\mathbb{E}\left[\big(X_\alpha-\mu(\alpha)\big)\big(X_{\alpha'}-\mu(\alpha')\big)\right],
\end{align}
then $\mu_F$ and $K_F$ can be seen as the vector and the matrix produced by the evaluations of these functions:
\[\mu_F=\mu(F),\qquad \mathcal{K}_F=\mathcal{K}(F,F^T).\]
The mean function and the covariance function characterize the Gaussian process. Therefore, we will refer to the distribution of $\{X_\alpha\}_{\alpha\in A}$ by writing
\[\{X_\alpha\}\sim GP(\mu,\mathcal{K}).\]
As an example, the random output of a function $f:[0,1]\to\mathbb{R}$ can be distributed as a Gaussian process. In this case, we are setting $A=[0,1]$, and we are considering $\mu:[0,1]\to \mathbb{R}$ and $\mathcal{K}:[0,1]\times[0,1]\to \mathbb{R}$ so that
\[\{f(x)\}_{x\in[0,1]}\sim GP(\mu,\mathcal{K}).\]
Each sample of this distribution consists in a particular function $f$ which is defined at each $x\in[0,1]$. If we repeated many times the sampling, we would notice that the shape of the function is statistically affected by $\mu(x)$ and $\mathcal{K}(x,x')$. In particular, if we fixed a point $x_1\in [0,1]$ and we isolated \modifica{the} value $f(x_1)$ of each sample, we would notice that
\[\mathbb{E}[f(x_1)]=\mu(x_1).\]
A similar discussion can be done for the covariance of the values of the function and, more generally, for the distribution of any finite subset of values of the functions in the sample. The notion of Gaussian process, as in the classical setting, will be necessary to describe the output of the quantum circuit in the limit of infinitely many qubits.

In order to state the central limit theorem, we recall one of the equivalent definitions of the convergence in distribution (see e.g.\cite{vaart_1998, Kallenberg2021}).

\begin{definition}[Convergence in distribution]\label{defconvdistr}
Let $X$ be a real valued random variable and let $X_1, X_2, X_3, \dots$ be a sequence of real-valued random variables. The sequence $X_1, X_2, X_3, \dots$  \textit{converges in distribution} to $X$ if
\[\lim_{k\to\infty}\mathbb{E}[f(X_k)]=\mathbb{E}[f(X)]\]
for any bounded and continuous function $f:\mathbb{R}\to\mathbb{R}$.
We will use the notation
\[X_k\xrightarrow{d}X.\]
\end{definition}

\begin{lemma}[Equivalent definition of the convergence in distribution]\label{lemmaconvdistr}
Let $X$ be a real-valued random variable and let $X_1, X_2, X_3, \dots$ be a sequence of real-valued random variables. Then
\[X_k\xrightarrow{d} X \quad \text{as}\quad k\to\infty\]
is equivalent to 
\begin{align}
\nonumber\lim_{k\to\infty}\mathbb{P}\big(X_k\leq x\big)=\mathbb{P}\big(X\leq x\big) \quad&\text{for any } x\in\mathbb{R} \text{ such that} \\
&x\mapsto \mathbb{P}\big(X\leq x\big) \text{ is continuous.}
\end{align}
\end{lemma}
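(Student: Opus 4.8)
The plan is to establish the two implications separately; together they are a special case of the classical Portmanteau theorem, and I would prove them from scratch using only elementary sandwiching arguments.

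First I would show that $X_k\xrightarrow{d}X$ in the sense of Definition \ref{defconvdistr} implies $\mathbb{P}(X_k\le x)\to\mathbb{P}(X\le x)$ at every continuity point $x$ of $F(t):=\mathbb{P}(X\le t)$. The idea is to trap the indicator $\mathbbm{1}_{(-\infty,x]}$ between two bounded continuous functions: for $\varepsilon>0$, let $g^-_\varepsilon$ be the piecewise-linear function equal to $1$ on $(-\infty,x-\varepsilon]$, equal to $0$ on $[x,\infty)$ and linear in between, and let $g^+_\varepsilon$ equal $1$ on $(-\infty,x]$, $0$ on $[x+\varepsilon,\infty)$, linear in between. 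Then $g^-_\varepsilon\le\mathbbm{1}_{(-\infty,x]}\le g^+_\varepsilon$ pointwise, hence $\mathbb{E}[g^-_\varepsilon(X_k)]\le\mathbb{P}(X_k\le x)\le\mathbb{E}[g^+_\varepsilon(X_k)]$. Taking $k\to\infty$ using the hypothesis, and using $\mathbb{E}[g^-_\varepsilon(X)]\ge F(x-\varepsilon)$ and $\mathbb{E}[g^+_\varepsilon(X)]\le F(x+\varepsilon)$, I would get $F(x-\varepsilon)\le\liminf_k\mathbb{P}(X_k\le x)\le\limsup_k\mathbb{P}(X_k\le x)\le F(x+\varepsilon)$; letting $\varepsilon\to0$ and invoking continuity of $F$ at $x$ closes this direction.

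For the converse I would fix a bounded continuous $f$ with $\sup|f|\le M$ and $\varepsilon>0$, and approximate $\mathbb{E}[f(X_k)]$ by integrals of step functions supported on continuity points of $F$. Since $F$ is monotone it has at most countably many discontinuities, so its set of continuity points $C$ is dense; write $F_k(t):=\mathbb{P}(X_k\le t)$. Using $F(-\infty)=0$, $F(+\infty)=1$ and density of $C$, I would pick $A<B$ in $C$ with $F(A)<\varepsilon$ and $1-F(B)<\varepsilon$; since $A,B\in C$, the hypothesis gives $\mathbb{P}(X_k\notin(A,B])=F_k(A)+1-F_k(B)<2\varepsilon$ for all large $k$. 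On $[A,B]$ the function $f$ is uniformly continuous, so choosing nodes $A=t_0<t_1<\dots<t_N=B$ in $C$ with small enough mesh, the step function $g:=\sum_{j=1}^N f(t_j)\,\mathbbm{1}_{(t_{j-1},t_j]}$ (with $g:=0$ outside $(A,B]$) satisfies $|f-g|\le\varepsilon$ on $(A,B]$ and $|g|\le M$ everywhere. Because every $t_j$ lies in $C$, $\mathbb{E}[g(X_k)]=\sum_j f(t_j)\big(F_k(t_j)-F_k(t_{j-1})\big)\to\sum_j f(t_j)\big(F(t_j)-F(t_{j-1})\big)=\mathbb{E}[g(X)]$. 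Combining $\big|\mathbb{E}[f(X_k)]-\mathbb{E}[g(X_k)]\big|\le\varepsilon+2M\,\mathbb{P}(X_k\notin(A,B])$, the analogous bound for $X$, and the convergence of $\mathbb{E}[g(X_k)]$, I would obtain $\limsup_k\big|\mathbb{E}[f(X_k)]-\mathbb{E}[f(X)]\big|\le C_M\,\varepsilon$ for a constant depending only on $M$; since $\varepsilon$ is arbitrary this is exactly $X_k\xrightarrow{d}X$.

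No single step is a genuine obstacle; the work is purely in the bookkeeping. The two points I would spell out carefully are: that the continuity points of the monotone function $F$ form a dense set, so that $A$, $B$ and all the partition nodes can be chosen there (which is precisely what lets the hypothesis be applied to them), and the $\varepsilon$-management of the tail term $\mathbb{P}(X_k\notin(A,B])$ in the second implication. The first implication needs nothing beyond the two explicit sandwich functions.
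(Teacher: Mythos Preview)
Your proof is correct and is the standard Portmanteau argument. The paper does not actually prove this lemma: its entire proof is ``See e.g.\ \cite{vaart_1998}.'' So there is nothing to compare at the level of technique; you have supplied a complete self-contained argument where the paper simply defers to a textbook.
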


\begin{proof}
See e.g. \cite{vaart_1998}.
\end{proof}

\begin{theorem}[Central limit theorem]\label{clt}
Let $X_1, X_2, X_3, \dots$ be a sequence of i.i.d. random variables with
\[ \mathbb{E}[X_i]=0 \qquad \mathrm{Var}[X_i]=\sigma^2\]
and let $X\sim\mathcal{N}(0,\sigma^2)$. Then
\[ \frac{1}{\sqrt m}\sum_{k=1}^m X_k\xrightarrow{d}X \quad \mathrm{ as }\quad m\to\infty.\]
\end{theorem}

\begin{proof} See \cite{Kallenberg2021}.
\end{proof}

The following result is used in the proof of the central limit theorem and we will need it to prove Theorem \ref{init}.
\begin{theorem}[Lévy's continuity theorem]\label{levy}
Let $X_1, X_2, X_3, \dots$ be a sequence of random variable with characteristic functions
\[ \phi_k(t)=\mathbb{E}\left[e^{itX_k}\right]\qquad k\geq 1.\]
If there exists a random variable $X$ whose characteristic function $\phi(t)$ is the pointwise limit of $\phi_k(t)$, i.e.
\[\lim_{k\to\infty}\phi_k(t)=\phi(t)\qquad \forall t\in\mathbb{R},\]
then $X_k$ converges in distribution to $X$:
\[X_k\xrightarrow{d} X.\]
\end{theorem}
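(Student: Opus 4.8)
The plan is to prove the theorem by the classical route of \emph{tightness plus identification of subsequential limits}, which only uses the ``easy'' half of the correspondence between weak convergence and characteristic functions (convergence in distribution forces convergence of characteristic functions) together with the uniqueness theorem for characteristic functions; crucially, since the hypothesis provides an actual random variable $X$, its characteristic function $\phi=\phi_X$ is automatically continuous everywhere, so I never need the deeper part of L\'evy's theorem (that pointwise continuity of the limit at $0$ alone forces it to be a characteristic function). First I would record the standard reduction: it suffices to show that every subsequence of $(X_k)_k$ admits a further subsequence converging in distribution to $X$. Indeed, if $X_k\not\xrightarrow{d}X$ then by Definition \ref{defconvdistr} there exist a bounded continuous $g$, a number $\delta>0$ and a subsequence along which $\bigl|\mathbb{E}[g(X_{k_j})]-\mathbb{E}[g(X)]\bigr|\geq\delta$ for all $j$, which contradicts the existence of a sub-subsequence along which these $g$-expectations converge to $\mathbb{E}[g(X)]$.

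\textbf{Tightness (the technical heart).} The key estimate is that for every $u>0$ and every random variable $Z$ with characteristic function $\psi$,
\[
\mathbb{P}\!\left(|Z|\geq\tfrac{2}{u}\right)\;\leq\;\frac{1}{u}\int_{-u}^{u}\bigl(1-\psi(t)\bigr)\,dt ,
\]
which follows by Fubini from $\frac1u\int_{-u}^{u}(1-\psi(t))\,dt=\mathbb{E}\bigl[2\bigl(1-\tfrac{\sin(uZ)}{uZ}\bigr)\bigr]$ together with the elementary bound $1-\tfrac{\sin x}{x}\geq\tfrac12\,\mathbbm{1}_{\{|x|\geq2\}}$. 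Applying this with $Z=X_k$, $\psi=\phi_k$, and fixing $\varepsilon>0$: since $\phi$ is continuous at $0$ with $\phi(0)=1$, I can choose $u>0$ so that $\tfrac1u\int_{-u}^{u}(1-\phi(t))\,dt<\varepsilon$; by dominated convergence on the compact interval $[-u,u]$ (using $|1-\phi_k|\leq2$ and $\phi_k\to\phi$ pointwise) the same integral for $\phi_k$ is $<\varepsilon$ for all large $k$, while for the finitely many remaining indices $k$ I use only that each individual law satisfies $\mathbb{P}(|X_k|\geq R)\to0$ as $R\to\infty$. This yields $\sup_k\mathbb{P}(|X_k|\geq R)\to0$ as $R\to\infty$, i.e.\ tightness.

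\textbf{Conclusion.} By Prokhorov's theorem, tightness implies that every subsequence of $(X_k)_k$ has a further subsequence $(X_{k_j})_j$ converging in distribution to some random variable $Y$. Applying Definition \ref{defconvdistr} to the bounded continuous functions $x\mapsto\cos(tx)$ and $x\mapsto\sin(tx)$ shows that the characteristic function of $Y$ equals $\lim_j\phi_{k_j}(t)=\phi(t)$, the characteristic function of $X$. By the uniqueness theorem for characteristic functions, $Y\sim X$. Combined with the first-step reduction, this gives $X_k\xrightarrow{d}X$.

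\textbf{Main obstacle and an alternative.} The only genuinely technical point is the tightness step, and within it the delicate part is passing the smallness of $\int_{-u}^{u}(1-\phi)$ to $\int_{-u}^{u}(1-\phi_k)$ \emph{uniformly enough} in $k$ — handled by splitting into ``$k$ large'' (dominated convergence) and ``$k$ small'' (finitely many fixed laws); everything else is soft. A slightly slicker but essentially equivalent route would bypass Prokhorov by proving directly, for every Schwartz function $g$, that $\mathbb{E}[g(X_k)]=\frac{1}{2\pi}\int\hat g(t)\,\phi_k(t)\,dt\to\frac{1}{2\pi}\int\hat g(t)\,\phi(t)\,dt=\mathbb{E}[g(X)]$ (dominated convergence, $|\phi_k|\leq1$, $\hat g$ integrable), and then upgrading to all bounded continuous test functions via the same tightness bound; since this still requires the tightness estimate, the subsequence argument above is the cleaner presentation.
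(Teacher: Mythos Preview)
Your argument is correct and is the standard textbook proof of L\'evy's continuity theorem via tightness, Prokhorov, and uniqueness of characteristic functions. The paper does not actually prove this statement: its ``proof'' is just a citation to \cite{williams_1991, Fristedt1996-ps, Kallenberg2021}, so there is nothing to compare against beyond noting that your route is precisely the one found in those references.
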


\begin{proof} See e.g. \cite{williams_1991, Fristedt1996-ps, Kallenberg2021}.
\end{proof}

Since, in our case, $f_k(\Theta,x)$ is not a sequence of i.i.d. random variables -- the local observables are neither identically distributed nor independent -- we need a strategy to generalize Theorem \ref{clt}. This is the aim of the following paragraph.

\subsubsection{Cumulants and dependency graphs}\label{cumulants}

The strategy to prove Theorem \ref{init} is based on the theory of cumulants of a random variable, which we briefly recall (see \cite{doring2021method}):
\begin{definition}[Cumulants]
Given a real-valued random variable $X$, its \textit{cumulants} $\kappa_j\equiv\kappa_j(X)$, $j\in\mathbb{N}$, are the defined as
\[\kappa_j(X)=(-i)^j\frac{d^j}{dt^j}\bigg|_{t=0}\log\mathbb{E}\left[e^{itX}\right],\]
provided that the derivative exists.
\end{definition}
It is easy to prove that
\[\kappa_1(X)=\mathbb{E}[X],\qquad \kappa_2(X)=\text{Var}[X],\]
when they exist.
If the characteristic function is $r$-times differentiable, an equivalent characterization of the cumulants of order $j=1,\dots, r$ is provided by the Taylor expansion:
\[\log\mathbb{E}\left[e^{itX}\right]=\sum_{j=1}^r\frac{\kappa_j}{j!}(it)^j+o(t^r) \qquad \text{as} \quad t\to 0.\]
Now we introduce the theory of dependency graphs following \cite{ModPHI}. It will be the key tool to support and quantify the intuition that a circuit in which two local observables are ``often'' independent yields an output behaving as one would expect by the central limit theorem, even though the hypotheses are not satisfied.

\begin{definition}\label{dep}
Given a family of random variables $\{X_\alpha\}_{\alpha\in A}$, a graph $\mathscr{G}$ with vertex set $V$ is called a \textit{dependency graph} for the family if the following property holds:
whenever $V_1$ and $V_2$ are disjoint subsets of $V$ such that there are no edges in $\mathscr{G}$ with one end in $V_1$ and one in $V_2$, the subfamilies of random variables $\{X_\alpha\}_{\alpha\in V_1}$ and $\{X_\alpha\}_{\alpha\in V_2}$ are \modifica{mutually independent.
We stress that there is no assumption of independence within each of the sets  $\{X_\alpha\}_{\alpha\in V_1}$ and $\{X_\alpha\}_{\alpha\in V_2}$.}
\end{definition}

The dependency graph of a family of random variables might not be unique: the complete graph is always a valid dependecy graph. However, as we will see in the next theorems, we look for dependency graph having as few edges as possible.

\begin{figure}[ht]
\centering
\includegraphics[width=0.98\textwidth]{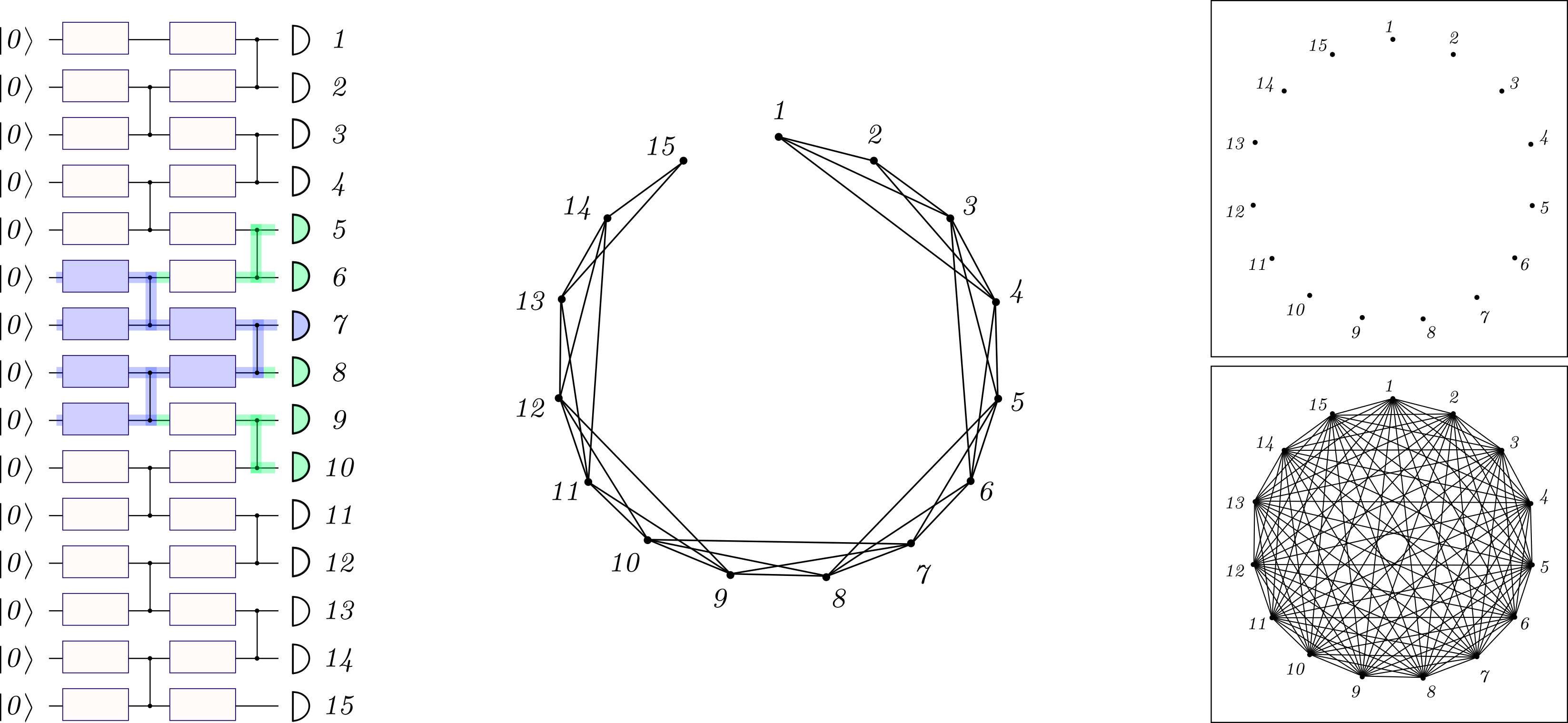}
\caption{A dependency graph for the circuit on the left compared with the trivial graph with no edges (independent observales) and the complete graph (all the variables are pairwise dependent). Informally speaking, the dependency graph of the circuit seems somehow closer to the trivial graph than the complete graph, since ``far'' observables are independent.}
\label{graph}
\end{figure}

In \autoref{graph} we see an example of a dependency graph for a quantum circuit. The edges connect the index of the observables which are not mutually independent. We should prove that, \textit{for our case}, this procedure to construct a dependency graph is compatible with Definition \ref{dep}. 
Indeed, this statement is not generalizable to any family of random variables. An example of a triple of random variables which are independent in pairs but not globally independent can be found in \cite{Hogg2004, Romano2017}.
The case of the quantum circuit we are considering is not affected by this problem.
\begin{lemma}\label{ok}
    Let us consider a quantum circuit as in \autoref{ch4} with $m$ qubits. Let $\mathscr{G}=(V,A)$ be the graph with vertices $V=\{1,\dots,m\}$ and edges $E$ defined by 
    \begin{align}
    \nonumber (k,k')\in E\qquad \iff\qquad &f_k(\Theta,x) \text{ and } f_{k'}(\Theta,x) \\
    \nonumber&\text{are not independent at random initialization}\\
    &\text{i.e., } k'\in \mathcal{P}_k.
    \end{align}
    Then, $\mathscr{G}$ is a dependency graph for $\{f_k(\Theta,x)\}_{k\in V}$ according to Definition (\ref{dep}).
\end{lemma}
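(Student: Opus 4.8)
The plan is to trace everything back to the independence of the circuit parameters via the light-cone structure. By Lemma~\ref{reduction} and Corollary~\ref{extlc}, each local observable depends only on the parameters inside its extended past light cone, i.e.\ $f_k(\Theta,x)=f_k(\Theta_{\mathcal{N}_k},x)$, and at random initialization the parameters $\{\theta_i\}_{i=1}^{Lm}$ are mutually independent. The elementary fact I would use repeatedly is that measurable functions of disjoint sub-collections of an independent family are themselves jointly independent.

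First I would record the easy half: if $\mathcal{N}_k\cap\mathcal{N}_{k'}=\emptyset$ then $f_k$ and $f_{k'}$ are functions of disjoint blocks of independent parameters, hence independent; contrapositively, $(k,k')\in E$ forces $\mathcal{N}_k\cap\mathcal{N}_{k'}\neq\emptyset$, so $\mathcal{P}_k\subseteq\bigcup_{i\in\mathcal{N}_k}\mathcal{M}_i$, exactly the inclusion already exploited in the proof of Lemma~\ref{PM}. Now let $V_1,V_2\subseteq V$ be disjoint with no edge of $\mathscr{G}$ joining them. I claim that $S_1:=\bigcup_{k\in V_1}\mathcal{N}_k$ and $S_2:=\bigcup_{k'\in V_2}\mathcal{N}_{k'}$ are disjoint. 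If not, some parameter index $i$ lies in $\mathcal{N}_k\cap\mathcal{N}_{k'}$ with $k\in V_1$ and $k'\in V_2$; since by Assumption~\ref{domain} each $f_k(\Theta_{\mathcal{N}_k},x)$ is a trigonometric polynomial in $\theta_i$ (and in the remaining shared angles), a genuine common dependence on $\theta_i$ cannot factorize, so $f_k$ and $f_{k'}$ are not independent, contradicting $(k,k')\notin E$. Granted $S_1\cap S_2=\emptyset$, the family $\{f_k\}_{k\in V_1}$ is a function of $\{\theta_i\}_{i\in S_1}$ and $\{f_{k'}\}_{k'\in V_2}$ is a function of the disjoint $\{\theta_i\}_{i\in S_2}$; independence of the $\theta_i$ then yields the required joint independence, which is Definition~\ref{dep}.

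I expect the delicate point to be precisely the claim that overlapping extended light cones force genuine statistical dependence — this is what separates our situation from that of a triple of random variables which are pairwise but not jointly independent, and a careless argument would have to rule out pathological cancellations among the many shared angular variables. The cleanest way to sidestep it is to prove the statement for the slightly larger graph $\widehat{\mathscr{G}}$ whose edges are the pairs with $\mathcal{N}_k\cap\mathcal{N}_{k'}\neq\emptyset$: the argument above shows $\widehat{\mathscr{G}}$ is a dependency graph with no extra work, and Lemma~\ref{PM} bounds each of its degrees by $|\mathcal{M}||\mathcal{N}|$, so $\widehat{\mathscr{G}}$ can replace $\mathscr{G}$ verbatim in all the cumulant estimates of the next subsection. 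If one insists on the literal statement, one restricts from the extended light cones to the true light cones $\mathscr{L}^p_k$ (discarding parameters with $\mathcal{G}_i=0$ or trivial effect) and argues on the Fourier expansion of $f_k$ in the shared angles that a non-removable common frequency obstructs independence; controlling all such cancellations simultaneously is the genuine obstacle, which is why in a first pass I would take the $\widehat{\mathscr{G}}$ route.
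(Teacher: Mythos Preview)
Your approach is essentially the paper's: reduce the dependency-graph property to disjointness of the parameter sets $S_1=\bigcup_{k\in V_1}\mathcal{N}_k$ and $S_2=\bigcup_{k'\in V_2}\mathcal{N}_{k'}$, then invoke independence of the $\theta_i$. The paper makes exactly the step you flag as delicate---from ``$k'\notin\mathcal{P}_k$'' (pairwise independence) to ``$\mathcal{N}_k\cap\mathcal{N}_{k'}=\emptyset$''---without further comment, so your direct argument and the paper's are on the same footing, and the gap you worry about is present in both.

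Your $\widehat{\mathscr{G}}$ workaround is the right fix and is, in effect, what the paper actually uses downstream: the only role of Lemma~\ref{ok} in Theorem~\ref{init} is to produce \emph{some} dependency graph with maximal degree bounded by $|\mathcal{M}||\mathcal{N}|$, and the graph with edges $\{(k,k'):\mathcal{N}_k\cap\mathcal{N}_{k'}\neq\emptyset\}$ does this cleanly via Lemma~\ref{PM} without ever needing the converse implication. So your proposal is at least as rigorous as the paper's own argument, and the route you call ``first pass'' is in fact the complete one.
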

\begin{remark}
As already asserted in \autoref{ch4}, in all our discussion we assume that the parameters $\theta_1,\dots, \theta_m$ are \textit{independent} random variables. We will not repeat every time this assumption, which will be almost always implicit.
\end{remark}
\begin{proof}
Recalling that
\begin{align}
\nonumber\mathcal{P}_k&=\{k'\in\{1,\dots,m\} : f_{k'}(\Theta,x) \text{ is not independent of }\phantom{\}} \\
\nonumber&\phantom{=\{k'\in\{1,\dots,m\} :\}}f_{k}(\Theta,x) \text{ at random initialization}\}\\
\nonumber&\subseteq \{k'\in\{1,\dots,m\}: \mathcal{N}_k\cap \mathcal{N}_{k'}\neq \emptyset\}\\
&=\{k'\in\{1,\dots,m\}: \exists\, i\in\{1,\dots,Lm\} \text{ such that } k\in\mathcal{M}_i \text{ and } k'\in\mathcal{M}_{i}\}.
\end{align}
Therefore, if $k'\notin\mathcal{P}_k$, then the observables
\[f_k(\Theta_{\mathcal{N}_k},x)\qquad\text{and}\qquad f_{k'}(\Theta_{\mathcal{N}_{k'}},x)\]
are functions of different parameters. Therefore, if $V_1$ and $V_2$ are subsets of $V$ such that
\[ \forall\, k\in V_1 \quad \nexists\, k'\in V_2 \quad \text{such that} \quad k'\in\mathcal{P}_k,  \]
then the sets
\[\bigcup_{k\in V_1}\mathcal{N}_k \quad \text{and} \quad \bigcup_{k'\in V_2}\mathcal{N}_{k'}\] 
are disjoint:
\[ \left(\bigcup_{k\in V_1}\mathcal{N}_k\right) \cap \left(\bigcup_{k'\in V_2}\mathcal{N}_{k'}\right)=\bigcup_{\substack{k\in V_1\\k'\in V_2}}\mathcal{N}_k\cap \mathcal{N}_{k'}=\emptyset\]
since $\mathcal{N}_k\cap \mathcal{N}_{k'}=\emptyset$ for any $k\in V_1,k'\in V_2$.
Noticing that the family $\{f_k(\Theta,x)\}_{k\in V_1}$ depends on $\bigcup_{k\in V_1}\mathcal{N}_k$
and the family $\{f_{k'}(\Theta,x)\}_{k'\in V_2}$ depends on $\bigcup_{k'\in V_2}\mathcal{N}_{k'}$, we conclude that they are independent, 
because they depend on different independent random variables.
\end{proof}

We define the \textit{maximal degree} $D$ of a graph $\mathscr{G}=(V,E)$ as the maximum number of edges incidenting on any fixed vertex
\[D=\max_{k\in V}\deg k = \max_{k\in V} |\{k'\in V: (k,k')\in E\}|.\]

Now that we are able to construct a dependency tree for our circuit, let us state the following technical theorem:

\begin{theorem}[\cite{ModPHI}]\label{janson}
 For any integer $r\geq 1$, there exists a constant $C_r$ with the following property.
Let $\{X_\alpha\}_{\alpha\in V}$ be a family of random variables with dependency graph $\mathscr{G}$. We denote with $N=|V|$ the number of vertices of $\mathscr{G}$ and $D$ the maximal degree of $\mathscr{G}$. Assume that the variables $X_\alpha$ are
uniformly bounded by a constant $A$. Then, if 
\[X = \sum_{\alpha\in V}X_\alpha,\]
one has
\begin{align}
|\kappa_r(X)|\leq C_rN(D+1)^{r-1}A^r
\label{kappa}
\end{align}
with
\begin{align}
C_r=2^{r-1}r^{r-2}.
\label{cr}
\end{align}
\end{theorem}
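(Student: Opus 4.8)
The plan is to reduce the statement to a combinatorial count over index tuples, exploiting the two structural features of cumulants: multilinearity and their vanishing on ``independent blocks''. Expanding $X=\sum_{\alpha\in V}X_\alpha$ by multilinearity of the (joint) cumulant in its arguments, one gets
\[
\kappa_r(X)=\sum_{(\alpha_1,\dots,\alpha_r)\in V^r}\kappa\!\left(X_{\alpha_1},\dots,X_{\alpha_r}\right).
\]
The decisive observation is that the term indexed by $(\alpha_1,\dots,\alpha_r)$ vanishes unless the subgraph of $\mathscr{G}$ induced by $\{\alpha_1,\dots,\alpha_r\}$ is connected: if that vertex set split into two nonempty parts with no edge between them, then by Definition \ref{dep} the corresponding families of random variables would be independent, and a joint cumulant of a union of two independent families is zero. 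Hence the sum effectively runs only over tuples whose support induces a connected subgraph of $\mathscr{G}$.

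Next I would estimate the two factors controlling each surviving term. For the \emph{number} of connected tuples: fixing $\alpha_1$ costs $N$ choices; a connected induced subgraph admits a spanning tree, so choosing a tree shape on the $r$ slots rooted at the first index costs at most $r^{r-2}$ by Cayley's formula, and reconstructing the tuple along this tree forces every new index either to equal or to be adjacent to an already-chosen one, i.e.\ at most $D+1$ possibilities for each of the $r-1$ tree edges; this gives a bound of the form $N\,(D+1)^{r-1}$ times a purely $r$-dependent factor. For the \emph{size} of each term: expanding $\kappa(X_{\alpha_1},\dots,X_{\alpha_r})$ via the moment--cumulant (M\"obius) formula $\kappa(Y_1,\dots,Y_r)=\sum_{\pi\vdash[r]}(-1)^{|\pi|-1}(|\pi|-1)!\prod_{B\in\pi}\mathbb{E}\!\left[\prod_{i\in B}Y_i\right]$ and using $\|X_\alpha\|_\infty\le A$ bounds each term by $A^r$ times a sum over set partitions weighted by $(|\pi|-1)!$. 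Combining the two estimates already yields a bound of the announced shape $C_r\,N\,(D+1)^{r-1}A^r$.

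The main obstacle is pinning $C_r$ down to exactly $2^{r-1}r^{r-2}$: a naive product of the ``number of connected tuples'' bound with the ``size of a joint cumulant'' bound overshoots for $r\ge 4$, so the two steps cannot be fully decoupled. The sharp constant instead comes from handling the double sum over tuples \emph{and} over the partitions $\pi$ appearing in the moment--cumulant formula at once: for each $\pi$ one retains only those tuples for which the corresponding product of moments survives the connectivity constraint (so that $\pi$ must ``connect'' the support graph), and the resulting count, multiplied by $|(|\pi|-1)!|$ and summed over $\pi$, produces the factor $2^{r-1}$ alongside the $r^{r-2}$ coming from Cayley's formula. I would carry out this bookkeeping following the organization of \cite{ModPHI} rather than reprove it from scratch, since it is a purely combinatorial argument independent of the quantum-circuit setting.
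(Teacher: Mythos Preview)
The paper does not supply its own proof of this theorem: it is quoted from \cite{ModPHI}, with the surrounding text only recording the history of the constant (Janson's original statement, the $(2e)^r(r!)^3$ constant of D\"oring--Eichelsbacher, and the sharpened $2^{r-1}r^{r-2}$ of F\'eray--M\'eliot--Nikeghbali). So there is nothing in the paper to compare your argument against beyond the citation itself.

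That said, your sketch is a faithful outline of the argument in \cite{ModPHI}: expand $\kappa_r(X)$ by multilinearity, discard tuples whose support is disconnected in $\mathscr{G}$ using the independence property of dependency graphs, bound the number of surviving tuples via a spanning-tree/Cayley count contributing $r^{r-2}(D+1)^{r-1}$, and control the size of each joint cumulant through the moment--cumulant formula. You are also right that decoupling the tuple count from the partition sum overshoots and that the sharp $2^{r-1}$ requires treating them jointly; your decision to defer that bookkeeping to \cite{ModPHI} matches exactly what the paper does.
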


The upper bound (\ref{kappa}) was stated by Janson in \cite{10.1214/aop/1176991903}. D\"oring and Eichelsbacher established in \cite{D_ring_2012} that the bound, according to the original proof, holds with $C_r=(2e)^r(r!)^3$. Féray, Méliot and Nikeghbali \cite{ModPHI} found the new bound (\ref{cr}) by giving a new proof of (\ref{kappa}).

\subsection{Convergence to a Gaussian process}\label{ch5-2}

The following assumption fixes the expectation value of the model function at random initialization and the asymptotic behaviour of the normalization $N(m)$.
\begin{assumption}
\label{zeromean}
Where not otherwise specified, we will assume that the distribution of the independent random variables $\theta_1,\dots,\theta_{|\Theta|}$, the architecture of the quantum circuit and the normalization $N(m)$ chosen are such that
\[\mathbb{E}[f_k(\Theta,x)]=0\qquad \forall\,x\in \mathcal{X}\label{mean}\]
and
\[\lim_{m\to\infty}\sup_{x,x'\in\mathcal{X}} \left|\mathbb{E}[f(\Theta,x)f(\Theta,x')]-\mathcal{K}(x,x')\right|=0, \label{convergence}\]
where $\mathcal{K}:\mathcal{X}\times\mathcal{X}\to \mathbb{R}$ is an arbitrary bivariate function from the feature space to the real numbers with strictly positive diagonal elements, i.e.
\[\mathcal{K}(x,x)>0 \qquad\forall\,x\in\mathcal{X}. \label{eq:positive} \]
\end{assumption}

Therefore $N(m)$ is chosen so that the limit $m\to \infty$ yields a finite nontrivial covariance function. This assumption does not fix a unique normalization, since a multiplicative constant or asymptotically negligible corrections to $N(m)$ produce other valid normalizations. Such ambiguity is irrelevant in the following discussion: after choosing the circuit we want to study, we will always assume to have fixed a possible normalization among the ones satisfying Assumption \ref{zeromean}.

\modifica{
\begin{remark}
Assumption \ref{zeromean} implies that the operator norm of $\frac{\mathcal{O}}{N(m)}$ does in general scale with $m$.
Indeed, let us consider the toy model where the expectation values of each $\mathcal{O}_k$ are independent.
Then, to get a finite nonzero variance we need $N(m)=\sqrt{m}$, making the operator norm of $\frac{\mathcal{O}}{N(m)}$ scale as $\sqrt{m}$.
\end{remark}
}

\begin{remark}
The architecture we introduced in \autoref{ch4} satisfies (\ref{mean}) by Assumption \ref{finallayer}. 
\end{remark}
\modifica{\begin{remark}\label{rem3.3}
The reader may be worried that the existence of the limit \eqref{convergence} may be a very restrictive hypothesis.
However, we will now show that such limit always exists for any sequence of quantum neural networks satisfying the remaining hypotheses upon taking a suitable subsequence.
\begin{enumerate}
    \item Due to the multiplicative degree of freedom in the definition of $N(m)$ for each $m$, we can suppose to fix a constant $c$ (not depending on $m$) that uniformly bounds the matrix elements $\left|\mathbb{E}[f(\Theta,x)f(\Theta,x')]\right|\leq c$ for any $x,x'\in\mathcal{X}$.
    \item Since we assumed that the set of input is finite (Assumption \ref{convex}), by a standard compactness argument it is possible to identify a subsequence $(m_k)_{k\in\mathbb{N}}$ such that the limit
    \begin{align}
        \mathcal{K}(x,x')\coloneqq\lim_{k\to\infty}\mathbb{E}[f^{(m_k)}(\Theta,x)f^{(m_k)}(\Theta,x')]
    \end{align}
    exists.
    \item We can now study the properties of the circuits defined by this subsequence. To simplify the notation, in the rest of the paper we will not refer to any subsequence, but all our results can easily be generalized to such case.
\end{enumerate}
\end{remark}
}

\begin{figure}[ht]
\centering
\includegraphics[width=0.45\textwidth]{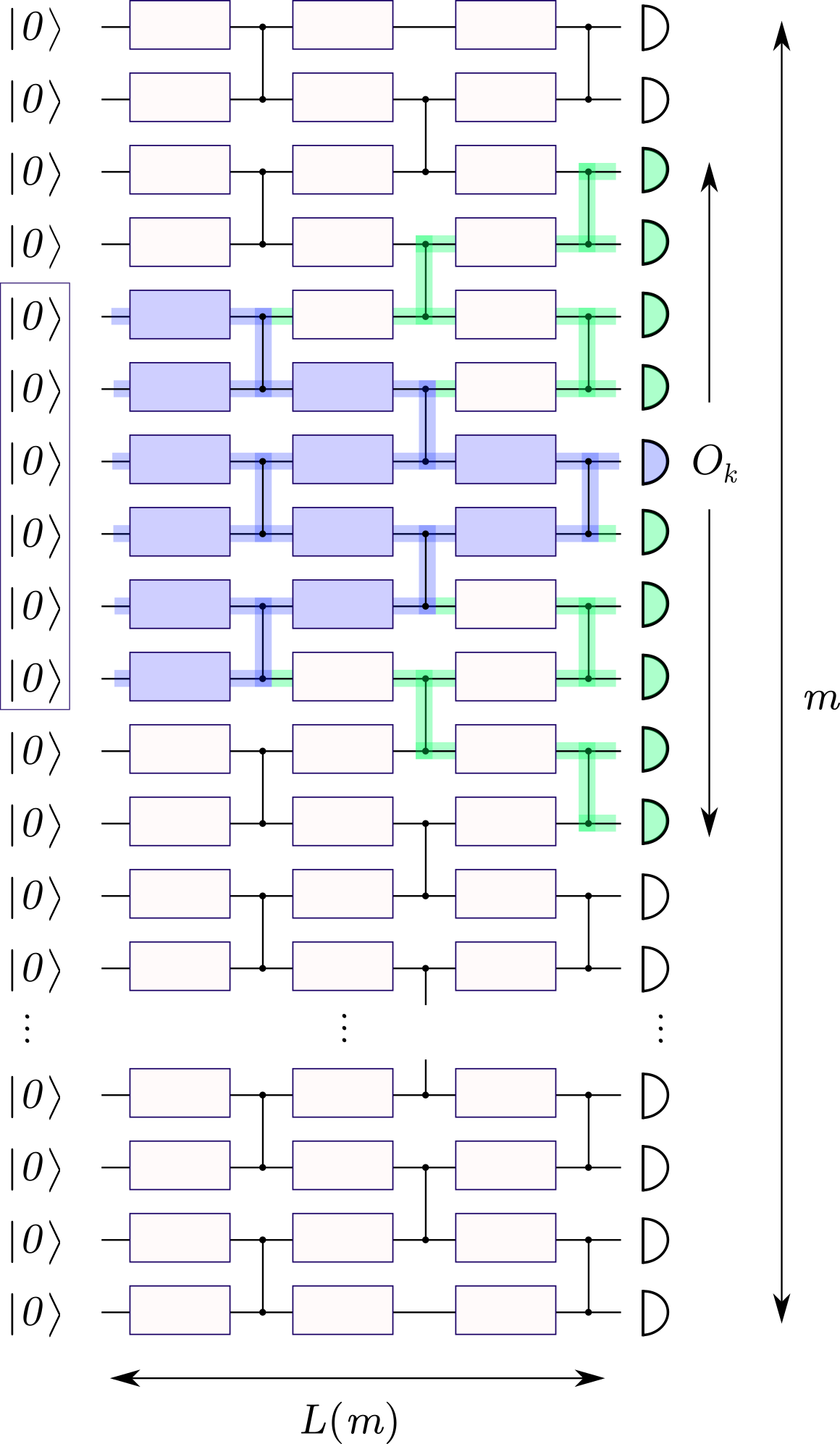}
\caption{For narrow enough light cones, the number of observables which are not independent of $f_k(\Theta,x)$ is small with respect to the total number of observables in the limit $m\to\infty$.}
\label{narrow}
\end{figure}

Now that for any $k\in\{1,\dots,m\}$ we can estimate the number of the observables which are not independent of $f_k(\Theta,x)$ at random initialization (Lemma \ref{PM}) and that we know how to construct the dependency graph $\mathscr{G}$ of the quantum circuit (Lemma \ref{ok}), we can give an upper bound to the maximal degree of $\mathscr{G}$ and apply then Theorem \ref{janson}. Combining the result with a strategy analogous to the proof of the central limit theorem, we obtain the following Theorem, which is the main result of this section.
\begin{mdframed}
\begin{theorem}[Gaussian process at initialization]
\label{init}
 Given $\theta_1,\dots,\theta_{|\Theta|}$ independent random variables, suppose that Assumption \ref{zeromean} holds and that
\begin{align}\label{hpgrowth}
\lim_{m\to\infty}\frac{m|\mathcal{M}|^2|\mathcal{N}|^2}{N^3(m)}=0.
\end{align}
Then, as $m\to \infty$, $x\mapsto f^{(m)}(\Theta,x)$ converges in distribution to a mean zero Gaussian process with covariance function $\mathcal{K}$ defined in Assumption \ref{zeromean}:
\begin{align}
f^{(m)}(\Theta,x)\xrightarrow{d}f^{(\infty)}(x)\sim GP(0,\mathcal{K})\qquad{m\to \infty}. 
\end{align}
\end{theorem}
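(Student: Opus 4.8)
The plan is to reduce the claim, which over the finite set $\mathcal{X}$ (Assumption~\ref{convex}) amounts to convergence in distribution of the random vector $\big(f^{(m)}(\Theta,x)\big)_{x\in\mathcal{X}}$ to a centered Gaussian vector, to a central‑limit–type statement for a single real random variable via the Cram\'er--Wold device, and then to run the method of cumulants by feeding the light‑cone bounds of Lemma~\ref{PM} and Lemma~\ref{ok} into Theorem~\ref{janson}. Concretely, I would fix real coefficients $(c_x)_{x\in\mathcal{X}}$, set $g_k(\Theta)=\sum_{x\in\mathcal{X}}c_x\,f_k(\Theta,x)$ and $Z^{(m)}=\sum_{x\in\mathcal{X}}c_x\,f^{(m)}(\Theta,x)=\tfrac{1}{N(m)}\sum_{k=1}^m g_k(\Theta)$. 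Since $|f_k(\Theta,x)|\le1$, each $g_k$ is bounded by the $m$‑independent constant $A:=\sum_{x}|c_x|$, and $\mathbb{E}[g_k]=0$ by~\eqref{mean}. The graph $\mathscr{G}$ of Lemma~\ref{ok} is also a dependency graph for $\{g_k\}_{k=1}^m$: its proof only uses that $f_k(\Theta,\cdot)$ depends on $\Theta$ through the parameters indexed by $\mathcal{N}_k$, so two subfamilies of the $g_k$'s with no connecting edge depend on disjoint blocks of independent parameters. By Lemma~\ref{PM} the maximal degree of $\mathscr{G}$ obeys $D+1\le|\mathcal{M}||\mathcal{N}|+1\le 2|\mathcal{M}||\mathcal{N}|$.

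Next I would estimate the cumulants of $Z^{(m)}$. By homogeneity of the cumulants, $\kappa_r(Z^{(m)})=N(m)^{-r}\,\kappa_r\big(\sum_k g_k\big)$, so Theorem~\ref{janson} gives, for all $r\ge2$,
\[
\big|\kappa_r(Z^{(m)})\big|\ \le\ C_r\,2^{r-1}A^r\,\frac{m\,(|\mathcal{M}||\mathcal{N}|)^{r-1}}{N(m)^r},\qquad C_r=2^{r-1}r^{r-2},
\]
while $\kappa_1(Z^{(m)})=\mathbb{E}[Z^{(m)}]=0$ by~\eqref{mean}, and, by translation invariance of higher cumulants and by~\eqref{convergence}, $\kappa_2(Z^{(m)})=\sum_{x,x'\in\mathcal{X}}c_xc_{x'}\,\mathbb{E}\big[f^{(m)}(\Theta,x)f^{(m)}(\Theta,x')\big]\to\sigma^2:=\sum_{x,x'\in\mathcal{X}}c_xc_{x'}\,\mathcal{K}(x,x')\ge0$, the nonnegativity because $(\mathcal{K}(x,x'))_{x,x'}$ is an entrywise limit of covariance matrices. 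For $r\ge3$ I would rewrite the bound as
\[
C_r\,2^{r-1}A^r\ \frac{m\,|\mathcal{M}|^2|\mathcal{N}|^2}{N(m)^3}\left(\frac{|\mathcal{M}||\mathcal{N}|}{N(m)}\right)^{r-3},
\]
whose first factor vanishes by hypothesis~\eqref{hpgrowth}. For the second factor I would observe that Assumption~\ref{zeromean} forces $N(m)=O(m)$: indeed $|f^{(m)}(\Theta,x)|\le m/N(m)$ while $\mathbb{E}[f^{(m)}(\Theta,x)^2]\to\mathcal{K}(x,x)>0$, so $m/N(m)$ is bounded away from $0$; hence $\big(|\mathcal{M}||\mathcal{N}|/N(m)\big)^2=(N(m)/m)\cdot\big(m|\mathcal{M}|^2|\mathcal{N}|^2/N(m)^3\big)\to0$, so $|\mathcal{M}||\mathcal{N}|/N(m)\to0$ and in particular the second factor is bounded. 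Therefore $\kappa_r(Z^{(m)})\to0$ for every $r\ge3$.

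Finally I would conclude by the method of cumulants, mimicking the proof of the central limit theorem (Theorem~\ref{clt}). Using the explicit constant $C_r=2^{r-1}r^{r-2}$ and Stirling's bound $r^{r-2}\le r!\,e^r$, the estimate above gives $|\kappa_r(Z^{(m)})|/r!\le \tfrac14\,\tfrac{m}{|\mathcal{M}||\mathcal{N}|}\big(4eA\,|\mathcal{M}||\mathcal{N}|/N(m)\big)^r$, so the cumulant series $\sum_{r\ge1}\kappa_r(Z^{(m)})(it)^r/r!$ has radius of convergence at least $\tfrac{1}{4eA}\cdot N(m)/|\mathcal{M}||\mathcal{N}|\to\infty$ and represents $\log\mathbb{E}[e^{itZ^{(m)}}]$ on a fixed neighbourhood of the origin for $m$ large; there its order‑$\ge3$ tail is $O\big(m|\mathcal{M}|^2|\mathcal{N}|^2/N(m)^3\big)\to0$ by~\eqref{hpgrowth} while the order‑$2$ term tends to $-\sigma^2t^2/2$, whence $\mathbb{E}[e^{itZ^{(m)}}]\to e^{-\sigma^2t^2/2}$ on that neighbourhood. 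Tightness of $(Z^{(m)})_m$ (a standard consequence of convergence of characteristic functions near $0$ to one continuous at $0$) together with L\'evy's continuity theorem (Theorem~\ref{levy}) then yields $Z^{(m)}\xrightarrow{d}\mathcal{N}(0,\sigma^2)$. Since this holds for every $(c_x)_{x\in\mathcal{X}}$ with the stated limiting variance, the Cram\'er--Wold theorem gives that $\big(f^{(m)}(\Theta,x)\big)_{x\in\mathcal{X}}$ converges in distribution to the centered Gaussian vector with covariance $(\mathcal{K}(x,x'))_{x,x'\in\mathcal{X}}$, i.e.\ $f^{(m)}(\Theta,\cdot)\xrightarrow{d}f^{(\infty)}\sim GP(0,\mathcal{K})$ over the finite set $\mathcal{X}$.

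The cumulant estimate itself is routine bookkeeping; the delicate step is the last one, upgrading ``$\kappa_r\to0$ for all $r\ge3$'' to genuine convergence in distribution. The clean way is to control the cumulant generating function uniformly in $m$ on a fixed neighbourhood of the origin — which is exactly why the sharp constant $C_r=2^{r-1}r^{r-2}$ of~\cite{ModPHI} is convenient rather than the cruder bound with $(r!)^3$ — handling separately the degenerate case $\sigma^2=0$ (point mass at $0$), or else one simply invokes the quantitative method‑of‑cumulants central limit theorem of~\cite{ModPHI}. One should also verify that hypothesis~\eqref{hpgrowth} is precisely the condition killing the order‑$3$ cumulant, the higher orders then coming for free because $|\mathcal{M}||\mathcal{N}|/N(m)\to0$; this is where the specific powers $m$, $|\mathcal{M}|^2$, $|\mathcal{N}|^2$, $N(m)^3$ in~\eqref{hpgrowth} originate.
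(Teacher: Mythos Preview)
Your proof is correct and follows essentially the same route as the paper: Cram\'er--Wold reduction to a scalar linear combination, the dependency graph of Lemma~\ref{ok} with degree bound $|\mathcal{M}||\mathcal{N}|$ from Lemma~\ref{PM}, Janson's cumulant inequality (Theorem~\ref{janson}) with the sharp constant $C_r=2^{r-1}r^{r-2}$, a Stirling bound to control the tail of the cumulant series, and L\'evy's continuity theorem. Two small remarks: your derivation of $|\mathcal{M}||\mathcal{N}|/N(m)\to0$ via the crude bound $N(m)=O(m)$ is in fact simpler than the paper's route through Lemma~\ref{Nmax} and Corollary~\ref{controllo}; and your ``neighbourhood + tightness'' phrasing at the end is slightly roundabout, since $R_m\to\infty$ already gives pointwise convergence of the characteristic function on all of $\mathbb{R}$, after which Theorem~\ref{levy} applies directly (this is exactly what the paper does).
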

\end{mdframed}

\subsection{Proof of Theorem \ref{init}}
Before proving Theorem \ref{init}, we need to show that the hypothesis (\ref{hpgrowth}) implies a control on the growth of the light cones with respect to the normalization constant (see Corollary \ref{controllo}) due to an upper bound of the normalization constant (Lemma \ref{Nmax}).

\begin{lemma}\label{Nmax}
Assumption \ref{zeromean} implies the following bound: there exists $\bar m\in\mathbb{N}$ for any $m\geq \bar m$.
\[N(m)\leq C\sqrt m\sqrt{|\mathcal{M}||\mathcal{N}|}.\]
\end{lemma}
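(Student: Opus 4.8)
The plan is to read off the bound directly from the second–moment asymptotics fixed by Assumption \ref{zeromean}, after counting how many of the covariance terms that build up $\mathbb{E}[f^{(m)}(\Theta,x)^2]$ can be nonzero. First I would expand, using (\ref{deflocobs}),
\[
\mathbb{E}\!\left[f^{(m)}(\Theta,x)^2\right]=\frac{1}{N(m)^2}\sum_{k,k'=1}^m\mathbb{E}\!\left[f_k(\Theta,x)\,f_{k'}(\Theta,x)\right],
\]
and observe that whenever $k'\notin\mathcal{P}_k$ the random variables $f_k(\Theta,x)$ and $f_{k'}(\Theta,x)$ are independent, by the very definition of $\mathcal{P}_k$ in Lemma \ref{PM}; hence $\mathbb{E}[f_k f_{k'}]=\mathbb{E}[f_k]\,\mathbb{E}[f_{k'}]=0$ by the zero-mean hypothesis (\ref{mean}). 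So only the pairs $(k,k')$ with $k'\in\mathcal{P}_k\cup\{k\}$ contribute.

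Next I would estimate each surviving term by $1$, using $|f_k(\Theta,x)|\le 1$ from Assumption \ref{domain}, and count such pairs with Lemma \ref{PM}: there are at most $\sum_{k=1}^m|\mathcal{P}_k\cup\{k\}|\le m\,(|\mathcal{M}|\,|\mathcal{N}|+1)\le 2m\,|\mathcal{M}|\,|\mathcal{N}|$ of them. This yields
\[
\mathbb{E}\!\left[f^{(m)}(\Theta,x)^2\right]\le \frac{2\,m\,|\mathcal{M}|\,|\mathcal{N}|}{N(m)^2}\qquad\text{for every }x\in\mathcal{X}.
\]

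Finally I would use the lower bound implicit in Assumption \ref{zeromean}. Since $\mathcal{X}$ is finite (Assumption \ref{convex}) and $\mathcal{K}(x,x)>0$ for all $x$, the constant $c_0:=\min_{x\in\mathcal{X}}\mathcal{K}(x,x)$ is strictly positive; by the uniform convergence (\ref{convergence}) there is $\bar m\in\mathbb{N}$ such that $\mathbb{E}[f^{(m)}(\Theta,x)^2]\ge c_0/2$ for all $m\ge\bar m$ and all $x\in\mathcal{X}$. Fixing any single $x\in\mathcal{X}$ and combining with the previous display gives $c_0/2\le 2m\,|\mathcal{M}|\,|\mathcal{N}|/N(m)^2$, hence
\[
N(m)\le \frac{2}{\sqrt{c_0}}\,\sqrt{m}\,\sqrt{|\mathcal{M}|\,|\mathcal{N}|}\qquad (m\ge\bar m),
\]
which is the claim with $C=2/\sqrt{c_0}$.

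I do not expect a genuine obstacle here; the two points that require a little care are, first, making sure the off-diagonal covariances really vanish — this is exactly where the zero-mean condition (\ref{mean}) together with the independence structure encoded by $\mathcal{P}_k$ and Lemma \ref{PM} enters — and, second, using the finiteness of $\mathcal{X}$ to upgrade the pointwise positivity of $\mathcal{K}$ on the diagonal to a lower bound on $\mathbb{E}[f^{(m)}(\Theta,x)^2]$ that is uniform in $x$, so that the threshold $\bar m$ can be chosen before committing to a particular input.
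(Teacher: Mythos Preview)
Your proof is correct and follows essentially the same route as the paper: expand the second moment, kill the off-diagonal terms via the zero-mean condition and the independence encoded by $\mathcal{P}_k$, bound the surviving terms by Lemma~\ref{PM}, and compare against the positive diagonal limit $\mathcal{K}(x,x)$. The only cosmetic differences are that the paper observes $k\in\mathcal{P}_k$ so the ``$+1$'' is already absorbed in the bound $|\mathcal{P}_k|\le|\mathcal{M}|\,|\mathcal{N}|$, and that the paper works with a single $x$ and then takes an infimum over $x$ (also absorbing the finitely many small-$m$ cases into the constant), whereas you invoke the finiteness of $\mathcal{X}$ up front to get a uniform $c_0$ and a single $\bar m$; both variants are fine.
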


\begin{proof}
We recall that we assumed that $f_k(\Theta,x)$ are bounded observables:
\[|f_k(\Theta,x)|\leq 1\qquad \forall k\in\{1,\dots,m\}\,\forall\,x\in\mathcal{X},\,\forall\,\Theta\in\mathscr{P}.\]

By Assumption \ref{zeromean} there exists $\bar m_x\in\mathbb{N}$ such that, for any $m\geq \bar m_x$
\[ 0<\frac{1}{2}\mathcal{K}(x,x)\leq \mathbb{E}[f^2(\Theta,x)]. \]
Hence, for $m\geq \bar m_{\modifica{x}}$, $\mathcal{K}(x,x)$ can be estimated as
\begin{align}
\nonumber\frac{1}{2}\mathcal{K}(x,x)&\leq\frac{1}{N^2(m)}\sum_{k,k'=1}^m\mathbb{E}[f_k(\Theta,x)f_{k'}(\Theta,x)]\\
\nonumber&=\frac{1}{N^2(m)}\sum_{k=1}^m\left(\sum_{k'\in\mathcal{P}_k}\mathbb{E}[f_k(\Theta,x)f_{k'}(\Theta,x)]
+\sum_{k'\notin\mathcal{P}_k}\mathbb{E}[f_k(\Theta,x)]\mathbb{E}[f_{k'}(\Theta,x)]\right)\\
\nonumber&=\frac{1}{N^2(m)}\sum_{k=1}^m\sum_{k'\in\mathcal{P}_k}\mathbb{E}[f_k(\Theta,x)f_{k'}(\Theta,x)]\\
\nonumber&\leq \frac{1}{N^2(m)}\sum_{k=1}^m\sum_{k'\in\mathcal{P}_k}\mathbb{E}[|f_k(\Theta,x)||f_{k'}(\Theta,x)|]\\
&\leq \frac{1}{N^2(m)}\sum_{k=1}^m\sum_{k'\in\mathcal{P}_k}1\leq \frac{1}{N^2(m)}m\max_k|\mathcal{P}_k|\leq \frac{1}{N^2(m)}m|\mathcal{M}||\mathcal{N}|.
\end{align}\\
where the last inequality follows from Lemma \ref{PM}.
Let
\[C_x=\max\left\{\frac{\sqrt 2}{\sqrt{\mathcal{K}(x,x)}}, \frac{N(m)}{\sqrt m\sqrt{|\mathcal{M}||\mathcal{N}|}}\bigg|_{m=1,\dots,\bar m_x-1}\right\}.\]
\modifica{With such a definition of $C_x$, the following inequality holds not only for $m\geq\bar m$, but more in general for any $m\geq 1$:} 
\[N(m)\leq C_x\sqrt{m}\sqrt{|\mathcal{M}||\mathcal{N}|}\qquad\forall\,x\in\mathcal{X},\]
so
\[N(m)\leq C\sqrt{m}\sqrt{|\mathcal{M}||\mathcal{N}|}\qquad \forall\,m\geq 1,\]
where 
\[ C=\inf_{x\in\mathcal{X}} C_x.\]
\end{proof}

\begin{corollary}\label{controllo}
If
\[ \lim_{m\to\infty }\frac{m|\mathcal{M}|^2|\mathcal{N}|^2}{N^3(m)}=0,\]
then
\[ \lim_{m\to\infty}\frac{|\mathcal{M}||\mathcal{N}|}{N(m)}=0.\]
\end{corollary}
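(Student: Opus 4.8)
The plan is to combine the hypothesis with the upper bound on the normalization constant furnished by Lemma \ref{Nmax}. The key observation is that Lemma \ref{Nmax} gives, for $m$ large enough, $N(m)\le C\sqrt{m}\sqrt{|\mathcal{M}||\mathcal{N}|}$, and squaring this yields $N^2(m)\le C^2\, m\,|\mathcal{M}||\mathcal{N}|$, equivalently $m\,|\mathcal{M}||\mathcal{N}|\ge N^2(m)/C^2$. This is exactly the inequality that lets us trade the factor $m\,|\mathcal{M}||\mathcal{N}|$ appearing in the hypothesis for a factor $N^2(m)$.

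Concretely, I would write
\[
\frac{|\mathcal{M}||\mathcal{N}|}{N(m)}
=\frac{\bigl(m\,|\mathcal{M}||\mathcal{N}|\bigr)\,|\mathcal{M}||\mathcal{N}|}{m\,N(m)}
\le C^2\,\frac{m\,|\mathcal{M}|^2|\mathcal{N}|^2}{N^3(m)},
\]
where the inequality uses $m\,|\mathcal{M}||\mathcal{N}|\le C^2\,N^2(m)$ in the denominator's companion form; more carefully, starting from $N^2(m)\le C^2 m|\mathcal{M}||\mathcal{N}|$ one multiplies both sides by $|\mathcal{M}||\mathcal{N}|/(N^3(m))$ to get $|\mathcal{M}||\mathcal{N}|/N(m)\le C^2\, m|\mathcal{M}|^2|\mathcal{N}|^2/N^3(m)$. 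Taking $m\to\infty$ and invoking the hypothesis that the right-hand side tends to $0$ gives the claim.

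There is essentially no obstacle here: the only subtlety is that Lemma \ref{Nmax} holds only for $m\ge\bar m$, so the chain of inequalities above should be asserted for $m\ge\bar m$, which is harmless since we are taking a limit. One should also note that all quantities involved are nonnegative (and $N(m)>0$, which follows from the strict positivity of the diagonal of $\mathcal{K}$ in Assumption \ref{zeromean}), so the manipulations of inequalities are legitimate.
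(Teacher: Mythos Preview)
Your proof is correct, and in fact more direct than the paper's. The ``more carefully'' step is the heart of it: from Lemma~\ref{Nmax} one has $N^2(m)\le C^2\,m\,|\mathcal{M}||\mathcal{N}|$ for $m\ge\bar m$, and multiplying by $|\mathcal{M}||\mathcal{N}|/N^3(m)$ immediately yields
\[
\frac{|\mathcal{M}||\mathcal{N}|}{N(m)}\le C^2\,\frac{m\,|\mathcal{M}|^2|\mathcal{N}|^2}{N^3(m)},
\]
and the right-hand side tends to zero by hypothesis. (Your first displayed chain and the sentence about a ``companion form'' $m|\mathcal{M}||\mathcal{N}|\le C^2 N^2(m)$ are muddled---that inequality goes the wrong way---but the subsequent ``more carefully'' computation is correct and is all you need.)

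The paper takes a longer route: it first uses Lemma~\ref{Nmax} in the form $N^3(m)\le C^3 m^{3/2}|\mathcal{M}|^{3/2}|\mathcal{N}|^{3/2}$ to deduce the intermediate fact $|\mathcal{M}||\mathcal{N}|/m\to 0$, then invokes Lemma~\ref{Nmax} again as $1\le C\sqrt{m|\mathcal{M}||\mathcal{N}|}/N(m)$ and combines the two to bound $|\mathcal{M}||\mathcal{N}|/N(m)$ by a product of the form $(|\mathcal{M}||\mathcal{N}|/m)^{1/4}\,(m|\mathcal{M}|^2|\mathcal{N}|^2/N^3(m))^{1/2}$. Your single multiplication replaces this two-step detour and gives a strictly sharper inequality; the paper's approach buys nothing extra here.
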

\begin{proof} 
First, by Lemma \ref{Nmax},
\[ \frac{m|\mathcal{M}|^2|\mathcal{N}|^2}{N^3(m)}\geq \frac{m\mathcal{|M|}^2|\mathcal{N}|^2}{C^3m^{3/2}|\mathcal{M}|^{3/2}|\mathcal{N}|^{3/2}}=\frac{1}{C^3}\frac{\sqrt{\mathcal{|M|}|\mathcal{N}|}}{\sqrt m}\]
which implies
\[\lim_{m\to\infty}\frac{\mathcal{|M|}|\mathcal{N}|}{m}=0.\]
Besides, using again Lemma \ref{Nmax}
\[1\leq C\frac{\sqrt{|\mathcal{M}||\mathcal{N}|}\sqrt m}{N(m)},\]
therefore
\begin{align}
\nonumber\frac{|\mathcal{M}||\mathcal{N}|}{N(m)}&=\left(\frac{|\mathcal{M}|^2|\mathcal{N}|^2}{N^2(m)}\right)^{1/2}\leq 
\left(C\frac{\sqrt{|\mathcal{M}||\mathcal{N}|}\sqrt m}{N(m)}\frac{|\mathcal{M}|^2|\mathcal{N}|^2}{N^2(m)}\right)^{1/2}\\
&=\sqrt C\left(\frac{|\mathcal{M}||\mathcal{N}|}{m}\right)^{1/4}\left(\frac{m|\mathcal{M}|^2|\mathcal{N}|^2}{N^3(m)}\right)^{1/2},
\end{align}
whence
\begin{align}
\lim_{m\to\infty}\frac{|\mathcal{M}||\mathcal{N}|}{N(m)}
&\leq\sqrt C\lim_{m\to\infty}\left(\frac{|\mathcal{M}||\mathcal{N}|}{m}\right)^{1/4}\lim_{m\to\infty}\left(\frac{m|\mathcal{M}|^2|\mathcal{N}|^2}{N^3(m)}\right)^{1/2}=0.
\end{align}
\end{proof}

We now have all the ingredients to prove Theorem \ref{init}.\\
We fix a finite collection $x_A=(x^{(\alpha)},\alpha\in A)$. Let $\xi\in\mathbb{R}^{|A|}$ and $\|\xi\|_1=\sum_\alpha|\xi_\alpha|$. The characteristic function of the finite collection of random variables $f^{(m)}(\Theta,x_A)$ is
\begin{align}
\nonumber\phi_{m}(\xi)&=\mathbb{E}\left[ \exp\left\{if^{(m)}(\Theta,x_A)\cdot\xi_A\right\}\right]\\
&=\mathbb{E}\left[\exp \left\{i\frac{1}{N(m)}\sum_{k=1}^m f_k(\Theta,x_A)\cdot\xi_A\right\}\right],
\end{align}
where
\begin{align}
f_k(\Theta,x_A)\cdot\xi_A\equiv\sum_{\alpha\in A}f_k(\Theta,x^{(\alpha)})\cdot\xi_\alpha.
\end{align}
We rename
\begin{align}
Y^{(m)}\equiv\frac{1}{N(m)}\sum_{k=1}^m f_k(\Theta,x_A)\cdot\xi_A
\end{align}
and we introduce a fictitious variable $t$ such that $\phi_{m}(\xi)=\phi_{m}(\xi,t=1)$:
\begin{align}
\nonumber \phi_{m}(\xi,t)&=\mathbb{E}\left[ \exp(itf^{(m)}(\Theta,x_A)\cdot\xi_A)\right]\\
&=\mathbb{E}\left[\exp(it Y^{(m)})\right].
\end{align}
As we saw in paragraph \ref{cumulants}, $\log\phi_{m}(\xi,t)$ can be expanded in terms of the cumulants of $Y^{(m)}$:
\begin{align}
\kappa_r^{(m)}\equiv(-i)^n\frac{d^r}{dt^r}\log\phi_m(\xi,t)\qquad\to\qquad
\log\phi_m(\xi,t)=\sum_{r=1}^\infty \frac{\kappa_r^{(m)}}{r!}(it)^r.
\end{align}

As we proved in Lemma \ref{ok}, a valid dependency graph $\mathscr{G}=(V,E)$ of the quantum circuit is
\[V=\{1,2,\dots,m\}\qquad E=\{(k,k')\in V\times V: k'\in\mathcal{P}_{k}\},\]
where
\begin{align}
\nonumber \mathcal{P}_k&=\{k'\in\{1,\dots,m\} : f_{k'}(\Theta,x) \text{ is not independent of }\phantom{\}} \\
&\phantom{=\{k'\in\{1,\dots,m\} :\}}f_{k}(\Theta,x) \text{ at random initialization}\}.
\end{align}
Calling
\begin{align}
D&=\max_{k\in V}\deg k,
\end{align}
by means of Lemma \ref{PM} we can bound
\begin{align}
D&=\max_{k\in V}\big| \{(k,k')\in V\times V\,:\, k'\in\mathcal{P}_{k}\}\big|=\max_{k\in I}|\mathcal{P}_k|\leq |\mathcal{M}||\mathcal{N}|.
\end{align}
The random variable $Y^{(m)}$ is a sum of uniformly bounded random variables by Assumption \ref{domain}:
\begin{align}
\left|\frac{1}{N(m)}f_k(\Theta,x_A)\cdot\xi_A\right|\leq\frac{\|\xi\|_1}{N(m)}
\end{align}
Therefore, Theorem \ref{janson} can be applied:
\begin{align}
|\kappa_r^{(m)}|&\leq 2^{r-1}r^{r-2}m(D+1)^{r-1}\left(\frac{\|\xi\|_1}{N(m)}\right)^r\leq \frac{m}{D}\left(\frac{4\|\xi\|_1D}{N(m)}\right)^rr^r.
\end{align}
Using the following inequality for the factorial \cite{stirling}
\[e\frac{r^r}{e^r}\leq r!\qquad \to \qquad r^r\leq e^rr!,\]
we further estimate
\begin{align}
|\kappa_r^{(m)}|&\leq \frac{m}{D}\left(\frac{4e\|\xi\|_1D}{N(m)}\right)^rr!.
\end{align}
In this way, we can control the tail $T(t,\xi)$ of the Taylor expansion of $\log\phi_m(\xi,t)$:
\begin{align}
T(t,\xi):=\sum_{r=3}^\infty \frac{\kappa_r^{(m)}}{r!}(it)^r,&\\
\nonumber|T(t,\xi)|=\left|\sum_{r=3}^\infty \frac{\kappa_r^{(m)}}{r!}(it)^r\right|&
\leq\sum_{r=3}^\infty \frac{|\kappa_r^{(m)}|}{r!}t^r\\
\nonumber&\leq \sum_{r=3}^\infty\frac{m}{D}\left(\frac{4e\|\xi\|_1tD}{N(m)}\right)^r\\
\nonumber&= \modifica{(4e\|\xi\|_1t)^3}\frac{mD^2}{N^3(m)}\sum_{r=0}^\infty\left(\frac{4e\|\xi\|_1tD}{N(m)}\right)^r\\
&\leq \modifica{(4e\|\xi\|_1t)^3}\frac{m|\mathcal{M}|^2|\mathcal{N}|^2}{N^3(m)}\sum_{r=0}^\infty\left((4e\|\xi\|_1t) \frac{|\mathcal{M}||\mathcal{N}|}{N(m)}\right)^r.
\end{align}
The following bounds hold for fixed $t$ and $\xi$. If, by hypothesis, \[\lim_{m\to \infty}\frac{m|\mathcal{M}|^2|\mathcal{N}|^2}{N^3(m)}=0,\]
then, by Corollary \ref{controllo}, we must also have
\[\lim_{m\to \infty}\frac{|\mathcal{M}||\mathcal{N}|}{N(m)}=0,\]
which implies, for some $m_0$ (which depends on $t$ and $\xi$),
\[ (4e\|\xi\|_1t) \frac{|\mathcal{M}||\mathcal{N}|}{N(m)}\leq \frac{1}{2} \qquad \forall \, m\geq m_0.\]
Therefore
\begin{align}
|T(t,\xi)|\leq \frac{\modifica{(4e\|\xi\|_1t)^3}}{2}\frac{m|\mathcal{M}|^2|\mathcal{N}|^2}{N^3(m)} \qquad \forall \, m\geq m_0
\qquad\to\qquad \lim_{m\to\infty}T(t,\xi)=0.
\end{align}
So we can expand
\begin{align}
\log\phi_m(\xi)&=\log\phi_m(\xi,1)=\kappa_1^{(m)}(\xi)-\frac{1}{2}\kappa_2^{(m)}(\xi)+T(\xi,1).
\end{align}
Recalling that $A$ is finite and using the summation convention over $\alpha$ and $\beta$, we can compute
\begin{align}
\kappa_1^{(m)}&=\mathbb{E}[Y^{(m)}]=0 \text{ because of the } (\ref{mean}),\\
\kappa_2^{(m)}&=\mathbb{E}[(Y^{(m)})^2]-\mathbb{E}[Y^{(m)}]^2=\frac{1}{N^2(m)}\sum_{k,k'=1}^m\xi_\alpha\,\mathbb{E}[f_k(\Theta,x^{(\alpha)})f_{k'}(\Theta,x^{(\beta)})]\,\xi_\beta,\\
\to \kappa_2^{(m)}&\to \xi_\alpha\,\mathcal{K}(x^{(\alpha)},x^{(\beta)})\,\xi_\beta \quad \text{ as }m\to \infty \text{ because of the } (\ref{convergence}).
\end{align}
Since the logarithm is a continuous function, we have pointwise convergence:
\begin{align}
\lim_{m\to\infty}\phi_m(\xi)=\exp\left\{-\frac{1}{2}\xi_{\alpha}\mathcal{K}(x^{(\alpha)},x^{(\beta)})\xi_\beta\right\}.
\end{align}
This is the the characteristic function of a multivariate random variable (whose distribution is jointly Gaussian), so we can conclude by Lévy's theorem \ref{levy}.

\begin{remark}[A counterexample for large light cones]
    The hypotheses under which Theorem \ref{init} holds are very general, since the proof is based only on the dependency relations between the observables. Many studied examples of circuits respecting the most general structure of a variational circuit could be considered, even with arbitrary (and artificial) parameter dependence and distribution. It is interesting to notice that, if we relax the hypothesis on the growth of the light cones, we can exhibit a circuit that violates Theorem \ref{init}. An explicit construction is done in \autoref{ch5-5}.
\end{remark}

 \section{Trained quantum neural networks are Gaussian processes}\label{ch6}
\subsection{Introduction}
\subsubsection{Training quantum neural networks}\label{trqnn}
\begin{figure}[ht]
\centering
\includegraphics[width=0.65\textwidth]{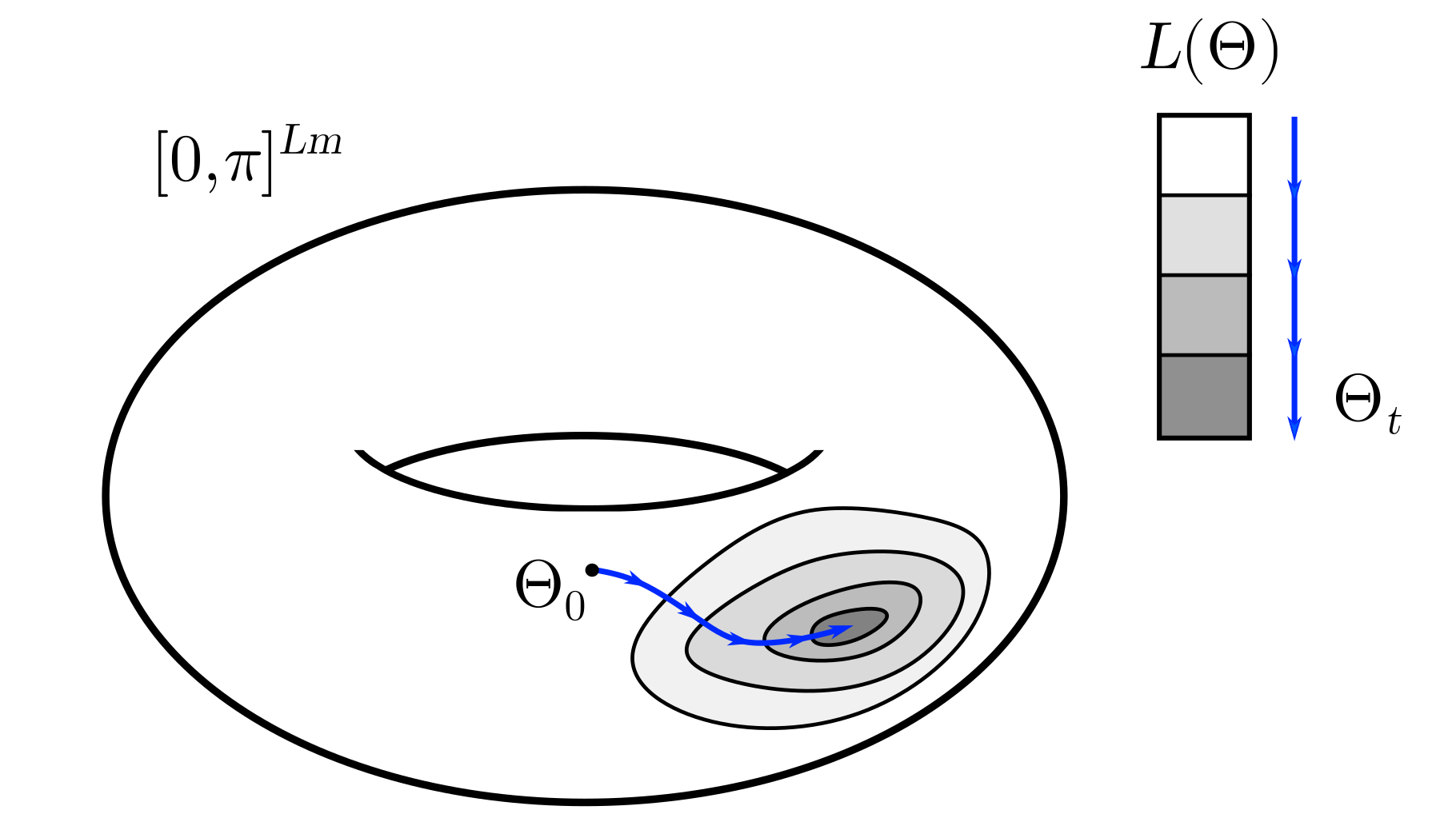}
\caption{Gradient flow in the parameter space}
\label{decreasing}
\end{figure}
As in the classical setting, the aim of the training is the minimization of a cost function $\mathcal{L}(\Theta,\mathcal{D})$ (from now on, we will omit the dependence on $\mathcal{D}$) based on the discrepancy between the model function $f(\Theta,x^{(i)})$ and the training labels $y^{(i)}$. This can be done in continuous time by the gradient flow method. The vector of the parameters evolves according to the differential equation
\[ \dot\Theta_t=-\eta \nabla_\Theta\mathcal{L}(\Theta_t),\]
where $\eta>0$ is a constant which resizes the time. This equation ensures that the cost is decreasing during the evolution (see \autoref{decreasing})
\[\frac{d}{dt}\mathcal{L}(\Theta_t)=\dot \Theta_t\cdot \nabla_\Theta\mathcal{L}(\Theta_t)=-\eta\|\nabla_\Theta\mathcal{L}(\Theta_t)\|_2^2\leq 0.\]
The problem of the minimization of the cost is, in general, non-convex, so the convergence to a global minimum is not ensured. We are going to discuss this issue in \autoref{bp}.

Since the computation of the model function in a concrete setting happens in discrete time, the minimizing procedure will be by described by the gradient descent method, which consists in the update of the parameters according to the rule
\[\Theta_{t+1}-\Theta_t=-\eta \nabla_\Theta\mathcal{L}(\Theta_t).\label{6.3}\]
Besides, we should notice that the right-hand side of (\ref{6.3}) is exactly known only performing an infinite amount of measurements, since the model function is defined as an expectation value (and, at the moment, we do not have a procedure to compute the derivatives of the model function). In a real setting, we will be able to perform only a finite amount of measurements of the output of the circuit. Therefore, we will need to substitute the RHS with an estimator of the gradient.\\
In this section, in order to understand the behaviour of the circuit during the training procedure and to focus on the main properties of the limit $m\to\infty$, we will start considering the gradient flow equation without noise.
Later, in the next section, we will understand how to deal with the discrete time and the noise, and we will prove that the limit $m\to\infty$ allows us to solve a non-convex optimization problem.

\modifica{
\begin{remark}
We stress that the parametric unitary operator $U(\Theta,x)$ encodes both the input $x$ and the trainable parameters $\Theta$.
Therefore, the training does not generate a single state of $m$ qubits, but rather a parametric family of states $\left\{U(\Theta,x)|0^m\rangle\right\}_{x\in\mathcal{X}}$ that depend on the input $x$.
For this reason, training the quantum neural networks considered here is not equivalent to finding a low-energy state of a given Hamiltonian, and the two tasks may have very different complexities.
Therefore, results on the classical easiness of finding low-energy states such as \cite{brandao2016product} may not apply in the scenario considered in this paper.
\end{remark}
}

\subsubsection{Trainability: the loss landscape and barren plateaus}\label{bp}
Before entering into the details of the equations of our model, let us discuss briefly the possible landscapes that may appear in a generic optimization task and the consequences on the gradient-flow training. We will also mention the main problems discussed in the current literature.

\begin{figure}[H]
\centering
\includegraphics[width=0.73\textwidth]{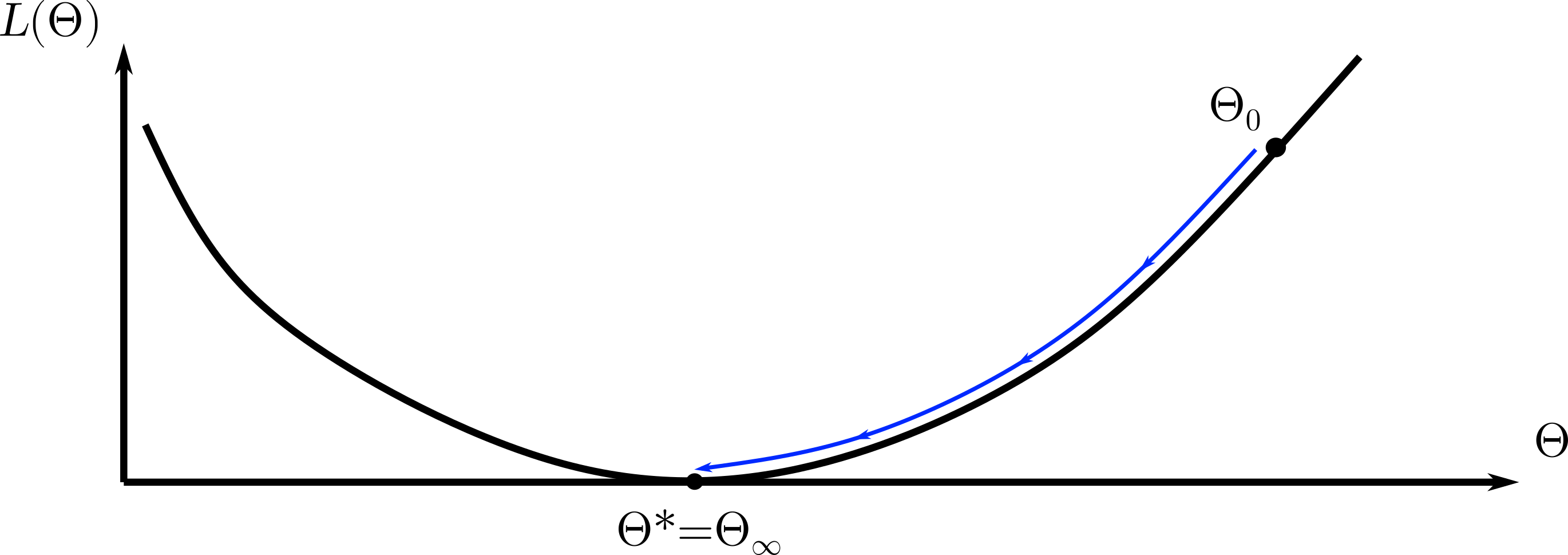}
\caption{Convex $\mathcal{L}(\Theta)$.}
\label{figconv}
\end{figure}

When $\mathcal{L}$ is a convex\footnote{There are various shades of convexity, which can provide different results concerning the rate of convergence of the minimization procedure. For a survey on the main definitions, see \cite{PLBook}.} function, then the convergence to the minimum is guaranteed, whatever the result of the initialization is (see \autoref{figconv}).

\begin{figure}[H]
\centering
\includegraphics[width=0.73\textwidth]{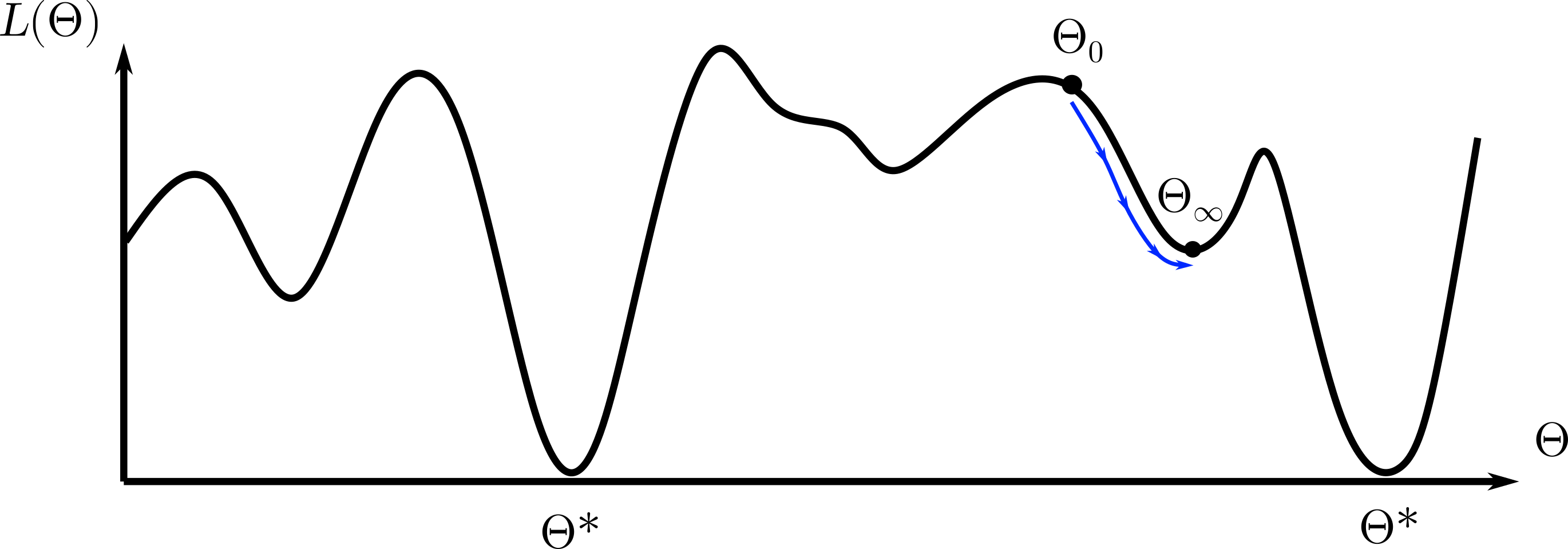}
\caption{Non-convex $\mathcal{L}(\Theta)$.}
\label{nonconv}
\end{figure}

On the contrary, a generic non-convex cost function may be characterized by maxima, saddle points, local minima which are stationary points for the gradient flow evolution. In particular, after a random initialization in which the initial set of parameters is sampled far from a global minimum, the gradient flow evolution could lead to a local minimum, where the naive algorithm stops: the cost function cannot be further minimized by a continuous evolution of the parameters. Therefore, the result is not the one expected (see \autoref{nonconv}).

\begin{figure}[H]
\centering
\includegraphics[width=0.73\textwidth]{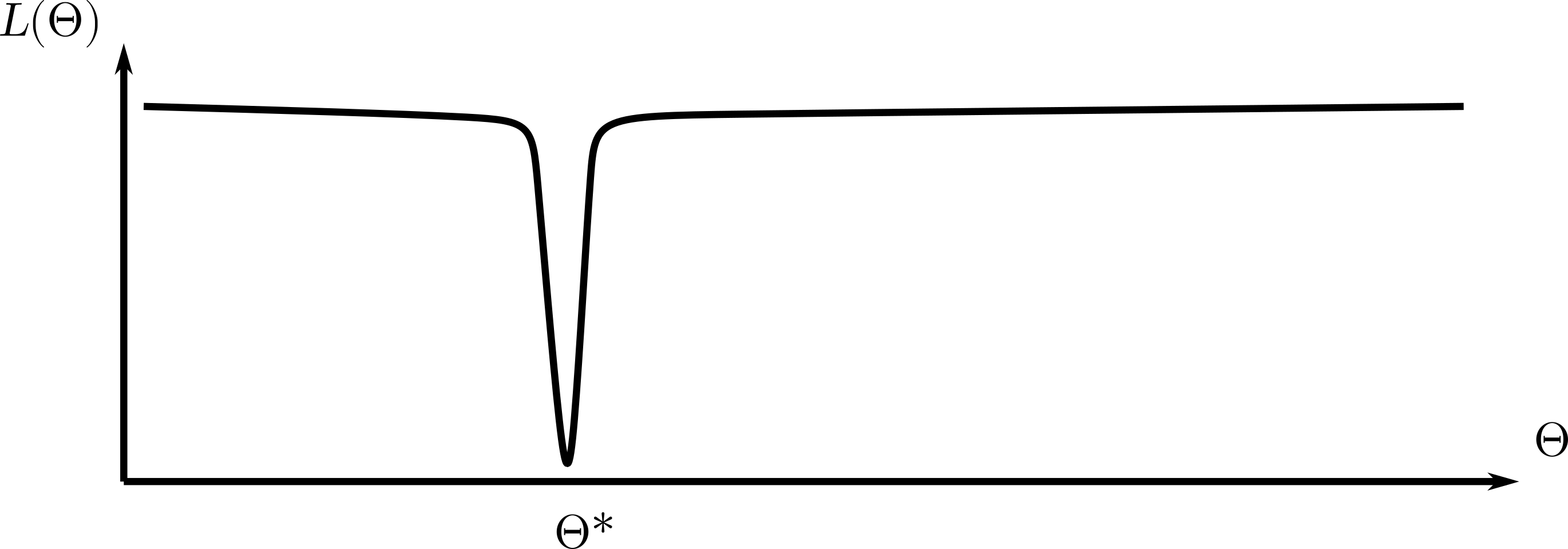}
\caption{Barren plateaus scenario.}
\label{barpl}
\end{figure}

Different problems may arise in the training of a quantum circuit. The main problem is constituted by barren plateaus (see e.g. \cite{McClean_2018, Cerezo_2021, marrero2021entanglement, napp2022quantifying}), depicted in \autoref{barpl}: in some classes of circuits, typically when they are very deep, after a random initialization of the parameters the optimization landascape is characterized by gradients with a norm that decays exponentially with the number of qubits. This means that the convergence of the training is slow and that the estimate of the expectation value of the gradient, in order to be precise enough, requires an exponentially large amount of measurements. This would hinder any potential speedup of the quantum algorithm.

\begin{figure}[H]
\centering
\includegraphics[width=0.73\textwidth]{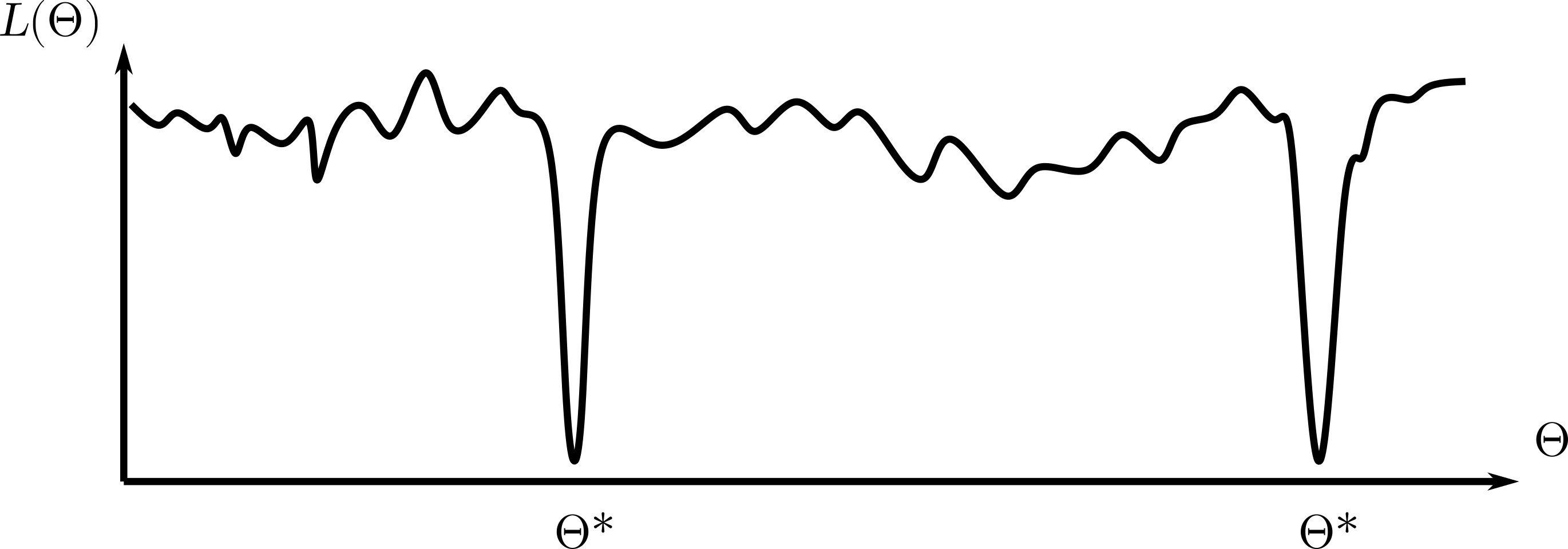}
\caption{Swampland of local minima.}
\label{swamp}
\end{figure}

There are other possible obstacles: in some class of shallow circuits which does not exhibit barren plateaus, the loss landscape can be swamped with many local minima which are far from the global minimum \cite{Anschuetz2022} (see \autoref{swamp}). If the local minima are rare with respect to global minima, the initialization procedure and the consequent training can be iterated until the global minimum is reached. But if the landscape is dense of local minima, we may have a too large number of iterations necessary to find a starting point for which the training converges to optimal parameters.

\begin{figure}[h]
\centering
\includegraphics[width=0.73\textwidth]{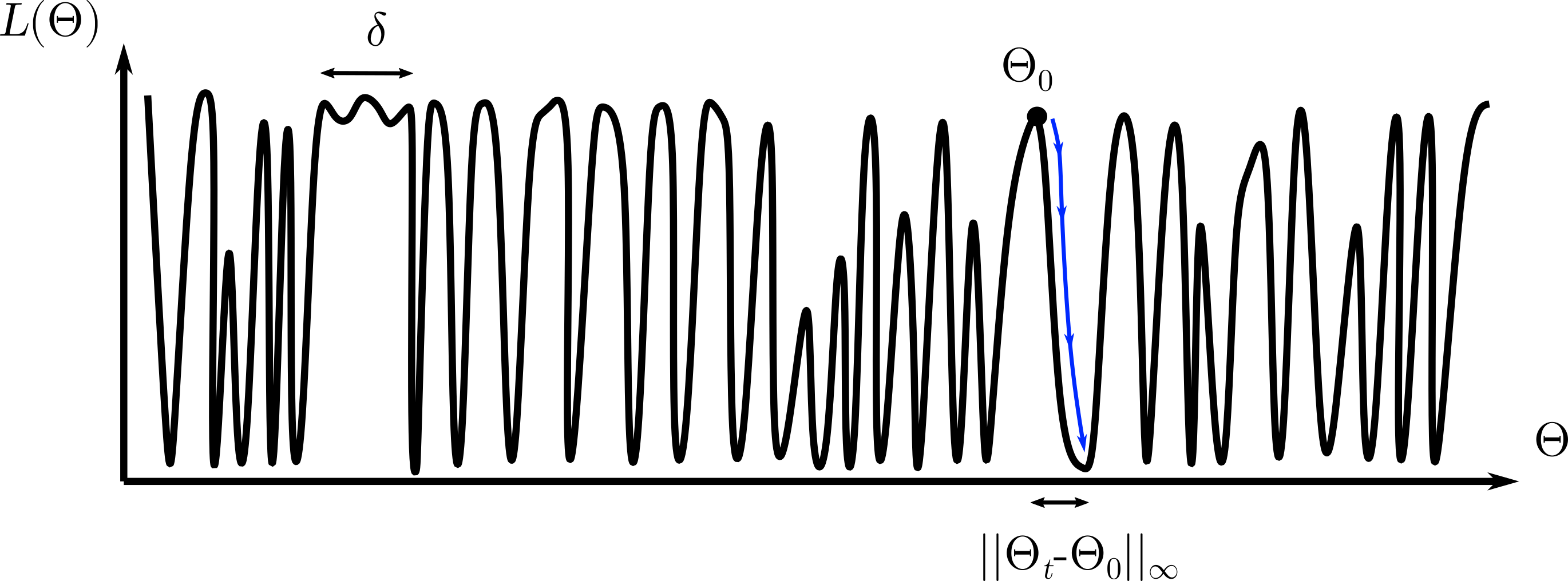}
\caption{Our scenario: lazy training.}
\label{ourscenario}
\end{figure}

We are going to prove that the limit $m\to\infty$ yields a very peculiar landscape (see \autoref{ourscenario}), which has many features in common with the one of wide classical neural networks. First, the minimum of the empirical cost approaches to zero as the number of qubit increases: the model becomes able to fit the dataset. Second, fixed any $\delta>0$, with probability at least $1-\delta$ over random initialization, the parameters evolve succesfully from the starting point $\Theta_0$ to a minimizer $\Theta^\ast$ which is increasingly close to $\Theta_0$ as $m\to\infty$ (lazy training). The closeness of $\Theta_0$ and $\Theta^\ast$ allows to compare the model with its first-order Taylor expansion around $\Theta_0$. This comparison is motivated by the fact that an analytic solution for the linearized model is computable and its probability distribution during training is known.\

In order to achieve these results -- and to generalize them to the noisy setting in \autoref{ch7} --, we follow a strategy which combines some ideas from \cite{Lee2020}, \cite{QLazy} and \cite{you2022convergence}.
\begin{enumerate}
\item In \autoref{6-2}, we translate the gradient flow equation (\ref{defgrfl}) into a differential equation for the model function (\ref{evolution}) and we define the empirical neural tangent kernel (Definition \ref{entk}), which is a bivariate symmetric function of the input space appearing in (\ref{evolution}) and describing the evolution of the randomly initialized network during the training. 
\item In \autoref{6-3}, we define the analytic neural tangent kernel as the expectation value of the empirical neural tangent kernel (Definition \ref{anNTK}) and we study its properties.
\item In \autoref{analyticsol}, we introduce the linearized version -- which is the first order Taylor expansion -- of the model (\ref{linearmodel}) and we compute its evolution equations under gradient flow.
\item In the first part of \autoref{6-5}, we show that the empirical neural tangent kernel with the parameters initialized randomly concentrates on its expectation value (Theorem \ref{ntkconv}), using a strategy which generalizes the proof of \cite{QLazy}. Differently from \cite{QLazy}, our bounds are quantitative and they hold not only in the case with a fixed number of layers and with a geometrically local architecture, but for any dependence of the number of layers with respect to the number of qubits and for any architecture. As a consequence of our concentration inequality, we show that the linear model is distributed as a Gaussian process during all the evolution in the limit of infinitely many qubits (Corollary \ref{corgp}).
\item In the second part of \autoref{6-5}, we generalize the classical proof of \cite{Lee2020} to the quantum setting in order to show that, as the number of qubits increases, the trained model converges to the examples -- i.e., the cost function converges to zero -- and the parameters are increasingly close to their value at initialization (Theorem \ref{gradfl}). 
\item In the last part of \autoref{6-5}, with some computations involving the size of the light cones, we show that the discrepancy between the model and its linearized version is small if the parameters are close to their value at initialization (Theorem \ref{powerful}). As a consequence, we show that the model converges to its linearized version in the limit of infinitely many qubits (Theorem \ref{gronwall}). This result is a strong generalization of Theorem \ref{qlazyth} of \cite{QLazy}: not only we consider the more general setting of a non-fixed number of layers and of generic architectures, but we also extend the convergence of the model to the linearized version from the dataset to all the input space. Furthermore, in \cite{QLazy} the convergence to the linearized model at any time can be achieved only by performing first the limit of infinitely many qubits, while for a circuit with finite size their bound diverges as $t^2$; clearly it is a fictitious divergence, since the training is convergent. Our bounds are valid for any time also in the (physical) finite size case.
\item As a consequence of the previous results, we conclude that the model function converges to a Gaussian process in the limit of infinitely many qubits (Theorem \ref{qnngp}). This is the main result of this section.
\end{enumerate}
Most of these strategies are valid also in the noisy setting with a discrete time evolution, provided that the variance of the estimator of the gradient is sufficiently small. We will prove in \autoref{ch7} that there is a polynomial\footnote{with respect to the number of qubits} upper bound on the variance which ensures an exponentially fast convergence of the training as in the simpler setting of this section.

\subsubsection{Lazy training in the quantum setting}\label{sec:Qlazy}
Before presenting our results, let us briefly review the results of a recent work about lazy training for quantum neural networks. In particular, we will focus on the statements and the bounds that we need to improve in order to achieve our convergence results.

E. Abedi, S. Beigi and L. Taghavi study in \cite{QLazy} a generalization to the quantum setting of the results of \cite{JGH18} for the classical case. In particular, they consider a geometrically local circuit\footnote{More precisely, \cite{QLazy} requires that \textit{the qubits are arranged on nodes of a bounded-degree graph}, that \textit{the two-qubit gates can be applied only on pair of
qubits connected by an edge} and that \textit{the observable is a sum of terms, each of which acts only on a constant number of neighboring qubits}.} having a fixed number of layers. They prove the following results, which hold in the limit of infinitely many qubits:
\begin{enumerate}
\item the neural tangent kernel (see (\ref{ntkqlazy})) concentrates on its expectation value at random initialization (Theorem \ref{qulazyth0});
\item the evolution of the parameters is small during the training, so the linearized model behaves similarly to the original model (Theorem \ref{qlazyth}).
\end{enumerate}

Let us quote the precise statement of the theorems proved by \cite{QLazy}. Differently from the notation we used so far, the parameters are labelled $\theta_1,\dots,\theta_p$ in \cite{QLazy}.
\begin{theorem}[\cite{QLazy}]\label{qulazyth0}
Let $f(\Theta, x)$ be a model function associated to a geometrically local parameterized quantum circuit on $m$ qubits with a fixed number of layers
\[ f(\Theta,x)=\frac{1}{\sqrt m}\sum_{k=1}^mf_k(\Theta,x),\qquad f_k(\Theta,x)=\smatrixel{0^m}{U^\dagger (\Theta,x)\mathcal{O}_k U(\Theta,x)}{0^m},\label{modelqlazy}\]
such that
\[\left\| \frac{\partial}{\partial \theta_j}U(\Theta,x)\right\|\leq c,\qquad \left\| \frac{\partial^2}{\partial\theta_i\partial \theta_j}U(\Theta,x)\right\|\leq c\qquad \forall i,j\in\{1,\dots, |\Theta|\}.\]
Suppose that the observable $\mathcal{O}$ is also geometrically local given by
\[\mathcal{O}=\sum_{k=1}^m\mathcal{O}_k,\]
where $\mathcal{O}_k$ acts on
the $k$-th qubit and possibly on a constant number qubits in its neighborhood, and satisfies $\|\mathcal{O}_k\|_\mathcal{L}\leq 1$. In this case the model function is given by (\ref{modelqlazy}). Suppose that
$\theta_1,\dots,\theta_p$ are chosen independently at random. Then, for any $x,x'\in\mathbb{R}^d$ we have
\[\mathbb{P}\left(|\hat K_\Theta(x,x')-\mathbb{E}[\hat K_\Theta(x,x')|\geq \epsilon\right)\leq \exp\left(-\Omega\left(\frac{m^2\epsilon^2}{pc^4}\right)\right),\label{eq1lt}\]
where
\[\hat K_\Theta(x,x'):=\nabla_\Theta f(\Theta,x)\cdot\nabla_\Theta f(\Theta,x'),\label{ntkqlazy}\]
\end{theorem}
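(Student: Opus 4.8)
The plan is to prove the concentration bound by viewing $\hat K_\Theta(x,x')$ as a function of the independent coordinates $\theta_1,\dots,\theta_p$ and applying McDiarmid's bounded-differences inequality, with the bounded-difference constants controlled through the light-cone combinatorics of \autoref{ch4}.

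First I would expand the empirical neural tangent kernel over the parameter index. Writing $\partial_j=\partial/\partial\theta_j$ and using $f(\Theta,x)=\frac1{\sqrt m}\sum_{k=1}^m f_k(\Theta,x)$, one obtains
\[\hat K_\Theta(x,x')=\nabla_\Theta f(\Theta,x)\cdot\nabla_\Theta f(\Theta,x')=\frac1m\sum_{j=1}^p G_j(x)\,G_j(x'),\qquad G_j(x):=\sum_{k\in\mathcal{M}_j}\partial_j f_k(\Theta,x),\]
where I used that $\partial_j f_k\equiv0$ unless the qubit $k$ lies in the future light cone $\mathcal{M}_j$ of $\theta_j$ (equivalently $j\in\mathcal{N}_k$, by Corollary \ref{extlc}), so each $G_j$ is a sum of at most $|\mathcal{M}|$ terms. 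Differentiating $f_k(\Theta,x)=\smatrixel{0^m}{U^\dagger\mathcal{O}_kU}{0^m}$ and invoking the hypotheses $\|\partial_jU\|\leq c$ and $\|\mathcal{O}_k\|_{\mathcal L}\leq1$ of Assumption \ref{domain} gives $|\partial_j f_k|\leq 2c$, hence $|G_j(x)|\leq 2c|\mathcal{M}|$ for every $j$ and every $x$ (in particular $\hat K_\Theta$ is a bounded function).

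Next I would establish the bounded-difference constants. Let $\Theta,\Theta'$ differ only in the coordinate $i$. Since $\partial_j f_k$ is a function of the parameters $\Theta_{\mathcal{N}_k}$ only, the summand $G_j$ depends on $\theta_i$ only when $\mathcal{M}_i\cap\mathcal{M}_j\neq\emptyset$; arguing exactly as in Lemma \ref{PM}, the number of such $j$ is at most $\sum_{k\in\mathcal{M}_i}|\{j:k\in\mathcal{M}_j\}|=\sum_{k\in\mathcal{M}_i}|\mathcal{N}_k|\leq|\mathcal{M}||\mathcal{N}|$. For each such $j$ the product $G_j(x)G_j(x')$ changes by at most $O(c^2|\mathcal{M}|^2)$ by the previous step, so
\[\big|\hat K_\Theta(x,x')-\hat K_{\Theta'}(x,x')\big|\leq\frac{|\mathcal{M}||\mathcal{N}|}{m}\,O\!\big(c^2|\mathcal{M}|^2\big)=:c_i=O\!\left(\frac{c^2|\mathcal{M}|^3|\mathcal{N}|}{m}\right).\]
In the geometrically local, fixed-depth setting of \cite{QLazy} one has $|\mathcal{M}|,|\mathcal{N}|=O(1)$ and $p=Lm=O(m)$, so $c_i=O(c^2/m)$ uniformly in $i$, and McDiarmid's inequality yields
\[\mathbb{P}\big(|\hat K_\Theta(x,x')-\mathbb{E}[\hat K_\Theta(x,x')]|\geq\epsilon\big)\leq2\exp\!\left(-\frac{2\epsilon^2}{\sum_{i=1}^p c_i^2}\right)=\exp\!\left(-\Omega\!\left(\frac{m^2\epsilon^2}{p\,c^4}\right)\right),\]
which is the claim.

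The main obstacle is the bounded-difference step: one must turn the light-cone bookkeeping into the quantitative statement that perturbing a single $\theta_i$ affects only $O(|\mathcal{M}||\mathcal{N}|)$ of the $p$ summands $G_j(x)G_j(x')$, each by $O(c^2|\mathcal{M}|^2)$, and check that these estimates are uniform over $\Theta\in\mathscr{P}$ and over the inputs $x,x'$. For the refinement needed later in this paper one must moreover keep the dependence on $|\mathcal{M}|$, $|\mathcal{N}|$, $L$ and $m$ explicit rather than absorbing it into $O(1)$ as \cite{QLazy} does in the constant-depth regime; this is precisely what the quantitative version in Theorem \ref{ntkconv} will require.
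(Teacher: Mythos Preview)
Your approach is correct and is essentially the same strategy the paper uses. Note that the paper does not give its own proof of this statement---it is quoted verbatim from \cite{QLazy}---but the paper's proof of the generalization (Theorem~\ref{ntkconv}, \autoref{proofntk}) is exactly McDiarmid plus light-cone bookkeeping, as you propose.

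One small difference worth flagging: you group the kernel as $\hat K_\Theta=\frac1m\sum_j G_j(x)G_j(x')$ and bound the change of each affected factor $G_j(x)G_j(x')$ by $O(c^2|\mathcal{M}|^2)$, giving $c_i=O(c^2|\mathcal{M}|^3|\mathcal{N}|/m)$. The paper instead writes the kernel as a sum over triples $(k,k',j)\in\Gamma$ with $T_{k,k',j}=\partial_jf_k(\Theta,x)\,\partial_jf_{k'}(\Theta,x')$, counts $|\Gamma_i|\le 2|\mathcal{M}_i|\,|\mathcal{M}|\,|\mathcal{N}|$ directly, and obtains the sharper $c_i\propto |\mathcal{M}_i|\,|\mathcal{M}|\,|\mathcal{N}|/N^2(m)$. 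For Theorem~\ref{qulazyth0} this is irrelevant since $|\mathcal{M}|=O(1)$, but for the quantitative Theorem~\ref{ntkconv} the triple-counting saves a factor of $|\mathcal{M}|$ in $c_i$ (hence $|\mathcal{M}|^2$ in $\sum_i c_i^2$) and also produces the $\Sigma_2=\sum_i|\mathcal{M}_i|^2$ dependence rather than the cruder $Lm|\mathcal{M}|^2$.
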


\begin{remark}
    We will introduce a normalization $N_K(m)$, depending on the number of qubits, in the definition of the (empirical) neural tangent kernel $\hat K_\Theta(x,x')$ (see Definition \ref{defntk}).
\end{remark}

\begin{theorem}[\cite{QLazy}]\label{qlazyth}
Let $f(\Theta,x)$ be a model function associated with a parameterized quantum circuit satisfying the assumptions of Theorem \ref{qulazyth0}. Suppose that a dataset $\mathcal{D}$
\[\mathcal{D}=\{(x^{(1)},y^{(1)}),\dots,(x^{(n)},y^{(n)})\}\quad \text{with}\quad x^{(i)}\in\mathbb{R}^d \text{ and } y^{(i)}\in\mathbb{R}\]
is given. Assume that at initialization we choose $\Theta_0 = (\theta_1(0),\dots ,\theta_p(0))$ independently at random, and apply the gradient flow to update the parameters in time by
\[\dot\Theta_t=-\nabla_\Theta\mathcal{L}(\Theta_t),\]
where $\mathcal{L}(\Theta)$ is the mean squared error. Then, the followings hold.
\begin{enumerate}
\item For any $1\leq j\leq p$ we have
\[|\partial_t\theta_j(t)|=O\left(\sqrt{\frac{\mathcal{L}(\Theta_0)}{m}}\right).\label{lt1}\]
\item For any $x,x'$ we have
\[|\partial_t \hat K_{\Theta_t}(x,x')|=O\left(\sqrt{\frac{\mathcal{L}(\Theta_0)}{m}}\right).\label{lt2}\]
\item Let $f^{\mathrm{lin}}(\Theta,x)$ be the function associated to the linearized model, i.e.
\[f^{\mathrm{lin}}(\Theta,x)=f(\Theta_0,x)+\nabla_\Theta f(\Theta_0,x)\cdot (\Theta-\Theta_0).\]
Suppose we train the linearized model with its associated loss function
\begin{align}
\begin{dcases}
\frac{d}{dt}\Theta_t^{\mathrm{lin}}=-\nabla_\Theta\mathcal{L}^{\mathrm{lin}}(\Theta^{\mathrm{lin}}_t)\\
\Theta^{\mathrm{lin}}_0=\Theta_0,
\end{dcases}\qquad
\mathcal{L}^{\mathrm{lin}}(\Theta_t)=\frac{1}{n}\sum_{i=1}^n\left(f^{\mathrm{lin}}(\Theta,x^{(i)})-y^{(i)}\right)^2.
\end{align}
Then, for all $t$ we have
\[\left(\frac{1}{n}\sum_{i=1}^n\left(f(\Theta_t,x^{(i)})-f^{\mathrm{lin}}(\Theta^{\mathrm{lin}}_t,x^{(i)})\right)^2\right)^{1/2}=O\left(\frac{\mathcal{L}(\Theta_0)t^2}{\sqrt m}\right).\label{qt1}\]
and
\[|\mathcal{L}(\Theta_t)-\mathcal{L}^{\mathrm{lin}}(\Theta^{\mathrm{lin}}_t)|=O\left(\frac{\mathcal{L}(\Theta_0)^{3/2}t^2}{\sqrt m}\right).\label{qt2}\]
\end{enumerate}
\end{theorem}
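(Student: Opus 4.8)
The plan is to establish the three claims in the order stated, each feeding the next, with everything resting on two structural features of a geometrically local circuit of fixed depth: every future light cone $\mathcal{M}_j$ and past light cone $\mathcal{N}_k$ has cardinality $O(1)$, and — by the operator-norm hypotheses on $\partial_{\theta_j}U$ and $\partial_{\theta_i}\partial_{\theta_j}U$ in Theorem~\ref{qulazyth0} — every partial derivative $\partial_{\theta_{j_1}}\!\cdots\partial_{\theta_{j_r}}f_k(\Theta,x)$ is $O(1)$ and vanishes unless $j_1,\dots,j_r\in\mathcal{N}_k$. I would begin from the elementary fact that gradient flow makes the loss monotone, $\frac{d}{dt}\mathcal{L}(\Theta_t)=-\|\nabla_\Theta\mathcal{L}(\Theta_t)\|_2^2\le 0$, so that the residual vector $r(t):=f(\Theta_t,X)-Y$ obeys $\tfrac1n\|r(t)\|_2^2=2\mathcal{L}(\Theta_t)\le 2\mathcal{L}(\Theta_0)$ for all $t$. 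Since $\partial_{\theta_j}f(\Theta,x)=\tfrac1{\sqrt m}\sum_{k\in\mathcal{M}_j}\partial_{\theta_j}f_k(\Theta,x)$ is a sum of $|\mathcal{M}_j|=O(1)$ bounded terms, $|\partial_{\theta_j}f(\Theta,x)|=O(1/\sqrt m)$ uniformly, and Cauchy--Schwarz on the residual average then yields \eqref{lt1}:
\begin{align}
\nonumber |\partial_t\theta_j(t)| = \big|\partial_{\theta_j}\mathcal{L}(\Theta_t)\big| &= \Big|\tfrac1n\textstyle\sum_{i=1}^n r_i(t)\,\partial_{\theta_j}f(\Theta_t,x^{(i)})\Big| \\
&\le \Big(\tfrac1n\textstyle\sum_i r_i(t)^2\Big)^{1/2}\,O(1/\sqrt m) = O\!\big(\sqrt{\mathcal{L}(\Theta_0)/m}\big).
\end{align}

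Next I would treat the kernel velocity \eqref{lt2}. Along the flow $\partial_t\hat K_{\Theta_t}(x,x')=\sum_{j=1}^{Lm}\partial_t\theta_j(t)\,\partial_{\theta_j}\hat K_{\Theta_t}(x,x')$, and expanding $\partial_{\theta_j}\hat K_\Theta(x,x')$ by the product rule gives the terms $\partial_{\theta_j}\partial_{\theta_{j'}}f(\Theta,x)\,\partial_{\theta_{j'}}f(\Theta,x')$ together with their $x\leftrightarrow x'$ mirror. For fixed $j$ the Hessian entry $\partial_{\theta_j}\partial_{\theta_{j'}}f(\Theta,x)=\tfrac1{\sqrt m}\sum_k\partial_{\theta_j}\partial_{\theta_{j'}}f_k$ is $O(1/\sqrt m)$ and nonzero only when $j,j'$ share a past light cone, i.e. for at most $|\mathcal{M}_j|\,|\mathcal{N}|=O(1)$ indices $j'$ (the same mechanism behind Lemma~\ref{PM}); multiplying by $\partial_{\theta_{j'}}f=O(1/\sqrt m)$ gives $|\partial_{\theta_j}\hat K_\Theta(x,x')|=O(1/m)$ uniformly in $j$. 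Summing over the $Lm=O(m)$ parameters and inserting \eqref{lt1},
\begin{align}
\nonumber |\partial_t\hat K_{\Theta_t}(x,x')| &\le \sum_j |\partial_t\theta_j(t)|\,\big|\partial_{\theta_j}\hat K_{\Theta_t}(x,x')\big| \\
\nonumber &\le O(m)\cdot O\!\big(\sqrt{\mathcal{L}(\Theta_0)/m}\big)\cdot O(1/m) = O\!\big(\sqrt{\mathcal{L}(\Theta_0)/m}\big).
\end{align}

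For the linearization bounds \eqref{qt1}--\eqref{qt2} I would use that both residuals satisfy closed $n$-dimensional linear ODEs. Writing $\hat K_\Theta$ for the Gram matrix $\big(\hat K_\Theta(x^{(i)},x^{(i')})\big)_{i,i'}$, the chain rule combined with $\dot\Theta_t=-\nabla_\Theta\mathcal{L}(\Theta_t)$ gives $\dot r(t)=-\tfrac1n\hat K_{\Theta_t}r(t)$, while the linearized flow — whose gradient is frozen at $\Theta_0$ — gives $\dot r^{\mathrm{lin}}(t)=-\tfrac1n\hat K_{\Theta_0}r^{\mathrm{lin}}(t)$ with $r^{\mathrm{lin}}(0)=r(0)$. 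Subtracting and applying Duhamel's formula with the contraction semigroup $e^{-\hat K_{\Theta_0}(t-s)/n}$ (contractive because the Gram matrix $\hat K_{\Theta_0}$ is positive semidefinite),
\[
r(t)-r^{\mathrm{lin}}(t) = -\tfrac1n\int_0^t e^{-\hat K_{\Theta_0}(t-s)/n}\,\big(\hat K_{\Theta_s}-\hat K_{\Theta_0}\big)\,r(s)\,ds .
\]
I would then bound $\|\hat K_{\Theta_s}-\hat K_{\Theta_0}\|\le s\cdot O\!\big(\sqrt{\mathcal{L}(\Theta_0)/m}\big)$ via \eqref{lt2} and $\|r(s)\|_2\le\sqrt{2n\,\mathcal{L}(\Theta_0)}$ via the monotonicity of the loss; integrating $\int_0^t s\,ds=t^2/2$ gives $\|r(t)-r^{\mathrm{lin}}(t)\|_2=O\!\big(\mathcal{L}(\Theta_0)\,t^2/\sqrt m\big)$, which is \eqref{qt1} after dividing by $\sqrt n$. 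Finally $|\mathcal{L}(\Theta_t)-\mathcal{L}^{\mathrm{lin}}(\Theta^{\mathrm{lin}}_t)|\le\tfrac1{2n}\big(\|r(t)\|_2+\|r^{\mathrm{lin}}(t)\|_2\big)\,\|r(t)-r^{\mathrm{lin}}(t)\|_2$, and since both residual norms stay $O(\sqrt{\mathcal{L}(\Theta_0)})$ (the linearized flow is contractive, so $\|r^{\mathrm{lin}}(t)\|_2\le\|r(0)\|_2$), this is $O\!\big(\mathcal{L}(\Theta_0)^{3/2}t^2/\sqrt m\big)$, which is \eqref{qt2}.

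The main obstacle is the uniform-in-$j$ estimate $|\partial_{\theta_j}\hat K_\Theta(x,x')|=O(1/m)$ underlying \eqref{lt2}: it requires both the $1/\sqrt m$ normalization of the model and the $O(1)$ cardinality of the light cones, since geometrical locality together with a fixed depth is exactly what ensures that a given parameter influences only $O(1)$ single-qubit observables and that only $O(1)$ parameter pairs share a past light cone. The moment the depth is allowed to grow, the cardinalities $|\mathcal{M}|,|\mathcal{N}|$ re-enter every one of these estimates and must be tracked explicitly against the normalization $N(m)$ — precisely the quantitative refinement the present paper carries out — and the crude $t^2$ prefactor in \eqref{qt1}--\eqref{qt2}, harmless at any fixed $t$ but diverging as $t\to\infty$, has to be upgraded to a uniform-in-time bound through a Gronwall bootstrap exploiting the spectral gap of the limiting kernel before trainability can actually be concluded.
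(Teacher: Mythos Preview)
This theorem is \emph{quoted} from \cite{QLazy}; the present paper does not prove it but rather reviews it in \autoref{sec:Qlazy} as the baseline to be improved. There is therefore no proof in the paper to compare against. Your argument is nonetheless a correct and self-contained derivation of the four big-$O$ bounds: the monotonicity $\mathcal{L}(\Theta_t)\le\mathcal{L}(\Theta_0)$ together with $|\partial_{\theta_j}f|=O(1/\sqrt m)$ gives \eqref{lt1}; the light-cone count $|\{j':\mathcal{M}_j\cap\mathcal{M}_{j'}\neq\emptyset\}|=O(1)$ combined with $p=Lm=O(m)$ gives \eqref{lt2}; and the Duhamel representation for $r-r^{\mathrm{lin}}$ against the contraction semigroup generated by $\hat K_{\Theta_0}\succeq 0$ gives \eqref{qt1}--\eqref{qt2}. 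This is essentially the route taken in the original reference.

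It is worth noting how your argument lines up with what the paper actually proves. Your Part~3 is the Duhamel version of the computation in Lemma~\ref{improvedqlt}: there the authors differentiate $\|F(t)-F^{\mathrm{lin}}(t)\|_2^2$ directly and use positive-semidefiniteness of $\hat K_{\Theta_0}$ to drop the diagonal quadratic term, obtaining the same integrand $(\hat K_{\Theta_s}-\hat K_{\Theta_0})r(s)$ you reach via variation of constants. The decisive upgrade the paper makes --- and which you correctly flag at the end --- is that instead of feeding in the crude bounds $\|\hat K_{\Theta_s}-\hat K_{\Theta_0}\|=O(s\sqrt{\mathcal{L}(\Theta_0)/m})$ and $\|r(s)\|=O(\sqrt{\mathcal{L}(\Theta_0)})$, one first proves exponential decay of $\|r(s)\|$ via the spectral gap (Theorem~\ref{gradfl}) and the resulting uniform lazy bound $\|\Theta_s-\Theta_0\|_\infty=O(1/N(m))$, which replaces $t^2$ by a quantity bounded uniformly in $t$.
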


We improve both these results:
\begin{enumerate}
\item We will consider a more general setting in which the number of layers is not fixed and can grow with the number of qubits. This is a necessary condition to achieve a quantum advantage, as we discussed in \autoref{ch4}.
Furthermore, our discussion will not be restricted to the class of geometrically local circuits: the only locality assumption we will state will concern the observable measured at the end of the circuit. Therefore, we will generalize (\ref{eq1lt}) to our larger class of circuits and we will provide a quantitative bound (Theorem \ref{ntkconv}): the constant $c$ appearing in (\ref{quantbound}) can be explicitly computed following the proof we give.
\item As remarked in \cite{QLazy}, the bounds of Theorem \ref{qlazyth} are effective when the cost function at initialization does not depend on the number of qubits. More \modifica{precisely}, it is enough that $\mathcal{L}(\Theta_0)=o(m^{1/3})$ in order not to trivialize (\ref{lt2}), (\ref{lt2}), (\ref{qt1}) and (\ref{qt2}) as $m\to\infty$. \cite{QLazy} claims to expect this property when the model at initialization approaches a Gaussian process, but no proof concerning the probability distribution of the function is provided. 
In \autoref{ch5} we have proved that, in the limit of infinitely many qubits, the function generated by a quantum neural network with randomly initialized parameters converges to a Gaussian process when the parameters on which each qubit depends influence only a small number of other qubits (Theorem \ref{init}). Therefore, the cost function at initialization is bounded, with high probability, by a constant (Corollary \ref{corollaryR}).
\item We will prove that the whole training takes place in the lazy regime: the displacement of each parameter from its initial value is bounded by quantity which is convergent as $t\to\infty$ for $m$ fixed, and which is arbitrarily small as $m\to\infty$ (Theorem \ref{gradfl}), i.e.
\[\sup_t\|\Theta_t-\Theta_0\|_\infty \leq C(m),\qquad \text{with}\qquad \lim_{m\to\infty}C(m)=0.\label{strong}\]
A bound on the derivative of the form (\ref{lt1}) does not imply the stronger property (\ref{strong}).
The same improvement will be done for (\ref{lt2}) in Corollary \ref{freezntk}: we will prove that the neural tangent kernel is frozen on its value at initialization.
\item Furthermore, the bounds (\ref{qt1}) and (\ref{qt2}) become trivial as $t\to\infty$ with $m$ fixed. We will prove a quantitative bound for the distance between the original model and its linear approximation \textit{over the entire input space} with no divergences for $t\to\infty$ and $m$ fixed (Theorem \ref{gronwall}).
\end{enumerate}

\subsection{Gradient flow and neural tangent kernel}\label{6-2}
We call $\Theta_t$ the evolution of $\Theta_0$ under gradient flow. We will sometimes use the compact notation
\begin{align}
F(t)=\begin{pmatrix} f(\Theta_t,x^{(1)})\\f(\Theta_t,x^{(2)})\\ \vdots\\f(\Theta_t,x^{(n)})\\\end{pmatrix}, \quad\text{where}\quad (x^{(i)},\cdot)\in\mathcal{D}\quad n=|\mathcal{D}|
\end{align}
or similar vectorized forms of the model function.
Using the gradient flow equation 
\begin{align}
\dot \Theta_t = -\eta\nabla_\Theta\mathcal{L}(\Theta_t),
\label{defgrfl}
\end{align}
and the chain rule, the equations for the evolution of the parameters and the model function can be rewritten as follows
\begin{align}\label{evolution0}
\begin{dcases}
\hspace{3.3em}\dot \Theta_t&\hspace{-0.5em}=-\eta\nabla_\Theta f(\Theta_t,X^T) \nabla_{f(\Theta_t,X)}\mathcal{L}(\Theta_t)\\
\frac{d}{dt} f(\Theta_t,x)&\hspace{-0.5em}=\big(\nabla_\Theta f(\Theta_t,x)\big)^T \dot \Theta_t=-\eta \big(\nabla_\Theta f(\Theta_t, x)\big)^T \nabla_\Theta f(\Theta_t,X^T) \nabla_{f(\Theta_t,X)}\mathcal{L}(\Theta_t)
\end{dcases}.
\end{align}
\begin{definition}[Empirical NTK]\label{entk} We define the \textit{empirical neural tangent kernel} (NTK) as
\begin{align}
\hat{K}_\Theta(x,x')=\frac{1}{N_K(m)}\big(\nabla_\Theta f(\Theta,x)\big)^T \nabla_\Theta f(\Theta,x'),
\label{defntk}
\end{align}
where $N_K(m)$ is a normalization factor, which will be specified by Assumption \ref{assNTK}.
\end{definition}
The set of equations (\ref{evolution0}) becomes
\begin{align}
\begin{dcases}
\hspace{3.3em}\dot \Theta_t&\hspace{-0.5em}=-\eta\nabla_\Theta f(\Theta_t,X^T) \nabla_{f(\Theta_t,X)}\mathcal{L}(\Theta_t)\\
\frac{d}{dt} f(\Theta_t,x)&\hspace{-0.5em}=-\eta \,N_K(m)\,\hat K_{\Theta_t}(x,X^T) \nabla_{f(\Theta_t,X)}\mathcal{L}(\Theta_t)
\label{evolution}
\end{dcases}.
\end{align}

\subsection{Assumptions and properties of the NTK}\label{6-3}
In the previous subsection we introduced the empirical NTK.  Now, we introduce its expectation value over random initialization of the parameters and we discuss some properties.
\begin{definition}[Analytic NTK] \label{anNTK}
The \textit{analytic NTK} $K$ is defined as the expectation value of the empirical NTK $\hat K_\Theta$ over random initialization of the parameters:
\begin{align}
K(x,x')\equiv \mathbb{E}\left[\hat K_\Theta(x,x')\right].
\end{align}
\end{definition}
\begin{assumption}
\label{assNTK} 
We assume that we can choose a normalization $N_K(m)$ such that there exists a limit function $\bar K:\mathcal{X}\times\mathcal{X}\to\mathbb{R}$
\begin{align}
\label{limKbar}
\lim_{m\to\infty} \sup_{x,x'\in\mathcal{X}}|K(x,x')-\bar K(x,x')|=0
\end{align}
not identically zero. We also assume that the minimum eigenvalue of the finite\footnote{The number of examples in $\mathcal{D}$ is assumed to be finite.} matrix $\bar K \equiv \bar K(X,X^T)$ is strictly positive: $\lambda_{\min}^K>0$.
\end{assumption}

\modifica{
\begin{remark}
We stress that, provided that Assumptions \ref{zeromean} and \ref{assNTK} are satisfied, the architecture of the quantum neural networks can be completely arbitrary.
In particular, we do not need to assume that the qubits are arranged on a regular lattice with $2$-qubit gates acting only between neighboring qubits.
\end{remark}

\begin{remark}
Proceeding as in Remark \ref{rem3.3}, the limit \eqref{limKbar} always exists upon taking a suitable subsequence.
\end{remark}
}

Since $\bar K$ is a finite matrix, its spectrum is bounded. In particular, we will call $\lambda_{\max}^K$ its maximum eigenvalue.

As in the case of $N(m)$, we are left with some degrees of freedom in the choice of $N_K(m)$ under Assumption \ref{assNTK}: a global factor or negligible corrections do not change that requirement. As we will see, a multiplicative factor $\alpha>0$ in
\[K\quad \to \quad \alpha K\]
would rescale
\[\lambda_{\min}^K\quad\to\quad \alpha \lambda_{\min}^K.\]
Since $\lambda_{\min}^K$ will regulate the rate of convergence of the training (see Theorem \ref{gradfl}), in the face of this rescaling one may question the well-definedness of this quantity. As we will see, an arbitrary parameter $\eta_0$ will always multiply $\lambda_{\min}^K$ in the expression of the rate of convergence: therefore, any rescaling constant can be absorbed in $\eta_0$, and this simply represents a reparametrization of the continuous time in the gradient flow equation.

\begin{assumption} 
\label{uniform}
From now on, we will assume that the parameters $\{\theta_i\}_{i=1,\dots,Lm}$ are distributed as independent uniform random variables in $[0,\pi]$, i.e., the initialization of the circuit is uniform in the parameter space $\mathscr{P}=[0,\pi]^{Lm}$. 
\end{assumption}
\rimodifica{
Under Assumption \ref{uniform}, we can more explicitly write
\begin{align}
    K(x,x')=\frac{1}{N_K(m)}\int_{[0,\pi]^{Lm}}\frac{d^{Lm}\theta}{(\pi)^{Lm}}\sum_i \partial_{\theta_i}f(\Theta,x)\partial_{\theta_j}f(\Theta,x).
\end{align}
\begin{remark}
    As we will prove later (see Theorem \ref{qnngp}), the analytic NTK, together with the training set and the covariance at initialization, fully characterizes the probability distribution of the function generated by the quantum neural network in the limit of an infinite number of qubits.
    In the classical setting, there are some standard architectures which have been extensively studied (e.g. fully connected or convolutional neural networks). For such architectures, a closed formula is known for the corresponding analytic NTK. In the quantum setting, there are no standard architectures which have been proved to be more suitable for a quantum advantage yet. Therefore, as mentioned in the previous sections, we will never fix any specific architecture of the quantum circuit in our work, but we will only study the general properties holding for arbitrary overparameterized quantum neural networks. As a consequence, we will not provide any closed formula for the analytic NTK in the quantum setting. However, computing a closed formula for the analytic NTK for any finite size circuit is theoretically always possible by a direct (and inefficient) calculation, as discussed in \autoref{app:NTK}. 
\end{remark}
}

As we saw in the previous sections, $N(m)$ has to grow rapidly enough in order to ensure that a hypothesis of the form (\ref{condizione}) is verified. The main limit to this request is due to barren plateaus. Since, as we will see, (\ref{condizione}) is going to be generalized to 
\[\lim_{m\to\infty}\frac{L^\alpha m^\beta |\mathcal{M}|^\gamma|\mathcal{N}|^\delta}{\big(N_K(m)\big)^\mu N(m)}=0 \label{condizione2}\]
\modifica{for some exponents $\alpha,\beta,\gamma,\delta$ and $\mu$ that will be specified later}, we may wonder if there is an analogous problem with $N_K(m)$. The aim of next lemma is to show that -- differently from $N(m)$ -- $N_K(m)$ does not suffer from the problem of a potential too weak growth\footnote{Because of the form of (\ref{condizione2}), a too weak growth would mean $N_K(m)=o(1)$ as $m\to\infty$.}. Furthermore, we will prove that a condition (\ref{condizione2}) can always be translated into the slightly stronger form (\ref{condizione}).
\begin{lemma}[Bounding the NTK normalization] 
\label{boundnorm}
If Assumption \ref{zeromean} and Assumption \ref{assNTK} are satisfied, then
\[\Omega(1)\leq N_K(m)<O\left(|\mathcal{N}|\right),\]
which means that there exist $c_1,c_2>0$ and $m_0\in\mathbb{N}$ such that, for any $m\geq m_0$
\[c_1 \leq N_K(m) \leq c_2|\mathcal{N}|.\label{boundnormeq}\]
\end{lemma}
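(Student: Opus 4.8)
The plan is to estimate the analytic NTK $K(x,x) = \mathbb{E}[\hat K_\Theta(x,x)] = \frac{1}{N_K(m)}\,\mathbb{E}\bigl[\|\nabla_\Theta f(\Theta,x)\|_2^2\bigr]$ from above and below in terms of the light-cone cardinalities, and then invoke Assumption \ref{assNTK}, which forces $K(x,x)$ to converge to a finite, non-zero limit $\bar K(x,x)$ for at least some $x$. Since $N_K(m)$ is defined precisely so that this limit exists, any two-sided bound on $\mathbb{E}[\|\nabla_\Theta f(\Theta,x)\|_2^2]$ translates into a two-sided bound on $N_K(m)$.

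First I would compute $\|\nabla_\Theta f(\Theta,x)\|_2^2 = \sum_{i=1}^{|\Theta|} \bigl(\partial_{\theta_i} f(\Theta,x)\bigr)^2$. Writing $f = \frac{1}{N(m)}\sum_{k} f_k$, the key observation is that $f_k$ depends only on the parameters in its past light cone $\mathcal{N}_k$, so $\partial_{\theta_i} f_k = 0$ unless $i \in \mathcal{N}_k$, equivalently unless $k \in \mathcal{M}_i$. Hence $\partial_{\theta_i} f = \frac{1}{N(m)}\sum_{k \in \mathcal{M}_i} \partial_{\theta_i} f_k$. Using Assumption \ref{domain} (the generators $\mathcal{G}_i$ have spectrum $\{\pm 1\}$ and $\|\mathcal{O}_k\|\le 1$), each $|\partial_{\theta_i} f_k|$ is bounded by an absolute constant, so $|\partial_{\theta_i} f| \le \frac{c}{N(m)}|\mathcal{M}_i|$, and therefore
\[
\|\nabla_\Theta f(\Theta,x)\|_2^2 \le \frac{c^2}{N^2(m)}\sum_{i=1}^{|\Theta|}|\mathcal{M}_i|^2 = \frac{c^2\,\Sigma_2}{N^2(m)}.
\]
For the lower bound I would instead fix, for each qubit $k$, a single "good" parameter $i(k) \in \mathcal{N}_k$ — say the one in the final layer $L$ acting on qubit $k$, whose derivative picks out the observable $f_k$ itself up to an explicit trigonometric factor — and lower-bound $(\partial_{\theta_{i(k)}} f)^2$ after taking the expectation over the uniform initialization (Assumption \ref{uniform}), exploiting $\mathbb{E}[f_k(\Theta,x)]=0$ and the strict positivity $\mathcal{K}(x,x)>0$ from Assumption \ref{zeromean}. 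This gives $\mathbb{E}[\|\nabla_\Theta f(\Theta,x)\|_2^2] \ge \frac{c'}{N^2(m)}\sum_k \mathbb{E}[f_k(\Theta,x)^2] \gtrsim \frac{c'\,\mathcal{K}(x,x)}{1}$ after using $N^2(m) \asymp \sum_{k}\mathbb{E}[f_k^2]$-type estimates, i.e. a quantity bounded below by a positive constant.

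Combining: $K(x,x) = \frac{1}{N_K(m)}\mathbb{E}[\|\nabla_\Theta f\|_2^2]$ with $\Omega(1) \le \mathbb{E}[\|\nabla_\Theta f(\Theta,x)\|_2^2] \le \frac{c^2 \Sigma_2}{N^2(m)} \le c^2 |\mathcal{N}|\cdot\frac{\Sigma_1}{N^2(m)}$, using $\Sigma_2 \le |\mathcal{M}|\,\Sigma_1$ and then $\Sigma_1 = \sum_i |\mathcal{M}_i| = \sum_k |\mathcal{N}_k| \le m|\mathcal{N}|$ together with the bound $N(m) \le C\sqrt{m}\sqrt{|\mathcal{M}||\mathcal{N}|}$ from Lemma \ref{Nmax} — actually I expect the clean route is to bound $\frac{\Sigma_2}{N^2(m)}$ directly by $O(|\mathcal{N}|)$ via $\Sigma_2 \le |\mathcal{M}||\mathcal{N}|\,\Sigma_0$-style manipulations and Lemma \ref{Nmax}. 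Since Assumption \ref{assNTK} requires $K(x,x) \to \bar K(x,x)$ finite and $\bar K$ not identically zero (so $\bar K(x,x)>0$ for some $x$ — this should be checkable as $\bar K$ is a limit of positive-semidefinite kernels with, one argues, a positive diagonal), evaluating at such an $x$ and rearranging $N_K(m) = \mathbb{E}[\|\nabla_\Theta f\|_2^2]/K(x,x)$ yields $c_1 \le N_K(m) \le c_2|\mathcal{N}|$ for $m$ large.

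The main obstacle I anticipate is the lower bound $\mathbb{E}[\|\nabla_\Theta f(\Theta,x)\|_2^2] = \Omega(1)$: it is not enough that some derivative is nonzero, one needs its square to stay bounded below \emph{in expectation}, uniformly in $m$, and this is where the normalization $N(m)$ chosen via Assumption \ref{zeromean} must be used carefully — essentially one shows that the derivative with respect to a last-layer parameter reproduces $f_k$ (or a closely related traceless observable) and that $\sum_k \mathbb{E}[f_k^2]/N^2(m)$ is comparable to $\mathcal{K}(x,x) > 0$ by the very definition of $N(m)$. The upper bound, by contrast, is a routine light-cone counting argument combined with Lemma \ref{Nmax}.
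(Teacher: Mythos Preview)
Your overall architecture is right — estimate $\mathbb{E}[\|\nabla_\Theta f(\Theta,x)\|_2^2]$ two-sidedly and then divide by $K(x,x)\to\bar K(x,x)>0$ — but both of your concrete bounds have gaps, and the paper closes them with a different tool: Fourier analysis in $\Theta$ combined with Parseval (Lemma \ref{doppiastima}).

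For the upper bound, your deterministic estimate $\|\nabla_\Theta f\|_2^2 \le c^2\Sigma_2/N^2(m)$ cannot be turned into $O(|\mathcal{N}|)$ via Lemma \ref{Nmax}: that lemma gives only an \emph{upper} bound on $N(m)$, hence a \emph{lower} bound on $\Sigma_2/N^2(m)$, which is the wrong direction. No lower bound on $N(m)$ in terms of light cones is available in general (barren plateaus are precisely the regime where $N(m)$ is too small). The paper instead expands $f(\Theta,x)=\sum_{v}\tilde f_v(x)e^{2i\Theta\cdot v}$ with $v\in\{0,\pm1\}^{Lm}$ and observes that $\tilde f_v\neq 0$ forces $\mathrm{supp}(v)\subseteq\mathcal{N}_k$ for some $k$, so $|\mathrm{supp}(v)|\le|\mathcal{N}|$. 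Parseval then gives $\mathbb{E}[\|\nabla_\Theta f\|^2]=4\sum_v|\mathrm{supp}(v)|\,|\tilde f_v|^2\le 4|\mathcal{N}|\,\mathbb{E}[f^2]=4|\mathcal{N}|\,\mathcal{K}_m(x,x)$, and the factor $|\mathcal{N}|$ appears for the right structural reason rather than from light-cone counting of $\Sigma_2$.

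For the lower bound, your last-layer heuristic is shaky on two points: the derivative $\partial_{\theta_{[Lk]}}f$ generally involves more than one local observable (those $k'\in\mathcal{M}_{[Lk]}$), and the claim that $\sum_k\mathbb{E}[f_k^2]/N^2(m)\asymp\mathcal{K}(x,x)$ ignores the cross terms $\mathbb{E}[f_kf_{k'}]$ that actually define $\mathcal{K}_m$. The paper again uses Parseval: $\mathbb{E}[\|\nabla_\Theta f\|^2]=4\sum_{v}|\mathrm{supp}(v)|\,|\tilde f_v|^2\ge 4\sum_{v\neq 0}|\tilde f_v|^2=4\mathbb{E}[f^2]$, the last step using $\tilde f_0=\mathbb{E}[f]=0$ from Assumption \ref{zeromean}. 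This yields $4\mathcal{K}_m(x,x)\le N_K(m)K(x,x)\le 4|\mathcal{N}|\,\mathcal{K}_m(x,x)$ and the claimed bounds follow by choosing $x$ with $\bar K(x,x),\mathcal{K}(x,x)>0$ (Corollary \ref{diagonalNTK}).
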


\begin{proof}
See \autoref{proofboundnorm}.
\end{proof}

\begin{remark}
Since we are interested in the case $L=O(\text{poly}(m))$, we have that
\[ |\mathcal{N}|\leq |\Theta|=Lm=O(\text{poly}(m))\quad\to\quad N_K(m)=O(\text{poly}(m))\]
\modifica{independently} of the behaviour of $N(m)$. Therefore, a superpolynomial suppression of $N(m)$ in (\ref{condizione2}) cannot be solved by the growth of $N_K(m)$. However, due to the lower bound on $N_K(m)$, we see that the problem of barren plateaus is rooted in the normalization $N(m)$ of the function and does not involve $N_K(m)$.
\end{remark}

\begin{remark}\label{semplif}
As a consequence of this lemma, the estimates we will present in the following can be simplified if we set
\[\frac{1}{N_K(m)}\leq c\]
or, without loss of generality,
\[\frac{1}{N_K(m)}\leq 1 \label{riscalamento}\]
 by a rescaling of the normalization $N_K(m)$.
This of course may weaken the upper bounds or make the hypotheses of the theorems stronger, but it allows us to present the inequalities (\ref{condizione2}) in the cleaner form (\ref{condizione}) which does not involve the computation of the asymptotic behaviour of $N_K(m)$.
\end{remark}

\subsection{Analytic solution for the linearized model}\label{analyticsol}
As we will prove, the first order Taylor expansion around the values of the parameters at initialization is an approximation of the model which becomes increasingly accurate as the number of qubits is large. As in the classical case, this linear approximation allows an exact solution of the evolution equations in the case of the mean squared error. Therefore, when the number of qubits is large, the analytical solution for the linear model becomes a precious result to understand the evolution of the original model, which cannot be computed analytically.
So, let us start introducing the linearized model function:
\begin{align}\label{linearmodel}
f^{\mathrm{lin}}(\Theta_t^{\mathrm{lin}},x)=f(\Theta_0,x)+\nabla_\Theta f(\Theta_0,x)^T(\Theta^{\mathrm{lin}}_t-\Theta_0).
\end{align}
The set of equations (\ref{evolution}) for the linearized model reads
\begin{align}
\begin{dcases}
\hspace{4.3em}\dot \Theta_t^{\mathrm{lin}}&\hspace{-0.5em}=-\eta\nabla_\Theta f(\Theta_0,X^T) \nabla_{f^{\mathrm{lin}}(\Theta_t,X)}\mathcal{L}^{\mathrm{lin}}(\Theta^{\mathrm{lin}}_t)\\
\frac{d}{dt} f^{\mathrm{lin}}(\Theta^{\mathrm{lin}}_t,x)&\hspace{-0.5em}=-\eta \,N_K(m)\,\hat K_{\Theta_0}(x,X^T) \nabla_{f^{\mathrm{lin}}(\Theta^{\mathrm{lin}}_t,X)}\mathcal{L}^{\mathrm{lin}}(\Theta^{\mathrm{lin}}_t)
\label{linevolution}
\end{dcases},
\end{align}
where $\mathcal{L}^{\mathrm{lin}}$ is the loss function computed on $\mathcal{D}$ using the linearized model instead of the original model.
We recall that $n=|\mathcal{D}|$. In the case of the mean squared error, the solution of (\ref{linevolution}) can be analytically computed. We recall that
\begin{align}
\nonumber \mathcal{L}(\Theta)&=\frac{1}{n}\sum_{i=1}^n\frac{1}{2} \left(f(\Theta,x^{(i)})-y^{(i)}\right)^2\\ &=\frac{1}{2n}\|f(X)-Y\|_2^2.
\end{align}
After writing the gradient of the loss function with respect to the function,
\[\nabla_{f(\Theta_t,X)}\mathcal{L}(\Theta_t)=\frac{1}{n}f(\Theta_t,X)-Y,\]
 the equations of both the original and the linearized model become
\[
\begin{dcases}
\hspace{3.3em}\dot \Theta_t&\hspace{-0.5em}=-\frac{\eta}{n} \nabla_\Theta f(\Theta_t,X^T) (f(\Theta_t,X)-Y)\\
\frac{d}{dt} f(\Theta_t,x)&\hspace{-0.5em}=-\eta\frac{N_K(m)}{n}\hat K_{\Theta_t}(x,X^T) \left(f(\Theta_t,X)-Y\right)
\end{dcases},\]
\[
\label{eqlin}
\begin{dcases}
\hspace{4.3em}\dot \Theta^{\mathrm{lin}}_t&\hspace{-0.5em}= -\frac{\eta}{n} \nabla_\Theta f(\Theta_0,X^T) (f^{\mathrm{lin}}(\Theta^{\mathrm{lin}}_t,X)-Y)\\
\frac{d}{dt} f^{\mathrm{lin}}(\Theta^{\mathrm{lin}}_t,x)&\hspace{-0.5em}=-\eta\frac{N_K(m)}{n}\hat K_{\Theta_0}(x,X^T) \left(f^{\mathrm{lin}}(\Theta^{\mathrm{lin}}_t,X)-Y\right)
\end{dcases}.\]
Using the notation introduced above, the equations of evolution for the model and its linearized version evalued at the dataset inputs read
\begin{align}
\dot F(t)&=-\eta\frac{N_K(m)}{n}\hat K_{\Theta_t}(F(t)-Y),\\
\dot F^{\mathrm{lin}}(t)&=-\eta\frac{N_K(m)}{n}\hat K_{\Theta_0}(F^{\mathrm{lin}}(t)-Y),
\label{evolF}
\end{align}
where $\hat K_{\Theta_0}\equiv \hat K_{\Theta_0}(X,X^T)$ and $F^{\mathrm{lin}}(t)=f^{\mathrm{lin}}(\Theta^{\mathrm{lin}}_t,X)$. In order to write the analytical solution for a new input, we assume $\hat K_{\Theta_0}$ to be invertible\footnote{This is ensured asymptotically by Assumption \ref{assNTK} and by the statements (\ref{ntkconv}) and (\ref{lambdamin}).}.
The solution of (\ref{evolF}) gives the evolution of the linearized function evalued at the inputs of the dataset. By definition $F^{\mathrm{lin}}(0)=F(0)$:
\begin{align}\label{convergenzaesempi}
F^{\mathrm{lin}}(t)=e^{-\eta \frac{N_K(m)}{n}t\hat K_{\Theta_0}}(F(0)-Y)+Y.
\end{align}
For a new input $x\in\mathcal{X}$
\begin{align}
\nonumber \frac{d}{dt} f^{\mathrm{lin}}(\Theta^{\mathrm{lin}}_t,x)&=-\eta\frac{N_K(m)}{n}\hat K_{\Theta_0}(x,X^T)\cdot (F^{\mathrm{lin}}(t)-Y) \\
&=-\eta\frac{N_K(m)}{n}\hat K_{\Theta_0}(x,X^T)e^{-\eta\frac{N_K(m)}{n}\hat K_{\Theta_0}t}(F(0)-Y).
\end{align}
Therefore
\begin{align}
f^{\mathrm{lin}}(\Theta^{\mathrm{lin}}_t,x)=f(\Theta_0,x)-\hat K_{\Theta_0}(x,X^T)\hat K^{-1}_{\Theta_0}\left(\id-e^{-\eta \frac{N_K(m)}{n} \hat K_{\Theta_0}t}\right)(F(0)-Y).
\label{solutionevol}
\end{align}
\begin{assumption}\label{assETA}
As $m$ increases, we fix a constant $\eta_0>0$ and we rescale the learning rate \[\eta=\frac{n}{N_K(m)}\eta_0.\] Hence, (\ref{solutionevol}) reads
\begin{align}
f^{\mathrm{lin}}(\Theta^{\mathrm{lin}}_t,x)=f(\Theta_0,x)-\hat K_{\Theta_0}(x,X^T)\hat K^{-1}_{\Theta_0}\left(\id-e^{-\eta_0 \hat K_{\Theta_0}t}\right)(F(0)-Y). \label{solutionevol2}
\end{align}
\end{assumption}
The theorems in the following subsection show that, in the limit $m\to\infty$, $\{f^{\mathrm{lin}}(\Theta^{\mathrm{lin}}_t,x)\}_{x\in\mathcal{X}}$ is a Gaussian process and $f(\Theta_t,x)$ converges to $f^{\mathrm{lin}}(\Theta^{\mathrm{lin}}_t,x)$.

\subsection{Convergence to the linearized model}\label{6-5}

The following theorem is a generalization of \cite[Theorem 1]{QLazy} to quantum circuits which are not geometrically local and whose number of layer may depend on $m$.

\begin{theorem}[NTK concentration at initialization] 
\label{ntkconv}
Calling $m$ the number of qubits and $L(m)$ the number of layers, let us assume that
\begin{align}
\label{weakassumption}
\lim_{m\to\infty}\frac{1}{N_K^2(m)}\,\frac{\Sigma_2|\mathcal{M}|^2|\mathcal{N}|^2}{N^4(m)}=0.
\end{align}
When a parameterized quantum circuit is initialized at random, i.e., the parameters $\theta_i$ are independent random variables, the empirical NTK converges in probability to the analytic NTK as $m\to \infty$. In particular, there exists a constant $c$ such that, for any $x,x'\in\mathcal{X}$, we have
\begin{align}\nonumber
\mathbb{P}\left[|\hat K_\Theta(x,x')-K(x,x')|\geq\epsilon\right]&\leq \exp\left[-cN_K^2(m)\frac{N^4(m)}{\Sigma_2|\mathcal{M}|^2|\mathcal{N}|^2}\,\epsilon^2\right]\\
&\leq \exp\left[-cN_K^2(m)\frac{N^4(m)}{Lm|\mathcal{M}|^4|\mathcal{N}|^2}\,\epsilon^2\right].\label{quantbound}
\end{align}
\end{theorem}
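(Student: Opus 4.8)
The plan is to adapt the dependency-graph argument behind Theorem \ref{init} to the quadratic object $\hat K_\Theta$. Writing $\partial_i$ for $\partial/\partial\theta_i$ and recalling that $f_k(\Theta,x)=f_k(\Theta_{\mathcal{N}_k},x)$ depends only on the parameters indexed by $\mathcal{N}_k$ (so $\partial_i f_k$ depends only on $\Theta_{\mathcal{N}_k}$ and vanishes unless $i\in\mathcal{N}_k$, i.e. $k\in\mathcal{M}_i$), the decomposition $f=\frac{1}{N(m)}\sum_k f_k$ gives $\partial_i f(\Theta,x)=\frac{1}{N(m)}\sum_{k\in\mathcal{M}_i}\partial_i f_k(\Theta,x)$ and hence
\[
\hat K_\Theta(x,x')=\frac{1}{N_K(m)N^2(m)}\sum_{i=1}^{|\Theta|}\sum_{k,k'\in\mathcal{M}_i}\partial_i f_k(\Theta,x)\,\partial_i f_{k'}(\Theta,x').
\]
This writes $\hat K_\Theta(x,x')$ as $\frac{1}{N_K(m)N^2(m)}$ times a sum of $\Sigma_2=\sum_i|\mathcal{M}_i|^2$ terms $h_{i,k,k'}:=\partial_i f_k(\Theta,x)\,\partial_i f_{k'}(\Theta,x')$. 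By Assumption \ref{domain}, where $W_i(\theta_i)=e^{-i\mathcal{G}_i\theta_i}$ with $\|\mathcal{G}_i\|=1$ and $\|\mathcal{O}_k\|\le1$, a one-line computation gives $|\partial_i f_k|\le2$, so $|h_{i,k,k'}|\le4$; and $h_{i,k,k'}$ is a function only of the parameters indexed by $\mathcal{N}_k\cup\mathcal{N}_{k'}$.

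I would then form the dependency graph on the index set $T=\{(i,k,k') : 1\le i\le|\Theta|,\, k,k'\in\mathcal{M}_i\}$, which has $|T|=\Sigma_2$ vertices, joining $(i,k,k')$ to $(j,l,l')$ precisely when $(\mathcal{N}_k\cup\mathcal{N}_{k'})\cap(\mathcal{N}_l\cup\mathcal{N}_{l'})\ne\emptyset$; since the $\theta_i$ are independent, the reasoning of Lemma \ref{ok} shows this is a legitimate dependency graph. The crux is bounding its maximal degree. Using the two bookkeeping facts that, for a parameter index $p$, $\{k:p\in\mathcal{N}_k\}=\mathcal{M}_p$ has size at most $|\mathcal{M}|$, and, for a qubit index $l$, $\{j:l\in\mathcal{M}_j\}=\mathcal{N}_l$ has size at most $|\mathcal{N}|$ (the same counting used in Lemma \ref{PM}), one chains: a fixed $(i,k,k')$ has support of size $\le2|\mathcal{N}|$, each parameter in it lies in $\mathcal{N}_l$ for at most $|\mathcal{M}|$ qubits $l$, each such $l$ lies in $\mathcal{M}_j$ for at most $|\mathcal{N}|$ indices $j$, and each such $j$ admits at most $|\mathcal{M}|$ choices of the remaining qubit, so the degree is $D=O(|\mathcal{M}|^2|\mathcal{N}|^2)$. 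I expect this combinatorial step — keeping track of exactly which local derivatives can be correlated through a shared parameter, without losing extra powers of $|\mathcal{M}|$ — to be the main obstacle; it is also where one exploits the generality over \cite{QLazy}, since no geometric locality is assumed and $L$ is allowed to grow with $m$.

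Finally, applying Theorem \ref{janson} to $\hat K_\Theta(x,x')=\sum_{T}\frac{1}{N_K(m)N^2(m)}h_{i,k,k'}$ with $|T|=\Sigma_2$, degree bound $D+1=O(|\mathcal{M}|^2|\mathcal{N}|^2)$, and uniform bound $A=4/(N_K(m)N^2(m))$, gives $|\kappa_r(\hat K_\Theta(x,x'))|\le C_r\,\Sigma_2\,(D+1)^{r-1}A^r$ for every $r\ge1$; in particular the variance is $\kappa_2=O(\Sigma_2|\mathcal{M}|^2|\mathcal{N}|^2/(N_K^2(m)N^4(m)))$, which is exactly the quantity that hypothesis (\ref{weakassumption}) sends to zero (and, as in Corollary \ref{controllo}, (\ref{weakassumption}) together with Lemma \ref{Nmax} also keeps $(D+1)A$ under control). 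To turn the cumulant estimates into the tail bound I would repeat the endgame of the proof of Theorem \ref{init}: using $r^{r-2}\le e^r r!$, bound the Taylor tail of $\log\mathbb{E}[e^{-t(\hat K_\Theta(x,x')-K(x,x'))}]$ by a convergent geometric series in $t(D+1)A$ times $\kappa_2$, so that for large $m$ the quadratic term dominates, and then close with a Chernoff bound optimised over $t$ to obtain $\mathbb{P}(|\hat K_\Theta(x,x')-K(x,x')|\ge\epsilon)\le\exp(-c\,\epsilon^2/\kappa_2)$ in the relevant range. Inserting the bound on $\kappa_2$ yields the first inequality of (\ref{quantbound}), and the crude estimate $\Sigma_2\le Lm|\mathcal{M}|^2$ from the remark after Definition \ref{defmax} yields the second.
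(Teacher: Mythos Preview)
Your approach is correct but takes a substantially more elaborate route than the paper. The paper does not use the dependency-graph/cumulant machinery of Theorem \ref{init} at all; instead it applies McDiarmid's bounded-differences inequality (Lemma \ref{mcdiarmid}) directly to $\hat K_\Theta(x,x')$, viewed as a function of the independent parameters $\theta_1,\dots,\theta_{|\Theta|}$. The only work is to bound, for each $i$, the oscillation $c_i$ of $\hat K_\Theta(x,x')$ when $\theta_i$ alone is varied: one observes that only the terms $T_{k,k',j}$ with $i\in\mathcal{N}_k\cup\mathcal{N}_{k'}$ are affected, bounds the number of such triples by $|\Gamma_i|\le 2|\mathcal{M}_i||\mathcal{M}||\mathcal{N}|$ (exactly the same bookkeeping you use for the degree of your dependency graph), and obtains $c_i=16|\mathcal{M}_i||\mathcal{M}||\mathcal{N}|/(N_K(m)N^2(m))$. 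Then $\sum_i c_i^2$ is precisely the variance-type quantity in the exponent, and McDiarmid gives the sub-Gaussian tail for all $\epsilon$ in one line, with no asymptotic hypothesis needed for the bound itself.

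What each approach buys: the McDiarmid route is short and yields \eqref{quantbound} uniformly in $\epsilon$ without any Chernoff optimisation. Your cumulant route is workable and unifies the argument with Theorem \ref{init} (the combinatorics is the same: your degree bound $D=O(|\mathcal{M}|^2|\mathcal{N}|^2)$ and the paper's $|\Gamma_i|=O(|\mathcal{M}_i||\mathcal{M}||\mathcal{N}|)$ are two views of the same counting), and it delivers all cumulants rather than only the variance scale. The closing Chernoff step, however, needs a small extra remark you did not spell out: the geometric series converges only for $t$ with $t(D+1)A$ below a fixed constant, so the Gaussian tail is obtained a priori only for $\epsilon\lesssim NA=O(\Sigma_2/(N_K(m)N^2(m)))$. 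This is harmless here because $|\hat K_\Theta|\le 4\Sigma_2/(N_K(m)N^2(m))$ is already bounded by the same quantity, so the tail bound is vacuously true beyond that range. Your phrase ``for large $m$ the quadratic term dominates'' slightly misstates the mechanism: the domination comes from restricting $t$, not from sending $m\to\infty$.
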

\begin{proof}
See \autoref{proofntk}.
\end{proof}

\begin{remark}
The hypothesis (\ref{weakassumption}) can be simplified with the (slightly stronger) request
\[\lim_{m\to\infty}\frac{Lm|\mathcal{M}|^4|\mathcal{N}|^2}{N^4(m)}=0,\]
which may be easier to be operationally verified, since the computation of $\Sigma_2$ and $N_K(m)$ are not required.
\end{remark}

Denoting $\xrightarrow{p}$ the convergence in probability, and recalling that the convergence in probability implies the convergence in distribution \cite{vaart_1998}, we are going to prove, informally speaking, the following implication:
\[
\left. \begin{aligned}
     f(\Theta_0,\,\cdot\,)&\xrightarrow{d} f^{(\infty)}(\,\cdot\,) \\
     \hat K_{\Theta_0}(x,x') &\xrightarrow{p} K(x,x') \to \bar K(x,x')
\end{aligned}
 \quad \right\} \quad\to \quad 
 \begin{gathered}
     \lim_{m\to \infty} f^{\mathrm{lin}}(\Theta^{\mathrm{lin}}_t,x) \\
     \text{ can be written in terms of } \\
     f^{(\infty)}(\,\cdot\,) \text{ and } \bar K(\,\cdot\,, \,\cdot\,).
\end{gathered}
\]
The advantage of writing $\lim_{m\to \infty} f^{\mathrm{lin}}(\Theta^{\mathrm{lin}}_t,x)$ in terms of $f^{(\infty)}(\,\cdot\,)$ and $\bar K(\,\cdot\,, \,\cdot\,)$ is that the first object is a random variable with a known Gaussian process distribution, while the second object is a deterministic function. So, the distribution of $\lim_{m\to \infty} f^{\mathrm{lin}}(\Theta_t,x)$ can be defined by means of the following corollaries of Lemma \ref{ntkconv}.
\begin{corollary} 
\label{convsol}
If (\ref{weakassumption}) and Assumption \ref{assNTK} hold, the empirical NTK appearing in the solutions (\ref{solutionevol2}) of the equations of the evolution under gradient flow converges in probability to the limit function $\bar K$ of the analytical NTK. Therefore
\begin{align}
f^{\mathrm{lin}}(\Theta^{\mathrm{lin}}_t,x)\xrightarrow{d}f^{(\infty)}(x)-\bar K(x,X^T)\bar K^{-1}\left(\id-e^{-\eta_0\bar K t} \right)(F^{(\infty)}-Y)
\label{sol-limit}
\end{align}
as $m\to\infty$. At the RHS, $f^{(\infty)}(x)$ and $F^{(\infty)}=f^{(\infty)}(X)$ are the random variables obtained in the limit $m\to \infty$ over random initialization according to Theorem \ref{init}.
\end{corollary}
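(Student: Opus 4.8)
The plan is to read the analytic solution \eqref{solutionevol2} as a single fixed continuous map applied to a random tuple whose limiting joint distribution we already control, and then to combine Slutsky's theorem with the continuous mapping theorem. Fix $x\in\mathcal{X}$ and the training inputs $X$, and consider the random tuple
\[
Z_m=\bigl(f(\Theta_0,x),\,f(\Theta_0,X),\,\hat K_{\Theta_0}(x,X^T),\,\hat K_{\Theta_0}(X,X^T)\bigr)\in\mathbb{R}\times\mathbb{R}^n\times\mathbb{R}^{1\times n}\times\mathbb{R}^{n\times n}.
\]
By \eqref{solutionevol2} one has $f^{\mathrm{lin}}(\Theta^{\mathrm{lin}}_t,x)=\Phi_t(Z_m)$ with
\[
\Phi_t(a,v,w,M)=a-w\,M^{-1}\!\left(\id-e^{-\eta_0 M t}\right)(v-Y),
\]
which is continuous at every point $(a,v,w,M)$ for which $M$ is invertible, since matrix inversion and $M\mapsto e^{-\eta_0 M t}$ are continuous there.

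First I would establish the joint convergence of $Z_m$. The pair $(f(\Theta_0,x),f(\Theta_0,X))$ converges in distribution to $(f^{(\infty)}(x),F^{(\infty)})$: since $\mathcal{X}$ is finite (Assumption \ref{convex}), Theorem \ref{init} is precisely the statement that the random vector $(f^{(m)}(\Theta_0,y))_{y\in\mathcal{X}}$ converges in distribution to $(f^{(\infty)}(y))_{y\in\mathcal{X}}\sim GP(0,\mathcal{K})$, and the claimed pair is a fixed linear image of that vector. For the NTK components, Theorem \ref{ntkconv} gives $\hat K_{\Theta_0}(y,y')\xrightarrow{p}K(y,y')$ for each pair, and Assumption \ref{assNTK} gives $K\to\bar K$ uniformly; as there are only finitely many entries involved, $\hat K_{\Theta_0}(x,X^T)\xrightarrow{p}\bar K(x,X^T)$ and $\hat K_{\Theta_0}(X,X^T)\xrightarrow{p}\bar K$, both deterministic. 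By Slutsky's theorem, adjoining to a weakly convergent sequence further components that converge in probability to constants preserves joint weak convergence, hence
\[
Z_m\xrightarrow{d}\bigl(f^{(\infty)}(x),\,F^{(\infty)},\,\bar K(x,X^T),\,\bar K\bigr)\qquad(m\to\infty).
\]

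Finally I would invoke the continuous mapping theorem. The law of the limit tuple is supported on the set where the matrix coordinate equals $\bar K$, which is invertible because $\lambda_{\min}^K>0$ (Assumption \ref{assNTK}); invertibility being an open condition, $\Phi_t$ is continuous on a set of full measure for that law, so $\Phi_t(Z_m)\xrightarrow{d}\Phi_t\bigl(f^{(\infty)}(x),F^{(\infty)},\bar K(x,X^T),\bar K\bigr)$, which is exactly the right-hand side of \eqref{sol-limit}. One bookkeeping point has to be handled: for finite $m$ the matrix $\hat K_{\Theta_0}$ need not be invertible, so \eqref{solutionevol2} is to be read on the event $\mathcal{E}_m=\{\lambda_{\min}(\hat K_{\Theta_0})>\lambda_{\min}^K/2\}$, on which $M\mapsto M^{-1}$ is moreover Lipschitz; by \eqref{quantbound} together with a union bound over the $n^2$ entries of $\hat K_{\Theta_0}(X,X^T)$ we get $\mathbb{P}(\mathcal{E}_m)\to1$, so redefining $f^{\mathrm{lin}}$ arbitrarily on $\mathcal{E}_m^c$ does not affect the limiting distribution. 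I do not anticipate a serious obstacle here; the only mild care needed is to use the NTK concentration simultaneously for all $n^2$ matrix entries and to keep the matrix uniformly bounded away from singularity, which is exactly what restricting to $\mathcal{E}_m$ accomplishes.
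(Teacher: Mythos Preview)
Your proposal is correct and follows essentially the same route as the paper: both arguments write \eqref{solutionevol2} as a continuous function of the pair $(f(\Theta_0,\cdot),\hat K_{\Theta_0})$, use Theorem~\ref{init} for the first component and the NTK concentration plus Assumption~\ref{assNTK} for the second, and combine them via Slutsky's theorem (the paper states this through Lemma~\ref{convprob} and Theorem~\ref{sl}). Your treatment of the invertibility of $\hat K_{\Theta_0}$ via the high-probability event $\mathcal{E}_m$ is in fact slightly more careful than the paper, which simply assumes invertibility with a footnote pointing to the same concentration results.
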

\begin{proof}
See \autoref{proofcorgp}.
\end{proof}

\begin{corollary}[The linearized model is a GP during all the evolution] 
\label{corgp}
As $m\to\infty$, $\{f^{\mathrm{lin}}(\Theta^{\mathrm{lin}}_t,x)\}_{x\in\mathcal{X}}$ converges in distribution to a Gaussian process $\{f^{(\infty)}_t(x)\}_{x\in\mathcal{X}}$ with mean and covariance
\begin{align}
\mu_t(x)&=\, \bar K(x,X^T)\bar K^{-1}\left(\id-e^{-\eta_0\bar K t}\right)Y,\\
\nonumber\mathcal{K}_t(x,x')&=\mathcal{K}_0(x,x')\\
\nonumber&\phantom{=}- \bar K(x,X^T)\bar K^{-1}\left(\id-e^{-\eta_0\bar K t}\right) \mathcal{K}_0(X,x')\\
\nonumber&\phantom{=}-\bar K(x',X^T)\bar K^{-1}\left(\id-e^{-\eta_0\bar K t}\right) \mathcal{K}_0(X,x) \\
&\phantom{=}+\bar K(x,X^T)\bar K^{-1}\left(\id-e^{-\eta_0\bar K t}\right)\mathcal{K}_0(X,X^T)\left(\id-e^{-\eta_0\bar K t}\right)\bar K^{-1} K(X,x').
\end{align}
\end{corollary}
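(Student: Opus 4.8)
The plan is to derive the statement almost directly from Corollary \ref{convsol}, the only extra ingredient being the elementary fact that an affine image of a Gaussian process is again Gaussian. First I would invoke Corollary \ref{convsol}, which identifies the limit in distribution of $f^{\mathrm{lin}}(\Theta^{\mathrm{lin}}_t,x)$ with
\[
g_t(x):=f^{(\infty)}(x)-\bar K(x,X^T)\bar K^{-1}\left(\id-e^{-\eta_0\bar K t}\right)\left(F^{(\infty)}-Y\right),
\]
where $F^{(\infty)}=f^{(\infty)}(X)$ and, by Theorem \ref{init}, $\{f^{(\infty)}(x)\}_{x\in\mathcal{X}}\sim GP(0,\mathcal{K})$ (write $\mathcal{K}_0:=\mathcal{K}$ for the covariance at $t=0$). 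Since $\mathcal{X}$ is finite (Assumption \ref{convex}), the characteristic-function argument proving Corollary \ref{convsol} applies verbatim to the whole family indexed by $\mathcal{X}$, so the convergence holds at the level of the full finite-dimensional law and not merely pointwise in $x$. Moreover, for any finite tuple $x^{(1)},\dots,x^{(p)}$ the vector $\big(g_t(x^{(j)})\big)_j$ is a fixed affine function of the jointly Gaussian vector $\big(f^{(\infty)}(x^{(1)}),\dots,f^{(\infty)}(x^{(p)}),F^{(\infty)}\big)$, hence is itself Gaussian; therefore $\{g_t(x)\}_{x\in\mathcal{X}}$ is a Gaussian process and $\{f^{\mathrm{lin}}(\Theta^{\mathrm{lin}}_t,x)\}_{x\in\mathcal{X}}$ converges in distribution to it.

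Next I would compute the mean and covariance of $g_t$. Writing $A_t(x):=\bar K(x,X^T)\bar K^{-1}\left(\id-e^{-\eta_0\bar K t}\right)$ for the deterministic row vector, we have $g_t(x)=f^{(\infty)}(x)-A_t(x)F^{(\infty)}+A_t(x)Y$; taking expectations and using $\mathbb{E}[f^{(\infty)}(x)]=0$ and $\mathbb{E}[F^{(\infty)}]=0$ gives $\mu_t(x)=A_t(x)Y$, which is the claimed mean. Since $g_t(x)-\mu_t(x)=f^{(\infty)}(x)-A_t(x)F^{(\infty)}$, expanding
\[
\mathcal{K}_t(x,x')=\mathbb{E}\!\left[\big(f^{(\infty)}(x)-A_t(x)F^{(\infty)}\big)\big(f^{(\infty)}(x')-A_t(x')F^{(\infty)}\big)\right]
\]
into four terms and substituting the second moments of $f^{(\infty)}$, namely $\mathbb{E}[f^{(\infty)}(x)f^{(\infty)}(x')]=\mathcal{K}_0(x,x')$, $\mathbb{E}[f^{(\infty)}(x)(F^{(\infty)})^{T}]=\mathcal{K}_0(x,X^T)$ and $\mathbb{E}[F^{(\infty)}(F^{(\infty)})^{T}]=\mathcal{K}_0(X,X^T)$, reproduces the four lines of the displayed expression for $\mathcal{K}_t(x,x')$; here one uses that $\bar K$, $\bar K^{-1}$ and $\id-e^{-\eta_0\bar K t}$ are symmetric and mutually commuting, so that $A_t(x')^{T}=\left(\id-e^{-\eta_0\bar K t}\right)\bar K^{-1}\bar K(X,x')$.

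There is essentially no hard step here: all the analytic work — the concentration of the empirical NTK $\hat K_{\Theta_0}$ on $\bar K$ (Theorem \ref{ntkconv}) and the Gaussian-process limit of $f(\Theta_0,\cdot)$ (Theorem \ref{init}) — has already been absorbed into Corollary \ref{convsol}. The only point that deserves an explicit remark is the promotion from pointwise to joint convergence, which costs nothing because $\mathcal{X}$ is finite; everything else is the bookkeeping of a linear change of variables and a routine covariance expansion.
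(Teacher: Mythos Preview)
Your proposal is correct and follows essentially the same approach as the paper: invoke Corollary \ref{convsol} to identify the distributional limit, observe that an affine image of the Gaussian process $f^{(\infty)}$ is again Gaussian, and then expand the mean and covariance directly. Your write-up is in fact slightly more explicit than the paper's (the shorthand $A_t(x)$ and the remark on symmetry/commutativity of $\bar K$, $\bar K^{-1}$, $\id-e^{-\eta_0\bar K t}$ make the covariance computation cleaner), but the argument is the same.
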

\begin{proof}
See \autoref{proofcorgp}.
\end{proof}

Now we state the theorems which prove that $f(\Theta_t,x)$ and $f^{\mathrm{lin}}(\Theta^{\mathrm{lin}}_t,x)$ are asymptotically close as $m\to \infty$, adapting the strategy presented in \cite{Lee2020} to the quantum setting.\\
First, we show that the model converges to the examples as $t\to\infty$ and that it enters the \textit{lazy regime} as $m\to\infty$, i.e.
\begin{enumerate}
\item both $\sup_{t\geq 0} \|\Theta_t-\Theta_0\|_\infty$ and $\sup_{t\geq 0} \|\Theta^{\mathrm{lin}}_t-\Theta_t\|_\infty$ converge to zero in the limit of infinitely many qubits;
\item as a consequence, the linear model becomes an increasingly good approximation of the original model during the training.
\end{enumerate}
Therefore, the distribution of the trained model with many qubits will be close to the distribution of the linearized model.

\begin{theorem}[Convergence to the examples and lazy training]
\label{gradfl} Let us assume that the hypotheses of Theorem \ref{init} and of Theorem \ref{ntkconv} are satisfied and that Assumption \ref{assNTK} holds.
For any $\delta>0$, there exist $\bar m\in\mathbb{N}$ and some constants $R_0,R_1$ such that, when applying gradient flow with learning rate $\eta=\frac{n}{N_K(m)}\eta_0$, for all $m\geq \bar m$ the following inequalities hold with probability at least $1-\delta$ over random initialization for all $t\geq 0$:
\begin{align}
\label{grad1}\|F(t)-Y\|_2&\leq R_0\,\sqrt{n\log(2n)}\, e^{-\frac{\eta_0\lambda^K_{\min}}{3}t},\\
\label{grad2}\|\Theta_t-\Theta_0\|_\infty&\leq\frac{R_1}{\lambda^K_{\min}}\,n\sqrt{\log(2n)}\frac{|\mathcal{M}|}{N_K(m)N(m)}\left(1-e^{-\frac{1}{3}\eta_0\lambda^K_{\min}t}\right).
\end{align}
\end{theorem}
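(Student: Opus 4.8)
The plan is to transplant the global‑convergence (lazy‑training) argument of \cite{Lee2020} to our circuit, controlling along the gradient‑flow trajectory the two objects that drive the dynamics — the residual $F(t)-Y$ and the displacement $\Theta_t-\Theta_0$ — by a continuity (bootstrap) argument. First I would pin down a favourable event. By Theorem \ref{init} the vector $F(0)=f(\Theta_0,X)$ converges in distribution to a Gaussian vector with $O(1)$ entrywise variance, so combining this with a union bound over the $n$ inputs (Corollary \ref{corollaryR}) yields a constant $R_0$, independent of $m$ but depending on $\delta$ and the dataset, with $\|F(0)-Y\|_2\le R_0\sqrt{n\log(2n)}$ on an event of probability $\ge1-\delta/2$ once $m$ is large. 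By Theorem \ref{ntkconv}, applied to each of the $n^2$ entries of $\hat K_{\Theta_0}(X,X^T)$ with $\epsilon=\lambda^K_{\min}/(3n)$ together with a union bound, on an event of probability $\ge1-\delta/2$ we have $\|\hat K_{\Theta_0}-\bar K\|_{\mathrm{op}}\le\tfrac13\lambda^K_{\min}$. I will work on the intersection of these two events. Finally, a light‑cone bound on $\nabla_\Theta\hat K_\Theta$ — of the same type as those used to prove Theorem \ref{ntkconv} — provides a radius $\rho=\rho(m)>0$ such that $\|\Theta-\Theta_0\|_\infty\le\rho$ forces $\|\hat K_\Theta-\hat K_{\Theta_0}\|_{\mathrm{op}}\le\tfrac13\lambda^K_{\min}$, and hence, on that ball, $\lambda_{\min}(\hat K_\Theta)\ge\lambda^K_{\min}-\tfrac13\lambda^K_{\min}-\tfrac13\lambda^K_{\min}=\tfrac13\lambda^K_{\min}$.

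Then I would run the bootstrap. Set $t^\ast=\sup\{t\ge0:\ \|\Theta_s-\Theta_0\|_\infty\le\rho\ \text{for all}\ s\in[0,t]\}$. For $t<t^\ast$, with the rescaled learning rate $\eta=\tfrac{n}{N_K(m)}\eta_0$ of Assumption \ref{assETA} the model equation (\ref{evolF}) reads $\dot F(t)=-\eta_0\,\hat K_{\Theta_t}(F(t)-Y)$, so
\[
\frac{d}{dt}\|F(t)-Y\|_2^2=-2\eta_0\,(F(t)-Y)^{T}\hat K_{\Theta_t}(F(t)-Y)\le-\tfrac{2}{3}\eta_0\lambda^K_{\min}\,\|F(t)-Y\|_2^2,
\]
and Grönwall gives $\|F(t)-Y\|_2\le R_0\sqrt{n\log(2n)}\,e^{-\frac{\eta_0\lambda^K_{\min}}{3}t}$, which is (\ref{grad1}). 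For the parameters, $(\dot\Theta_t)_i=-\tfrac{\eta}{n}\sum_{j=1}^n\partial_{\theta_i}f(\Theta_t,x^{(j)})\,(f(\Theta_t,x^{(j)})-y^{(j)})$; since $\partial_{\theta_i}f_k$ vanishes unless $k\in\mathcal M_i$ and $|\partial_{\theta_i}f_k|\le2$ by Assumption \ref{domain}, one has $|\partial_{\theta_i}f(\Theta,x)|\le 2|\mathcal M|/N(m)$, so by Cauchy–Schwarz $\|\dot\Theta_t\|_\infty\le\tfrac{2\eta_0|\mathcal M|\sqrt n}{N_K(m)N(m)}\,\|F(t)-Y\|_2$. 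Integrating the exponential bound just obtained,
\[
\|\Theta_t-\Theta_0\|_\infty\le\frac{6R_0}{\lambda^K_{\min}}\,\frac{|\mathcal M|\,n\sqrt{\log(2n)}}{N_K(m)N(m)}\left(1-e^{-\frac{1}{3}\eta_0\lambda^K_{\min}t}\right),
\]
which is (\ref{grad2}) with $R_1:=6R_0$.

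It remains to close the loop and show $t^\ast=\infty$. The right‑hand side of the last display is at most $\frac{R_1}{\lambda^K_{\min}}\frac{|\mathcal M|\,n\sqrt{\log(2n)}}{N_K(m)N(m)}$, which by Corollary \ref{controllo} (applied to hypothesis (\ref{hpgrowth})) and the lower bound $N_K(m)=\Omega(1)$ of Lemma \ref{boundnorm} tends to $0$ as $m\to\infty$, $n$ being fixed; hence for $m\ge\bar m$ it is strictly smaller than $\rho$, so $\|\Theta_t-\Theta_0\|_\infty$ cannot reach $\rho$ on $[0,t^\ast)$. Continuity of $t\mapsto\Theta_t$ then forces $t^\ast=\infty$, and (\ref{grad1})–(\ref{grad2}) hold for all $t\ge0$ on the good event, which has probability $\ge1-\delta$. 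The genuinely technical step, and the one I expect to be the main obstacle, is producing the radius $\rho$ of the first paragraph with the correct scaling: one must verify that the light‑cone bound on the variation of $\hat K_\Theta$ in $\Theta$, multiplied by the (vanishing) displacement bound above, is itself $o(1)$ under the hypotheses (\ref{hpgrowth}) and (\ref{weakassumption}) — equivalently, that $\lambda_{\min}(\hat K_{\Theta_t})\ge\tfrac13\lambda^K_{\min}$ along the whole trajectory, not merely at initialization. An equivalent route, closer to \cite{QLazy}, is to bound $\tfrac{d}{dt}\hat K_{\Theta_t}$ directly in terms of $\|\dot\Theta_t\|_\infty$ and the light cones and integrate, which avoids introducing $\rho$ explicitly; either way the content reduces to the same light‑cone bookkeeping, with the rest being Grönwall.
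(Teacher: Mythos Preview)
Your proposal is correct and follows essentially the same bootstrap argument as the paper: pin down a high-probability event on which the initial residual (Corollary \ref{corollaryR}) and the initial kernel (Theorem \ref{ntkconv}/Lemma \ref{lambdamin}) are controlled, define a stopping time at which the parameters leave a small ball, apply Gr\"onwall and the componentwise gradient bound $|\partial_{\theta_i}f|\le 2|\mathcal{M}|/N(m)$ inside the ball, and close the loop by showing the displacement never reaches the boundary. The only cosmetic difference is that the paper sets $\rho(m)$ equal to the displacement bound itself and then invokes uniform continuity of $\Theta\mapsto\hat K_\Theta$ (backed by the Lipschitz estimate of Lemma \ref{kernel}) to keep $\hat K_\Theta\succ\tfrac13\lambda_{\min}^K\id$ on $B_{\rho(m)}(\Theta_0)$, rather than first extracting $\rho$ from the Lipschitz constant and then comparing; your closing remark that one must check $\text{Lip}(\hat K_\Theta)\cdot\rho(m)=o(1)$ is precisely the content of this step, made explicit later in Corollary \ref{freezntk}.
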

\begin{proof}
See \autoref{proofgradfl}.
\end{proof}
\begin{remark}
The (\ref{grad1}) is equivalent to
\[\mathcal L (\Theta_t)\leq  \frac{1}{2}R^2_0\log(2n)e^{-\frac{2}{3}\eta_0\lambda^K_{\min}t}.\]
\end{remark}

As in the classical case, the consequence of the lazy training is that the NTK is ``freezed'' on its initialization value.

\begin{corollary}[Freezing of the NTK]\label{freezntk} 
Let us assume that the hypotheses of Theorem \ref{init} and of Theorem \ref{ntkconv} are satisfied and that Assumption \ref{assNTK} holds. For any $\delta>0$, there exist $\bar m\in\mathbb{N}$ and a constant $R_2$ such that, when applying gradient flow with learning rate $\eta=\frac{n}{N_K(m)}\eta_0$, the following inequality holds for all $m\geq \bar m$ with probability at least $1-\delta$ over random initialization:
\begin{align}
\label{grad3}\sup_t\sup_{x,x'\in\mathcal{X}}|\hat K_{\Theta_0}(x,x')-\hat K_{\Theta_t}(x,x')|&\leq \frac{R_2}{\lambda_{\min}^K}\,n\sqrt{\log(2n)}\,\frac{\Sigma_1|\mathcal{M}|^3|\mathcal{N}|}{N_K^2(m)N^3(m)}.
\end{align}
\end{corollary}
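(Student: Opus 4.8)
The plan is to express the change of the empirical NTK along the training trajectory as a time integral, $\hat K_{\Theta_t}(x,x')-\hat K_{\Theta_0}(x,x')=\int_0^t\frac{d}{ds}\hat K_{\Theta_s}(x,x')\,ds$, bound the integrand uniformly in $x,x'$ and $s$ by purely combinatorial (light-cone) estimates exactly as in the proof of Theorem \ref{gradfl}, and then close the argument by feeding in the exponential decay $\|F(s)-Y\|_2\le R_0\sqrt{n\log(2n)}\,e^{-\eta_0\lambda_{\min}^K s/3}$ supplied by Theorem \ref{gradfl} on the same probability-$(1-\delta)$ event.

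First I would differentiate $\hat K_{\Theta_s}(x,x')=\frac{1}{N_K(m)}\nabla_\Theta f(\Theta_s,x)^T\nabla_\Theta f(\Theta_s,x')$ along the gradient flow, using $\dot\Theta_s=-\frac{\eta}{n}\nabla_\Theta f(\Theta_s,X^T)(F(s)-Y)$ with $\eta=\frac{n}{N_K(m)}\eta_0$ (Assumption \ref{assETA}):
\[
\frac{d}{ds}\hat K_{\Theta_s}(x,x')=\frac{1}{N_K(m)}\Big[\big(\mathrm{Hess}_\Theta f(\Theta_s,x)\,\dot\Theta_s\big)^T\nabla_\Theta f(\Theta_s,x')+\nabla_\Theta f(\Theta_s,x)^T\,\mathrm{Hess}_\Theta f(\Theta_s,x')\,\dot\Theta_s\Big].
\]
Then the key is to control each factor by the light-cone cardinalities, uniformly in $\Theta$ and $x$. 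Since $\partial_i f=\frac{1}{N(m)}\sum_{k\in\mathcal{M}_i}\partial_i f_k$ with $|\partial_i f_k|\le 2$ (Assumption \ref{domain} forces generators with spectrum $\{-1,+1\}$ and $\|\mathcal{O}_k\|\le1$), one gets $|\partial_i f(\Theta,x)|\le 2|\mathcal{M}_i|/N(m)$; likewise $\partial_i\partial_j f=\frac{1}{N(m)}\sum_{k\in\mathcal{M}_i\cap\mathcal{M}_j}\partial_i\partial_j f_k$, which vanishes unless $\mathcal{M}_i\cap\mathcal{M}_j\neq\emptyset$ and is bounded by $4|\mathcal{M}|/N(m)$. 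From $\|\nabla_\Theta f(\Theta_s,X^T)\|_2\le\sqrt n\cdot 2|\mathcal{M}_j|/N(m)$ and Cauchy--Schwarz one deduces $|\dot\Theta_{s,j}|\le \frac{2\eta_0\sqrt n}{N_K(m)}\frac{|\mathcal{M}_j|}{N(m)}\|F(s)-Y\|_2$.

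Next I would bound $\partial_s\partial_i f(\Theta_s,x)=\sum_j\partial_i\partial_j f(\Theta_s,x)\,\dot\Theta_{s,j}$ by restricting the $j$-sum to the indices with $\mathcal{M}_i\cap\mathcal{M}_j\neq\emptyset$. Every such $j$ lies in $\bigcup_{k\in\mathcal{M}_i}\mathcal{N}_k$, hence there are at most $|\mathcal{M}||\mathcal{N}|$ of them, and each contributes $O\big(\frac{|\mathcal{M}|}{N(m)}\cdot\frac{\eta_0\sqrt n|\mathcal{M}|}{N_K(m)N(m)}\|F(s)-Y\|_2\big)$, so $|\partial_s\partial_i f(\Theta_s,x)|=O\big(\frac{\eta_0\sqrt n\,|\mathcal{M}|^3|\mathcal{N}|}{N_K(m)N^2(m)}\|F(s)-Y\|_2\big)$ uniformly in $x$. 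Substituting this and $|\partial_i f|\le 2|\mathcal{M}_i|/N(m)$ into the formula for $\frac{d}{ds}\hat K_{\Theta_s}$ and summing over $i$ — which produces the factor $\sum_i|\mathcal{M}_i|=\Sigma_1$ — gives
\[
\Big|\tfrac{d}{ds}\hat K_{\Theta_s}(x,x')\Big|=O\!\left(\frac{\eta_0\sqrt n\,\Sigma_1|\mathcal{M}|^3|\mathcal{N}|}{N_K^2(m)N^3(m)}\,\|F(s)-Y\|_2\right),
\]
uniformly in $x,x'$.

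Finally I would integrate in $s$ over $[0,t]$ and invoke Theorem \ref{gradfl}: on the probability-$(1-\delta)$ event and for $m\ge\bar m$ one has $\int_0^t\|F(s)-Y\|_2\,ds\le \frac{3R_0}{\eta_0\lambda_{\min}^K}\sqrt{n\log(2n)}$, the factors of $\eta_0$ cancel, the supremum over $t$ is automatic since the integral converges, and the supremum over $x,x'$ is automatic since every estimate above was uniform in the inputs; the claimed inequality follows with $R_2=O(R_0)$. The main obstacle is the light-cone bookkeeping — obtaining exactly the combination $\Sigma_1|\mathcal{M}|^3|\mathcal{N}|$ rather than a looser one — which requires care in counting the pairs $(i,j)$ of parameters with overlapping future light cones and in distributing the $|\mathcal{M}_i|$ factors across the first- and second-derivative terms; the probabilistic content, by contrast, is inherited wholesale from Theorem \ref{gradfl}.
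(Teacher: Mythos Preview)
Your argument is correct and reaches the right bound with the right constants, but it takes a somewhat different route from the paper. The paper does not differentiate the NTK in time; instead it first proves a deterministic Lipschitz estimate for the empirical NTK in parameter space (Lemma~\ref{kernel}),
\[
|\hat K_{\Theta_0}(x,x')-\hat K_{\Theta}(x,x')|\le 16\,\frac{\Sigma_1|\mathcal{M}|^2|\mathcal{N}|}{N_K(m)N^2(m)}\,\|\Theta_0-\Theta\|_\infty,
\]
and then simply plugs in the lazy-training displacement bound \eqref{grad2} for $\|\Theta_t-\Theta_0\|_\infty$ from Theorem~\ref{gradfl}. Your proof, by contrast, integrates the time derivative of $\hat K_{\Theta_s}$ and feeds in the exponential decay \eqref{grad1} of $\|F(s)-Y\|_2$. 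The two are essentially equivalent: the Lipschitz constant in Lemma~\ref{kernel} is built from exactly the same second-derivative and light-cone counts (Lemmas~\ref{unifbounds} and~\ref{maxMM}) that you invoke, and the bound \eqref{grad2} itself is obtained in the paper by integrating \eqref{grad1}. What the paper's organisation buys is modularity—the Lipschitz lemma is reused elsewhere—while your direct computation makes the cancellation of $\eta_0$ and the origin of each light-cone factor more transparent in a single pass.
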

\begin{proof}
See \autoref{prooffreezntk}.
\end{proof}

As we will prove in \autoref{lemmalin} (see, in particular, Theorem \ref{powerful}), the lazy behaviour of the \modifica{training} ensures that the second order corrections becomes negligible in the limit of infinitely many qubits provided that the parameter at which the model is evaluated are the same:
\[|f(\Theta,x)-f^{\mathrm{lin}}(\Theta,x)|\leq \frac{Lm |\mathcal{M}|^2|\mathcal{N}|}{N(m)}\|\Theta-\Theta_0\|^2_\infty.\]
This upper bound, combined with the lazy training estimate (\ref{grad2}), will lay the foundations for the bound which takes into account the difference in the trajectories $\Theta_t$ and $\Theta_t^{\mathrm{lin}}$ (see Lemma \ref{secondorder} and Theorem \ref{cfevolution}):
\[ |f(\Theta_t,x)-f^{\mathrm{lin}}(\Theta_t^{\mathrm{lin}},x)|\leq \left(\frac{C}{(\lambda_{\min}^K)^3}+C'\right) n^3\log(2n) \frac{L^2m^2|\mathcal{M}|^6|\mathcal{N}|^2}{N^5(m)}\log N(m),\]
for some $C$ and $C'$, with high probability.

Let us see more in detail the statements that lead to the convergence of the evolution of the original model to the evolution of the linearized model, which is a Gaussian process.  
\begin{theorem}[Uniform convergence to the linearized model]
\label{gronwall}Let us assume that the hypotheses of Theorem \ref{init} and of Theorem \ref{ntkconv} are satisfied and that Assumption \ref{assNTK} holds.
For any $\delta>0$, there exist a constant $R_1>0$ and $\bar m\in\mathbb{N}$ such that the following inequality holds for all $m\geq \bar m$ with probability at least $1-\delta$ over random initialization:
\begin{align}\label{fflin}
\sup_t\sup_{x\in\mathcal{X}}|f(\Theta_t,x)-f^{\mathrm{lin}}(\Theta_t,x)|\leq \left(\frac{R_1}{\lambda^K_{\min}}\right)^2n^2\log(2n)\frac{Lm|\mathcal{M}|^4|\mathcal{N}|}{N_K^2(m)N^3(m)}.
\end{align}
\end{theorem}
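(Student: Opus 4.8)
The plan is to obtain Theorem~\ref{gronwall} as a short combination of two facts, one \emph{deterministic} and one \emph{probabilistic}: the second-order Taylor remainder estimate (Theorem~\ref{powerful}, anticipated in the discussion preceding the statement) and the lazy-training displacement bound \eqref{grad2} of Theorem~\ref{gradfl}. First I would record that the remainder estimate
\[
|f(\Theta,x)-f^{\mathrm{lin}}(\Theta,x)|\le \frac{Lm\,|\mathcal{M}|^2|\mathcal{N}|}{N(m)}\,\|\Theta-\Theta_0\|_\infty^2
\]
holds for \emph{every} $\Theta\in\mathscr{P}$ and $x\in\mathcal{X}$, with no randomness involved; in particular it holds with $\Theta=\Theta_t$ for all $t\ge 0$, so that already $\sup_t\sup_{x\in\mathcal{X}}|f(\Theta_t,x)-f^{\mathrm{lin}}(\Theta_t,x)|\le \tfrac{Lm|\mathcal{M}|^2|\mathcal{N}|}{N(m)}\,\sup_t\|\Theta_t-\Theta_0\|_\infty^2$. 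Note that this step is exactly what reduces the theorem to an estimate on the \emph{parameter displacement}.

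Next I would invoke Theorem~\ref{gradfl}. Under the hypotheses of Theorems~\ref{init} and \ref{ntkconv} and Assumption~\ref{assNTK}, for any $\delta>0$ there are $\bar m$ and a constant (the $R_1$ of Theorem~\ref{gradfl}) such that, for each $m\ge\bar m$, on an event of probability at least $1-\delta$ over the initialization, the bound \eqref{grad2} holds simultaneously for all $t\ge 0$. Dropping the factor $1-e^{-\eta_0\lambda^K_{\min}t/3}\le 1$ and taking the supremum over $t$ gives, on this event,
\[
\sup_{t\ge 0}\|\Theta_t-\Theta_0\|_\infty \le \frac{R_1}{\lambda^K_{\min}}\,n\sqrt{\log(2n)}\,\frac{|\mathcal{M}|}{N_K(m)\,N(m)}.
\]
Squaring this and substituting into the inequality from the first step yields
\[
\sup_t\sup_{x\in\mathcal{X}}|f(\Theta_t,x)-f^{\mathrm{lin}}(\Theta_t,x)| \le \Big(\frac{R_1}{\lambda^K_{\min}}\Big)^2 n^2\log(2n)\,\frac{Lm\,|\mathcal{M}|^4|\mathcal{N}|}{N_K^2(m)\,N^3(m)},
\]
which is precisely \eqref{fflin} once the absolute constant coming from the Taylor remainder is absorbed together with the constant of Theorem~\ref{gradfl} into a single constant, again denoted $R_1$. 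The whole inequality holds on the same event of probability at least $1-\delta$, so the probability statement is simply inherited from Theorem~\ref{gradfl}. The only care needed here is bookkeeping: both input estimates are uniform in $t$ (the Taylor bound because it is pointwise in $\Theta$; \eqref{grad2} because it is proved with a $t$-independent right-hand side), so the suprema commute with the inequalities, and Assumption~\ref{assNTK} guarantees $\lambda^K_{\min}>0$ so that the right-hand side is finite.

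The substantive work therefore does not lie in Theorem~\ref{gronwall} itself, and the main obstacle is Theorem~\ref{powerful}, whose proof must control the Hessian of the model through the light-cone structure. The idea is: by the pruning lemma (Lemma~\ref{reduction}), the mixed derivative $\partial_i\partial_j f_k$ vanishes unless $i,j\in\mathcal{N}_k$, equivalently $k\in\mathcal{M}_i\cap\mathcal{M}_j$; hence for fixed $i$ only the parameters $j$ lying in $\bigcup_{k\in\mathcal{M}_i}\mathcal{N}_k$ — at most $|\mathcal{M}|\,|\mathcal{N}|$ of them — give a nonzero entry of $\partial_i\partial_j f$, and each surviving entry of $\partial_i\partial_j f=\tfrac1{N(m)}\sum_k\partial_i\partial_j f_k$ is $O(|\mathcal{M}|/N(m))$ because at most $|\mathcal{M}_i\cap\mathcal{M}_j|\le|\mathcal{M}|$ values of $k$ are active and each $\partial_i\partial_j f_k$ is $O(1)$ by Assumption~\ref{domain}. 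Summing over the $Lm$ parameters $i$ produces the prefactor $Lm\,|\mathcal{M}|^2|\mathcal{N}|/N(m)$, and Lagrange's form of the Taylor remainder along the segment from $\Theta_0$ to $\Theta$ (which stays in the convex set $\mathscr{P}$) converts this into the displayed bound. Once Theorem~\ref{powerful} and the quantum lazy-training estimate \eqref{grad2} are in hand, Theorem~\ref{gronwall} follows with no further difficulty.
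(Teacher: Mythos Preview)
Your proposal is correct and follows exactly the paper's own proof: the paper simply combines Theorem~\ref{powerful} with the lazy-training bound~\eqref{grad2} of Theorem~\ref{gradfl}, substituting the squared displacement into the second-order remainder to obtain~\eqref{fflin}. Your additional sketch of why Theorem~\ref{powerful} holds (via the light-cone sparsity of the Hessian and Lemma~\ref{maxMM}) also matches the paper's argument in \autoref{lemmalin}.
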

\begin{proof}
See \autoref{prooffreezntk}.
\end{proof}

\begin{remark}
Since the LHS of (\ref{fflin}) contains a supremum over $x\in\mathcal{X}$, when
\[\lim_{m\to\infty}\frac{Lm|\mathcal{M}|^4|\mathcal{N}|}{N_K^2(m)N^3(m)}=0,\]
the model converges to its linear approximation \textit{uniformly} in $x$.
\end{remark}

Now we have to take into account the fact that the equation of evolution for the parameters $\Theta_t$ and $\Theta_t^{\mathrm{lin}}$ involve different cost functions -- the original and the linearized ones -- so they are in general different after $t=0$. We therefore need a control on their discrepancy.

\begin{lemma}[The parameters differ at the second order]\label{secondorder} Let us assume that the hypotheses of Theorem \ref{init} and of Theorem \ref{ntkconv} are satisfied and that Assumption \ref{assNTK} holds. Then, for any $\delta>0$ there exists $C_1, C_2$ and an integer $m_0\in \mathbb{N}$ such that, for any $m\geq m_0$,
\[\big\|\Theta_t-\Theta_t^{\mathrm{lin}}\big\|_\infty\leq \left(\frac{C_1}{(\lambda_{\min}^K)^3}+C_2\right) n^3\log(2n) \frac{Lm|\mathcal{M}|^5|\mathcal{N}|^2}{N^4(m)}\log N(m)\]
with probability at least $1-\delta$.
\end{lemma}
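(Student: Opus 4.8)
The plan is to compare the two gradient flows directly, routing the comparison through the function difference on the dataset rather than through $\Theta_t-\Theta^{\mathrm{lin}}_t$ itself, so that no circular Grönwall argument appears. I would work on the event $\mathcal{E}$ on which the conclusions of Theorems \ref{init}, \ref{ntkconv}, \ref{gradfl}, \ref{gronwall} and of Corollary \ref{freezntk} hold simultaneously; a union bound (replacing $\delta$ by $\delta/5$ in each) gives $\mathbb{P}(\mathcal{E})\geq 1-\delta$. On $\mathcal{E}$, NTK concentration (Theorem \ref{ntkconv}) together with Assumption \ref{assNTK} forces the smallest eigenvalue of $\hat K_{\Theta_0}$ to be at least $\lambda^K_{\min}/2$ for $m$ large, so $\bigl\|e^{-\eta_0\hat K_{\Theta_0}\tau}\bigr\|\leq e^{-\eta_0\lambda^K_{\min}\tau/2}$ for $\tau\geq0$.

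Using $\eta=\frac{n}{N_K(m)}\eta_0$ and subtracting the parameter equations in \eqref{eqlin},
\[
\dot\Theta_t-\dot\Theta^{\mathrm{lin}}_t=-\tfrac{\eta_0}{N_K(m)}\Bigl[\bigl(\nabla_\Theta f(\Theta_t,X^T)-\nabla_\Theta f(\Theta_0,X^T)\bigr)\bigl(F(t)-Y\bigr)+\nabla_\Theta f(\Theta_0,X^T)\,G(t)\Bigr],
\]
where $G(t):=F(t)-F^{\mathrm{lin}}(t)$ depends only on the function values on the dataset, not on $\Theta_t-\Theta^{\mathrm{lin}}_t$. From \eqref{evolF}, $G$ solves $\dot G=-\eta_0\hat K_{\Theta_0}G+\eta_0(\hat K_{\Theta_0}-\hat K_{\Theta_t})(F(t)-Y)$ with $G(0)=0$, so by Duhamel $G(t)=\eta_0\int_0^t e^{-\eta_0\hat K_{\Theta_0}(t-s)}(\hat K_{\Theta_0}-\hat K_{\Theta_s})(F(s)-Y)\,ds$. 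I would bound $\|\hat K_{\Theta_0}-\hat K_{\Theta_s}\|$ by $n$ times the entrywise bound of Corollary \ref{freezntk}, insert the exponential decay $\|F(s)-Y\|_2\leq R_0\sqrt{n\log(2n)}\,e^{-\eta_0\lambda^K_{\min}s/3}$ from Theorem \ref{gradfl}, and use $\int_0^t e^{-\eta_0\lambda^K_{\min}(t-s)/2}e^{-\eta_0\lambda^K_{\min}s/3}\,ds\leq \tfrac{6}{\eta_0\lambda^K_{\min}}e^{-\eta_0\lambda^K_{\min}t/3}$; the $\eta_0$'s cancel and $\|G(t)\|_2\leq (\text{const})\,e^{-\eta_0\lambda^K_{\min}t/3}$ with an explicit prefactor polynomial in $n$, $\Sigma_1$, $|\mathcal{M}|$, $|\mathcal{N}|$, $N(m)$, $N_K(m)$ and carrying two powers of $1/\lambda^K_{\min}$.

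For the first bracket I would use the light-cone estimates underlying Theorem \ref{powerful}: $|\partial_i\partial_j f(\Theta,x)|\leq C|\mathcal{M}|/N(m)$ (since $\partial_i\partial_j f_k\neq0$ forces $k\in\mathcal{M}_i\cap\mathcal{M}_j$ and each $|\partial_i\partial_j f_k|=O(1)$ for unitaries with eigenvalue-$\pm1$ generators), and for fixed $i$ at most $|\mathcal{M}||\mathcal{N}|$ indices $j$ are nontrivial (they lie in $\bigcup_{k\in\mathcal{M}_i}\mathcal{N}_k$); the mean value theorem on the segment $[\Theta_0,\Theta_t]$ then gives, componentwise, $\|\nabla_\Theta f(\Theta_t,X^T)-\nabla_\Theta f(\Theta_0,X^T)\|\leq \sqrt n\,\tfrac{C|\mathcal{M}|^2|\mathcal{N}|}{N(m)}\|\Theta_t-\Theta_0\|_\infty$, and combined with \eqref{grad1} and \eqref{grad2} the first bracket is again $O\bigl(e^{-\eta_0\lambda^K_{\min}t/3}\bigr)$. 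Likewise $\|\nabla_\Theta f(\Theta_0,X^T)G(t)\|_\infty\leq\sqrt n\,\tfrac{2|\mathcal{M}|}{N(m)}\|G(t)\|_2$ using $|\partial_i f|\leq 2|\mathcal{M}|/N(m)$.

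Finally $\|\Theta_t-\Theta^{\mathrm{lin}}_t\|_\infty\leq\frac{\eta_0}{N_K(m)}\int_0^t(\text{bracket contributions})\,ds$; both integrands decay exponentially in $s$, so the integral is bounded uniformly in $t$ and contributes one more $O(1/(\eta_0\lambda^K_{\min}))$, cancelling the remaining $\eta_0$ and producing the third power of $1/\lambda^K_{\min}$ on the $G$-driven term. Collecting powers, using $\Sigma_1\leq Lm|\mathcal{M}|$ and $1/N_K(m)\leq1$ (Remark \ref{semplif}), and absorbing the initialization-concentration and union-bound logarithmic factors into $\log N(m)$, yields the claimed bound, with $C_1$ the constant of the $(\lambda^K_{\min})^{-3}$ term and $C_2$ that of the subdominant remainder. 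I expect the main obstacle to be precisely this bookkeeping that avoids circularity: one must work with $G(t)$ and its self-contained linear equation instead of reintroducing $\|\Theta_t-\Theta^{\mathrm{lin}}_t\|_\infty$, which would force a Grönwall inequality with a non-decaying source (via $f^{\mathrm{lin}}(\Theta_t,x)-f^{\mathrm{lin}}(\Theta^{\mathrm{lin}}_t,x)=\nabla_\Theta f(\Theta_0,x)^T(\Theta_t-\Theta^{\mathrm{lin}}_t)$) and the spurious $t\to\infty$ divergence of \cite{QLazy}; keeping every contribution exponentially decaying in $t$ is what makes the bound valid for all $t$ at finite $m$.
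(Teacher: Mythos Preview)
Your proposal is correct, and the overall structure — decompose $\dot\Theta_t-\dot\Theta_t^{\mathrm{lin}}$ into a gradient-Lipschitz term and a $G(t)=F(t)-F^{\mathrm{lin}}(t)$ term, bound each via the light-cone estimates and the lazy-training inequalities, and integrate in $t$ — is exactly the paper's route. The one genuine difference is in how $G(t)$ is controlled. You write the linear ODE for $G$ and apply Duhamel's formula with the semigroup bound $\|e^{-\eta_0\hat K_{\Theta_0}\tau}\|\le e^{-\eta_0\lambda^K_{\min}\tau/2}$; the convolution of the two exponentials then gives $\|G(t)\|_2\lesssim e^{-\eta_0\lambda^K_{\min}t/3}$, so the $t$-integral of the second bracket converges outright. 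The paper instead (Lemma \ref{improvedqlt}) differentiates $\|G\|_2^2$, drops the negative-definite term $-\eta_0\,G^T\hat K_{\Theta_0}G$, and obtains only $\|G(t)\|_2\lesssim 1-e^{-\eta_0\lambda^K_{\min}t/3}$, i.e.\ a bounded but \emph{non-decaying} quantity. To integrate the resulting non-integrable contribution, the paper then splits at $t^\ast=\tfrac{3}{\eta_0\lambda^K_{\min}}\log N(m)$ and uses a second, cruder bound ($\|G(t)\|_2\le\|F(t)-Y\|_2+\|F^{\mathrm{lin}}(t)-Y\|_2$) for $t>t^\ast$; this split is precisely where the $\log N(m)$ and the extra power of $|\mathcal{N}|$ in the stated bound originate. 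Your Duhamel argument therefore yields a strictly sharper estimate (no $\log N(m)$, one fewer $|\mathcal{N}|$), which of course implies the lemma as stated; your closing remark about ``absorbing logarithmic factors into $\log N(m)$'' is unnecessary, since none arise in your argument.
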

\begin{proof}
    See \autoref{proofsecondorder}.
\end{proof}
Thanks to the previous lemma, we can now control the discrepancy between the two models taking into account the different trajectories in the parameter space.
\begin{theorem}[The original evolution is close to the linear evolution]\label{cfevolution} Let us assume that the hypotheses of Theorem \ref{init} and of Theorem \ref{ntkconv} are satisfied and that Assumption \ref{assNTK} holds. Then, for any $\delta>0$ there exists $C_1, C_2$ and an integer $m_0\in \mathbb{N}$ such that, for any $m\geq m_0$
\[ |f(\Theta_t,x)-f^{\mathrm{lin}}(\Theta_t^{\mathrm{lin}},x)|\leq \left(\frac{C_3}{(\lambda_{\min}^K)^3}+C_4\right) n^3\log(2n) \frac{L^2m^2|\mathcal{M}|^6|\mathcal{N}|^2}{N^5(m)}\log N(m)\]
with probability at least $1-\delta$.
\end{theorem}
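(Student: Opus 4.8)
The plan is to bound $|f(\Theta_t,x)-f^{\mathrm{lin}}(\Theta_t^{\mathrm{lin}},x)|$ by a triangle inequality that separates the \emph{curvature} error (replacing $f$ by its linearization along the \emph{same} trajectory $\Theta_t$) from the \emph{trajectory} error (the two affine models, both evaluated through $\nabla_\Theta f(\Theta_0,\cdot)$, at $\Theta_t$ versus $\Theta_t^{\mathrm{lin}}$):
\[
|f(\Theta_t,x)-f^{\mathrm{lin}}(\Theta_t^{\mathrm{lin}},x)|\le |f(\Theta_t,x)-f^{\mathrm{lin}}(\Theta_t,x)|+|f^{\mathrm{lin}}(\Theta_t,x)-f^{\mathrm{lin}}(\Theta_t^{\mathrm{lin}},x)|.
\]
Term (I), namely $\sup_t\sup_x|f(\Theta_t,x)-f^{\mathrm{lin}}(\Theta_t,x)|$, is exactly what Theorem \ref{gronwall} controls. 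For term (II), since $f^{\mathrm{lin}}(\cdot,x)$ is affine in the parameters, $f^{\mathrm{lin}}(\Theta_t,x)-f^{\mathrm{lin}}(\Theta_t^{\mathrm{lin}},x)=\nabla_\Theta f(\Theta_0,x)^T(\Theta_t-\Theta_t^{\mathrm{lin}})$, so it is at most $\|\nabla_\Theta f(\Theta_0,x)\|_1\,\|\Theta_t-\Theta_t^{\mathrm{lin}}\|_\infty$, and the second factor is bounded by Lemma \ref{secondorder}.

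The remaining input is an $\ell_1$ bound on the gradient of the model, which I would get from the light-cone structure together with Assumption \ref{domain}: $\partial_{\theta_i}f_k\equiv 0$ unless $k\in\mathcal{M}_i$, and $|\partial_{\theta_i}f_k|\le 2$ because $W_i=e^{-i\mathcal{G}_i\theta_i}$ with $\|\mathcal{G}_i\|=1$ and $\|\mathcal{O}_k\|\le 1$ (so $\|\partial_{\theta_i}U\|\le 1$). Hence, uniformly in $x$,
\[
\|\nabla_\Theta f(\Theta_0,x)\|_1\le \frac{1}{N(m)}\sum_{i=1}^{Lm}\sum_{k\in\mathcal{M}_i}2=\frac{2\Sigma_1}{N(m)}\le\frac{2Lm|\mathcal{M}|}{N(m)}.
\]
Multiplying this by the bound of Lemma \ref{secondorder} gives $\text{(II)}\le\bigl(\tfrac{C_1'}{(\lambda^K_{\min})^3}+C_2'\bigr)\,n^3\log(2n)\,\dfrac{L^2m^2|\mathcal{M}|^6|\mathcal{N}|^2}{N^5(m)}\log N(m)$, which is already exactly the claimed order.

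It then remains to check that (I) is of the same or lower order than the target, so that it can be absorbed into $C_3,C_4$. By Lemma \ref{boundnorm} one has $N_K(m)\ge c_1>0$, so Theorem \ref{gronwall} yields $\text{(I)}\le (\text{const})\,n^2\log(2n)\,\dfrac{Lm|\mathcal{M}|^4|\mathcal{N}|}{N^3(m)}$; and by Lemma \ref{Nmax}, $N^2(m)\le C^2 m|\mathcal{M}||\mathcal{N}|$, whence $\dfrac{Lm|\mathcal{M}|^2|\mathcal{N}|}{N^2(m)}\ge\dfrac{L|\mathcal{M}|}{C^2}\ge\dfrac{1}{C^2}$, so (I) is dominated by $(\text{const})\,\dfrac{L^2m^2|\mathcal{M}|^6|\mathcal{N}|^2}{N^5(m)}$. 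Using $n^2\le n^3$ and the fact that the hypotheses of Theorem \ref{ntkconv} force $N(m)\to\infty$ (hence $\log N(m)\ge 1$ for $m$ large) finishes the comparison. All events are combined by applying Theorem \ref{gronwall} and Lemma \ref{secondorder} each with confidence $1-\delta/2$ and a union bound, and the suprema over $t$ and $x$ are inherited because the gradient-norm estimate above is uniform in $x$ and the two ingredients are stated uniformly in $t$.

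I do not expect a deep obstacle here: the proof is essentially a triangle inequality plus the ingredients already proved. The only mildly delicate points are (a) obtaining the $\ell_1$-gradient bound with the correct light-cone cardinality (getting $\Sigma_1\le Lm|\mathcal{M}|$, rather than a cruder factor), and (b) carefully verifying that term (I) is genuinely subsumed by the target order for all large $m$, which is where Lemma \ref{Nmax} and $N(m)\to\infty$ are needed; everything else is bookkeeping of the constants $C_3,C_4$ and the threshold $m_0$.
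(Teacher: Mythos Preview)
Your proposal is correct and follows essentially the same approach as the paper: the paper also splits via $|f(\Theta_t,x)-f^{\mathrm{lin}}(\Theta_t,x)|+|\nabla_\Theta f(\Theta_0,x)^T(\Theta_t-\Theta_t^{\mathrm{lin}})|$, bounds the first term using Theorem \ref{powerful} combined with the lazy-training estimate (which is precisely Theorem \ref{gronwall}), bounds the second with the $\ell_1$ gradient estimate $\|\nabla_\Theta f(\Theta_0,x)\|_1\le 2Lm|\mathcal{M}|/N(m)$ and Lemma \ref{secondorder}, and then absorbs the first term into the final bound via Lemma \ref{Nmax}. Your handling of the absorption step (via $N^2(m)\le C^2 m|\mathcal{M}||\mathcal{N}|$ and $N(m)\to\infty$) and the union bound is slightly more explicit than the paper's, but the argument is the same.
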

\begin{proof}
    See \autoref{proofcfevolution}.
\end{proof}
The following theorem is the main result of this section.
\begin{mdframed}
\begin{theorem}[Quantum neural networks as Gaussian processes] \label{qnngp}
Let us assume that a circuit satisfying Assumption \ref{assNTK} and such that
\[\lim_{m\to\infty}\frac{L^2m^2|\mathcal{M}|^6|\mathcal{N}|^3}{N^5(m)}\log N(m)=0\]
is randomly initialized according to Assumption \ref{zeromean} and is trained using the gradient flow evolution (\ref{defgrfl}). Then, for any $t\ge0$, in the limit of infinitely many qubits $m\to\infty$, $\{f(\Theta_t,x)\}_{x\in\mathcal{X}}$ converges in distribution to a Gaussian process $\{f^{(\infty)}_t(x)\}_{x\in\mathcal{X}}$ with mean and covariance
\begin{align}
\mu_t(x)&=\bar K(x,X^T)\bar K^{-1}\left(\id-e^{-\eta_0\bar K t}\right)Y,\\
\nonumber\mathcal{K}_t(x,x')&=\mathcal{K}_0(x,x')\\
\nonumber&\phantom{=}- \bar K(x,X^T)\bar K^{-1}\left(\id-e^{-\eta_0\bar K t}\right) \mathcal{K}_0(X,x')\\
\nonumber&\phantom{=}-\bar K(x',X^T)\bar K^{-1}\left(\id-e^{-\eta_0\bar K t}\right) \mathcal{K}_0(X,x) \\
&\phantom{=}+\bar K(x,X^T)\bar K^{-1}\left(\id-e^{-\eta_0\bar K t}\right)\mathcal{K}_0(X,X^T)\left(\id-e^{-\eta_0\bar K t}\right)\bar K^{-1} \bar K(X,x').
\end{align}
\end{theorem}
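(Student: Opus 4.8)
The plan is to deduce the statement from three ingredients already established: the Gaussian-process limit of the linearized model (Corollary \ref{corgp}), the closeness of the trained model to the linearized one (Theorem \ref{cfevolution}), and a Slutsky-type argument transferring convergence in distribution across an asymptotically vanishing perturbation. Since $\mathcal{X}$ is finite (Assumption \ref{convex}), convergence in distribution of the process $\{f(\Theta_t,x)\}_{x\in\mathcal{X}}$ to $\{f_t^{(\infty)}(x)\}_{x\in\mathcal{X}}$ is just convergence in distribution of the single $|\mathcal{X}|$-dimensional random vector, so no care about finite-dimensional marginals is needed.

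First I would check that the single growth hypothesis $\lim_{m\to\infty}L^2m^2|\mathcal{M}|^6|\mathcal{N}|^3N^{-5}(m)\log N(m)=0$ implies every hypothesis used downstream: condition (\ref{hpgrowth}) of Theorem \ref{init}, condition (\ref{weakassumption}) of Theorem \ref{ntkconv}, and the vanishing of the right-hand side of the bound in Theorem \ref{cfevolution}. The last point is immediate because $|\mathcal{N}|\geq 1$ gives $L^2m^2|\mathcal{M}|^6|\mathcal{N}|^2N^{-5}(m)\log N(m)\leq L^2m^2|\mathcal{M}|^6|\mathcal{N}|^3N^{-5}(m)\log N(m)\to0$. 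For the other two I would use the upper bound $N(m)\leq C\sqrt{m|\mathcal{M}||\mathcal{N}|}$ of Lemma \ref{Nmax} to trade powers of $N(m)$ for powers of $m,|\mathcal{M}|,|\mathcal{N}|$, for instance $\frac{m|\mathcal{M}|^2|\mathcal{N}|^2}{N^3(m)}=\frac{m|\mathcal{M}|^2|\mathcal{N}|^2}{N^5(m)}N^2(m)\leq C^2\frac{m^2|\mathcal{M}|^3|\mathcal{N}|^3}{N^5(m)}\leq C^2\frac{L^2m^2|\mathcal{M}|^6|\mathcal{N}|^3}{N^5(m)}$, together with the lower bound $N_K(m)=\Omega(1)$ of Lemma \ref{boundnorm} and the simplification $\Sigma_2\leq Lm|\mathcal{M}|^2$ for (\ref{weakassumption}). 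Hence Theorems \ref{init}, \ref{ntkconv} and \ref{cfevolution}, and Assumption \ref{assNTK}, all apply.

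Then I would invoke Corollary \ref{corgp}: under these hypotheses $\{f^{\mathrm{lin}}(\Theta_t^{\mathrm{lin}},x)\}_{x\in\mathcal{X}}$ converges in distribution, as $m\to\infty$, to the Gaussian process with exactly the mean $\mu_t$ and covariance $\mathcal{K}_t$ displayed in the statement. Next, Theorem \ref{cfevolution} gives, for every $\delta>0$, an $m_0$ beyond which $\sup_{x\in\mathcal{X}}|f(\Theta_t,x)-f^{\mathrm{lin}}(\Theta_t^{\mathrm{lin}},x)|$ is bounded by a deterministic quantity tending to $0$ with probability at least $1-\delta$; since $\delta$ is arbitrary this yields $\sup_{x\in\mathcal{X}}|f(\Theta_t,x)-f^{\mathrm{lin}}(\Theta_t^{\mathrm{lin}},x)|\xrightarrow{p}0$, i.e. the vector $(f(\Theta_t,x))_{x\in\mathcal{X}}$ differs from $(f^{\mathrm{lin}}(\Theta_t^{\mathrm{lin}},x))_{x\in\mathcal{X}}$ by a vector whose norm converges to $0$ in probability. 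Finally I would combine these two facts with the standard lemma that $Z_m\xrightarrow{d}Z$ and $W_m-Z_m\xrightarrow{p}0$ imply $W_m\xrightarrow{d}Z$ — a consequence of the portmanteau theorem, or of Lévy's continuity theorem \ref{levy} applied to characteristic functions via $|\mathbb{E}e^{i\xi\cdot W_m}-\mathbb{E}e^{i\xi\cdot Z_m}|\leq\|\xi\|_1\,\mathbb{E}\min\{|W_m-Z_m|,2/\|\xi\|_1\}\to0$ — to conclude $\{f(\Theta_t,x)\}_{x\in\mathcal{X}}\xrightarrow{d}\{f_t^{(\infty)}(x)\}_{x\in\mathcal{X}}$, which has the asserted mean and covariance.

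The main obstacle is not this assembly, which is routine once Corollary \ref{corgp} and Theorem \ref{cfevolution} are in hand, but the bookkeeping in the first step: one must verify that the single clean hypothesis on $L,m,|\mathcal{M}|,|\mathcal{N}|,N(m)$ genuinely dominates the several distinct and individually messier growth conditions hidden in the prerequisite results, which is precisely where Lemma \ref{Nmax} and the $\Omega(1)$ lower bound on $N_K(m)$ are needed. A secondary point of care is the $\log N(m)$ factor, which is harmless whenever $N(m)$ is eventually bounded away from $1$ — the case in all regimes of interest, such as $N(m)=\Theta(\sqrt m)$ or $N(m)\geq\sqrt m\,2^{-CL}$ with $L=o(\log m)$.
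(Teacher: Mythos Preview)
Your proposal is correct and follows essentially the same route as the paper: the paper isolates the hypothesis-checking step as a separate Lemma \ref{checkhp} (using Lemma \ref{Nmax} exactly as you do), then proves $f(\Theta_t,\bar X)-f^{\mathrm{lin}}(\Theta_t^{\mathrm{lin}},\bar X)\xrightarrow{p}0$ via Theorem \ref{cfevolution}, invokes Corollary \ref{corgp} for the linearized limit, and concludes with Slutsky's theorem \ref{sl}, choosing $\bar X=\mathrm{vec}(\mathcal{X})$ since $\mathcal{X}$ is finite. Your remark on the $\log N(m)$ factor is also apt; the paper handles it by noting that the hypothesis forces $N(m)\to\infty$, so $\log N(m)\geq 1$ eventually.
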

\end{mdframed}
\begin{proof}
See \autoref{proofqnngp}.
\end{proof}

\subsection{Some useful lemmas}
In this subsection we motivate and prove some lemmas which will be frequently used in many of the proof concerning the convergence of the training.

\subsubsection{Loss function at initialization}

The first statement is a corollary of Theorem \ref{init} and consists in a bound on the mean squared loss at initialization. The role of this theorem is to fix -- with high probability -- the highest value of the mean squared error that the randomly initialized model could produce in the limit of infinitely many qubits. This result is not immediate, since, even though the \textit{local} observables are bounded and, if $m$ is fixed, also the model is bounded, the model function could in principle diverge for some inputs in the limit of infinitely many qubits. Therefore, the loss at initialization could be arbitrarily large. Furthermore, this result lays the groundwork for a proof of an exponential convergence of the form
\[\mathcal{L}(\Theta_t)\leq Ce^{-t/\tau}\qquad \forall\, m\geq m_0 \qquad \text{for some}\qquad C,\tau,m_0>0. \label{expconv}\]
Indeed, (\ref{expconv}) implies an upper bound to the loss at initialization.
\begin{corollary} 
\label{corollaryR}
Let us assume that the hypotheses of Theorem \ref{init} are satisfied. Then, for any $\delta>0$, there are a constant $R>0$ and integer $m_0\in \mathbb{N}$ such that
\begin{align}
\mathbb{P}(\|F(0)-Y\|_2\leq R)\geq 1-\delta\qquad \forall\, m\geq m_0
\end{align}
over random initialization, where 
\[R=R_0(\delta,\max_i |y^{(i)}|,\mathcal{K})\sqrt{n\log(2n)}.\]
\end{corollary}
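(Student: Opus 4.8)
The plan is to transfer an explicit Gaussian tail estimate from the limiting law to the network at finite but large width, exploiting Theorem \ref{init}. First I would note that, since the set of dataset inputs is finite (Assumption \ref{convex}), Theorem \ref{init} applied to the finite collection $x_A = X$ yields that $F(0) = f^{(m)}(\Theta_0,X) \in \mathbb{R}^n$ converges in distribution, as $m \to \infty$, to a centered Gaussian vector $Z \sim \mathcal{N}(0, \mathcal{K}(X, X^T))$. By the continuous mapping theorem the scalar $\|F(0) - Y\|_2$ then converges in distribution to $\|Z - Y\|_2$, so by Lemma \ref{lemmaconvdistr} (equivalently, by the portmanteau theorem applied to the open set $\{v : \|v - Y\|_2 < R\}$) one gets $\liminf_{m\to\infty} \mathbb{P}(\|F(0) - Y\|_2 \leq R) \geq \mathbb{P}(\|Z - Y\|_2 < R)$ for every $R > 0$. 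Hence it is enough to exhibit an $R$ of the claimed form for which $\mathbb{P}(\|Z - Y\|_2 \geq R) \leq \delta/2$: the existence of the threshold $m_0$ above which the finite-width probability is at least $1 - \delta$ then follows immediately.

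Next I would control the Gaussian vector explicitly. Writing $\|Z - Y\|_2 \leq \|Z\|_2 + \|Y\|_2 \leq \sqrt{n}\,\max_i |Z_i| + \sqrt{n}\,\max_i |y^{(i)}|$, and noting that each $Z_i$ is centered Gaussian with variance $\mathcal{K}(x^{(i)}, x^{(i)}) \leq \sigma^2 := \max_i \mathcal{K}(x^{(i)}, x^{(i)}) < \infty$ (positive by Assumption \ref{zeromean}), the standard Gaussian tail bound combined with a union bound over $i = 1, \dots, n$ gives $\mathbb{P}(\max_i |Z_i| \geq t) \leq 2n\, e^{-t^2/(2\sigma^2)}$. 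Choosing $t = \sigma\sqrt{2\log(4n/\delta)}$ makes this at most $\delta/2$, so taking
\[
R = \sqrt{n}\left(\sigma\sqrt{2\log(4n/\delta)} + \max_i |y^{(i)}|\right)
\]
achieves $\mathbb{P}(\|Z - Y\|_2 \geq R) \leq \delta/2$. Since $\log(4n/\delta) = \log(2n) + \log(2/\delta)$ and $\log(2n) \geq \log 2 > 0$, the $\delta$-dependent and label-dependent additive terms can be absorbed into a multiplicative constant, yielding $R \leq R_0(\delta, \max_i |y^{(i)}|, \mathcal{K})\sqrt{n \log(2n)}$, which is the claimed form.

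The main obstacle is conceptual rather than computational: Theorem \ref{init} provides only convergence in distribution, which on its own gives no uniform-in-$m$ bound on $\|F(0)\|_\infty$ — indeed the crude deterministic bound $|f^{(m)}(\Theta, x)| \leq m/N(m)$ may well diverge. The resolution, as above, is that the limiting law is Gaussian and hence has explicit sub-Gaussian tails, and that convergence in distribution is precisely strong enough to push a ``with probability at least $1 - \delta$'' statement about the open event $\{\|F(0) - Y\|_2 < R\}$ from $m = \infty$ down to all sufficiently large $m$. The extra $\sqrt{\log(2n)}$ factor in $R$ is the familiar price of the union bound that controls $\max_i |Z_i|$; a sharper chi-squared concentration bound for $\|Z\|_2$ would even remove the logarithm, but is not needed here.
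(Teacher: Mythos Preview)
Your proof is correct and follows essentially the same approach as the paper: both use Theorem \ref{init} to identify the Gaussian limit, apply a sub-Gaussian tail bound together with a union bound over the $n$ training inputs, and then invoke convergence in distribution to pass the high-probability statement back to finite $m$. The only organizational difference is that the paper works componentwise---bounding each $|f^{(\infty)}(x^{(i)})-y^{(i)}|$ by $R/\sqrt{n}$ via the Chernoff bound and transferring each marginal via Lemma \ref{lemmaconvdistr} before the final union bound---whereas you first package everything into the scalar $\|F(0)-Y\|_2$ via the continuous mapping theorem and transfer once via portmanteau; both routes yield the same $R_0\sqrt{n\log(2n)}$ scaling.
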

\begin{remark}
If $\mathcal L$ is the mean squared error, the claim is equivalent to
\begin{align}
\forall\, m\geq m_0\qquad \mathbb{P}\left(\mathcal{L}(\Theta_0)\leq \frac{1}{2}R_0^2\right)\geq 1-\delta.
\end{align}
\end{remark}

\begin{proof}
We recall that $\mathcal{K}(x,x)>0$ for any $x\in\mathcal{X}$ by Assumption \ref{zeromean}. Let
\begin{align}
\sigma^2_i=\mathcal{K}(x^{(i)},x^{(i)}), \quad  \sigma=\max_{1 \leq i\leq n}\sigma_i\quad \text{and}\quad y=\max_{1 \leq i\leq n} |y^{(i)}|.
\end{align}
By Theorem \ref{init}, the random variables $\{f(\Theta_0,x^{(i)})\}_{i=1,\dots,n}$ converge in distribution to $\{f^{(\infty)}(x^{(i)})\}_{i=1,\dots,n}$, which are Gaussian random variables with mean zero and variance $\sigma_i^2$.
Given a constant $R$ to be fixed later, for any $i=1,\dots,n$ we can compute
\begin{align}\nonumber
\mathbb{P}\bigg[|f^{(\infty)}&(x^{(i)})-y^{(i)}|\geq \frac{R}{\sqrt n} \bigg]\\
&=\mathbb{P}\left[f^{(\infty)}(x^{(i)})\geq \frac{R}{\sqrt n}+y^{(i)} \right]+
\mathbb{P}\left[-f^{(\infty)}(x^{(i)})\geq \frac{R}{\sqrt n}-y^{(i)} \right], 
\end{align}
which can be estimated using Chernoff bound \cite{chernoff}:
\begin{align}\nonumber
\mathbb{P}\bigg[|f^{(\infty)}&(x^{(i)})-y^{(i)}|\geq \frac{R}{\sqrt n} \bigg]\\
\nonumber &\leq \exp\left[-\frac{1}{2\sigma_i^2}\left(\frac{R}{\sqrt n}+y^{(i)}\right)^2 \right]+
\exp\left[-\frac{1}{2\sigma_i^2}\left(\frac{R}{\sqrt n}-y^{(i)}\right)^2 \right]\\
&\leq 2\exp\left[-\frac{1}{2\sigma_i^2}\left(\frac{R}{\sqrt n}-\big|y^{(i)}\big|\right)^2 \right].\label{cher}
\end{align}

Let
\[R_0(\delta,y,\mathcal{K}) =\left(\sqrt{1+\frac{1}{\log 2}\log\frac{2}{\delta}}+\frac{y}{\sqrt 2\sigma \sqrt{\log 2}}\right)\sqrt 2 \sigma\]
and
\[R=R_0\sqrt{ n\log (2n)}.\]
Then
\begin{align}
R&= \sqrt n\left[\sqrt 2 \sigma  \sqrt{\log (2n)+\frac{\log (2n)}{\log 2}\log\frac{2}{\delta}}+y\sqrt{\frac{\log (2n)}{\log 2}}\right]\\
&\geq \sqrt n\left[\sqrt 2 \sigma  \sqrt{\log (2n)+\log\frac{2}{\delta}}+y\right]\\
&\geq \sqrt n\left[\sqrt 2 \sigma_i  \sqrt{\log \frac{4n}{\delta}}+|y^{(i)}|\right]\qquad \text{ for any } i=1,\dots, n.
\end{align}
Since 
\[\frac{R}{\sqrt n}\geq \sqrt 2 \sigma_i  \sqrt{\log \frac{4n}{\delta}}+|y^{(i)}|\geq |y^{(i)}|,\]
we can bound (\ref{cher}) as follows
\begin{align}\nonumber
\mathbb{P}\bigg[|f^{(\infty)}(x^{(i)})-y^{(i)}|\geq \frac{R}{\sqrt n} \bigg] &\leq 2\exp\left[-\frac{1}{2\sigma_i^2}\left(\frac{R}{\sqrt n}-\big|y^{(i)}\big|\right)^2 \right]\\
\nonumber &\leq 2\exp\left[-\frac{1}{2\sigma_i^2}  \left(\sqrt 2 \sigma_i\sqrt{\log \frac{4n}{\delta}}\right)^2 \right]\\
 &\leq 2\exp\left[-\log \frac{4n}{\delta}\right]=\frac{\delta}{2n}.
\end{align}
Since the random variables $\{f(\Theta_0,x^{(i)})\}_{i=1,\dots,n}$ converge in distribution to $\{f^{(\infty)}(x^{(i)})\}_{i=1,\dots,n}$, there exists $m_0\in\mathbb{N}$ such that
\[ \mathbb{P}\bigg[|f(\Theta_0,x^{(i)})-y^{(i)}|\geq \frac{R}{\sqrt n} \bigg]\leq 2 \mathbb{P}\bigg[|f^{(\infty)}(x^{(i)})-y^{(i)}|\geq \frac{R}{\sqrt n} \bigg]\qquad \forall\, m\geq m_0\]
because of Lemma \ref{lemmaconvdistr}.
Therefore,
\[  \mathbb{P}\bigg[|f(\Theta_0,x^{(i)})-y^{(i)}|\geq \frac{R}{\sqrt n} \bigg]\leq\frac{\delta}{n}.\]
The union bound ensures that
\[  \mathbb{P}\bigg[|f(\Theta_0,x^{(i)})-y^{(i)}|\leq \frac{R}{\sqrt n}\quad \forall i\in\{1,\dots,n\} \bigg]\geq 1-\delta.\]
So, with probability at least $1-\delta$,
\[\|F(0)-Y\|_2^2 = \sum_{i=1}^n \left(f(\Theta_0,x^{(i)})-y^{(i)} \right)^2\leq \sum_{i=1}^n\frac{R^2}{n}=R^2.\]
\end{proof}

\subsubsection{Lipschitzness of the model}\label{sectionlipmodel}
The aim of this subsection is to compute the Lipschitz constant of the model, of its gradient and of the NTK. In particular, we will prove that
\begin{align}
\label{daqui}|f(\Theta,x)-f(\Theta',x)|&\leq 2\frac{\Sigma_1}{N(m)}\|\Theta-\Theta'\|_\infty, \\
\|\nabla_\Theta f(\Theta,x)-\nabla_{\Theta} f(\Theta',x)\|_\infty&\leq 4\,\frac{|\mathcal{M}|^2|\mathcal{N}|}{N(m)}\|\Theta-\Theta'\|_\infty,\\
 |\hat K_{\Theta}(x,x')-\hat K_{\Theta_0}(x,x')|
\label{aqui}&\leq 16\frac{\Sigma_1|\mathcal{M}|^2|\mathcal{N}|}{N_K(m)N^2(m)} \|\Theta-\Theta_0\|_\infty.
\end{align}
(\ref{daqui})-(\ref{aqui}) imply
\begin{align}
|f(\Theta,x)-f(\Theta',x)|&\leq 2\frac{Lm|\mathcal{M}|}{N(m)}\|\Theta-\Theta'\|_\infty,\\
\|\nabla_\Theta f(\Theta,x)-\nabla_{\Theta} f(\Theta',x)\|_\infty&\leq 4\,\frac{|\mathcal{M}|^2|\mathcal{N}|}{N(m)}\|\Theta-\Theta'\|_\infty,\label{eq4.88}\\
 |\hat K_{\Theta}(x,x')-\hat K_{\Theta_0}(x,x')|\label{lipderiv}
&\leq 16\frac{Lm|\mathcal{M}|^3|\mathcal{N}|}{N^2(m)} \|\Theta-\Theta_0\|_\infty.
\end{align}

Even though it is not suitable to discuss quantum advantages and barren plateaus, the example of \cite{QLazy}, which we already introduced in \autoref{combinazioni}, turns out to be particularly useful to describe the behavior of the lazy training in in the new light of the previous inequalities. Indeed, recalling that, in this example,

\[L \text{ is fixed}\qquad\text{and}\qquad N(m)=\sqrt m,\]
we have
\begin{align}
|f(\Theta,x)-f(\Theta',x)|&\leq c_1(L)\sqrt m \,\|\Theta-\Theta'\|_\infty, \\
\|\nabla_\Theta f(\Theta,x)-\nabla_{\Theta} f(\Theta',x)\|_\infty&\leq \frac{c_2(L)}{\sqrt m}\|\Theta-\Theta'\|_\infty,\\
 |\hat K_{\Theta}(x,x')-\hat K_{\Theta_0}(x,x')|
&\leq c_3(L)\|\Theta-\Theta_0\|_\infty.
\end{align}

The fact that the Lipschitz constant of the model diverges with $m$ is a necessary condition for the lazy training; otherwise, any evolution yielding a small change in the parameters
\[\|\Theta_t-\Theta_0\|_\infty=o(1)\qquad \text{as}\qquad m\to\infty\]
would asymptotically be uneffective to modify the output from the initial value to the expected label according to the dataset.
Also the decreasing Lipschitz constant of the gradient is meaningful: a lazy change in the parameters induces a small change in the gradient, which therefore remains closer to the one of the linearized model.
Finally, from the Lipschitz constant of the kernel we understand that it is the lazy regime to ensure the freezing of the kernel to its initial value.

The starting point to prove the \modifica{Lipschitzness} lemmas is the following statement.
\begin{lemma}[Uniform bounds on the derivative of the observables]
\label{unifbounds}
The function $\partial_{\theta_i}\partial_{\theta_j}f_k(\Theta,x)$ is continuous with respect to $\Theta$ and the following bounds hold for any $i,j\in \{1,\dots,Lm\}$:
\begin{align}
|\partial_{\theta_i}f_k(\Theta,x)|&\leq 2,\label{eq4.95}\\
|\partial_{\theta_i}\partial_{\theta_j}f_k(\Theta,x)|&\leq 4,\\
|\partial_{\theta_i}\partial_{\theta_j}f(\Theta,x)|&\leq  4\,\frac{|\mathcal{M}_i\cap\mathcal{M}_j|}{N(m)}.
\end{align}
\end{lemma}

\begin{proof} 
\modifica{
Let $\|\cdot\|$ be the norm of the Hilbert space of the physical states, and let $\|\cdot\|_\mathcal{L}$ be the operator norm
\[\|O\|_\mathcal{L}=\sup_{\|v\| \leq 1}\|Ov\|.\]  By Assumption \ref{domain} $\|\mathcal{O}_k\|_{\mathcal{L}}\leq 1$  for all $k\in\{1,\dots,m\}$, so
\begin{align}
|f_k(\Theta,x)|&=\left|\langle0|U^\dagger(\Theta,x)\mathcal{O}_kU(\Theta,x)|0\rangle\right|\leq 1.
\end{align}
Furthermore, by Assumption \ref{domain}, the generators of the unitaries encoding the trainable parameters have spectrum $\{-1,+1\}$. In quantum circuits like these, the derivatives of the expectation value of any observable with respect to any parameter can be written in terms of other expectation values computed for ``shifted'' parameters. These identities are indeed known as parameter shift rules (see (14) in \cite{Schuld_2019}):
\begin{align}
    \partial_{\theta_i}f_k(\Theta,x)=f_k(\Theta+\Delta^{(i)},x)-f_k(\Theta-\Delta^{(i)},x)\qquad \Delta^{(i)}_j\coloneqq\begin{cases}
    \frac{\pi}{4} & j=i \\ 0 & j\neq i.
    \end{cases}
\end{align}
Therefore, we immediately see that
\begin{align}
    |\partial_{\theta_i}f_k(\Theta,x)|\leq |f_k(\Theta+\Delta^{(i)},x)|+|f_k(\Theta-\Delta^{(i)},x)|\leq 2
\end{align}
Similarly, using a parameter shift rule for $f_k(\Theta\pm\Delta^{(i)},x)$ we get
\begin{align}
    \nonumber
    \partial_{\theta_j}\partial_{\theta_i}f_k(\Theta,x)&=\partial_{\theta_j}f_k(\Theta+\Delta^{(i)},x)-\partial_{\theta_j}f_k(\Theta-\Delta^{(i)},x)\\
    \nonumber
    &=f_k(\Theta+2\Delta^{(i)},x)-f_k(\Theta,x)-f_k(\Theta,x)+f_k(\Theta-2\Delta^{(i)},x),\\
    &=f_k(\Theta+2\Delta^{(i)},x)-2f_k(\Theta,x)+f_k(\Theta-2\Delta^{(i)},x),
\end{align}
whence
\begin{align}
    |\partial_{\theta_j}\partial_{\theta_i}f_k(\Theta,x)|\leq |f_k(\Theta+2\Delta^{(i)},x)|+2|f_k(\Theta,x)|+|f_k(\Theta-2\Delta^{(i)},x)|\leq 4.
\end{align}
}
\end{proof}

We need this further result to identify the Lipschitz constants we are looking for. 
\begin{lemma}[Bounded differential on a convex implies Lipschitz]
\label{compactconvex}
Let $f:\mathcal{U}\to \mathbb{R}^n$ be continuously differentiable, where $\mathcal{U}\subseteq \mathbb{R}^m$ is convex. If the differential of $f$ is bounded, i.e.
\[K=\max_{x\in\mathcal{U}}\|df(x)\|_\mathcal{L}< \infty,\]
then
\begin{align}
\|f(y)-f(x)\|\leq K\|y-x\| \qquad\forall\,x,y\in\mathcal{U}.
\end{align}
\end{lemma}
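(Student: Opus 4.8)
The plan is to reduce the multivariate claim to a one-dimensional calculus fact by restricting $f$ to the line segment joining two arbitrary points of $\mathcal{U}$; this restriction is legitimate precisely because $\mathcal{U}$ is convex, which is the only place convexity is used. So first I would fix $x,y\in\mathcal{U}$, set $\gamma(t)=x+t(y-x)$ for $t\in[0,1]$, note $\gamma(t)\in\mathcal{U}$ by convexity, and define $g\colon[0,1]\to\mathbb{R}^n$ by $g(t)=f(\gamma(t))$. Since $f$ is continuously differentiable and $\gamma$ is affine, the chain rule gives that $g$ is continuously differentiable with $g'(t)=df(\gamma(t))[\,y-x\,]$, and in particular $t\mapsto g'(t)$ is continuous on $[0,1]$.

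Next I would invoke the (component-wise) fundamental theorem of calculus to write
\[
f(y)-f(x)=g(1)-g(0)=\int_0^1 g'(t)\,dt=\int_0^1 df(\gamma(t))[\,y-x\,]\,dt ,
\]
which is well defined because the integrand is continuous on $[0,1]$.

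Then I would estimate the norm, combining the standard inequality $\left\|\int_0^1 h(t)\,dt\right\|\le\int_0^1\|h(t)\|\,dt$ for continuous $\mathbb{R}^n$-valued $h$, the definition of the operator norm in the form $\|df(\gamma(t))[\,y-x\,]\|\le\|df(\gamma(t))\|_\mathcal{L}\,\|y-x\|$, and the hypothesis $\|df(\gamma(t))\|_\mathcal{L}\le K$:
\[
\|f(y)-f(x)\|\le\int_0^1\|df(\gamma(t))\|_\mathcal{L}\,\|y-x\|\,dt\le K\,\|y-x\| .
\]
Since $x,y\in\mathcal{U}$ were arbitrary, this is exactly the asserted Lipschitz bound.

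I do not expect a genuine obstacle here: the proof is routine. The only points deserving a word of care are that the segment stays inside the domain (handled by convexity) and the passage of the norm inside the integral of a vector-valued integrand, which is just the triangle inequality in the limit of Riemann sums. If one wishes to avoid vector-valued integration altogether, an equivalent route is to note that for $f(y)\neq f(x)$ the unit vector $v=(f(y)-f(x))/\|f(y)-f(x)\|$ satisfies $\|f(y)-f(x)\|=\langle v,\,g(1)-g(0)\rangle$, and then apply the one-variable mean value theorem to the scalar $C^1$ function $t\mapsto\langle v,g(t)\rangle$ to get $\|f(y)-f(x)\|=\langle v,g'(\xi)\rangle\le\|g'(\xi)\|\le K\|y-x\|$ for some $\xi\in(0,1)$, the case $f(y)=f(x)$ being trivial.
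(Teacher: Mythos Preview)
Your proof is correct and is the standard argument; the paper itself does not give a proof but simply cites a textbook (Giaquinta--Modica), where precisely this segment-plus-fundamental-theorem-of-calculus argument appears. There is nothing to add.
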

\begin{proof}
See e.g. \cite{giaquinta2010mathematical}.
\end{proof}
We finally need the following technical estimate.

\begin{lemma}\label{maxMM} The following upper bound holds:
\[\max_{1\leq j\leq |\Theta|}\sum_{i=1}^{|\Theta|} |\mathcal{M}_i\cap \mathcal{M}_j|\leq |\mathcal{M}|^2|\mathcal{N}|.\]
\end{lemma}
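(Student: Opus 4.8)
The plan is to establish, by a straightforward double-counting argument, the slightly stronger bound
\[
\max_{1\le j\le|\Theta|}\ \sum_{i=1}^{|\Theta|}|\mathcal{M}_i\cap\mathcal{M}_j|\ \le\ |\mathcal{M}|\,|\mathcal{N}|,
\]
which immediately yields the claimed inequality, since every future light cone contains at least the qubit on which the corresponding parameter acts, so $|\mathcal{M}|\ge 1$ and hence $|\mathcal{M}||\mathcal{N}|\le|\mathcal{M}|^2|\mathcal{N}|$.

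First I would fix an index $j\in\{1,\dots,|\Theta|\}$ and use the defining duality $k\in\mathcal{M}_i\iff i\in\mathcal{N}_k$ from Definition~\ref{lightcones} to rewrite the intersection cardinalities as the size of a set of pairs:
\begin{align}
\sum_{i=1}^{|\Theta|}|\mathcal{M}_i\cap\mathcal{M}_j|
&=\sum_{i=1}^{|\Theta|}\bigl|\{\,k\in\{1,\dots,m\}:i\in\mathcal{N}_k\ \text{and}\ j\in\mathcal{N}_k\,\}\bigr|\\
&=\bigl|\{\,(i,k):i\in\mathcal{N}_k,\ j\in\mathcal{N}_k\,\}\bigr|.
\end{align}

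Then I would group these pairs by their second coordinate $k$. The set of admissible values of $k$ is exactly $\{k:j\in\mathcal{N}_k\}=\mathcal{M}_j$ (again Definition~\ref{lightcones}), and for each such $k$ the number of admissible indices $i$ is $|\mathcal{N}_k|$. Hence
\begin{align}
\sum_{i=1}^{|\Theta|}|\mathcal{M}_i\cap\mathcal{M}_j|
=\sum_{k\in\mathcal{M}_j}|\mathcal{N}_k|
\le|\mathcal{M}_j|\cdot\max_{k}|\mathcal{N}_k|
\le|\mathcal{M}|\,|\mathcal{N}|,
\end{align}
and taking the maximum over $j$ completes the proof.

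There is no genuine analytic obstacle here: this is a pure combinatorial count, and the only point demanding care is applying the $\mathcal{M}$--$\mathcal{N}$ duality of Definition~\ref{lightcones} in the correct direction when exchanging the roles of the parameter index and the qubit index. If one prefers to reproduce the stated constant directly, without the slack above, one can instead bound the number of nonzero summands $|\mathcal{M}_i\cap\mathcal{M}_j|$ by $|\mathcal{M}_j|\,|\mathcal{N}|\le|\mathcal{M}||\mathcal{N}|$, using that such an $i$ must lie in $\bigcup_{k\in\mathcal{M}_j}\mathcal{N}_k$, and bound each such summand crudely by $|\mathcal{M}_j|\le|\mathcal{M}|$, which gives $|\mathcal{M}|^2|\mathcal{N}|$ at once.
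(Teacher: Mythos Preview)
Your proof is correct. Your primary double-counting argument, rewriting $\sum_i|\mathcal{M}_i\cap\mathcal{M}_j|=\sum_{k\in\mathcal{M}_j}|\mathcal{N}_k|$, is genuinely different from the paper's and in fact sharper: it gives $|\mathcal{M}||\mathcal{N}|$ rather than $|\mathcal{M}|^2|\mathcal{N}|$. The paper instead introduces the set $S_j=\{i:\mathcal{M}_i\cap\mathcal{M}_j\neq\emptyset\}$, bounds each nonzero summand by $|\mathcal{M}_j|$, and shows $|S_j|\le|\mathcal{M}_j||\mathcal{N}|$ via the same $\mathcal{M}$--$\mathcal{N}$ duality, arriving at $|\mathcal{M}|^2|\mathcal{N}|$. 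This is precisely the alternative you sketch in your final paragraph, so you have in effect recovered the paper's argument as a corollary. Your route is more direct and loses one factor of $|\mathcal{M}|$ less; the paper's route has the minor advantage of isolating $|S_j|$, which is of independent interest elsewhere in the paper (it reappears implicitly in the proof of Theorem~\ref{ntkconv}).
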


\begin{proof}
Let us define
\[S_{j}=\{i\in\{1,\dots,|\Theta|\}: \mathcal{M}_i\cap \mathcal{M}_j\neq\emptyset\},\]
so that we can write
\begin{align}
\sum_{i=1}^{|\Theta|} |\mathcal{M}_i\cap \mathcal{M}_j|&=\sum_{i\in S_{j}} |\mathcal{M}_i\cap \mathcal{M}_j|\leq \sum_{i\in S_{j}} |\mathcal{M}_j|,\\
\max_{1\leq j\leq |\Theta|}\sum_{i=1}^{|\Theta|} |\mathcal{M}_i\cap \mathcal{M}_j|&\leq
\left(\max_{j}|S_{j}|\right)\left(\max_{j}|\mathcal{M}_j|\right).
\end{align}
An estimate of the cardinality of $S_j$ can be done by means of the following rewriting
\begin{align}
S_{j}&=\{i\in\{1,\dots,|\Theta|\}: \mathcal{M}_i\cap \mathcal{M}_j\neq\emptyset\}\\
\nonumber&=\{i\in\{1,\dots,|\Theta|\}: \exists k \in{1,\dots, m}: i\in \mathcal{N}_k \land j\in \mathcal{N}_k\}\\
\nonumber&=\{i\in\{1,\dots,|\Theta|\}: \exists k \in\mathcal{M}_j: i\in \mathcal{N}_k \land j\in \mathcal{N}_k\}\\
&=\{i\in\{1,\dots,|\Theta|\}: \exists k \in\mathcal{M}_j: i\in \mathcal{N}_k\},\\[8pt]
|S_j|&\leq |\mathcal{M}_j|\max_k|\mathcal{N}_k|=|\mathcal{M}_j\|\mathcal{N}|.
\end{align}
Therefore,
\begin{align}
\max_{1\leq j\leq |\Theta|}\sum_{i=1}^{|\Theta|} |\mathcal{M}_i\cap \mathcal{M}_j|&\leq
\left(\max_{j}|\mathcal{M}_j|\right)^2|\mathcal{N}|\leq |\mathcal{M}|^2|\mathcal{N}|.
\end{align}
\end{proof}

\begin{lemma}[Lipschitzness of the gradient]
The following inequalities hold:
\label{lemma} 
\begin{align}
\label{lemma1}|\partial_{\theta_i}f(\Theta,x)|&\leq 2\frac{|\mathcal{M}_i|}{N(m)},\\
\label{lemma2}\|\partial_{\theta_i}f(\Theta,X)\|_2&\leq 2\sqrt n \,\frac{|\mathcal{M}|}{N(m)},\\
\label{lemma3}\|\nabla_{\Theta}f(\Theta,x)\|_1&\leq \frac{2\Sigma_1}{N(m)} \leq 2L\frac{m}{N(m)}|\mathcal{M}|, \\
\label{lemma4}\|\nabla_\Theta f(\Theta,x)-\nabla_{\Theta} f(\Theta',x)\|_\infty&\leq 4\,\frac{|\mathcal{M}|^2|\mathcal{N}|}{N(m)}\|\Theta-\Theta'\|_\infty.
\end{align}
\end{lemma}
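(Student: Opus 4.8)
The plan is to derive the four bounds \eqref{lemma1}–\eqref{lemma4} as direct consequences of the uniform derivative bounds of Lemma \ref{unifbounds}, the combinatorial estimate of Lemma \ref{maxMM}, and the ``bounded differential implies Lipschitz'' statement of Lemma \ref{compactconvex}. The key structural observation, already used in the proof of Lemma \ref{unifbounds}, is that $f(\Theta,x)=\frac{1}{N(m)}\sum_{k=1}^m f_k(\Theta_{\mathcal{N}_k},x)$, and that $f_k$ depends on $\theta_i$ only when $k\in\mathcal{M}_i$; hence $\partial_{\theta_i}f(\Theta,x)=\frac{1}{N(m)}\sum_{k\in\mathcal{M}_i}\partial_{\theta_i}f_k(\Theta,x)$, a sum of at most $|\mathcal{M}_i|$ terms.

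\textbf{Step 1: the bounds \eqref{lemma1} and \eqref{lemma2}.} Using the decomposition above together with $|\partial_{\theta_i}f_k(\Theta,x)|\leq 2$ from Lemma \ref{unifbounds}, I get $|\partial_{\theta_i}f(\Theta,x)|\leq \frac{1}{N(m)}\sum_{k\in\mathcal{M}_i}2=2\frac{|\mathcal{M}_i|}{N(m)}\leq 2\frac{|\mathcal{M}|}{N(m)}$, which is \eqref{lemma1}. For \eqref{lemma2}, I apply \eqref{lemma1} componentwise over the $n$ inputs of the dataset: $\|\partial_{\theta_i}f(\Theta,X)\|_2=\big(\sum_{j=1}^n|\partial_{\theta_i}f(\Theta,x^{(j)})|^2\big)^{1/2}\leq \big(n\cdot 4|\mathcal{M}|^2/N^2(m)\big)^{1/2}=2\sqrt n\,\frac{|\mathcal{M}|}{N(m)}$.

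\textbf{Step 2: the bound \eqref{lemma3}.} Summing \eqref{lemma1} over all parameter indices, $\|\nabla_\Theta f(\Theta,x)\|_1=\sum_{i=1}^{|\Theta|}|\partial_{\theta_i}f(\Theta,x)|\leq \frac{2}{N(m)}\sum_{i=1}^{|\Theta|}|\mathcal{M}_i|=\frac{2\Sigma_1}{N(m)}$ by Definition \ref{defmax}, and then $\Sigma_1=\sum_i|\mathcal{M}_i|\leq Lm|\mathcal{M}|$ gives the stated weaker form.

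\textbf{Step 3: the bound \eqref{lemma4}.} This is the main technical point. I want to apply Lemma \ref{compactconvex} to the map $\Theta\mapsto \nabla_\Theta f(\Theta,x)$ on the convex parameter domain $\mathscr{P}=[0,\pi]^{Lm}$; its differential is the Hessian $\big(\partial_{\theta_i}\partial_{\theta_j}f(\Theta,x)\big)_{i,j}$, which by Lemma \ref{unifbounds} satisfies $|\partial_{\theta_i}\partial_{\theta_j}f(\Theta,x)|\leq 4\frac{|\mathcal{M}_i\cap\mathcal{M}_j|}{N(m)}$ and is continuous in $\Theta$. Since the target and source norms here are both $\ell_\infty$, the relevant operator norm of the Hessian is the induced $\infty\to\infty$ norm, i.e. the maximum over rows of the $\ell_1$ norm of the row: $\max_j\sum_{i=1}^{|\Theta|}|\partial_{\theta_i}\partial_{\theta_j}f(\Theta,x)|\leq \frac{4}{N(m)}\max_j\sum_{i=1}^{|\Theta|}|\mathcal{M}_i\cap\mathcal{M}_j|\leq 4\frac{|\mathcal{M}|^2|\mathcal{N}|}{N(m)}$, where the last inequality is precisely Lemma \ref{maxMM}. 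Lemma \ref{compactconvex} then yields $\|\nabla_\Theta f(\Theta,x)-\nabla_\Theta f(\Theta',x)\|_\infty\leq 4\frac{|\mathcal{M}|^2|\mathcal{N}|}{N(m)}\|\Theta-\Theta'\|_\infty$.

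The only subtlety I expect is bookkeeping the matrix-norm convention in Step 3: one must be careful that, when both the domain and codomain carry the $\ell_\infty$ norm, the operator norm of the Hessian matrix is the \emph{maximum absolute row sum}, so that it is the sum over $i$ (not over $j$) that must be controlled — which is exactly the quantity Lemma \ref{maxMM} is designed to bound. With the conventions fixed, everything else is a one-line substitution. A final remark: strictly speaking, if the $\theta_i$ live on a torus one should either note that the bounds are unaffected by passing to a fundamental domain, or invoke that the local observables are periodic and hence the estimate on $\mathscr{P}=[0,\pi]^{Lm}$ transfers; I would add a one-sentence remark to this effect.
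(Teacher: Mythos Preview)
Your proposal is correct and follows essentially the same approach as the paper: the decomposition $\partial_{\theta_i}f=\frac{1}{N(m)}\sum_{k\in\mathcal{M}_i}\partial_{\theta_i}f_k$ together with Lemma \ref{unifbounds} gives \eqref{lemma1}--\eqref{lemma3}, and for \eqref{lemma4} the paper likewise computes the $\ell^\infty\to\ell^\infty$ operator norm of the Hessian via the maximum absolute row sum, bounds it by Lemma \ref{maxMM}, and concludes with Lemma \ref{compactconvex}. Your remark about the torus is harmless but unnecessary, since the paper works directly on the convex domain $\mathscr{P}=[0,\pi]^{Lm}$.
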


\begin{proof} (Lemma \ref{lemma}) 
We will use the bounds of Lemma \ref{unifbounds} and the notation introduced in the proof of Theorem \ref{ntkconv}:
\begin{align}
\partial_{\theta_i}f(\Theta,x)&=\frac{1}{N(m)}\sum_{k=1}^m\partial_{\theta_i}f_k(\Theta_{\mathcal{N}_k},x)=\frac{1}{N(m)}\sum_{k\in\mathcal{M}_i}\partial_{\theta_i}f_k(\Theta,x),\\
|\partial_{\theta_i}f(\Theta,x)|&\leq\frac{1}{N(m)}\sum_{k\in\mathcal{M}_i}|\partial_{\theta_i}f_k(\Theta,x)|\leq 2\frac{|\mathcal{M}_i|}{N(m)},
\end{align}
which is (\ref{lemma1}). Moreover,
\begin{align}
\|\partial_{\theta_i}f(\Theta,X)\|_2&=\sqrt{\sum_{j=1}^{n}\left(\partial_{\theta_i}f(\Theta,x^{(j)}))\right)^2}
\leq 2\sqrt{n} \,\frac{|\mathcal{M}_i|}{N(m)}\leq 2\sqrt{n} \,\frac{|\mathcal{M}|}{N(m)},
\end{align}
which proves (\ref{lemma2}). We can also estimate
\begin{align}
\nonumber \|\nabla_\Theta f(\Theta,x)\|_1&=\sum_{i=1}^{Lm}|\partial_{\theta_i}f(\Theta,x)| \leq \frac{2}{N(m)}\sum_{i=1}^{Lm}|\mathcal{M}_i|\\
& = \frac{2\Sigma_1}{N(m)} \leq 2L\frac{m}{N(m)}|\mathcal{M}|,
\end{align}
so that (\ref{lemma3}) is verified.\\
In a similar way
\begin{align}
|\partial_{\theta_i}\partial_{\theta_j}f(\Theta,x)|&\leq\frac{1}{N(m)}\sum_{k\in\mathcal{M}_i\cap \mathcal{M}_j}|\partial_{\theta_i}\partial_{\theta_j}f_k(\Theta,x)|
\leq 4\, \frac{|\mathcal{M}_i\cap \mathcal{M}_j|}{N(m)}.
\end{align}
If we fix any $x\in\mathcal{X}$ and we call
\begin{align}
h_x(\Theta)= \nabla_\Theta f(\Theta,x),
\end{align}
Lemma \ref{unifbounds} ensures that $h_x(\Theta)$ is a continuously differentiable function.
Given any $v\in \mathbb{R}^{|\Theta|}$, the differential of $h_x(\Theta)$ (with respect to $\Theta$) acts as
\begin{align}
\modifica{
dh_x(\Theta)v=\sum_{i=1}^{|\Theta|}\partial_{\theta_i}\nabla_\Theta f(\Theta,x)\,v_i.}
\end{align}
The operator norm of $dh_x(\Theta): (\mathbb{R}^{|\Theta|},\ell^\infty)\to (\mathbb{R}^{|\Theta|},\ell^\infty)$ is given by
\begin{align}
\nonumber \|dh_x(\Theta)\|_{\ell^\infty \to \ell^\infty}&=\sup_{\|v\|_\infty\leq 1}\|dh(\Theta)v\|_\infty\\
\nonumber&\leq\sup_{\|v\|_\infty\leq 1}\Bigg\|\sum_{i=1}^{|\Theta|}\partial_{\theta_i}\nabla_\Theta f(\Theta,x)\,v_i\Bigg\|_\infty\\
\nonumber&=\sup_{|v_i|\leq 1}\,\max_{1\leq j\leq |\Theta|} \Bigg|\sum_{i=1}^{|\Theta|}\partial_{\theta_i}\partial_{\theta_j} f(\Theta,x)\,v_i\Bigg|\\
\nonumber&\leq\sup_{|v_i|\leq 1}\,\max_{1\leq j\leq |\Theta|} \sum_{i=1}^{|\Theta|}|\partial_{\theta_i}\partial_{\theta_j} f(\Theta,x)|\\
\nonumber&=\max_{1\leq j\leq |\Theta|} \sum_{i=1}^{|\Theta|}|\partial_{\theta_i}\partial_{\theta_j} f(\Theta,x)|\\
&\leq \frac{4}{N(m)}\max_{1\leq j\leq |\Theta|}\sum_{i=1}^{|\Theta|} |\mathcal{M}_i\cap \mathcal{M}_j|.
\end{align}
Using Lemma \ref{maxMM}, we have that
\begin{align}
\|dh_x(\Theta)\|_{\ell^\infty \to \ell^\infty}\leq 4\, \frac{|\mathcal{M}|^2|\mathcal{N}|}{N(m)}.
\end{align}
So, the uniform bound on the convex domain $\mathscr{P}$ of $\Theta$ ensures, by Lemma \ref{compactconvex}, that
\begin{align}
\|\nabla_\Theta f(\Theta,x)-\nabla_{\Theta} f(\Theta',x)\|_\infty\leq 4\,\frac{|\mathcal{M}|^2|\mathcal{N}|}{N(m)}\|\Theta-\Theta'\|_\infty
\end{align}
and this finally proves (\ref{lemma4}).
\end{proof}

\begin{lemma}[Lipschitzness of the model]\label{lipf}
The following inequality holds:
\[|f(\Theta,x)-f(\Theta',x)|\leq 2\frac{\Sigma_1}{N(m)}\|\Theta-\Theta'\|_\infty .\]
\end{lemma}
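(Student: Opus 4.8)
The plan is to read this off directly from the differential bound already established in Lemma \ref{lemma}, exactly as the Lipschitz estimate for the gradient (\ref{lemma4}) was derived. Fix an arbitrary $x\in\mathcal{X}$ and set $g_x(\Theta)=f(\Theta,x)$. By Lemma \ref{unifbounds} the second derivatives $\partial_{\theta_i}\partial_{\theta_j}f(\Theta,x)$ are continuous, so in particular $g_x$ is continuously differentiable on the parameter space $\mathscr{P}=[0,\pi]^{Lm}$, which is convex. This is the setup required to invoke Lemma \ref{compactconvex}.

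Next I would identify the operator norm of the differential. For $v\in\mathbb{R}^{|\Theta|}$ one has $dg_x(\Theta)\,v=\nabla_\Theta f(\Theta,x)\cdot v$, so
\[
\|dg_x(\Theta)\|_{\ell^\infty\to\mathbb{R}}=\sup_{\|v\|_\infty\le1}\bigl|\nabla_\Theta f(\Theta,x)\cdot v\bigr|=\|\nabla_\Theta f(\Theta,x)\|_1 .
\]
By (\ref{lemma3}) of Lemma \ref{lemma}, $\|\nabla_\Theta f(\Theta,x)\|_1\le 2\Sigma_1/N(m)$ uniformly in $\Theta\in\mathscr{P}$, so $K:=\max_{\Theta\in\mathscr{P}}\|dg_x(\Theta)\|_{\ell^\infty\to\mathbb{R}}\le 2\Sigma_1/N(m)<\infty$.

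Applying Lemma \ref{compactconvex} to $g_x$ on the convex set $\mathscr{P}$ then yields
\[
|f(\Theta,x)-f(\Theta',x)|=|g_x(\Theta)-g_x(\Theta')|\le K\,\|\Theta-\Theta'\|_\infty\le \frac{2\Sigma_1}{N(m)}\,\|\Theta-\Theta'\|_\infty ,
\]
which is the claim; since $x$ was arbitrary, the bound holds for all $x\in\mathcal{X}$. There is no real obstacle here: the only thing to be slightly careful about is matching the norms (using $\ell^\infty$ on the parameter side so that the dual norm of the gradient is the $\ell^1$ norm $\Sigma_1 = \sum_i|\mathcal{M}_i|$ up to the factor $2/N(m)$), and noting that the uniform bound of Lemma \ref{lemma} indeed holds over the whole convex domain $\mathscr{P}$, which is exactly what Lemma \ref{compactconvex} needs. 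The subsequent weaker form $2Lm|\mathcal{M}|/N(m)$ quoted after (\ref{aqui}) follows immediately from the trivial bound $\Sigma_1\le Lm|\mathcal{M}|$ of the Remark following Definition \ref{defmax}.
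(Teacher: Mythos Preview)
Your proof is correct and is essentially identical to the paper's own argument: both apply Lemma \ref{compactconvex} on the convex domain $\mathscr{P}$, identify the operator norm of the differential $df(\Theta,x):(\mathbb{R}^{|\Theta|},\ell^\infty)\to\mathbb{R}$ with $\|\nabla_\Theta f(\Theta,x)\|_1$, and then invoke the bound (\ref{lemma3}) of Lemma \ref{lemma} to obtain $2\Sigma_1/N(m)$.
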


\begin{proof}
Using Lemma \ref{compactconvex} and Lemma \ref{lemma}:
\[|f(\Theta,x)-f(\Theta',x)|\leq \max_{\Theta\in\mathscr{P}} \|df(\Theta,x)\|_\mathcal{L} \|\Theta-\Theta'\|_\infty, \]
where
\begin{align}
\nonumber \|df(\Theta,x)\|_\mathcal{L}&=\sup_{\|v\|_\infty\leq 1} \modifica{\left|\sum_{i=1}^{Lm}\partial_{\theta_i} f(\Theta,x)\,v_i\right|}\\
&\leq \|\nabla_\Theta f(\Theta,x)\|_1\leq 2\frac{\Sigma_1}{N(m)}.
\end{align}
Therefore
\[|f(\Theta,x)-f(\Theta',x)|\leq 2\frac{\Sigma_1}{N(m)}\|\Theta-\Theta'\|_\infty. \]
\end{proof}

\begin{lemma}[Lipschitzness of the NTK]\label{kernel}
The following inequality holds:
\[ |\hat K_{\Theta_0}(x,x')-\hat K_{\Theta}(x,x')|
\leq 16\frac{\Sigma_1|\mathcal{M}|^2|\mathcal{N}|}{N_K(m)N^2(m)} \|\Theta_0-\Theta\|_\infty.\]
\end{lemma}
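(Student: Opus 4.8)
The plan is to reduce the claim to the two bounds already established in Lemma~\ref{lemma}, namely the $\ell^1$ bound on the gradient $\|\nabla_\Theta f(\Theta,x)\|_1\leq 2\Sigma_1/N(m)$ (eq.~\eqref{lemma3}) and the Lipschitz bound on the gradient $\|\nabla_\Theta f(\Theta,x)-\nabla_\Theta f(\Theta',x)\|_\infty\leq 4|\mathcal{M}|^2|\mathcal{N}|/N(m)\,\|\Theta-\Theta'\|_\infty$ (eq.~\eqref{lemma4}), together with H\"older's inequality in the dual pair $(\ell^1,\ell^\infty)$.

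First I would write, using the definition of the empirical NTK,
\[
\big|\hat K_{\Theta_0}(x,x')-\hat K_{\Theta}(x,x')\big|
=\frac{1}{N_K(m)}\Big|\nabla_\Theta f(\Theta_0,x)^T\nabla_\Theta f(\Theta_0,x')-\nabla_\Theta f(\Theta,x)^T\nabla_\Theta f(\Theta,x')\Big|,
\]
and then insert the standard cross term to telescope the difference of the two bilinear expressions:
\[
\nabla f(\Theta_0,x)^T\nabla f(\Theta_0,x')-\nabla f(\Theta,x)^T\nabla f(\Theta,x')
=\nabla f(\Theta_0,x)^T\big(\nabla f(\Theta_0,x')-\nabla f(\Theta,x')\big)
+\big(\nabla f(\Theta_0,x)-\nabla f(\Theta,x)\big)^T\nabla f(\Theta,x').
\]
By the triangle inequality and H\"older ($|a^Tb|\le\|a\|_1\|b\|_\infty$), each of the two terms is bounded by a product of an $\ell^1$ norm of a gradient and an $\ell^\infty$ norm of a difference of gradients. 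Applying \eqref{lemma3} to the $\ell^1$ factors and \eqref{lemma4} to the $\ell^\infty$ factors, each term is at most $\frac{2\Sigma_1}{N(m)}\cdot\frac{4|\mathcal{M}|^2|\mathcal{N}|}{N(m)}\,\|\Theta_0-\Theta\|_\infty=\frac{8\Sigma_1|\mathcal{M}|^2|\mathcal{N}|}{N^2(m)}\,\|\Theta_0-\Theta\|_\infty$.

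Summing the two contributions and dividing by $N_K(m)$ yields the claimed constant $16$. There is essentially no real obstacle here: the argument is a textbook ``bilinear form is locally Lipschitz when each factor is bounded and Lipschitz'' estimate, and the only points requiring care are bookkeeping the correct dual norms ($\ell^1$ against $\ell^\infty$, which matches exactly the norms in which \eqref{lemma3} and \eqref{lemma4} are stated) and keeping track of the normalization factors $N(m)$ and $N_K(m)$.
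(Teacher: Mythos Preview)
Your proof is correct and follows essentially the same route as the paper: telescope the bilinear form, apply H\"older in the $(\ell^1,\ell^\infty)$ pairing, and invoke \eqref{lemma3} and \eqref{lemma4} from Lemma~\ref{lemma} to get the constant $16$. The only cosmetic difference is which of the two ``anchor'' gradients (at $\Theta_0$ or at $\Theta$) appears in the $\ell^1$ factor, but since \eqref{lemma3} holds uniformly in $\Theta$ this is immaterial.
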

\begin{proof}
By Lemma \ref{lemma}, we have
\begin{align}
\nonumber |\hat K&_{\Theta_0}(x,x')-\hat K_{\Theta}(x,x')|\\
\nonumber &=\frac{1}{N_K(m)}|\nabla_\Theta f(\Theta_0,x)\cdot \nabla_\Theta f(\Theta_0,x')-\nabla_\Theta f(\modifica{\Theta},x)\cdot \nabla_\Theta f(\modifica{\Theta},x')|\\
\nonumber &=\frac{1}{N_K(m)}\Big|\big(\nabla_\Theta f(\Theta_0,x)-\nabla_\Theta f(\Theta,x)\big)\cdot \nabla_\Theta f(\Theta_0,x')\\
\nonumber &\phantom{=\frac{1}{N_K(m)}\big||} +\nabla_\Theta f(\modifica{\Theta},x)\cdot \big(\nabla_\Theta f(\Theta_0,x')-\nabla_\Theta f(\Theta,x')\big)\Big|\\
\nonumber &\leq\frac{1}{N_K(m)}\Big(\big\|\nabla_\Theta f(\Theta_0,x)-\nabla_\Theta f(\Theta,x)\big\|_\infty \big\|\nabla_\Theta f(\Theta_0,x')\big\|_1\\
\nonumber &\phantom{=\frac{1}{N_K(m)}\big(|} +\big\|\nabla_\Theta f(\modifica{\Theta},x)\big\|_1 \big\|\nabla_\Theta f(\Theta_0,x')-\nabla_\Theta f(\Theta,x')\big\|_\infty\Big)\\
\nonumber&\leq\frac{4|\mathcal{M}|^2|\mathcal{N}|}{N_K(m)N(m)}\left(\big\|\nabla_\Theta f(\Theta_0,\modifica{x'})\big\|_1+\big\|\nabla_\Theta f(\modifica{\Theta},\modifica{x})\big\|_1 \right)\|\Theta_0-\Theta\|_\infty\\
&\leq 16\frac{\Sigma_1|\mathcal{M}|^2|\mathcal{N}|}{N_K(m)N^2(m)} \|\Theta_0-\Theta\|_\infty.
\end{align}
\end{proof}

\subsubsection{Original model, linearized version and lazy training}\label{lemmalin}
The following Theorem \ref{powerful} claims that the difference between the original and the linear model at any input point is bounded by a diverging factor multiplied by the square of the infinity norm of the difference between the initial parameters and a new choice of the parameters. The square is crucial, since in the lazy regime $\|\Theta_t-\Theta_0\|_\infty$ is suppressed by a factor $1/N(m)$ (see Theorem \ref{gradfl}), which would not be sufficient to ensure the convergence to the linearized model without such square.

\begin{theorem}[Discrepancy between the original and the linearized model]
\label{powerful}
For any $\Theta_0\in\mathscr{P}$ defining the linearized model, we have the following upper bound:
\[|f(\Theta,x)-f^{\mathrm{lin}}(\Theta,x)|\leq \frac{Lm |\mathcal{M}|^2|\mathcal{N}|}{N(m)}\|\Theta-\Theta_0\|^2_\infty.\]
\end{theorem}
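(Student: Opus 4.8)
The plan is to recognize that $f^{\mathrm{lin}}(\cdot,x)$ is exactly the first-order Taylor polynomial of $\Theta\mapsto f(\Theta,x)$ about $\Theta_0$, so that $f(\Theta,x)-f^{\mathrm{lin}}(\Theta,x)$ is its second-order Taylor remainder, and to estimate this remainder using only the uniform bound on second derivatives from Lemma \ref{unifbounds} together with the combinatorial estimate of Lemma \ref{maxMM}. Concretely, I would introduce the scalar function $g(s)=f(\Theta_0+s(\Theta-\Theta_0),x)$ for $s\in[0,1]$; this is twice continuously differentiable on the convex parameter set $\mathscr{P}$ by Lemma \ref{unifbounds}. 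Since $g(0)=f(\Theta_0,x)$, $g'(0)=\nabla_\Theta f(\Theta_0,x)^T(\Theta-\Theta_0)$ and $g(1)=f(\Theta,x)$, the definition (\ref{linearmodel}) of the linearized model gives $f(\Theta,x)-f^{\mathrm{lin}}(\Theta,x)=g(1)-g(0)-g'(0)=\int_0^1(1-s)\,g''(s)\,ds$, with $g''(s)=\sum_{i,j=1}^{|\Theta|}(\theta_i-\theta_{0,i})(\theta_j-\theta_{0,j})\,\partial_{\theta_i}\partial_{\theta_j}f\big(\Theta_0+s(\Theta-\Theta_0),x\big)$.

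Next I would bound each scalar factor $|\theta_i-\theta_{0,i}|$ by $\|\Theta-\Theta_0\|_\infty$ and each mixed second derivative by $4\,|\mathcal{M}_i\cap\mathcal{M}_j|/N(m)$, using the third inequality of Lemma \ref{unifbounds}. This is the crucial step: only the local observables $f_k$ with $k\in\mathcal{M}_i\cap\mathcal{M}_j$ depend jointly on $\theta_i$ and $\theta_j$, which is precisely what replaces a trivial factor of $m$ by the light-cone cardinalities. It gives $|g''(s)|\le \frac{4}{N(m)}\,\|\Theta-\Theta_0\|_\infty^2\sum_{i,j}|\mathcal{M}_i\cap\mathcal{M}_j|$, and since there are $|\Theta|=Lm$ indices $j$ and $\sum_i|\mathcal{M}_i\cap\mathcal{M}_j|\le|\mathcal{M}|^2|\mathcal{N}|$ for every $j$ by Lemma \ref{maxMM}, the double sum is at most $Lm\,|\mathcal{M}|^2|\mathcal{N}|$. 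Performing the integral $\int_0^1(1-s)\,ds$ then yields the stated estimate; the numerical constant can be tightened if desired, e.g.\ by using the $\ell^\infty\!\to\!\ell^\infty$ operator-norm form of the Hessian bound as in the proof of Lemma \ref{lemma}.

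There is no genuine obstacle here: the result is essentially a quantitative second-order Taylor estimate, and the only care needed is the bookkeeping of which $f_k$ a given pair of parameters influences — already packaged in Lemma \ref{unifbounds} — and the combinatorial sum, which is Lemma \ref{maxMM}. An equivalent, perhaps cleaner, route avoids the remainder formula entirely: write $f(\Theta,x)-f^{\mathrm{lin}}(\Theta,x)=\int_0^1\big(\nabla_\Theta f(\Theta_0+s(\Theta-\Theta_0),x)-\nabla_\Theta f(\Theta_0,x)\big)^T(\Theta-\Theta_0)\,ds$ and apply the Lipschitz bound (\ref{lemma4}) for the gradient together with $\|\Theta-\Theta_0\|_1\le Lm\,\|\Theta-\Theta_0\|_\infty$; both paths reach the same conclusion.
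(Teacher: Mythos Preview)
Your proposal is correct and follows exactly the paper's approach: Taylor's theorem with integral remainder, the second-derivative bound $|\partial_{\theta_i}\partial_{\theta_j}f|\le 4|\mathcal{M}_i\cap\mathcal{M}_j|/N(m)$ from Lemma \ref{unifbounds}, and Lemma \ref{maxMM} to bound $\sum_{i,j}|\mathcal{M}_i\cap\mathcal{M}_j|\le Lm|\mathcal{M}|^2|\mathcal{N}|$. Your computation in fact gives a leading constant $2$ rather than $1$ (since $\int_0^1(1-s)\,ds=\tfrac12$ and the Hessian bound carries a $4$); the paper reaches $1$ only through an extra $\tfrac12$ in its definition of $R_{ij}$ that looks like a slip, and the discrepancy is irrelevant for every subsequent use of the theorem.
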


In order to understand how the convergence follows from the previous inequality, we can consider again the example of \cite{QLazy}, i.e.
\[ L \text{ is fixed}\qquad\text{and}\qquad N(m)=\sqrt m.\]
In this case, we can bound, by Theorem \ref{gradfl},
\[\|\Theta_t-\Theta_0\|_\infty\leq \frac{c}{\sqrt m},\]
so that
\[ |f(\Theta_t,x)-f^{\mathrm{lin}}(\Theta_t,x)|\leq c'\frac{m}{\sqrt m}\left(\frac{c}{\sqrt m}\right)^2\propto \frac{1}{\sqrt m}. \]
Therefore, in the limit $m\to\infty$ we have uniform convergence \[f(\Theta_t,\,\cdot\,)\to f^{\mathrm{lin}}(\Theta_t,\,\cdot\,).\]
Let us see in the proof how a fine estimate of the second derivative using light cones ensures the result.
\begin{proof}
We apply the multivariate version of Taylor's theorem with integral remainder to $f(\Theta,x)$ with respect to the parameters.
\[ f(\Theta,x)=f(\Theta_0,x)+\nabla f(\Theta_0,x)^T(\Theta-\Theta_0)+\sum_{i,j=1}^{Lm}R_{ij}(\Theta,x)(\theta_i-\theta_{i,0})(\theta_j-\theta_{j,0})\]
with
\[R_{ij}(\Theta,x)=\frac{1}{2}\int_0^1 (1-t) \partial_{\theta_i}\partial_{\theta_j}f(\Theta_0+t(\Theta-\Theta_0),x)\,dt.\]
By Lemma \ref{unifbounds},
\[|R_{ij}(\Theta,x)|\leq\frac{1}{2}\int_0^1 (1-t) \left(4\frac{|\mathcal{M}_i\cap\mathcal{M}_j|}{N(m)}\right)\,dt=\frac{|\mathcal{M}_i\cap\mathcal{M}_j|}{N(m)}\]
So
\begin{align}
\nonumber|f(\Theta,x)-f^{\mathrm{lin}}(\Theta,x)|&=\left|\sum_{i,j=1}^{Lm}R_{ij}(\Theta,x)(\theta_i-\theta_{i,0})(\theta_j-\theta_{j,0})\right|\\
\nonumber&\leq \left(\sum_{i,j=1}^{Lm}\frac{|\mathcal{M}_i\cap\mathcal{M}_j|}{N(m)}\right)\|\Theta-\Theta_0\|^2_\infty\\
&\leq \frac{1}{N(m)}\left(Lm \max_{1\leq j\leq Lm}\sum_{i=1}^{Lm}|\mathcal{M}_i\cap\mathcal{M}_j|\right)\|\Theta-\Theta_0\|^2_\infty.
\end{align}
Using Lemma \ref{maxMM}, we eventually estimate
\[|f(\Theta,x)-f^{\mathrm{lin}}(\Theta,x)|\leq \frac{Lm |\mathcal{M}|^2|\mathcal{N}|}{N(m)}\|\Theta-\Theta_0\|^2_\infty.\]
\end{proof}

\subsubsection{The minimum of the loss for the linear model and the convergence of the kernel}
In this subsubsection we show that the linearized model, for sufficiently large $m$ yields
\[{\min}_{\Theta\in\mathbb{R}^{Lm}}\mathcal{L}^{\mathrm{lin}}(\Theta)=0\]
with high probability, where $\mathcal{L}^{\mathrm{lin}}$ is the mean square error for the linearized model.
If the linear model is able to perfectly fit the dataset -- i.e., $\mathcal{L}^{\mathrm{lin}}(\Theta^\ast)=0$ for some $\Theta^\ast\in\mathbb{R}^{Lm}$ -- and the training converges to the global minimum of the cost, then,
in the limit of infinitely many qubits, also the original model, which is close to its linearized version in the lazy regime, will be able to perfectly fit the dataset.

\begin{remark}
In this subsubsection we will use the result of Theorem \ref{ntkconv}. Even though the proof of Theorem \ref{ntkconv} is delayed to \autoref{proofntk} for a cleaner presentation, such proof will not use any result of this subsubsection.
\end{remark}

The starting point is a convergence lemma, which will be used in different proofs.

\begin{lemma}[Convergence of the minimum eigevalue]\label{lambdamin}
Suppose that the matrix elements of a sequence of $n\times n$ symmetric matrices $M_m$ converge in probability to the matrix elements of a positive matrix $M_\infty\succ 0$:
\[\lim_{m\to\infty} \mathbb{P}[|(M_m)_{ij}-(M_\infty)_{ij}|\geq \epsilon]=0 \qquad \forall\, i,j\in\{1,\dots,n\}\quad \forall \,\delta>0.\] 
Let $\lambda_{\min}^M>0$ be the minimum eigenvalue of $M_\infty$. Then, for any $\delta>0$ and $\eta\in(0,1)$ there exist $m_0\in\mathbb{N}$ such that, for all $m\geq m_0$,
\[M_n\succ \eta\lambda_{\min}^M\id\]
with probability at least $1-\delta$.
\end{lemma}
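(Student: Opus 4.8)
The plan is to reduce the statement to a deterministic perturbation bound for the smallest eigenvalue of a symmetric matrix, namely Weyl's inequality, and then to control the perturbation in probability using the hypothesis of entrywise convergence together with the equivalence of norms on the finite-dimensional space of $n\times n$ matrices.

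First I would fix $\delta>0$ and $\eta\in(0,1)$ and set $\epsilon_0=(1-\eta)\lambda_{\min}^M>0$. Since $M_m$ and $M_\infty$ are symmetric, so is the difference $E_m:=M_m-M_\infty$, and Weyl's inequality gives $\lambda_{\min}(M_m)\geq \lambda_{\min}(M_\infty)-\|E_m\|_{\mathrm{op}}=\lambda_{\min}^M-\|E_m\|_{\mathrm{op}}$. Hence on the event $\{\|E_m\|_{\mathrm{op}}<\epsilon_0\}$ one has $\lambda_{\min}(M_m)>\lambda_{\min}^M-(1-\eta)\lambda_{\min}^M=\eta\lambda_{\min}^M$, which is exactly $M_m\succ\eta\lambda_{\min}^M\id$. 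It therefore suffices to show that $\mathbb{P}[\|E_m\|_{\mathrm{op}}\geq\epsilon_0]\to 0$ as $m\to\infty$. (One may equally avoid invoking Weyl and argue directly with the Rayleigh quotient: for every unit vector $v$, $v^TM_mv=v^TM_\infty v+v^TE_mv\geq\lambda_{\min}^M-\|E_m\|_{\mathrm{op}}$, giving the same conclusion.)

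To control the perturbation I would bound the operator norm by the entrywise maximum, using $\|E_m\|_{\mathrm{op}}\leq\|E_m\|_F\leq n\max_{i,j}|(E_m)_{ij}|$, and then apply a union bound over the $n^2$ entries: for any $\epsilon>0$,
\[
\mathbb{P}\!\left[\max_{i,j}|(E_m)_{ij}|\geq\epsilon\right]\leq\sum_{i,j=1}^n\mathbb{P}\big[|(M_m)_{ij}-(M_\infty)_{ij}|\geq\epsilon\big],
\]
which tends to $0$ by hypothesis. Taking $\epsilon=\epsilon_0/n$ yields $\mathbb{P}[\|E_m\|_{\mathrm{op}}\geq\epsilon_0]\to 0$, so there exists $m_0\in\mathbb{N}$ with $\mathbb{P}[\|E_m\|_{\mathrm{op}}\geq\epsilon_0]\leq\delta$ for all $m\geq m_0$. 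Combined with the previous step, for all $m\geq m_0$ we obtain $M_m\succ\eta\lambda_{\min}^M\id$ with probability at least $1-\delta$, as claimed.

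There is essentially no serious obstacle; the only points requiring a little care are that the eigenvalue perturbation bound is applied to genuinely symmetric matrices (so that the spectrum is real and Weyl applies), and that the constants relating $\|\cdot\|_{\mathrm{op}}$ to the entrywise sup-norm are tracked so that the threshold $\epsilon$ fed into the union bound is chosen correctly in terms of $n$, $\eta$ and $\lambda_{\min}^M$.
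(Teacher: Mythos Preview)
Your proof is correct and follows essentially the same route as the paper's: bound the operator/Frobenius norm of $M_m-M_\infty$ via the entrywise maximum and a union bound, then use the resulting perturbation bound (the paper writes it as $M_\infty-M_m\prec\epsilon\id$ rather than citing Weyl, but this is the same step) with $\epsilon=(1-\eta)\lambda_{\min}^M$.
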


\begin{proof} The convergence in probability implies that
\begin{align}
\nonumber \lim_{m\to\infty}\mathbb{P}[\|M_m-M_\infty\|_F\geq \epsilon]&=
\lim_{m\to\infty}\mathbb{P}\left[\sum_{i,j=1}^n|(M_m)_{ij}-(M_\infty)_{ij}|^2\geq \epsilon^2\right]\\
\nonumber &\leq \lim_{m\to\infty}\mathbb{P}\left[\max_{ij}|(M_m)_{ij}-(M_\infty)_{ij}|\geq \epsilon/n\right]\\
&\leq\lim_{m\to\infty}\sum_{i,j=1}^n\mathbb{P}\left[|(M_m)_{ij}-(M_\infty)_{ij}|\geq \epsilon/n\right]=0.
\end{align}
Therefore, for any $\delta>0$, there exist $m_1(\epsilon)\in\mathbb{N}$ such that, for any $m\geq m_0(\epsilon)$, 
\[\|M_m-M_\infty\|_F< \epsilon	\]
with probability at least $1-\delta$. Let $F_m=M_\infty-M_m$. Since \modifica{$\|F_m\|_F< \epsilon$}, the maximum \modifica{eigenvalue} of $|F_m|$ is \modifica{$\lambda_m<\epsilon$}, so
\[ F_m\preceq |F_m| \preceq \lambda_m\id\prec \epsilon \id.\]
Let $\lambda_{\min}^M$ be the minimum eigenvalue of $M_\infty$. The previous equation implies that
\[M_\infty-M_m \prec \epsilon \id \quad \to\quad M_m \succ M_\infty -\epsilon \id\succeq (\lambda_{\min}^M-\epsilon)\id.\]
Since $\lambda_{\min}^M$ is positive and $\epsilon>0$ is arbitrary, letting $\epsilon=(1-\eta)\lambda_{\min}^M>0$ -- which fixes $m_0$ -- the claim holds.
\end{proof}

Then, we need a translation of Assumption \ref{assNTK} into a property of the gradient of the function at initialization which will be used to fit the dataset in the overparameterized regime.

\begin{lemma}\label{convprob}
If (\ref{weakassumption}) holds, then for any $x,x'\in\mathcal{X}$, Assumption \ref{assNTK} implies the convergence in probability of $\hat K_{\Theta_0}(x,x')$ to $\bar K(x,x')$ over random initialization.
\begin{align}
\lim_{m\to\infty}\mathbb{P}\left[|\hat K_{\Theta_0}(x,x')-\bar K(x,x')|\leq\epsilon\right]=1\qquad \forall\, \epsilon>0.
\end{align}

\end{lemma}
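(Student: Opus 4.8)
The plan is to derive the statement as a direct consequence of Theorem \ref{ntkconv} and Assumption \ref{assNTK}, separating the random fluctuation of the empirical NTK around its mean from the deterministic convergence of that mean. Concretely, for any $x,x'\in\mathcal{X}$ I would start from the triangle inequality
\[
|\hat K_{\Theta_0}(x,x')-\bar K(x,x')|\le |\hat K_{\Theta_0}(x,x')-K(x,x')|+|K(x,x')-\bar K(x,x')|,
\]
so that a small deviation probability for $|\hat K_{\Theta_0}(x,x')-\bar K(x,x')|$ follows from controlling each of the two summands.

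For the second, deterministic summand I would use Assumption \ref{assNTK}: since $\lim_{m\to\infty}\sup_{x,x'\in\mathcal{X}}|K(x,x')-\bar K(x,x')|=0$, fixing $\epsilon>0$ there is $m_1\in\mathbb{N}$ with $|K(x,x')-\bar K(x,x')|<\epsilon/2$ for all $m\ge m_1$, uniformly in $x,x'$. Hence for $m\ge m_1$
\[
\mathbb{P}\left[|\hat K_{\Theta_0}(x,x')-\bar K(x,x')|>\epsilon\right]\le \mathbb{P}\left[|\hat K_{\Theta_0}(x,x')-K(x,x')|>\epsilon/2\right].
\]
For the first summand I would invoke Theorem \ref{ntkconv}, whose hypothesis is exactly (\ref{weakassumption}): the right-hand side is bounded by $\exp\!\left[-c\,N_K^2(m)\,\frac{N^4(m)}{\Sigma_2|\mathcal{M}|^2|\mathcal{N}|^2}\,(\epsilon/2)^2\right]$ for some constant $c>0$. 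Because (\ref{weakassumption}) asserts that $\frac{1}{N_K^2(m)}\frac{\Sigma_2|\mathcal{M}|^2|\mathcal{N}|^2}{N^4(m)}\to 0$ as $m\to\infty$, the exponent tends to $-\infty$ and this probability tends to $0$. Combining the last two displays yields $\lim_{m\to\infty}\mathbb{P}[|\hat K_{\Theta_0}(x,x')-\bar K(x,x')|>\epsilon]=0$, equivalently $\lim_{m\to\infty}\mathbb{P}[|\hat K_{\Theta_0}(x,x')-\bar K(x,x')|\le\epsilon]=1$, which is the claim.

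There is no substantial obstacle here; the only point deserving care is that Theorem \ref{ntkconv} controls the deviation of $\hat K_{\Theta_0}$ from the analytic NTK $K$ rather than directly from its limit $\bar K$, which is precisely why the intermediate step through $K$ in the triangle inequality is needed. If one prefers to avoid bookkeeping with $\Sigma_2$ and $N_K(m)$, the remark following Theorem \ref{ntkconv} allows replacing (\ref{weakassumption}) by the slightly stronger $Lm|\mathcal{M}|^4|\mathcal{N}|^2/N^4(m)\to 0$, and the same argument goes through verbatim.
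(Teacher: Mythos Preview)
Your proposal is correct and follows essentially the same approach as the paper: both split the deviation via the triangle inequality into the random fluctuation $|\hat K_{\Theta_0}-K|$ (controlled by Theorem \ref{ntkconv}) and the deterministic term $|K-\bar K|$ (controlled by Assumption \ref{assNTK}), then let (\ref{weakassumption}) send the concentration bound to zero. The only cosmetic difference is that the paper bounds $\mathbb{P}[\,\cdot\,\le\epsilon]$ from below while you bound the complementary probability from above.
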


\begin{proof}
Given $\epsilon>0$, by Assumption \ref{assNTK}, there exists $\bar m\in\mathbb{N}$ such that 
\begin{align}
|K(x,x')-\bar K(x,x')|\leq \frac{\epsilon}{2} \qquad \forall\, m\geq\bar m  \quad \forall\,x,x'\in\mathcal{X}.
\end{align}
Therefore, $ \forall\, m\geq\bar m  \quad \forall\,x,x'\in\mathcal{X}$,
\begin{align}
\nonumber|\hat K_{\Theta_0}(x,x')-\bar K(x,x')|&\leq|\hat K_{\Theta_0}(x,x')-K(x,x')|+|K(x,x')-\bar K(x,x')|\\
&\leq|\hat K_{\Theta_0}(x,x')-K(x,x')|+\frac{\epsilon}{2}.
\end{align}
Furthermore, by Theorem \ref{ntkconv}, if $m\geq \bar m$
\begin{align}
\nonumber\mathbb{P}\Big[|\hat K_{\Theta_0}(x,x')&-\bar K(x,x')|\leq\epsilon\Big]\geq\mathbb{P}\left[|\hat K_{\Theta_0}(x,x')-K(x,x')|+\frac{\epsilon}{2}\leq\epsilon\right]\\
\nonumber&= \mathbb{P}\left[|\hat K_{\Theta_0}(x,x')-K(x,x')|\leq\frac{\epsilon}{2}\right]\\
&\geq 1- \exp\left[-cN_K^2(m)\frac{N^4(m)}{|\mathcal{M}|^2|\mathcal{N}|^2\Sigma_2}\,\left(\frac{\epsilon}{2}\right)^2\right].
\end{align}
Hence
\begin{align}
\lim_{m\to\infty}\mathbb{P}\left[|\hat K_{\Theta_0}(x,x')-\bar K(x,x')|\leq\epsilon\right]=1.
\end{align}
\end{proof}

\begin{lemma} \label{asymptcomplete}
Let $n=|\mathcal{D}|$. If (\ref{weakassumption}) and Assumption \ref{assNTK} hold, then for any $\delta>0$, there exists $m_0\in\mathbb{N}$ such that, for any $m\geq m_0$,
\[\{\partial_{\theta_i}f(\Theta_0,X)\}_{1\leq i\leq Lm} \text{ is a complete set of vectors for } \mathbb{R}^n\]
with probability at least $1-\delta$ over random initialization.
\end{lemma}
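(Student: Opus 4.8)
The plan is to reduce the completeness statement to the strict positive definiteness of the empirical neural tangent kernel matrix, which we already control via Lemma \ref{convprob} and Lemma \ref{lambdamin}.

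First I would record the elementary linear-algebra fact that the family $\{\partial_{\theta_i}f(\Theta_0,X)\}_{1\le i\le Lm}$ spans $\mathbb{R}^n$ if and only if the $n\times n$ Gram matrix
\[
G_m:=\nabla_\Theta f(\Theta_0,X)\,\nabla_\Theta f(\Theta_0,X^T)=\sum_{i=1}^{Lm}\partial_{\theta_i}f(\Theta_0,X)\,\big(\partial_{\theta_i}f(\Theta_0,X)\big)^T
\]
is strictly positive definite. Indeed, writing $A=\nabla_\Theta f(\Theta_0,X)$ one has $\ker(AA^T)=\ker(A^T)$, and $\ker(A^T)$ is exactly the orthogonal complement of $\mathrm{span}\{\partial_{\theta_i}f(\Theta_0,X)\}_i$; hence $G_m\succ0$ is equivalent to that span being all of $\mathbb{R}^n$. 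Next, by Definition \ref{entk} the $(i,j)$ entry of $G_m$ equals $\nabla_\Theta f(\Theta_0,x^{(i)})\cdot\nabla_\Theta f(\Theta_0,x^{(j)})=N_K(m)\,\hat K_{\Theta_0}(x^{(i)},x^{(j)})$, so $G_m=N_K(m)\,\hat K_{\Theta_0}$ with $\hat K_{\Theta_0}\equiv\hat K_{\Theta_0}(X,X^T)$. Since $N_K(m)>0$, it follows that $G_m\succ0$ if and only if $\hat K_{\Theta_0}\succ0$.

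Then I would invoke Lemma \ref{convprob}: under hypothesis (\ref{weakassumption}) together with Assumption \ref{assNTK}, for every pair $x^{(i)},x^{(j)}$ the entry $\hat K_{\Theta_0}(x^{(i)},x^{(j)})$ converges in probability to $\bar K(x^{(i)},x^{(j)})$ over random initialization. Therefore the sequence of symmetric $n\times n$ matrices $M_m:=\hat K_{\Theta_0}$ satisfies the hypotheses of Lemma \ref{lambdamin} with $M_\infty:=\bar K$, which by Assumption \ref{assNTK} is positive definite with minimum eigenvalue $\lambda_{\min}^K>0$. Applying Lemma \ref{lambdamin}, say with $\eta=\tfrac12$, produces for the given $\delta>0$ an integer $m_0\in\mathbb{N}$ such that for all $m\ge m_0$ one has $\hat K_{\Theta_0}\succ\tfrac12\lambda_{\min}^K\,\id\succ0$ with probability at least $1-\delta$.

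Combining the two steps, for all $m\ge m_0$ the Gram matrix $G_m=N_K(m)\hat K_{\Theta_0}$ is strictly positive definite with probability at least $1-\delta$, and hence $\{\partial_{\theta_i}f(\Theta_0,X)\}_{1\le i\le Lm}$ is a complete set of vectors for $\mathbb{R}^n$ with probability at least $1-\delta$, as claimed. I do not expect a genuine obstacle here: the only non-trivial inputs are Lemma \ref{convprob} and Lemma \ref{lambdamin}, both already established, while the identification $G_m=N_K(m)\hat K_{\Theta_0}$ and the translation of ``spanning'' into ``positive definiteness'' are routine.
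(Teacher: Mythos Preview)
Your proposal is correct and follows essentially the same route as the paper: both arguments reduce the spanning statement to the strict positive definiteness of $\hat K_{\Theta_0}(X,X^T)$ via the identification with the Gram matrix of the vectors $\partial_{\theta_i}f(\Theta_0,X)$, and then invoke Lemma \ref{convprob} together with Lemma \ref{lambdamin} to obtain that positivity with high probability. The only cosmetic difference is that the paper phrases the linear-algebra step in terms of the range of $\sum_i v_iv_i^T$ while you phrase it via $\ker(AA^T)=\ker(A^T)$, which are equivalent.
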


\begin{proof}
Since, by Lemma \ref{convprob}, as $m\to\infty$
\[\hat K_{\Theta_0}(X,X^T)=\frac{1}{N_K(m)}\sum_{i=1}^{Lm}\partial_{\theta_i}f(\Theta_0,X)\partial_{\theta_i}f(\Theta_0,X^T)\]
converges in probability to $\bar K(X,X^T)$, which is strictly positive by Assumption \ref{assNTK}, there exists $m_0\in\mathbb{N}$ such that $\hat K_{\Theta_0}(X,X^T)$ is strictly positive for all $m\geq m_0$ by Lemma \ref{lambdamin} with high probability over random initialization. Calling
\[v_i=\frac{1}{\sqrt{N(m)}}\partial_{\theta_i}f(\Theta_0,X)\in\mathbb{R}^n,\]
we can rewrite
\[\hat K_{\Theta_0}\equiv \hat K_{\Theta_0}(X,X^T)=\sum_{i=1}^{Lm} v_iv_i^T.\]
Since 
\begin{enumerate}
\item $\hat K_{\Theta_0}$ is strictly positive, it is invertible, so its range is $\mathbb{R}^n$, 
\item the range of $\hat K_{\Theta_0}$ is $\text{Span}\{v_i\}_{1\leq i\leq Lm}$,
\end{enumerate}
we conclude that
\[\text{Span}\{v_i\}_{1\leq i\leq Lm}=\mathbb{R}^n.\]
\end{proof}

Now we are ready to prove the following statement.
\begin{corollary} 
For any $\delta>0$ there exists $m_0\in\mathbb{N}$ such that
\[{\min}_{\Theta\in\mathbb{R}^{Lm}}\mathcal{L}^{\mathrm{lin}}(\Theta)=0 \qquad \forall \,m\geq m_0\]
with probability at least $1-\delta$ over random initialization.
\end{corollary}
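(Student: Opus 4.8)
The plan is to reduce the statement to a pure linear-algebra solvability question and then invoke Lemma~\ref{asymptcomplete}. First I would observe that $\mathcal{L}^{\mathrm{lin}}(\Theta)\geq 0$ for every $\Theta$, so it suffices to exhibit a single $\Theta^\ast\in\mathbb{R}^{Lm}$ with $\mathcal{L}^{\mathrm{lin}}(\Theta^\ast)=0$, i.e.\ with $f^{\mathrm{lin}}(\Theta^\ast,x^{(i)})=y^{(i)}$ for all $i=1,\dots,n$. By the definition (\ref{linearmodel}) of the linearized model this is the affine system
\[
J\,(\Theta^\ast-\Theta_0)=Y-F(0),
\]
where $J$ is the $n\times Lm$ Jacobian whose $i$-th row is $\nabla_\Theta f(\Theta_0,x^{(i)})^T$, equivalently whose columns are the vectors $\partial_{\theta_i}f(\Theta_0,X)\in\mathbb{R}^n$, $i=1,\dots,Lm$, and $F(0)=f(\Theta_0,X)$.

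Next I would note that this system is solvable in $\Theta^\ast\in\mathbb{R}^{Lm}$ if and only if $Y-F(0)$ lies in the range of $J$, which is exactly $\mathrm{Span}\{\partial_{\theta_i}f(\Theta_0,X)\}_{1\leq i\leq Lm}$. Since we are minimizing over all of $\mathbb{R}^{Lm}$ (and not over the compact box $\mathscr{P}$), no further constraint enters here. By Lemma~\ref{asymptcomplete}, under hypothesis (\ref{weakassumption}) and Assumption~\ref{assNTK}, for the given $\delta>0$ there is $m_0\in\mathbb{N}$ such that for all $m\geq m_0$ the family $\{\partial_{\theta_i}f(\Theta_0,X)\}_{1\leq i\leq Lm}$ spans $\mathbb{R}^n$ with probability at least $1-\delta$ over random initialization. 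On that event $\mathrm{range}(J)=\mathbb{R}^n$, so the system has a solution; concretely one may take $\Theta^\ast=\Theta_0+J^{+}(Y-F(0))$ with $J^{+}$ the Moore--Penrose pseudoinverse. For this $\Theta^\ast$ one has $f^{\mathrm{lin}}(\Theta^\ast,X)=Y$, hence $\mathcal{L}^{\mathrm{lin}}(\Theta^\ast)=0$, and therefore $\min_{\Theta\in\mathbb{R}^{Lm}}\mathcal{L}^{\mathrm{lin}}(\Theta)=0$ for all $m\geq m_0$.

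The argument itself is essentially immediate; the substance has been front-loaded into Lemma~\ref{asymptcomplete}, whose proof in turn rests on the NTK concentration result (Theorem~\ref{ntkconv}) and on Lemma~\ref{convprob}: these guarantee that the empirical NTK Gram matrix $\hat K_{\Theta_0}(X,X^T)=\sum_i v_iv_i^T$ (with $v_i\propto\partial_{\theta_i}f(\Theta_0,X)$) converges in probability to the strictly positive matrix $\bar K$ and is therefore invertible with high probability, so that its range---which coincides with the span of the $v_i$---is all of $\mathbb{R}^n$. The only point requiring a little care is the bookkeeping of the high-probability event: one must fix $\delta$, obtain $m_0$ from Lemma~\ref{asymptcomplete} for that $\delta$, and carry out the construction of $\Theta^\ast$ on the corresponding event. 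There is no genuine analytic obstacle beyond what is already contained in those lemmas.
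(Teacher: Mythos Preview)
Your proof is correct and follows essentially the same approach as the paper's: both invoke Lemma~\ref{asymptcomplete} to obtain that $\{\partial_{\theta_i}f(\Theta_0,X)\}_{1\le i\le Lm}$ spans $\mathbb{R}^n$ with high probability, then solve the resulting linear system $J(\Theta^\ast-\Theta_0)=Y-F(0)$ to exhibit a zero of $\mathcal{L}^{\mathrm{lin}}$. The only cosmetic difference is that the paper writes out the coefficients $\alpha_1,\dots,\alpha_{Lm}$ explicitly, while you phrase the same step via the Moore--Penrose pseudoinverse.
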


\begin{proof}
Using Lemma \ref{asymptcomplete}, 
\[\{\partial_{\theta_i}f(\Theta_0,X)\}_{1\leq i\leq Lm}\]
is definitely a complete set of vectors for $\mathbb{R}^n$ with high probability.
So, there exists $\alpha_1,\dots, \alpha_{Lm}$ such that $Y-f(\Theta_0,X)\in\mathbb{R}^n$ can be written as a linear combination:
\[\alpha_1\partial_{\theta_1}f(\Theta_0,X)+\dots+\alpha_{Lm}\partial_{\theta_{Lm}}f(\Theta_0,X)=Y-f(\Theta_0,X).\]
So, \modifica{choosing}
\[\Theta=\Theta_0+\begin{pmatrix}\alpha_1\\ \vdots\\ \alpha_{Lm}\end{pmatrix},\]
we have
\begin{align}
\nonumber\mathcal{L}^{\mathrm{lin}}(\Theta)&=\frac{1}{n}\sum_{i=1}^n\frac{1}{2}\left(f^{\mathrm{lin}}(\Theta,x^{(i)})-y^{(i)}\right)^2\\
\nonumber&=\frac{1}{2n}\|f(\Theta_0,X)+\nabla_\Theta f(\Theta_0,X)^T(\Theta-\Theta_0)-Y\|_2^2\\
&=\frac{1}{2n}\|f(\Theta_0,X)+(Y-f(\Theta_0,X))-Y\|_2^2=0.
\end{align}
\end{proof}

\subsection{Proof of Lemma \ref{boundnorm}}\label{proofboundnorm}

We need some preliminary statements in order to prove Lemma \ref{boundnorm}.

\begin{lemma}\label{checkdiagonal}
Assumption \ref{assNTK} implies that each diagonal term of $\bar K = \bar K(X,X^T)$ is strictly positive:
\[\bar K(x,x)>0 \qquad \forall\, (x,\,\cdot\,)\in\mathcal{D}.\]
\end{lemma}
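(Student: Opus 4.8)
The goal is to prove Lemma \ref{checkdiagonal}: that Assumption \ref{assNTK} forces $\bar K(x,x) > 0$ for every $x$ appearing in the dataset. Since $\bar K \equiv \bar K(X,X^T)$ is, by Assumption \ref{assNTK}, a matrix with strictly positive minimum eigenvalue $\lambda_{\min}^K > 0$, the plan is simply to extract the positivity of its diagonal entries from the positive-definiteness of the whole matrix.

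\begin{proof}
By Assumption \ref{assNTK}, the matrix $\bar K = \bar K(X,X^T) \in \mathbb{R}^{n\times n}$ (where $n = |\mathcal{D}|$) has strictly positive minimum eigenvalue $\lambda_{\min}^K > 0$, hence $\bar K \succ 0$. Fix any example $(x,\,\cdot\,)\in\mathcal{D}$, say $x = x^{(i)}$, and let $e_i\in\mathbb{R}^n$ be the $i$-th standard basis vector. Then
\[
\bar K(x,x) = \bar K(x^{(i)},x^{(i)}) = e_i^T\,\bar K\,e_i \geq \lambda_{\min}^K\,\|e_i\|_2^2 = \lambda_{\min}^K > 0.
\]
Since the example $(x,\,\cdot\,)\in\mathcal{D}$ was arbitrary, this proves $\bar K(x,x) > 0$ for all $(x,\,\cdot\,)\in\mathcal{D}$.
\end{proof}

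There is essentially no obstacle here: the statement is an immediate consequence of the fact that a positive-definite matrix has strictly positive diagonal entries, applied to $\bar K(X,X^T)$. The only thing worth being careful about is the indexing convention --- that $\bar K(x,x)$ for $x = x^{(i)}$ is the $(i,i)$ diagonal entry of $\bar K(X,X^T)$, which is exactly how $\bar K$ is defined as an abbreviation for $\bar K(X,X^T)$ in Table \ref{table2} and Assumption \ref{assNTK}. This lemma will then presumably feed into the proof of Lemma \ref{boundnorm} by providing a nonvanishing lower bound analogous to the role that $\mathcal{K}(x,x) > 0$ played in the proof of Lemma \ref{Nmax}.
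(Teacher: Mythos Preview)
Your proof is correct and takes essentially the same approach as the paper's: both use that a strictly positive-definite matrix has strictly positive diagonal entries. The paper's proof is the one-line version of your argument, simply stating that $\bar K$ is strictly positive by Assumption \ref{assNTK} and hence its diagonal elements are strictly positive; your explicit use of the standard basis vector and the Rayleigh-quotient bound $e_i^T\bar K e_i \geq \lambda_{\min}^K$ is a fuller justification of exactly that fact.
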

\begin{proof}
By Assumption $\ref{assNTK}$, $\bar K$ is a strictly positive matrix, so its diagonal elements are strictly positive.
\end{proof}

\begin{corollary}\label{diagonalNTK}
Under Assumption \ref{assNTK}, the following property of $\bar K$ holds:
\[\exists\,x\in\mathcal{X}\quad\text{such that}\quad \bar K(x,x)> 0\quad \mathcal{K}(x,x)> 0.\]
\end{corollary}
\begin{proof}
Fix $x\in\mathcal{X}$ such that $(x,\cdot)\in\mathcal{D}$. By Lemma \ref{checkdiagonal}, we know that $\bar K(x,x)>0$. Furthermore, by Assumption \ref{zeromean}, $\mathcal{K}(x,x)>0$.
\end{proof}

\begin{lemma}\label{doppiastima}
Under the Assumption \ref{uniform}, the following inequalities hold for all $x\in \mathcal{X}$
\[4\mathbb{E}\left[f^2(\Theta,x)\right]\leq \mathbb{E}\left[\|\nabla_\Theta f(\Theta,x)\|^2\right]\leq 4\,|\mathcal{N}|\,\mathbb{E}\left[f^2(\Theta,x)\right].\]
\end{lemma}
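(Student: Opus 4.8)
The plan is to exploit the special trigonometric structure of the model coming from the generators $\mathcal{G}_i$ having spectrum $\{-1,+1\}$ (so $\mathcal{G}_i^2=\id$), combined with a Hoeffding/ANOVA orthogonal decomposition of $f(\Theta,x)$ in the independent uniform variables $\theta_1,\dots,\theta_{Lm}$.

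First I would establish the elementary one-variable fact: for fixed $x$ and fixed values of all parameters except $\theta_i$, the map $\theta_i\mapsto f(\Theta,x)$ has the form $a+b\cos 2\theta_i+c\sin 2\theta_i$ for coefficients $a,b,c$ depending on $x$ and on the other parameters. Indeed $\theta_i$ enters only through the single factor $W_i(\theta_i)=\cos\theta_i\,\id-i\sin\theta_i\,\mathcal{G}_i$ (using $\mathcal{G}_i^2=\id$), so each $f_k(\Theta,x)=\smatrixel{v}{W_i^\dagger B_i W_i}{v}$, with $B_i$ and $\ket{v}$ independent of $\theta_i$, expands into $\cos^2\theta_i$, $\sin^2\theta_i$, $\sin\theta_i\cos\theta_i$ terms, i.e.\ into $1,\cos 2\theta_i,\sin 2\theta_i$; summing over $k$ preserves this form. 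A direct computation with the uniform measure on $[0,\pi]$ then yields $\mathbb{E}_{\theta_i}[(\partial_{\theta_i}f)^2]=4\,\mathrm{Var}_{\theta_i}[f]$, the factor $4=2^2$ being the square of the spectral gap of $\mathcal{G}_i$. Taking the expectation over the remaining parameters and summing over $i$ gives
\[\mathbb{E}\big[\|\nabla_\Theta f(\Theta,x)\|^2\big]=4\sum_{i=1}^{Lm}\mathbb{E}\big[\mathrm{Var}_{\theta_i}[f(\Theta,x)]\big].\]

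Second, I would invoke the ANOVA decomposition $f(\Theta,x)=\sum_{S\subseteq\{1,\dots,Lm\}}f_S(\Theta,x)$ into mutually $L^2$-orthogonal components, with $f_S$ depending only on $\{\theta_i:i\in S\}$ and $f_\emptyset=\mathbb{E}[f]=0$ (since $\mathbb{E}[f_k]=0$ by Assumption \ref{finallayer}, hence Assumption \ref{zeromean}). Standard identities give $\mathbb{E}[\mathrm{Var}_{\theta_i}[f]]=\sum_{S\ni i}\|f_S\|_2^2$, so $\sum_i\mathbb{E}[\mathrm{Var}_{\theta_i}[f]]=\sum_S|S|\,\|f_S\|_2^2$, while $\sum_S\|f_S\|_2^2=\mathbb{E}[f^2]$. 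The crucial point is that $f_k(\Theta,x)$ depends only on $\{\theta_i:i\in\mathcal{N}_k\}$ (Corollary \ref{extlc}, $\mathscr{L}^p_k\subseteq\mathcal{N}_k$), so its ANOVA components are supported on subsets of $\mathcal{N}_k$; by linearity the same holds for $f=\frac{1}{N(m)}\sum_k f_k$, whence $f_S\ne 0$ forces $S\subseteq\mathcal{N}_k$ for some $k$ and therefore $|S|\le|\mathcal{N}|$. Combining, $\mathbb{E}[f^2]=\sum_{S\ne\emptyset}\|f_S\|_2^2\le\sum_S|S|\,\|f_S\|_2^2\le|\mathcal{N}|\,\mathbb{E}[f^2]$; multiplying by $4$ and using the identity above gives the claim. (The lower bound is exactly Efron--Stein, but the ANOVA route delivers both inequalities at once.)

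I expect the main obstacle to be the clean justification of the one-variable structural claim and of the resulting derivative identity in the presence of the feature-encoding unitaries $V_\ell(x)$ and of possibly non-Pauli single-qubit gates: one must verify that $\theta_i$ appears only through the single factor $W_i$ and that only $\mathcal{G}_i^2=\id$ is used. The rest is bookkeeping: invoking existence, orthogonality, the support property on $\mathcal{N}_k$, and $f_\emptyset=0$ for the ANOVA decomposition, rather than re-deriving them.
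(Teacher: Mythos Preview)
Your proof is correct and is essentially the paper's argument in ANOVA language: the paper uses the multivariable Fourier expansion of $f(\Theta,x)$ in $\Theta$, observes that the nonzero modes $v$ lie in $\{0,\pm1\}^{Lm}$ with $\mathrm{supp}(v)\subseteq\mathcal{N}_k$ for some $k$, and obtains via Parseval the identity $\mathbb{E}[\|\nabla_\Theta f\|^2]=4\sum_v|\mathrm{supp}(v)|\,|\tilde f_v|^2$, from which both bounds follow exactly as in your step~2. Your ANOVA components $f_S$ are precisely these Fourier modes grouped by $\mathrm{supp}(v)=S$, and your one-variable identity $\mathbb{E}_{\theta_i}[(\partial_{\theta_i}f)^2]=4\,\mathrm{Var}_{\theta_i}[f]$ is the manifestation of $v_i^2\in\{0,1\}$ that the paper also exploits; the two proofs are the same argument in different notation.
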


The following statement is an improvement of \cite[Lemma 1]{napp2022quantifying} for our case: in the proof we use stronger bounds, in order to obtain a more effective constraint to our purpose.

\begin{proof}
Since $f(\Theta,x)$ is periodic with period $\pi$ in each component, we can consider the Fourier series decomposition
\[ f(\Theta,x)=\sum_{v\in\mathbb{Z}^{Lm}}\tilde f_v(x)e^{2i\Theta\cdot v} \quad\text{where}\quad \tilde f_v(x)=\int_0^{\pi}\prod_{i=1}^{Lm}\left(\frac{d\theta_i}{\pi}\right)e^{-2i\Theta\cdot v} f(\Theta,x).\]
$f(\Theta,x)$ is of the form
\begin{align}
\nonumber f(\Theta,x)&=\matrixel{0}{U^\dagger(\Theta,x)\mathcal{M}U(\Theta,x)}{0}\\
&=\smatrixel{0}{S_1^\dagger W_1^\dagger(\Theta)S_1\cdots W_L^\dagger(\Theta)S_{L+1}^\dagger \mathcal{M}S_{L+1}W_L(\Theta)\cdots S_2W_1(\Theta)S_1}{0},
\end{align}
where
\[W_i(\Theta)=\bigotimes_{i=1}^me^{-i\theta_i\mathcal{G}_i}.\]
We can consider the projectors $P_i^\pm$ on the eigenspaces of $\mathcal{G}_i$:
\[\mathcal{G}_i=P_i^+-P_i^-	\quad P_i^++P_i^-=\id\]
In this way, we can rewrite
\begin{align}
\nonumber W_i(\theta_i)&=e^{-i\theta_i\mathcal{G}_i}=e^{-i\theta_iP_i^+}e^{i\theta_iP_i^-}\\
\nonumber &=e^{-i\theta_iP_i^+}(P_i^++P_i^-)e^{i\theta_iP_i^-}\\
&=e^{-i\theta_i}P_i^++e^{i\theta_i}P_i^-.
\end{align}
This means that in 
\[ f(\Theta,x)=\smatrixel{0}{S_1^\dagger\cdots W_i^\dagger(\theta_i)\cdots S_{L+1}^\dagger \mathcal{M}S_{L+1}\cdots W_i(\theta_i)\cdots S_1}{0}
\]
the only dependence on $\theta_i$ will appear as
\[ f(\Theta,x)=e^{-2\theta_i}f^{(i,-1)}(\Theta,x)+f^{(i,0)}(\Theta,x)+e^{+2\theta_i}f^{(i,+1)}(\Theta,x), \]
where $f^{(i,\ell)}(\Theta,x)$ does not depend on $\theta_i$. Therefore, we know that
\[ \exists i\in\{1,\dots,Lm\}: v_i\notin\{0,\pm 1\}\quad\to \quad \tilde f_v(x)=0,\]
so we can restrict the sum of the Fourier expansion to
\[ f(\Theta,x)=\sum_{v\in\{0,\pm 1\}^{Lm}}\tilde f_v(x)e^{2i\Theta\cdot v}. \label{eq:dependence_theta}\]
Writing $f(\Theta,x)$ as sum of local observables depending on a limited light cone, we can further restrict the set of $v$ such that $\tilde f_v(x)\neq 0$. Let
\[ \text{supp}(v)=\{i\in\{1,\dots,Lm\}: v_i\neq 0\}.\]
Then
\begin{align}
\nonumber\tilde f_v(x)&=\int_0^{\pi}\prod_{i=1}^{Lm}\left(\frac{d\theta_i}{\pi}\right)e^{-2i\Theta\cdot v} \sum_{k=1}^mf_k(\Theta_{\mathcal{N}_k},x)\\
\nonumber&=\sum_{k=1}^m\left(\int_0^{\pi}\prod_{i\notin\mathcal{N}_k}\left(\frac{d\theta_i}{\pi}e^{-2i\theta_iv_i}\right)\right)\left(\int_0^{\pi}\prod_{i\in\mathcal{N}_k}\left(\frac{d\theta_i}{\pi}e^{-2i\theta_iv_i}\right) f_k(\Theta_{\mathcal{N}_k},x)\right)\\
&=\sum_{k=1}^m
\chi_{\mathcal{N}_k}(\text{supp}(v))
\left(\int_0^{\pi}\prod_{i\in\mathcal{N}_k}\left(\frac{d\theta_i}{\pi}e^{-2i\theta_iv_i}\right) f_k(\Theta_{\mathcal{N}_k},x)\right),
\end{align}
where
\[ \chi_{\mathcal{N}_k}(\text{supp}(v))= 
\begin{cases}
1 & \text{ if supp}(v)\subseteq \mathcal{N}_k\\
0 & \text{otherwise} 
\end{cases}.\]
Hence,
\[ \tilde f_v(x)\neq 0 \quad \to \quad \exists\, k\in\{1,\dots, m\} : \text{supp}(v)\subseteq \mathcal{N}_k. \]
This means that
\[ \tilde f_v(x)\neq 0 \quad \to \quad |\text{supp}(v)|\leq \max_{1\leq k\leq m} |\mathcal{N}_k|=|\mathcal{N}|. \]
The function $f(\Theta,x)$ is differentiable, hence
\begin{align}
\partial_{\theta_j}f(\Theta,x)&=2i\sum_{v\in\{0,\pm 1\}^{Lm}}v_j\tilde f_v(x)e^{2i\Theta\cdot v}.
\end{align}
By Parseval's identity
\begin{align}
\mathbb{E}\left[f^2(\Theta,x)\right]&=\sum_{v\in\{0,\pm 1\}^{Lm}}\left|\tilde f_v(x)\right|^2,\\
\label{due}
\mathbb{E}\left[\left(\partial_{\theta_j}f(\Theta,x)\right)^2\right]&=4\sum_{v\in\{0,\pm 1\}^{Lm}}v_j^2\left|\tilde f_v(x)\right|^2.\\
\end{align}
Using the fact that in the sum of (\ref{due}) only the terms multiplied by $v_j\in\{0,\pm 1\}$ contribute to the result,
\begin{align}
\mathbb{E}\left[\left(\partial_{\theta_j}f(\Theta,x)\right)^2\right]&=4\sum_{\substack{v\in\{0,\pm 1\}^{Lm}\\v_j\neq 0}}v_j^2\left|\tilde f_v(x)\right|^2=4\sum_{\substack{v\in\{0,\pm 1\}^{Lm}\\v_j\neq 0}}\left|\tilde f_v(x)\right|^2,\\
\nonumber\mathbb{E}\left[\|\nabla_\Theta f(\Theta,x)\|^2\right]&= 4\sum_{j=1}^{Lm}\sum_{\substack{v\in\{0,\pm 1\}^{Lm}\\v_j\neq 0}}\left|\tilde f_v(x)\right|^2\\
\nonumber&=4\sum_{v\in\{0,\pm 1\}^{Lm}}\sum_{j: v_j\neq 0}\left|\tilde f_v(x)\right|^2\\
\nonumber&=4\sum_{v\in\{0,\pm 1\}^{Lm}}\text{supp}(v)\left|\tilde f_v(x)\right|^2\\
&\leq 4\,|\mathcal{N}|\sum_{v\in\{0,\pm 1\}^{Lm}}\left|\tilde f_v(x)\right|^2=4\,|\mathcal{N}|\,
\mathbb{E}\left[f^2(\Theta,x)\right].
\end{align}
For the other inequality, we start again from (\ref{due}) and, calling $0=(0,\dots,0)\in\mathbb{Z}^{Lm}$,
\begin{align}
\nonumber\mathbb{E}\left[\|\nabla_\Theta f(\Theta,x)\|^2\right]&=\sum_{j=1}^{Lm}\mathbb{E}\left[\left(\partial_{\theta_j}f(\Theta,x)\right)^2\right]
=4\sum_{v\in\mathbb{Z}^{Lm}}\left(\sum_{j=1}^{Lm} v_j^2\right)\left|\tilde f_v(x)\right|^2\\
\nonumber&=4\sum_{v\in\mathbb{Z}^{Lm}\setminus\{0\}} v^2\left|\tilde f_v(x)\right|^2
\geq 4\sum_{v\in\mathbb{Z}^{Lm}\setminus\{0\}} \left|\tilde f_v(x)\right|^2\\
&=4\sum_{v\in\mathbb{Z}^{Lm}} \left|\tilde f_v(x)\right|^2-4\left|\tilde f_0(x)\right|^2.
\end{align}
Now we notice that
\begin{align}
\nonumber\mathbb{E}\left[f(\Theta,x)\right]&=\sum_{v\in\mathbb{Z}^{Lm}}\tilde f_v(x)\mathbb{E}\left[e^{2i\Theta\cdot v}\right]= \sum_{v\in\mathbb{Z}^{Lm}}\tilde f_v(x)\prod_{j=1}^{Lm}\mathbb{E}\left[e^{2i\theta_j v_j}\right]\\
&=\sum_{v\in\mathbb{Z}^{Lm}}\tilde f_v(x)\delta_{v,0}=\tilde f_0(x).
\end{align}
By Assumption \ref{zeromean}, 
\[ \mathbb{E}\left[f(\Theta,x)\right]=0\qquad \to \qquad \tilde f_0(x)=0.\]
Therefore
\begin{align}
\mathbb{E}\left[\|\nabla_\Theta f(\Theta,x)\|^2\right]&\geq 4\sum_{v\in\mathbb{Z}^{Lm}} \left|\tilde f_v(x)\right|^2 = 4\mathbb{E}\left[f^2(\Theta,x)\right].
\end{align}

\end{proof}

To prove Lemma \ref{boundnorm} we are going to use Lemma \ref{doppiastima} in order to bound the normalization $N_K(m)$ in terms of the normalization $N(m)$ and of the geometric quantities related to the architecture of the circuit. We will use this notation:
\[ \mathcal{K}_m(x,x')=\mathbb{E}\left[f(\Theta,x)f(\Theta,x')\right].\]
From Assumption \ref{zeromean}, there exist a limit covariance function $\mathcal{K}:\mathcal{X}\times\mathcal{X}\to \mathbb{R}$
\[ \lim_{m\to\infty}\mathcal{K}_m(x,x')=\mathcal{K}(x,x').\]

Now we are ready to prove Lemma \ref{boundnorm}.\\
As in Corollary \ref{diagonalNTK}, let $x\in\mathcal{X}$ such that both $\bar K(x,x)$ and $\mathcal{K}(x,x)$ are positive. We notice that
\[K(x,x)=\frac{1}{N_K(m)}\mathbb{E}\left[\|\nabla f(\Theta,x)\|^2\right]\qquad \mathcal{K}_m(x,x)=\mathbb{E}\left[f^2(\Theta,x)\right].\]
Therefore, the results of Lemma \ref{doppiastima} can be restated as 
\[4\mathcal{K}_m(x,x)\leq N_K(m)K(x,x)\leq 4\,|\mathcal{N}|\sqrt{\mathcal{K}_m(x,x)}\]
Dividing by $K(x,x)$,
\[4\frac{\mathcal{K}_m(x,x)}{K(x,x)}\leq N_K(m)\leq4|\mathcal{N}|\frac{\sqrt{\mathcal{K}_m(x,x)}}{K(x,x)}.\]
Since $\mathcal{K}_m\to \mathcal{K}$ and $K\to \bar K$ as $m\to \infty$, there exist $m_0$ such that, for any $m\geq m_0$,
\[ \frac{1}{2}\mathcal{K}(x,x)\leq\mathcal{K}_m(x,x)\leq 2\mathcal{K}(x,x),\qquad \frac{1}{2}\bar K(x,x)\leq K(x,x)\leq 2\bar K(x,x).\]
which implies, for all $m\geq m_0$,
\[\frac{\mathcal{K}(x,x)}{\bar K(x,x)}\leq N_K(m)\leq\left(8\sqrt 2 \frac{\sqrt{\mathcal{K}(x,x)}}{\bar K(x,x)}\right)|\mathcal{N}|.\]

\subsection{Proof of Theorem \ref{ntkconv}}\label{proofntk}

We need the following lemma.

\begin{lemma}[McDiarmid's concentration inequality \cite{mcdiarmid_1989}]
\label{mcdiarmid} 
Let $X_1,\dots,X_n$ be independent random variables, each with values in $\mathbb{X}$. Let $f:\mathbb{X}^n\to \mathbb{R}$ be a mapping such that, for every $i\in\{1,\dots, n\}$ and every $(x_1,\dots,x_n),(x'_1,\dots,x'_n)\in\mathbb{X}^n$ that differ only in the $i$-th coordinate (i.e., $\forall\, j\neq i,\, x_j=x_j'$),
\begin{align}
|f(x_1,\dots,x_n)-f(x'_1,\dots,x'_n)|\leq c_i.
\end{align}
For any $\epsilon>0$
\begin{align}
\mathbb{P}\left(f(X_1,\dots,X_n)-\mathbb{E}[f(X_1,\dots,X_n)]\geq\epsilon\right)\leq \exp\left(-\frac{2\epsilon^2}{\sum_{i=1}^n c_i^2}\right).
\end{align}
\end{lemma}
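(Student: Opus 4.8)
This is the classical bounded-differences (McDiarmid) inequality, and I would prove it by the Doob-martingale method combined with Hoeffding's lemma. First I would introduce the martingale: with $g:=f(X_1,\dots,X_n)$, set $Z_k:=\mathbb{E}[g\mid X_1,\dots,X_k]$ for $k=0,1,\dots,n$ (so $Z_0=\mathbb{E}[g]$ is deterministic and $Z_n=g$). Then $(Z_k)_{k=0}^n$ is a martingale for the filtration generated by $X_1,\dots,X_n$, the increments $D_k:=Z_k-Z_{k-1}$ obey $\mathbb{E}[D_k\mid X_1,\dots,X_{k-1}]=0$, and $g-\mathbb{E}[g]=\sum_{k=1}^n D_k$.

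The key step will be to control the conditional range of each increment. Since the $X_i$ are independent, one can write $Z_k=\varphi_k(X_1,\dots,X_k)$ with $\varphi_k(x_1,\dots,x_k):=\mathbb{E}[f(x_1,\dots,x_k,X_{k+1},\dots,X_n)]$; fixing $x_1,\dots,x_{k-1}$, the increment $D_k$ viewed as a function of $X_k$ equals $\varphi_k(x_1,\dots,x_{k-1},X_k)-\mathbb{E}_{X_k}[\varphi_k(x_1,\dots,x_{k-1},X_k)]$, whose oscillation in $X_k$ is at most $\sup_{x_k,x_k'}|\varphi_k(x_1,\dots,x_{k-1},x_k)-\varphi_k(x_1,\dots,x_{k-1},x_k')|\le c_k$ --- the last inequality being the bounded-differences hypothesis on $f$ averaged over $X_{k+1},\dots,X_n$. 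Thus, conditionally on $X_1,\dots,X_{k-1}$, the variable $D_k$ has zero mean and lies in an interval of length $\le c_k$, so Hoeffding's lemma gives $\mathbb{E}[e^{\lambda D_k}\mid X_1,\dots,X_{k-1}]\le e^{\lambda^2 c_k^2/8}$ for every $\lambda\in\mathbb{R}$.

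Iterating this conditional bound from $k=n$ down to $k=1$ via the tower property gives
\[
\mathbb{E}\big[e^{\lambda(D_1+\cdots+D_n)}\big]\le e^{\lambda^2 c_n^2/8}\,\mathbb{E}\big[e^{\lambda(D_1+\cdots+D_{n-1})}\big]\le\cdots\le \exp\Big(\tfrac{\lambda^2}{8}\sum_{k=1}^n c_k^2\Big),
\]
and then a Chernoff bound $\mathbb{P}(g-\mathbb{E}[g]\ge\epsilon)\le e^{-\lambda\epsilon}\,\mathbb{E}[e^{\lambda(g-\mathbb{E}[g])}]$ optimized at $\lambda=4\epsilon/\sum_{k}c_k^2$ yields the asserted tail $\exp(-2\epsilon^2/\sum_{k=1}^n c_k^2)$. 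The one genuinely delicate point will be transferring the pointwise bounded-differences property of $f$ to the partially averaged functions $\varphi_k$: this is exactly where independence is used, since conditioning on $X_1,\dots,X_{k-1}$ must leave $X_{k+1},\dots,X_n$ with their original product law. Everything else --- Hoeffding's lemma, the tower property, the exponential Markov inequality --- is routine.
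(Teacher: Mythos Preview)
Your proof is correct and is the standard Doob-martingale argument for McDiarmid's inequality. The paper itself does not prove this lemma: it simply states it with a citation to \cite{mcdiarmid_1989} and uses it as a black box in the proof of Theorem~\ref{ntkconv}, so there is nothing to compare against beyond noting that your argument is exactly the classical one.
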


Following the strategy of \cite{QLazy}, we show that the hypotheses of McDiarmid's concentration inequality \ref{mcdiarmid} are satistied by $\hat K_\Theta$.
\begin{align}
\nonumber\hat K_\Theta(x,x')&=\frac{1}{N_K(m)}\nabla_\Theta f(\Theta,x)\cdot \nabla_\Theta f(\Theta,x')\\
\nonumber &= \frac{1}{N_K(m)}\frac{1}{N^2(m)}\sum_{k,k'=1}^m\sum_{j=1}^{|\Theta|}\partial_{\theta_j}f_k(\Theta,x)\,\partial_{\theta_j}f_{k'}(\Theta,x')\\
&= \frac{1}{N_K(m)}\frac{1}{N^2(m)}\sum_{k,k'=1}^m\sum_{j\in\mathcal{N}_k\cap\mathcal{N}_{k'}}\partial_{\theta_j}f_k(\Theta,x)\,\partial_{\theta_j}f_{k'}(\Theta,x').
\end{align}
In order to clarify when a term depends or not on a particular parameter $\theta_i$, we use the notation
\begin{align}
f_k(\Theta_{\mathcal{N}_k},x) \quad\text{instead of}\quad f_k(\Theta,x),
\end{align}
so that we emphasize that $f_k$ depends only on $\theta_i$ with $i\in\mathcal{N}_k$.\\
Let $\Gamma\subseteq \{1,\dots,m\}\times\{1,\dots,m\}\times\{1,\dots,|\Theta|\}$ be the set defined as
\begin{align}
\Gamma=\{(k,k',j):j\in\mathcal{N}_k\cap\mathcal{N}_{k'}\}.
\end{align}
Let
\begin{align}
T_{k,k',j}(\Theta_{\mathcal{N}_k\cup\mathcal{N}_{k'}})=\partial_{\theta_j}f_k(\Theta_{\mathcal{N}_k},x)\,\partial_{\theta_j}f_{k'}(\Theta_{\mathcal{N}_{k'}},x'),
\end{align}
so that we can rewrite
\begin{align}
\hat K_\Theta(x,x')=\frac{1}{N_K(m)}\frac{1}{N^2(m)}\sum_{(k,k',j)\in\Gamma}T_{k,k',j}(\Theta_{\mathcal{N}_k\cup\mathcal{N}_{k'}}).
\end{align}
We fix $i\in\{1,\dots,|\Theta|\}$ and we ask that $\theta_j=\theta'_j$ for all $j\neq i$. In order to compute $\hat K_\Theta(x,x')-\hat K_{\Theta'}(x,x')$, we notice that
\begin{align}
i\notin \mathcal{N}_k\cup\mathcal{N}_{k'} \quad \to\quad T_{k,k',j}(\Theta_{\mathcal{N}_k\cup\mathcal{N}_{k'}})-T_{k,k',j}(\Theta'_{\mathcal{N}_k\cup\mathcal{N}_{k'}})=0,
\end{align}
so we define
\begin{align}
\Gamma_i=\{(k,k',j):j\in\mathcal{N}_k\cap\mathcal{N}_{k'}, i\in\mathcal{N}_k\cup\mathcal{N}_{k'}\}.
\end{align}
Therefore, using the bound of (\ref{eq4.95}), $|T_{k,k',j}|\leq 4$. This implies
\begin{align}
\hat  K_\Theta(x,x')-\hat K_{\Theta'}(x,x')&=\frac{1}{N_K(m)}\frac{1}{N^2(m)}\sum_{(k,k',j)\in\Gamma_i}\left( T_{k,k',j}(\Theta)-T_{k,k',j}(\Theta')\right),\\
\nonumber|\hat K_\Theta(x,x')-\hat K_{\Theta'}(x,x')|&\leq\frac{1}{N_K(m)}\frac{1}{N^2(m)}\sum_{(k,k',j)\in\Gamma_i}\left(|T_{k,k',j}(\Theta)|+|T_{k,k',j}(\Theta')|\right)\\
&\leq \frac{1}{N_K(m)}\frac{|\Gamma_i|}{N^2(m)}\cdot (4+4).
\end{align}
So, we need to compute $|\Gamma_i|$. A change of perspective on the constraints in the definition of $\Gamma_i$ allows an easier computation:
\begin{align}
\Gamma_i=\{(k,k',j):(k\in\mathcal{M}_i\,\lor\,k'\in\mathcal{M}_i)\,\land\,k\in \mathcal{M}_j\, \land\, k'\in\mathcal{M}_j\}.
\end{align}
If we assume that the condition $k\in\mathcal{M}_i$ holds (first term in the following RHS), by symmetry we can estimate the cardinality of $\Gamma_i$:
\begin{align}
\nonumber\Gamma_i&=\{(k,k',j):k\in\mathcal{M}_i\,\land\,k\in \mathcal{M}_j\, \land\, k'\in\mathcal{M}_j)\}\\&\cup \{(k,k',j):k'\in\mathcal{M}_i\,\land\,k\in \mathcal{M}_j\, \land\, k'\in\mathcal{M}_j)\},\\[8pt]
\nonumber|\Gamma_i|&\leq 2|\{(k,k',j):k\in\mathcal{M}_i\,\land\,k\in \mathcal{M}_j\, \land\, k'\in\mathcal{M}_j)\}|\\
\nonumber&= 2|\{(k,k',j):k\in\mathcal{M}_i\,\land\,j\in \mathcal{N}_k\, \land\, k'\in\mathcal{M}_j)\}|\\
&\leq 2|\mathcal{M}_i|\max_k|\mathcal{N}_k|\max_j|\mathcal{M}_j|\leq 2|\mathcal{M}_i||\mathcal{M}||\mathcal{N}|.
\end{align}
Therefore, in McDiarmid's concentration inequality \ref{mcdiarmid}, we have
\begin{align}\label{c_i}
c_i&=16\,\frac{1}{N_K(m)}\frac{|\mathcal{M}||\mathcal{N}|}{N^2(m)}|\mathcal{M}_i|.
\end{align}
Summing the squares of $c_i$:
\begin{align}
\sum_{i=1}^{|\Theta|}c_i^2&=\left(16\,\frac{|\mathcal{M}||\mathcal{N}|}{N_K(m)N^2(m)}\right)^2\sum_{i=1}^{|\Theta|}|\mathcal{M}_i|^2=\left(16\,\frac{|\mathcal{M}||\mathcal{N}|}{N_K(m)N^2(m)}\right)^2\Sigma_2.
\end{align}
Calling $c=1/256$, we have
\begin{align}
\nonumber\mathbb{P}\left[|\hat K_\Theta(x,x')-K(x,x')|\geq\epsilon\right]&\leq \exp\left[-cN_K^2(m)\frac{N^4(m)}{|\mathcal{M}|^2|\mathcal{N}|^2\Sigma_2}\,\epsilon^2\right]\\
&\leq \exp\left[-cN_K^2(m)\frac{N^4(m)}{Lm|\mathcal{M}|^4|\mathcal{N}|^2}\,\epsilon^2\right].
\end{align}

\subsection{Proof of Corollaries \ref{convsol} and \ref{corgp}}\label{proofcorgp}

We need a preliminary statement.

\begin{theorem}[Slutsky's theorem \cite{Slutsky}]\label{sl}
Let $X_1,X_2,\dots$ be a sequence of random vectors or matrices converging in distribution to a random vector or \modifica{matrix} $X$
\[X_k\xrightarrow{d} X\]
and let $Y_1,Y_2,\dots$ be a sequence of random vectors or matrices converging in probability to a constant vector or \modifica{matrix} $C$
\[Y_k\xrightarrow{d} C.\]
Then
\begin{align}
X_k+Y_k&\xrightarrow{d} X+C,\\
X_kY_k&\xrightarrow{d} XC.
\end{align}
Furthermore, if $C$ is invertible,
\[X_k/Y_k\xrightarrow{d}X/C.\]
\end{theorem}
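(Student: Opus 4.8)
The plan is to deduce Slutsky's theorem from two standard facts: joint convergence in distribution $(X_k,Y_k)\xrightarrow{d}(X,C)$, and the continuous mapping theorem. The crucial input is that the limit of $Y_k$ is the \emph{deterministic} matrix (or vector) $C$, which is precisely what upgrades the two marginal convergences into a joint one; for generic random $Y_k$ the conclusion would be false.

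First I would prove the joint convergence. It suffices to show $\mathbb{E}[g(X_k,Y_k)]\to\mathbb{E}[g(X,C)]$ for every bounded continuous $g$. Write the difference as $\big(\mathbb{E}[g(X_k,Y_k)]-\mathbb{E}[g(X_k,C)]\big)+\big(\mathbb{E}[g(X_k,C)]-\mathbb{E}[g(X,C)]\big)$. The second bracket tends to $0$ because $x\mapsto g(x,C)$ is bounded and continuous and $X_k\xrightarrow{d}X$. For the first bracket, fix $\epsilon>0$; since $X_k\xrightarrow{d}X$ the family $\{X_k\}$ is tight, so there is a compact set $K$ with $\sup_k\mathbb{P}(X_k\notin K)<\epsilon$. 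On the compact set $K\times\{\,y:\|y-C\|\le 1\,\}$ the function $g$ is uniformly continuous, hence there is $\rho\in(0,1)$ such that $|g(x,y)-g(x,C)|<\epsilon$ whenever $x\in K$ and $\|y-C\|<\rho$. Splitting $\mathbb{E}\big[\,|g(X_k,Y_k)-g(X_k,C)|\,\big]$ over the event $\{X_k\in K\}\cap\{\|Y_k-C\|<\rho\}$ and its complement, and bounding the integrand by $2\|g\|_\infty$ on the complement, whose probability is at most $\epsilon+\mathbb{P}(\|Y_k-C\|\ge\rho)$, one obtains $\limsup_k|\mathbb{E}[g(X_k,Y_k)]-\mathbb{E}[g(X_k,C)]|\le \epsilon+2\|g\|_\infty\epsilon$ after letting $k\to\infty$ and using $Y_k\xrightarrow{p}C$. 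Since $\epsilon>0$ is arbitrary, $(X_k,Y_k)\xrightarrow{d}(X,C)$.

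Next I would apply the continuous mapping theorem: if $h$ is continuous on a Borel set $S$ with $\mathbb{P}((X,C)\in S)=1$, then $h(X_k,Y_k)\xrightarrow{d}h(X,C)$. Taking $h(x,y)=x+y$, continuous everywhere, yields $X_k+Y_k\xrightarrow{d}X+C$; taking $h(x,y)=xy$, continuous everywhere since matrix multiplication is polynomial in the entries, yields $X_kY_k\xrightarrow{d}XC$; and, when $C$ is invertible, taking $h(x,y)=y^{-1}x$ on the open set $S=\{\,y:\det y\neq0\,\}\times(\text{ambient space})$, which carries full mass under $(X,C)$ and on which $h$ is continuous because matrix inversion is continuous on the invertible matrices, yields $X_k/Y_k\xrightarrow{d}X/C$.

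The main obstacle is the first step: the product space is not compact, so one cannot invoke uniform continuity of $g$ globally, and tightness of $\{X_k\}$ — itself a by-product of $X_k\xrightarrow{d}X$ — must be used to confine the relevant probability mass to a compact set before uniform continuity can be exploited. Once joint convergence is established, the three assertions are immediate consequences of the continuous mapping theorem, so I would state them without further detail.
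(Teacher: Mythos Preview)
Your proof is correct and follows the standard textbook route to Slutsky's theorem: establish joint convergence $(X_k,Y_k)\xrightarrow{d}(X,C)$ via tightness plus the deterministic limit of $Y_k$, then apply the continuous mapping theorem to the relevant continuous maps. The tightness argument to localize uniform continuity is handled carefully, and the treatment of inversion on the open set of invertible matrices is the right way to deal with the division case.

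However, there is nothing to compare against: the paper does not supply its own proof of this statement. Slutsky's theorem is quoted as a preliminary result with a citation \cite{Slutsky} and then used as a black box in the proofs of Corollaries~\ref{convsol} and~\ref{corgp} and of Theorems~\ref{qnngp} and~\ref{qnngpn}. So your proposal goes strictly beyond what the paper does here.
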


\subsubsection{Proof of Corollary \ref{convsol}}

Corollary \ref{convsol} can be proved as follows. \\
By Lemma \ref{convprob},
\[\hat K_\Theta(x,x')\xrightarrow{p}\bar K(x,x') \quad \text{as}\quad m\to\infty.\]
Furthermore, by Theorem \ref{init}
\[f(\Theta_0,\,\cdot\,)\xrightarrow{d}f^{(\infty)}(\,\cdot\,) \quad \text{as}\quad m\to\infty.\]

Let $\mathcal{F}=\{x_\alpha\}_{\alpha\in A}$ be a finite family of inputs $x_\alpha\in\mathcal{X}$ containing the inputs of the dataset
\[ \{x^{(i)}\}_{1\leq i\leq n}\subseteq \mathcal{F}\]
Since the solution (\ref{solutionevol2}) for the output corresponding to any input $x_\beta\in\mathcal{F}$ is the following linear combination of the outputs $\{f(\Theta_0,x_\alpha)\}_{\alpha\in A}$,
\[ f^{\mathrm{lin}}(\Theta^{\mathrm{lin}}_t,x_\beta)=f(\Theta_0,x_\beta)-\hat K_{\Theta_0}(x_\beta,X^T)\hat K^{-1}_{\Theta_0}\left(\id-e^{-\eta_0 \hat K_{\Theta_0}t}\right)(f(\Theta_0,X)-Y) \]
we can write
\[ f^{\mathrm{lin}}(\Theta^{\mathrm{lin}}_t,x_\beta)=\sum_{\alpha\in A}  M^{(t)}_{\beta\alpha}[\hat K_{\Theta_0}] f(\Theta_0,x_\alpha)+\left( R^{(t)}[\hat K_{\Theta_0}]\right)^TY, \label{form0}\]
where the entries $ M^{(t)}_{\beta\alpha}[\hat K_{\Theta_0}]$ and the components of $ R^{(t)}[\hat K_{\Theta_0}]$ are continous functions of the elements of matrix of the empirical NTK
\[\{\hat K_{\Theta_0}(x_\alpha,x_{\alpha'})\}_{\alpha,\alpha'\in A}.\]
By continuity, the (finite) matrix $ M^{(t)}_{\beta\alpha}[\hat K_{\Theta_0}]$ and the (finite) vector $ R^{(t)}[\hat K_{\Theta_0}]$ converge in probability to $ M^{(t)}_{\beta\alpha}[\bar K]$ and $ R^{(t)}[\bar K]$:
\[ M^{(t)}_{\beta\alpha}[\hat K_{\Theta_0}]\xrightarrow{p} M^{(t)}_{\beta\alpha}[\bar K],\qquad\qquad  R^{(t)}[\hat K_{\Theta_0}]\xrightarrow{p} R^{(t)}[\bar K].\]
By Slutsky's theorem \ref{sl}, we conclude that
\[ \{f^{\mathrm{lin}}(\Theta^{\mathrm{lin}}_t,x_\beta)\}_{x_\beta\in\mathcal{F}}\xrightarrow{d} \left\{\sum_{\alpha\in A}  M^{(t)}_{\beta\alpha}[\bar K] f^{(\infty)}(x_\alpha)+\left( R^{(t)}[\bar K]\right)^TY\right\}_{x_\beta\in\mathcal{F}} \text{as}\quad m\to\infty,\]
i.e.
\[ f^{\mathrm{lin}}(\Theta^{\mathrm{lin}}_t,\,\cdot\,)\big|_{\mathcal{F}}\xrightarrow{d}f^{(\infty)}(\,\cdot\,)\big|_{\mathcal{F}}-\bar K(\,\cdot\,,X^T)\big|_{\mathcal{F}}\bar K^{-1}\left(\id-e^{-\eta_0 \bar K t}\right)(f^{(\infty)}(X)-Y). \]
Since we assumed $\mathcal{X}$ to be finite, this is enough to prove the convergence of the distribution to the entire Gaussian process: it is sufficient to choose $\mathcal{F}=\mathcal{X}$.
\begin{remark}
    In \autoref{infinite} we will generalize the convergence of $\{f(\Theta_t,x)\}_{x\in\mathcal{X}}$ to a Gaussian process for the case of $\mathcal{X}$ being infinite. This will not require to prove that also $\{f^{\mathrm{lin}}(\Theta_t,x)\}_{x\in\mathcal{X}}$ converges to a Gaussian process: we will only need the convergence of $\{f(\Theta_t,x)\}_{x\in\mathcal{F}}$ for any $\mathcal{F}$ finite set of inputs, which, as we will see, is a corollary of the convergence of the linearized model for a finite number of inputs. Therefore, the proof given above will be enough for our purposes.
\end{remark}

\paragraph{Proof of Corollary \ref{corgp}}
In the limit $m\to \infty$, the solution (\ref{sol-limit}) is a linear combination of the Gaussian processes  $f^{\mathrm{lin}}(\Theta_0,x)=f(\Theta_0,x)$ and $F(0)$, so it is a Gaussian process as well, with
\begin{align}
\nonumber\mu_t(x)&=\mathbb{E}\left[f^{(\infty)}_t(x)\right]\\
\nonumber&=\mathbb{E}\left[f^{(\infty)}(x)\right]-\bar K(x,X^T)\bar K^{-1}\left(\id-e^{-\eta_0\bar Kt} \right)(\mathbb{E}\left[F^{(\infty)}\right]-Y)\\
&=\bar K(x,X^T)\bar K^{-1}\left(\id-e^{-\eta_0\bar Kt} \right)Y,\\
\nonumber\mathcal{K}_t(x,x')&=\mathbb{E}\left[\left(f^{(\infty)}_t(x)-\mu_t(x)\right)\left(f^{(\infty)}_t(x')-\mu_t(x')\right)\right]\\
\nonumber&=\mathbb{E}\Big[\left(f^{(\infty)}(x)-\bar K(x,X^T)\bar K^{-1}\left(\id-e^{-\eta_0\bar Kt} \right)F^{(\infty)}\right)\times\\
\nonumber&\phantom{=\mathbb{E}\Big[}\times\left(f^{(\infty)}(x')-\bar K(x',X^T)\bar K^{-1}\left(\id-e^{-\eta_0\bar Kt} \right)F^{(\infty)} \right)\Big]\\
\nonumber&=\mathcal{K}_0(x,x')\\
\nonumber&\phantom{=}- \bar K(x,X^T)\bar K^{-1}\left(\id-e^{-\eta_0\bar K t}\right) \mathcal{K}_0(X,x')\\
\nonumber&\phantom{=}-\bar K(x',X^T)\bar K^{-1}\left(\id-e^{-\eta_0\bar K t}\right) \mathcal{K}_0(X,x) \\
&\phantom{=}+\bar K(x,X^T)\bar K^{-1}\left(\id-e^{-\eta_0\bar K t}\right)\mathcal{K}_0(X,X^T)\left(\id-e^{-\eta_0\bar K t}\right)\bar K^{-1} K(X,x').
\end{align}

\subsection{Proof of Theorem \ref{gradfl}}\label{proofgradfl}

Because of Corollary \ref{corollaryR}, we can ask
\begin{align}\label{eq:unionbound}
\mathbb{P}\left(\|F(0)-Y\|_2<R\right)\geq 1-\frac{\delta}{2}\quad \forall \,m\geq m_0 \text{ for some } R=\sqrt{n\log(2n)} R_0.
\end{align}
In order to simplify the notation, we introduce
\[\rho(m)=\frac{6\sqrt n R}{N_K(m)\lambda^K_{\min}}\frac{|\mathcal{M}|}{N(m)}.\]
Let $B_r(\Theta_0)=\{\Theta: \|\Theta-\Theta_0\|_\infty< r\}$ be the ball of center $\Theta_0$ and radius $r$. By Lemma \ref{lambdamin}, there exists $\bar m\in\mathbb{N}$ such that
\begin{align}
\hat K_{\Theta_0}(X,X^T)\succ\frac{\lambda_{\min}^K}{2}\id \qquad \forall m\geq \bar m
\end{align}
with probability at least $1-\frac{\delta}{2}$. 
By uniform continuity\footnote{The uniform continuity is ensured by the fact that the derivatives of the entries of $\hat K_\Theta(X,X^T)$ with respect to the parameters are bounded and the dimension of $\hat K_\Theta(X,X^T)$ is fixed.} of $\Theta\mapsto\hat K_\Theta(X,X^T)$, if $\rho(m)$ is small enough, i.e., if $m$ is larger than $m_1\in\mathbb{N}$\footnote{By uniform continuity of $\Theta\mapsto\hat K_\Theta(X,X^T)$, $m_1$ does not depend on $\Theta_0$.} (we take $m_1\geq \bar m$),
\begin{align}
\hat K_\Theta(X,X^T)\succ\frac{\lambda_{\min}^K}{3}\id \qquad \forall \,\Theta\in B_{\rho(m)}(\Theta_0)\qquad \forall m\geq m_1.
\end{align}
Let 
\begin{align}
t_1=\inf\left\{t:\|\Theta_t-\Theta_0\|_\infty\geq \rho(m)\right\}.
\label{t1}
\end{align}
For $t\leq t_1$ we have $\hat K_{\Theta_t}\succ\frac{\lambda_{\min}^K}{3}\id$. Recalling that
\[
\frac{d}{dt} f(\Theta_t,x)=-\eta_0\hat K_{\Theta_t}(x,X^T)\cdot \left(F(t)-Y\right),
\]
we have, \modifica{with probability at least $1-\frac{\delta}{2}$},
\begin{align}
\nonumber \frac{d}{dt}\|F(t)-Y\|_2^2&=-2\eta_0(F(t)-Y)^T\hat K_{\Theta_t}(F(t)-Y)\\
&\leq -\frac{2}{3}\eta_0\lambda^K_{{\min}}\|F(t)-Y\|_2^2.
\end{align}
\modifica{Therefore,}
\begin{align}
\nonumber \|F(t)-Y\|_2^2&\leq e^{-\frac{2}{3}\eta_0\lambda^K_{\min}t}\|F(0)-Y\|_2^2\\
&\leq e^{-\frac{2}{3}\eta_0\lambda^K_{\min}t}R^2, \label{disug2}
\end{align}
\modifica{with probability at least $1-\delta$ by the union bound with \eqref{eq:unionbound}.}
Recalling also that
\[ 
\dot \Theta_t = -\frac{\eta_0}{N_K(m)} \nabla_\Theta f(\Theta_t,X)\cdot (F(t)-Y)
\]
and using Lemma \ref{lemma}
\begin{align}
\nonumber \frac{d}{dt}|\theta_i(t)-\theta_i(0)|&\leq \Big|\frac{d}{dt}\theta_i(t)\Big|=\Big|\frac{\eta_0}{N_K(m)}\partial_{\theta_i}f(\Theta,X)\cdot (F(t)-Y)\Big|\\
\nonumber &\leq \frac{\eta_0}{N_K(m)}\|\partial_{\theta_i}f(\Theta,X)\|_2\|F(t)-Y\|_2\\
 &\leq \frac{\eta_0}{N_K(m)}2\sqrt n\,\frac{|\mathcal{M}|}{N(m)}\,Re^{-\frac{1}{3}\eta_0\lambda^K_{\min}t},\\ 
 \to |\theta_i(t)-\theta_i(0)|&\leq\frac{6\sqrt n R}{N_K(m)\lambda^K_{\min}}\frac{|\mathcal{M}|}{N(m)}\left(1-e^{-\frac{1}{3}\eta_0\lambda^K_{\min}t}\right)\qquad &\forall\,t\leq t_1,\\
\to \|\Theta_t-\Theta_0\|_\infty&\leq\rho(m)\left(1-e^{-\frac{1}{3}\eta_0\lambda^K_{\min}t}\right)\qquad &\forall\,t\leq t_1,
\label{disug3}
\end{align}
\modifica{with probability at least $1-\delta$.}
If $t_1<\infty$, then  
\begin{align}
\|\Theta_{t_1}-\Theta_0\|_\infty&\leq\rho(m)\left(1-e^{-\frac{1}{3}\eta_0\lambda^K_{\min}t_1}\right)<\rho(m)\qquad \forall\,t\leq t_1,
\end{align}
but this contradicts the definition (\ref{t1}) of $t_1$, so we must have $t_1=\infty$. This implies that (\ref{disug2}) and (\ref{disug3}) holds for any $t>0$ and $m\geq \bar m =\max\{m_0,m_1\}$ \modifica{with probability at least $1-\delta$}, so, if we set $R_1=6R_0$, we have proved (\ref{grad1}) and (\ref{grad2}). 

\subsection{Proof of Corollary \ref{freezntk} and Theorem \ref{gronwall}}\label{prooffreezntk}

\subsubsection{Proof of Corollary \ref{freezntk}}
By Lemma \ref{kernel} and Theorem \ref{gradfl},
\begin{align}
\nonumber\sup_t|\hat K_{\Theta_t}(x,x')&-\hat K_{\Theta_0}(x,x')|\\
\nonumber&\leq 16\frac{\Sigma_1|\mathcal{M}|^2|\mathcal{N}|}{N_K(m)N^2(m)}\sup_t\|\Theta_0-\Theta\|_\infty\\
\nonumber&=96\frac{R_0}{\lambda_{\min}^K}n\sqrt{\log(2n)}\frac{\Sigma_1|\mathcal{M}|^3|\mathcal{N}|}{N_K^2(m)N^3(m)}
\\
&=\frac{R_2}{\lambda_{\min}^K}\,n\sqrt{\log(2n)}\,\frac{\Sigma_1|\mathcal{M}|^3|\mathcal{N}|}{N_K^2(m)N^3(m)},
\quad \text{where} \quad R_2=96R_0,
\end{align}
with probability at least $1-\delta$.

\paragraph{Proof of Theorem \ref{gronwall}}
We simply combine the discrepancy estimate of Theorem \ref{powerful} with the lazy training result of Theorem \ref{gradfl}:
\begin{align}
\nonumber\sup_t|f(\Theta_t,x)-f^{\mathrm{lin}}(\Theta_t,x)|&\leq \frac{Lm |\mathcal{M}|^2|\mathcal{N}|}{N(m)}\|\Theta_t-\Theta_0\|^2_\infty\\
\nonumber&\leq 
\frac{Lm |\mathcal{M}|^2|\mathcal{N}|}{N(m)}
\left(\frac{R_1}{\lambda^K_{\min}}n\sqrt{\log(2n)}\frac{|\mathcal{M}|}{N_K(m)N(m)}\right)^2\\
&\leq \left(\frac{R_1}{\lambda^K_{\min}}\right)^2n^2\log(2n)\frac{Lm|\mathcal{M}|^4|\mathcal{N}|}{N_K^2(m)N^3(m)}.
\end{align}

\subsection{Proof of Lemma \ref{secondorder}}\label{proofsecondorder}
In order to prove Lemma \ref{secondorder}, we need the following preliminary statement.
\begin{lemma}[Bounding the discrepancy on the examples]\label{improvedqlt} For any $\delta>0$ there exists $m_0$ such that, for any $m\geq m_0$
\[ || F(t)-F^{\mathrm{lin}}(t)||_2\leq \frac{C}{\left(\lambda_{\min}^K\right)^2}n^2\sqrt n \log(2n) \frac{Lm|\mathcal{M}|^4|\mathcal{N}|}{N^2_K(m)N^3(m)} \left(1-e^{-\frac{1}{3}\eta_0\lambda_{\min}^Kt}\right)\]
with probability at least $1-\delta$.
\end{lemma}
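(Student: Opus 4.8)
The plan is to control $\Delta(t):=F(t)-F^{\mathrm{lin}}(t)$ directly by a scalar Gr\"onwall argument. First I would subtract the two evolution equations in (\ref{evolF}): after the rescaling of Assumption \ref{assETA}, which turns the prefactor $\eta N_K(m)/n$ into $\eta_0$, adding and subtracting $\eta_0\hat K_{\Theta_t}(F^{\mathrm{lin}}(t)-Y)$ gives
\[\dot\Delta(t)=-\eta_0\,\hat K_{\Theta_t}\,\Delta(t)-\eta_0\,(\hat K_{\Theta_t}-\hat K_{\Theta_0})\,(F^{\mathrm{lin}}(t)-Y),\]
where $\hat K_{\Theta_t}\equiv\hat K_{\Theta_t}(X,X^T)$ and $\Delta(0)=0$ because $F^{\mathrm{lin}}(0)=F(0)$. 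From the proof of Theorem \ref{gradfl} (where $t_1=\infty$) we may assume that, for $m$ large and with high probability, $\hat K_{\Theta_t}\succ\tfrac13\lambda^K_{\min}\id$ for all $t\ge0$. Differentiating $\|\Delta(t)\|_2^2$, using this lower bound on the first term and Cauchy--Schwarz on the second, and then dividing by $2\|\Delta(t)\|_2$ (replacing $\|\Delta\|_2$ by $\sqrt{\|\Delta\|_2^2+\varepsilon}$ to cover the instants where $\Delta=0$), I obtain for $u(t):=\|\Delta(t)\|_2$ the differential inequality $\dot u(t)\le-\tfrac13\eta_0\lambda^K_{\min}\,u(t)+\eta_0\,\|\hat K_{\Theta_t}-\hat K_{\Theta_0}\|_{\mathrm{op}}\,\|F^{\mathrm{lin}}(t)-Y\|_2$.

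Next I would estimate the two factors in the forcing term using results already available. The explicit solution (\ref{convergenzaesempi}) gives $F^{\mathrm{lin}}(t)-Y=e^{-\eta_0 t\hat K_{\Theta_0}}(F(0)-Y)$, so combining $\hat K_{\Theta_0}\succ\tfrac12\lambda^K_{\min}\id$ (Lemma \ref{lambdamin}) with $\|F(0)-Y\|_2\le R:=R_0\sqrt{n\log(2n)}$ (Corollary \ref{corollaryR}) yields $\|F^{\mathrm{lin}}(t)-Y\|_2\le R\,e^{-\frac13\eta_0\lambda^K_{\min}t}$. For the kernel difference, since it is an $n\times n$ matrix one has $\|\hat K_{\Theta_t}-\hat K_{\Theta_0}\|_{\mathrm{op}}\le n\,\sup_{x,x'}|\hat K_{\Theta_t}(x,x')-\hat K_{\Theta_0}(x,x')|$, which by Corollary \ref{freezntk} (that is, Lemma \ref{kernel} combined with the lazy-training bound (\ref{grad2})) is bounded uniformly in $t$ by $\bar A:=\frac{R_2}{\lambda^K_{\min}}\,n^2\sqrt{\log(2n)}\,\frac{\Sigma_1|\mathcal{M}|^3|\mathcal{N}|}{N_K^2(m)N^3(m)}\le\frac{R_2}{\lambda^K_{\min}}\,n^2\sqrt{\log(2n)}\,\frac{Lm|\mathcal{M}|^4|\mathcal{N}|}{N_K^2(m)N^3(m)}$, using $\Sigma_1\le Lm|\mathcal{M}|$.

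Feeding these bounds into the differential inequality gives $\dot u\le-\alpha u+\eta_0\bar A R\,e^{-\alpha t}$ with $\alpha:=\tfrac13\eta_0\lambda^K_{\min}$ and $u(0)=0$, and a comparison argument then forces $u(t)\le\eta_0\bar A R\,t\,e^{-\alpha t}$. To recover the shape of the claimed estimate I would invoke the elementary inequality $t e^{-\alpha t}\le\tfrac1\alpha(1-e^{-\alpha t})$ (equivalent to $s\le e^{s}-1$), which gives $u(t)\le\frac{3\bar A R}{\lambda^K_{\min}}(1-e^{-\frac13\eta_0\lambda^K_{\min}t})$; substituting $\bar A$ and $R$ and collecting $\sqrt{\log(2n)}\cdot\sqrt{\log(2n)}=\log(2n)$, $n^2\cdot\sqrt n$ and the numerical constants into a single $C$ yields exactly the stated bound. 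The estimate holds with probability at least $1-\delta$ by a union bound over the finitely many events used above, each taken with an appropriate share of $\delta$. I expect the only mildly delicate points to be this bookkeeping of probabilities and the rigorous handling of $\dot u$ at the zeros of $\Delta$; the one non-automatic step is the passage from $t e^{-\alpha t}$ to $(1-e^{-\alpha t})$, and the rest is a direct assembly of Theorem \ref{gradfl}, Corollary \ref{freezntk} and Corollary \ref{corollaryR}.
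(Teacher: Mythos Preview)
Your argument is correct and yields the same bound, but the paper arrives there by a different and slightly leaner decomposition. Instead of adding and subtracting $\eta_0\hat K_{\Theta_t}(F^{\mathrm{lin}}(t)-Y)$ as you do, the paper adds and subtracts $\eta_0\hat K_{\Theta_0}(F(t)-Y)$. This produces the term $-\eta_0\,\Delta^T\hat K_{\Theta_0}\Delta$, which is simply discarded using only the positive \emph{semi}definiteness of $\hat K_{\Theta_0}$, and leaves directly
\[
\left|\frac{d}{dt}\|\Delta(t)\|_2\right|\;\le\;\eta_0\,\|\hat K_{\Theta_t}-\hat K_{\Theta_0}\|_{\mathcal L}\,\|F(t)-Y\|_2,
\]
with the forcing coming from the residual of the \emph{original} model (controlled by (\ref{grad1})) rather than the linearized one. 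This integrates immediately to the $(1-e^{-\alpha t})$ shape with no Gr\"onwall step and no need for the inequality $te^{-\alpha t}\le\alpha^{-1}(1-e^{-\alpha t})$. In short: your route keeps the contraction term $-\alpha u$ and pays for it with a resonant forcing that requires one extra elementary estimate; the paper throws the contraction away and gets a directly integrable right-hand side. The final constants coincide, and both proofs rest on the same three inputs (Theorem~\ref{gradfl}, Corollary~\ref{freezntk}, Corollary~\ref{corollaryR}). The only substantive difference in hypotheses used is that you invoke the strict lower bound $\hat K_{\Theta_t}\succ\tfrac13\lambda_{\min}^K\id$ along the full trajectory (extracted from the proof of Theorem~\ref{gradfl}), whereas the paper's decomposition needs only $\hat K_{\Theta_0}\succeq 0$ at this step.
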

\begin{proof}
We follow the strategy of \cite{QLazy}, using our results to improve the final bound. Let us define
\[\Delta(t)=\|F(t)-F^{\mathrm{lin}}(t)\|_2.\]
\modifica{and, recalling that
\begin{align}
        \frac{d}{dt}f(\Theta_t,x^{(i)})=\hat K_{\Theta_t}(x^{(i)},X^T)(f(\Theta_t,X)-Y)\quad\text{and}\quad \frac{d}{dt}f^{\mathrm{lin}}(\Theta_t,x^{(i)})=\hat K_{\Theta_0}(x^{(i)},X^T)(f^{\mathrm{lin}}(\Theta_t,X)-Y), 
\end{align}
let us compute}
\begin{align}
    \nonumber
    \frac{1}{2}\frac{d}{dt}\Delta^2(t)&=\sum_{i=1}^n\frac{1}{2}\frac{d}{dt}\left(f(\Theta_t,x^{(i)})-f^{\mathrm{lin}}(\Theta^{\mathrm{lin}}_t,x^{(i)})\right)^2\\
    \nonumber
    &= \sum_{i=1}^n  \left(f(\Theta_t,x^{(i)})-f^{\mathrm{lin}}(\Theta^{\mathrm{lin}}_t,x^{(i)})\right)\left(\frac{d}{dt}f(\Theta_t,x^{(i)})-\frac{d}{dt}f^{\mathrm{lin}}(\Theta^{\mathrm{lin}}_t,x^{(i)})\right)\\
    \nonumber
    &= -\eta_0\sum_{i=1}^n \left(f(\Theta_t,x^{(i)})-f^{\mathrm{lin}}(\Theta^{\mathrm{lin}}_t,x^{(i)})\right) \times\\
    \nonumber
    & \qquad \times \left(\hat K_{\Theta_t}(x^{(i)},X^T)(f(\Theta_t,X)-Y)-\hat K_{\Theta_0}(x^{(i)},X^T)(f(\Theta^{\mathrm{lin}}_t,X)-Y)\right)\\
    \nonumber
    &= -\eta_0(F(t)-F^{\mathrm{lin}}(t))^T\hat K_{\Theta_t}(F(t)-Y)+\eta_0 (F(t)-F^{\mathrm{lin}}(t))^T\hat K_{\Theta_0}(F^{\mathrm{lin}}(t)-Y)\\
    \nonumber
    &= -\eta_0(F(t)-F^{\mathrm{lin}}(t))^T\hat K_{\Theta_t}(F(t)-Y)\\
    \nonumber
    & \quad -\eta_0 (F(t)-F^{\mathrm{lin}}(t))^T\hat K_{\Theta_0}(F(t)-F^{\mathrm{lin}}(t))\\
    &\quad +\eta_0 (F(t)-F^{\mathrm{lin}}(t))^T\hat K_{\Theta_0}(F(t)-Y).
    \label{ultimaeq}
\end{align}
Noticing that $\hat K_{\Theta_0}$ is positive semidefinite,
\begin{align}
    -\eta_0 (F(t)-F^{\mathrm{lin}}(t))^T\hat K_{\Theta_0}(F(t)-F^{\mathrm{lin}}(t))\leq 0,
\end{align}
so (\ref{ultimaeq}) reads
\begin{align}
    \nonumber
     \frac{1}{2}\frac{d}{dt}\Delta^2(t)&\leq -\eta_0(F(t)-F^{\mathrm{lin}}(t))^T\hat K_{\Theta_t}(F(t)-Y)+\eta_0 (F(t)-F^{\mathrm{lin}}(t))^T\hat K_{\Theta_0}(F(t)-Y)\\
     &=-\eta_0(F(t)-F^{\mathrm{lin}}(t))^T(\hat K_{\Theta_t}-\hat K_{\Theta_0})(F(t)-Y)
\end{align}
whence
\begin{align}
    \nonumber
     \left|\Delta(t)\frac{d}{dt}\Delta(t)\right|&\leq \eta_0\|F(t)-F^{\mathrm{lin}}(t)\|_2\|\hat K_{\Theta_t}-\hat K_{\Theta_0}\|_\mathcal{L}\|F(t)-Y\|_2\\
     &= \eta_0 \Delta(t)\|\hat K_{\Theta_t}-\hat K_{\Theta_0}\|_\mathcal{L}\|F(t)-Y\|_2.
\end{align}
Therefore,
\begin{align} \left|\frac{d}{dt}\Delta(t)\right|&\leq \eta_0\|\hat K_{\Theta_t}-\hat K_{\Theta_0}\|_{\mathcal{L}}||F(t)-Y||_2.
\end{align}
By (\ref{grad1})
\[\|F(t)-Y\|_2\leq R_0\sqrt{n\log(2n)}e^{-\frac{1}{3}\eta_0\lambda_{\min}^Kt},\]
and by Corollary \ref{freezntk}
\begin{align*}\|\hat K_{\Theta_t}-\hat K_{\Theta_0}\|_\mathcal{L}&\leq \|\hat K_{\Theta_t}-\hat K_{\Theta_0}\|_F\\
&\leq \frac{R_2}{\lambda_{\min}^K}n^2\sqrt{\log(2n)}\frac{Lm|\mathcal{M}|^4|\mathcal{N}|}{N^2_K(m)N^3(m)},
\end{align*}
so that
\[|\partial_t\Delta(t)|\leq \frac{C}{3}\frac{\eta_0}{\lambda_{\min}^K}n^2\sqrt n \log(2n) \frac{Lm|\mathcal{M}|^4|\mathcal{N}|}{N^2_K(m)N^3(m)}e^{-\frac{1}{3}\eta_0\lambda_{\min}^Kt}\qquad C=3R_0R_2\]
whence
\[\Delta(t)\leq \frac{C}{\left(\lambda_{\min}^K\right)^2}n^2\sqrt n \log(2n) \frac{Lm|\mathcal{M}|^4|\mathcal{N}|}{N^2_K(m)N^3(m)} \left(1-e^{-\frac{1}{3}\eta_0\lambda_{\min}^Kt}\right)\]
\end{proof}
Now we have all the ingredients to prove Lemma \ref{secondorder}.
We adapt to our case the strategy of \cite{CB18}. Let us compute
\begin{align}
\nonumber
\big\|\dot\Theta_t-\dot\Theta_t^{\mathrm{lin}}\big\|_\infty&=\frac{\eta}{n}\left\|\nabla_\Theta f(\Theta_t,X^T)(F(t)-Y)-\nabla_\Theta f(\Theta_0,X^T)(F^{\mathrm{lin}}(t)-Y)\right\|_\infty\\
\nonumber
&\leq \frac{\eta}{n}\left\|\left(\nabla_\Theta f(\Theta_t,X^T)-\nabla_\Theta f(\Theta_0,X^T)\right)(F(t)-Y)\right\|_\infty\\
\nonumber
&\qquad+\frac{\eta}{n}\left\|\nabla_\Theta f(\Theta_0,X^T)(F^{\mathrm{lin}}(t)-F(t))\right\|_\infty\\
\nonumber
&\leq \frac{\eta}{n}\sup_i\|\partial_{\theta_i} f(\Theta_t,X)-\partial_{\theta_i} f(\Theta_0,X)\|_2\|F(t)-Y\|_2\\
&\qquad+\frac{\eta}{n}\sup_i\|\partial_{\theta_i} f(\Theta_0,X)\|_2\|F^{\mathrm{lin}}(t)-F(t)\|_2
\end{align}
Let us bound the previous expression term by term. Combining the Lipschitzness result (\ref{eq4.88}) with the lazy training bound (\ref{grad2}), and using the convergence to the examples (\ref{grad1}), we control the first term:
\begin{align}
\nonumber
\frac{\eta}{n}\sup_i\|\partial_{\theta_i} f(\Theta_t,X)&-\partial_{\theta_i} f(\Theta_0,X)\|_2\|F(t)-Y\|_2\\
\nonumber
&\leq \frac{\eta}{n}\sqrt n\, 4\frac{|\mathcal{M}|^2|\mathcal{N}|}{N(m)}\|\Theta_t-\Theta_0\|_\infty \|F(t)-Y\|_2\\
\nonumber
&\leq \frac{\eta}{n}\sqrt n\, 4\frac{|\mathcal{M}|^2|\mathcal{N}|}{N(m)}\frac{R_1}{\lambda_{\min}^K}n\sqrt{\log(2n)}\frac{|\mathcal{M}|}{N_K(m)N(m)} \|F(t)-Y\|_2\\
\nonumber
& \leq \frac{4\eta R_1}{\lambda_{\min}^K}\sqrt{n\log(2n)}\,\frac{|\mathcal{M}|^3|\mathcal{N}|}{N_K(m)N^2(m)} \|F(t)-Y\|_2\\
\nonumber
& \leq \frac{4\eta R_1}{\lambda_{\min}^K}\sqrt{n\log(2n)}\frac{|\mathcal{M}|^3|\mathcal{N}|}{N_K(m)N^2(m)} R_0\sqrt{n\log(2n)}e^{-\frac{1}{3}\eta_0\lambda_{\min}^Kt}\\
\nonumber
&= \frac{4\eta R_0R_1}{\lambda_{\min}^K}n\log(2n)\frac{|\mathcal{M}|^3|\mathcal{N}|}{N_K(m)N^2(m)}e^{-\frac{1}{3}\eta_0\lambda_{\min}^Kt}\\
&= \frac{4\eta_0 R_0R_1}{\lambda_{\min}^K}n^2\log(2n)\frac{|\mathcal{M}|^3|\mathcal{N}|}{N_K^2(m)N^2(m)}e^{-\frac{1}{3}\eta_0\lambda_{\min}^Kt}=:A(t).
\end{align}
Regarding the second term, we need two different estimates to be used for ``small'' and ``large'' $t$, as we will show soon. The first estimate is based on the Lipschitzness of the gradient (\ref{lemma2}) and on Lemma \ref{improvedqlt}:
\begin{align*}
\frac{\eta}{n}\sup_i\|\partial_{\theta_i} &f(\Theta_0,X)\|_2\|F^{\mathrm{lin}}(t)-F(t)\|_2\\
&\leq \frac{\eta}{n}2\sqrt n \frac{|\mathcal{M}|}{N(m)}\|F^{\mathrm{lin}}(t)-F(t)\|_2\\
& \leq \frac{\eta}{n}2\sqrt n \frac{|\mathcal{M}|}{N(m)} \frac{C}{\left(\lambda_{\min}^K\right)^2}n^2\sqrt n \log(2n) \frac{Lm|\mathcal{M}|^4|\mathcal{N}|}{N^2_K(m)N^3(m)} \left(1-e^{-\frac{1}{3}\eta_0\lambda_{\min}^Kt}\right)\\
&\leq \frac{2\eta C}{\left(\lambda_{\min}^K\right)^2} n^2\log(2n)\frac{Lm|\mathcal{M}|^5|\mathcal{N}|}{N^2_K(m)N^4(m)}\\
&= \frac{2\eta_0 C}{\left(\lambda_{\min}^K\right)^2} n^3\log(2n)\frac{Lm|\mathcal{M}|^5|\mathcal{N}|}{N^3_K(m)N^4(m)}=:B(t).
\end{align*}
The second estimate exploits again (\ref{lemma2}) and the convergence to the examples for the original model (\ref{grad1}); we also need to quantify the convergence to the examples for the linearized model; this immediately follows from the analytic solution (\ref{convergenzaesempi}) and \modifica{from Theorem} \ref{gradfl}:
\begin{align}
\nonumber
\frac{\eta}{n}&\sup_i\|\partial_{\theta_i} f(\Theta_0,X)\|_2\|F^{\mathrm{lin}}(t)-F(t)\|_2\\
\nonumber
&\leq \frac{\eta}{n}2\sqrt n \frac{|\mathcal{M}|}{N(m)}\|F^{\mathrm{lin}}(t)-F(t)\|_2\\
\nonumber
&\leq \frac{2\eta}{\sqrt n} \frac{|\mathcal{M}|}{N(m)}\left(\|F^{\mathrm{lin}}(t)-Y\|_2+\|F(t)-Y\|_2\right)\\
\nonumber
&\leq \frac{2\eta}{\sqrt n} \frac{|\mathcal{M}|}{N(m)}\left(\big\|\big(e^{-\eta_0 t \hat K_{\Theta_0}}(F(0)-Y)+Y\big)-Y\big\|_2+R_0\sqrt{n\log(2n)}e^{-\frac{1}{3}\eta_0\lambda_{\min}^Kt}\right)\\
\nonumber
&\leq \frac{2\eta}{\sqrt n} \frac{|\mathcal{M}|}{N(m)}\left(e^{-\eta_0\lambda_{\min} t}\|F(0)-Y\|_2+R_0\sqrt{n\log(2n)}e^{-\frac{1}{3}\eta_0\lambda_{\min}^Kt}\right)\\
\nonumber
&\leq \frac{2\eta}{\sqrt n} \frac{|\mathcal{M}|}{N(m)}R_0\sqrt{n\log(2n)}\left(e^{-\eta_0\lambda_{\min} t}+e^{-\frac{1}{3}\eta_0\lambda_{\min}^Kt}\right)\\
\nonumber
&\leq 4R_0\eta \sqrt{\log(2n)} \frac{|\mathcal{M}|}{N(m)}e^{-\frac{1}{3}\eta_0\lambda_{\min}^Kt}\\
&\leq 4R_0\eta_0 \sqrt{\log(2n)} \frac{|\mathcal{M}|}{N(m)}e^{-\frac{1}{3}\eta_0\lambda_{\min}^Kt}=:C(t).
\end{align}
Hence
\[\big\|\dot\Theta_t-\dot\Theta_t^{\mathrm{lin}}\big\|_\infty\leq A(t)+B(t) \quad \text{\modifica{and}}\quad \big\|\dot\Theta_t-\dot\Theta_t^{\mathrm{lin}}\big\|_\infty\leq A(t)+C(t).\]
Defining
\[t^\ast = \frac{3}{\eta_0 \lambda_{\min}^K}\log N(m),\]
we integrate
\begin{align}
\nonumber
\big\|\Theta_t-\Theta_t^{\mathrm{lin}}\big\|_\infty&\leq \int_0^\infty A(t)dt + \int_0^{t^\ast} B(t)dt + \int_{t^\ast}^\infty C(t)dt\\
\nonumber
&= \frac{12R_0R_1}{(\lambda_{\min}^K)^2}n^2\log(2n)\frac{|\mathcal{M}|^3|\mathcal{N}|}{N_K^2(m)N^2(m)}\\
\nonumber
&\qquad + \frac{2\eta_0 C}{\left(\lambda_{\min}^K\right)^2} n^3\log(2n)\frac{Lm|\mathcal{M}|^5|\mathcal{N}|}{N^3_K(m)N^4(m)}\frac{3}{\eta_0 \lambda_{\min}^K}\log N(m)\\
&\qquad + \frac{12 R_0}{\lambda_{\min}^K} \sqrt{\log(2n)} \frac{|\mathcal{M}|}{N(m)}e^{-\log N(m)}
\end{align}
Using the bound of Lemma \ref{Nmax}
\[\frac{m|\mathcal{M}||\mathcal{N}|}{N^2(m)}\geq 1,\]
\modifica{
we have
\begin{align}
\big\|\Theta_t-\Theta_t^{\mathrm{lin}}\big\|_\infty\leq \left(\frac{6C}{(\lambda_{\min}^K)^3}+\frac{12R_0R_1}{(\lambda_{\min}^K)^2}+\frac{12R_0}{\lambda_{\min}^K}\right) n^3\log(2n) \frac{Lm|\mathcal{M}|^5|\mathcal{N}|^2}{N^4(m)}\log N(m)
\end{align}
and with an upper bound of the form
\begin{align}
    ax^3+bx^2+cx\leq C_1x^3+C_2
\end{align}
for suitable $C_1$ and $C_2$, where, in our case, $x=1/\lambda_{\min}^K, a=6C, b=12R_0R_1$ and  $c=12R_0$, we conclude}
\begin{align}
\big\|\Theta_t-\Theta_t^{\mathrm{lin}}\big\|_\infty&\leq \left(\frac{C_1}{(\lambda_{\min}^K)^3}+C_2\right) n^3\log(2n) \frac{Lm|\mathcal{M}|^5|\mathcal{N}|^2}{N^4(m)}\log N(m).
\end{align}

\subsection{Proof of Theorem \ref{cfevolution}}\label{proofcfevolution}
Using the result of Theorem \ref{powerful} and (\ref{lemma3})
\begin{align}
\nonumber
|f(\Theta_t,x)&-f^{\mathrm{lin}}(\Theta_t^{\mathrm{lin}},x)|\\
\nonumber
&=|f(\Theta_t,x)-f(\Theta_0,x)-\nabla_\Theta f(\Theta_0,x)^T(\Theta_t^{\mathrm{lin}}-\Theta_0)|\\
\nonumber
&= |f(\Theta_t,x)-f(\Theta_0,x)-\nabla_\Theta f(\Theta_0,x)^T(\Theta_t-\Theta_0)-\nabla_\Theta f(\Theta_0,x)^T(\Theta_t^{\mathrm{lin}}-\Theta_t)|\\
\nonumber
&\leq |f(\Theta_t,x)-f^{\mathrm{lin}}(\Theta_t,x)|+|\nabla_\Theta f(\Theta_0,x)^T(\Theta_t^{\mathrm{lin}}-\Theta_t)|\\
\nonumber
&\leq \frac{Lm|\mathcal{M}|^2|\mathcal{N}|}{N(m)}\|\Theta_t-\Theta_0\|_\infty^2+\|\nabla_\Theta f(\Theta_0,x)\|_1\|\Theta_t^{\mathrm{lin}}-\Theta_t\|_\infty\\
&\leq \frac{Lm|\mathcal{M}|^2|\mathcal{N}|}{N(m)}\|\Theta_t-\Theta_0\|_\infty^2+2\frac{Lm|\mathcal{M}|}{N(m)}\|\Theta_t^{\mathrm{lin}}-\Theta_t\|_\infty.
\end{align}
Now we use Lemma \ref{secondorder} and the lazy training inequality (\ref{grad2}):
\begin{align}
\nonumber
|f(\Theta_t,x)&-f^{\mathrm{lin}}(\Theta_t^{\mathrm{lin}},x)|\\
\nonumber
&\leq \frac{Lm|\mathcal{M}|^2|\mathcal{N}|}{N(m)}\left(\frac{R_1^2}{(\lambda_{\min}^K)^2}n^2\log(2n)\frac{|\mathcal{M}|^2}{N_K^2(m)N^2(m)}\right)\\
\nonumber
&\qquad+2\frac{Lm|\mathcal{M}|}{N(m)}\left(\frac{C_1}{(\lambda_{\min}^K)^3}+C_2\right) n^3\log(2n) \frac{Lm|\mathcal{M}|^5|\mathcal{N}|^2}{N^4(m)}\log N(m)\\
\nonumber
&\leq \frac{R_1^2}{(\lambda_{\min}^K)^2}n^2\log(2n)\frac{Lm|\mathcal{M}|^4|\mathcal{N}|}{N_K^2(m)N^3(m)}\\
&\qquad+2\left(\frac{C_1}{(\lambda_{\min}^K)^3}+C_2\right) n^3\log(2n) \frac{L^2m^2|\mathcal{M}|^6|\mathcal{N}|^2}{N^5(m)}\log N(m).
\end{align}
Therefore, is easy to show that
\begin{align}
|f(\Theta_t,x)&-f^{\mathrm{lin}}(\Theta_t^{\mathrm{lin}},x)|\leq \left(\frac{C_3}{(\lambda_{\min}^K)^3}+C_4\right) n^3\log(2n) \frac{L^2m^2|\mathcal{M}|^6|\mathcal{N}|^2}{N^5(m)}\log N(m).
\end{align}

\subsection{Proof of Theorem \ref{qnngp}}\label{proofqnngp}
\begin{lemma}\label{checkhp} The hypothesis 
\[\lim_{m\to\infty}\frac{L^2m^2|\mathcal{M}|^6|\mathcal{N}|^3}{N^5(m)}\log N(m)=0,\label{qnn1}\]
of 
Theorem \ref{qnngp} ensures that the hypotheses of Theorem \ref{init} and of Theorem \ref{ntkconv} are satisfied, i.e.
\begin{align}
\lim_{m\to\infty}\frac{m|\mathcal{M}|^2|\mathcal{N}|^2}{N^3(m)}=0\qquad \text{and}\qquad \lim_{m\to\infty}\frac{1}{N_K^2(m)}\,\frac{\Sigma_2|\mathcal{M}|^2|\mathcal{N}|^2}{N^4(m)}=0.
\end{align}
\modifica{Furthermore,}
\[\lim_{m\to\infty}\frac{Lm|\mathcal{M}|^4|\mathcal{N}|}{N_K^2(m)N^3(m)}=0,\]
i.e., the bound of Theorem \ref{powerful} is nontrivial in the limit $m\to\infty$.
\end{lemma}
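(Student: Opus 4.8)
The plan is to derive each of the three conclusions from the hypothesis (\ref{qnn1}) by showing that each target quantity is at most a \emph{constant} multiple of $\Lambda(m):=\dfrac{L^2m^2|\mathcal{M}|^6|\mathcal{N}|^3}{N^5(m)}\log N(m)$, which tends to $0$ by assumption. The only substantial ingredients are the two structural estimates already at our disposal: the upper bound $N^2(m)\le C^2\,m|\mathcal{M}||\mathcal{N}|$ of Lemma~\ref{Nmax}, valid for all $m$ large, and the lower bound $N_K(m)\ge c_1>0$ of Lemma~\ref{boundnorm}; these are complemented by the crude inequality $\Sigma_2\le Lm|\mathcal{M}|^2$ and the trivial facts $L,m,|\mathcal{M}|,|\mathcal{N}|\ge 1$. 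Before the reductions I would record that (\ref{qnn1}) forces $N(m)\to\infty$: since $L^2m^2|\mathcal{M}|^6|\mathcal{N}|^3\ge m^2\to\infty$, the product in (\ref{qnn1}) cannot vanish while $N(m)$ stays bounded (a situation which moreover does not arise in any regime of interest, cf.\ \autoref{combinazioni}); hence $\log N(m)\ge 1$ for $m$ large, which is what permits us to discard that factor at the end of each estimate.

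Granting $\log N(m)\ge 1$ for $m$ large, the three reductions are short. For the hypothesis of Theorem~\ref{init},
\begin{align*}
\frac{m|\mathcal{M}|^2|\mathcal{N}|^2}{N^3(m)}
&=\frac{N^2(m)}{L^2m|\mathcal{M}|^4|\mathcal{N}|}\cdot\frac{L^2m^2|\mathcal{M}|^6|\mathcal{N}|^3}{N^5(m)}\\
&\le\frac{C^2}{L^2|\mathcal{M}|^3}\cdot\frac{L^2m^2|\mathcal{M}|^6|\mathcal{N}|^3}{N^5(m)}\le C^2\,\Lambda(m),
\end{align*}
where the first inequality is Lemma~\ref{Nmax} and the second uses $L,|\mathcal{M}|\ge1$ and $\log N(m)\ge1$; the right-hand side tends to $0$. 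For the hypothesis of Theorem~\ref{ntkconv}, insert $\Sigma_2\le Lm|\mathcal{M}|^2$ and $N_K^{-2}(m)\le c_1^{-2}$ and then Lemma~\ref{Nmax}:
\begin{align*}
\frac{1}{N_K^2(m)}\,\frac{\Sigma_2|\mathcal{M}|^2|\mathcal{N}|^2}{N^4(m)}
&\le\frac{1}{c_1^2}\,\frac{Lm|\mathcal{M}|^4|\mathcal{N}|^2}{N^4(m)}\\
&=\frac{1}{c_1^2}\,\frac{N(m)}{Lm|\mathcal{M}|^2|\mathcal{N}|}\cdot\frac{L^2m^2|\mathcal{M}|^6|\mathcal{N}|^3}{N^5(m)}
\le\frac{C}{c_1^2}\,\Lambda(m).
\end{align*}
Finally, the same manipulation handles the bound of Theorem~\ref{powerful}:
\begin{align*}
\frac{Lm|\mathcal{M}|^4|\mathcal{N}|}{N_K^2(m)N^3(m)}
&\le\frac{1}{c_1^2}\,\frac{Lm|\mathcal{M}|^4|\mathcal{N}|}{N^3(m)}\\
&=\frac{1}{c_1^2}\,\frac{N^2(m)}{Lm|\mathcal{M}|^2|\mathcal{N}|^2}\cdot\frac{L^2m^2|\mathcal{M}|^6|\mathcal{N}|^3}{N^5(m)}
\le\frac{C^2}{c_1^2}\,\Lambda(m),
\end{align*}
which again tends to $0$ as $m\to\infty$.

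The step requiring the most care is not the algebra — it is routine once one thinks to factor out the master quantity $\Lambda(m)$ and to invoke Lemma~\ref{Nmax}, which is the non-obvious point, since without an a priori bound relating $N(m)$ to $m|\mathcal{M}||\mathcal{N}|$ the intermediate ratios could not be controlled — but rather the passage from $\Lambda(m)\to0$ to $L^2m^2|\mathcal{M}|^6|\mathcal{N}|^3/N^5(m)\to0$, i.e.\ the legitimacy of dropping the factor $\log N(m)$. I would settle this by the remark above that (\ref{qnn1}) forces $N(m)\to\infty$; everything else reduces to matching the powers of $L,m,|\mathcal{M}|,|\mathcal{N}|$ and using $N_K(m)=\Omega(1)$ and $\Sigma_2=O(Lm|\mathcal{M}|^2)$.
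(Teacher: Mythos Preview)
Your proof is correct and follows essentially the same approach as the paper: both arguments rely on Lemma~\ref{Nmax} (in the form $N^2(m)\le C^2 m|\mathcal{M}||\mathcal{N}|$), the lower bound $N_K(m)\ge c_1$ from Lemma~\ref{boundnorm}, the crude estimate $\Sigma_2\le Lm|\mathcal{M}|^2$, and the observation that (\ref{qnn1}) forces $N(m)\to\infty$ so that $\log N(m)\ge 1$ eventually. Your presentation of factoring out the master quantity $\Lambda(m)$ is a clean repackaging of the paper's device of multiplying by $1\le c\,m|\mathcal{M}||\mathcal{N}|/N^2(m)$, but the substance is identical.
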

\begin{proof}
We will use that $|\mathcal{M}|,|\mathcal{N}|,$ and $L$ are all bounded from below by $1$, and $N_K(m)$ is asymptotically bounded by (\ref{boundnormeq}) of Lemma \ref{boundnorm}, and we will use the inequality of Lemma \ref{Nmax} in the form
\[1\leq c\frac{m|\mathcal{M}||\mathcal{N}|}{N^2(m)}\]
so that
\[0\leq\lim_{m\to\infty}\frac{m|\mathcal{M}|^2|\mathcal{N}|^2}{N^3(m)}\leq c\lim_{m\to\infty}\frac{m^2|\mathcal{M}|^3|\mathcal{N}|^3}{N^5(m)}\leq c\lim_{m\to\infty}\frac{L^2m^2|\mathcal{M}|^6|\mathcal{N}|^3}{N^5(m)}\log N(m)=0, \]
\[ 0\leq \lim_{m\to\infty}\frac{Lm|\mathcal{M}|^4|\mathcal{N}|}{N_K^2(m)N^3(m)}\leq c\lim_{m\to\infty}\frac{Lm^2|\mathcal{M}|^5|\mathcal{N}|^2}{N_K^2(m)N^5(m)}\leq c\lim_{m\to\infty}\frac{L^2m^2|\mathcal{M}|^6|\mathcal{N}|^3}{N^5(m)}\log N(m)=0.\]
Now, noticing that (\ref{qnn1}) implies that $\lim_{m\to\infty} N(m)=\infty$,
we conclude that
\begin{align} 
\nonumber
0\leq \lim_{m\to\infty}\frac{1}{N_K^2(m)}\,\frac{\Sigma_2|\mathcal{M}|^2|\mathcal{N}|^2}{N^4(m)}&\leq 
c\lim_{m\to\infty}\frac{1}{N_K^2(m)}\,\frac{\Sigma_2m|\mathcal{M}|^3|\mathcal{N}|^3}{N^6(m)}\\
&\leq c\lim_{m\to\infty}\frac{1}{N(m)}\frac{L^2m^2|\mathcal{M}|^6|\mathcal{N}|^3}{N^5(m)}=0.
\end{align}
\end{proof}

Now we are ready to prove Theorem \ref{qnngp}.\\
Let $\bar X\in\mathbb{R}^N$ be any (finite dimensional) vector on $N$ distinct inputs $\{\bar x_1,\dots, \bar x_n\}\in \mathcal{X}$ and let $\Delta_t(\bar X)$ be the random vector
\[\Delta_t(\bar X)= f(\Theta_t,\bar X)-f^{\mathrm{lin}}(\Theta_t,\bar X).\]
We show that $\Delta_t(\bar X)\xrightarrow{p}0$ as $m\to\infty$. Let $\epsilon,\delta>0$.
By Theorem \ref{cfevolution}, there exists $m_0$ such that, for any $m\geq m_0$,
\begin{align}
\nonumber
\mathbb{P}\Bigg(&\sup_{t'}\sup_{x\in\mathcal{X}}|f(\Theta_{t'},x)-f^{\mathrm{lin}}(\Theta^{\mathrm{lin}}_{t'},x)|\\
&\qquad\qquad\leq \left(\frac{C_3}{(\lambda_{\min}^K)^3}+C_4\right) n^3\log(2n) \frac{L^2m^2|\mathcal{M}|^6|\mathcal{N}|^2}{N^5(m)}\log N(m)\Bigg)\geq 1-\delta.
\end{align}
whence
\begin{align}
\nonumber
\mathbb{P}\Bigg(&\|f(\Theta_t,\bar X)-f^{\mathrm{lin}}(\Theta^{\mathrm{lin}}_t,\bar X)\|_2\\
&\qquad\leq \sqrt N\left(\frac{C_3}{(\lambda_{\min}^K)^3}+C_4\right) n^3\log(2n) \frac{L^2m^2|\mathcal{M}|^6|\mathcal{N}|^2}{N^5(m)}\log N(m)\Bigg)\geq 1-\delta.
\end{align}
There is also $m_1\in\mathbb{N}$ such that 
\[ \sqrt{N}\left(\frac{C_3}{(\lambda_{\min}^K)^3}+C_4\right) n^3\log(2n) \frac{L^2m^2|\mathcal{M}|^6|\mathcal{N}|^2}{N^5(m)}\log N(m)<\epsilon\qquad \forall\,m\geq m_1. \label{finito}\]
So, for any $m\geq\max \{m_0,m_1\}$,
\[
\mathbb{P}\left(\|\Delta_t(\bar X)\|_2 <\epsilon \right)\geq 1-\delta,
\]
so \[\Delta_t(\bar X)\xrightarrow{p}0\quad\text{as}\quad m\to\infty. \]
By Corollary \ref{corgp}, $\{f^{\mathrm{lin}}(\Theta^{\mathrm{lin}}_t,x)\}_{x\in\{\bar x_1,\dots, \bar x_n\}}$ converges in distribution to a Gaussian process with mean $\mu_t$ and covariance $\mathcal{K}_t$ defined in the statement of the corollary:
\[ \{f^{\mathrm{lin}}(\Theta_t,x)\}_{x\in\mathcal{X}}\xrightarrow{d} \{f^{(\infty)}_t(x)\}_{x\in\mathcal{X}}.\] Now, applying Slutsky's theorem \ref{sl} we have
\[ f(\Theta_t,\bar X)=f^{\mathrm{lin}}(\Theta_t,\bar X)+\Delta_t(\bar X)\xrightarrow{d}f^{(\infty)}_t(x)\quad \text{as}\quad m\to\infty.\]
Since $\mathcal{X}$ is finite, we can choose $\bar X=\text{vec}(\mathcal{X})$ so that we have the convergence of the full multivariate Gaussian distribution, which is the Gaussian process itself.
\begin{remark}
    This argument is not valid in the case of $\mathcal{X}$ infinite because we cannot consider $\bar X=\text{vec}(\mathcal{X})$. See \autoref{infinite} for the proof in the most general case.
\end{remark}

 \section{Noisy training of quantum neural networks}\label{ch7}
\subsection{Gradient descent with a finite number of measurements}\label{7-1}
The model function $f(\Theta,x)$ is defined in terms of expectation values of quantum observables, whose result is not deterministic. Therefore, the algorithm has to be iterated a number of times which increases with the precision required. \\
Also the gradient of $f(\Theta,x)$, which is involved in the training of the neural network, has to be estimated through repeated measurements. This can be done using the parameter-shift rules. The limited precision of the model function and of its gradients has to be taken into account when we look for the convergence of the training. As depicted in \autoref{sgdfigu}, if the variance of the estimator of the gradient of $\mathcal{L}$ is not sufficiently small, the training could converge to a neighborhood of the minimum. For instance, if the variance of the estimator is fixed, when the parameters are sufficiently close to a minimum the exact gradient becomes of the same order of the statistical noise and the evolution is subject to relevant deviations. If, at a certain point, the noise becomes much larger than the exact gradient, the parameters evolve in a random direction, possibly away from the minimum. The size of the neighborhood depends on the variance of the estimator and on the shape of the loss near the minimum.\\
The procedure of estimating the model function, its gradient and of computing the evolution of the parameters is done in discrete time steps. This is why we need to move from the gradient flow to the gradient descent.

\begin{figure}[ht]
\centering
\includegraphics[width=0.3\textwidth]{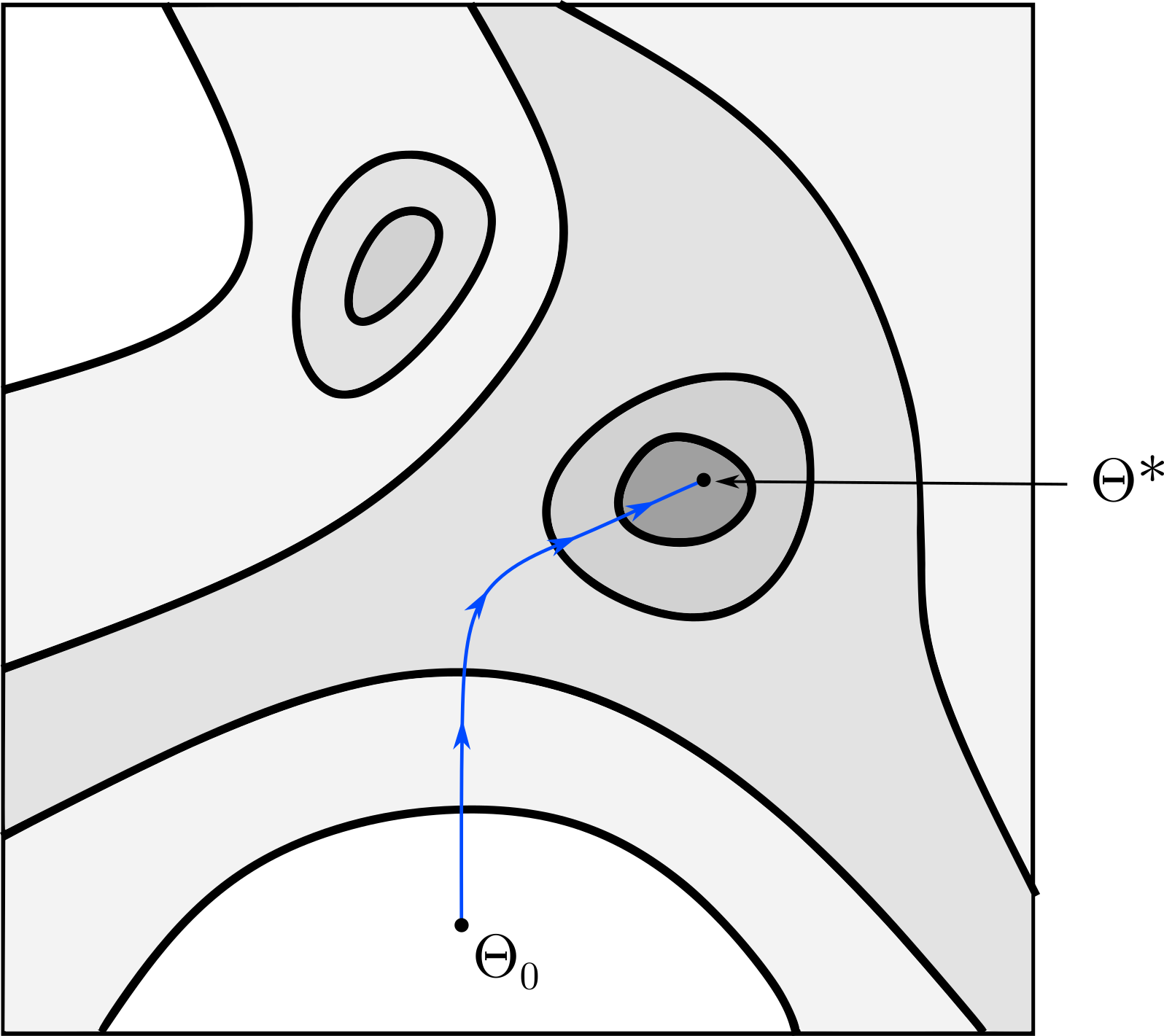}\hspace{1em}
\includegraphics[width=0.3\textwidth]{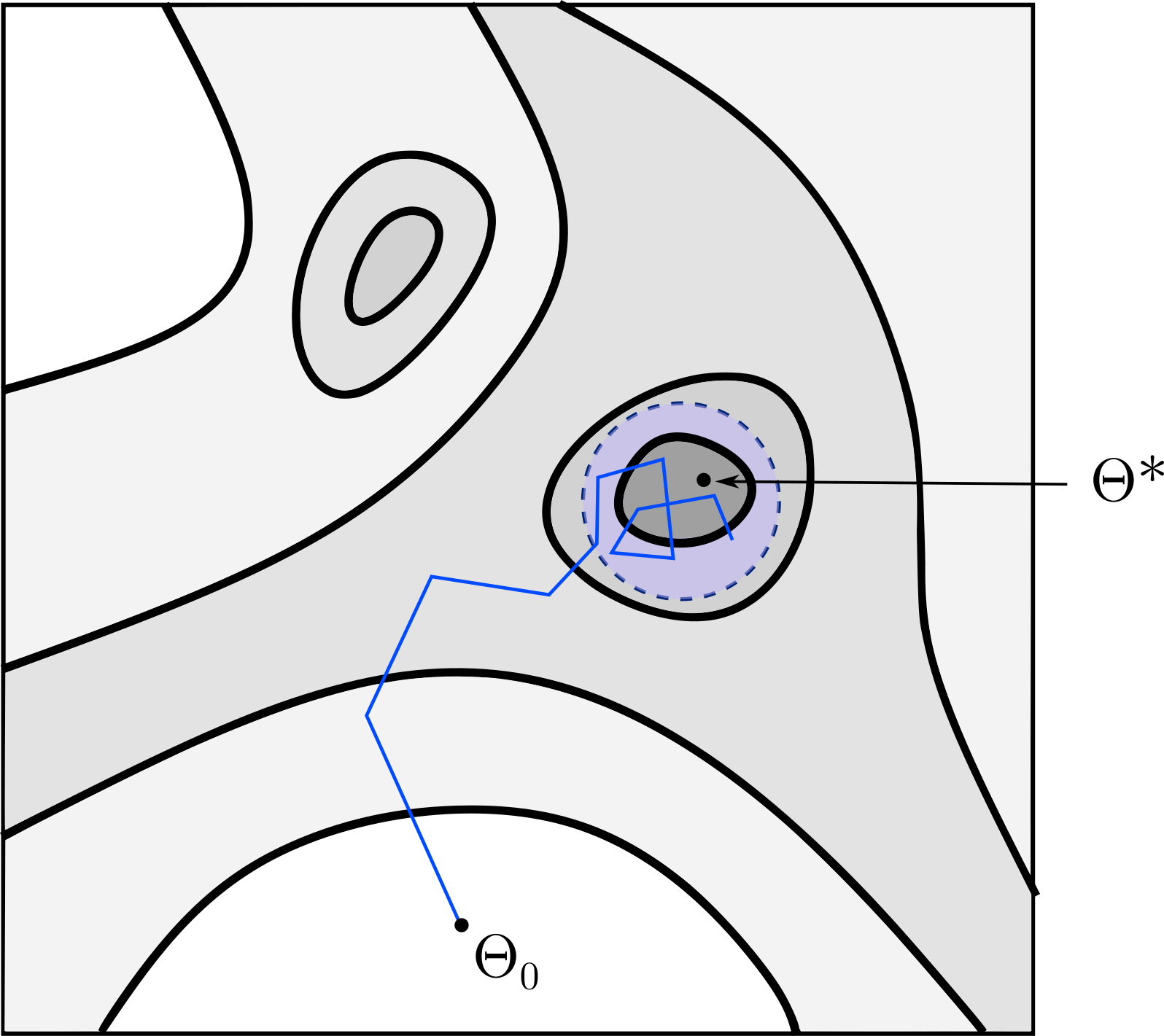}\hspace{1em}
\includegraphics[width=0.3\textwidth]{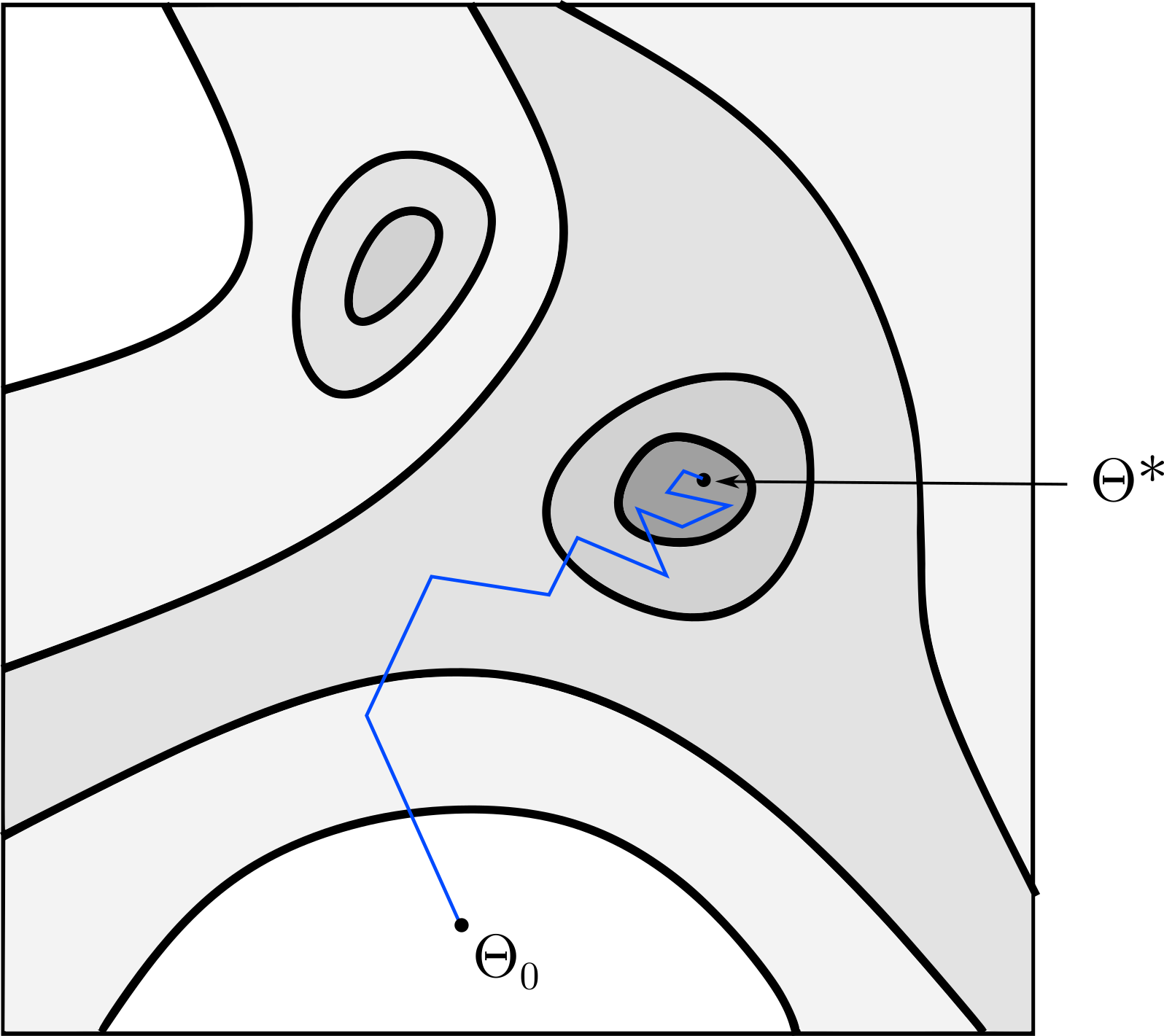}
\caption{Three possible scenarios: on the left, the gradient flow; in the middle, the stochastic gradient descent which converges to a neighborhood of the global minimum due to the high variance of the estimator of the gradient; on the right, the stochastic gradient descent with sufficiently small variance.}
\label{sgdfigu}
\end{figure}

The aim of this section is to prove that for a sufficiently high number of measurement -- which scales \textit{polynomially} with the number of qubits -- the training converges exponentially fast to the global minimum of the loss function, as we saw in the simpler case discussed in \autoref{ch6}. The strategy is based on two steps:
\begin{enumerate}
\item prove that the deterministic gradient descent converges to the global minimum;
\item show that, for a sufficiently small variance of the estimator of the gradient of the loss, the same convergence can be proved.
\end{enumerate}

Since also in this case we retrieve the lazy training regime in the limit of infinitely many qubits, the trained model function is still a Gaussian process after random initialization and stochastic training.

The outline of this section is the following.
\begin{enumerate}
\item In \autoref{7-2}, we introduce the deterministic gradient descent and we prove that it makes the empirical loss converge to zero (Theorem \ref{graddesc0}).
\item In \autoref{7-3}, we discuss the stochastic gradient descent and we prove that the convergence of the empirical loss to zero, the lazy training and the convergence to the linearized model are still valid (Theorem \ref{unbgraddesc}). As a consequence, the model is a Gaussian process in the limit of infinitely many qubits also in the case with noise (Theorem \ref{qnngpn}). These two theorems are the most important results of the section and of the article.
\item In \autoref{proofgraddesc0}, \autoref{proofunbgraddesc} and \autoref{proofqnngpn} we prove the main theorems of the previous subsections.
\end{enumerate}

\subsection{Deterministic gradient descent}\label{7-2}
The gradient descent equation is
\[ \Theta_{t+1}-\Theta_{t}=-\eta \nabla_\Theta \mathcal{L}(\Theta_t).\label{eq7.1}\]
As in the case of gradient flow, we consider the mean squared error function:
\begin{align}
\nonumber\mathcal{L}(\Theta_t)&=\frac{1}{n}\sum_{i=1}^n\frac{1}{2} \left(f(\Theta_t,x^{(i)})-y^{(i)}\right)^2\\ &=\frac{1}{2n}\|F(t)-Y\|_2^2.
\end{align}
This means that the equation (\ref{eq7.1}) can be rewritten as 
\[\Theta_{t+1}-\Theta_{t}=-\frac{\eta}{n} \nabla_\Theta F(t) \cdot (F(t)-Y). \label{eqgrd} \]
We will see, in the proof of Theorem \ref{qnngpn}, that the solution of (\ref{eqgrd}) for the linearized model gives, for $t\to\infty$, the same result obtained in the continuous time setting:
\[\label{wellfit}
\lim_{t\to\infty}f^{\mathrm{lin}}(\Theta_t,x)=f(\Theta_0,x)-\hat K_{\Theta_0}(x,X^T)\hat K_{\Theta_0}^{-1}(f(\Theta_0,X)-Y),
\]
provided that
\[\eta_0<\frac{2}{\lambda_{\min}^K+\lambda_{\max}^K}\]
and $m$ is large enough (see \autoref{proofqnngpn}).
The equation (\ref{wellfit}) shows that linearized model trained for infinite time perfectly reproduces the training labels.

\begin{theorem}[Deterministic gradient descent]
\label{graddesc0}
Let us assume that the hypotheses of Theorem \ref{init} and of Theorem \ref{ntkconv} are satisfied and that Assumption \ref{assNTK} holds. Let us furthermore assume that
\[\lim_{m\to\infty}\frac{\Sigma_1|\mathcal{M}|^3|\mathcal{N}|}{N_K^2(m)N^3(m)}=0.\]
For any $\delta>0$, there exist $\bar m\in\mathbb{N}$ and some constants $R_0,R_1,R_2$ such that, when applying gradient descent with learning rate $\eta=\frac{n}{N_K(m)}\eta_0$, where
\[\eta_0< \frac{2}{\lambda_{\min}^K+\lambda_{\max}^K},\] the following inequalities hold for all $m\geq \bar m$ with probability at least $1-\delta$ over random initialization:
\begin{align}
\label{desc1}\|F(t)-Y\|_2&\leq \sqrt{n\log(2n)}  R_0\left(1-\frac{\eta_0\lambda^K_{\min}}{3}\right)^t,\\
\label{desc2}\|\Theta_t-\Theta_{0}\|_\infty&\leq \frac{6R_0}{\lambda^K_{\min}}n\sqrt{\log(2n)}\frac{|\mathcal{M}|}{N_K(m)N(m)}.
\end{align}
\end{theorem}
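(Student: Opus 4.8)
The plan is to mirror the continuous-time argument of Theorem \ref{gradfl}, adapting the estimates to the discrete step (\ref{eqgrd}). First I would introduce, as in the gradient-flow proof, the radius
\[
\rho(m)=\frac{6\sqrt n R}{N_K(m)\lambda_{\min}^K}\,\frac{|\mathcal{M}|}{N(m)},
\]
where $R=\sqrt{n\log(2n)}\,R_0$ is the bound on $\|F(0)-Y\|_2$ guaranteed with probability at least $1-\delta/2$ by Corollary \ref{corollaryR} for all $m\ge m_0$. By Lemma \ref{lambdamin} applied to $\hat K_{\Theta_0}(X,X^T)$ (whose entries converge in probability to $\bar K$ by Lemma \ref{convprob}), for $m$ large we have $\hat K_{\Theta_0}\succ\frac{\lambda_{\min}^K}{2}\id$ with probability $\ge 1-\delta/2$, and by uniform continuity of $\Theta\mapsto\hat K_\Theta(X,X^T)$ together with the hypothesis $\lim_m \Sigma_1|\mathcal{M}|^3|\mathcal{N}|/(N_K^2 N^3)=0$ --- which by Lemma \ref{kernel} makes the kernel variation over $B_{\rho(m)}(\Theta_0)$ vanish --- there is $m_1\ge\bar m$ so that $\hat K_\Theta\succ\frac{\lambda_{\min}^K}{3}\id$ for all $\Theta\in B_{\rho(m)}(\Theta_0)$ and $m\ge m_1$. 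We also need the upper bound: for $m$ large, $\hat K_\Theta\prec 2\lambda_{\max}^K\id$ on the same ball, which together with $\eta_0<2/(\lambda_{\min}^K+\lambda_{\max}^K)$ (and possibly shrinking to $\eta_0<1/\lambda_{\max}^K$ up to constants) guarantees that the map $I-\frac{\eta_0}{?}\hat K_\Theta$ is a contraction.

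Next I would run an induction on $t$ with the inductive hypothesis that $\Theta_s\in B_{\rho(m)}(\Theta_0)$ for all $s\le t$, hence $\hat K_{\Theta_s}\succ\frac{\lambda_{\min}^K}{3}\id$ for those $s$. From the discrete update for the model on the dataset,
\[
F(t+1)-Y=\big(\id-\eta_0\hat K_{\Theta_t}\big)(F(t)-Y)+E_t,
\]
where $E_t$ collects the second-order Taylor remainder coming from the nonlinearity of $f$ in $\Theta$ between steps $t$ and $t+1$; this remainder is controlled by Theorem \ref{powerful}-type bounds and by $\|\Theta_{t+1}-\Theta_t\|_\infty$, which itself is small by (\ref{lemma2}) and $\|F(t)-Y\|_2$. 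The leading linear part contracts: $\|\id-\eta_0\hat K_{\Theta_t}\|\le 1-\frac{\eta_0\lambda_{\min}^K}{3}$ provided $\eta_0$ is below the stated threshold so that $\id-\eta_0\hat K_{\Theta_t}$ is positive. Absorbing the (uniformly $o(1)$) remainder into a slightly worse contraction constant --- effectively replacing $\lambda_{\min}^K/3$ by a marginally smaller constant, which I can still call $\lambda_{\min}^K/3$ after adjusting $R_0$ --- yields $\|F(t)-Y\|_2\le \sqrt{n\log(2n)}\,R_0\,(1-\eta_0\lambda_{\min}^K/3)^t$, which is (\ref{desc1}). Then the parameter displacement is bounded by telescoping and Lemma \ref{lemma}:
\[
\|\Theta_t-\Theta_0\|_\infty\le\sum_{s=0}^{t-1}\|\Theta_{s+1}-\Theta_s\|_\infty
\le\frac{\eta_0}{N_K(m)}\,2\sqrt n\,\frac{|\mathcal{M}|}{N(m)}\sum_{s=0}^{\infty}\|F(s)-Y\|_2,
\]
and the geometric sum $\sum_s (1-\eta_0\lambda_{\min}^K/3)^s=\frac{3}{\eta_0\lambda_{\min}^K}$ reproduces exactly $\rho(m)$ up to the constant $R_0$, giving (\ref{desc2}) and simultaneously closing the induction, since the bound obtained is strictly less than $\rho(m)$.

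The main obstacle, compared with the gradient-flow case, is the discrete second-order remainder $E_t$: in continuous time the kernel was exactly $\hat K_{\Theta_t}$ evaluated along the flow, whereas here $F(t+1)-Y$ involves $f$ at $\Theta_{t+1}$, not a derivative at $\Theta_t$, so a Taylor expansion produces an $O(\|\Theta_{t+1}-\Theta_t\|_\infty^2)$ term with coefficient $Lm|\mathcal{M}|^2|\mathcal{N}|/N(m)$ (Theorem \ref{powerful}). One must check this term is summable and asymptotically negligible relative to the contraction gain --- this is precisely what the extra hypothesis $\lim_m \Sigma_1|\mathcal{M}|^3|\mathcal{N}|/(N_K^2 N^3)=0$ and the lower bound $m|\mathcal{M}||\mathcal{N}|/N^2(m)\ge 1/c$ of Lemma \ref{Nmax} are there to ensure. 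A secondary subtlety is that, unlike the flow, the discrete iteration can overshoot if $\eta_0$ is too large; the condition $\eta_0<2/(\lambda_{\min}^K+\lambda_{\max}^K)$ (using the two-sided spectral bound on $\hat K_{\Theta_0}$, which by Lemma \ref{lambdamin} transfers to $\hat K_\Theta$ on the ball for large $m$) is exactly what keeps $\id-\eta_0\hat K_\Theta$ a contraction, so I would be careful to invoke the convergence of both $\lambda_{\min}^K$ and $\lambda_{\max}^K$ and not just the minimum eigenvalue.
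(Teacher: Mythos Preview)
Your overall scaffolding --- Corollary \ref{corollaryR} for the initial bound, the ball $B_{\rho(m)}(\Theta_0)$ with the same radius $\rho(m)$, induction on $t$, and the telescoping sum for $\|\Theta_t-\Theta_0\|_\infty$ --- matches the paper exactly. The difference is in how the single-step contraction is obtained. The paper isolates this as a separate Lemma (``Single step of gradient descent''): applying the vector-valued Lagrange mean value theorem gives $f(\Theta',X)-f(\Theta,X)=DF(\xi)(\Theta'-\Theta)$ \emph{exactly}, so that
\[
\|f(\Theta',X)-Y\|_2\le\Big\|\id-\tfrac{\eta_0}{N_K(m)}DF(\xi)\,\nabla_\Theta f(\Theta,X^T)\Big\|_\mathcal{L}\,\|f(\Theta,X)-Y\|_2
\]
with no second-order remainder. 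This (non-symmetric) matrix is then compared directly to the \emph{analytic} NTK $K(X,X^T)$: one writes $\|\id-\eta_0 K\|_\mathcal{L}\le 1-\eta_0\lambda_{\min}^K$ (here the condition $\eta_0<2/(\lambda_{\min}^K+\lambda_{\max}^K)$ is exactly what makes $|1-\eta_0\lambda_{\min}^K|$ dominate $|1-\eta_0\lambda_{\max}^K|$, so there is no need to shrink $\eta_0$ further), then adds $\eta_0\|K-\hat K_{\Theta_0}\|_\mathcal{L}$ from Theorem \ref{ntkconv} and two Lipschitz corrections coming from $\|\nabla_\Theta f(\tilde\Theta,\cdot)-\nabla_\Theta f(\Theta_0,\cdot)\|_\infty\cdot\|\nabla_\Theta f(\Theta_0,\cdot)\|_1$; these last terms produce precisely the factor $\frac{\Sigma_1|\mathcal{M}|^2|\mathcal{N}|}{N_K N^2}\cdot\rho(m)=\frac{\Sigma_1|\mathcal{M}|^3|\mathcal{N}|}{N_K^2 N^3}$, which is the stated hypothesis. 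Your Taylor-with-quadratic-remainder route is also valid (indeed the paper uses it later in the noisy discrete-time analysis, where an explicit $r(t,x)$ is introduced), but note that Theorem \ref{powerful}'s coefficient $Lm|\mathcal{M}|^2|\mathcal{N}|/N(m)$ combined with $\|\Theta_{t+1}-\Theta_t\|_\infty^2$ gives $\frac{Lm|\mathcal{M}|^4|\mathcal{N}|}{N_K^2 N^3}$, which is not automatically controlled by the hypothesis $\frac{\Sigma_1|\mathcal{M}|^3|\mathcal{N}|}{N_K^2 N^3}\to0$; you would need the sharper Hessian sum $\sum_{i,j}|\mathcal{M}_i\cap\mathcal{M}_j|=\sum_k|\mathcal{N}_k|^2\le\Sigma_1|\mathcal{N}|$ in place of $Lm|\mathcal{M}|^2|\mathcal{N}|$ to stay within it. The paper's mean-value approach sidesteps this bookkeeping entirely.
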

\begin{proof}
See \autoref{proofgraddesc0}.
\end{proof}
\begin{remark}
The (\ref{desc1}) is equivalent to
\[\mathcal L (\Theta_t)\leq  \frac{R^2_0}{2}\log(2n)\left(1-\frac{\eta_0\lambda^K_{\min}}{3}\right)^{2t}.\]
\end{remark}
Analogously to the continous-time case we have uniform convergence to the linearized model.
\begin{theorem}
Let us assume that the hypotheses of Theorem \ref{init} and of Theorem \ref{graddesc0} are satisfied and that Assumption \ref{assNTK} holds.
For any $\delta>0$, there exist a constant $R_1>0$ and $\bar m\in\mathbb{N}$ such that the following inequality holds for all $m\geq \bar m$ with probability at least $1-\delta$ over random initialization:
\begin{align}
\sup_t\sup_{x\in\mathcal{X}}|f(\Theta_t,x)-f^{\mathrm{lin}}(\Theta_t,x)|\leq \left(\frac{6R_2}{\lambda^K_{\min}}\right)^2n^2\log(2n)\frac{Lm|\mathcal{M}|^4|\mathcal{N}|}{N_K^2(m)N^3(m)}.
\label{convunif2}
\end{align}
\end{theorem}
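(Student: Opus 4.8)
The plan is to mirror exactly the proof of Theorem~\ref{gronwall} (the gradient-flow analogue), which combined the architecture-only estimate of Theorem~\ref{powerful} with the lazy-training bound. The only object that changes here is the lazy-training estimate itself: instead of \eqref{grad2} we must use \eqref{desc2} from Theorem~\ref{graddesc0}, which is the discrete-time bound on $\|\Theta_t-\Theta_0\|_\infty$. So the proof is a one-line substitution once Theorem~\ref{graddesc0} is available.

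Concretely, first I would invoke Theorem~\ref{powerful}, which holds pointwise in $x$ and for any $\Theta$: $|f(\Theta,x)-f^{\mathrm{lin}}(\Theta,x)|\le \frac{Lm|\mathcal{M}|^2|\mathcal{N}|}{N(m)}\|\Theta-\Theta_0\|_\infty^2$. Then I would take $\Theta=\Theta_t$ and bound $\|\Theta_t-\Theta_0\|_\infty$ by the deterministic gradient-descent lazy-training estimate \eqref{desc2}, which says $\|\Theta_t-\Theta_0\|_\infty\le \frac{6R_0}{\lambda^K_{\min}}\,n\sqrt{\log(2n)}\,\frac{|\mathcal{M}|}{N_K(m)N(m)}$ uniformly in $t$ with probability at least $1-\delta$ for all $m\ge\bar m$. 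Squaring and multiplying gives
\[
\sup_t\sup_{x\in\mathcal{X}}|f(\Theta_t,x)-f^{\mathrm{lin}}(\Theta_t,x)|\le \frac{Lm|\mathcal{M}|^2|\mathcal{N}|}{N(m)}\left(\frac{6R_0}{\lambda^K_{\min}}\right)^2 n^2\log(2n)\frac{|\mathcal{M}|^2}{N_K^2(m)N^2(m)},
\]
and collecting the powers of $|\mathcal{M}|$ and $N(m)$ yields $\left(\tfrac{6R_2}{\lambda^K_{\min}}\right)^2 n^2\log(2n)\,\frac{Lm|\mathcal{M}|^4|\mathcal{N}|}{N_K^2(m)N^3(m)}$ after relabeling the constant ($R_2$ plays the role of $R_0$ from Theorem~\ref{graddesc0}). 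Note the supremum over $x\in\mathcal{X}$ on the left-hand side is legitimate because the right-hand side of Theorem~\ref{powerful} is independent of $x$, and the supremum over $t$ is legitimate because \eqref{desc2} is uniform in $t$.

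The only subtlety — and hence the main obstacle — is that I am assuming Theorem~\ref{graddesc0}, whose proof (deferred to \autoref{proofgraddesc0}) is the genuinely new ingredient in the discrete-time setting: one must show that the gradient-descent iteration $\Theta_{t+1}-\Theta_t=-\tfrac{\eta_0}{N_K(m)}\nabla_\Theta f(\Theta_t,X)(F(t)-Y)$, with the stepsize condition $\eta_0<\tfrac{2}{\lambda_{\min}^K+\lambda_{\max}^K}$, contracts the residual $\|F(t)-Y\|_2$ geometrically with rate $(1-\tfrac{\eta_0\lambda^K_{\min}}{3})$ while keeping the parameters inside a ball small enough that $\hat K_{\Theta_t}(X,X^T)\succ\tfrac{\lambda^K_{\min}}{3}\id$ throughout. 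That argument is the discrete analogue of the gradient-flow proof in \autoref{proofgradfl}, with the extra care that in discrete time the one-step decrease of $\|F(t)-Y\|_2^2$ involves both a first-order term $-2\tfrac{\eta_0}{n}(F-Y)^T\hat K_{\Theta_t}(X,X^T)(F-Y)$ and a second-order overshoot term controlled by $\eta_0^2\lambda_{\max}^K$ and by the Lipschitz constants of the model and its gradient from Lemma~\ref{lemma} and Lemma~\ref{lipf}; the stepsize bound is precisely what makes the net effect a contraction. Granting that, the present theorem follows immediately by the substitution described above, exactly as Theorem~\ref{gronwall} followed from Theorem~\ref{powerful} and Theorem~\ref{gradfl}.
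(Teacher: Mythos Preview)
Your proposal is correct and matches the paper's proof exactly: the paper states simply that ``Combining Theorem~\ref{powerful} with the inequality (\ref{desc2}), the bound (\ref{convunif2}) follows immediately,'' which is precisely the one-line substitution you describe. Your additional commentary on how Theorem~\ref{graddesc0} itself is proved is accurate in spirit but not needed here, since that theorem is a hypothesis of the present statement.
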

\begin{proof} Combining Theorem \ref{powerful} with the inequality (\ref{desc2}), the bound (\ref{convunif2}) follows immediately.
\end{proof}

Also the proof of the convergence to a Gaussian process is similar to the continuous time setting, with some differences dues to the discrete nature of time in the evolution of the model. All the details of the proof of the convergence will be given for the noisy case in the following subsection and can be immediately adapted to the case discussed in this subsection by removing the additional terms due to the statistical noise.

\subsection{Stochastic gradient descent}\label{7-3}

\modifica{
\begin{assumption}[Unbiased gradient descent]\label{sgdexact}
We consider the gradient-descent evolution
\[\Theta_{t+1}-\Theta_t=-\eta g^{(t)}(\Theta_t),\label{sgdeqn}\]
where for any training time $\modifica{t}\in\mathbb{N}$, $g^{(t)}(\Theta_{t})$ is an unbiased estimator of the gradient of the loss function in the stochastic gradient descent equation, i.e., an estimator whose average conditioned on the past history coincides with the exact value of the gradient.
More formally, each $g^{(t)}(\Theta_t)$ is a random vector such that, for any $k < t$, any $\xi_k\in\mathbb{R}$ and any $\Theta_k\in\mathbb{R}^{|\Theta|}$ we have
\[\mathbb{E}\left[\modifica{g_i^{(t)}(\Theta_t)}\,\,\,\big|\,\,\, g^{(k)}(\Theta_{k})=\xi_k\quad \forall\, k<t\right]=\partial_{\theta_i}\mathcal{L}(\Theta)\qquad \forall\,\Theta\in\mathscr{P}\label{conditioned}.\]
\end{assumption}
}

\begin{remark}
The meaning of (\ref{conditioned}) is the following: once we are at time $t$, we perform the measurements of the output of circuit and we build an the estimator \modifica{$g^{(t)}(\Theta_t)$} such that its expectation value is \modifica{$\nabla_\Theta\mathcal{L}(\Theta)$} whatever the output of \modifica{the measurements employed to determine} $\{g^{(0)},g^{(1)},\dots,g^{(t-1)}\}$ was. This is what we physically ask.
\end{remark}

\begin{figure}[ht]
\centering
\includegraphics[width=0.4\textwidth]{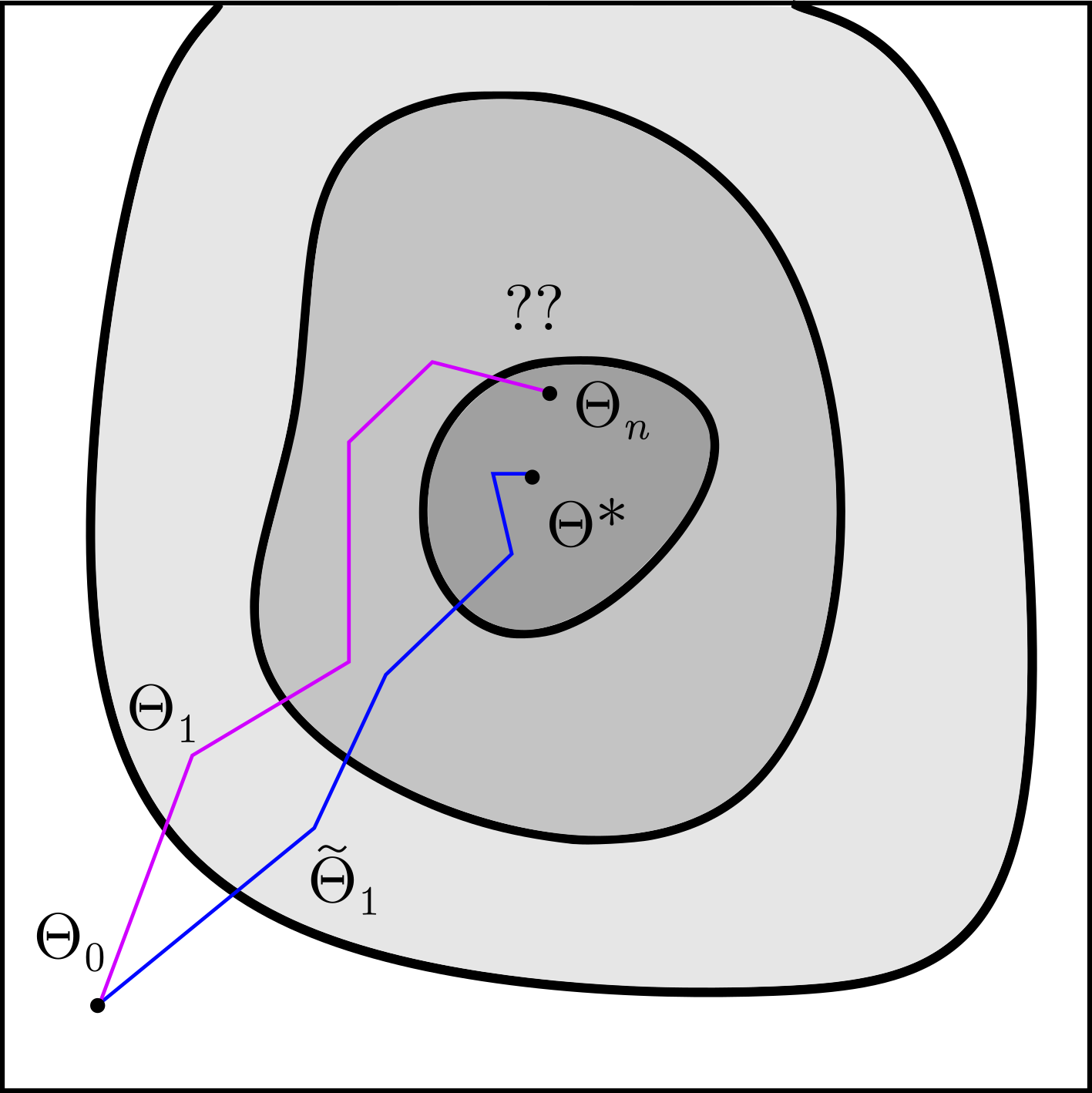}\hspace{2.5em}
\includegraphics[width=0.4\textwidth]{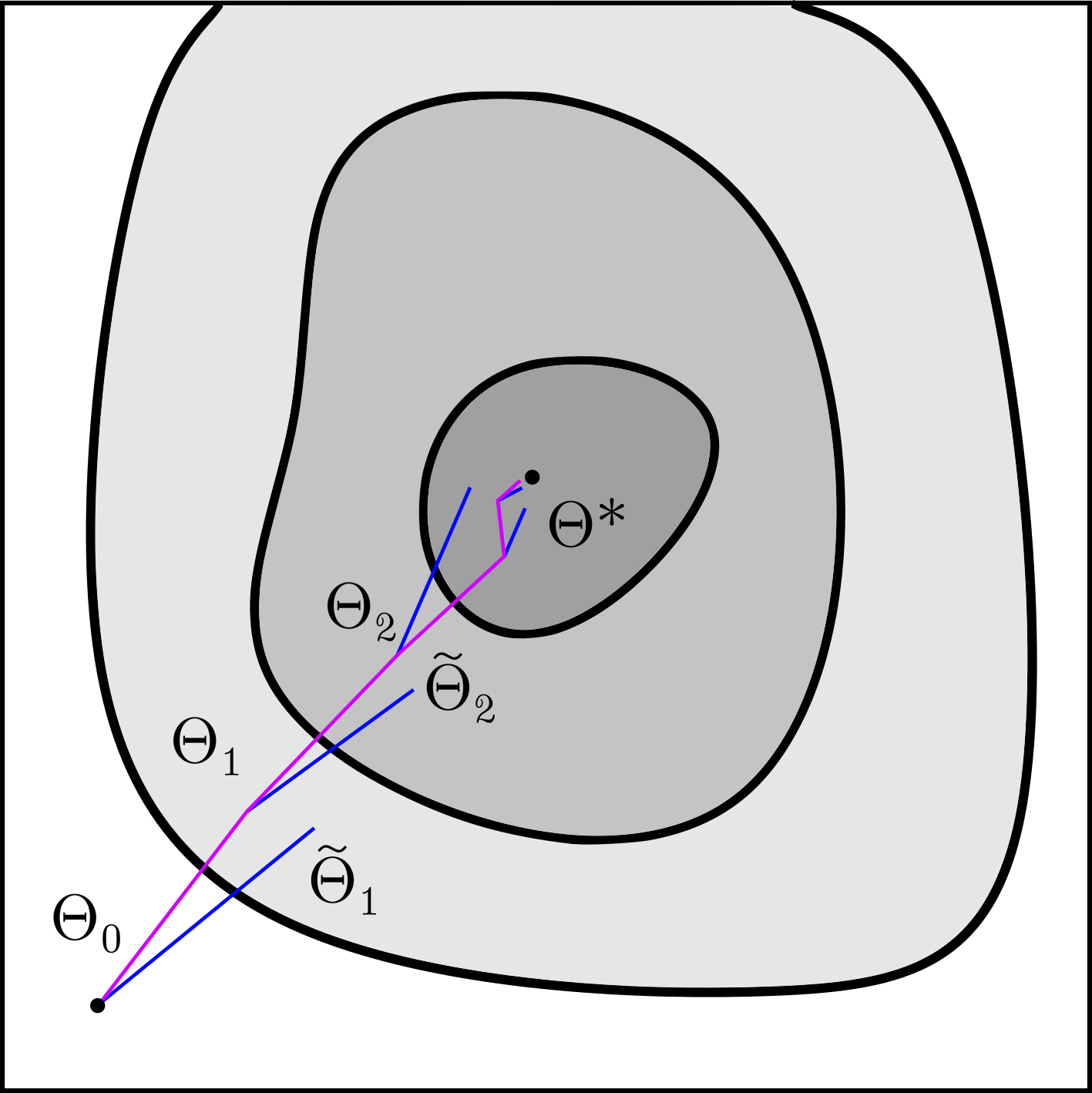}
\caption{The global and the local strategies to approach the stochastic gradient descent convergence. \modifica{With a global strategy we lose track of the path in the parameter space.}}
\label{sgdstrategy}
\end{figure}

The aim of the following theorem is to show that, with a bound on the variance which is polynomial in the number of qubits, an unbiased estimator of the gradient of the MSE loss is suitable to achieve an exponential convergence to the global minimum with a lazy evolution of the parameters.\\
The idea is to compare the stochastic evolution with the exact one. In \autoref{sgdstrategy}, a tentative strategy is depicted on the left. Given an initial vector of parameters $\Theta_0$, the stochastic evolution of the parameter is represented by the violet curve (the parameter vectors are called $\Theta_t$), while the deterministic evolution with the exact gradient is the blue curve (the parameter vectors are called $\tilde \Theta_t$), which is known to converge to the global minimum by Theorem \ref{graddesc0}. We could try to bound the distance between the two curves using some arguments based on the triangle inequality. However, some problems may arise.
\begin{enumerate}
\item At each time step $t>0$, the point $\Theta_t$ at which the estimator of the gradient is computed is different from the point $\tilde\Theta_t$ at which the exact gradient is computed; this means that, even if the exact evolution is close to a minimum ($\nabla_\Theta\mathcal{L}(\tilde\Theta_t)\leq \epsilon$), the estimated distance $\|\Theta_t-\tilde \Theta_t\|_\infty$ would affect the upper bound on $\nabla_\Theta\mathcal{L}(\Theta_t)$ with a bias due to a rough triangle inequality argument, i.e.
\[\nabla_\Theta\mathcal{L}(\Theta_t)\leq \nabla_\Theta\mathcal{L}(\tilde\Theta_t)+g(\|\Theta_t-\tilde \Theta_t\|_\infty).\label{badestimate}\]
It is easy to realize that $g$ is non decreasing. The hope is to show that somehow $\|\Theta_t-\tilde \Theta_t\|_\infty$ becomes small with $t$.
\item However, since (\ref{badestimate}) asymptotically gives few information on the convergence of $\Theta_t$ to $\Theta^\ast$ due to the bias $g$, we can only worsen the estimate:
\begin{align}
\|\Theta_{t+1}-\tilde \Theta_{t+1}\|_\infty&\leq \|\Theta_{t+1}-\Theta_{t}\|_\infty
+\|\Theta_{t}-\tilde \Theta_{t}\|_\infty
+\|\tilde \Theta_{t+1}-\tilde \Theta_t\|_\infty.
\end{align}
So, the bias $g(\|\Theta_t-\tilde \Theta_t\|_\infty)$ is non-decreasing with $t$. This hinders any conclusion on the convergence.
\end{enumerate}
The discussion of this naive strategy is the starting point to understand what goes wrong in this case and to motivate our different approach. The main problem with a \textit{global} comparison between the exact and the stochastic paths is that the iteration of triangle inequality bounding the discrepancy between $\Theta_t$ and $\tilde\Theta_t$ prodigally collects all the previous estimates, accumulating inaccuracy with time. What we need is a \textit{local} approach. This can be achieved by performing a deterministic gradient descent step starting at the last point of the stochastic evolution, as depicted in \autoref{sgdstrategy} on the right.
\[\begin{dcases}
\Theta_{t+1}=\Theta_t-\eta g^{(t)}(\Theta_t)&\quad \text{stochastic gradient descent,}\\
\tilde \Theta_{t+1}=\Theta_t-\eta \nabla_\Theta\mathcal{L}(\Theta_t)&\quad \text{locally deterministic gradient descent.}
\end{dcases}\]
Now, the variance of $g^{(t)}(\Theta_t)$ must be chosen so that $\|\Theta_{t+1}-\tilde\Theta_{t+1}\|_\infty$ is small enough to fulfill these criteria:
\begin{enumerate}
\item $\mathcal{L}(\Theta_{t+1})$ is sufficiently close to $\mathcal{L}(\tilde \Theta_{t+1})$ so that a decrease with respect to the previous cost $\mathcal{L}(\Theta_t)$ is ensured;
\item the distance between $\Theta_{t+1}$ and $\tilde\Theta_{t+1}$ is a summable function of $t$ which goes to zero as the number of qubits increases, so that the lazy training result is still valid.
\end{enumerate}
This will be the strategy to prove Theorem \ref{unbgraddesc}, which is the most important result of the article together with Theorem \ref{qnngpn}.

\begin{mdframed}
\begin{theorem}[Trainability of the model] 
\label{unbgraddesc} Let us suppose that Assumption \ref{assNTK} holds. Let us furthermore assume that
\[\lim_{m\to\infty}\frac{Lm|\mathcal{M}|^4|\mathcal{N}|^2}{N^3(m)}=0.\label{hpunbgraddesc}\]
 We suppose to train our model according to the gradient descent described in Assumption \ref{sgdexact} for a training time $T\in\mathbb{N}$ with learning rate $\eta=\frac{n}{N_K(m)}\eta_0$, where
\[\eta_0<\frac{2}{\lambda_{\min}^K+\lambda_{\max}^K}.\]
We furtherly assume the following condition on the variance of the estimator of the gradient: for any choice of $t\leq T$ and $\xi_k\in\mathbb{R}$ and $\Theta_k\in\mathbb{R}^{|\Theta|}$,
\begin{align}\label{Eg2b}
\nonumber \mathrm{Var}\Big[g_i^{(t)}(\Theta)\,\,\,&\big|\,\,\, g^{(k)}(\Theta_{k})=\xi_k\quad \forall\, k<t\Big]\\
&\leq c_0\eta_0^2\,\frac{\left(\lambda_{\min}^K\right)^4}{n^2}\frac{N_K^2(m)N^2(m)}{|\mathcal{M}|^2|\Theta|^3}\,\frac{\delta}{(t+1)^2}\,\mathcal{L}(\Theta),
\end{align}
where $c_0=\frac{1}{864\pi^2}$.\\
For any $\delta>0$, there exist $\bar m\in\mathbb{N}$ and some constants $R_0(\delta)$, $R_1(\delta)$, $R_2(\delta)$ such that for all $m\geq \bar m$ the following inequalities hold with probability at least $1-\delta$ over random initialization and stochastic training:
\begin{align}
\label{noisy21}\mathcal{L}(\Theta_t)&\leq \frac{R_0^2}{2}\log(2n)\left(1-\frac{\eta_0\lambda_{\min}^K}{3}\right)^{2t},\\
\nonumber \|\Theta_t-\Theta_0\|_\infty&\leq  R_1\eta_0 \lambda_{\min}^K\sqrt{\log(2n)}\frac{N(m)}{|\mathcal{M}||\Theta|}\\
 \label{noisy22}&\phantom{\leq}+\frac{R_2}{\lambda_{\min}^K}n\sqrt{\log(2n)} \frac{|\mathcal{M}|}{N_K(m)N(m)},\\
\nonumber \sup_{x\in\mathcal{X}}|f(\Theta_t,x)-f^{\mathrm{lin}}(\Theta_t,x)|&\leq 2R_1^2\eta_0^2\left(\lambda_{\min}^K\right)^2\log(2n)\frac{|\mathcal{N}|N(m)}{Lm}\\
\label{noisy23} &\phantom{\leq}
+\frac{2R^2_2}{\left(\lambda_{\min}^K\right)^2}n^2\log(2n)\frac{Lm|\mathcal{M}|^4|\mathcal{N}|}{N_K^2(m)N^3(m)},
\end{align}
for any $t\leq T$.
\end{theorem}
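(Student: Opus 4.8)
The plan is to follow the \emph{local comparison} strategy sketched in \autoref{7-3}: at every step compare the stochastic iterate $\Theta_{t+1}=\Theta_t-\eta g^{(t)}(\Theta_t)$ with the point $\tilde\Theta_{t+1}=\Theta_t-\eta\nabla_\Theta\mathcal{L}(\Theta_t)$ obtained by one \emph{exact} gradient-descent step from the \emph{same} $\Theta_t$, so that $\Theta_{t+1}-\tilde\Theta_{t+1}=-\eta\,\epsilon_t$ with $\epsilon_t:=g^{(t)}(\Theta_t)-\nabla_\Theta\mathcal{L}(\Theta_t)$. By Assumption \ref{sgdexact} the $\epsilon_t$ form a martingale-difference sequence (mean zero given the past), and by \eqref{Eg2b} each component has conditional variance $\lesssim \eta_0^2(\lambda^K_{\min})^4 n^{-2} N_K^2N^2|\mathcal{M}|^{-2}|\Theta|^{-3}\,\delta\,(t+1)^{-2}\mathcal{L}(\Theta_t)$. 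First I would isolate a deterministic one-step contraction: if $\Theta_t$ lies in a ball $B_\rho(\Theta_0)$ small enough that, for $m$ large, the empirical NTK stays $\tfrac{\lambda^K_{\min}}{3}\id\preceq\hat K_{\Theta_t}(X,X^T)\preceq 2\lambda^K_{\max}\id$ (guaranteed by the NTK concentration Theorem \ref{ntkconv}, Lemma \ref{lambdamin} and the freezing Corollary \ref{freezntk}), then a second-order Taylor expansion of $f$ in $\Theta$, with the quadratic remainder controlled by the Lipschitz bounds of \autoref{sectionlipmodel} (it is $O\!\big(\tfrac{Lm|\mathcal{M}|^2|\mathcal{N}|}{N(m)}\|\tilde\Theta_{t+1}-\Theta_t\|_\infty^2\big)$, negligible for large $m$ under \eqref{hpunbgraddesc}), together with $\eta_0<\frac{2}{\lambda^K_{\min}+\lambda^K_{\max}}$, yields $\|\tilde F(t+1)-Y\|_2\le(1-\tfrac{\eta_0\lambda^K_{\min}}{3})\|F(t)-Y\|_2$ and $\|\tilde\Theta_{t+1}-\Theta_t\|_\infty\lesssim \tfrac{\sqrt n|\mathcal{M}|}{N_K(m)N(m)}\|F(t)-Y\|_2$. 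This is exactly the one-step mechanism already used inside the proof of Theorem \ref{graddesc0}.

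Next comes the probabilistic core. I would define the good event $\mathcal{G}$ as the intersection of: (i) $\|F(0)-Y\|_2\le R:=R_0\sqrt{n\log(2n)}$ (Corollary \ref{corollaryR}); (ii) $\hat K_{\Theta_0}(X,X^T)\succ\tfrac{\lambda^K_{\min}}{2}\id$ (Lemma \ref{lambdamin} applied to Theorem \ref{ntkconv}); and (iii) the accumulated noise $M_t:=-\eta\sum_{s<t}\epsilon_s$ satisfies $\sup_{t\le T}\|M_t\|_\infty\le a$ with $a\le R_1\,\eta_0\lambda^K_{\min}\sqrt{\log(2n)}\,\tfrac{N(m)}{|\mathcal{M}||\Theta|}$ for a suitable constant $R_1$. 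Event (iii) is obtained by applying Doob's $L^2$ maximal inequality to each of the $|\Theta|$ component martingales and a union bound: the total conditional variance $\sum_{s}\eta^2\,\mathrm{Var}[\epsilon_{s,i}\mid\mathcal F_{s-1}]$ is summable because of the $\tfrac{\delta}{(s+1)^2}$ decay in \eqref{Eg2b} ($\sum_s(s+1)^{-2}=\pi^2/6$) together with the inductively available bound $\mathcal{L}(\Theta_s)\le\tfrac{R_0^2}{2}\log(2n)(1-\tfrac{\eta_0\lambda^K_{\min}}{3})^{2s}$; after cancelling $\eta^2=n^2\eta_0^2/N_K^2$ against the explicit factors in \eqref{Eg2b}, and using $\eta_0\lambda^K_{\min}<2$ to match the power stated in \eqref{noisy22}, the maximal inequality tuned to failure probability $\delta/3$ produces exactly the advertised $a$ — and the explicit value $c_0=1/(864\pi^2)$ is what makes the numerics close. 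To make the use of the inductive bound on $\mathcal{L}(\Theta_s)$ inside the variance estimate legitimate, I would introduce the stopping time $\tau$ at which either $\mathcal{L}(\Theta_t)\le\tfrac{R_0^2}{2}\log(2n)(1-\tfrac{\eta_0\lambda^K_{\min}}{3})^{2t}$ or $\|\Theta_t-\Theta_0\|_\infty\le\rho$ first fails, run all of the above for the process stopped at $\tau$, and then show $\tau>T$ on $\mathcal{G}$. Each of (i),(ii),(iii) has probability $\ge 1-\delta/3$, so $\mathbb P(\mathcal{G})\ge 1-\delta$.

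The conclusion is then an induction on $t$ carried out on $\mathcal{G}$. Assume (a) $\mathcal{L}(\Theta_s)\le\tfrac{R_0^2}{2}\log(2n)(1-\tfrac{\eta_0\lambda^K_{\min}}{3})^{2s}$ and (b) $\|\Theta_s-\Theta_0\|_\infty\le$ RHS of \eqref{noisy22} $\le\rho$ for all $s\le t$; the base case is (i). For the step, $\Theta_t\in B_\rho(\Theta_0)$ lets me apply the one-step contraction, and then $\|F(t+1)-Y\|_2\le\|\tilde F(t+1)-Y\|_2+\|F(t+1)-\tilde F(t+1)\|_2$, where the Lipschitzness of the model (Lemma \ref{lipf}) gives $\|F(t+1)-\tilde F(t+1)\|_2\le 2\sqrt n\tfrac{Lm|\mathcal{M}|}{N(m)}\,\eta\|\epsilon_t\|_\infty$; the per-step noise $\|\epsilon_t\|_\infty$ is, on $\mathcal{G}$, small enough (again by \eqref{Eg2b}, the $(t+1)^{-2}$ decay and the extra $|\Theta|^{-3}$, $N_KN$, $(\lambda^K_{\min})^2$ factors) that this term is a fixed fraction of $(1-\tfrac{\eta_0\lambda^K_{\min}}{3})^{t+1}R_0\sqrt{n\log(2n)}$, closing (a). For (b) I decompose $\Theta_{t+1}-\Theta_0=-\eta\sum_{s\le t}\nabla_\Theta\mathcal{L}(\Theta_s)+M_{t+1}$; by \eqref{lemma2} and (a) the deterministic sum telescopes into a geometric series bounded by $\tfrac{R_2}{\lambda^K_{\min}}n\sqrt{\log(2n)}\tfrac{|\mathcal{M}|}{N_K(m)N(m)}$ (with $R_2=6R_0$, exactly as in \eqref{desc2}), while $\|M_{t+1}\|_\infty\le a$ supplies the first term of \eqref{noisy22}. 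Finally \eqref{noisy23} is immediate: plug \eqref{noisy22} into Theorem \ref{powerful}, $|f(\Theta_t,x)-f^{\mathrm{lin}}(\Theta_t,x)|\le\tfrac{Lm|\mathcal{M}|^2|\mathcal{N}|}{N(m)}\|\Theta_t-\Theta_0\|_\infty^2$, expand $(u+v)^2\le2u^2+2v^2$, and simplify the second resulting term with Lemma \ref{Nmax} (which gives $m|\mathcal{M}||\mathcal{N}|/N^2(m)\gtrsim1$). The main obstacle is precisely the circularity just mentioned — the variance in \eqref{Eg2b} depends on $\mathcal{L}(\Theta_t)$, itself random along the stochastic path, while the whole point of the induction is to control $\mathcal{L}(\Theta_t)$ — compounded by the requirement that the NTK remain well-conditioned along the \emph{entire} random trajectory, so the radius $\rho$ of the ball we never leave must be chosen consistently with the very displacement bound \eqref{noisy22} we are proving; disentangling this needs the stopping-time device and forces the exact polynomial rate imposed on the estimator variance.
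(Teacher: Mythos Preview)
Your high-level architecture matches the paper exactly: local comparison of $\Theta_{t+1}$ with the one exact step $\tilde\Theta_{t+1}$, the single-step contraction of Lemma~\ref{step}, induction on $t$, and Theorem~\ref{powerful} for \eqref{noisy23}. Two technical points separate your route from the paper's, one of which is an actual gap.

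\textbf{The gap.} Your good event $\mathcal{G}$ only contains, in item~(iii), a bound on the \emph{accumulated} martingale $\sup_t\|M_t\|_\infty$ via Doob. But the loss step~(a) needs the \emph{per-step} deviation $\eta\|\epsilon_t\|_\infty=\|\Theta_{t+1}-\tilde\Theta_{t+1}\|_\infty$, which is not controlled by a maximal inequality on $M_t$ (large cancelling increments are compatible with small running sum). You invoke ``on $\mathcal{G}$, per-step noise is small'' without having put any per-step event into $\mathcal{G}$. The paper avoids this by doing exactly what you would need to patch it: a per-step Chebyshev bound, conditioned on the past, $\mathbb{P}\!\big(|\theta_i(t{+}1)-\tilde\theta_i(t{+}1)|>k\eta\sqrt{\mathrm{Var}}\big)\le 1/k^2$ with $k=\pi\sqrt{|\Theta|/(3\delta)}\,(t{+}1)$, followed by a union bound over all $(t,i)$; the $(t{+}1)^{-2}$ makes this summable. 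Once you have per-step control, the accumulated bound for~(b) follows by triangle inequality, so Doob becomes redundant.

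\textbf{A different mechanism for closing the loss recursion.} You propose $\|F(t{+}1)-Y\|_2\le\|\tilde F(t{+}1)-Y\|_2+\|F(t{+}1)-\tilde F(t{+}1)\|_2$ with Lemma~\ref{lipf} for the second term. The paper instead works at the level of $\sqrt[4]{\mathcal{L}}$: it first proves a Lipschitz estimate for $\mathcal{L}$, applies the elementary Lemma~\ref{elemineq} to obtain $\sqrt[4]{\mathcal{L}(\Theta_{t+1})}\le\sqrt[4]{\mathcal{L}(\tilde\Theta_{t+1})}+\big(\sqrt{2}|\Theta|\tfrac{|\mathcal{M}|}{N}\|\Theta_{t+1}-\tilde\Theta_{t+1}\|_\infty\big)^{1/2}$, invokes Lemma~\ref{step} with $\zeta=\tfrac{2}{3}$, and then uses the auxiliary inequality Lemma~\ref{elemineq2} to pass from $\sqrt{1-\tfrac{2}{3}\eta_0\lambda_{\min}^K}+\tfrac{\eta_0\lambda_{\min}^K}{6}$ to $\sqrt{1-\tfrac{1}{3}\eta_0\lambda_{\min}^K}$. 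This is precisely where the specific value $c_0=\tfrac{1}{864\pi^2}$ comes from. Your linear route can be made to close as well, but the constants have to be retuned; the paper's quartic-root trick is what pins down the stated $c_0$ and the factor $\tfrac{1}{3}$ without an extra smallness assumption on $\eta_0\lambda_{\min}^K$. Apart from these two points, your induction, the choice of radius $\rho(m)$, and the derivation of \eqref{noisy22}–\eqref{noisy23} are the same as the paper's.
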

\end{mdframed}
\begin{proof}
See \autoref{proofunbgraddesc}.
\end{proof}
\begin{remark}
[Dependence of the variance on $\mathcal{L}(\Theta)$ and on $t$]
As already introduced in \autoref{7-1}, the variance of the estimator of the gradient must be sufficiently small so that the exact value of the gradient is not completely blurred by the statistical noise. In the proof of Theorem \ref{unbgraddesc} we show that a small cost function implies a small gradient. So, the cost function in the bound (\ref{Eg2b}) ensures that the variance is exponentially small (in time) when the cost and its gradient are exponentially small. Furthermore, an explicit dependence on $t$ appears in (\ref{Eg2b}). This is due to the iterative comparison of the stochastic gradient descent with the exact gradient descent of the strategy we discussed in the introduction of this subsection. Indeed, we need the probability that the estimated gradient is too far from the exact gradient to be increasingly small along the minimization procedure. Otherwise, we could not provide a union bound ensuring that the sum of the probabilities that estimated gradient is not close enough to the exact gradient is small (see (\ref{unionbounddelta})).
\end{remark}
In the following \modifica{proposition} we explain why the bound on the variance \eqref{Eg2b} implies that a polynomial number of measurements (with respect to the number of the qubits) is sufficient to 

\modifica{
\begin{proposition}[A polynomial number of measures]\label{polynmeas}
For any $\epsilon,\delta>0$, let $M(\epsilon,\delta)$ be the number of measurements required by the gradient-descent algorithm of Theorem \ref{unbgraddesc} to get a cost less than $\epsilon$ with probability at least $1-\delta$.
Then,
\[M(\epsilon,\delta)\leq C(\delta,\epsilon,\lambda_{\min}^K,n)\,\left(\frac{L\,m\left|\mathcal{M}\right|^4}{N^3(m)}\right)^\frac{4}{3}L^\frac{8}{3}m^\frac{14}{3},\]
where the term in parentheses tends to $0$ for $m\to\infty$ from the hypothesis \eqref{hpunbgraddesc} of Theorem \ref{unbgraddesc}, and the remaining terms grow at most polynomially in $m$ whenever the number of layers $L$ grows polynomially with $m$.
\end{proposition}
}
\begin{proof}
We recall that, if $X$ and $Y$ are independent random variables, then
\[\text{Var}[X+Y]=\text{Var}[X]+\text{Var}[Y],\]
therefore, if $X_1,\dots,X_M$ are i.i.d. random variables, then
\[\text{Var}\left[\frac{1}{M}\sum_{k=1}^MX_k\right]=\frac{\text{Var}[X_1]}{M}.\]
\modifica{By Assumption \ref{domain}, the generators of the unitaries encoding the trainable parameters have spectrum $\{-1,+1\}$.
Then, there is a simple strategy to compute the derivative $\partial_{\theta_i}f(\Theta,x)$. In quantum circuits like these, indeed, the derivatives of the expectation value of any observable with respect to any parameter can be written in terms of other expectation values computed for ``shifted'' parameters. These identities were introduced in \cite{Schuld_2019} (see in particular (14)) and they are known as parameter shift rules:
\[\frac{\partial}{\partial\theta_i}f(\Theta,x)=f(\Theta+\Delta^{(i)},x)-f(\Theta-\Delta^{(i)},x)\qquad \Delta^{(i)}_j\coloneqq\begin{cases}
    \frac{\pi}{4}&j=i\\ 0 & j \neq i
\end{cases}\]
so that an unbiased estimator $h^{(t,1)}_i(\Theta,x)$ of $\partial_{\theta_i}f(\Theta,x)$ can be constructed by measuring the output of the circuit at 2 different points
\[h^{(t,1)}_i(\Theta,x)=h_i^{(t,1,+)}(\Theta,x)-h_i^{(t,1,-)}(\Theta,x),\]
where $h_i^{(t,1,\pm)}(\Theta,x)$ are the outputs of the \textit{mutually independent} measurements of $\mathcal{O}$ on $\ket{\psi(\Theta\pm\Delta^{(i)},x)}=U(\Theta\pm\Delta^{(i)},x)\ket{0^m}$:
\begin{align}
\nonumber \mathbb{E}\left[h^{(t,1)}_i(\Theta,x)\right]&=\mathbb{E}\left[h_i^{(t,1,+)}(\Theta,x)-h_i^{(t,1,-)}(\Theta,x)\right]\\
&=h_i^{(t,1,+)}(\Theta,x)-h_i^{(t,1,-)}(\Theta,x)=\frac{\partial}{\partial\theta_i}f(\Theta,x).
\end{align}}
All the expectation values and all the variances are meant to be conditioned on any possible story of the previous estimators.
Iterating this procedure $M\modifica{^{(t)}}$ times, we have a family of unbiased and independent estimators 
$\{h^{(t,k)}_i(\Theta,x)\}_{k\in\{1,\dots,M\modifica{^{(t)}}\}}$
of $\frac{\partial}{\partial\theta_i}f(\Theta,x)$.
Since, by Assumption \ref{domain}, $\mathcal{O}$ is a sum of $m$ observables with spectrum in $[-1,1]$, a very rough uniform bound on the variance of $h_i^{(t,1,j)}(\Theta+\Delta^{(j)},x)$ can be easily provided:
\[\text{Var}\left[h_i^{(t,k,j)}(\Theta+\Delta^{(j)},x)\right]\leq\frac{m^2}{N^2(m)},\]
whence
\[\text{Var}\left[h_i^{(t,k)}(\Theta,x)\right]\leq\frac{\modifica{2}m^2}{N^2(m)}.\]
Therefore, 
\[g^{(t)}_i(\Theta,x)=\frac{1}{M\modifica{^{(t)}}}\sum_{k=1}^{\modifica{M^{(t)}}} h_i^{(t,k)}(\Theta,x)\]
is an unbiased estimator of $\frac{\partial}{\partial\theta_i}f(\Theta,x)$ with variance
\[\text{Var}\left[g_i^{(t)}(\Theta,x)\right]\leq \frac{1}{M\modifica{^{(t)}}}\frac{\modifica{2}m^2}{N^2(m)}.\]
By (\ref{Eg2b}) we notice that, in the physical implementation of the quantum neural network, the variance required at each time step is of the form
\[\mathrm{Var}\Big[g_i^{(t)}(\Theta_t)\,\,\,\big|\,\,\, g^{(k)}(\Theta_{k})=\xi_k\quad \forall\, k<t\Big]\leq c(\delta, \lambda_{\min}^K,n)\frac{N^2_K(m)N^2(m)}{|\mathcal{M}|^2L^3m^3}\frac{\mathcal{L}(\Theta_t)}{(t+1)^2}.\]
This is ensured if the number of measurements $M\modifica{^{(t)}}$ is sufficiently large
\[\frac{1}{M\modifica{^{(t)}}}\frac{\modifica{2}m^2}{N^2(m)}\leq c(\delta, \lambda_{\min}^K,n)\frac{N^2_K(m)N^2(m)}{|\mathcal{M}|^2L^3m^3}\frac{\mathcal{L}(\Theta_t)}{(t+1)^2},\]
i.e.
\[
M\modifica{^{(t)}}=\frac{\modifica{2}}{c(\delta, \lambda_{\min}^K,n)}\frac{|\mathcal{M}|^2L^3m^5}{N^2_K(m)N^4(m)}\frac{(t+1)^2}{\mathcal{L}(\Theta_t)}
\]
is sufficient.
Since this procedure must be repeated for each parameter, the total number of measurements required at time $t$ is
\[M_{tot}^{(t)}=|\Theta|M\modifica{^{(t)}}=LmM\modifica{^{(t)}}.\]
To get $\epsilon$-close to the minimum of the cost, by (\ref{noisy21}) the number of time steps required is, at most,
\[ T_{\max}^{(\epsilon)}=\frac{\log\left(\frac{1}{\epsilon}\right)+\log\big(R_0^2\log(2n)\big)}{-2\log\left(1-\frac{\eta_0\lambda_{\min}^K}{3}\right)}.\]
So the total number of measurements required is bounded by
\[\modifica{M(\epsilon,\delta)}\leq \sum_{t=1}^{T_{\max}^{(\epsilon)}}M_{tot}^{(t)}\leq \frac{\modifica{2}T_{\max}^{(\epsilon)}(T_{\max}^{(\epsilon)}+1)^2}{c(\delta, \lambda_{\min}^K,n)\epsilon}\frac{|\mathcal{M}|^2L^4m^6}{N^2_K(m)N^4(m)}.\]
\modifica{Setting
\[C(\delta,\epsilon,\lambda_{\min}^K,n)\coloneqq\frac{\modifica{2}T_{\max}^{(\epsilon)}(T_{\max}^{(\epsilon)}+1)^2}{c(\delta, \lambda_{\min}^K,n)\epsilon}\]
we get
\begin{align}
M(\epsilon,\delta) &\leq C(\delta,\epsilon,\lambda_{\min}^K,n)\,\frac{|\mathcal{M}|^2L^4m^6}{N^2_K(m)N^4(m)} \overset{(\mathrm{a})}{\le}C(\delta,\epsilon,\lambda_{\min}^K,n)\,\left(\frac{L\,m\left|\mathcal{M}\right|^4}{N^3(m)}\right)^\frac{4}{3}\frac{L^\frac{8}{3}m^\frac{14}{3}}{|\mathcal{M}|^\frac{10}{3}}\nonumber\\
&\le C(\delta,\epsilon,\lambda_{\min}^K,n)\,\left(\frac{L\,m\left|\mathcal{M}\right|^4}{N^3(m)}\right)^\frac{4}{3}L^\frac{8}{3}m^\frac{14}{3}\,,
\end{align}
where (a) follows from Lemma \ref{boundnorm}.
The claim follows.}
\end{proof}

Theorem \ref{unbgraddesc} is the fundamental result to prove that also in the noisy case the model is a Gaussian process during the training in the limit of many qubits. However, we still need to verify that the noisy trajectory in the parameter space is close to the exact trajectory with the linearized evolution equation. This result is ensured by the following statements, which require a further suppression $\xi(m)\to 0$ in the bound for variance of the estimator for the gradient of the cost function; we are going to choose $\xi(m)$ in order to ensure the rate of convergence needed to obtain the convergence to a Gaussian process. We will also replace $\delta$ with $\delta/4$: this is just a technical detail which will be used in the proof for a union bound.

\begin{lemma}[The parameters differ at the second order]\label{secondorder2} Let us assume a condition on the variance of the estimator of the gradient stronger than (\ref{Eg2b}) by a factor $\xi(m)/4$: for any choice of $t\leq T$ and $\xi_k\in\mathbb{R}$ and $\Theta_k\in\mathbb{R}^{|\Theta|}$,
\begin{align}\label{VarXi}
\nonumber \mathrm{Var}\Big[g_i^{(t)}(\Theta)\,\,\,&\big|\,\,\, g^{(k)}(\Theta_{k})=\xi_k\quad \forall\, k<t\Big]\\
&\leq c_0\eta_0^2\,\frac{\left(\lambda_{\min}^K\right)^4}{n^2}\frac{N_K^2(m)N^2(m)}{|\mathcal{M}|^2|\Theta|^3}\xi(m)\,\frac{\delta/4}{(t+1)^2}\,\mathcal{L}(\Theta),
\end{align}
where $c_0=\frac{1}{864\pi^2}$.
Then, for any fixed $\delta>0$, there exist some constant $C_1, C_2$ and an integer $\bar m\in\mathbb{N}$ such that, for any $m\geq \bar m$,
\[\big\|\Theta_t-\Theta_t^{\mathrm{lin}}\big\|_\infty\leq \left(\frac{C_1}{(\lambda_{\min}^K)^3}+C_2\right)n^3\log(2n)\frac{Lm|\mathcal{M}|^5|\mathcal{N}|^3}{N^4(m)}\log N(m) \frac{1}{2}\left(1+N(m)\sqrt{\xi(m)}\right)\]
with probability at least $1-\delta$.
\end{lemma}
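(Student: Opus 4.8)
The strategy mirrors the one used for Lemma~\ref{secondorder} in the gradient-flow setting, but now the evolution equation for $\Theta_t$ has an extra stochastic error term coming from the difference between the estimated gradient $g^{(t)}(\Theta_t)$ and the exact gradient $\nabla_\Theta\mathcal{L}(\Theta_t)$. First I would set up the discrete analogue of the computation in \autoref{proofsecondorder}: write
\[
\Theta_{t+1}-\Theta_{t+1}^{\mathrm{lin}}=(\Theta_t-\Theta_t^{\mathrm{lin}})-\eta\big(\nabla_\Theta\mathcal{L}(\Theta_t)-\nabla_\Theta\mathcal{L}^{\mathrm{lin}}(\Theta_t^{\mathrm{lin}})\big)-\eta\big(g^{(t)}(\Theta_t)-\nabla_\Theta\mathcal{L}(\Theta_t)\big),
\]
so that the discrepancy accumulates a ``deterministic'' contribution, controlled exactly as in Lemma~\ref{secondorder} via Lemma~\ref{lemma}, the Lipschitzness bounds (\ref{lipderiv}), the lazy-training estimate (\ref{noisy22}) of Theorem~\ref{unbgraddesc}, and the discrete convergence to the examples (\ref{noisy21}); and a ``stochastic'' contribution $\eta\sum_{s\le t}\|g^{(s)}(\Theta_s)-\nabla_\Theta\mathcal{L}(\Theta_s)\|_\infty$.

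Next I would bound the stochastic part. Component-wise, $\mathbb{E}\big[\,|g_i^{(s)}(\Theta_s)-\partial_{\theta_i}\mathcal{L}(\Theta_s)|^2\,\big|\,\text{past}\,\big]=\mathrm{Var}[g_i^{(s)}(\Theta_s)\mid\text{past}]$, which by the strengthened variance hypothesis (\ref{VarXi}) is at most $c_0\eta_0^2\frac{(\lambda_{\min}^K)^4}{n^2}\frac{N_K^2(m)N^2(m)}{|\mathcal{M}|^2|\Theta|^3}\xi(m)\frac{\delta/4}{(s+1)^2}\mathcal{L}(\Theta_s)$, and using (\ref{noisy21}) to replace $\mathcal{L}(\Theta_s)$ by its exponentially decaying bound. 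A union bound over $i\in\{1,\dots,|\Theta|\}$ and over $s\le T$, together with Chebyshev/Markov, gives that with probability at least $1-\delta/4$ (this is why $\delta$ was replaced by $\delta/4$ in the hypothesis) one has $\|g^{(s)}(\Theta_s)-\nabla_\Theta\mathcal{L}(\Theta_s)\|_\infty$ bounded, for every $s$, by a quantity proportional to $\frac{1}{s+1}\sqrt{\xi(m)}\,\frac{N_K(m)N(m)}{|\mathcal{M}||\Theta|^{3/2}}$ times the square root of the exponential factor and some powers of $n,\lambda_{\min}^K,\log(2n)$; the sum $\sum_{s\ge 0}\frac{1}{s+1}(1-\eta_0\lambda_{\min}^K/3)^{s}$ converges, absorbing the harmonic-looking prefactor into a constant depending on $\lambda_{\min}^K$. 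Multiplying by $\eta=\frac{n}{N_K(m)}\eta_0$ and simplifying the $N_K$ and $|\Theta|=Lm$ factors, this stochastic contribution to $\|\Theta_t-\Theta_t^{\mathrm{lin}}\|_\infty$ turns out to scale like $\sqrt{\xi(m)}$ times $N(m)$ times the same geometric quantities appearing in the deterministic part — which is precisely the origin of the extra factor $\tfrac12(1+N(m)\sqrt{\xi(m)})$ in the statement.

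Finally I would combine the two contributions: the deterministic part reproduces the bound of Lemma~\ref{secondorder} (the ``$1$'' inside the parenthesis), and the stochastic part contributes the ``$N(m)\sqrt{\xi(m)}$'' term, with a common prefactor $\big(\frac{C_1}{(\lambda_{\min}^K)^3}+C_2\big)n^3\log(2n)\frac{Lm|\mathcal{M}|^5|\mathcal{N}|^3}{N^4(m)}\log N(m)$ obtained after using Lemma~\ref{Nmax} in the form $\frac{m|\mathcal{M}||\mathcal{N}|}{N^2(m)}\ge 1$ to homogenize exponents (note the $|\mathcal{N}|^3$ here versus $|\mathcal{N}|^2$ in Lemma~\ref{secondorder}, which is the slack absorbed by this step and by replacing $N_K(m)$ via Lemma~\ref{boundnorm}). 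A final union bound over the event where (\ref{noisy21})–(\ref{noisy22}) hold (probability $\ge 1-\delta/4$ each, by Theorem~\ref{unbgraddesc} with parameter $\delta/4$) and the event controlling the gradient estimation errors gives the claim with probability at least $1-\delta$. The main obstacle I expect is the bookkeeping in the stochastic estimate: one must carefully track the conditioning (each $g^{(s)}$ is only unbiased and variance-controlled \emph{conditionally} on the whole past trajectory), set up a martingale-type or sequential union-bound argument so that the ``bad'' events at different time steps combine correctly, and verify that the telescoped sum of per-step errors indeed converges with the stated dependence on $\xi(m)$ and $N(m)$ rather than blowing up with $T$.
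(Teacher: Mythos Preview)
Your plan is essentially the paper's approach: decompose the increment of $\Theta_t-\Theta_t^{\mathrm{lin}}$ into the part coming from $\nabla_\Theta\mathcal{L}(\Theta_t)-\nabla_\Theta\mathcal{L}^{\mathrm{lin}}(\Theta_t^{\mathrm{lin}})$ and the stochastic part $\eta\,\epsilon(t)=\eta\big(g^{(t)}(\Theta_t)-\nabla_\Theta\mathcal{L}(\Theta_t)\big)$, control the latter by Chebyshev plus a union bound over $i$ and $t$ (this is exactly how the paper obtains \eqref{boundepsilon}), and then sum using the $t^\ast\sim\log N(m)$ split inherited from Lemma~\ref{secondorder}. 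Your identification of the final scaling $N(m)\sqrt{\xi(m)}$ for the noise contribution is correct.

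One point you understate: the ``deterministic'' piece is \emph{not} literally the bound of Lemma~\ref{secondorder}. That bound relies on Lemma~\ref{improvedqlt}, i.e.\ on controlling $\|F(t)-F^{\mathrm{lin}}(t)\|_2$, and here $F(t)=f(\Theta_t,X)$ is evaluated along the \emph{noisy} trajectory. The paper therefore first re-proves this intermediate estimate in the discrete noisy setting (Lemma~\ref{improveqltn}), where the evolution of $\|F(t)-F^{\mathrm{lin}}(t)\|_2$ picks up extra terms $r(t,X)$ (Taylor remainder) and $\delta(t,X)=-\eta\nabla_\Theta f(\Theta_t,X)^T\epsilon(t)$; the latter contributes an additional $\Delta(m)\propto\sqrt{\xi(m)}$ to the bound on $\|F(t)-F^{\mathrm{lin}}(t)\|_2$, which then feeds into your $B(t)$ term. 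So the noise enters in \emph{two} places, not one, though both contributions ultimately scale the same way and are absorbed into the $\tfrac12(1+N(m)\sqrt{\xi(m)})$ factor. A minor correction: in your per-step stochastic bound the $(s+1)$ from Chebyshev cancels the $1/(s+1)$ from $\sqrt{\mathrm{Var}}$, so the sum $\sum_s\eta\|\epsilon(s)\|_\infty$ is a pure geometric series in $(1-\eta_0\lambda_{\min}^K/3)^{s/2}$, not the weighted sum you wrote---but the convergence conclusion is unchanged.
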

\begin{proof}
    See \autoref{proofsecondorder2}.
\end{proof}
Now it is clear why we need the further suppression $\xi(m)$ in the bound for the variance of the estimator of the gradient of the cost function, as we will discuss more in detail in Corollary (\ref{nuovavarianza}) and in Theorem (\ref{confronto_finale}): the divergent term $N(m)$ must be cancelled by $\xi(m)$ in order to ensure the following statements.

\begin{corollary}\label{nuovavarianza}
    If we choose, in particular \[\xi(m)=\frac{1}{N^2(m)},\] which means requiring that
    \begin{align}\label{VarNew}
\nonumber \mathrm{Var}\Big[g_i^{(t)}(\Theta)\,\,\,&\big|\,\,\, g^{(k)}(\Theta_{k})=\xi_k\quad \forall\, k<t\Big]\\
&\leq c_0\eta_0^2\,\frac{\left(\lambda_{\min}^K\right)^4}{n^2}\frac{N_K^2(m)}{|\mathcal{M}|^2|\Theta|^3}\,\frac{\delta/4}{(t+1)^2}\,\mathcal{L}(\Theta),
\end{align}
where $c_0=\frac{1}{864\pi^2}$, then, for any $\delta>0$, there exist $C_1, C_2$ and $m_0$ such that, for any $m\geq m_0$,
\begin{align}
\big\|\Theta_t-\Theta_t^{\mathrm{lin}}\big\|_\infty&\leq \left(\frac{C_1}{(\lambda_{\min}^K)^3}+C_2\right)n^3\log(2n)\frac{Lm|\mathcal{M}|^5|\mathcal{N}|^3}{N^4(m)}\log N(m)
\end{align}
with probability at least $1-\delta$.
\end{corollary}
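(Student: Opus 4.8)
The plan is to obtain this statement as an immediate specialization of Lemma \ref{secondorder2} to the choice $\xi(m) = 1/N^2(m)$. The first step is to verify that, under this choice, the variance hypothesis (\ref{VarXi}) of Lemma \ref{secondorder2} reduces \emph{exactly} to the hypothesis (\ref{VarNew}) assumed in the corollary. Substituting $\xi(m) = 1/N^2(m)$ into the right-hand side of (\ref{VarXi}) cancels the factor $N^2(m)$ in the numerator, leaving
\[
c_0\eta_0^2\,\frac{\left(\lambda_{\min}^K\right)^4}{n^2}\,\frac{N_K^2(m)}{|\mathcal{M}|^2|\Theta|^3}\,\frac{\delta/4}{(t+1)^2}\,\mathcal{L}(\Theta),
\]
which is precisely (\ref{VarNew}). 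Hence any estimator obeying (\ref{VarNew}) obeys (\ref{VarXi}) with this $\xi(m)$, so the conclusion of Lemma \ref{secondorder2} is available.

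The second step is to insert $\xi(m) = 1/N^2(m)$ into that conclusion. The only occurrence of $\xi(m)$ there is through the factor $\frac{1}{2}\bigl(1 + N(m)\sqrt{\xi(m)}\bigr)$, and with this choice $N(m)\sqrt{\xi(m)} = N(m)\cdot N(m)^{-1} = 1$, so the factor equals $\frac{1}{2}(1+1) = 1$. Lemma \ref{secondorder2} then yields directly
\[
\big\|\Theta_t-\Theta_t^{\mathrm{lin}}\big\|_\infty \leq \left(\frac{C_1}{(\lambda_{\min}^K)^3}+C_2\right)n^3\log(2n)\,\frac{Lm|\mathcal{M}|^5|\mathcal{N}|^3}{N^4(m)}\log N(m)
\]
for all $m \geq m_0$ with probability at least $1-\delta$, where $C_1$, $C_2$ and $m_0 = \bar m$ are inherited from Lemma \ref{secondorder2}.

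Since the argument is a pure substitution, there is no genuine obstacle; the only points requiring care are bookkeeping. One must check that the cancellation of $N^2(m)$ is exact, so that no spurious constant is introduced and $c_0$, $C_1$, $C_2$, $\bar m$ transfer unchanged; that the probability budget $\delta$ — including its internal splitting, the $\delta/4$ in (\ref{VarNew}) that arises from a union bound inside the proof of Lemma \ref{secondorder2} — is unaffected by fixing $\xi(m)$; and, for the statement to be of use in the subsequent convergence arguments, that the resulting bound no longer carries the divergent factor $N(m)$, which is exactly the role played by the choice $\xi(m)=1/N^2(m)$ as anticipated in the discussion preceding the corollary.
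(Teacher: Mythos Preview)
Your proposal is correct and matches the paper's approach exactly: the corollary is stated in the paper without a separate proof, being an immediate specialization of Lemma \ref{secondorder2} obtained by the substitution $\xi(m)=1/N^2(m)$, which simultaneously reduces (\ref{VarXi}) to (\ref{VarNew}) and collapses the factor $\tfrac{1}{2}(1+N(m)\sqrt{\xi(m)})$ to $1$.
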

Now we have all the ingredients to state the generalization of Theorem \ref{qnngp} to the discrete time setting with statistical noise.
\begin{theorem}[The original evolution is close to the linear evolution]\label{confronto_finale}
Let us suppose that the hypotheses of the previous lemmas are satisfied; in particular, we ask (\ref{VarNew}) for the variance of the estimator of the gradient of the cost function. Then, in the presence or in the absence of statistical noise in the gradient descent, we have the following bound: for any $\delta>0$ there exist some constants $C_3$ and $C_4$ and a number of qubits $m_0\in\mathbb \mathcal{N}$ such that, for any $m\geq m_0$
\begin{align}
|f(\Theta_t,x)-f^{\mathrm{lin}}(\Theta_t^{\mathrm{lin}},x)|\leq \left(\frac{C_3}{(\lambda_{\min}^K)^3}+C_4\right)n^3\log(2n)\frac{L^2m^2|\mathcal{M}|^6|\mathcal{N}|^4}{N^5(m)}\log N(m)
\end{align}
with probability at least $1-\delta$.
\end{theorem}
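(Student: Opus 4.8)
The plan is to mirror the proof of Theorem~\ref{cfevolution} (Section~\ref{proofcfevolution}), replacing the gradient-flow ingredients by their noisy discrete-time counterparts. First I would introduce the linearized model evaluated along the \emph{actual} noisy trajectory $\Theta_t$ as an intermediate quantity and split by the triangle inequality, $|f(\Theta_t,x)-f^{\mathrm{lin}}(\Theta_t^{\mathrm{lin}},x)| \le |f(\Theta_t,x)-f^{\mathrm{lin}}(\Theta_t,x)| + |\nabla_\Theta f(\Theta_0,x)^T(\Theta_t-\Theta_t^{\mathrm{lin}})|$. This decouples the ``second order in the same parameters'' error from the ``difference of trajectories'' error, exactly as in the continuous-time case.

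For the first term I would invoke Theorem~\ref{powerful}, giving $|f(\Theta_t,x)-f^{\mathrm{lin}}(\Theta_t,x)| \le \frac{Lm|\mathcal{M}|^2|\mathcal{N}|}{N(m)}\|\Theta_t-\Theta_0\|_\infty^2$, and then substitute the lazy-training bound~(\ref{noisy22}) from Theorem~\ref{unbgraddesc}. Since~(\ref{noisy22}) is a sum of two terms, squaring it and using $2ab\le a^2+b^2$ produces a purely stochastic contribution proportional to $\frac{|\mathcal{N}|N(m)}{Lm}\log(2n)$ and the familiar contribution proportional to $\frac{Lm|\mathcal{M}|^4|\mathcal{N}|}{N_K^2(m)N^3(m)}\,n^2\log(2n)$ — precisely the two terms in~(\ref{noisy23}). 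For the second term I would bound $|\nabla_\Theta f(\Theta_0,x)^T(\Theta_t-\Theta_t^{\mathrm{lin}})| \le \|\nabla_\Theta f(\Theta_0,x)\|_1\,\|\Theta_t-\Theta_t^{\mathrm{lin}}\|_\infty \le 2\frac{Lm|\mathcal{M}|}{N(m)}\|\Theta_t-\Theta_t^{\mathrm{lin}}\|_\infty$ using~(\ref{lemma3}), and then insert the trajectory-discrepancy bound of Corollary~\ref{nuovavarianza} (i.e.\ Lemma~\ref{secondorder2} specialized to $\xi(m)=1/N^2(m)$), which yields $\|\Theta_t-\Theta_t^{\mathrm{lin}}\|_\infty \le \big(\tfrac{C_1}{(\lambda_{\min}^K)^3}+C_2\big)n^3\log(2n)\frac{Lm|\mathcal{M}|^5|\mathcal{N}|^3}{N^4(m)}\log N(m)$. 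This is the crucial point where the extra suppression $\xi(m)$ kills the dangerous divergent factor $N(m)$.

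Finally I would homogenize all four resulting contributions to the common target scale $\frac{L^2m^2|\mathcal{M}|^6|\mathcal{N}|^4}{N^5(m)}\log N(m)$. The workhorse is Lemma~\ref{Nmax} in the form $\frac{m|\mathcal{M}||\mathcal{N}|}{N^2(m)}\ge c^{-2}$ (equivalently $N^{2j}(m)\le c^{2j}m^j|\mathcal{M}|^j|\mathcal{N}|^j$), which lets me trade deficits of light-cone cardinalities for powers of $1/N(m)$; combined with $|\mathcal{M}|,|\mathcal{N}|,L\ge 1$, the lower bound $N_K(m)\ge c_1$ from Lemma~\ref{boundnorm}, and $\log N(m)\ge 1$ for $m$ large (valid because the hypothesis forces $N(m)\to\infty$), every term is dominated by the target and the constants collect into $C_3/(\lambda_{\min}^K)^3 + C_4$. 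The probability $1-\delta$ follows by intersecting the events behind Theorem~\ref{unbgraddesc} and Corollary~\ref{nuovavarianza}, rescaling $\delta$ as necessary. The main obstacle is purely the bookkeeping: one must carefully square the two-term bound~(\ref{noisy22}), control its cross term, and verify that the stochastic piece $\frac{|\mathcal{N}|N(m)}{Lm}$ — which carries a \emph{positive} power of $N(m)$ and therefore looks divergent — is in fact absorbed once Lemma~\ref{Nmax} is applied enough times (using $N^6(m)\le c^6 m^3|\mathcal{M}|^3|\mathcal{N}|^3$ it becomes $\frac{m^2|\mathcal{M}|^3|\mathcal{N}|^4}{L\,N^5(m)}$, which sits below the target).
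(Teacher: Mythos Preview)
Your proposal is correct and follows essentially the same route as the paper: the paper's proof (Section~\ref{proofconfronto}) uses the identical decomposition $|f(\Theta_t,x)-f^{\mathrm{lin}}(\Theta_t,x)|+\|\nabla_\Theta f(\Theta_0,x)\|_1\|\Theta_t-\Theta_t^{\mathrm{lin}}\|_\infty$, bounds the first piece via (\ref{noisy23}) (which you equivalently rederive from Theorem~\ref{powerful} and~(\ref{noisy22})) and the second via (\ref{lemma3}) together with Corollary~\ref{nuovavarianza}, and then homogenizes using Lemma~\ref{Nmax} in the form~(\ref{magicN}), $\eta_0\lambda_{\min}^K<1$ from~(\ref{magicEta}), and $\log N(m)\ge 1$. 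The only cosmetic difference is that the paper carries a binary flag $r\in\{0,1\}$ to treat the noisy and noiseless cases simultaneously, but the bookkeeping and the absorption of the ``divergent'' stochastic term $\frac{|\mathcal{N}|N(m)}{Lm}$ are exactly as you describe.
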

\begin{proof}
    See \autoref{proofconfronto}.
\end{proof}

\begin{mdframed}
\begin{theorem}[Quantum neural networks as Gaussian processes -- noisy setting]\label{qnngpn}
Let us assume that a circuit satisfying Assumption \ref{assNTK} and such that
\[\lim_{m\to\infty}\frac{L^2m^2|\mathcal{M}|^6|\mathcal{N}|^4}{N^5(m)}\log N(m)=0\label{hpq}\]
is randomly initialized according to Assumption \ref{zeromean} and is trained using the noisy gradient descent with the following bound on the variance of the estimator of the gradient of the cost function: for any choice of $t\leq T$ and $\xi_k\in\mathbb{R}$ and $\Theta_k\in\mathbb{R}^{|\Theta|}$,
\begin{align}\label{Eg2c}
\nonumber \mathrm{Var}\Big[g_i^{(t)}(\Theta)\,\,\,&\big|\,\,\, g^{(k)}(\Theta_{k})=\xi_k\quad \forall\, k<t\Big]\\
&\leq c_0\eta_0^2\,\frac{\left(\lambda_{\min}^K\right)^4}{n^2}\frac{N_K^2(m)}{|\mathcal{M}|^2|\Theta|^3}\,\frac{\delta/4}{(t+1)^2}\,\mathcal{L}(\Theta).
\end{align} Then, in the limit of infinitely many qubits $m\to\infty$, $\{f(\Theta_t,x)\}_{x\in\mathcal{X}}$ converges in distribution to a Gaussian process $\{f^{(\infty)}_t(x)\}_{x\in\mathcal{X}}$ with mean and covariance
\begin{align}
\mu_t(x)&=\bar K(x,X^T)\bar K^{-1}\left(\id-\left(\id-\eta_0\bar K\right)^t\right) Y\\
\nonumber\mathcal{K}_t(x,x')&=\mathcal{K}_0(x,x'),\\
\nonumber&\phantom{=}- \bar K(x,X^T)\bar K^{-1}\left(\id-\left(\id-\eta_0\bar K\right)^t\right) \mathcal{K}_0(X,x')\\
\nonumber&\phantom{=}-\bar K(x',X^T)\bar K^{-1}\left(\id-\left(\id-\eta_0\bar K\right)^t\right) \mathcal{K}_0(X,x) \\
\nonumber&\phantom{=}+\bar K(x,X^T)\bar K^{-1}\left(\id-\left(\id-\eta_0\bar K\right)^t\right)\times\\
&\phantom{=========}\times\mathcal{K}_0(X,X^T)\left(\id-\left(\id-\eta_0\bar K\right)^t\right)\bar K^{-1} \bar K(X,x').
\end{align}
\end{theorem}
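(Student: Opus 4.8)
The plan is to mirror the proof of Theorem~\ref{qnngp} given in \autoref{proofqnngp}, replacing the continuous-time ingredients by their discrete-time noisy counterparts established earlier in this section. The key conceptual point is the same: the trained model $f(\Theta_t,x)$ differs from the linearized-and-linearly-trained model $f^{\mathrm{lin}}(\Theta_t^{\mathrm{lin}},x)$ by a vanishing amount, and the latter converges to an explicit Gaussian process. Concretely, first I would fix any finite set of inputs $\bar X=(\bar x_1,\dots,\bar x_N)$, set $\Delta_t(\bar X)=f(\Theta_t,\bar X)-f^{\mathrm{lin}}(\Theta_t^{\mathrm{lin}},\bar X)$, and invoke Theorem~\ref{confronto_finale}: under hypothesis \eqref{hpq} (which, exactly as in Lemma~\ref{checkhp}, implies the hypotheses of Theorem~\ref{init} and Theorem~\ref{ntkconv} as well as \eqref{hpunbgraddesc}) and under the variance bound \eqref{Eg2c}, one has
\[
|f(\Theta_t,x)-f^{\mathrm{lin}}(\Theta_t^{\mathrm{lin}},x)|\leq \left(\frac{C_3}{(\lambda_{\min}^K)^3}+C_4\right)n^3\log(2n)\frac{L^2m^2|\mathcal{M}|^6|\mathcal{N}|^4}{N^5(m)}\log N(m)
\]
with probability at least $1-\delta$, and the right-hand side tends to $0$ as $m\to\infty$ by \eqref{hpq}. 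Multiplying by $\sqrt N$ gives $\|\Delta_t(\bar X)\|_2\to 0$ in probability, hence $\Delta_t(\bar X)\xrightarrow{p}0$.

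Second, I would show that the linearized model $f^{\mathrm{lin}}(\Theta_t^{\mathrm{lin}},x)$, trained by \emph{discrete} gradient descent with learning rate $\eta=\tfrac{n}{N_K(m)}\eta_0$, converges in distribution to the Gaussian process with the stated mean and covariance. This is the step that needs a small amount of new computation: solving the linearized discrete evolution $F^{\mathrm{lin}}(t+1)=F^{\mathrm{lin}}(t)-\eta_0\hat K_{\Theta_0}(F^{\mathrm{lin}}(t)-Y)$ gives $F^{\mathrm{lin}}(t)=(\id-\eta_0\hat K_{\Theta_0})^t(F(0)-Y)+Y$ and, for a new input,
\[
f^{\mathrm{lin}}(\Theta_t^{\mathrm{lin}},x)=f(\Theta_0,x)-\hat K_{\Theta_0}(x,X^T)\hat K_{\Theta_0}^{-1}\left(\id-(\id-\eta_0\hat K_{\Theta_0})^t\right)(F(0)-Y),
\]
which is the discrete analogue of \eqref{solutionevol2} with $e^{-\eta_0\hat K_{\Theta_0}t}$ replaced by $(\id-\eta_0\hat K_{\Theta_0})^t$. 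Exactly as in the proof of Corollary~\ref{convsol}, this is a linear combination of the entries of $f(\Theta_0,\cdot)$ with coefficients that are continuous functions of the entries of $\hat K_{\Theta_0}$; since $\hat K_{\Theta_0}(x,x')\xrightarrow{p}\bar K(x,x')$ (Lemma~\ref{convprob}, whose hypothesis \eqref{weakassumption} is implied by \eqref{hpq}) and $f(\Theta_0,\cdot)\xrightarrow{d}f^{(\infty)}(\cdot)$ (Theorem~\ref{init}), Slutsky's theorem~\ref{sl} yields convergence in distribution of $\{f^{\mathrm{lin}}(\Theta_t^{\mathrm{lin}},x)\}$ to $f^{(\infty)}(x)-\bar K(x,X^T)\bar K^{-1}(\id-(\id-\eta_0\bar K)^t)(F^{(\infty)}-Y)$. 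Taking mean and covariance of this Gaussian limit, using $\mathbb{E}[f^{(\infty)}(x)]=0$ and $\mathrm{Cov}(f^{(\infty)}(x),f^{(\infty)}(x'))=\mathcal{K}_0(x,x')$, reproduces exactly the $\mu_t,\mathcal{K}_t$ in the statement, with $\bar Y=Y$ and $\bar K(X,x')=K(X,x')$ in the notation there. One should note here that the geometric series converges and $(\id-\eta_0\hat K_{\Theta_0})^t$ is well-behaved because $\eta_0<2/(\lambda_{\min}^K+\lambda_{\max}^K)$ guarantees the spectral radius of $\id-\eta_0\hat K_{\Theta_0}$ is below $1$ for $m$ large (via Lemma~\ref{lambdamin}).

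Finally, combining the two pieces via Slutsky's theorem, $f(\Theta_t,\bar X)=f^{\mathrm{lin}}(\Theta_t^{\mathrm{lin}},\bar X)+\Delta_t(\bar X)\xrightarrow{d}f_t^{(\infty)}(\bar X)$ for every finite $\bar X$; since $\mathcal{X}$ is finite (Assumption~\ref{convex}) one takes $\bar X=\mathrm{vec}(\mathcal{X})$ and obtains convergence of the full finite-dimensional distribution, which is the Gaussian process. I expect the main obstacle — or rather the only genuinely new work beyond bookkeeping — to be verifying that the entire chain of intermediate results (Theorem~\ref{unbgraddesc}, Lemma~\ref{secondorder2}, Corollary~\ref{nuovavarianza}, Theorem~\ref{confronto_finale}) can indeed be invoked under the single hypothesis \eqref{hpq} together with the variance bound \eqref{Eg2c}; this is the discrete-noisy analogue of Lemma~\ref{checkhp} and amounts to a sequence of $\Sigma_k\le Lm|\mathcal{M}|^k$ and $N_K(m)=O(|\mathcal{N}|)$, $1\le c\,m|\mathcal{M}||\mathcal{N}|/N^2(m)$ estimates showing that \eqref{hpq} dominates all the weaker limits required. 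Everything else is a transcription of the gradient-flow proof with $e^{-\eta_0\bar Kt}\mapsto(\id-\eta_0\bar K)^t$.
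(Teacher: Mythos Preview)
Your proposal is correct and follows essentially the same approach as the paper's proof in \autoref{proofqnngpn}: the paper first establishes the hypothesis-checking Lemma~\ref{llhp2} (your ``discrete-noisy analogue of Lemma~\ref{checkhp}''), then Lemma~\ref{solutiondiscrete} and Remark~\ref{remeta} for the discrete linearized solution and its spectral control, then Lemma~\ref{convloc} for the Gaussian-process limit of the linearized model via Slutsky, and finally combines Theorem~\ref{confronto_finale} with Slutsky exactly as you describe. Your outline matches this structure point by point.
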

\end{mdframed}
\begin{proof}
See \autoref{proofqnngpn}.
\end{proof}

\subsection{Proof of Theorem \ref{graddesc0}}\label{proofgraddesc0}
\begin{lemma}[Lagrange theorem for vector valued functions in several variables]
\label{lagrange}
Let $\mathcal{U}\subseteq \mathbb{R}^m$ be a compact convex set and let $f:\mathcal{U}\to \mathbb{R}^n$ be continuously differentiable. Then
\begin{align}
f(y)-f(x)&=\left[\int_0^1df(x+t(y-x))dt\right] (y-x).
\end{align}
\end{lemma}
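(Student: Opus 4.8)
The plan is to reduce this multivariate statement to the one-dimensional fundamental theorem of calculus by restricting $f$ to the segment joining $x$ and $y$. First I would fix $x,y\in\mathcal{U}$ and introduce the auxiliary curve $\gamma:[0,1]\to\mathcal{U}$, $\gamma(t)=x+t(y-x)$; this is well defined precisely because $\mathcal{U}$ is convex, so the whole segment lies in the domain of $f$. Setting $g=f\circ\gamma:[0,1]\to\mathbb{R}^n$, the chain rule together with the hypothesis that $f$ is continuously differentiable gives that $g$ is of class $C^1$ with $g'(t)=df(\gamma(t))\,\gamma'(t)=df(\gamma(t))(y-x)$.

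Next I would apply the ordinary fundamental theorem of calculus to each component $g_i$ of $g$: each $g_i$ is a real-valued $C^1$ function on $[0,1]$, so $g_i(1)-g_i(0)=\int_0^1 g_i'(t)\,dt$. Collecting the $n$ components and recalling that integration of an $\mathbb{R}^n$-valued function is defined entrywise, this reads $g(1)-g(0)=\int_0^1 g'(t)\,dt$, i.e.
\[
f(y)-f(x)=\int_0^1 df\bigl(x+t(y-x)\bigr)(y-x)\,dt.
\]

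Finally I would pull the constant vector $y-x$ out of the integral. For each fixed $t$ the component $\bigl(df(\gamma(t))(y-x)\bigr)_i$ is a finite linear combination of the entries of the matrix $df(\gamma(t))$ with coefficients the fixed scalars $(y-x)_j$, so by linearity of the scalar Riemann integral one gets $\int_0^1 df(\gamma(t))(y-x)\,dt=\left[\int_0^1 df(\gamma(t))\,dt\right](y-x)$, which is exactly the claimed identity. There is essentially no obstacle here: the only points deserving a word of care are the use of convexity to keep $\gamma$ inside $\mathcal{U}$, the continuity of $df$ on the compact set $\mathcal{U}$ (which ensures all integrands are continuous, hence integrable), and the elementary fact that the entrywise matrix integral commutes with right multiplication by a fixed vector.
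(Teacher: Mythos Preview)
Your argument is correct and is the standard proof of this identity: parametrize the segment (using convexity), apply the fundamental theorem of calculus componentwise, and pull the constant vector $y-x$ out by linearity. The paper does not give its own proof but simply refers to a textbook (Giaquinta--Modica), where precisely this argument appears, so there is no meaningful difference in approach.
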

\begin{proof}
See e.g. \cite{giaquinta2010mathematical}.
\end{proof}

The following lemma is crucial both for the deterministic and the stochastic case. We show that, if we are asymptotically close enough to the parameters at initialization, then a gradient descent step contracts the distance between the output of the model and the labels provided by the dataset.

Let $B_r(\Theta_0)=\{\Theta: \|\Theta-\Theta_0\|_\infty< r\}$ be the ball of center $\Theta_0$ and radius $r$.

\begin{lemma}[Single step of gradient descent]\label{step}
Let us assume that the hypothesis of Theorem \ref{ntkconv} is satisfied and that Assumption \ref{assNTK} holds. Let $\delta>0$, $\zeta\in (0,1)$ and suppose
\[\eta=\frac{n}{N_K(m)}\eta_0,\qquad \eta_0<\frac{2}{\lambda_{\min}^K+\lambda_{\max}^K}.\]
Furthermore, let $r(m):\mathbb{N}\to \mathbb{R}^+$
such that
\[\lim_{m\to\infty} \frac{\Sigma_1|\mathcal{M}|^2|\mathcal{N}|}{N_K(m)N^2(m)}r(m)=0 \label{hplim}.\]
 There exists $\bar m\in\mathbb{N}$ such that, with probability at least $1-\delta$ over random initialization, 
given any $\Theta$ and $\Theta'$ both in $B_{r(m)}(\Theta_0)$ and satisfying
\[\Theta'-\Theta=-\eta\nabla_\Theta\mathcal{L}(\Theta), \label{sopraqui}\]
where $\mathcal{L}$ is the MSE loss, the following inequality holds for any $m\geq \bar m$:
\begin{align} 
\|f(\Theta',X)-Y\|_2\leq \left(1-\zeta\eta_0\lambda_{\min}^K\right)\|f(\Theta,X)-Y\|_2.
\end{align}
\end{lemma}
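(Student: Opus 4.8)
The plan is to analyze a single gradient-descent step via the integral form of the mean value theorem (Lemma \ref{lagrange}) applied to the map $\Theta\mapsto f(\Theta,X)$, turning the parameter update into a linear operation on the residual $f(\Theta,X)-Y$ governed by an averaged kernel matrix. Concretely, writing $\Theta'-\Theta=-\eta\nabla_\Theta\mathcal{L}(\Theta)=-\frac{\eta}{n}\nabla_\Theta f(\Theta,X^T)(f(\Theta,X)-Y)$, I would apply Lemma \ref{lagrange} to get
\[
f(\Theta',X)-f(\Theta,X)=\left[\int_0^1 \nabla_\Theta f(\Theta+s(\Theta'-\Theta),X)^T\,ds\right](\Theta'-\Theta),
\]
so that, setting $\widetilde K:=\frac{1}{n\,N_K(m)}\int_0^1 \nabla_\Theta f(\Theta_s,X^T)^T\,ds\;\nabla_\Theta f(\Theta,X^T)$ with $\Theta_s=\Theta+s(\Theta'-\Theta)$, one obtains
\[
f(\Theta',X)-Y=\bigl(\id-\eta_0\,n\,N_K(m)\,\widetilde K/n\bigr)(f(\Theta,X)-Y)=\bigl(\id-\eta_0\,\widehat{\mathcal K}\bigr)(f(\Theta,X)-Y),
\]
where $\widehat{\mathcal K}$ is a (not necessarily symmetric) matrix close to $\hat K_{\Theta_0}(X,X^T)$. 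The goal is then to show $\|\id-\eta_0\widehat{\mathcal K}\|_{\mathcal L}\le 1-\zeta\eta_0\lambda^K_{\min}$.

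The second step is to control how far $\widehat{\mathcal K}$ is from $\hat K_{\Theta_0}:=\hat K_{\Theta_0}(X,X^T)$. Every point $\Theta_s$ lies in $B_{r(m)}(\Theta_0)$ by convexity of the ball and the hypothesis $\Theta,\Theta'\in B_{r(m)}(\Theta_0)$, so I would use the Lipschitz bound on the gradient (Lemma \ref{lemma}, eq.~(\ref{lemma4})) together with the norm bound (\ref{lemma3}) to estimate, in operator norm,
\[
\|\widehat{\mathcal K}-\hat K_{\Theta_0}\|_{\mathcal L}\le \frac{C\,\Sigma_1|\mathcal M|^2|\mathcal N|}{N_K(m)N^2(m)}\,r(m),
\]
exactly the quantity that hypothesis (\ref{hplim}) forces to zero; here one also uses $\Sigma_1\le Lm|\mathcal M|$ if a cruder bound is preferred. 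Combining with the NTK concentration of Theorem \ref{ntkconv} and the convergence of the analytic NTK to $\bar K$ (Assumption \ref{assNTK}, Lemma \ref{convprob}), together with Lemma \ref{lambdamin}, I get that for any $\eta\in(0,1)$ there is $\bar m$ so that for $m\ge\bar m$, with probability at least $1-\delta$, the symmetric part of $\widehat{\mathcal K}$ satisfies $\tfrac12(\widehat{\mathcal K}+\widehat{\mathcal K}^T)\succeq \zeta'\lambda^K_{\min}\id$ for $\zeta'$ arbitrarily close to $1$, and $\|\widehat{\mathcal K}\|_{\mathcal L}\le \lambda^K_{\max}(1+o(1))$.

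The third step is the standard contraction estimate for gradient descent on a quadratic. Using $\|(\id-\eta_0\widehat{\mathcal K})v\|_2^2=\|v\|_2^2-2\eta_0 v^T\widehat{\mathcal K}v+\eta_0^2\|\widehat{\mathcal K}v\|_2^2$ and the bounds just obtained,
\[
\|(\id-\eta_0\widehat{\mathcal K})v\|_2^2\le\bigl(1-2\eta_0\zeta'\lambda^K_{\min}+\eta_0^2(\lambda^K_{\max})^2(1+o(1))\bigr)\|v\|_2^2,
\]
and the condition $\eta_0<\tfrac{2}{\lambda^K_{\min}+\lambda^K_{\max}}$ guarantees $2\zeta'\lambda^K_{\min}-\eta_0(\lambda^K_{\max})^2>0$ for $\zeta'$ close enough to $1$ and $m$ large enough (one checks $\eta_0(\lambda^K_{\max})^2<\lambda^K_{\min}+\lambda^K_{\max}-\eta_0\lambda^K_{\min}\lambda^K_{\max}$ is not quite what is needed; rather I would use the sharper scalar inequality $1-2\eta_0 a+\eta_0^2 a^2\le(1-\eta_0 a)^2$ eigenvalue-by-eigenvalue on the symmetric approximant, deducing $\|\id-\eta_0\bar K\|_{\mathcal L}\le 1-\eta_0\lambda^K_{\min}$ when $\eta_0\le 2/(\lambda^K_{\min}+\lambda^K_{\max})$, then absorbing the $o(1)$ perturbation into the slack between $\zeta$ and $1$). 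Taking square roots yields $\|f(\Theta',X)-Y\|_2\le(1-\zeta\eta_0\lambda^K_{\min})\|f(\Theta,X)-Y\|_2$ for $m\ge\bar m$.

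The main obstacle I anticipate is the asymmetry of $\widehat{\mathcal K}$: unlike the exact NTK, the averaged matrix arising from Lemma \ref{lagrange} need not be symmetric, so the clean eigenvalue argument must be run on its symmetric approximation $\hat K_{\Theta_0}$, with the asymmetric remainder treated purely as an operator-norm perturbation whose size is governed by (\ref{hplim}) — this is why the hypothesis is phrased in terms of $r(m)$ rather than just $\|\Theta-\Theta_0\|_\infty$. Bookkeeping the three sources of error (the Lagrange-average displacement, the NTK concentration fluctuation, and the analytic-to-limit NTK gap) into a single probability-$(1-\delta)$ event via a union bound, and choosing the slack parameters $\zeta'\in(\zeta,1)$ appropriately, is the technical heart of the argument.
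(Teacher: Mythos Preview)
Your proposal is correct and follows essentially the same approach as the paper's proof: linearize the update via Lemma~\ref{lagrange}, reduce to bounding $\|\id-\eta_0\widehat{\mathcal K}\|_{\mathcal L}$ for an (asymmetric) averaged kernel, bound $\|\id-\eta_0 K\|_{\mathcal L}\le 1-\eta_0\lambda_{\min}^K$ by the eigenvalue argument using $\eta_0<2/(\lambda_{\min}^K+\lambda_{\max}^K)$, and absorb the three perturbation terms (Lagrange displacement via (\ref{lemma3})--(\ref{lemma4}), NTK concentration via Theorem~\ref{ntkconv}, and analytic-to-limit gap) into the slack $(1-\zeta)\eta_0\lambda_{\min}^K$. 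The only cosmetic differences are that the paper picks a single mean-value point $\xi\in[0,1]$ via the integral mean value theorem rather than keeping the full $s$-average in $\widehat{\mathcal K}$, and centers the decomposition at the analytic NTK $K(X,X^T)$ rather than at $\hat K_{\Theta_0}$ or $\bar K$; both choices lead to the same estimates.
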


\begin{proof}

Let us start rewriting
\begin{align}
\|f(\Theta',x)-Y\|_2&=\|(f(\Theta',x)-f(\Theta,x))+(f(\Theta,x)-Y)\|_2.
\end{align}
Calling\footnote{We use the convention of considering $\nabla f^T$ as a row vector.}
\[DF(\tau)\equiv\begin{pmatrix} \nabla_\Theta f(\Theta+\tau(\Theta'-\Theta),x^{(1)})^T\\\nabla_\Theta f(\Theta+\tau(\Theta'-\Theta),x^{(2)})^T\\\vdots\\\nabla_\Theta f(\Theta+\tau(\Theta'-\Theta),x^{(n)})^T \end{pmatrix}\, \]
we can apply Lagrange theorem \ref{lagrange} to obtain
\begin{align}
\nonumber\|f(\Theta',X)-Y\|_2&=\Bigg\|\left[\int_0^1 DF(\tau)d\tau\right](\Theta'-\Theta)+(f(\Theta,X))-Y)\Bigg\|_2\\
\nonumber&= \Bigg\|\int_0^1 \big(DF(\tau)(\Theta'-\Theta)+(f(\Theta,X)-Y)\big)d\tau\Bigg\|_2\\
\nonumber&\leq \int_0^1d\tau\big\| DF(\tau)(\Theta'-\Theta)+f(\Theta,X)-Y\big\|_2\\
&= \big\| DF(\xi)(\Theta'-\Theta)+f(\Theta,X)-Y\big\|_2,
\end{align}
where $\xi\in[0,1]$.
By the definition of gradient descent (\ref{sopraqui}), we also have
\begin{align}
\nonumber \|f(\Theta',X)-Y\|_2&\leq \big\| -\frac{\eta_0}{N_K(m)}DF(\xi)\nabla_\Theta\mathcal{L}(\Theta) +f(\Theta,X)-Y\big\|_2 \\
\nonumber &\leq \big\| -\frac{\eta_0}{N_K(m)}DF(\xi)\nabla_\Theta f(\Theta,X^T) (f(\Theta,X)-Y)+f(\Theta,X)-Y\big\|_2\\
\nonumber &=\Bigg\| \left(\id-\frac{\eta_0}{N_K(m)}DF(\xi)\nabla_\Theta f(\Theta,X^T) \right) (f(\Theta,X)-Y)\Bigg\|_2 \\
&\leq \Bigg\| \left(\id-\frac{\eta_0}{N_K(m)}DF(\xi)\nabla_\Theta f(\Theta,X^T)\right)\Bigg\|_\mathcal{L} \|f(\Theta,X)-Y\|_2.
\end{align}
Calling $\tilde \Theta = \Theta+\xi(\Theta'-\Theta)$, by convexity of the ball,
\[ \tilde \Theta\in B_{r(m)}(\Theta_0)\quad  \text{because}\quad 
\begin{cases} \|\Theta-\Theta_0\|_\infty\leq r(m)\\
\|\Theta'-\Theta_0\|_\infty\leq r(m)\end{cases}.\]
Now we have to estimate
\begin{align}
\nonumber \Bigg\| \Bigg(\id&-\frac{\eta_0}{N_K(m)}DF(\xi)(\nabla_\Theta f(\Theta,X^T)\Bigg)\Bigg\|_\mathcal{L}\\
\nonumber&=\Bigg\| \left(\id-\frac{\eta_0}{N_K(m)}\sum_{i=1}^{Lm}\partial_{\theta_i}f(\tilde \Theta, X)\partial_{\theta_i}f(\Theta,X^T)\right)\Bigg\|_\mathcal{L}\\
\nonumber&=\Bigg\| \Bigg(\id-\eta_0 \Big(K(X,X^T) - K(X,X^T)+ \hat K_{\Theta_0}(X,X^T)\\
\nonumber&\phantom{=\Bigg\| \Bigg(\id-\eta_0 \Big(}
-\hat K_{\Theta_0}(X,X^T)+\frac{1}{N_K(m)}\sum_{i=1}^{Lm}\partial_{\theta_i}f(\tilde \Theta, X)\partial_{\theta_i}f(\Theta,X^T)\Big)\Bigg)\Bigg\|_\mathcal{L}\\
\nonumber&\leq \|\id-\eta_0 K(X,X^T)\|_\mathcal{L}+\eta_0\|K(X,X^T)-\hat K_{\Theta_0}(X,X^T)\|_\mathcal{L}\\
\nonumber&\phantom{\leq} +\eta_0\Bigg\|\sum_{i=1}^{Lm}\partial_{\theta_i}f(\Theta_0, X)\partial_{\theta_i}f(\Theta_0,X^T)- \sum_{i=1}^{Lm}\partial_{\theta_i}f(\tilde \Theta, X)\partial_{\theta_i}f(\Theta,X^T)\Bigg\|_\mathcal{L}\\
\nonumber&\leq \|\id-\eta_0 K(X,X^T)\|_\mathcal{L}+\eta_0\|K(X,X^T)-\hat K_{\Theta_0}(X,X^T)\|_\mathcal{L}\\
\nonumber&\phantom{\leq} +\frac{\eta_0}{N_K(m)}\Bigg\|\sum_{i=1}^{Lm}\big(\partial_{\theta_i}f(\Theta_0, X)-\partial_{\theta_i}f(\tilde \Theta, X\big)\partial_{\theta_i}f(\Theta_0,X^T)\Bigg\|_\mathcal{L}\\
&\phantom{\leq} +\frac{\eta_0}{N_K(m)}\Bigg\|\sum_{i=1}^{Lm}\partial_{\theta_i}f(\tilde \Theta, X)\big(\partial_{\theta_i}f(\Theta_0,X^T)-\partial_{\theta_i}f(\Theta,X^T)\big)\Bigg\|_\mathcal{L}.
\end{align}
Each term can be bounded as follows. Let us start with
\begin{align}
\|\id-\eta_0 K(X,X^T)\|_\mathcal{L}\leq \max_{\lambda\in \text{Spec} K}|1-\eta_0\lambda|.
\end{align}
Since $f(\lambda)=|1-\eta_0\lambda|$ is convex,
\begin{align}
\|\id-\eta_0 K(X,X^T)\|_\mathcal{L}\leq \max\{|1-\eta_0\lambda_{\min}^K|,|1-\eta_0\lambda_{\max}^K|\}
\end{align}
and, since $\eta_0<\frac{2}{\lambda_{\min}^K+\lambda_{\max}^K}$,
\[1-\eta_0\lambda^\ast= 0 \quad \iff \quad \lambda^\ast=\frac{1}{\eta_0}>\frac{\lambda_{\min}^K+\lambda_{\max}^K}{2}.\]
Therefore, we can conclude that
\[\|\id-\eta_0 K(X,X^T)\|_\mathcal{L}\leq 1-\eta_0\lambda_{\min}^K\]
as illustrated in \autoref{flambda}.
\begin{figure}[ht]
\centering
\includegraphics[width=0.70\textwidth]{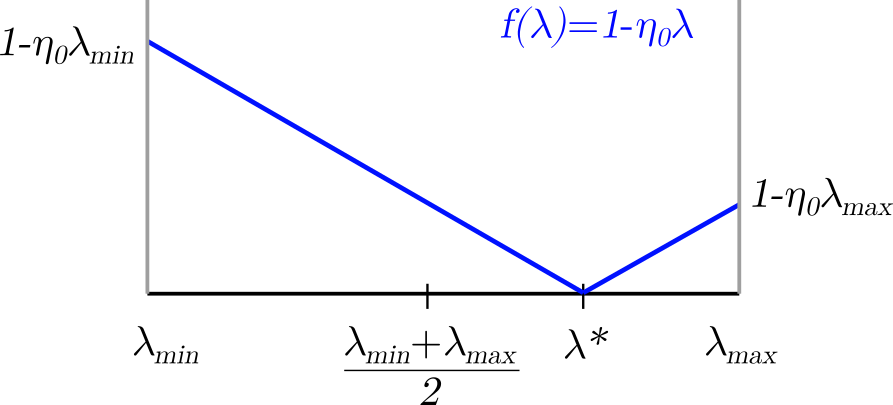}
\caption{Behaviour of $f(\lambda)$}
\label{flambda}
\end{figure}
Now we consider
\begin{align}
\eta_0\|K(X,X^T)-\hat K_{\Theta_0}(X,X^T)\|_\mathcal{L}\leq\|K(X,X^T)-\hat K_{\Theta_0}(X,X^T)\|_F.
\end{align}
We can claim, by Theorem \ref{ntkconv} and Lemma \ref{lambdamin}, the existence of $m_1\in\mathbb{N}$ such that
\[ \|K(X,X^T)-\hat K_{\Theta_0}(X,X^T)\|_F\leq \left(\frac{1-\zeta}{2}\right)\lambda_{\min}^K \qquad \forall \,m\geq m_1\]
with probability at least $1-\delta$. \\
Finally, using Lemma \ref{lemma},
\begin{align}
\nonumber\Bigg\|\sum_{i=1}^{Lm}\big(&\partial_{\theta_i}f(\Theta_0, X)-\partial_{\theta_i}f(\tilde \Theta, X\big)\partial_{\theta_i}f(\Theta_0,X^T)\Bigg\|_\mathcal{L}\\
\nonumber&\leq \Bigg\|\sum_{i=1}^{Lm}\big(\partial_{\theta_i}f(\Theta_0, X)-\partial_{\theta_i}f(\tilde \Theta, X\big)\partial_{\theta_i}f(\Theta_0,X^T)\Bigg\|_F\\
\nonumber&\leq \sqrt{\sum_{j,j'=1}^n\left(\sum_{i=1}^{Lm}\big(\partial_{\theta_i}f(\Theta_0, x^{(j)})-\partial_{\theta_i}f(\tilde \Theta, x^{(j)}\big)\partial_{\theta_i}f(\Theta_0,x^{(j')})\right)^2}\\
\nonumber&\leq \sqrt{\sum_{j,j'=1}^n\left(\sup_{i'}\big|\partial_{\theta_{i'}}f(\Theta_0, x^{(j)})-\partial_{\theta_{i'}}f(\tilde \Theta, x^{(j)})\big|\sum_{i=1}^{Lm}\left|\partial_{\theta_i}f(\Theta_0,x^{(j')})\right|\right)^2}\\
\nonumber&\leq \sup_x \sup_{i'}\big|\partial_{\theta_{i'}}f(\Theta_0, x)-\partial_{\theta_{i'}}f(\tilde \Theta, x)\big|\sqrt{\sum_{j,j'=1}^n\left(\sum_{i=1}^{Lm}\left|\partial_{\theta_i}f(\Theta_0,x^{(j')})\right|\right)^2}\\
\nonumber&\leq \sup_x \sup_{i'}\big|\partial_{\theta_{i'}}f(\Theta_0, x)-\partial_{\theta_{i'}}f(\tilde \Theta, x)\big|\sup_{x'}\sum_{i=1}^{Lm}\left|\partial_{\theta_i}f(\Theta_0,x'))\right|\sqrt{\sum_{j,j'=1}^n 1}\\
\nonumber& \leq n \sup_{x} \big\|\nabla_\Theta f(\Theta_0, x)-\nabla_\Theta f(\tilde \Theta, x\big)\big\|_\infty \sup_{x'}\big\|\nabla_\Theta f(\Theta_0,x')\big\|_1\\
&\leq n\cdot 4 \,\frac{|\mathcal{M}|^2|\mathcal{N}|}{N(m)}\|\Theta_0-\tilde\Theta\|_\infty\cdot \frac{2\Sigma_1}{N(m)}\leq 8n \,\frac{\Sigma_1|\mathcal{M}|^2|\mathcal{N}|}{N^2(m)}r(m).
\end{align}
Hence
\begin{align}
\nonumber\Bigg\| \Bigg(\id&-\frac{\eta_0}{N_K(m)}DF(\xi)(\nabla_\Theta f(\Theta,X^T)\Bigg)\Bigg\|_\mathcal{L}\\
&\leq 1-\eta_0\lambda_{\min}^K+\left(\frac{1-\zeta}{2}\right)\eta_0\lambda_{\min}^K+8n\eta_0 \,\frac{\Sigma_1|\mathcal{M}|^2|\mathcal{N}|}{N_K(m)N^2(m)}r(m).
\end{align}
Because of (\ref{hplim}), there exist $m_2\in\mathbb{N}$ such that
\[8n\eta_0 \,\frac{\Sigma_1|\mathcal{M}|^2|\mathcal{N}|}{N_K(m)N^2(m)}r(m)\leq \left(\frac{1-\zeta}{2}\right)\eta_0\lambda_{\min}^K\qquad \forall m\geq m_2.\]
This yields
\begin{align}
\nonumber\Bigg\| \Bigg(\id&-\frac{\eta_0}{N_K(m)}DF(\xi)(\nabla_\Theta f(\Theta,X^T)\Bigg)\Bigg\|_\mathcal{L}\\
&\leq 1-\eta_0\lambda_{\min}^K+(1-\zeta)\eta_0\lambda_{\min}^K = 1-\zeta\eta_0\lambda_{\min}^K.
\end{align}
Therefore
\[ \|f(\Theta',X)-Y\|_2\leq (1-\zeta\eta_0\lambda_{\min}^K)\|f(\Theta,X)-Y\|_2\]
for all $m\geq \bar m = \max \{m_1,m_2\}$.

\end{proof}

Now we have all the ingredients to prove Theorem \ref{graddesc0}.

Because of Corollary \ref{corollaryR}, we can ask
\begin{align}
\mathbb{P}\left(\|F(0)-Y\|_2<R\right)\geq 1-\frac{\delta}{2}\quad \forall m\geq m_0 \text{ for some } R=\sqrt{n\log(2n)} R_0
\end{align}
with probability at least $1-\frac{\delta}{2}$.

As in the proof of the Theorem \ref{gradfl}, we introduce
\[\rho(m)=\frac{6R_0}{\lambda^K_{\min}}n \sqrt{\log(2n)}\frac{|\mathcal{M}|}{N_K(m)N(m)}.\]

We prove (\ref{desc1}) and (\ref{desc2}) by induction. If $t=0$, then
\[\|F(0)-Y\|_2<R\]
holds because of the previous assumption; (\ref{desc2}) does not have to be proved for $t=0$.
Let us suppose that (\ref{desc1}) and (\ref{desc2}) hold for any $t\leq t^\ast$.  Recalling that
\[ 
\Theta_{t+1}-\Theta_{t} = -\frac{\eta_0}{N_K(m)} \nabla_\Theta f(\Theta_t,X)\cdot (F(t)-Y)
\]
and using Lemma \ref{lemma}, we have
\begin{align}
\nonumber|\theta_i(t+1)-\theta_i(t)|&\leq\Big|\frac{\eta_0}{N_K(m)}\partial_{\theta_i}f(\Theta_t,X)\cdot (F(t)-Y)\Big|\\
&\leq \frac{\eta_0}{N_K(m)}\modifica{2\sqrt n \frac{|\mathcal{M}|}{N(m)}}\|F(t)-Y\|_2
\end{align}
Hence, because (\ref{desc1}) holds for any $t\leq t^\ast$,
\begin{align}
\nonumber |\theta_i(t+1)-\theta_i(t)|&\leq \frac{\eta_0}{N_K(m)}2\sqrt n\,\frac{|\mathcal{M}|}{N(m)}\sqrt {n\log(2n)} R_0\left(1-\frac{1}{3}\eta_0\lambda_{\min}^K\right)^{t}\\ 
&= 2R_0\eta_0n\sqrt{\log(2n)}\,\frac{|\mathcal{M}|}{N_K(m)N(m)}\left(1-\frac{1}{3}\eta_0\lambda_{\min}^K\right)^{t}\qquad \forall \, t\leq t^\ast.
\end{align}
Therefore, using the triangle inequality and considering the supremum over $i$ of the previous inequalities,
\begin{align}
\nonumber \|\Theta_{t^\ast+1}-\Theta_0\|_\infty&\leq \sum_{k=0}^{t^\ast}\|\Theta_{k+1}-\Theta_k\|_\infty\\
\nonumber &\leq 2R_0n\sqrt{\log(2n)}\eta_0\,\frac{|\mathcal{M}|}{N_K(m)N(m)}\sum_{k=0}^{t^\ast}\left(1-\frac{1}{3}\eta_0\lambda_{\min}^K\right)^k\\
\nonumber &\leq 2R_0n\sqrt{\log(2n)}\eta_0\,\frac{|\mathcal{M}|}{N_K(m)N(m)}\sum_{k=0}^\infty\left(1-\frac{1}{3}\eta_0\lambda_{\min}^K\right)^k\\
&=\frac{6R_0}{\lambda^K_{\min}}n\sqrt{\log(2n)}\frac{|\mathcal{M}|}{N_K(m)N(m)}=\rho(m).
\end{align}
This means that (\ref{desc2}) holds also for $t=t^\ast+1$.
Now, let us notice that
\[\lim_{m\to\infty} \frac{\Sigma_1|\mathcal{M}|^2|\mathcal{N}|}{N_K(m)N^2(m)}\rho(m)=\lim_{m\to\infty}\frac{6 R_0}{\lambda^K_{\min}}n\sqrt{\log(2n)}\frac{\Sigma_1|\mathcal{M}|^3|\mathcal{N}|}{N_K^2(m)N^3(m)}=0\]
by hypothesis. We invoke\footnote{Since $r(m)$ and $\zeta$ are always the same in the proof by induction, the lemma has to be claimed once.} Lemma \ref{step} with probablity to fail $\frac{\delta}{2}$ instead of $\delta$ and choosing the following parameters:
\[r(m)=\rho(m),\qquad \Theta'=\Theta_{t^\ast+1},\qquad \Theta=\Theta_{t^\ast},\qquad \zeta=\frac{1}{3}.\]
Therefore
\[ \|F(t^\ast+1)-Y\|_2\leq R_0\sqrt{n\log(2n)}\left(1-\frac{\eta_0\lambda_{\min}^K}{3}\right)^{t^\ast+1}\]
which is (\ref{desc1}) for $t=t^\ast+1$.

\subsection{Proof of Theorem \ref{unbgraddesc}}\label{proofunbgraddesc}

\begin{lemma}\label{llhp} The hypothesis 
\[\lim_{m\to\infty}\frac{Lm|\mathcal{M}|^4|\mathcal{N}|^2}{N^3(m)}=0\]
of 
Theorem \ref{unbgraddesc} ensures that the hypotheses of Theorem \ref{init}, of Theorem \ref{ntkconv} and of Theorem \ref{graddesc0} are satisfied, i.e.
\[\lim_{m\to\infty}\frac{m|\mathcal{M}|^2|\mathcal{N}|^2}{N^3(m)}=0,\label{checkhp1}\] \[\lim_{m\to\infty}\frac{1}{N_K^2(m)}\,\frac{\Sigma_2|\mathcal{M}|^2|\mathcal{N}|^2}{N^4(m)}=0,\label{checkhp2}\]
\[\lim_{m\to\infty}\frac{\Sigma_1|\mathcal{M}|^2|\mathcal{N}|}{N_K^2(m)N^3(m)}=0.\label{checkhp3}\]
\end{lemma}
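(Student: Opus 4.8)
The plan is to deduce all three limits from the single hypothesis $\lim_{m\to\infty} Lm|\mathcal{M}|^4|\mathcal{N}|^2/N^3(m)=0$ by the same elementary bookkeeping used in the proof of Lemma~\ref{checkhp}. The ingredients I would invoke are: the crude light-cone bounds $\Sigma_1\le Lm|\mathcal{M}|$ and $\Sigma_2\le Lm|\mathcal{M}|^2$ from the remark after Definition~\ref{defmax}; the lower bound $N_K(m)\ge c_1>0$ valid for all large $m$ from Lemma~\ref{boundnorm}, so that $1/N_K^2(m)\le 1/c_1^2$ eventually; and the trivial facts $L,|\mathcal{M}|,|\mathcal{N}|\ge 1$. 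Before starting I would record one consequence of the hypothesis: since the numerator obeys $Lm|\mathcal{M}|^4|\mathcal{N}|^2\ge m\to\infty$, the ratio can tend to $0$ only if $N^3(m)\to\infty$, hence $N(m)\to\infty$ and $1/N(m)\to 0$.

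For the first limit I would use monotonicity in $L$ and $|\mathcal{M}|$ only:
\[
0\le\frac{m|\mathcal{M}|^2|\mathcal{N}|^2}{N^3(m)}\le\frac{Lm|\mathcal{M}|^4|\mathcal{N}|^2}{N^3(m)},
\]
and the right-hand side tends to $0$ by hypothesis. For the third limit, inserting $\Sigma_1\le Lm|\mathcal{M}|$, then $1/N_K^2(m)\le 1/c_1^2$, then $|\mathcal{M}|\le|\mathcal{M}|^2$ and $|\mathcal{N}|\le|\mathcal{N}|^2$,
\[
\frac{\Sigma_1|\mathcal{M}|^2|\mathcal{N}|}{N_K^2(m)N^3(m)}\le\frac{1}{c_1^2}\,\frac{Lm|\mathcal{M}|^3|\mathcal{N}|}{N^3(m)}\le\frac{1}{c_1^2}\,\frac{Lm|\mathcal{M}|^4|\mathcal{N}|^2}{N^3(m)},
\]
which again tends to $0$. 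For the second limit, inserting $\Sigma_2\le Lm|\mathcal{M}|^2$ and $1/N_K^2(m)\le 1/c_1^2$ and then peeling off one power of $N(m)$ from the denominator,
\[
\frac{1}{N_K^2(m)}\,\frac{\Sigma_2|\mathcal{M}|^2|\mathcal{N}|^2}{N^4(m)}\le\frac{1}{c_1^2}\,\frac{Lm|\mathcal{M}|^4|\mathcal{N}|^2}{N^4(m)}=\frac{1}{c_1^2}\,\frac{1}{N(m)}\cdot\frac{Lm|\mathcal{M}|^4|\mathcal{N}|^2}{N^3(m)},
\]
and the product on the right tends to $0$ because the first factor tends to $0$ (by the observation above) while the second is bounded, indeed it tends to $0$ by hypothesis.

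There is essentially no hard step here: the lemma is a purely mechanical chaining of the known bounds, the entire content being the choice of which crude inequality to apply where. The only point I would flag as needing a moment's attention --- the ``obstacle'', such as it is --- is the second limit, where one genuinely needs the extra factor $1/N(m)$ in the denominator; this is precisely why I would record up front that the hypothesis already forces $N(m)\to\infty$. Once that is noted the estimate closes immediately, and the same template is reusable for the analogous hypothesis-reduction statements needed elsewhere in Section~\ref{ch7}.
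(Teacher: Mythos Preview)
Your proposal is correct and follows essentially the same approach as the paper's proof: both use the crude bounds $\Sigma_k\le Lm|\mathcal{M}|^k$, the lower bound on $N_K(m)$ from Lemma~\ref{boundnorm}, and the trivial inequalities $L,|\mathcal{M}|,|\mathcal{N}|\ge 1$ to majorize each of the three limits by the hypothesis. Your version is in fact slightly more careful than the paper's, which writes the second bound in one line without explicitly recording that the hypothesis forces $N(m)\to\infty$; your explicit peeling-off of the $1/N(m)$ factor makes that step transparent.
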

\begin{proof}
It is easy to see that
\[0\leq\lim_{m\to\infty}\frac{m|\mathcal{M}|^2|\mathcal{N}|^2}{N^3(m)}\leq\lim_{m\to\infty}\frac{Lm|\mathcal{M}|^4|\mathcal{N}|^2}{N^3(m)}=0,\]
\[0\leq\frac{1}{N_K^2(m)}\,\frac{\Sigma_2|\mathcal{M}|^2|\mathcal{N}|^2}{N^4(m)}\leq\lim_{m\to\infty}\frac{Lm|\mathcal{M}|^4|\mathcal{N}|^2}{N^3(m)}=0,\]
\[0\leq\lim_{m\to\infty}\frac{\Sigma_1|\mathcal{M}|^2|\mathcal{N}|}{N_K^2(m)N^3(m)}\leq\lim_{m\to\infty}\frac{Lm|\mathcal{M}|^4|\mathcal{N}|^2}{N^3(m)}=0.\]
\end{proof}

Two simple lemmas are proved now, so that the computations will be simplified later.
\begin{lemma}[An elementary inequality]\label{elemineq}
Let $x\in\mathbb{R}$ such that
\[x^2\leq A+A'+Bx,\]
where $A,A',B>0$. Then
\[x\leq \sqrt{A}+\sqrt{A'}+B.\]
\end{lemma}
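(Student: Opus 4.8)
The plan is to treat this as an elementary estimate on the larger root of a quadratic. Rewriting the hypothesis as $x^2-Bx-(A+A')\le 0$, I would consider the upward-opening parabola $p(y)=y^2-By-(A+A')$. Since $p(0)=-(A+A')<0$, this parabola has a unique nonnegative root $y_+=\frac12\bigl(B+\sqrt{B^2+4(A+A')}\bigr)$ (the other root being negative, as $\sqrt{B^2+4(A+A')}>B$), and the inequality $p(x)\le 0$ forces $x\le y_+$. So it suffices to bound $y_+$ from above by $\sqrt A+\sqrt{A'}+B$.

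For that last step I would apply subadditivity of the square root twice. First, $\sqrt{B^2+4(A+A')}\le\sqrt{B^2}+\sqrt{4(A+A')}=B+2\sqrt{A+A'}$. Second, $\sqrt{A+A'}\le\sqrt A+\sqrt{A'}$, since $(\sqrt A+\sqrt{A'})^2=A+A'+2\sqrt{AA'}\ge A+A'$. Combining these two facts gives $y_+\le\frac12\bigl(B+B+2\sqrt A+2\sqrt{A'}\bigr)=B+\sqrt A+\sqrt{A'}$, which is exactly the claim, so the chain $x\le y_+\le B+\sqrt A+\sqrt{A'}$ closes the argument.

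There is essentially no obstacle here; the only care needed is to keep every quantity nonnegative, so that all square roots are well defined and the implication $p(x)\le 0\Rightarrow x\le y_+$ is legitimate — this is guaranteed by the hypothesis $A,A',B>0$. (One could alternatively split on the sign of $x$: if $x\le 0$ the conclusion is immediate since $\sqrt A+\sqrt{A'}+B>0$, and if $x>0$ one completes the square as above; but the quadratic-root argument handles both cases uniformly and is the version I would write up.)
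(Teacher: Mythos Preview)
Your proof is correct and follows essentially the same approach as the paper: both bound $x$ by the larger root $y_+=\tfrac12\bigl(B+\sqrt{B^2+4(A+A')}\bigr)$ of the associated quadratic and then estimate $\sqrt{B^2+4(A+A')}\le B+2\sqrt{A}+2\sqrt{A'}$ via subadditivity of the square root. The only cosmetic difference is that the paper applies the bound $\sqrt{B^2+(2\sqrt{A})^2+(2\sqrt{A'})^2}\le B+2\sqrt{A}+2\sqrt{A'}$ in one step, whereas you split it into two applications of subadditivity.
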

\begin{proof}
The equality $x^2= A+A'+Bx$ holds if and only if
\[x=x_1=\frac{B+\sqrt{B^2+4(A+A')}}{2} \quad \text{or} \quad x=x_2=\frac{B-\sqrt{B^2+4(A+A')}}{2},\]
so the inequality \modifica{$x^2\leq A+A'+Bx$} implies
\[x_2 \leq x\leq x_1.\]
Since
\begin{align}
\nonumber 2 x_1 &= B+\sqrt{B^2+4(A+A')}=B+\sqrt{B^2+(2\sqrt A)^2+(2\sqrt{A'})^2} \\
&\leq B+B+2\sqrt A+2\sqrt{A'}= 2(B+\sqrt A+\sqrt{A'}),
\end{align}
we have
\[x\leq \sqrt A+\sqrt{A'}+B.\]
\end{proof}

\begin{lemma}[Another elementary inequality]\label{elemineq2}
Let $x\in[0,1]$. Then
\[\frac{x}{4}\leq \sqrt{1-\frac{x}{2}}-\sqrt{1-x}.\]
\end{lemma}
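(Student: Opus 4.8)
\textbf{Proof proposal for Lemma \ref{elemineq2}.} The plan is to rationalize the difference of square roots appearing on the right-hand side, which collapses the claim to a trivial bound. First I would dispose of the case $x=0$, where both sides equal $0$ and the inequality holds with equality. For $x\in(0,1]$, note that on this interval both $1-\tfrac{x}{2}$ and $1-x$ are nonnegative, so the square roots are well defined, and the denominator $\sqrt{1-\tfrac{x}{2}}+\sqrt{1-x}$ is strictly positive (indeed at least $\sqrt{1-x/2}>0$). Then I would write
\[
\sqrt{1-\tfrac{x}{2}}-\sqrt{1-x}
=\frac{\bigl(1-\tfrac{x}{2}\bigr)-(1-x)}{\sqrt{1-\tfrac{x}{2}}+\sqrt{1-x}}
=\frac{x/2}{\sqrt{1-\tfrac{x}{2}}+\sqrt{1-x}}.
\]

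Given this identity, the desired inequality $\tfrac{x}{4}\le\sqrt{1-\tfrac{x}{2}}-\sqrt{1-x}$ becomes, after multiplying through by the positive denominator and dividing by $x>0$,
\[
\tfrac{1}{4}\Bigl(\sqrt{1-\tfrac{x}{2}}+\sqrt{1-x}\Bigr)\le \tfrac{1}{2},
\qquad\text{i.e.}\qquad
\sqrt{1-\tfrac{x}{2}}+\sqrt{1-x}\le 2.
\]
This last statement is immediate: for $x\in[0,1]$ we have $0\le 1-\tfrac{x}{2}\le 1$ and $0\le 1-x\le 1$, hence $\sqrt{1-\tfrac{x}{2}}\le 1$ and $\sqrt{1-x}\le 1$, and the sum is at most $2$. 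Tracing the equivalences back (all steps are reversible on $(0,1]$ since we only multiplied and divided by strictly positive quantities) yields the claim, and combining with the $x=0$ case completes the proof.

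\textbf{Main obstacle.} There is essentially no difficulty here; the only point requiring a moment's care is that the reduction divides by $x$, so the endpoint $x=0$ must be handled separately (trivially), and one should record that the rationalization is valid because $1-x\ge 0$ on the stated interval so the radicands are nonnegative.
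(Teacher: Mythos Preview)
Your proof is correct and is essentially the same as the paper's: both rationalize via $(a-b)(a+b)=a^2-b^2$ and reduce the claim to the trivial bound $\sqrt{1-\tfrac{x}{2}}+\sqrt{1-x}\le 2$. The only cosmetic difference is that the paper multiplies the inequality $a+b\le 2$ by the nonnegative quantity $a-b$ directly (avoiding the need to treat $x=0$ separately), whereas you divide by $x$ and handle the endpoint by hand.
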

\begin{proof}
First we notice that
\[\sqrt{1-\frac{x}{2}}-\sqrt{1-x}\geq 0.\]
Then we use that, for $x\in [0,1]$,
\[\sqrt{1-\frac{x}{2}}+\sqrt{1-x}\leq 2,\]
whence
\begin{align}
\nonumber \left(\sqrt{1-\frac{x}{2}}+\sqrt{1-x}\right)\left(\sqrt{1-\frac{x}{2}}-\sqrt{1-x}\right)&\leq 2\left(\sqrt{1-\frac{x}{2}}-\sqrt{1-x}\right)\\
\nonumber \frac{1}{2}\left(1-\frac{x}{2}-(1-x)\right)&\leq \sqrt{1-\frac{x}{2}}-\sqrt{1-x}\\
\frac{x}{4}&\leq \sqrt{1-\frac{x}{2}}-\sqrt{1-x}.
\end{align}
\end{proof}
Now we are ready to prove Theorem \ref{unbgraddesc}.
As usual, because of Corollary \ref{corollaryR} we can ask
\begin{align}
\mathbb{P}\left(\|F(0)-Y\|_2<R\right)\geq 1-\frac{\delta}{2}\quad \forall m\geq m_0
\end{align}
for some $R=\sqrt {n\log(2n)} R_0$ with probability at least $1-\frac{\delta}{2}$.

At each temporal step, let us introduce two different evolutions of the same parameter vector $\Theta_t$, one stochastic and one deterministic. More precisely, let us consider two stochastic processes $\Theta_t$ and $\tilde \Theta_t$ defined for $t\geq 0$ by
\[
\begin{dcases}
\Theta_{t+1}&=\Theta_t-\eta g^{(t)}(\Theta_t)\\
\tilde \Theta_{t+1}&=\Theta_t-\eta \nabla_\Theta\mathcal{L}(\Theta_t)
\end{dcases}.
\]
It is crucial to notice that the parameter vector appering in the RHS of the second equation is $\Theta_t$, not $\tilde \Theta_t$.\\
When we discussed the strategy to prove the theorem, we emphasized the necessity to bound the discrepancy between the loss function evalued at the new points $\Theta_{t+1}$ and $\tilde\Theta_{t+1}$. An upper bound can be computed starting from the Lipschitzness of the model. By Lemma \ref{lipf},
\[|f(\Theta_1,x)-f(\Theta_2,x)|\leq 2|\Theta|\frac{|\mathcal{M}|}{N(m)}\|\Theta_1-\Theta_2\|_\infty.\]
This implies the following bound on the growth $\mathcal{L}(\Theta)$:
\begin{align}
\nonumber |\mathcal{L}(\Theta_1)-\mathcal{L}(\Theta_2)|&=\frac{1}{2n}\sum_{i=1}^n\Big((f(\Theta_1,x^{(i)})-y^{(i)})^2-(f(\Theta_2,x^{(i)})-y^{(i)})^2\Big)\\
\nonumber &=\frac{1}{2n}\sum_{i=1}^n\big(f(\Theta_1,x^{(i)})-f(\Theta_2,x^{(i)})\big)\times\\
\nonumber &\phantom{=\frac{1}{2n}sum\big(f}\times\big(f(\Theta_1,x^{(i)})-y^{(i)}+f(\Theta_2,x^{(i)})-y^{(i)})\big)\\
\nonumber &\leq \frac{1}{\sqrt {2n}}\|f(\Theta_1,X)-f(\Theta_2,X)\|_2\times\\
\nonumber &\phantom{\leq \frac{1}{\sqrt 2n}||f(\Theta_1}\times\frac{1}{\sqrt {2n}}\Big(\|f(\Theta_1,X)-Y\|_2+\|f(\Theta_2,X)-Y\|_2\Big)\\
&\leq \sqrt 2|\Theta|\frac{|\mathcal{M}|}{N(m)}\big( \sqrt{\mathcal{L}(\Theta_1)}+\sqrt{\mathcal{L}(\Theta_2)}\big)\|\Theta_1-\Theta_2\|_\infty.
\end{align}
In particular,
\begin{align}
\nonumber \mathcal{L}(\Theta_1)&\leq \mathcal{L}(\Theta_2)+\sqrt 2|\Theta|\frac{|\mathcal{M}|}{N(m)}\sqrt{\mathcal{L}(\Theta_2)}\|\Theta_1-\Theta_2\|_\infty \\
& \phantom{\leq \mathcal{L}(\Theta_2)}+  \sqrt 2|\Theta|\frac{|\mathcal{M}|}{N(m)}\|\Theta_1-\Theta_2\|_\infty \sqrt{\mathcal{L}(\Theta_1)}.
\end{align}
So we can apply Lemma \ref{elemineq} with $x=\sqrt{\mathcal{L}(\Theta_1)}$ and
\[A=\mathcal{L}(\Theta_2),\qquad A'= \sqrt 2|\Theta|\frac{|\mathcal{M}|}{N(m)}\sqrt{\mathcal{L}(\Theta_2)}\|\Theta_1-\Theta_2\|_\infty,\]
\[B= \sqrt 2|\Theta|\frac{|\mathcal{M}|}{N(m)}\|\Theta_1-\Theta_2\|_\infty,\]
obtaining
\begin{align}
\nonumber\sqrt{\mathcal{L}(\Theta_1)}&\leq \sqrt{\mathcal{L}(\Theta_2)} + \sqrt[4]{\mathcal{L}(\Theta_2)}\sqrt{\modifica{\sqrt{2}}|\Theta|\frac{|\mathcal{M}|}{N(m)}}\|\Theta_1-\Theta_2\|_\infty^{1/2}\\
\nonumber &\phantom{\leq}+\sqrt 2|\Theta|\frac{|\mathcal{M}|}{N(m)}\|\Theta_1-\Theta_2\|_\infty\\
&\leq \left(\sqrt[4]{\mathcal{L}(\Theta_2)}+\sqrt{\sqrt 2|\Theta|\frac{|\mathcal{M}|}{N(m)}}\|\Theta_1-\Theta_2\|_\infty^{1/2}\right)^2.
\label{later}
\end{align}
The discrepancy between the loss function after a deterministic evolution of the parameters and the loss function after a stochastic evolution of the parameters is small when the outcome of the stochastic evolution is close to the result of the deterministic evolution. This can be done controlled by a concentration inequality. For any choice of $t\leq T$ and $\xi_k\in\mathbb{R}$ and $\Theta_k\in\mathbb{R}^{|\Theta|}$,

\[\Theta_{t+1}-\tilde \Theta_{t+1}=\eta (\nabla_\Theta\mathcal{L}(\Theta_t)-g^{(t)}(\Theta_t)),\]
\[\mathbb{E}\left[\theta_i(t+1)-\tilde \theta_i(t+1)\,\,\,\big|\,\,\, g^{(k)}(\Theta_{k})=\xi_k\quad \forall\, k<t\right]=0,\]
\begin{align}
\nonumber \text{Var}\Big[\theta_i(t+1)-\tilde \theta_i(t+1)\,\,\,&\big|\,\,\, g^{(k)}(\Theta_{k})=\xi_k\quad \forall\, k<t\Big]\\
&=\eta^2\text{Var}\left[g^{(t)}_i(\Theta_t)\,\,\,\big|\,\,\, g^{(k)}(\Theta_{k})=\xi_k\quad \forall\, k<t\right].
\end{align}
In order to have a cleaner notation, from now on we will not explitictly write that the variance and the probability are contitioned on any possible outcome of the previous estimators, even though we will always assume this.
By Chebyshev's inequality,
\[\mathbb{P}\left(|\theta_i(t+1)-\tilde \theta_i(t+1)|>k\eta\sqrt{\text{Var}\left[g^{(t)}_i(\Theta_t)\right]}\right)\leq\frac{1}{k^2}.\]
If we set 
\[k= \pi\sqrt{\frac{|\Theta|}{3\delta}}(t+1),\]
then
\begin{align}
\nonumber \mathbb{P}\Bigg(|\theta_i(t+1)-&\tilde \theta_i(t+1)|<\eta\pi\sqrt{\frac{|\Theta|}{3\delta}}(t+1)\sqrt{\text{Var}\left[g^{(t)}_i(\Theta_t)\right]}\quad \forall\, t\leq T, \forall i \Bigg)\\
\nonumber &\geq 1-\frac{\delta}{2}\, \frac{1}{|\Theta|}\sum_{i=1}^{Lm}\frac{6}{\pi^2}\sum_{t=0}^T\frac{1}{(t+1)^2}\geq  1-\frac{\delta}{2}\, \frac{6}{\pi^2}\sum_{t=0}^\infty\frac{1}{(t+1)^2}\\
&= 1-\frac{\delta}{2}.\label{unionbounddelta}
\end{align}
So, if
\begin{align}\label{stimag}
    \|\Theta_{t+1}-\tilde\Theta_{t+1}\|_\infty\leq \eta\pi\sqrt{\frac{|\Theta|}{3\delta}}(t+1)\sup_i\sqrt{\text{Var}\left[g^{(t)}_i(\Theta_t)\right]},
\end{align}
we can bound the loss function evalued at $\Theta_{t+1}$ by using (\ref{later})
\begin{align}
\nonumber \sqrt[4]{\mathcal{L}(\Theta_{t+1})}&\leq \sqrt[4]{\mathcal{L}(\tilde\Theta_{t+1})}+\sqrt{\sqrt 2|\Theta|\frac{|\mathcal{M}|}{N(m)}}\|\Theta_{t+1}-\tilde\Theta_{t+1}\|_\infty^{1/2}\\
&\leq \sqrt[4]{\mathcal{L}(\tilde\Theta_{t+1})}+\sqrt[4]{\frac{2\eta^2\pi^2}{3\delta}|\Theta|^3\frac{|\mathcal{M}|^2}{N^2(m)}(t+1)^2\sup_i\text{Var}\left[g^{(t)}_i(\Theta_t)\right]}.
\end{align}

Now we prove the following bounds by induction on $t$:
\[
\begin{dcases}
\|\Theta_t-\Theta_0\|_\infty\leq 2R_0{\lambda^K_{\min}}n\sqrt{\log(2n)}\frac{|\mathcal{M}|}{N_K(m)N(m)}\sum_{k=0}^{t-1}\left(1-\frac{\eta_0\lambda_{\min}^K}{3}\right)^k\\
\mathcal{L}(\Theta_t)\leq \frac{1}{2}R_0^2\log(2n) \left(1-\frac{\eta_0\lambda_{\min}^K}{3}\right)^{2t}
\end{dcases}.
\label{boundst}\]
If $t=0$, it is clear that they hold. Now, let us assume that they hold for $t=t^\ast$.

Using the deterministic evolution
\[ 
\tilde \Theta_{t^\ast+1}-\Theta_{t^\ast} = -\frac{\eta_0}{N_K(m)} \nabla_\Theta f(\Theta_{t^\ast},X)\cdot (F(t)-Y)
\]
and using Lemma \ref{lemma}
\begin{align}
\nonumber|\tilde\theta_i(t^\ast+1)-\theta_i(t^\ast)|&\leq\Big|\frac{\eta_0}{N_K(m)}\partial_{\theta_i}f(\Theta_{t^\ast},X)\cdot (F(t^\ast)-Y)\Big|\\
\nonumber&\leq \frac{\eta_0}{N_K(m)}\|\partial_{\theta_i}f(\Theta_{t^\ast},X)\|_2\|F(t^\ast)-Y\|_2\\
\nonumber&\leq \frac{\eta_0}{N_K(m)}2\sqrt n\,\frac{|\mathcal{M}|}{N(m)}\sqrt {n\log(2n)} R_0\left(1-\frac{1}{3}\eta_0\lambda_{\min}^K\right)^{t^\ast}\\ 
&= 2R_0\eta_0n\modifica{\sqrt{\log(2n)}}\,\frac{|\mathcal{M}|}{N_K(m)N(m)}\left(1-\frac{1}{3}\eta_0\lambda_{\min}^K\right)^{t^\ast}.
\end{align}
By the triangle inequality, using the inductive hypothesis,
\begin{align}
\nonumber\|\tilde\Theta_{t^\ast+1}-\Theta_0\|_\infty&\leq \|\tilde\Theta_{t^\ast+1}-\Theta_{t^\ast}\|_\infty+\|\Theta_{t^\ast}-\Theta_0\|_\infty\\
&\leq 2R_0\eta_0n\sqrt{\log(2n)}\,\frac{|\mathcal{M}|}{N_K(m)N(m)}\sum_{k=0}^{t^\ast}\left(1-\frac{1}{3}\eta_0\lambda_{\min}^K\right)^k.
\end{align}
In particular, both $\|\Theta_{t^\ast}-\Theta_0\|_\infty$ and $\|\tilde \Theta_{t^\ast+1}-\Theta_0\|_\infty$ are bounded by
\begin{align}
\nonumber 2R_0\eta_0n\sqrt{\log(2n)}\,&\frac{|\mathcal{M}|}{N_K(m)N(m)}\sum_{k=0}^\infty\left(1-\frac{1}{3}\eta_0\lambda_{\min}^K\right)^k\\
&\qquad=\frac{6 R_0}{N_K(m)\lambda^K_{\min}}n\sqrt{\log(2n)}\frac{|\mathcal{M}|}{N(m)}=\rho(m).
\end{align}

Now, let us notice that
\[\lim_{m\to\infty} \frac{\Sigma_1|\mathcal{M}|^2|\mathcal{N}|}{N_K(m)N^2(m)}\rho(m)=\lim_{m\to\infty}\frac{6 R_0}{\lambda^K_{\min}}n\sqrt{\log(2n)}\frac{\Sigma_1|\mathcal{M}|^3|\mathcal{N}|}{N_K^2(m)N^3(m)}=0\]
by hypothesis. We invoke\footnote{Since $r(m)$ and $\zeta$ are always the same in the proof by induction, the lemma has to be claimed once.} Lemma \ref{step} with probablity to fail $\frac{\delta}{4}$ instead of $\delta$ and choosing the following parameters:
\[r(m)=\rho(m),\qquad \Theta'=\tilde\Theta_{t^\ast+1},\qquad \Theta=\Theta_{t^\ast},\qquad \zeta=\frac{2}{3},\]
Therefore
\[ \|f(\tilde \Theta_{t^\ast+1},X)-Y\|_2\leq \left(1-\frac{2}{3}\eta_0\lambda_{\min}^K\right)\|f(\Theta_{t^\ast},X)-Y\|_2,\]
which yields
\begin{align}
\nonumber \sqrt[4]{\mathcal{L}(\Theta_{t^\ast+1})}
&\leq \sqrt{1-\frac{2}{3}\eta_0\lambda_{\min}^K}\sqrt[4]{\mathcal{L}(\Theta_{t^\ast})}\\
&\phantom{\leq}+\sqrt[4]{\frac{2\eta^2\pi^2}{3\delta}|\Theta|^3\frac{|\mathcal{M}|^2}{N^2(m)}(t+1)^2\text{Var}\left[g^{(t)}_i(\Theta_{t^\ast})\right]}.
\end{align}
We recall that 
\[\eta=\frac{n}{N_K(m)}\eta_0.\]
By hypothesis (\ref{Eg2b}), the variance is sufficiently small:
\begin{align}
\nonumber\sqrt[4]{\mathcal{L}(\Theta_{t^\ast+1})}
&\leq \sqrt{1-\frac{2}{3}\eta_0\lambda_{\min}^K}\sqrt[4]{\mathcal{L}(\Theta_{t^\ast})}+\sqrt[4]{\frac{2\eta_0^2\pi^2}{3}
c_0\eta_0^2\left(\lambda_{\min}^K\right)^4\mathcal{L}(\Theta_{t^\ast})}\\
\nonumber&\leq \left(\sqrt{1-\frac{2}{3}\eta_0\lambda_{\min}^K}+\eta_0\lambda_{\min}^K\sqrt[4]{\frac{2\pi^2}{3}
\frac{1}{864\pi^2}}\right)\sqrt[4]{\mathcal{L}(\Theta_{t^\ast})}\\
&\leq \left(\sqrt{1-\frac{2}{3}\eta_0\lambda_{\min}^K}+\frac{\eta_0\lambda_{\min}^K}{6}\right)\sqrt[4]{\mathcal{L}(\Theta_{t^\ast})}.
\end{align}
Using Lemma \ref{elemineq2} with $x/4=\eta_0\lambda_{\min}^K/6$,
\begin{align}
\nonumber \sqrt[4]{\mathcal{L}(\Theta_{t^\ast+1})}
&\leq \left(\sqrt{1-\frac{2}{3}\eta_0\lambda_{\min}^K}+\sqrt{1-\frac{\eta_0\lambda_{\min}^K}{3}}-\sqrt{1-\frac{2}{3}\eta_0\lambda_{\min}^K}\right)\sqrt[4]{\mathcal{L}(\Theta_{t^\ast})}\\
&=\sqrt{1-\frac{\eta_0\lambda_{\min}^K}{3}}\sqrt[4]{\mathcal{L}(\Theta_{t^\ast})},
\end{align}
whence
\begin{align}
\mathcal{L}(\Theta_{t^\ast+1})
&\leq \left(1-\frac{\eta_0\lambda_{\min}^K}{3}\right)^2 \mathcal{L}(\Theta_{t^\ast}).
\end{align}
Then, using the inductive hypothesis,
\[\mathcal{L}(\Theta_{t^\ast+1})\leq \frac{R_0^2}{2}\log(2n) \left(1-\frac{\eta_0\lambda_{\min}^K}{3}\right)^{2(t^\ast+1)}.\]
So, both the bounds (\ref{boundst}) hold for $t=t^\ast+1$. By induction, they are valid for any $t\in\mathbb{N}$.\\
Using Lemma \ref{lemma},
\begin{align}
\nonumber|\tilde\theta_i(t+1)-\theta_i(t)|&=\eta|\partial_{\theta_i}\mathcal{L}(\Theta_t)|=\frac{\eta}{n}\left|\sum_{j=1}^n\partial_{\theta_i}f(\Theta_t,x^{(j)})\big(f(\Theta_t,x^{(j)})-y^{(j)}\big)\right|\\
\nonumber &\leq \frac{\sqrt 2\eta}{\sqrt n}\|\partial_{\theta_i}f(\Theta,X)\|_2\frac{1}{\sqrt {2n}}\|F(t)-Y\|_2\\
&=2\sqrt 2 \eta \frac{|\mathcal{M}|}{N(m)}\sqrt{\mathcal{L}(\Theta_t)}.
\end{align}
By the triangle inequality,
\begin{align}
\nonumber\|\Theta_{t+1}&-\Theta_{t}\|_\infty\leq \|\Theta_{t+1}-\tilde \Theta_{t+1}\|_\infty+\|\tilde\Theta_{t+1}-\Theta_{t}\|_\infty\\
\nonumber&\leq \eta\pi\sqrt{\frac{|\Theta|}{3\delta}}(t+1)\sqrt{\text{Var}\left[g^{(t)}_i(\Theta_t)\right]}+2\sqrt 2 \eta \frac{|\mathcal{M}|}{N(m)}\sqrt{\mathcal{L}(\Theta_t)}\\
\nonumber&\leq 
\eta\pi\sqrt{\frac{|\Theta|}{3\delta}}(t+1)\sqrt{c_0}\eta_0\frac{\left(\lambda_{\min}^K\right)^2}{n}\frac{N_K(m)N(m)}{|\mathcal{M}||\Theta|^{3/2}}\,\frac{\sqrt \delta}{(t+1)}\,\sqrt{\mathcal{L}(\Theta_t)}\\
\nonumber&\phantom{\leq}+2\sqrt 2 \eta \frac{|\mathcal{M}|}{N(m)}\sqrt{\mathcal{L}(\Theta_t)}\\
\nonumber &\leq \eta_0 \frac{R_0}{\sqrt 2}\sqrt{\log(2n)}\,\times\\
&\phantom{\leq}\times\left(\eta_0\frac{\pi\sqrt{c_0}}{\sqrt 3}\left(\lambda_{\min}^K\right)^2\frac{N(m)}{|\mathcal{M}||\Theta|}+2\sqrt 2\,n \frac{|\mathcal{M}|}{N_K(m)N(m)}\right)\left(1-\frac{\eta_0\lambda_{\min}^K}{3}\right)^t.
\label{triangle}
\end{align}
Hence, using again the triangle inequality,
\begin{align}
\nonumber\|\Theta_T-\Theta_0\|_\infty&\leq\sum_{t=0}^{T-1}\|\Theta_{t+1}-\Theta_{t}\|_\infty\\
\nonumber&\leq \eta_0 R_0\sqrt{\log(2n)}\left(\eta_0\frac{\pi\sqrt{c_0}}{\sqrt 6}\left(\lambda_{\min}^K\right)^2\frac{N(m)}{|\mathcal{M}||\Theta|}+2n \frac{|\mathcal{M}|}{N_K(m)N(m)}\right)\times \\
\nonumber& \qquad \times\sum_{t=0}^\infty\left(1-\frac{\eta_0\lambda_{\min}^K}{3}\right)^t\\
\nonumber&=\frac{3R_0\sqrt{\log(2n)}}{\modifica{\lambda^K_{\min}}}\left(\eta_0\frac{\pi\sqrt{c_0}}{\sqrt 6}\left(\lambda_{\min}^K\right)^2\frac{N(m)}{|\mathcal{M}||\Theta|}+2n \frac{|\mathcal{M}|}{N_K(m)N(m)}\right)\\
&= \sqrt{\log(2n)}\left(R_1(\delta)\eta_0 \lambda_{\min}^K\frac{N(m)}{|\mathcal{M}||\Theta|}+R_2(\delta)\frac{n}{\lambda_{\min}^K} \frac{|\mathcal{M}|}{N_K(m)N(m)}\right),
\end{align}
where we recalled the dependence $R_0=R_0(\delta)$. This proves (\ref{noisy22}).
Finally, we use Theorem \ref{powerful} to give the upper bound
\begin{align}
\nonumber |f(\Theta,x)&-f^{\mathrm{lin}}(\Theta,x)|\\
\nonumber &\leq \frac{Lm|\mathcal{M}|^2|\mathcal{N}|}{N(m)}\log(2n)\left(R_1(\delta)\eta_0 \lambda_{\min}^K\frac{N(m)}{|\mathcal{M}||\Theta|}+R_2(\delta)\frac{n}{\lambda_{\min}^K} \frac{|\mathcal{M}|}{N_K(m)N(m)}\right)^2\\
&\leq 2R_1^2(\delta)\log(2n)\eta_0^2\left(\lambda_{\min}^K\right)^2\frac{|\mathcal{N}|N(m)}{Lm}+2\frac{R^2_2(\delta)}{\left(\lambda_{\min}^K\right)^2}n^2\log(2n)\frac{Lm|\mathcal{M}|^4|\mathcal{N}|}{N_K^2(m)N^3(m)},
\end{align}
\modifica{where in the last line we have used that $(x+y)^2\leq 2x^2+2y^2$}. And this concludes the proof.

\subsection{Proof of Lemma \ref{secondorder2}}\label{proofsecondorder2}
We need to introduce some definitions in order to deal with the discretization of time and with the noise.
Let us consider the case in which the gradient of the cost is exactly known and let us define $r(t,x)$ as
\[f(\Theta_{t+1},x)=:f(\Theta_t,x)+\nabla_\Theta f(\Theta_t,x)^T(\Theta_{t+1}-\Theta_t)+r(t,x),\]
so that the discrete evolution reads
\begin{align}
\nonumber
\Delta f(\Theta_t,x) := f(\Theta_{t+1},x)-f(\Theta_t,x)&=\nabla_\Theta f(\Theta_t,x)^T(\Theta_{t+1}-\Theta_t)+r(t,x)\\
\nonumber
&= -\eta\nabla_\Theta f(\Theta_t,x)^T \nabla_\Theta\mathcal{L}(\Theta_t)+r(t,x)\\
& = -\eta_0 \hat K_{\Theta_t}(x,X^T)(F(t)-Y)+r(t,x).
\end{align}
Let us now generalize the previous equation in the presence of the noise due the finite number of measurements; as before, we introduce $r(t,x)$ as
\[f(\Theta_{t+1},x)=:f(\Theta_t,x)+\nabla_\Theta f(\Theta_t,x)^T(\Theta_{t+1}-\Theta_t)+r(t,x),\]
but we also define
\[g^{(t)}(\Theta_t)=:\nabla \mathcal{L}(\Theta_t)+\epsilon(t) \qquad \text{and}\qquad \delta(t,x)=:-\eta \nabla_\Theta f(\Theta_t,x)^T \epsilon(t)\]
so that
\[\Theta_{t+1}-\Theta_t =-\eta g^{(t)}(\Theta_t) = -\eta \big(\nabla_\Theta\mathcal{L}(\Theta_t)+\epsilon(t)\big) \label{thetaepsilon} \]
whence
\begin{align}
\nonumber
\Delta f(\Theta_t,x) := f(\Theta_{t+1},x)-f(\Theta_t,x)&=\nabla_\Theta f(\Theta_t,x)^T(\Theta_{t+1}-\Theta_t)+r(t,x)\\
\nonumber
&= -\eta\nabla_\Theta f(\Theta_t,x)^T \big(\nabla_\Theta\mathcal{L}(\Theta_t)+\epsilon(t)\big)+r(t,x)\\
\label{connoise}
& = -\eta_0 \hat K_{\Theta_t}(x,X^T)(F(t)-Y)+r(t,x)+\delta(t,x),
\end{align}
therefore the noiseless case can be recovered from (\ref{connoise}) by setting $\delta(t)=0$. We extend the time parameter $t$ to semi-integers as follows
\begin{align}
\nonumber f(\Theta_{t+1/2},x) :&= \frac{f(\Theta_{t+1},x)+f(\Theta_{t},x)}{2}\\
\nonumber &=\frac{1}{2}\big(2f(\Theta_t,x)+\Delta f(\Theta_t,x)\big)\\
& =f(\Theta_t,x) -\frac{\eta_0}{2} \hat K_{\Theta_t}(x,X^T)(F(t)-Y)+\frac{1}{2}(r(t,x)+\delta(t,x)).\label{ff1}
\end{align}
This compact notation will be useful in the upcoming computations. It is simple to verify that
\begin{align}
    \nonumber 
    f^{\mathrm{lin}}(\Theta_{t+1/2},x) :&= \frac{f^{\mathrm{lin}}(\Theta_{t+1},x)+f^{\mathrm{lin}}(\Theta_{t},x)}{2}\\
    \label{ff2}
    &= f^{\mathrm{lin}}(\Theta_t,x) -\frac{\eta_0}{2} \hat K_{\Theta_0}(x,X^T)(F^{\mathrm{lin}}(t)-Y).
\end{align}
Later we will also use that, for any function $g(t)$, the following identity holds
\begin{align}
\nonumber \frac{1}{2}\Delta g^2(t) &= \frac{1}{2}\big(g^2(t+1)-g^2(t)\big)\\
&=\frac{1}{2}\big(g(t+1)+g(t)\big)\big(g(t+1)-g(t)\big)=g(t+\um)\Delta g(t),
\end{align}
where we used the previous convention on semi-integers time values.

The following lemma is a crucial ingredient to prove Lemma \ref{secondorder2}.
\begin{lemma}[Bounding the discrepancy on the examples, discrete time case]\label{improveqltn}
Let us assume a condition on the variance of the estimator of the gradient stronger than (\ref{Eg2b}) by a factor $\xi(m)$: for any choice of $t\leq T$ and $\xi_k\in\mathbb{R}$ and $\Theta_k\in\mathbb{R}^{|\Theta|}$,
\begin{align}
\nonumber \mathrm{Var}\Big[g_i^{(t)}(\Theta)\,\,\,&\big|\,\,\, g^{(k)}(\Theta_{k})=\xi_k\quad \forall\, k<t\Big]\\
&\leq c_0\eta_0^2\,\frac{\left(\lambda_{\min}^K\right)^4}{n^2}\frac{N_K^2(m)N^2(m)}{|\mathcal{M}|^2|\Theta|^3}\xi(m)\,\frac{\delta/4}{(t+1)^2}\,\mathcal{L}(\Theta),
\end{align}
where $c_0=\frac{1}{864\pi^2}$.
Then, for any $\delta>0$, there exists $C,C'$ and $m_0\in\mathbb{N}$ such that, for each $m\geq m_0$
\begin{align}
    \|F(t)-&F^{\mathrm{lin}}(t)\|_2
    \leq \frac{C}{(\lambda_{\min}^K)^2}n^2\sqrt n \log(2n)\frac{Lm|\mathcal{M}|^4|\mathcal{N}|^3}{N^3(m)}+ \Delta(m)
\end{align}
with probability at least $1-\delta$, where 
\[ \Delta(m)=  C'\eta_0\lambda_{\min}^K\sqrt{n\log(2n)}\sqrt{\xi(m)} \label{defdelta}\]
\end{lemma}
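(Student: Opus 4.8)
The plan is to replay the continuous-time argument of Lemma~\ref{improvedqlt} in discrete time, using the one-step recursion and the semi-integer notation set up in the preamble of this subsection. Writing $D(t)=F(t)-F^{\mathrm{lin}}(t)$, I would first combine the noisy discrete evolution~(\ref{connoise}) of the true model with the exactly linear evolution $\Delta F^{\mathrm{lin}}(t)=-\eta_0\hat K_{\Theta_0}\bigl(F^{\mathrm{lin}}(t)-Y\bigr)$ coming from~(\ref{ff2}) and, after adding and subtracting $\eta_0\hat K_{\Theta_0}(F(t)-Y)$, obtain the closed recursion
\begin{align}
\nonumber D(t+1)&=\bigl(\id-\eta_0\hat K_{\Theta_0}\bigr)D(t)\\
&\quad-\eta_0\bigl(\hat K_{\Theta_t}-\hat K_{\Theta_0}\bigr)\bigl(F(t)-Y\bigr)+r(t)+\delta(t),\qquad D(0)=0,
\end{align}
where $r(t)$ collects the second-order Taylor remainders $r(t,x^{(i)})$ of the discrete step and $\delta(t)$ collects the noise contributions $-\eta\,\nabla_\Theta f(\Theta_t,x^{(i)})^T\epsilon(t)$ with $\epsilon(t)=g^{(t)}(\Theta_t)-\nabla_\Theta\mathcal L(\Theta_t)$.

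Second, I would note that, for $m$ large and with high probability, $\|\id-\eta_0\hat K_{\Theta_0}\|_{\mathcal L}\le 1-\tfrac13\eta_0\lambda_{\min}^K<1$: this follows from the spectral bound $\|\id-\eta_0\bar K\|_{\mathcal L}\le 1-\eta_0\lambda_{\min}^K$ established inside the proof of Lemma~\ref{step} (which uses $\eta_0<\tfrac{2}{\lambda_{\min}^K+\lambda_{\max}^K}$) together with the concentration $\hat K_{\Theta_0}\xrightarrow{p}\bar K$ from Lemma~\ref{convprob} and Lemma~\ref{lambdamin}. Unrolling the recursion from $D(0)=0$ and bounding each power $\|\id-\eta_0\hat K_{\Theta_0}\|_{\mathcal L}^{\,t-1-s}$ by $1$ then gives the crude estimate
\begin{align}
\|D(t)\|_2\le \eta_0\sum_{s\ge 0}\|\hat K_{\Theta_s}-\hat K_{\Theta_0}\|_{\mathcal L}\,\|F(s)-Y\|_2+\sum_{s\ge 0}\|r(s)\|_2+\sum_{s\ge 0}\|\delta(s)\|_2 .
\end{align}

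Third, I would bound the three series, each summand decaying geometrically in $s$ by the exponential convergence~(\ref{noisy21}) of Theorem~\ref{unbgraddesc}. For the kernel series I would use the Lipschitz bound of Lemma~\ref{kernel}, insert the lazy-training displacement bound~(\ref{noisy22}), and multiply by $\sum_s\|F(s)-Y\|_2\le\tfrac{3R}{\eta_0\lambda_{\min}^K}$ with $R=\sqrt{n\log(2n)}\,R_0$ from Corollary~\ref{corollaryR}. For the remainder series, Theorem~\ref{powerful} gives $|r(s,x)|\le\tfrac{Lm|\mathcal M|^2|\mathcal N|}{N(m)}\|\Theta_{s+1}-\Theta_s\|_\infty^2$, while the per-step increment bound~(\ref{triangle}) from the proof of Theorem~\ref{unbgraddesc} makes $\sum_s\|\Theta_{s+1}-\Theta_s\|_\infty^2$ a convergent geometric series. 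For the noise series, $|\delta(s,x)|\le\eta\|\nabla_\Theta f(\Theta_s,x)\|_1\|\epsilon(s)\|_\infty\le\tfrac{2n\eta_0\Sigma_1}{N_K(m)N(m)}\|\epsilon(s)\|_\infty$ by~(\ref{lemma3}); the Chebyshev/union-bound argument around~(\ref{unionbounddelta}), now run with the strengthened variance hypothesis (extra factor $\xi(m)$) and failure probability $\delta/4$, combined with $\sqrt{\mathcal L(\Theta_s)}\le\tfrac{R_0}{\sqrt2}\sqrt{\log(2n)}\bigl(1-\tfrac13\eta_0\lambda_{\min}^K\bigr)^s$, yields $\sum_s\|\delta(s)\|_2\le C'\eta_0\lambda_{\min}^K\sqrt{n\log(2n)}\sqrt{\xi(m)}=\Delta(m)$, which is exactly the advertised noise term. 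Finally I would collapse the architecture-dependent factors of the first two series into the single monomial $\tfrac{Lm|\mathcal M|^4|\mathcal N|^3}{N^3(m)}$, using $\Sigma_1\le Lm|\mathcal M|$, the lower bound $N_K(m)=\Omega(1)$ of Lemma~\ref{boundnorm} to replace $1/N_K^2(m)$ by a constant times $|\mathcal N|^2$, and the upper bound $N(m)\le C\sqrt{m}\sqrt{|\mathcal M||\mathcal N|}$ of Lemma~\ref{Nmax} to absorb the pieces carrying a factor $N(m)$ in the numerator. The main obstacle is the bookkeeping: one must split $\delta$ among the events $\{\|F(0)-Y\|_2<R\}$, $\{\|\id-\eta_0\hat K_{\Theta_0}\|_{\mathcal L}<1\}$, the high-probability estimates imported from Theorem~\ref{unbgraddesc}, and the uniform-in-$t\le T$ noise control, so that all of them hold simultaneously with total failure probability at most $\delta$; the genuinely new analytic input over Lemma~\ref{improvedqlt} is only the control of the remainder series $\sum_s\|r(s)\|_2$ and of the $\xi(m)$-scaled noise series, both summable thanks to~(\ref{triangle}) and to the variance hypothesis.
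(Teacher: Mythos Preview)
Your approach is correct and actually takes a cleaner route than the paper. The paper mirrors the continuous-time Lemma~\ref{improvedqlt}: it computes $\tfrac12\Delta\|F(t)-F^{\mathrm{lin}}(t)\|_2^2$ via the semi-integer identity $\tfrac12\Delta g^2(t)=g(t+\tfrac12)\Delta g(t)$, splits the result into three pieces $A(t)+B(t)+C(t)$, and disposes of the self-interaction in $B(t)$ by showing that $-\eta_0\hat K_{\Theta_0}+\tfrac{\eta_0^2}{2}\hat K_{\Theta_0}^2\preceq 0$ (which is exactly the spectral condition $\mathrm{Spec}(\hat K_{\Theta_0})\subseteq[0,2/\eta_0]$). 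It then divides out the common factor $\tfrac12(\|D(t+1)\|_2+\|D(t)\|_2)$ to obtain a bound on $|\Delta\|D(t)\|_2|$ and sums. You instead write the closed linear recursion for $D(t)$ and unroll it, bounding the propagator powers $\|(\id-\eta_0\hat K_{\Theta_0})^{t-1-s}\|_{\mathcal L}$ by $1$. Both methods rest on the same spectral input---the paper's negative-semidefiniteness of $-\eta_0\hat K_{\Theta_0}+\tfrac{\eta_0^2}{2}\hat K_{\Theta_0}^2$ is equivalent to your $\|\id-\eta_0\hat K_{\Theta_0}\|_{\mathcal L}\le 1$---and both reduce to summing the \emph{same} three series (kernel drift, Taylor remainder, noise), after which the bookkeeping with Lemma~\ref{Nmax} and Lemma~\ref{boundnorm} is identical. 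Your route avoids the semi-integer machinery and the somewhat delicate manipulation of $B(t)$; the paper's route has the (minor) advantage of matching the continuous-time argument line by line. One small inaccuracy: the bound $\|\id-\eta_0\hat K_{\Theta_0}\|_{\mathcal L}\le 1-\tfrac13\eta_0\lambda_{\min}^K$ is not quite what Lemma~\ref{step}/Remark~\ref{remeta} give (the negative spectral edge may dominate), but since you only use $\le 1$ this is harmless.
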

\begin{proof}
We will use the compact notation mentioned above:
\[F(t)=f(\Theta_t,X),\quad F^{\mathrm{lin}}(t)=f^{\mathrm{lin}}(\Theta_t^{\mathrm{lin}},X)\]
\[F(t+\um)=\frac{f(\Theta_{t+1},X)+f(\Theta_{t},X)}{2},\quad F^{\mathrm{lin}}(t+\um)=\frac{f^{\mathrm{lin}}(\Theta^{\mathrm{lin}}_{t+1},X)+f(\Theta^{\mathrm{lin}}_{t},X)}{2}\]
We recall that $\hat K_{\Theta_t}:=\hat K_{\Theta_t}(X,X^T)$.
\begin{align}
\nonumber
\frac{1}{2}\Delta ||F(t)&-F^{\mathrm{lin}}(t)||_2^2\\
\nonumber
&=\frac{1}{2}\left(||F(t+1)-F^{\mathrm{lin}}(t+1)||_2^2-||F(t)-F^{\mathrm{lin}}(t)||_2^2\right)\\
\nonumber
&= \left(F(t+1)-F(t)-F^{\mathrm{lin}}(t+1)+F^{\mathrm{lin}}(t)\right)^T\left(F(t+\um)-F^{\mathrm{lin}}(t+\um)\right)\\
\nonumber
&= \left(\Delta F(t)-\Delta F^{\mathrm{lin}}(t)\right)^T\left(F(t+\um)-F^{\mathrm{lin}}(t+\um)\right)\\
\nonumber
&=-\eta_0\left(\hat K_{\Theta_t}(F(t)-Y)-\hat K_{\Theta_0}(F^{\mathrm{lin}}(t)-Y)\right)^T\left(F(t+\um)-F^{\mathrm{lin}}(t+\um)\right)\\
\nonumber
&\qquad + \big(r(t,X)+\delta(t,X)\big)^T\left(F(t+\um)-F^{\mathrm{lin}}(t+\um)\right)\\
\nonumber
&=-\eta_0\left((\hat K_{\Theta_t}-\hat K_{\Theta_0})(F(t)-Y)-\hat K_{\Theta_0}(F^{\mathrm{lin}}(t)-F(t))\right)^T\times\\
\nonumber& \qquad \times \left(F(t+\um)-F^{\mathrm{lin}}(t+\um)\right)\\
\nonumber
&\qquad + \big(r(t,X)+\delta(t,X)\big)^T\left(F(t+\um)-F^{\mathrm{lin}}(t+\um)\right)\\
\nonumber
&=-\eta_0(F(t)-Y)^T(\hat K_{\Theta_t}-\hat K_{\Theta_0})\left(F(t+\um)-F^{\mathrm{lin}}(t+\um)\right)\\
\nonumber
&\qquad + \eta_0(F^{\mathrm{lin}}(t)-F(t))^T\hat K_{\Theta_0}\left(F(t+\um)-F^{\mathrm{lin}}(t+\um)\right)\\
\nonumber
&\qquad + \big(r(t,X)+\delta(t,X)\big)^T\left(F(t+\um)-F^{\mathrm{lin}}(t+\um)\right)\\
&= A(t)+B(t)+C(t)
\end{align}
where
\begin{align}
    A(t)&:= -\eta_0(F(t)-Y)^T(\hat K_{\Theta_t}-\hat K_{\Theta_0})\left(F(t+\um)-F^{\mathrm{lin}}(t+\um)\right)\\
    B(t)&:= \eta_0(F^{\mathrm{lin}}(t)-F(t))^T\hat K_{\Theta_0}\left(F(t+\um)-F^{\mathrm{lin}}(t+\um)\right)\\
    C(t)&:=\big(r(t,X)+\delta(t,X)\big)^T\left(F(t+\um)-F^{\mathrm{lin}}(t+\um)\right)
\end{align}
Now, in the absence of statistical noise thanks to Theorem \ref{graddesc0} -- which we invoke with probability at least $1-\frac{\delta}{4}$ -- we have (\ref{desc1}). In the presence of statistical noise, we invoke Theorem \ref{unbgraddesc} with probability $1-\frac{\delta}{4}$, so that (\ref{VarXi}) ensures that (\ref{Eg2b}) is verified. Using the analogue of Corollary \ref{freezntk} in the discrete time setting (the proof would be identical) with probability at least $1-\frac{\delta}{4}$, we have the following bound, which holds with probability at least $1-\frac{\delta}{2}$:
\begin{align}
\nonumber
|A(t)|&\leq \eta_0\|F(t)-Y\|_2|\|\hat K_{\Theta_t}-\hat K_{\Theta_0}\|_\mathcal{L}\left\|F(t+\um)-F^{\mathrm{lin}}(t+\um)\right\|_2\\
\nonumber
&\leq \eta_0R_0\sqrt{n\log(2n)}\left(1-\frac{\eta_0\lambda_{\min}^K}{3}\right)^t\|\hat K_{\Theta_t}-\hat K_{\Theta_0}\|_F\times\\
\nonumber
&\qquad \times\frac{1}{2}\left\|F(t+1)-F^{\mathrm{lin}}(t+1)+F(t)-F^{\mathrm{lin}}(t)\right\|_2\\
\nonumber
&\leq \eta_0R_0\sqrt{n\log(2n)}\left(1-\frac{\eta_0\lambda_{\min}^K}{3}\right)^t\frac{R_2}{\lambda_{\min}^K}n^2\sqrt{\log(2n)}\frac{Lm|\mathcal{M}|^4|\mathcal{N}|}{N^2_K(m)N^3(m)}\times\\
\nonumber
&\qquad \times \frac{1}{2}\bigg(\big\|F(t+1)-F^{\mathrm{lin}}(t+1)\big\|+\big\|F(t)-F^{\mathrm{lin}}(t)\big\|_2\bigg)\\
\nonumber
&= \frac{R_0R_2\eta_0}{\lambda_{\min}^K}n^2\sqrt n \log(2n) \frac{Lm|\mathcal{M}|^4|\mathcal{N}|}{N^2_K(m)N^3(m)}\left(1-\frac{\eta_0\lambda_{\min}^K}{3}\right)^t\times\\
&\qquad \times \frac{1}{2}\bigg(\big\|F(t+1)-F^{\mathrm{lin}}(t+1)\big\|+\big\|F(t)-F^{\mathrm{lin}}(t)\big\|_2\bigg).
\end{align}
Using (\ref{ff1}) and (\ref{ff2}),
\begin{align}
\nonumber
B(t)&=\eta_0(F^{\mathrm{lin}}(t)-F(t))^T\hat K_{\Theta_0}\left(F(t+\um)-F^{\mathrm{lin}}(t+\um)\right)\\
\nonumber
&=\eta_0(F^{\mathrm{lin}}(t)-F(t))^T\hat K_{\Theta_0}\Big( F(t) -\frac{\eta_0}{2} \hat K_{\Theta_t}(F(t)-Y)\\
\nonumber
&\phantom{\eta_0(F^{\mathrm{lin}}(t)-F(t))^T\hat K_{\Theta_0}\Big(\qquad}-F^{\mathrm{lin}}(t) +\frac{\eta_0}{2} \hat K_{\Theta_0}(F^{\mathrm{lin}}(t)-Y)\Big)\\
\nonumber
&\qquad + \frac{\eta_0}{2}(F^{\mathrm{lin}}(t)-F(t))^T\hat K_{\Theta_0}(r(t,\modifica{X})+\delta(t,\modifica{X}))\\
\nonumber
&=-\eta_0(F^{\mathrm{lin}}(t)-F(t))^T\hat K_{\Theta_0}( F^{\mathrm{lin}}(t)-F(t)) \\
\nonumber
&\phantom{=} -\frac{\eta_0^2}{2}(F^{\mathrm{lin}}(t)-F(t))^T\hat K_{\Theta_0}\Big(\hat K_{\Theta_t}(F(t)-Y)- \hat K_{\Theta_0}(F^{\mathrm{lin}}(t)-Y)\Big)\\
\nonumber
&\phantom{=} + \frac{\eta_0}{2}(F^{\mathrm{lin}}(t)-F(t))^T\hat K_{\Theta_0}(r(t,X)+\delta(t,X))\\
\nonumber
&=-\eta_0(F^{\mathrm{lin}}(t)-F(t))^T\hat K_{\Theta_0}( F^{\mathrm{lin}}(t)-F(t)) \\
\nonumber
&\phantom{=} -\frac{\eta_0^2}{2}(F^{\mathrm{lin}}(t)-F(t))^T\hat K_{\Theta_0}\Big(\hat K_{\Theta_t}(F(t)-Y)-\hat K_{\Theta_0}(F^{\mathrm{lin}}(t)-Y)\Big)\\
\nonumber
&\phantom{=} + \frac{\eta_0}{2}(F^{\mathrm{lin}}(t)-F(t))^T\hat K_{\Theta_0}(r(t,X)+\delta(t,X))\\
\nonumber
&=-\eta_0(F^{\mathrm{lin}}(t)-F(t))^T\hat K_{\Theta_0}( F^{\mathrm{lin}}(t)-F(t)) \\
\nonumber
&\phantom{=} -\frac{\eta_0^2}{2}(F^{\mathrm{lin}}(t)-F(t))^T\hat K_{\Theta_0}\Big((\hat K_{\Theta_t}-\hat K_{\Theta_0})(F(t)-Y)-\hat K_{\Theta_0}(F^{\mathrm{lin}}(t)-F(t))\Big)\\
\nonumber
&\phantom{=} + \frac{\eta_0}{2}(F^{\mathrm{lin}}(t)-F(t))^T\hat K_{\Theta_0}(r(t,X)+\delta(t,X))\\
\nonumber
&=(F^{\mathrm{lin}}(t)-F(t))^T\left(-\eta_0\hat K_{\Theta_0}+\frac{\eta_0^2}{2}\hat K^2_{\Theta_0}\right) (F^{\mathrm{lin}}(t)-F(t))\\
\nonumber
&\phantom{=} -\frac{\eta_0^2}{2}(F^{\mathrm{lin}}(t)-F(t))^T\hat K_{\Theta_0}(\hat K_{\Theta_t}-\hat K_{\Theta_0})(F(t)-Y)\\
&\phantom{=} + \frac{\eta_0}{2}(F^{\mathrm{lin}}(t)-F(t))^T\hat K_{\Theta_0}(r(t,X)+\delta(t,X)).
\end{align}
We recall that, in the discrete time setting, we always require that
\[\eta_0<\frac{2}{\lambda_{\min}^K+\lambda_{\max}^K},\]
in particular
\[\eta_0<\frac{2}{\lambda_{\max}^K},\]
\modifica{therefore} it exists $\rho>1$ such that
\[\eta_0=\frac{1}{\rho}\frac{2}{\lambda_{\max}^K}.\]
The spectrum of $\hat K_{\Theta_0}$ converges to the spectrum of $\bar K$, so, if $m$ is large enough,
\[\text{Spec}(\hat K_{\Theta_0})\subseteq [0,\rho\lambda_{\max}]=[0,2/\eta_0]. \]
Now, we notice that
\[ \phi(\xi)=\frac{1}{2}\xi^2-\xi\leq 0 \quad \iff \quad \xi\in[0,2]\]
hence, for any $\xi$ of the form
\[\xi = \eta_0\lambda\qquad \lambda\in\text{Spec}(\hat K_{\Theta_0}),\]
we have $\phi(\xi)\leq 0$, so the matrix
\[ M=-\eta_0\hat K_{\Theta_0}+\frac{\eta_0^2}{2}\hat K^2_{\Theta_0}\]
is negative semidefinite. Whence,
\begin{align}
\nonumber
B(t)&\leq -\frac{\eta_0^2}{2}(F^{\mathrm{lin}}(t)-F(t))^T\hat K_{\Theta_0}(\hat K_{\Theta_t}-\hat K_{\Theta_0})(F(t)-Y)\\
&\qquad + \frac{\eta_0}{2}(F^{\mathrm{lin}}(t)-F(t))^T\hat K_{\Theta_0}(r(t,X)+\delta(t,X)).
\end{align}
We recall that we just required the maximal eigenvalue of $\hat K_{\Theta_0}$ to be smaller than $2/\eta_0$, so
\[\|\hat K_{\Theta_0}\|_\mathcal{L}\leq \frac{2}{\eta_0},\]
which will be useful soon. Indeed,
\begin{align}
\nonumber
|B(t)|&\leq \frac{\eta_0^2}{2}\|F^{\mathrm{lin}}(t)-F(t)\|_2\|\hat K_{\Theta_0}\|_\mathcal{L}\|\hat K_{\Theta_t}-\hat K_{\Theta_0}\|_\mathcal{L}\|F(t)-Y\|_2\\
\nonumber
&\qquad+ \frac{\eta_0}{2}\|F^{\mathrm{lin}}(t)-F(t)\|_2\|\hat K_{\Theta_0}\|_\mathcal{L}\|r(t,X)+\delta(t,X)\|_2\\
\nonumber
&\leq \|F^{\mathrm{lin}}(t)-F(t)\|_2\Bigg(\frac{\eta_0^2}{2} \frac{2}{\eta_0} \|\hat K_{\Theta_t}-\hat K_{\Theta_0}\|_FR_0\sqrt{n\log(2n)}\left(1-\frac{\eta_0\lambda_{\min}^K}{3}\right)^t\\
\nonumber
&\phantom{\leq \|F^{\mathrm{lin}}(t)-F(t)\|_2\Bigg(}\quad + \frac{\eta_0}{2}\frac{2}{\eta_0}\|r(t,X)+\delta(t,X)\|_2\Bigg)\\
\nonumber
&\leq \|F^{\mathrm{lin}}(t)-F(t)\|_2\Bigg(\frac{\eta_0^2}{2} \frac{2}{\eta_0} \frac{R_2}{\lambda_{\min}^K}n^2\sqrt{\log(2n)}\frac{Lm|\mathcal{M}|^4|\mathcal{N}|}{N_K^2(m)N^3(m)}R_0\sqrt{n\log(2n)}\times\\
\nonumber
&\phantom{\leq \|F^{\mathrm{lin}}(t)-F(t)\|_2\Bigg(}\quad \times \left(1-\frac{\eta_0\lambda_{\min}^K}{3}\right)^t + \frac{\eta_0}{2}\frac{2}{\eta_0}\|r(t,X)+\delta(t,X)\|_2\Bigg)\\
\nonumber
&\leq \|F^{\mathrm{lin}}(t)-F(t)\|_2\Bigg(\frac{R_0R_2\eta_0}{\lambda_{\min}^K}n^2\sqrt{n}\log(2n)\frac{Lm|\mathcal{M}|^4|\mathcal{N}|}{N_K^2(m)N^3(m)}\left(1-\frac{\eta_0\lambda_{\min}^K}{3}\right)^t\\
&\phantom{\leq \|F^{\mathrm{lin}}(t)-F(t)\|_2\Bigg(}\quad +\|r(t,X)\|_2+\|\delta(t,X)\|_2\Bigg)
\end{align}

Finally,
\begin{align}
\nonumber
|C(t)|\leq &\|r(t,X)+\delta(t,X)\|_2\left\|F(t+\um)-F^{\mathrm{lin}}(t+\um)\right\|_2\\
&\leq \left(\|r(t,X)\|_2+\|\delta(t,X)\|_2\right)\times \frac{1}{2}\bigg(\big\|F(t+1)-F^{\mathrm{lin}}(t+1)\big\|+\big\|F(t)-F^{\mathrm{lin}}(t)\big\|_2\bigg)
\end{align}

Now, as we discussed above, we use that
\[\frac{1}{2}\Delta g^2(t)=g(t+\um)\Delta g(t)\]
with
\[g(t)=\|F(t)-F^{\mathrm{lin}}(t)\|_2\]
so that
\begin{align}
\nonumber
\frac{1}{2}\big(\|F(t+1)-F^{\mathrm{lin}}(t+1)\|_2&+\|F(t)-F^{\mathrm{lin}}(t)\|_2\big)\Delta \|F(t)-F^{\mathrm{lin}}(t)\|_2\\
&=\frac{1}{2}\Delta \|F(t)-F^{\mathrm{lin}}(t)\|_2^2=A(t)+B(t)+C(t).
\end{align}
Hence,
\begin{align}
\big|\Delta \|F(t)-F^{\mathrm{lin}}(t)\|_2\big|\leq \frac{|A(t)|+|B(t)|+|C(t)|}{\frac{1}{2}\big(\|F(t+1)-F^{\mathrm{lin}}(t+1)\|_2+\|F(t)-F^{\mathrm{lin}}(t)\|_2\big)}.
\end{align}
Using our bounds and recalling that, for any $a,b>0$,
\[\frac{a}{a+b}\leq 1, \]
we find that
\begin{align}
\nonumber
\big|\Delta \|F(t)-F^{\mathrm{lin}}(t)\|_2\big|&\leq \Bigg(\frac{R_0R_2\eta_0}{\lambda_{\min}^K}n^2\sqrt n \log(2n) \frac{Lm|\mathcal{M}|^4|\mathcal{N}|}{N^2_K(m)N^3(m)}\left(1-\frac{\eta_0\lambda_{\min}^K}{3}\right)^t+\\
\nonumber
&\qquad\qquad+ \|r(t,X)\|_2+\|\delta(t,X)\|_2 \Bigg)\times\\
\nonumber
&\qquad \times \Bigg(1+\frac{\|F(t)-F^{\mathrm{lin}}(t)\|_2}{\frac{1}{2}\big(\|F(t+1)-F^{\mathrm{lin}}(t+1)\|_2+\|F(t)-F^{\mathrm{lin}}(t)\|_2\big)}\Bigg)\\
\nonumber
&\leq 3\Bigg(\frac{R_0R_2\eta_0}{\lambda_{\min}^K}n^2\sqrt n \log(2n) \frac{Lm|\mathcal{M}|^4|\mathcal{N}|}{N^2_K(m)N^3(m)}\left(1-\frac{\eta_0\lambda_{\min}^K}{3}\right)^t+\\
&\qquad\qquad+ \|r(t,X)\|_2+\|\delta(t,X)\|_2 \Bigg)\label{tosum}
\end{align}
Due to Theorem \ref{powerful},
\[|r(t,x)|\leq \frac{Lm|\mathcal{M}|^2|\mathcal{N}|}{N(m)}\|\Theta_{t+1}-\Theta_t\|_\infty^2\]
so, recalling (\ref{triangle})
\begin{align}\nonumber
\|\Theta_{t+1}-\Theta_{t}\|_\infty &\leq \eta_0 \frac{R_0}{\sqrt 2}\sqrt{\log(2n)}\,\times\\
&\phantom{\leq}\quad\times\left(\eta_0\frac{\pi\sqrt{c_0}}{\sqrt 3}\left(\lambda_{\min}^K\right)^2\frac{N(m)}{|\mathcal{M}||\Theta|}+2\sqrt 2\,n \frac{|\mathcal{M}|}{N_K(m)N(m)}\right)\left(1-\frac{\eta_0\lambda_{\min}^K}{3}\right)^t,
\end{align}
we can bound
\begin{align}\nonumber
\sum_{t=0}^\infty|r(t,x)| &\leq 2\eta_0^2 \frac{R_0^2}{ 2}\log(2n)\,\times\\
\nonumber
&\phantom{\leq}\quad\times\left(\eta_0^2\frac{\pi^2c_0}{3}\left(\lambda_{\min}^K\right)^2\frac{N(m)|\mathcal{N}|}{Lm}+8\,n^2 \frac{Lm|\mathcal{M}|^4|\mathcal{N}|}{N_K^2(m)N^3(m)}\right)\frac{3}{2\eta_0\lambda_{\min}^K},\\
&\leq \frac{3 R_0^2}{\lambda_{\min}^K}\eta_0\log(2n)\left(\eta_0^2\frac{\pi^2c_0}{3}\left(2\lambda_{\min}^K\right)^2\frac{N(m)|\mathcal{N}|}{Lm}+8\,n^2 \frac{Lm|\mathcal{M}|^4|\mathcal{N}|}{N_K^2(m)N^3(m)}\right),
\end{align}
where we used that
\begin{align}
    \sum_{t=0}^\infty\left(1-\frac{\eta_0\lambda_{\min}^K}{3}\right)^{2t}=\frac{1}{\frac{2\eta_0\lambda_{\min}^K}{3}-\left(\frac{\eta_0\lambda_{\min}^K}{3}\right)^2}\leq\frac{3}{2\eta_0\lambda_{\min}^K}.
\end{align}
Let us claim that the variance of the statistical noise is small enough
\[\sum_{t=0}^\infty 3\|\delta(t,X)\|_2\leq \Delta(m)\]
with high probability, where $\Delta(m)\to 0$ fast enough when $m\to\infty$. More precisely, we have previously invoked Theorem \ref{unbgraddesc}, so we can assume that (\ref{stimag}) holds, i.e.
\begin{align}
    \nonumber
    \|\eta(\nabla_\Theta\mathcal{L}(\Theta_t)-g^{(t)}(\Theta_t))\|_\infty&=:\|\Theta_{t+1}-\tilde\Theta_{t+1}\|_\infty\\
    &\leq \eta\pi\sqrt{\frac{|\Theta|}{3\delta}}(t+1)\sup_i\sqrt{\text{Var}\left[g^{(t)}_i(\Theta_t)\right]}.
\end{align}
Recalling that
\[g^{(t)}(\Theta_t)=\nabla \mathcal{L}(\Theta_t)+\epsilon(t) \qquad \text{and}\qquad \delta(t,x)=-\eta \nabla_\Theta f(\Theta_t,x)^T \epsilon(t)\]
and using (\ref{noisy21}) in (\ref{VarXi}) to write explicitly
\begin{align}
    \nonumber
    \|\epsilon(t)\|_\infty&=\|\nabla_\Theta\mathcal{L}(\Theta_t)-g^{(t)}(\Theta_t)\|_\infty\\
    \nonumber
    &\leq \pi\sqrt{\frac{|\Theta|}{3\delta}}(t+1)\sup_i\sqrt{\text{Var}\left[g^{(t)}_i(\Theta_t)\right]}\\
    \nonumber
    & \leq \pi\sqrt{\frac{|\Theta|}{3\delta}}(t+1)
    \Bigg(
    c_0\eta_0^2\,\frac{\left(\lambda_{\min}^K\right)^4}{n^2}\frac{N_K^2(m)N^2(m)}{|\mathcal{M}|^2|\Theta|^3}\times\\
    & \qquad\qquad\qquad\qquad\quad\times
    \frac{\delta/4}{(t+1)^2}\frac{R_0^2}{2}\log(2n)\left(1-\frac{1}{3}\eta_0\lambda_{\min}^K\right)^{\modifica{2}t}
    \Bigg)^{1/2}\sqrt{\xi(m)},
    \label{boundepsilon}
\end{align}
we finally know that, with the estimate (\ref{lemma3}),
\begin{align}
    \nonumber
    \|\delta(t,X)\|_2&\leq \eta\sqrt{n}\sup_x\|\nabla_\Theta f(\Theta_t,x)\|_1\|\epsilon(t)\|_\infty\\
    \nonumber
    & \leq\eta\sqrt{n}  \left(2L\frac{m}{N(m)}|\mathcal{M}|\right)\pi\sqrt{\frac{|\Theta|}{3\delta}}(t+1) \sqrt{\xi(m)}\times\\
    \nonumber
    & \quad
    \times\Bigg(
    c_0\eta_0^2\,\frac{\left(\lambda_{\min}^K\right)^4}{n^2}\frac{N_K^2(m)N^2(m)}{|\mathcal{M}|^2|\Theta|^3}
    \frac{\delta/4}{(t+1)^2}\frac{R_0^2}{2}\log(2n)\left(1-\frac{1}{3}\lambda_{\min}^K\right)^{\modifica{2}t}
    \Bigg)^{1/2}\\
    \nonumber
    & \leq  R_0\sqrt{c_0}\sqrt{\log(2n)}\frac{\left(\lambda_{\min}^K\right)^2}{\sqrt{n}}\left(N_K(m)\eta\eta_0\right)\pi\sqrt{\frac{1}{6}}\left(1-\frac{1}{3}\eta_0\lambda_{\min}^K\right)^{\modifica{t}}\sqrt{\xi(m)}\\
    & \leq   R_0\pi\sqrt{\frac{c_0}{6}}\sqrt{n\log(2n)}\left(\eta_0\lambda_{\min}^K\right)^2\left(1-\frac{1}{3}\eta_0\lambda_{\min}^K\right)^{\modifica{t}}\sqrt{\xi(m)}
\end{align}
where we recalled that $N_K(m)\eta=n\eta_0$. Then
\begin{align}
    \nonumber
    \sum_{t=0}^\infty 3\|\delta(t,X)\|_2
    &\leq \frac{3\pi R_0}{\modifica{\frac{1}{3}\eta_0\lambda_{\min}^K}}\sqrt{\frac{c_0}{6}}\sqrt{n\log(2n)}\left(\eta_0\lambda_{\min}^K\right)^2\sqrt{\xi(m)}\\
    &\leq \modifica{9}\pi R_0\sqrt{\frac{c_0}{6}}\eta_0\lambda_{\min}^K\sqrt{n\log(2n)}\sqrt{\xi(m)}=:\Delta(m)
    \label{simile}
\end{align}
Summing the bound (\ref{tosum}), we conclude that
\begin{align}
    \nonumber
    \|F(t)-&F^{\mathrm{lin}}(t)\|_2\\
    \nonumber
    &\leq \frac{9 R_0^2}{\lambda_{\min}^K}\eta_0\sqrt n \log(2n)\left(\eta_0^2\frac{\pi^2c_0}{3}\left(2\lambda_{\min}^K\right)^2\frac{N(m)|\mathcal{N}|}{Lm}+8\,n^2 \frac{Lm|\mathcal{M}|^4|\mathcal{N}|}{N_K^2(m)N^3(m)}\right)\\
    \nonumber
    &+ \frac{3R_0R_2}{(\lambda_{\min}^K)^2}n^2\sqrt n \log(2n) \frac{Lm|\mathcal{M}|^4|\mathcal{N}|}{N^2_K(m)N^3(m)} +\Delta(m)\\
    \nonumber
    &\leq C_0\sqrt n \log(2n)\left(\eta_0^3\lambda_{\min}^K\frac{N(m)|\mathcal{N}|}{Lm}+n^2\left(\frac{\eta_0}{\lambda_{\min}^K}+\frac{1}{(\lambda_{\min}^K)^2}\right) \frac{Lm|\mathcal{M}|^4|\mathcal{N}|}{N_K^2(m)N^3(m)}\right)\\
    &+ \Delta(m)
\end{align}
for some $C_0$. This can be simplified further by recalling Lemma \ref{Nmax}, which ensures that for some $c>0$ 
\begin{align}\label{magicN}
    m^2\geq \frac{N^4(m)}{c|\mathcal{M}|^2|\mathcal{N}|^2} \qquad \to \qquad \frac{N(m)|\mathcal{N}|}{Lm}\leq \frac{cm|\modifica{\mathcal{M}}|^2|\mathcal{N}|^3}{LN^{\modifica{3}}(m)},
\end{align}
and by noticing that, by assumption,
\begin{align}\label{magicEta}
    \eta_0<\frac{2}{\lambda_{\min}^K+\lambda_{\max}^K}\leq\frac{1}{\lambda_{\min}^K},
\end{align}
 whence
\begin{align}
    \nonumber
    \|F(t)-&F^{\mathrm{lin}}(t)\|_2\\
    \nonumber
    &\leq C_1\sqrt n \log(2n)\left(\eta_0^3\lambda_{\min}^K+n^2\left(\frac{\eta_0}{\lambda_{\min}^K}+\frac{1}{(\lambda_{\min}^K)^2}\right) \right)\frac{Lm|\mathcal{M}|^4|\mathcal{N}|^3}{N^3(m)}+ \Delta(m)\\
    \nonumber
    &\leq C_1\sqrt n \log(2n)\frac{1+2n^2}{(\lambda_{\min}^K)^2} \frac{Lm|\mathcal{M}|^4|\mathcal{N}|^3}{N^3(m)}+ \Delta(m)\\
    &\leq \frac{C}{(\lambda_{\min}^K)^2}n^2\sqrt n \log(2n)\frac{Lm|\mathcal{M}|^4|\mathcal{N}|^3}{N^3(m)}+ \Delta(m)
\end{align}
for some $C_1$ and $C$.
\end{proof}

Now we are ready to prove Lemma \ref{secondorder2}.
Many estimates are similar to the continuous time setting. Let us start from
\begin{align}
\nonumber
\|\Delta(\Theta_t&-\Theta_t^{\mathrm{lin}})\|_\infty\\
\nonumber
&\leq \frac{\eta}{n}\left\|\nabla_\Theta f(\Theta_t,X^T)(F(t)-Y) +n\epsilon(t)-\nabla_\Theta f(\Theta_0,X^T)(F^{\mathrm{lin}}(t)-Y)\right\|_\infty\\
\nonumber
&\leq \frac{\eta}{n}\sup_i\|\partial_{\theta_i} f(\Theta_t,X)-\partial_{\theta_i} f(\Theta_0,X)\|_2\|F(t)-Y\|_2\\
&\qquad+\frac{\eta}{n}\sup_i\|\partial_{\theta_i} f(\Theta_0,X)\|_2\|F^{\mathrm{lin}}(t)-F(t)\|_2+\eta\|\epsilon(t)\|_\infty.
\label{sommabound}
\end{align}
Similarly to the continuous time setting,
\begin{align}\nonumber
\frac{\eta}{n}\sup_i\|\partial_{\theta_i} f(\Theta_t,X)&-\partial_{\theta_i} f(\Theta_0,X)\|_2\|F(t)-Y\|_2\\
&\leq \frac{4\eta_0 R_0R_1}{\lambda_{\min}^K}n^2\log(2n)\frac{|\mathcal{M}|^3|\mathcal{N}|}{N_K^2(m)N^2(m)}\left(1-\frac{\eta_0\lambda_{\min}^K}{3}\right)^t=:A(t),
\end{align}
\begin{align}
\nonumber
\frac{\eta}{n}\sup_i\|\partial_{\theta_i} &f(\Theta_0,X)\|_2\|F^{\mathrm{lin}}(t)-F(t)\|_2 \\
&\leq 4R_0\eta_0 \sqrt{\log(2n)} \frac{|\mathcal{M}|}{N(m)}\left(1-\frac{\eta_0\lambda_{\min}^K}{3}\right)^t=:C(t),
\end{align}
while, using Lemma \ref{improveqltn},
\begin{align}
\nonumber
\frac{\eta}{n}\sup_i\|\partial_{\theta_i} &f(\Theta_0,X)\|_2\|F^{\mathrm{lin}}(t)-F(t)\|_2\\
\nonumber
&\leq \frac{\eta}{n}2\sqrt n \frac{|\mathcal{M}|}{N(m)}\|F^{\mathrm{lin}}(t)-F(t)\|_2\\
\nonumber
& \leq \frac{\eta}{n}2\sqrt n \frac{|\mathcal{M}|}{N(m)}
    \Bigg(\frac{C}{(\lambda_{\min}^K)^2}n^2\sqrt n \log(2n)\frac{Lm|\mathcal{M}|^4|\mathcal{N}|^3}{N^3(m)}+ \Delta(m)\Bigg)\\
&= \frac{2\eta_0 C}{(\lambda_{\min}^K)^2}n^3 \log(2n)\frac{Lm|\mathcal{M}|^5|\mathcal{N}|^3}{N_K(m)N^4(m)}+2\eta_0\sqrt n \frac{|\mathcal{M}|}{N_K(m)N(m)} \Delta(m)
:=B(t)
\end{align}
Therefore, (\ref{sommabound}) implies the following two bounds:
\[\|\Delta(\Theta_t-\Theta_t^{\mathrm{lin}})\|_\infty\leq A(t)+B(t)+\eta\|\epsilon(t)\|_\infty \]
and
\[\|\Delta(\Theta_t-\Theta_t^{\mathrm{lin}})\|_\infty\leq A(t)+C(t)+\eta\|\epsilon(t)\|_\infty.\]
Due to the hypothesis (\ref{VarXi}) the variance of $g^{(t)}(\Theta_t)$ is small enough for the bounds we are interested in: indeed, with probability at least $1-\frac{\delta}{4}$, by (\ref{boundepsilon}) we have
\begin{align}
    \nonumber
    \|\epsilon(t)\|_\infty
    & \leq \pi\sqrt{\frac{|\Theta|}{3\delta}}(t+1)
    \Bigg(
    c_0\eta_0^2\,\frac{\left(\lambda_{\min}^K\right)^4}{n^2}\frac{N_K^2(m)N^2(m)}{|\mathcal{M}|^2|\Theta|^3}\times\\
    \nonumber
    & \qquad\qquad\qquad\qquad\quad\times
    \frac{\delta/4}{(t+1)^2}\frac{R_0^2}{2}\log(2n)\left(1-\frac{1}{3}\eta_0\lambda_{\min}^K\right)^{\modifica{2}t}
    \Bigg)^{1/2}\sqrt{\xi(m)},\\
    &=\frac{R_0\pi}{2}\sqrt{\log(2n)}\frac{\eta_0}{n}(\lambda_{\min}^K)^2\sqrt{\frac{c_0}{6}}\frac{N_K(m)N(m)}{|\mathcal{M}|Lm}\left(1-\frac{1}{3}\eta_0\lambda_{\min}^K\right)^{\modifica{t}}\sqrt{\xi(m)}
\end{align}
whence, similarly to (\ref{simile})
\begin{align}
    \nonumber
    \sum_{t=0}^\infty \eta\|\epsilon(t)\|_\infty
    &\leq \frac{\pi R_0\modifica{/2}}{\modifica{\frac{1}{3}\eta_0\lambda_{\min}^K}}\sqrt{\frac{c_0}{6}}\sqrt{\log(2n)}\left(\eta_0\lambda_{\min}^K\right)^2\frac{N(m)}{|\mathcal{M}|Lm}\sqrt{\xi(m)}\\
    \nonumber
    &= \frac{1}{\modifica{6}\sqrt n} \left(\modifica{9}\pi R_0\sqrt{\frac{c_0}{6}}\sqrt{n\log(2n)}\eta_0\lambda_{\min}^K\sqrt{\xi(m)}\right)\frac{N(m)}{|\mathcal{M}|Lm}\\
    &=\frac{1}{\modifica{6}\sqrt n}\frac{N(m)}{|\mathcal{M}|Lm}\Delta(m).
\end{align}
As in the continuous time setting, we define
\[t^\ast = \left\lfloor\frac{3}{\eta_0 \lambda_{\min}^K}\log N(m)\right\rfloor.\]
It is useful to notice that
\[1+x\leq e^x \quad \forall\,x\in\mathbb{R}\qquad \to\qquad (1-x)^t\leq e^{-xt}\quad \forall\,x\leq 1\quad \forall\,t\geq 0.\]
Indeed,
\begin{align}
\nonumber
\sum_{t=t^\ast}^\infty \left(1-\frac{\eta_0\lambda_{\min}^K}{3}\right)^t&=\frac{3}{\eta_0\lambda_{\min}^K}-\sum_{t=0}^{t^\ast-1} \left(1-\frac{\eta_0\lambda_{\min}^K}{3}\right)^t\\
\nonumber
&=\frac{3}{\eta_0\lambda_{\min}^K}-\frac{1-(1-\eta_0\lambda_{\min}^K/3)^{t^\ast}}{\eta_0\lambda_{\min}^K/3}\\
\nonumber
&=\frac{3}{\eta_0\lambda_{\min}^K}\left(1-\frac{\eta_0\lambda_{\min}^K}{3}\right)^{t^\ast}\\
\label{potenzaexp}
&\leq \frac{3}{\eta_0\lambda_{\min}^K}e^{-\frac{1}{3}\eta_0\lambda_{\min}^Kt^\ast}\leq \frac{3}{\eta_0\lambda_{\min}^K}e^{-\log N(m)}=\frac{3}{\eta_0\lambda_{\min}^K}\frac{1}{N(m)}
\end{align}
So we can sum the bounds for (\ref{sommabound}), using (\ref{potenzaexp}) for the one concerning $C(t)$:
\begin{align}
\nonumber
\big\|\Theta_t-\Theta_t^{\mathrm{lin}}\big\|_\infty&\leq \sum_{t=0}^\infty A(t) + \sum_{t=0}^{t^\ast-1} B(t) + \sum_{t=t^\ast}^\infty C(t)+\sum_{t=0}^\infty \eta\|\epsilon(t)\|_\infty\\
\nonumber
&\leq \frac{12R_0R_1}{(\lambda_{\min}^K)^2}n^2\log(2n)\frac{|\mathcal{M}|^3|\mathcal{N}|}{N_K^2(m)N^2(m)}\\
\nonumber
&\qquad + \frac{6 C}{(\lambda_{\min}^K)^3}n^3 \log(2n)\frac{Lm|\mathcal{M}|^5|\mathcal{N}|^3}{N_K(m)N^4(m)}\log N(m)\\
&\qquad+\frac{6}{\lambda_{\min}^K}\sqrt n \frac{|\mathcal{M}|}{N_K(m)N(m)}\log N(m) \Delta(m)\\
\nonumber
&\qquad + \frac{12 R_0}{\lambda_{\min}^K} \sqrt{\log(2n)} \frac{|\mathcal{M}|}{N^2(m)}\\
&\qquad + \frac{3}{2\sqrt n}\frac{N(m)}{|\mathcal{M}|Lm}\Delta(m).
\end{align}
Now, using the definition (\ref{defdelta}) of $\Delta(m)$ and recalling that $\eta_0\lambda_{\min}^K< 1$
\[ \Delta(m)=  C'\eta_0\lambda_{\min}^K\sqrt{n\log(2n)}\sqrt{\xi(m)}<C'\sqrt{n\log(2n)}\sqrt{\xi(m)}\]
we conclude that, thanks to the estimate (\ref{magicN}),
\begin{align}
\nonumber
\big\|\Theta_t-\Theta_t^{\mathrm{lin}}\big\|_\infty&\leq \frac{12R_0R_1\sqrt c}{(\lambda_{\min}^K)^2}n^2\log(2n)\frac{|\mathcal{M}|^4|\mathcal{N}|^2}{N_K^2(m)N^4(m)}\\
\nonumber
&\qquad + \frac{6 C}{(\lambda_{\min}^K)^3}n^3 \log(2n)\frac{Lm|\mathcal{M}|^5|\mathcal{N}|^3}{N_K(m)N^4(m)}\log N(m)\\
\nonumber
&\qquad +\frac{6C'\sqrt c}{\lambda_{\min}^K}n \sqrt{\log(2n)}\frac{Lm|\mathcal{M}|^2|\mathcal{N}|}{N_K(m)N^3(m)} \log N(m)\sqrt{\xi(m)}\\
\nonumber
&\qquad + \frac{12 R_0\sqrt c}{\lambda_{\min}^K} \sqrt{\log(2n)} \frac{Lm|\mathcal{M}|^2|\mathcal{N}|}{N^4(m)}\\
&\qquad + \frac{3C'c}{2}\sqrt{\log(2n)}\frac{Lm|\mathcal{M}||\mathcal{N}|^2}{N^3(m)}\sqrt{\xi(m)}.
\end{align}
Since $\lim_{m\to\infty}N(m) =\infty$ under our assumptions, we can suppose to take $m$ large enough so that $\log N(m)\geq 1$. Therefore, we can finally simplify the above expression as follows:
\begin{align}
\nonumber
\big\|\Theta_t&-\Theta_t^{\mathrm{lin}}\big\|_\infty\\
&\leq \left(\frac{C_1}{(\lambda_{\min}^K)^3}+C_2\right)n^3\log(2n)\frac{Lm|\mathcal{M}|^5|\mathcal{N}|^3}{N^4(m)}\log N(m) \frac{1}{2}\left(1+N(m)\sqrt{\xi(m)}\right).
\end{align}

\subsection{Proof of Theorem \ref{confronto_finale}}\label{proofconfronto}
Let $r\in\{0,1\}$ be the binary variable indicating the presence ($r=1$) or the absence ($r=0$) of noise. We will use it to unify (\ref{convunif2}) with (\ref{noisy23}):
\begin{align}
\nonumber
|f(\Theta_t,x)-f^{\mathrm{lin}}(\Theta_t,x)|&\leq \frac{M_1}{(\lambda_{\min}^K)^2}n^2\log(2n)\frac{Lm|\mathcal{M}|^4|\mathcal{N}|}{N_K^2(m)N^3(m)}\\
&\qquad+ rM_2(\lambda_{\min}^K)^2\eta_0^2\log(2n)\frac{|\mathcal{N}|N(m)}{Lm}.
\end{align}
Therefore, considering also (\ref{lemma3}) and (\ref{magicN})
\begin{align}
\nonumber
|f(\Theta_t,x)&-f^{\mathrm{lin}}(\Theta_t^{\mathrm{lin}},x)|\\
\nonumber
&=|f(\Theta_t,x)-f(\Theta_0,x)-\nabla_\Theta f(\Theta_0,x)^T(\Theta_t^{\mathrm{lin}}-\Theta_0)|\\
\nonumber
&= |f(\Theta_t,x)-f(\Theta_0,x)-\nabla_\Theta f(\Theta_0,x)^T(\Theta_t-\Theta_0)-\nabla_\Theta f(\Theta_0,x)^T(\Theta_t^{\mathrm{lin}}-\Theta_t)|\\
\nonumber
&\leq |f(\Theta_t,x)-f^{\mathrm{lin}}(\Theta_t,x)|+|\nabla_\Theta f(\Theta_0,x)^T(\Theta_t^{\mathrm{lin}}-\Theta_t)|\\
\nonumber
&\leq |f(\Theta_t,x)-f^{\mathrm{lin}}(\Theta_t,x)|+\|\nabla_\Theta f(\Theta_0,x)\|_1\|\Theta_t^{\mathrm{lin}}-\Theta_t\|_\infty\\
\nonumber
&\leq \frac{M_1}{(\lambda_{\min}^K)^2}n^2\log(2n)\frac{Lm|\mathcal{M}|^4|\mathcal{N}|}{N_K^2(m)N^3(m)}+ rM_2c(\lambda_{\min}^K)^2\eta_0^2\log(2n)\frac{Lm|\mathcal{M}|^2|\mathcal{N}|^3}{N^3(m)}\\
&\qquad+2\frac{Lm|\mathcal{M}|}{N(m)}\|\Theta_t^{\mathrm{lin}}-\Theta_t\|_\infty.
\end{align}
Combining this result with Corollary (\ref{nuovavarianza}), we have that
\begin{align}
\nonumber
|f(\Theta_t,x)&-f^{\mathrm{lin}}(\Theta_t^{\mathrm{lin}},x)|\\
\nonumber
&\leq \frac{M_1}{(\lambda_{\min}^K)^2}n^2\log(2n)\frac{Lm|\mathcal{M}|^4|\mathcal{N}|}{N_K^2(m)N^3(m)}+ rM_2c(\lambda_{\min}^K)^2\eta_0^2\log(2n)\frac{Lm|\mathcal{M}|^2|\mathcal{N}|^3}{N^3(m)}\\
\nonumber
&\qquad+2\left(\frac{C_1}{(\lambda_{\min}^K)^3}+C_2\right)n^3\log(2n)\frac{L^2m^2|\mathcal{M}|^6|\mathcal{N}|^3}{N^5(m)}\log N(m)\\
\nonumber
&\leq \frac{M_1}{(\lambda_{\min}^K)^2}n^2\log(2n)\frac{L^2m^2|\mathcal{M}|^5|\mathcal{N}|^2}{N^5(m)}+ rM_2c\log(2n)\frac{L^2m^2|\mathcal{M}|^3|\mathcal{N}|^4}{N^3(m)}\\
\nonumber
&\qquad+2\left(\frac{C_1}{(\lambda_{\min}^K)^3}+C_2\right)n^3\log(2n)\frac{L^2m^2|\mathcal{M}|^6|\mathcal{N}|^3}{N^5(m)}\log N(m)\\
&\leq \left(\frac{C_3}{(\lambda_{\min}^K)^3}+C_4\right)n^3\log(2n)\frac{L^2m^2|\mathcal{M}|^6|\mathcal{N}|^4}{N^5(m)}\log N(m),
\end{align}
where we supposed, as above, that $m$ is large enough so that $\log N(m)\geq 1$ \modifica{and, in order to get the above bound simplified, we also used  (\ref{magicEta}) and Lemma \ref{Nmax} in order to get an inequality as in (\ref{magicN}).}

\subsection{Proof of Theorem \ref{qnngpn}}\label{proofqnngpn}
\begin{lemma}\label{llhp2} The hypothesis 
\[\lim_{m\to\infty}\frac{L^2m^2|\mathcal{M}|^6|\mathcal{N}|^4}{N^5(m)}\log N(m)=0,\]
of 
Theorem \ref{qnngpn} ensures that the hypotheses of Theorem \ref{init}, of Theorem \ref{ntkconv} and of Theorem \ref{unbgraddesc} are satisfied, i.e.
\[\lim_{m\to\infty}\frac{m|\mathcal{M}|^2|\mathcal{N}|^2}{N^3(m)}=0,\label{xcheckhp1}\] \[\lim_{m\to\infty}\frac{1}{N_K^2(m)}\,\frac{\Sigma_2|\mathcal{M}|^2|\mathcal{N}|^2}{N^4(m)}=0,\label{xcheckhp2}\]
\[\lim_{m\to\infty}\frac{Lm|\mathcal{M}|^4|\mathcal{N}|^2}{N^3(m)}=0.\label{xcheckhp3}\]
Furthermore,
the bounds (\ref{noisy22}) and (\ref{noisy23}) of Theorem \ref{unbgraddesc} are nontrivial as $m\to\infty$, i.e.
\[\lim_{m\to\infty}\frac{N(m)}{|\mathcal{M}||\Theta|}=0, \qquad\qquad \lim_{m\to\infty}\frac{|\mathcal{M}|}{N_K(m)N(m)}=0,\label{extra1}\]
\[\lim_{m\to\infty}\frac{|\mathcal{N}|N(m)}{Lm}=0, \qquad\qquad \lim_{m\to\infty}\frac{Lm|\mathcal{M}|^4|\mathcal{N}|}{N_K^2(m)N^3(m)}=0.\label{extra2}\]
\end{lemma}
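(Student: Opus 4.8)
The plan is to derive every one of the displayed limits from the single hypothesis $\lim_{m\to\infty}\frac{L^{2}m^{2}|\mathcal{M}|^{6}|\mathcal{N}|^{4}}{N^{5}(m)}\log N(m)=0$ by elementary monotonicity, using only the following facts: $L,m,|\mathcal{M}|,|\mathcal{N}|\geq 1$; the bound of Lemma \ref{Nmax}, which I will use in the equivalent forms $1\leq c\,\frac{m|\mathcal{M}||\mathcal{N}|}{N^{2}(m)}$, $N^{3}(m)\leq c\,(m|\mathcal{M}||\mathcal{N}|)^{3/2}$ and $N^{4}(m)\leq c\,m^{2}|\mathcal{M}|^{2}|\mathcal{N}|^{2}$ (here and below $c>0$ is a constant independent of $m$, allowed to change from line to line); the two-sided estimate $c_{1}\leq N_{K}(m)\leq c_{2}|\mathcal{N}|$ of Lemma \ref{boundnorm}; the trivial bound $\Sigma_{2}\leq Lm|\mathcal{M}|^{2}$; and the observation that the hypothesis forces $N(m)\to\infty$ (otherwise the ratio could not vanish, since $L^{2}m^{2}|\mathcal{M}|^{6}|\mathcal{N}|^{4}\geq m^{2}\to\infty$), so that $N(m)\geq 1$ and $\log N(m)\geq 1$ for all large $m$.

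First I would prove \eqref{xcheckhp3}: inserting $1\leq c\,m|\mathcal{M}||\mathcal{N}|/N^{2}(m)$ into the numerator,
\[
0\leq\frac{Lm|\mathcal{M}|^{4}|\mathcal{N}|^{2}}{N^{3}(m)}\leq c\,\frac{Lm^{2}|\mathcal{M}|^{5}|\mathcal{N}|^{3}}{N^{5}(m)}\leq c\,\frac{L^{2}m^{2}|\mathcal{M}|^{6}|\mathcal{N}|^{4}}{N^{5}(m)}\log N(m)\;\to\;0 .
\]
Then \eqref{xcheckhp1} is immediate from $\frac{m|\mathcal{M}|^{2}|\mathcal{N}|^{2}}{N^{3}(m)}\leq\frac{Lm|\mathcal{M}|^{4}|\mathcal{N}|^{2}}{N^{3}(m)}$, and \eqref{xcheckhp2} follows from $\Sigma_{2}\leq Lm|\mathcal{M}|^{2}$, $N_{K}(m)\geq c_{1}$ and $N(m)\geq 1$ via
\[
\frac{1}{N_{K}^{2}(m)}\,\frac{\Sigma_{2}|\mathcal{M}|^{2}|\mathcal{N}|^{2}}{N^{4}(m)}\leq\frac{1}{c_{1}^{2}}\,\frac{Lm|\mathcal{M}|^{4}|\mathcal{N}|^{2}}{N^{4}(m)}\leq\frac{1}{c_{1}^{2}}\,\frac{Lm|\mathcal{M}|^{4}|\mathcal{N}|^{2}}{N^{3}(m)}\;\to\;0 .
\]

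For \eqref{extra1}, the second limit is handled by $\frac{|\mathcal{M}|}{N_{K}(m)N(m)}\leq\frac{1}{c_{1}}\bigl(\frac{Lm|\mathcal{M}|^{4}|\mathcal{N}|^{2}}{N^{3}(m)}\bigr)^{1/3}\to 0$, and the first by $\frac{N(m)}{|\mathcal{M}||\Theta|}=\frac{N(m)}{|\mathcal{M}|Lm}\leq\frac{c\sqrt{|\mathcal{N}|}}{\sqrt m}$ combined with $\frac{\sqrt{|\mathcal{N}|}}{\sqrt m}\leq c\,\frac{L|\mathcal{M}|^{5/2}|\mathcal{N}|^{1/2}}{\sqrt m}\leq c\,\frac{Lm|\mathcal{M}|^{4}|\mathcal{N}|^{2}}{N^{3}(m)}\to 0$, where the last step uses $N^{3}(m)\leq c\,(m|\mathcal{M}||\mathcal{N}|)^{3/2}$. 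The second limit of \eqref{extra2} is again $\leq\frac{1}{c_{1}^{2}}\frac{Lm|\mathcal{M}|^{4}|\mathcal{N}|^{2}}{N^{3}(m)}\to 0$.

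The remaining, and only genuinely delicate, point is the first limit of \eqref{extra2}: the crude upper bound on $N(m)$ is useless here because $N(m)$ sits in the numerator. I would instead ``peel'' powers of $N$: writing $N(m)=N^{4}(m)/N^{3}(m)$ and using $N^{4}(m)\leq c\,m^{2}|\mathcal{M}|^{2}|\mathcal{N}|^{2}$ gives $\frac{|\mathcal{N}|N(m)}{Lm}\leq c\,\frac{m|\mathcal{M}|^{2}|\mathcal{N}|^{3}}{LN^{3}(m)}$, and then factoring out one more copy of $N^{2}(m)\leq c\,m|\mathcal{M}||\mathcal{N}|$,
\[
\frac{m|\mathcal{M}|^{2}|\mathcal{N}|^{3}}{LN^{3}(m)}=\frac{N^{2}(m)}{L^{3}m|\mathcal{M}|^{4}|\mathcal{N}|}\cdot\frac{L^{2}m^{2}|\mathcal{M}|^{6}|\mathcal{N}|^{4}}{N^{5}(m)}\leq c\,\frac{L^{2}m^{2}|\mathcal{M}|^{6}|\mathcal{N}|^{4}}{N^{5}(m)}\;\to\;0 ,
\]
so $\frac{|\mathcal{N}|N(m)}{Lm}\to 0$. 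This peeling is exactly where the extra factor $|\mathcal{N}|$ carried by the hypothesis of Theorem \ref{qnngpn} (relative to that of Theorem \ref{qnngp}) is consumed; everything else is bookkeeping, and I expect no obstacle beyond tracking the constants.
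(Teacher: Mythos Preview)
Your proof is correct and follows essentially the same approach as the paper: all limits are obtained by elementary monotonicity using Lemma~\ref{Nmax} in the form $1\leq c\,m|\mathcal{M}||\mathcal{N}|/N^{2}(m)$, the lower bound $N_{K}(m)\geq c_{1}$ from Lemma~\ref{boundnorm}, and the trivial estimates $L,|\mathcal{M}|,|\mathcal{N}|\geq 1$, $\Sigma_{2}\leq Lm|\mathcal{M}|^{2}$. The only organizational difference is that the paper handles the two ``$N(m)$ in the numerator'' limits in a single chain, bounding $\frac{N(m)}{|\mathcal{M}||\Theta|}\leq\frac{|\mathcal{N}|N(m)}{Lm}$ and then writing $N(m)=N^{6}(m)/N^{5}(m)$ with $N^{6}(m)\leq c^{3}m^{3}|\mathcal{M}|^{3}|\mathcal{N}|^{3}$, whereas you treat them separately and peel via $N^{4}/N^{3}$ followed by one more factor of $N^{2}$; the two computations are equivalent.
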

\begin{proof}
(\ref{xcheckhp1}) and (\ref{xcheckhp2}) are ensured by (\ref{xcheckhp3}) because of Lemma \ref{llhp}. So we just need to prove (\ref{xcheckhp3}). Using the inequality of Lemma \ref{Nmax} in the form
\[1\leq c\frac{m|\mathcal{M}||\mathcal{N}|}{N^2(m)}\label{eqNmax}\]
we have
\begin{align}
    \nonumber
    0\leq \lim_{m\to\infty}\frac{Lm|\mathcal{M}|^4|\mathcal{N}|^2}{N^3(m)}&\leq \lim_{m\to\infty}\frac{Lm^2|\mathcal{M}|^5|\mathcal{N}|^3}{N^5(m)}\\
    &\leq \lim_{m\to\infty}\frac{L^2m^2|\mathcal{M}|^6|\mathcal{N}|^4}{N^5(m)}\log N(m)=0
\end{align}
Now we notice that
\begin{align}
\nonumber 0&\leq \lim_{m\to\infty}\frac{N(m)}{|\mathcal{M}||\Theta|}\leq \lim_{m\to\infty}\frac{|\mathcal{N}|N(m)}{Lm}
=\lim_{m\to\infty}\frac{|\mathcal{N}|N^6(m)}{LmN^5(m)}\\
&\leq c^3\lim_{m\to\infty}\frac{|\mathcal{N}|\big(m^3|\mathcal{M}|^3|\mathcal{N}|^3\big)}{LmN^5(m)}
\leq c^3\lim_{m\to\infty}\frac{L^2m^2|\mathcal{M}|^6|\mathcal{N}|^4}{N^5(m)}\log N(m)=0,
\end{align}
where we used again \ref{eqNmax}. This proves the first requirements of (\ref{extra1}) and (\ref{extra2}). Finally, with a similar strategy
\begin{align}
\nonumber 0&\leq\lim_{m\to\infty}\frac{|\mathcal{M}|}{N_K(m)N(m)}\leq c^2 \lim_{m\to\infty}\frac{m^2|\mathcal{M}|^3|\mathcal{N}|^2}{N^5(m)}\\
& \leq c^2 \lim_{m\to\infty}\frac{L^2m^2|\mathcal{M}|^6|\mathcal{N}|^4}{N^5(m)}\log N(m)=0,
\end{align}
 and
\begin{align}
\nonumber
0\leq\lim_{m\to\infty}\frac{Lm|\mathcal{M}|^4|\mathcal{N}|}{N_K^2(m)N^3(m)}&\leq \lim_{m\to\infty}c\frac{Lm^2|\mathcal{M}|^5|\mathcal{N}|^4}{N^5(m)}\\
&\leq \lim_{m\to\infty}c\frac{L^2m^2|\mathcal{M}|^6|\mathcal{N}|^4}{N^5(m)}\log N(m)=0 .
\end{align}
This concludes the proof of Lemma \ref{llhp2}.
\end{proof}

As in \autoref{ch6}, we need to show that the linearized model converges to a Gaussian process in the limit of many qubits. It will be useful to recall that
\[\sum_{k=0}^{t-1}\left(\id-\eta_0\hat K_{\Theta_0}\right)^k=\frac{\id-\left(\id-\eta_0\hat K_{\Theta_0}\right)^t}{\id-\left(\id-\eta_0\hat K_{\Theta_0}\right)}=\frac{1}{\eta_0}\left(\id-\left(\id-\eta_0\hat K_{\Theta_0}\right)^t\right)\hat K_{\Theta_0}^{-1},\]
provided that $\hat K_{\Theta_0}$ is invertible.
\begin{lemma}\label{solutiondiscrete}
The solution of the gradient descent equation (\ref{eqgrd}) for the linearized model is
\[f^{\mathrm{lin}}(\Theta_t,x)=f(\Theta_0,x)-\hat K_{\Theta_0}(x,X^T)\left(\id-\left(\id-\eta_0\hat K_{\Theta_0}\right)^t\right)\hat K_{\Theta_0}^{-1}(f(\Theta_0,X)-Y),\]
provided that $\hat K_{\Theta_0}$ is invertible.
\end{lemma}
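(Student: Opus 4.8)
The plan is to solve the linear recursion that the discrete gradient descent equation (\ref{eqgrd}) induces on the linearized model, first evaluated at the dataset inputs and then at an arbitrary input $x\in\mathcal{X}$; the computation is the exact discrete-time analogue of the one carried out in \autoref{analyticsol} for gradient flow, with the matrix exponential $e^{-\eta_0\hat K_{\Theta_0}t}$ replaced by the $t$-th power $(\id-\eta_0\hat K_{\Theta_0})^t$.

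First I would specialize (\ref{eqgrd}) to the linearized model. Since $f^{\mathrm{lin}}(\Theta,x)=f(\Theta_0,x)+\nabla_\Theta f(\Theta_0,x)^T(\Theta-\Theta_0)$ has the constant gradient $\nabla_\Theta f(\Theta_0,x)$, the update $\Theta_{t+1}-\Theta_t=-\frac{\eta}{n}\nabla_\Theta f(\Theta_0,X^T)(F^{\mathrm{lin}}(t)-Y)$ gives, after contracting with $\nabla_\Theta f(\Theta_0,X^T)^T$, using Definition \ref{entk} and the rescaling $\eta=\frac{n}{N_K(m)}\eta_0$ of Assumption \ref{assETA},
\[F^{\mathrm{lin}}(t+1)-Y=\left(\id-\eta_0\hat K_{\Theta_0}\right)\left(F^{\mathrm{lin}}(t)-Y\right).\]
Iterating this recursion and using $F^{\mathrm{lin}}(0)=F(0)=f(\Theta_0,X)$ yields $F^{\mathrm{lin}}(t)-Y=\left(\id-\eta_0\hat K_{\Theta_0}\right)^t\!\left(F(0)-Y\right)$.

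Next I would track the output at a generic $x$. Contracting the update with $\nabla_\Theta f(\Theta_0,x)^T$ gives $f^{\mathrm{lin}}(\Theta_{t+1},x)-f^{\mathrm{lin}}(\Theta_t,x)=-\eta_0\hat K_{\Theta_0}(x,X^T)(F^{\mathrm{lin}}(t)-Y)$, so telescoping from $0$ to $t-1$ and inserting the closed form for $F^{\mathrm{lin}}(k)-Y$ obtained above gives
\[f^{\mathrm{lin}}(\Theta_t,x)=f(\Theta_0,x)-\eta_0\,\hat K_{\Theta_0}(x,X^T)\sum_{k=0}^{t-1}\left(\id-\eta_0\hat K_{\Theta_0}\right)^k(F(0)-Y).\]
Finally I would apply the geometric-sum identity recorded just before the statement, $\sum_{k=0}^{t-1}(\id-\eta_0\hat K_{\Theta_0})^k=\frac{1}{\eta_0}\left(\id-(\id-\eta_0\hat K_{\Theta_0})^t\right)\hat K_{\Theta_0}^{-1}$, whose validity requires exactly the invertibility of $\hat K_{\Theta_0}$ assumed in the hypothesis; substituting it cancels $\eta_0$ and produces the claimed formula with $F(0)=f(\Theta_0,X)$. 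There is no substantial obstacle: the only delicate point is the invertibility of $\hat K_{\Theta_0}$, which is assumed here and is guaranteed asymptotically with high probability by Lemma \ref{convprob} together with Lemma \ref{lambdamin}.
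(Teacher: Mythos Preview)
Your argument is correct. The route you take is the direct discrete analogue of the continuous-time derivation in \autoref{analyticsol}: you work in the $n$-dimensional function space, obtain the closed recursion $F^{\mathrm{lin}}(t+1)-Y=(\id-\eta_0\hat K_{\Theta_0})(F^{\mathrm{lin}}(t)-Y)$, and then telescope for a generic $x$. The paper's own proof of this lemma instead solves the recursion in the $Lm$-dimensional parameter space, writing $\bar\Theta_t=\sum_{k=0}^{t-1}(\id-\eta_0 M_{\Theta_0})^k b$ with $M_{\Theta_0}=\frac{1}{N_K(m)}\nabla_\Theta f(\Theta_0,X^T)\nabla_\Theta^T f(\Theta_0,X)$, and only afterwards passes to function space via the commutation identity $M_{\Theta_0}\nabla_\Theta f(\Theta_0,X^T)=\nabla_\Theta f(\Theta_0,X^T)\hat K_{\Theta_0}$. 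Your approach is slightly more economical since it avoids introducing $M_{\Theta_0}$ and the commutation step; the paper's version has the minor advantage of yielding an explicit formula for $\bar\Theta_t$ along the way. Both arrive at the same penultimate expression $f^{\mathrm{lin}}(\Theta_t,x)=f(\Theta_0,x)-\eta_0\hat K_{\Theta_0}(x,X^T)\sum_{k=0}^{t-1}(\id-\eta_0\hat K_{\Theta_0})^k(F(0)-Y)$ before invoking the geometric-sum identity, and neither needs invertibility of $\hat K_{\Theta_0}$ until that final step.
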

\begin{proof}
We start from (\ref{eqgrd}) for the linearized model:
\[\Theta_{t+1}-\Theta_{t}=-\frac{\eta}{n} \nabla_\Theta f^{\mathrm{lin}}(\Theta_t,X^T)(f^{\mathrm{lin}}(\Theta_t,X)-Y), \]
which can be rewritten as
\begin{align}
\nonumber\bar\Theta_{t+1}-\bar\Theta_{t}&=-\frac{\eta}{n} \nabla_\Theta f(\Theta_0,X^T)(f(\Theta_0,X)+\nabla^T_\Theta f(\Theta_0,X)(\bar\Theta_t-Y)\\
&=-\frac{\eta}{n} \nabla_\Theta f(\Theta_0,X^T)\nabla^T_\Theta f(\Theta_0,X)\,\bar\Theta_t+b,
\end{align}
where $\bar \Theta_{t}\equiv \Theta_{t}-\Theta_{0}$ and
\[b=-\frac{\eta}{n} \nabla_\Theta f(\Theta_0,X^T)(f(\Theta_0,X)-Y).\]
So, we have to solve the recurrence relation
\[
\begin{dcases}
\bar\Theta_{t+1}=\left(\id-\frac{\eta}{n} \nabla_\Theta f(\Theta_0,X^T)\nabla^T_\Theta f(\Theta_0,X)\right)\bar\Theta_t+b\\
\bar\Theta_0=0
\end{dcases}.
\]
Calling
\[M_{\Theta_0}=\frac{1}{N_K(m)}\nabla_\Theta f(\Theta_0,X^T)\nabla^T_\Theta f(\Theta_0,X)\]
and redefining $\eta=\frac{n}{N_K(m)}\eta_0$, we have
\[
\begin{dcases}
\bar\Theta_{t+1}=\left(\id-\eta_0M_{\Theta_0}\right)\bar\Theta_t+b\\
\bar\Theta_0=0
\end{dcases}.
\]
It is easy to see that
\begin{align}
\nonumber\bar\Theta_1&=b\\
\nonumber\bar\Theta_2&=b+\left(\id-\eta_0M_{\Theta_0}\right)b\\
\nonumber\bar\Theta_3&=b+\left(\id-\eta_0M_{\Theta_0}\right)b+\left(\id-\eta_0M_{\Theta_0}\right)^2b\\
\nonumber&\,\,\;\vdots\\
\bar\Theta_t&=\sum_{k=0}^{t-1}\left(\id-\eta_0M_{\Theta_0}\right)^kb,
\end{align}
whence
\begin{align}
\nonumber f^{\mathrm{lin}}(\Theta_t,x)&=f(\Theta_0,x)+\nabla_\Theta^Tf(\Theta_0,x)\bar\Theta_t\\
\nonumber&=f(\Theta_0,x)+\nabla_\Theta^Tf(\Theta_0,x)\sum_{k=0}^{t-1}\left(\id-\eta_0M_{\Theta_0}\right)^kb\\
&=f(\Theta_0,x)-\frac{\eta}{n}\nabla_\Theta^Tf(\Theta_0,x)\sum_{k=0}^{t-1}\left(\id-\eta_0M_{\Theta_0}\right)^k\nabla_\Theta f(\Theta_0,X^T)(f(\Theta_0,X)-Y).
\end{align}
By the definition of $M_{\Theta_0}$.
\begin{align}
\nonumber&M_{\Theta_0}\nabla_\Theta f(\Theta_0,X^T)\\
\nonumber&\qquad\qquad=\frac{1}{N_K(m)}\Big(\nabla_\Theta f(\Theta_0,X^T)\nabla^T_\Theta f(\Theta_0,X\Big))\nabla_\Theta f(\Theta_0,X^T)\\
\nonumber&\qquad\qquad=\nabla_\Theta f(\Theta_0,X^T)\frac{1}{N_K(m)}\Big(\nabla^T_\Theta f(\Theta_0,X)\nabla_\Theta f(\Theta_0,X^T)\Big)\\
&\phantom{\qquad\qquad=\frac{1}{N_K(m)}\nabla_\Theta f(\Theta_0,X^T)\nabla^T_\Theta f(\Theta_0,X)\nabla_\Theta f(\Theta}
=\nabla_\Theta f(\Theta_0,X^T)\hat K_{\Theta_0},
\end{align}
which yields
\begin{align}
\nonumber f^{\mathrm{lin}}(\Theta_t,x)&=f(\Theta_0,x)-\frac{\eta}{n}\nabla_\Theta^Tf(\Theta_0,x)\nabla_\Theta f(\Theta_0,X^T)\sum_{k=0}^{t-1}\left(\id-\eta_0\hat K_{\Theta_0}\right)^k(f(\Theta_0,X)-Y)\\
&=f(\Theta_0,x)-\eta_0\hat K_{\Theta_0}(x,X^T)\sum_{k=0}^{t-1}\left(\id-\eta_0\hat K_{\Theta_0}\right)^k(f(\Theta_0,X)-Y)\nonumber\\
&=f(\Theta_0,x)-\hat K_{\Theta_0}(x,X^T)\left(\id-\left(\id-\eta_0\hat K_{\Theta_0}\right)^t\right)\hat K_{\Theta_0}^{-1}(f(\Theta_0,X)-Y).
\label{questasoluzione}
\end{align}
\end{proof}
\begin{remark}\label{remeta}
Let 
\[\eta_0<\frac{2}{\lambda_{\min}^K+\lambda_{\max}^K},\]
i.e.
\[\eta_0=\frac{2}{\lambda_{\min}^K+\lambda_{\max}^K+2\epsilon}\qquad\text{for some}\qquad \epsilon>0.\]
Because of Assumption \ref{assNTK} and Lemma \ref{convprob}, $\hat K_{\Theta_0}(X,X^T)$ converges to $\bar K$, which is invertible, as $m\to\infty$. So, if $m$ is large enough, then
\[\text{Spec}(\hat K_{\Theta_0})\subseteq \left[\frac{\lambda_{\min}^K}{2},\lambda_{\max}^K+\epsilon\right],\]
whence
\[1-\eta_0\frac{\lambda_{\min}^K}{2}<1,\]
\[ 1-\eta_0(\lambda_{\max}^K+\epsilon)>1-\frac{2\lambda_{\max}+2\epsilon}{\lambda_{\max}^K+2\epsilon}=-1+\frac{2\epsilon}{\lambda_{\max}^K+2\epsilon}>-1. \]
Therefore
\[-1< 1-\eta_0\lambda< 1\qquad\forall\,\lambda\in\text{Spec}(\hat K_{\Theta_0})\qquad \to\qquad -\id \prec \id-\eta_0\hat K_{\Theta_0} \prec \id \label{questaqui}\]
so, in the limit $t\to\infty$, 
\[\left(\id-\eta_0\hat K_{\Theta_0}\right)^t\to 0\]
and (\ref{questasoluzione}) converges to
\[
\lim_{t\to\infty}f^{\mathrm{lin}}(\Theta_t,x)=f(\Theta_0,x)- \hat K_{\Theta_0}(x,X^T)\hat K_{\Theta_0}^{-1}(f(\Theta_0,X)-Y).
\]
\end{remark}

\begin{lemma}\label{convloc}
In the limit $m\to \infty$, $\{f^{\mathrm{lin}}(\Theta^{\mathrm{lin}}_t,x)\}_{x\in\mathcal{X}}$ converges in distribution to a Gaussian process $\{f_t^{(\infty)}(x)\}_{x\in\mathcal{X}}$ with mean and variance
\begin{align}
\mu_t(x)&=\bar K(x,X^T)K^{-1}\left(\id-\left(\id-\eta_0\bar K\right)^t\right)\bar Y,\\
\nonumber\mathcal{K}_t(x,x')&=\mathcal{K}_0(x,x'),\\
\nonumber&\phantom{=}- \bar K(x,X^T)\bar K^{-1}\left(\id-\left(\id-\eta_0\bar K\right)^t\right) \mathcal{K}_0(X,x')\\
\nonumber&\phantom{=}-\bar K(x',X^T)\bar K^{-1}\left(\id-\left(\id-\eta_0\bar K\right)^t\right) \mathcal{K}_0(X,x) \\
\nonumber&\phantom{=}+\bar K(x,X^T)\bar K^{-1}\left(\id-\left(\id-\eta_0\bar K\right)^t\right)\times\\
&\phantom{=========}\times\mathcal{K}_0(X,X^T)\left(\id-\left(\id-\eta_0\bar K\right)^t\right)\bar K^{-1} \bar K(X,x').
\end{align}
\end{lemma}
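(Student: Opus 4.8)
The plan is to mirror the proof of Corollary \ref{convsol} (and the related Corollary \ref{corgp}) from the continuous-time setting, adapting it to the discrete-time solution established in Lemma \ref{solutiondiscrete}. The starting ingredients are exactly the same two convergence facts: first, by Theorem \ref{init} together with Lemma \ref{checkhp} (or more precisely the discrete-time analogue Lemma \ref{llhp2}, which guarantees that the hypothesis of Theorem \ref{init} holds), we have $f(\Theta_0,\,\cdot\,)\xrightarrow{d}f^{(\infty)}(\,\cdot\,)$ as $m\to\infty$; second, by Lemma \ref{convprob}, $\hat K_{\Theta_0}(x,x')\xrightarrow{p}\bar K(x,x')$ for every pair $x,x'\in\mathcal{X}$, and $\bar K$ is invertible by Assumption \ref{assNTK}. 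Since $\mathcal{X}$ is finite (Assumption \ref{convex}), it suffices to prove convergence of the full multivariate distribution on $\mathcal{F}=\mathcal{X}$.

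First I would fix a finite family $\mathcal{F}=\{x_\alpha\}_{\alpha\in A}$ of inputs containing the dataset inputs $\{x^{(i)}\}_{1\le i\le n}$, and observe that by Lemma \ref{solutiondiscrete} the solution at any $x_\beta\in\mathcal{F}$,
\[
f^{\mathrm{lin}}(\Theta_t,x_\beta)=f(\Theta_0,x_\beta)-\hat K_{\Theta_0}(x_\beta,X^T)\left(\id-\left(\id-\eta_0\hat K_{\Theta_0}\right)^t\right)\hat K_{\Theta_0}^{-1}\bigl(f(\Theta_0,X)-Y\bigr),
\]
is, for each fixed $t$, an affine function of the finite collection of random variables $\{f(\Theta_0,x_\alpha)\}_{\alpha\in A}$ whose coefficients are continuous functions of the entries of the finite matrix $\{\hat K_{\Theta_0}(x_\alpha,x_{\alpha'})\}_{\alpha,\alpha'\in A}$ — here one uses that $\hat K_{\Theta_0}(X,X^T)$ is asymptotically invertible (Lemma \ref{lambdamin} applied to the convergence of Lemma \ref{convprob}), so that matrix inverse and the matrix polynomial $\id-(\id-\eta_0\hat K_{\Theta_0})^t$ are continuous at $\bar K$. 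Writing this affine form as $f^{\mathrm{lin}}(\Theta_t,x_\beta)=\sum_{\alpha}M^{(t)}_{\beta\alpha}[\hat K_{\Theta_0}]\,f(\Theta_0,x_\alpha)+\bigl(R^{(t)}[\hat K_{\Theta_0}]\bigr)^TY$ exactly as in the proof of Corollary \ref{convsol}, continuity gives $M^{(t)}_{\beta\alpha}[\hat K_{\Theta_0}]\xrightarrow{p}M^{(t)}_{\beta\alpha}[\bar K]$ and $R^{(t)}[\hat K_{\Theta_0}]\xrightarrow{p}R^{(t)}[\bar K]$. Applying Slutsky's theorem \ref{sl} then yields
\[
f^{\mathrm{lin}}(\Theta_t,\,\cdot\,)\big|_{\mathcal{F}}\xrightarrow{d}f^{(\infty)}(\,\cdot\,)\big|_{\mathcal{F}}-\bar K(\,\cdot\,,X^T)\big|_{\mathcal{F}}\left(\id-\left(\id-\eta_0\bar K\right)^t\right)\bar K^{-1}\bigl(f^{(\infty)}(X)-Y\bigr),
\]
and choosing $\mathcal{F}=\mathcal{X}$ gives the limiting process on all of $\mathcal{X}$.

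It then remains to identify this limit as a Gaussian process and to compute its mean and covariance. Since the limit is an affine image of the Gaussian process $f^{(\infty)}$ (with $F^{(\infty)}=f^{(\infty)}(X)$ and $\mathcal{K}_0$ its covariance), it is itself Gaussian; I would compute $\mu_t(x)=\mathbb{E}[f_t^{(\infty)}(x)]$ using $\mathbb{E}[f^{(\infty)}(x)]=0$, which collapses the expression to $\bar K(x,X^T)\bar K^{-1}(\id-(\id-\eta_0\bar K)^t)\,\bar Y$ (noting $\bar Y=Y$ on the dataset; the bar is just the notation of Theorem \ref{qnngpn}), and then expand $\mathcal{K}_t(x,x')=\mathbb{E}[(f_t^{(\infty)}(x)-\mu_t(x))(f_t^{(\infty)}(x')-\mu_t(x'))]$ bilinearly into the four terms displayed in the statement, exactly paralleling the computation in the proof of Corollary \ref{corgp} with $\id-e^{-\eta_0\bar K t}$ replaced throughout by $\id-(\id-\eta_0\bar K)^t$. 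I do not expect a genuine obstacle here; the only point requiring care is the asymptotic invertibility of $\hat K_{\Theta_0}(X,X^T)$ needed to make the coefficient maps continuous, which is handled precisely as in the continuous-time case by combining Theorem \ref{ntkconv}, Lemma \ref{convprob} and Lemma \ref{lambdamin}.
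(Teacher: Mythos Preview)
Your proposal is correct and follows essentially the same approach as the paper: the paper's proof also fixes a finite family $\mathcal{F}$ containing the training inputs, writes the discrete-time solution from Lemma \ref{solutiondiscrete} in the affine form $\sum_{\alpha}\tilde M^{(t)}_{\beta\alpha}[\hat K_{\Theta_0}]\,f(\Theta_0,x_\alpha)+(\tilde R^{(t)}[\hat K_{\Theta_0}])^TY$, uses continuity to get convergence in probability of the coefficients, applies Slutsky's theorem \ref{sl}, and then computes the mean and covariance exactly as in Corollary \ref{corgp} with $\id-(\id-\eta_0\bar K)^t$ in place of $\id-e^{-\eta_0\bar Kt}$. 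Your remark about needing asymptotic invertibility of $\hat K_{\Theta_0}(X,X^T)$ for the continuity argument is the only point the paper leaves implicit, but it is handled just as you indicate.
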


\begin{proof}
By Lemma \ref{convprob},
\[\hat K_\Theta(x,x')\xrightarrow{p}\bar K(x,x') \quad \text{as}\quad m\to\infty.\]
Furthermore, by Theorem \ref{init}
\[f(\Theta_0,\,\cdot\,)\xrightarrow{d}f^{(\infty)}(\,\cdot\,) \quad \text{as}\quad m\to\infty.\]

Let $\mathcal{F}=\{x_\alpha\}_{\alpha\in A}$ be a finite family of inputs $x_\alpha\in\mathcal{X}$ containing the inputs of the dataset
\[ \{x^{(i)}\}_{1\leq i\leq n}\subseteq \mathcal{F}\]
Since the solution (\ref{questasoluzione}) for the output corresponding to any input $x_\beta\in\mathcal{F}_A$ is the following linear combination of the outputs $\{f(\Theta_0,x_\alpha)\}_{\alpha\in A}$ and of $Y$,
\[ f^{\mathrm{lin}}(\Theta^{\mathrm{lin}}_t,x_\beta)=f(\Theta_0,x_\beta)-\eta_0\hat K_{\Theta_0}(x,X^T)\left(\id-\left(\id-\eta_0\hat K_{\Theta_0}\right)^t\right)\hat K_{\Theta_0}^{-1}(f(\Theta_0,X)-Y), \]
we can write
\[ f^{\mathrm{lin}}(\Theta_t^{\mathrm{lin}},x_\beta)=\sum_{\alpha\in A} \tilde M^{(t)}_{\beta\alpha}[\hat K_{\Theta_0}] f(\Theta_0,x_\alpha)+\left(\tilde R^{(t)}[\hat K_{\Theta_0}]\right)^TY, \label{form}\]
where the entries $\tilde M^{(t)}_{\beta\alpha}[\hat K_{\Theta_0}]$ and the components of $\tilde R^{(t)}[\hat K_{\Theta_0}]$ are continous functions of the elements of matrix of the empirical NTK
\[\{\hat K_{\Theta_0}(x_\alpha,x_{\alpha'})\}_{\alpha,\alpha'\in A}.\]
By continuity, the (finite) matrix $\tilde M^{(t)}_{\beta\alpha}[\hat K_{\Theta_0}]$ and the (finite) vector $\tilde R^{(t)}[\hat K_{\Theta_0}]$ converge in probability to $\tilde M^{(t)}_{\beta\alpha}[\bar K]$ and $\tilde R^{(t)}[\bar K]$:
\[\tilde M^{(t)}_{\beta\alpha}[\hat K_{\Theta_0}]\xrightarrow{p}\tilde M^{(t)}_{\beta\alpha}[\bar K],\qquad\qquad \tilde R^{(t)}[\hat K_{\Theta_0}]\xrightarrow{p}\tilde R^{(t)}[\bar K].\]
By Slutsky's theorem \ref{sl}, we conclude that
\[ \{f^{\mathrm{lin}}(\Theta^{\mathrm{lin}}_t,x_\beta)\}_{x_\beta\in\mathcal{F}}\xrightarrow{d} \left\{\sum_{\alpha\in A}  \tilde M^{(t)}_{\beta\alpha}[\bar K] f^{(\infty)}(x_\alpha)+\left( R^{(t)}[\bar K]\right)^TY\right\}_{x_\beta\in\mathcal{F}} \quad \text{as}\quad m\to\infty,\]
i.e.
\[ f^{\mathrm{lin}}(\Theta^{\mathrm{lin}}_t,\,\cdot\,)\big|_{\mathcal{F}}\xrightarrow{d}f^{(\infty)}(\,\cdot\,)\big|_{\mathcal{F}}-\bar K(\,\cdot\,,X^T)\big|_{\mathcal{F}}\left(\id-\left(\id-\eta_0\bar K\right)^t\right)\bar K^{-1}(f^{(\infty)}(X)-Y).\label{quisopra} \]
In the limit $m\to \infty$, the solution (\ref{quisopra}) is a linear combination of the Gaussian processes  $\{f^{\mathrm{lin}}(\Theta_0,x)\}_{x\in\mathcal{F}}=\{f(\Theta_0,x)\}_{x\in\mathcal{F}}$ and $F(0)$, so it is a Gaussian process $\{f_t^{(\infty)}(x)\}_{x\in\mathcal{F}}$ as well, with
\begin{align}
\nonumber\mu_t(x)&=\mathbb{E}\left[f^{(\infty)}_t(x)\right]\\
\nonumber&=\mathbb{E}\left[f^{(\infty)}(x)\right]-\bar K(x,X^T)\left(\id-\left(\id-\eta_0\bar K\right)^t\right)\bar K^{-1}(\mathbb{E}\left[F^{(\infty)}\right]-Y)\\
&=\bar K(x,X^T)\left(\id-\left(\id-\eta_0\bar K\right)^t\right)\bar K^{-1}Y,\\
\nonumber\mathcal{K}_t(x,x')&=\mathbb{E}\left[\left(f^{(\infty)}_t(x)-\mu_t(x)\right)\left(f^{(\infty)}_t(x')-\mu_t(x')\right)\right]\\
\nonumber&=\mathbb{E}\Bigg[\left(f^{(\infty)}(x)-\bar K(x,X^T)\left(\id-\left(\id-\eta_0\bar K\right)^t\right)\bar K^{-1}F^{(\infty)}\right)\times\\
\nonumber&\phantom{=\mathbb{E}\Big[}\times\left(f^{(\infty)}(x')-\bar K(x',X^T)\left(\id-\left(\id-\eta_0\bar K\right)^t\right)\bar K^{-1}F^{(\infty)} \right)\Bigg]\\
\nonumber&=\mathcal{K}_0(x,x')\\
\nonumber&\phantom{=}- \bar K(x,X^T)\bar K^{-1}\left(\id-\left(\id-\eta_0\bar K\right)^t\right) \mathcal{K}_0(X,x')\\
\nonumber&\phantom{=}-\bar K(x',X^T)\bar K^{-1}\left(\id-\left(\id-\eta_0\bar K\right)^t\right) \mathcal{K}_0(X,x) \\
\nonumber&\phantom{=}+\bar K(x,X^T)\bar K^{-1}\left(\id-\left(\id-\eta_0\bar K\right)^t\right)\bar \times\\
&\phantom{=========}\times\mathcal{K}_0(X,X^T)\left(\id-\left(\id-\eta_0\bar K\right)^t\right)\bar K^{-1} \bar K(X,x').
\end{align}

Since we assumed $\mathcal{X}$ to be finite, this is enough to prove the convergence of the distribution to the entire Gaussian process: it is sufficient to choose $\mathcal{F}=\mathcal{X}$.
\end{proof}

\begin{remark}
    In \autoref{infinite} we will generalize the convergence of $\{f(\Theta_t,x)\}_{x\in\mathcal{X}}$ to a Gaussian process for the case of $\mathcal{X}$ being infinite. This will not require to prove that also $\{f^{\mathrm{lin}}(\Theta_t,x)\}_{x\in\mathcal{X}}$ converges to a Gaussian process: we will only need the convergence of $\{f(\Theta_t,x)\}_{x\in\mathcal{F}}$ for any $\mathcal{F}$ finite set of inputs, which, as we will see, is a corollary of the convergence of the linearized model for a finite number of inputs. Therefore, the proof given above will be enough for our purposes.
\end{remark}

Now we are ready to prove Theorem \ref{qnngpn}.\\
Let $\bar X\in\mathbb{R}^N$ be any (finite dimensional) vector on $N$ distinct inputs $\{\bar x_1,\dots, \bar x_n\}\in \mathcal{X}$ and let $\Delta_t(\bar X)$ be the random vector
\[\Delta_t(\bar X)= f(\Theta_t,\bar X)-f^{\mathrm{lin}}(\Theta_t,\bar X).\]
We show that $\Delta_t(\bar X)\xrightarrow{p}0$ as $m\to\infty$. Let $\epsilon,\delta>0$.
By Theorem \ref{confronto_finale}, there exists $m_0$ such that, for any $m\geq m_0$,
\begin{align}
\nonumber
\mathbb{P}\Bigg(&\sup_{t'}\sup_{x\in\mathcal{X}}|f(\Theta_{t'},x)-f^{\mathrm{lin}}(\Theta^{\mathrm{lin}}_{t'},x)|\\
&\qquad\qquad\leq \left(\frac{C_3}{(\lambda_{\min}^k)^3}+C_4\right) n^3\log(2n) \frac{L^2m^2|\mathcal{M}|^6|\mathcal{N}|^4}{N^5(m)}\log N(m)\Bigg)\geq 1-\delta.
\end{align}
whence
\begin{align}
\nonumber
\mathbb{P}\Bigg(&\|f(\Theta_t,\bar X)-f^{\mathrm{lin}}(\Theta^{\mathrm{lin}}_t,\bar X)\|_2\\
&\qquad\leq \sqrt N\left(\frac{C_3}{(\lambda_{\min}^k)^3}+C_4\right) n^3\log(2n) \frac{L^2m^2|\mathcal{M}|^6|\mathcal{N}|^4}{N^5(m)}\log N(m)\Bigg)\geq 1-\delta.
\end{align}
There is also $m_1\in\mathbb{N}$ such that 
\[ \sqrt{N}\left(\frac{C_3}{(\lambda_{\min}^k)^3}+C_4\right) n^3\log(2n) \frac{L^2m^2|\mathcal{M}|^6|\mathcal{N}|^4}{N^5(m)}\log N(m)<\epsilon\qquad \forall\,m\geq m_1. \label{finito2}\]
So, for any $m\geq\max \{m_0,m_1\}$,
\[
\mathbb{P}\left(\|\Delta_t(\bar X)\|_2 <\epsilon \right)\geq 1-\delta,
\]
so \[\Delta_t(\bar X)\xrightarrow{p}0\quad\text{as}\quad m\to\infty. \]
By Corollary \ref{corgp}, $\{f^{\mathrm{lin}}(\Theta^{\mathrm{lin}}_t,x)\}_{x\in\mathcal{X}}$ converges in distribution to a Gaussian process with mean $\mu_t$ and covariance $\mathcal{K}_t$ defined in the statement of the corollary:
\[ \{f^{\mathrm{lin}}(\Theta_t,x)\}_{x\in\mathcal{X}}\xrightarrow{d} \{f^{(\infty)}_t(x)\}_{x\in\mathcal{X}}.\] Now, applying Slutsky's theorem \ref{sl} we have
\[ f(\Theta_t,\bar X)=f^{\mathrm{lin}}(\Theta_t,\bar X)+\Delta_t(\bar X)\xrightarrow{d}f^{(\infty)}_t(x)\quad \text{as}\quad m\to\infty.\]
Since $\mathcal{X}$ is finite, we can choose $\bar X=\text{vec}(\mathcal{X})$ so that we have the convergence of the full multivariate Gaussian distribution, which is the Gaussian process itself. 
And this concludes the proof of Theorem \ref{qnngpn}.
\begin{remark}
    This argument is not valid in the case of $\mathcal{X}$ infinite because we cannot consider $\bar X=\text{vec}(\mathcal{X})$. See \autoref{infinite} for the proof in the most general case.
\end{remark}

\section{The case of an infinite input space}\label{infinite}
So far we considered the case of a finite input space $\mathcal{X}$ in order to provide simpler proofs for the reader interested in the physical case of having a finite number of digits to codify an input belonging to a compact space. However, for a rigorous and complete mathematical treatment of the topic, we will extend all our results to the more general case of an infinite input space. In particular, in order to prove the convergence when $\mathcal{X}$ is infinite, we need the following steps:
\begin{enumerate}
    \item the functions generated by quantum neural networks are equicontinuous and equibounded with high probability (Lemma \ref{equicont} and Lemma \ref{equibound});
    \item therefore, with high probability, they belong to a compact set by Ascoli-Arzelà's theorem;
    \item as a consequence, Prokhorov's theorem can be invoked: the family of the distributions of the functions generated by quantum neural networks is compact;
    \item we will use this fact to prove that the limit of that stochastic process is the Gaussian process defined by the linearized model (Theorem \ref{final}).
\end{enumerate}
Let us see all the steps in detail. All the statements will be proved for the case of discrete time with noise for a more concise presentation. Most of the proofs in the continuous time setting would be identical or, when slightly different, simpler; the reader can easily reconstruct the complete proofs for that case.

In order to prove the equicontinuity of the model function (Lemma \ref{equicont}) we need the following concentration inequality:

\begin{lemma}[Corollary of the Efron–Stein inequality {\cite{Boucheron2013}}]\label{efron}
    Let $X_1, \dots, X_n$ be independent random variables taking values in $\mathbb{X}$ and let $Z = f(X)$ be a square-integrable function of $X = (X_1, \dots , X_n)$ with the bounded differences property: for some constants $c_1,\dots, c_n$ it holds that
    \begin{align}\label{boundeddiff}
        \sup_{x_1,\dots,x_n,x_i'\in\mathbb{X}}\left|f(x_1,\dots,x_n)-f(x_1,\dots,x_{i-1},x_i',x_{i+1},\dots,x_n)\right|\leq c_i \quad 1\leq i\leq n.
    \end{align}
    Then,
    \[\mathrm{Var}(Z)\leq \frac{1}{4}\sum_{i=1}^nc_i^2\,.\]
\end{lemma}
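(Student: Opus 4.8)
The statement to prove is Lemma~\ref{efron}, the Efron--Stein inequality (in its bounded-differences corollary form). Since this is a classical result quoted from \cite{Boucheron2013}, the plan is to reconstruct the standard short proof rather than invent anything new.

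\textbf{Overall approach.} The plan is to prove the Efron--Stein inequality in its general form first, namely that $\mathrm{Var}(Z)\le \sum_{i=1}^n \mathbb{E}\big[(Z-\mathbb{E}_i Z)^2\big]$ where $\mathbb{E}_i$ denotes expectation over $X_i$ only (equivalently, conditional expectation given all variables except $X_i$), and then specialize to the bounded-differences situation. First I would introduce the Doob martingale decomposition of $Z$: set $Z_i = \mathbb{E}[Z \mid X_1,\dots,X_i]$ for $i=0,\dots,n$, so that $Z_0 = \mathbb{E}[Z]$ and $Z_n = Z$, and write $Z - \mathbb{E}[Z] = \sum_{i=1}^n \Delta_i$ with $\Delta_i = Z_i - Z_{i-1}$. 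The increments $\Delta_i$ form a martingale difference sequence with respect to the filtration generated by $X_1,\dots,X_i$, hence are orthogonal in $L^2$: $\mathbb{E}[\Delta_i \Delta_j] = 0$ for $i \ne j$ by the tower property and independence. This gives $\mathrm{Var}(Z) = \sum_{i=1}^n \mathbb{E}[\Delta_i^2]$.

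\textbf{Key steps.} The next step is to bound each $\mathbb{E}[\Delta_i^2]$. Using independence of the $X_i$, one can write $Z_{i-1} = \mathbb{E}_i[Z_i]$ where $\mathbb{E}_i$ integrates out $X_i$ conditionally on $X_1,\dots,X_{i-1}$ (here Fubini and independence are what make $\mathbb{E}[\,\cdot\mid X_1,\dots,X_{i-1}]$ act on $Z_i$ as integration over $X_i$ alone). Therefore $\Delta_i = Z_i - \mathbb{E}_i[Z_i]$ is a centered conditional fluctuation, and by the conditional Jensen / variance inequality
\begin{align}
\mathbb{E}[\Delta_i^2] \;=\; \mathbb{E}\big[\mathrm{Var}_i(Z_i)\big] \;\le\; \mathbb{E}\big[\mathrm{Var}_i(Z)\big],
\end{align}
where $\mathrm{Var}_i$ is the conditional variance over $X_i$; the last inequality holds because $Z_i = \mathbb{E}[Z\mid X_1,\dots,X_i]$ is a conditional expectation of $Z$ and conditional expectation is an $L^2$ contraction. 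Summing over $i$ yields the general Efron--Stein bound $\mathrm{Var}(Z)\le \sum_i \mathbb{E}[\mathrm{Var}_i(Z)]$. Finally, I would invoke the elementary fact that for any random variable $W$ taking values in an interval of length at most $c$ one has $\mathrm{Var}(W) \le c^2/4$ (this is the variance bound for distributions on a bounded interval, attained by the two-point distribution at the endpoints). Applying this conditionally: the bounded-differences hypothesis \eqref{boundeddiff} says that, after freezing $X_1,\dots,X_{i-1},X_{i+1},\dots,X_n$, the function $x_i \mapsto f(x_1,\dots,x_i,\dots,x_n)$ has oscillation at most $c_i$, so $\mathrm{Var}_i(Z) \le c_i^2/4$ pointwise, hence $\mathbb{E}[\mathrm{Var}_i(Z)]\le c_i^2/4$, and summing gives $\mathrm{Var}(Z)\le \frac14\sum_{i=1}^n c_i^2$.

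\textbf{Main obstacle.} There is no deep obstacle here, since this is a textbook argument; the only points requiring a little care are (i) justifying $\mathbb{E}[\,\cdot\mid X_1,\dots,X_{i-1}]$ applied to $Z_i$ equals integration over $X_i$ alone, which is exactly where independence of the coordinates enters, and (ii) the conditional variance contraction $\mathrm{Var}_i(Z_i)\le \mathrm{Var}_i(Z)$, which should be stated cleanly as an instance of Jensen's inequality for the convex function $t\mapsto t^2$ together with the tower property. If one prefers to avoid martingales altogether, an alternative I could use is the ``tensorization of variance'' identity proved directly by adding and subtracting the partial conditional expectations and expanding the square; either route is short. I would most likely just cite \cite{Boucheron2013} for the variance-on-an-interval bound $\mathrm{Var}(W)\le c^2/4$ and present the martingale proof of the main inequality in two or three lines.
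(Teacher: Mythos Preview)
Your proposal is correct and gives the standard textbook proof of the Efron--Stein inequality together with its bounded-differences corollary. The paper itself does not prove this lemma at all: it merely states it with a citation to \cite{Boucheron2013}, so your write-up in fact goes well beyond what the paper provides.
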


Now we can prove the following lemma:

\begin{lemma}[Equicontinuity with high probability]\label{equicont} Under the same hypotheses of Theorem \ref{init}, let us suppose that
\[\lim_{m\to\infty}\frac{Lm|\mathcal{M}|^4|\mathcal{N}|^2}{N_K^2(m)N^4(m)}=0.\]
Then, for any $\delta>0$ there exists $M>0$ such that for any $m\geq 1$, with probability at least $1-\delta$ we have
\[ \max_{i=1,\,\ldots,\,\dim\mathcal{X}} \sup_{\substack{x\in\mathcal{X}\\t\geq 0}}\left|\partial_{x_i}f(\Theta_t,x)\right| \le M\,.\]
\end{lemma}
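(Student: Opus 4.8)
\textbf{Plan for the proof of Lemma \ref{equicont}.}
The strategy mirrors the one used for the model function itself (Lemma \ref{Nmax}, Lemma \ref{doppiastima} and Theorem \ref{ntkconv}): first bound the expectation of the squared derivative $\partial_{x_i} f(\Theta_t,x)$ uniformly in $x$ and $t$, then control the fluctuations with a concentration inequality, and finally take a union bound over the $\dim\mathcal{X}$ coordinates. Concretely, I would start by writing $f(\Theta_t,x)=\frac{1}{N(m)}\sum_{k=1}^m f_k(\Theta_{\mathcal{N}_k},x)$ and noting that, by the structure of the encoding layers $V_\ell(x)$ in \eqref{Vx}, the dependence of $f_k$ on each coordinate $x_i$ is of the same ``single-qubit generator with spectrum $\{-1,+1\}$'' type as the dependence on the parameters $\theta_j$. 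Hence the bounds of Lemma \ref{unifbounds} apply verbatim with $x_i$ in place of $\theta_i$: $|\partial_{x_i} f_k|\le 2$ and $|\partial_{x_i} f(\Theta,x)|\le 2|\mathcal{M}^x_i|/N(m)$, where $\mathcal{M}^x_i$ is the analogue of the future light cone for the input-encoding gate acting through $x_i$, which is contained in some $\mathcal{M}_{[\ell k]}$ and thus has cardinality at most $|\mathcal{M}|$. This already gives a deterministic bound $|\partial_{x_i} f(\Theta_t,x)|\le 2m|\mathcal{M}|/N(m)$, but that is not good enough in the regime of interest, so the point is to show that $\partial_{x_i} f(\Theta_t,x)$ concentrates.

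Next I would bound $\mathbb{E}[(\partial_{x_i} f(\Theta_t,x))^2]$. Here the lazy-training results are the key input: by Theorem \ref{unbgraddesc} (or Theorem \ref{gradfl} in the noiseless case), $\sup_t\|\Theta_t-\Theta_0\|_\infty$ is small with high probability, so the Lipschitzness of the gradient (Lemma \ref{lemma}, applied now to $\partial_{x_i} f$) lets me replace $\Theta_t$ by $\Theta_0$ up to a controlled error. For $\Theta_0$ one uses Assumption \ref{uniform} and the Fourier/Parseval argument of Lemma \ref{doppiastima}: since $f(\Theta,x)$ is a trigonometric polynomial in both $\Theta$ and $x$ with frequencies in $\{0,\pm1\}$ and $x$-support contained in a past light cone, $\mathbb{E}[(\partial_{x_i} f(\Theta,x))^2]$ is bounded by a constant times $\mathbb{E}[f^2(\Theta,x)]$, which in turn is $\mathcal{K}_m(x,x)=O(1)$ uniformly in $x$ by Assumption \ref{zeromean} and the uniform convergence in \eqref{convergence}. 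This yields $\sup_{x,t}\mathbb{E}[(\partial_{x_i} f(\Theta_t,x))^2]=O(1)$, possibly after absorbing a factor $Lm|\mathcal{M}|^4|\mathcal{N}|^2/(N_K^2(m)N^4(m))\to0$ coming from the lazy-training correction — this is exactly the hypothesis imposed in the statement.

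Then I would apply McDiarmid's inequality (Lemma \ref{mcdiarmid}) to $\partial_{x_i} f(\Theta,x)$ viewed as a function of the independent parameters $\theta_1,\dots,\theta_{Lm}$: changing one $\theta_j$ changes $\partial_{x_i} f$ only through the local observables $f_k$ with $k\in\mathcal{M}_j$ that also feel $x_i$, giving bounded differences $c_j = O(|\mathcal{M}_j|/N(m))$, hence $\sum_j c_j^2 = O(\Sigma_2/N^2(m))$. This gives a sub-Gaussian tail for $\partial_{x_i} f(\Theta_t,x)$ around its mean, uniformly in $x$; a net/covering argument over the compact cube $\mathcal{X}=[0,\pi]^{\dim\mathcal{X}}$ (using the deterministic Lipschitz bound on $\partial_{x_i} f$ in $x$, obtained from the second-derivative bound of Lemma \ref{unifbounds}) together with the freezing of the kernel / smallness of $\|\Theta_t-\Theta_0\|_\infty$ over all $t$ removes the supremum over $x$ and $t$. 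Taking $M$ large enough and a union bound over $i=1,\dots,\dim\mathcal{X}$ and over the net points finishes the proof.

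\textbf{Main obstacle.} The delicate point is handling $\sup_{t\ge0}$ simultaneously with $\sup_{x}$: one cannot apply McDiarmid separately at each $t$ since the $\Theta_t$ are not independent copies. The clean way around this is to reduce everything to $t=0$: bound $|\partial_{x_i} f(\Theta_t,x)-\partial_{x_i} f(\Theta_0,x)|$ by the Lipschitz constant of $\nabla_x f$ in $\Theta$ (a light-cone computation identical in spirit to Lemma \ref{lemma4}) times $\sup_t\|\Theta_t-\Theta_0\|_\infty$, which is $o(1)$ with high probability by the lazy-training theorems, and then only concentrate $\partial_{x_i} f(\Theta_0,x)$. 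The bookkeeping of which light cones govern the mixed $\theta$–$x$ derivatives, and checking that the resulting powers of $|\mathcal{M}|,|\mathcal{N}|,L,m$ are dominated by $N(m)$ under the stated hypothesis, is the part that requires care but is otherwise routine.
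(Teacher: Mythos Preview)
Your plan has a real gap in the $\sup_x$ step. The Parseval argument you borrow from Lemma~\ref{doppiastima} is a Fourier expansion in $\Theta$, and it only gives a bound on $\mathbb{E}[(\partial_{x_i}f(\Theta_0,x))^2]$ \emph{pointwise in $x$}, not on $\mathbb{E}[\sup_x(\partial_{x_i}f(\Theta_0,x))^2]$. You then propose a net over $\mathcal{X}$, but the deterministic Lipschitz constant of $x\mapsto\partial_{x_i}f(\Theta,x)$ is of order $m/N(m)$ (there are $m$ local observables, each with bounded second $x$-derivative, divided by $N(m)$), and $m/N(m)\to\infty$ under the hypotheses by Lemma~\ref{Nmax} and Corollary~\ref{controllo}. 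So the net must be refined as $m$ grows; meanwhile your McDiarmid variance proxy $\sum_j c_j^2$ is of order $m|\mathcal{M}||\mathcal{N}|/N^2(m)$, which is bounded \emph{below} by a positive constant (again Lemma~\ref{Nmax}), not tending to zero. The union bound over the growing net then forces $M$ to grow at least like $\sqrt{\log m}$, contradicting the requirement that $M$ be independent of $m$. Separately, your claim that ``$\mathcal{M}^x_i$ is contained in some $\mathcal{M}_{[\ell k]}$'' is false: by \eqref{Vx} the input coordinate $x_i$ is encoded at many $(\ell,k)$ positions simultaneously, so its future light cone is a large union, not a single parameter cone.

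The paper avoids the net altogether by expanding in Fourier in $x$ rather than in $\Theta$: since $f(\Theta,\cdot)$ is a trigonometric polynomial in $x$ with frequency support in a finite box $Q$ (size governed by $N_{\max}$, the maximal number of times any input coordinate appears inside one past light cone), Cauchy--Schwarz in Fourier space gives the Bernstein-type inequality
\[
\sup_{x\in\mathcal{X}}|\partial_{x_i}f(\Theta,x)|\le C(N_{\max},\dim\mathcal{X})\left(\int_{\mathcal{X}}f(\Theta,x')^2\,d\mu(x')\right)^{1/2},
\]
valid \emph{pointwise in $\Theta$}. This collapses the supremum over $x$ to the single random variable $\|f(\Theta_t,\cdot)\|_{L^2(\mathcal{X})}$. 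Its second moment is bounded for $\Theta_0$ directly from Assumption~\ref{zeromean}, and for general $t$ by passing to $f^{\mathrm{lin}}(\Theta^{\mathrm{lin}}_t,\cdot)$ via Theorem~\ref{confronto_finale} and then controlling $\mathbb{E}[\hat K_{\Theta_0}(x,x)^2]$ with the Efron--Stein inequality (Lemma~\ref{efron}); a single Markov inequality closes the argument. Your reduction of $\sup_t$ to $t=0$ via lazy training is sound in spirit and close to what the paper does, but it is the $\sup_x$ reduction that carries the proof, and that is precisely the missing idea in your outline.
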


\begin{proof}
As introduced in \autoref{encinput}, $f(\Theta,x)$ is periodic with period $\pi$ in each input $x_i$. So, we can consider the Fourier series decomposition
\[ f(\Theta,x)=\sum_{q\in\mathbb{Z}^{\dim \mathcal{X}}} f^{(q)}(\Theta)e^{2i q\cdot x} \quad\text{where}\quad f^{(q)}(\Theta)=\int_\mathcal{X}\frac{dx}{(\pi)^{\dim\mathcal{X}}}e^{-2iq\cdot x} f(\Theta,x).\]
Recalling Definition \ref{defpruning} and Lemma \ref{reduction}, we can consider the pruning of the model (\ref{modelloconx}):
\begin{align}
\nonumber f_k(\Theta,x)&=\frac{1}{N(m)}\matrixel{0}{U^\dagger(\Theta,x)O_k U(\Theta,x)}{0}\\
\nonumber
&=\frac{1}{N(m)}\matrixel{0}{\left[U^\dagger(\Theta,x)\right]_kO_k \left[U(\Theta,x)\right]_k}{0}\\
\nonumber
&=\frac{1}{N(m)}\smatrixel{0}{\left[W_1^\dagger(\Theta) V_1^\dagger(x) \cdots W_L^\dagger(\Theta)V_L^\dagger(x)\right]_k
\\ &\qquad\qquad\qquad\qquad
O_k\left[V_L(x)W_L(\Theta)\cdots V_1(x)W_1(\Theta)\right]_k}{0},
\end{align}
where the only encoding unitaries in (\ref{Vx})
\[V_\ell(x)=\prod_{E\in S_\ell}\prod_{j=1}^{\dim\mathcal{X}} U_{E,j}^{(\ell)} \prod_{i\in E} e^{-i x_j\mathcal{K}_j^{\ell,i}}\]
which may contribute to $f_k(\Theta,x)$ are the ones such that $i\in \mathcal{N}^\ell_k$, as a consequence of the pruning property (see Lemma \ref{reduction}), and such that $\mathcal{K}_j^{\ell,i}\neq 0$. Therefore, let $N_{j,k}$ be the number of times that $x_j$ appears nontrivially (i.e., $\mathcal{K}_j^{\ell,i}\neq 0$) in the light cone of $f_k(\Theta,x)$.
We can consider the projectors $P^\pm_{j,\ell,i}$ on the eigenspaces of $\mathcal{K}_j^{\ell,i}$:
\[\mathcal{K}_j^{\ell,i}=P^+_{j,\ell,i}-P^-_{j,\ell,i}	\quad P^+_{j,\ell,i}+P^-_{j,\ell,i}=\id\]
In this way, we can rewrite
\begin{align}
\nonumber
e^{-ix_j\mathcal{K}_j^{\ell,i}}&=e^{-ix_jP_{j,\ell,i}^+}e^{ix_jP^-_{j,\ell,i}}\\
\nonumber &=e^{-ix_jP^+_{j,\ell,i}}(P^+_{j,\ell,i}+P^-_{j,\ell,i})e^{ix_jP_{j,\ell,i}^-}\\
&=e^{-ix_j}P_{j,\ell,i}^++e^{ix_j}P_{j,\ell,i}^-.
\end{align}
This means that in $f_k(\Theta,x)$ the dependence on $x$ will appear as
\[ f_k(\Theta,x)=\sum_{q \in Q_k}e^{2iq\cdot x}f^{(q,k)}(\Theta),\]
where $f^{(q,k)}(\Theta)$ does not depend on $x$ and where
\[Q_k:=\bigtimes_{j=1}^{\dim\mathcal{X}}\{-N_{j,k}, -N_{j,k}+1,\dots,N_{j,k}-1, N_{j,k}\}.\] 
We define $N_{\max}= \max_{j,k} N_{j,k}$ so that we can extend 
\[Q_k\subseteq Q := \{-N_{\max}, -N_{\max}+1,\dots,N_{\max}-1, N_{\max}\}^{\dim \mathcal{X}} \]
and
\[f^{(q,k)}(\Theta):=0\quad  \text{when}\quad q\in  Q\setminus Q_k,\] 
so that we can rewrite
\[ f_k(\Theta,x)=\sum_{q \in Q}e^{2iq\cdot x}f^{(q,k)}(\Theta).\]
Then
\[ f(\Theta,x)=\sum_{k=1}^mf_k(\Theta,x)=\sum_{q\in Q}e^{2iq\cdot x_i}\left(\sum_{k=1}^mf^{(q,k)}(\Theta)\right).\]
Calling
\[f^{(q)}(\Theta):=\sum_{k=1}^mf^{(q,k)}(\Theta),\]
we have proved the representation
\[ f(\Theta,x)=\sum_{q\in Q} e^{2iq\cdot x}f^{(q)}(\Theta),\]
where $f^{(q)}(\Theta)$ does not depend on $x$, whence
\[\partial_{x_i}f(\Theta,x)=\sum_{q \in Q} 2iq_ie^{2iq\cdot x}f^{(q)}(\Theta),\label{eqcont2}\]
so that, for any $x\in\mathcal{X},$
\begin{align}
    \nonumber
    \left|\partial_{x_i}f(\Theta,x)\right|&\leq 2\sum_{q\in Q} |q_i|\left|f^{(q)}(\Theta)\right|\\
    \nonumber
    &\leq 2\left(\sum_{q\in Q} |q_i|^2\right)^{1/2}\left(\sum_{q\in Q}\left|f^{(q)}(\Theta)\right|^2\right)^{1/2}\\
    \label{stimaindip}
    &\leq 2(2 N_{\max}+1)^{\dim \mathcal{X}/2}N_{\max}\left(\int_\mathcal{X}f(\Theta,x')^2\frac{dx'}{(\pi)^{\dim\mathcal{X}}}\right)^{1/2}.
\end{align}
We notice that the final bound (\ref{stimaindip}) does not depend on $x$. Calling $C=C(N_{\max}, \dim\mathcal{X}):=2(2 N_{\max}+1)^{\dim \mathcal{X}}N_{\max}^2$ and $d\mu(x):=dx/(\pi)^{\dim\mathcal{X}}$, which is a probability measure on $\mathcal{X}$, we therefore have that
\begin{align}
    \mathbb{E}\left[\sup_{x\in\mathcal{X}}\left|\partial_{x_i}f(\Theta,x)\right|^2\right]&\leq  C\int_\mathcal{X}\mathbb{E}\left[\left(f(\Theta,x)\right)^2\right]d\mu(x).
    \label{eqcont6}
\end{align}

Let us consider $\Theta=\Theta_0$. For any $m\geq 1$ the functions $\mathbb{E}\left[\left(f(\Theta_0,x)\right)^2\right]$ are continuous for $x\in\mathcal{X}$ and they converge uniformly to $\mathcal{K}(x,x)$ by (\ref{convergence}). Then, on the compact domain $\mathcal{X}$, we are allowed to exchange the limit and the integral as follows:
\begin{align}
    \nonumber
    \lim_{m\to\infty}\mathbb{E}\left[\sup_{x\in\mathcal{X}}\left|\partial_{x_i}f(\Theta_0,x)\right|^2\right]&\leq  C\int_\mathcal{X}\lim_{m\to\infty}\mathbb{E}\left[\left(f(\Theta_0,x)\right)^2\right]d\mu(x)=  C\int_\mathcal{X}\mathcal{K}(x,x)d\mu(x)
\end{align}
Being $\mathcal{K}(x,x)$ the uniform limit of continuous functions, it is continuous too, so it has a maximum over its compact domain. This means that the above integral is finite. As a consequence, there exists a uniform bound $M_0$ on $\mathbb{E}\left[\sup_{x\in\mathcal{X}}\left|\partial_{x_i}f(\Theta_0,x)\right|^2\right]$:
\[\mathbb{E}\left[\sup_{x\in\mathcal{X}}\left|\partial_{x_i}f(\Theta_0,x)\right|^2\right]\leq M_0\qquad \forall m\geq 1.\label{eqcont1}\]
Let us now generalize the previous result to $\Theta=\Theta_t$ for any $t\geq 0$. By Theorem \ref{confronto_finale}, there exists $m_0\in\mathbb{N}$ such that, with high probability,
\[\sup_{x\in\mathcal{X}}\left|f(\Theta_t,x)-f^{\mathrm{lin}}(\Theta^{\mathrm{lin}}_t,x)\right|\leq B(m)\qquad \text{with}\qquad \lim_{m\to\infty}B(m)=0,\]
for all $m\geq m_0$ and for any $t\geq 0$, whence $B=\max_{m\geq m_0} B(m)<\infty$ and
\[|f(\Theta_t,x)|\leq |f^{\mathrm{lin}}(\Theta^{\mathrm{lin}}_t,x)|+B \qquad \forall\, x\in\mathcal{X},\quad\forall\,m\geq m_0.\label{eqcont7}\]
Furthermore, we know that for $m$ large enough, $\hat K_{\Theta_0}(X,X^T)$ is invertible with high probability and therefore, by Lemma \ref{solutiondiscrete} and Remark \ref{remeta} (see, in particular, (\ref{questaqui})),
\begin{align}
    \nonumber
    |f^{\mathrm{lin}}(\Theta^{\mathrm{lin}}_t,x)-f(\Theta_0,x)|&\leq \left|\hat K_{\Theta_0}(x,X^T)\hat K^{-1}_{\Theta_0}\left(\id-\left(\id-\eta_0\hat K_{\Theta_0}t\right)^t\right)(F(0)-Y)\right|\\
    &\leq \|\hat K_{\Theta_0}(x,X^T)\|_2\|\hat K^{-1}_{\Theta_0}\|_F\|F(0)-Y\|_2.
    \label{eqcont3}
\end{align}
Both $\|\hat K^{-1}_{\Theta_0}\|_F$ and $\|F(0)-Y\|_2$ are bounded with high probability by Corollary \ref{corollaryR} and Lemma \ref{convprob} for $m$ large enough; let their product be bounded by $\sqrt{C'}$ with high probability. We will denote with $\mathcal{E}$ the event in which all the previous bounds are satisfied. We can suppose that $\mathbb{P}(\mathcal{E})\geq 1-\delta/2$. \\
We will use a couple of times the following inequality. Let $X\geq 0$ be a non-negative random variable. Then
\begin{align}
    \nonumber
    \mathbb{E}\left[X\right]&=\mathbb{P}(\mathcal{E})\mathbb{E}\left[\;X\; \middle|\;\mathcal{E}\;\right]+\mathbb{P}(\mathcal{E}^c)\mathbb{E}\left[\;X\; \middle|\;\mathcal{E}^c\;\right]\\
    &\geq \mathbb{P}(\mathcal{E})\mathbb{E}\left[\;X\; \middle|\;\mathcal{E}\;\right]
    \label{eqcont4}
\end{align}
Let us use (\ref{eqcont4}) for this bound:
\[\mathbb{E}\left[\left(f(\Theta_0,x)\right)^2\; \middle|\;\mathcal{E}\;\right]\leq \frac{2}{2-\delta} \mathbb{E}\left[\left(f(\Theta_0,x)\right)^2\right]\leq 2\mathcal{K}(x,x)\label{eqcont5}\]
Let us now combine (\ref{eqcont6}), (\ref{eqcont7}) and (\ref{eqcont3}):
\begin{align}
    \nonumber
        \limsup_{m\to\infty}&\mathbb{E}\left[\sup_{\substack{x\in\mathcal{X}\\t\geq 0}}\left|\partial_{x_i}f(\Theta_t,x)\right|^2\; \middle|\;\mathcal{E}\;\right]\\
    \nonumber
    &\leq \limsup_{m\to\infty} 2C\int_\mathcal{X}\left(\mathbb{E}\left[\sup_{t\geq 0}\left(f^{\mathrm{lin}}(\Theta_t,x)\right)^2\; \middle|\;\mathcal{E}\;\right]+B^2\right)d\mu(x)\\
    \nonumber
    & \leq \limsup_{m\to\infty} 2C\int_\mathcal{X}\Big(2\mathbb{E}\left[\left(f(\Theta_0,x)\right)^2\; \middle|\;\mathcal{E}\;\right]+B^2\\
    \nonumber
    &\phantom{\leq \limsup_{m\to\infty} 2C\int_\mathcal{X}}\qquad +2\mathbb{E}\left[\|\hat K_{\Theta_0}(x,X^T)\|_2^2\|\hat K^{-1}_{\Theta_0}\|_F^2\|F(0)-Y\|_2^2\; \middle|\;\mathcal{E}\;\right]\Big)d\mu(x)\\
    \nonumber
    & \leq 2C\Big( 4\mathcal{K}(x,x)+B^2\\
    & \phantom{\leq 2C}\qquad +2C'\limsup_{m\to\infty} \int_\mathcal{X} \,\mathbb{E}\left[\|\hat K_{\Theta_0}(x,X^T)\|_2^2\; \middle|\;\mathcal{E}\;\right]d\mu(x)\Big)
\end{align}
Regarding the last term, we need a bound of the form
\[\limsup_{m\to\infty} \int_\mathcal{X}\,\mathbb{E}\left[\|\hat K_{\Theta_0}(x,X^T)\|_2^2\; \middle|\;\mathcal{E}\;\right]d\mu(x)\leq A.\]
Let us recall that
\begin{align}
    \nonumber
    \left(\hat K_{\Theta_0}(x,x')\right)^2&=\left(\nabla_\Theta f(\Theta_0,x)\cdot \nabla_\Theta f(\Theta_0,x')\right)^2\\
    \nonumber
    &\leq \|\nabla_\Theta f(\Theta_0,x)\|_2^2\|\nabla_\Theta f(\Theta_0,x')\|_2^2\\
    &\leq \frac{1}{2}\|\nabla_\Theta f(\Theta_0,x)\|_2^2+\frac{1}{2}\|\nabla_\Theta f(\Theta_0,x')\|_2^2=\frac{1}{2}\hat K_{\Theta_0}^2(x,x)+\frac{1}{2}\hat K_{\Theta_0}^2(x',x')
\end{align}
and write more explicitly
\begin{align}
    \nonumber
    \mathbb{E}\left[\|\hat K_{\Theta_0}(x,X^T)\|_2^2\; \middle|\;\mathcal{E}\;\right]&=\sum_{j=1}^n \mathbb{E}\left[\hat K^2_{\Theta_0}(x,x^{(j)})\; \middle|\;\mathcal{E}\;\right]\\
    \nonumber
    &=\frac{1}{2}\sum_{j=1}^n \left(\mathbb{E}\left[\hat K^2_{\Theta_0}(x,x)\; \middle|\;\mathcal{E}\;\right]+\mathbb{E}\left[\hat K^2_{\Theta_0}(x^{(j)},x^{(j)})\; \middle|\;\mathcal{E}\;\right]\right)\\
    \nonumber
    &= \frac{n}{2} \mathbb{E}\left[\hat K^2_{\Theta_0}(x,x)\; \middle|\;\mathcal{E}\;\right]+\frac{1}{2}\sum_{j=1}^n\mathbb{E}\left[\hat K^2_{\Theta_0}(x^{(j)},x^{(j)})\; \middle|\;\mathcal{E}\;\right]\\
    \nonumber
    &\leq \frac{n}{2}\left(\text{Var}[\;Y_0\; \middle|\;\mathcal{E}\;]+\left(\mathbb{E}[\;Y_0\; \middle|\;\mathcal{E}\;]\right)^2\right)\\
    &\qquad +\frac{1}{2}\sum_{j=1}^n \left(\text{Var}[\;Y_j\; \middle|\;\mathcal{E}\;]+\left(\mathbb{E}[\;Y_j\; \middle|\;\mathcal{E}\;]\right)^2\right).
\end{align}
where $Y_j$ is the random variable
\[Y_j=\hat K_{\Theta_0}(x^{(j)},x^{(j)}), \qquad \text{with} \qquad x^{(0)}:=x.\]
By Assumption \ref{assNTK},
\begin{align}
\lim_{m\to\infty} \sup_{x,x'\in\mathcal{X}}\left|\mathbb{E}[\hat K_{\Theta_0}(x,x')]-\bar K(x,x')\right|=0 \qquad\forall x,x'\in\mathcal{X},
\end{align}
i.e., $\mathbb{E}[Y_j]$ converges uniformly to $\bar K(x^{(j)},x^{(j)})$. 
Noticing that, by definition, $Y_j\geq 0$ for any $m>0$, we can use (\ref{eqcont4})
\begin{align}
    \nonumber
    \mathbb{E}\left[Y_j\right]&\geq \mathbb{P}(\mathcal{E})\mathbb{E}\left[\;Y_j\; \middle|\;\mathcal{E}\;\right]\geq \left(1-\frac{\delta}{2}\right)\mathbb{E}\left[\;Y_j\; \middle|\;\mathcal{E}\;\right],
\end{align}
whence
\[\mathbb{E}\left[\;Y_j\; \middle|\;\mathcal{E}\;\right]\leq \mathbb{E}\left[Y_j\right]+\frac{\delta}{2-\delta} \mathbb{E}\left[Y_j\right]\]
\[\limsup_{m\to\infty} \sup_{\substack{0\leq j\leq n\\x^{(0)}=x\in\mathcal{X}}}\left|\mathbb{E}\left[\;Y_j\; \middle|\;\mathcal{E}\;\right]-\bar K(x^{(j)},x^{(j)})\right|\leq \frac{\delta}{2-\delta} \sup_{x\in\mathcal{X}}\bar K(x,x)\leq \sup_{x\in\mathcal{X}}\bar K(x,x)\]
Furthermore, being $\bar K(x,x')$ the uniform limit of continuous function, it is continuous, too; as a consequence, on its compact domain $\bar K(x,x')$ is bounded.
Then, there is a constant $G$ (which is uniform in $j=0,\dots, n$ and in $x^{(0)}=x\in\mathcal{X}$) such that
\[\mathbb{E}\left[\;Y_j\; \middle|\;\mathcal{E}\;\right]\leq G\]
for $m$ large enough.
\[\limsup_{m\to\infty}\int_\mathcal{X}\left(\mathbb{E}[\;Y_j\; \middle|\;\mathcal{E}\;]\right)^2d\mu(x)\leq G^2\label{eqcont10}\]
We recall that in (\ref{c_i}) we proved that (\ref{boundeddiff}) holds for the model $f(\Theta,x)$ as a function of the parameters, i.e., for any $x,x'\in\mathcal{X}$
\[\sup_{\substack{\Theta\in\mathscr{P}\\ \theta_k'\in[0,2\pi)}}\left|\hat K_\Theta(x,x')-\hat K_{\Theta'}(x,x')\right|\leq c_k \quad 1\leq k\leq Lm,\]
where $\Theta':=(\theta_1,\dots,\theta_{k-1},\theta_k',\theta_{k+1},\dots,\theta_{Lm})$,
with
\[c_k=16\,\frac{1}{N_K(m)}\frac{|\mathcal{M}||\mathcal{N}|}{N^2(m)}|\mathcal{M}_k|.\]
Furthermore, for any $m>0$ and $x,x'\in\mathcal{X}$, the function $\Theta\mapsto \hat K_\Theta (x,x')$ is continuous on the compact domain $\mathscr{P}\ni \Theta$: it is therefore bounded and, as a consequence, square-integrable.
Whence, by Lemma \ref{efron},
\[\text{Var}[Y_j]\leq 64 \frac{Lm|\mathcal{M}^4|\mathcal{N}|^2}{N_K^2(m)N^4(m)}=:\Lambda(m).\]
Using again that $Y_j\geq 0$, we can bound
\begin{align}
    \nonumber
    \text{Var}\left[\;Y_j\right]&=\mathbb{P}(\mathcal{E})\mathbb{E}\left[\;Y_j^2\; \middle|\;\mathcal{E}\;\right]+\mathbb{P}(\mathcal{E}^c)\mathbb{E}\left[\;Y_j^2\; \middle|\;\mathcal{E}^c\;\right]\\
    \nonumber
    &\qquad +\big(\mathbb{P}(\mathcal{E})\mathbb{E}\left[\;Y_j\; \middle|\;\mathcal{E}\;\right]+\mathbb{P}(\mathcal{E}^c)\mathbb{E}\left[\;Y_j\; \middle|\;\mathcal{E}^c\;\right]\big)^2\\
    \nonumber
    &\geq \big(\mathbb{P}(\mathcal{E})\big)^2 (\mathbb{E}\left[\;Y_j^2\; \middle|\;\mathcal{E}\;\right]+(\mathbb{E}\left[\;Y_j\; \middle|\;\mathcal{E}\;\right])^2\big)\\
    &\geq \left(1-\frac{\delta}{2}\right)^2\text{Var}\left[\,Y_j\; \middle|\;\mathcal{E}\;\right]\geq \frac{1}{4}\text{Var}\left[\;Y_j\; \middle|\;\mathcal{E}\;\right]
\end{align}
Since $\lim_{m\to\infty}\Lambda(m)=0$, we can compute
\[\limsup_{m\to\infty}\int_\mathcal{X}\text{Var}\left[\;Y_j\; \middle|\;\mathcal{E}\;\right]d\mu(x)\leq 4\limsup_{m\to\infty}\int_\mathcal{X}\text{Var}\left[Y_j\right]d\mu(x) =\limsup_{m\to\infty}\Lambda(m)=0.\]
So, for any $j=0,\dots,n$,
\[\limsup_{m\to\infty} \int_\mathcal{X}\,\mathbb{E}\left[|\hat K_{\Theta_0}(x^{(j)},x^{(j)})|^2\right]d\mu(x)\leq G^2, \label{eqcont9}\]
whence
\begin{align}
    \nonumber
    \limsup_{m\to\infty} \int_\mathcal{X}\,&\mathbb{E}\left[\|\hat K_{\Theta_0}(x,X^T)\|_2^2\right]d\mu(x)\leq nG^2 =: A,
\end{align}
therefore there exists a constant $\tilde M$ which does not depend on $i$ and $t$ such that
\[\mathbb{E}\left[\sup_{\substack{x\in\mathcal{X}\\t\geq 0}}\left|\partial_{x_i}f(\Theta_t,x)\right|^2\; \middle|\;\mathcal{E}\;\right]\leq \tilde M \qquad \forall\, t\geq 0\]
for any $m$ large enough.
\[\limsup_{m\to\infty} \int_\mathcal{X}\,\mathbb{E}\left[|\hat K_{\Theta_0}(x,x)|^2\right]d\mu(x)\leq G^2\label{eqcont11}\]
Finally, by Markov's inequality
\[\mathbb{P}\left(|X|\geq a\right)\leq \frac{\mathbb{E}\left[|X|^2\right]}{a^2}\]
with
\[X=\sup_{\substack{x\in\mathcal{X}\\t\geq 0}}\left|\partial_{x_i}f(\Theta_t,x)\right| \qquad \text{and}\qquad a= \sqrt{\frac{\tilde M}{2\delta}},\]
we have
\[\mathbb{P}\left(\sup_{\substack{x\in\mathcal{X}\\t\geq 0}}\left|\partial_{x_i}f(\Theta_t,x)\right|\geq \sqrt{\frac{\tilde M}{2\delta}}\; \middle|\;\mathcal{E}\;\right)\leq \frac{\delta}{2}\qquad \forall x\in\mathcal{X}, \quad 1\leq i\leq \dim\mathcal{X},\]
therefore the claim holds with probability at least $1-\delta$.
So far we proved that there exists $\bar m\in\mathbb{N}$ such that, with probability at least $1-\delta$,
\[ \max_{i=1,\,\ldots,\,\dim\mathcal{X}} \sup_{x\in\mathcal{X}}\left|\partial_{x_i}f(\Theta_t,x)\right| \le \bar M\,\]
for any $m\geq \bar m$. However, we know that
\[\sup_{\substack{\Theta\in\mathscr{P}, x\in\mathcal{X}\\1\leq i\leq \dim\mathcal{X}}}|\partial_{x_i}f(\Theta,x)|=M(m),\]
where, for any $m\in\mathbb{N}^\ast$, $M(m)<\infty$. Whence, if we set
\[M=\max\left\{M(1),\dots, M(\bar m-1),\sqrt{\frac{\tilde M}{2\delta}}\right\},\]
then
\[\sup_{\substack{x\in\mathcal{X}\\t\geq 0}}\left|\partial_{x_i}f(\Theta_t,x)\right|\leq M\]
for any $i$, with probability at least $1-\delta$, for all $m\geq 1$.
\end{proof}

\begin{lemma}[Equiboundedness with high probability]\label{equibound} Let us suppose that the hypotheses of Lemma \ref{equicont} are satisfied. Then, for any $\delta>0$ and there exists $C>0$ such that for any $m\geq 1$, with probability at least $1-\delta$ we have
\[ \sup_{\substack{x\in\mathcal{X}\\t\geq 0}}\left|f(\Theta_t,x)\right| \le C.\]
\end{lemma}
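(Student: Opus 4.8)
The plan is to run the argument of Lemma \ref{equicont} with the Fourier weights $|q_i|$ replaced by $1$. Recall from that proof the representation $f(\Theta,x)=\sum_{q\in Q}e^{2iq\cdot x}f^{(q)}(\Theta)$, with $Q=\{-N_{\max},\dots,N_{\max}\}^{\dim\mathcal X}$ a finite set and $f^{(q)}(\Theta)$ independent of $x$, obtained via the pruning of the circuit and the spectral decomposition of the encoding generators. Cauchy--Schwarz and Parseval then give, for every $\Theta\in\mathscr P$ and every $x\in\mathcal X$,
\[
|f(\Theta,x)|\le\sum_{q\in Q}|f^{(q)}(\Theta)|\le|Q|^{1/2}\Big(\sum_{q\in Q}|f^{(q)}(\Theta)|^2\Big)^{1/2}=|Q|^{1/2}\Big(\int_\mathcal X f(\Theta,x')^2\,d\mu(x')\Big)^{1/2},
\]
where $d\mu(x)=dx/\pi^{\dim\mathcal X}$; the right-hand side is independent of $x$, so $\mathbb E[\sup_{x\in\mathcal X}|f(\Theta,x)|^2]\le|Q|\int_\mathcal X\mathbb E[f(\Theta,x)^2]\,d\mu(x)$.

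First I would specialise this to $\Theta=\Theta_0$. By Assumption \ref{zeromean} the integrand converges uniformly to $\mathcal K(x,x)$, which is continuous, hence bounded, on the compact set $\mathcal X$; exchanging the uniform limit with the integral yields $\limsup_{m\to\infty}\mathbb E[\sup_{x}|f(\Theta_0,x)|^2]\le|Q|\int_\mathcal X\mathcal K(x,x)\,d\mu(x)<\infty$, and since for each fixed $m$ this expectation is anyway finite, there is a uniform constant $M_0$ with $\mathbb E[\sup_{x}|f(\Theta_0,x)|^2]\le M_0$ for all $m\ge1$. Next, to pass from $\Theta_0$ to an arbitrary $\Theta_t$, I would work on the event $\mathcal E$ of probability at least $1-\delta/2$ used in the proof of Lemma \ref{equicont}, on which $\hat K_{\Theta_0}$ is invertible, $\|\hat K_{\Theta_0}^{-1}\|_F\|F(0)-Y\|_2\le\sqrt{C'}$ (Corollary \ref{corollaryR} and Lemma \ref{convprob}), and $\sup_{t\ge0,\,x\in\mathcal X}|f(\Theta_t,x)-f^{\mathrm{lin}}(\Theta^{\mathrm{lin}}_t,x)|\le B$ (Theorem \ref{confronto_finale}).

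On $\mathcal E$, Lemma \ref{solutiondiscrete} together with Remark \ref{remeta} gives $\|\id-(\id-\eta_0\hat K_{\Theta_0})^t\|_{\mathcal L}\le2$ for $m$ large, whence
\[
|f(\Theta_t,x)|\le|f(\Theta_0,x)|+2\,\|\hat K_{\Theta_0}(x,X^T)\|_2\,\|\hat K_{\Theta_0}^{-1}\|_F\,\|F(0)-Y\|_2+B.
\]
Taking suprema over $t$ and $x$, squaring, and taking $\mathbb E[\,\cdot\mid\mathcal E\,]$, I would bound the first term by $2M_0$ (using $\mathbb E[X]\ge\mathbb P(\mathcal E)\mathbb E[X\mid\mathcal E]$ for $X\ge0$), and the middle term by applying the very same Fourier/$L^2(d\mu)$ reduction to the function $x\mapsto\|\hat K_{\Theta_0}(x,X^T)\|_2^2$ (which has the same finite Fourier support in $x$ since $\nabla_\Theta f(\Theta_0,\cdot)$ does), reducing it to $\limsup_{m\to\infty}\int_\mathcal X\mathbb E[\|\hat K_{\Theta_0}(x,X^T)\|_2^2\mid\mathcal E]\,d\mu(x)\le nG^2$, a bound already established inside the proof of Lemma \ref{equicont}. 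This produces a constant $\tilde C$ with $\mathbb E[\sup_{x\in\mathcal X,\,t\ge0}|f(\Theta_t,x)|^2\mid\mathcal E]\le\tilde C$ for all $m$ large enough.

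Finally I would conclude exactly as in Lemma \ref{equicont}: Markov's inequality on $\mathcal E$ gives $\mathbb P(\sup_{x,t}|f(\Theta_t,x)|\ge\sqrt{\tilde C/(2\delta)}\mid\mathcal E)\le\delta/2$, so $\sup_{x,t}|f(\Theta_t,x)|\le\sqrt{\tilde C/(2\delta)}$ with probability at least $1-\delta$ for all $m\ge\bar m$, while the finitely many values $m<\bar m$ are handled by noting that $\sup_{\Theta\in\mathscr P,\,x\in\mathcal X}|f(\Theta,x)|$ is finite for each fixed $m$ (a continuous function on a compact set); one then sets $C$ equal to the maximum of $\sqrt{\tilde C/(2\delta)}$ and those finitely many per-$m$ bounds. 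The one genuinely delicate point — which I would treat carefully — is that every estimate involving $\hat K_{\Theta_0}^{-1}$ holds only on $\mathcal E$, so all intermediate expectations must be conditional on $\mathcal E$, and the passage from unconditional to conditional bounds relies throughout on the nonnegativity inequality $\mathbb E[X]\ge\mathbb P(\mathcal E)\,\mathbb E[X\mid\mathcal E]$, exactly as in the proof of Lemma \ref{equicont}.
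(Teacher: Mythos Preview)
Your proof is correct but takes a different route from the paper. The paper argues as follows: fix a single training input $\bar x=x^{(1)}$, use the Gaussian convergence of $f(\Theta_t,\bar x)$ (Lemma \ref{convloc}) together with a Chernoff bound to control $|f(\Theta_t,\bar x)|$ with high probability, and then extend from the single point $\bar x$ to all of $\mathcal X$ via the integral formula $f(\Theta_t,x)=f(\Theta_t,\bar x)+\int_0^1(x-\bar x)\cdot\nabla_x f(\Theta_t,\cdot)$ and the equicontinuity bound already proved in Lemma \ref{equicont}. In other words, the paper uses Lemma \ref{equicont} as an input and adds a pointwise estimate.

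You instead bypass the pointwise Chernoff step entirely and rerun the Fourier/$L^2(d\mu)$ machinery of Lemma \ref{equicont} with the weight $|q_i|$ replaced by $1$, bounding $\sup_x|f(\Theta_t,x)|^2$ directly by $|Q|\int f(\Theta_t,x')^2\,d\mu(x')$ and then controlling the integrand on the event $\mathcal E$ exactly as in that proof (including the bound $\limsup_m\int\mathbb E[\|\hat K_{\Theta_0}(x,X^T)\|_2^2\mid\mathcal E]\,d\mu(x)\le nG^2$, which is established there). This is cleaner in that it avoids the convergence-in-distribution/Chernoff detour and the uniformity-in-$t$ bookkeeping that comes with it; the paper's version, on the other hand, is more modular and shows explicitly how equiboundedness follows from equicontinuity plus control at one point. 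Either way the conclusion is the same, and your handling of the conditional expectations on $\mathcal E$ and the finitely many small-$m$ cases matches the paper's treatment in Lemma \ref{equicont}.
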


\begin{proof}
Let us fix $\bar x=x^{(1)}\in\mathcal{X}$. Let
\begin{align}
\sigma^2_t=\mathcal{K}_t(\bar x,\bar x), \quad \sigma^2=\sup_{t\geq 0}\sigma_t^2, \quad \text{and}\quad \bar y=y^{(1)}.
\end{align}
By Lemma \ref{convloc}, it is easy to see that $\sigma^2<\infty$. Furthermore, by Lemma \ref{convloc}, the random variable $f(\Theta_t,\bar x)$ converges in distribution to $f^{(\infty)}_t(\bar x)$, which are Gaussian random variables with mean $\mu_t(\bar x)$ and variance $\sigma^2_t$.
Given a constant $R$ to be fixed later, we can compute
\begin{align}\nonumber
\mathbb{P}\bigg[|f^{(\infty)_t}&(\bar x))-\bar y|\geq R \bigg]\\
&=\mathbb{P}\left[f^{(\infty)}(\bar x)\geq R+\bar y \right]+
\mathbb{P}\left[-f^{(\infty)}(\bar x))\geq R-\bar y \right], 
\end{align}
which can be estimated using Chernoff bound \cite{chernoff}:
\begin{align}\nonumber
\mathbb{P}\bigg[|f^{(\infty)}_t(\bar x)-\bar y|\geq R\bigg]
&\leq \exp\left[-\frac{1}{2\sigma_t^2}\left(R+\bar y\right)^2 \right]+
\exp\left[-\frac{1}{2\sigma_t^2}\left(R-\bar y\right)^2 \right]\\
&\leq 2\exp\left[-\frac{1}{2\sigma_t^2}\left(R-|\bar y|\right)^2 \right].\label{cher2}
\end{align}
Let
\[R(\delta,\bar y, \sigma^2) =\sqrt 2\sigma\sqrt{\log\left(\frac{8}{\delta}\right)}+|\bar y|.\]
Then we can bound (\ref{cher2}) as follows
\begin{align}
\nonumber
\mathbb{P}\bigg[|f^{(\infty)}_t(\bar x)-\bar y|\geq R \bigg] &\leq 2\exp\left[-\frac{1}{2\sigma_i^2}\left(2\sigma^2\log\left(\frac{8}{\delta}\right)\right) \right]\\
\nonumber &\leq 2\exp\left[-\log \frac{8}{\delta} \right]=\frac{\delta}{4},
\end{align}
where we used that $\sigma^2/\sigma^2_t\geq 1$.
Since the random variable $f(\Theta_t,\bar x)$ converges in distribution to $f^{(\infty)}_t(\bar x)$, there exists $m_0\in\mathbb{N}$ such that
\[ \mathbb{P}\bigg[|f(\Theta_t,\bar x)-\bar y|\geq R \bigg]\leq 2 \mathbb{P}\bigg[|f^{(\infty)}_t(\bar x)-\bar y|\geq R \bigg]\qquad \forall\, m\geq m_0\]
because of Lemma \ref{lemmaconvdistr}.
Therefore,
\[  \mathbb{P}\bigg[|f(\Theta_t,\bar x)-\bar y|\geq R \bigg]\leq\frac{\delta}{2}.\]
Now, we have that, for any $x\in\mathcal{X}$
\[f(\Theta_t,x)=f(\Theta_t,\bar x)+\int_0^1 \left(\frac{d}{d\xi} f(\Theta_t,\xi x+(1-\xi)\bar x)\right) d\xi \]
whence, by Lemma \ref{equicont}, there is a constant $M(\delta/2)$
\begin{align}
    \nonumber
    |f(\Theta_t,x)|&\leq |f(\Theta_t,\bar x)|+\left|\int_0^1 (x-\bar x)\cdot \nabla_x f(\Theta_t,\xi x+(1-\xi)\bar x) d\xi\right|\\
    \nonumber
    &\leq |f(\Theta_t,\bar x)|+\|x-\bar x\|_\infty \left(\dim\mathcal{X} \sup_{1\leq i\leq \dim\mathcal{X}}\sup_{x_0\in\mathcal{X}}\left|\partial_{x_i}f(\Theta_t,x_0) \right|\right)\\
    &\leq |f(\Theta_t,\bar x)|+ 2\pi\dim\mathcal{X} \,M\left(\frac{\delta}{2}\right)
\end{align}
with probability at least $1-\delta/2$. Therefore, for any $m\geq m_0$
\begin{align}
    |f(\Theta_t,x)|\leq R+|\bar y|+ 2\pi\dim\mathcal{X} M=\tilde C
\end{align}
with probability at least $1-\delta$, where $\tilde C$ does not depend on $x$ or $m$. However, as discussed at the end of the proof of Lemma \ref{equicont}, this result can be generalized as follows: for any $\delta$ there is a constant $C$ such that
\begin{align}
    |f(\Theta_t,x)|\leq C \qquad \forall x\in\mathcal{X}, \forall t\geq 0
\end{align}
with probability at least $1-\delta$ for any $m\geq 1$, where $\tilde C$ does not depend on $x$ or $m$.
\end{proof}

\begin{mdframed}
\begin{theorem}[Convergence to a Gaussian process]\label{final}
Let us consider any feature space $\mathcal{X}\subseteq [0,\pi]^{\dim\mathcal{X}}$, which may also be an infinite set. Let us suppose that
\[\lim_{m\to\infty}\frac{L^2m^2|\mathcal{M}|^6|\mathcal{N}|^4}{N^5(m)}\log N(m)=0. \label{hpinfinite}\]
Then, for any fixed $t\geq 0$, as $m\to\infty$,
\[\{f(\Theta_t,\,\cdot\,)\}_{x\in\mathcal{X}}\xrightarrow{d}\{f_t^{(\infty)}(\,\cdot\,)\}_{x\in\mathcal{X}},\]
where $\{f_t^{(\infty)}(\,\cdot\,)\}_{x\in\mathcal{X}}$ is the Gaussian process characterized by mean and covariance
\begin{align}
\mu_t(x)&=\bar K(x,X^T)K^{-1}\left(\id-\left(\id-\eta_0\bar K\right)^t\right) Y\\
\nonumber\mathcal{K}_t(x,x')&=\mathcal{K}_0(x,x'),\\
\nonumber&\phantom{=}- \bar K(x,X^T)\bar K^{-1}\left(\id-\left(\id-\eta_0\bar K\right)^t\right) \mathcal{K}_0(X,x')\\
\nonumber&\phantom{=}-\bar K(x',X^T)\bar K^{-1}\left(\id-\left(\id-\eta_0\bar K\right)^t\right) \mathcal{K}_0(X,x) \\
\nonumber&\phantom{=}+\bar K(x,X^T)\bar K^{-1}\left(\id-\left(\id-\eta_0\bar K\right)^t\right)\times\\
&\phantom{=========}\times\mathcal{K}_0(X,X^T)\left(\id-\left(\id-\eta_0\bar K\right)^t\right)\bar K^{-1} \bar K(X,x').
\end{align}
\end{theorem}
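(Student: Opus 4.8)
The plan is to follow the roadmap sketched at the start of this section: prove tightness of the laws of $\{f(\Theta_t,\,\cdot\,)\}_{x\in\mathcal{X}}$ on the space of continuous functions, and then identify every subsequential weak limit with the Gaussian process $\{f_t^{(\infty)}(\,\cdot\,)\}$ through its finite-dimensional marginals. First I would check, exactly as in Lemma \ref{llhp2} and using the bound of Lemma \ref{Nmax} in the form $1\le c\,m|\mathcal{M}||\mathcal{N}|/N^2(m)$ repeatedly, that the hypothesis (\ref{hpinfinite}) implies all the growth conditions needed downstream: the hypotheses of Theorem \ref{init}, Theorem \ref{ntkconv}, Theorem \ref{unbgraddesc} and Theorem \ref{confronto_finale}, as well as the extra condition $\lim_{m\to\infty} Lm|\mathcal{M}|^4|\mathcal{N}|^2/(N_K^2(m)N^4(m))=0$ required in Lemma \ref{equicont}. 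Since $\mathcal{X}\subseteq[0,\pi]^{\dim\mathcal{X}}$ and the model is $\pi$-periodic in each coordinate, we may work on the compact torus (equivalently, on $\overline{\mathcal{X}}$), so that Ascoli--Arzelà applies.

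Fix $\delta>0$. By Lemma \ref{equibound} and Lemma \ref{equicont}, with probability at least $1-\delta$ and uniformly in $m$ and $t$, the random function $x\mapsto f(\Theta_t,x)$ is bounded by a constant $C$ and has all partial derivatives bounded by a constant $M$, hence is $\sqrt{\dim\mathcal{X}}\,M$-Lipschitz. Its sample path therefore lies in the set
\[\mathcal{A}_\delta=\{g\in C(\mathcal{X}): \|g\|_\infty\le C,\ \mathrm{Lip}(g)\le \sqrt{\dim\mathcal{X}}\,M\},\]
which is compact in $(C(\mathcal{X}),\|\cdot\|_\infty)$ by Ascoli--Arzelà and closed. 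Denoting by $\nu_m$ the law on $C(\mathcal{X})$ of $\{f(\Theta_t,\,\cdot\,)\}_{x\in\mathcal{X}}$, we have $\nu_m(\mathcal{A}_\delta)\ge 1-\delta$ for all $m$, so $\{\nu_m\}_m$ is tight and, by Prokhorov's theorem, relatively compact for weak convergence.

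To identify the limit, let $\nu_{m_j}\Rightarrow\nu_\infty$ along any weakly convergent subsequence. For a finite $\mathcal{F}=\{\bar x_1,\dots,\bar x_N\}\subseteq\mathcal{X}$ the evaluation map $g\mapsto(g(\bar x_1),\dots,g(\bar x_N))$ is continuous on $C(\mathcal{X})$, so the pushforward of $\nu_\infty$ is the weak limit of the laws of $\{f(\Theta_t,\bar x)\}_{x\in\mathcal{F}}$. But the argument proving Theorem \ref{qnngpn} gives, for $\bar X=\mathrm{vec}(\mathcal{F})$, the decomposition $f(\Theta_t,\bar X)=f^{\mathrm{lin}}(\Theta_t,\bar X)+\Delta_t(\bar X)$ with $\Delta_t(\bar X)\xrightarrow{p}0$ by Theorem \ref{confronto_finale}, and $f^{\mathrm{lin}}(\Theta_t,\bar X)\xrightarrow{d}\{f_t^{(\infty)}(x)\}_{x\in\mathcal{F}}$ by Lemma \ref{convloc}, whose proof only invokes finitely many inputs and hence is valid for $\mathcal{F}\subseteq\mathcal{X}$ even when $\mathcal{X}$ is infinite; Slutsky's theorem \ref{sl} then yields $f(\Theta_t,\bar X)\xrightarrow{d}\{f_t^{(\infty)}(x)\}_{x\in\mathcal{F}}$. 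Thus the finite-dimensional marginals of $\nu_\infty$ coincide with those of the Gaussian process with mean $\mu_t$ and covariance $\mathcal{K}_t$ displayed in the statement. Since $\mu_t$ and $\mathcal{K}_t$ are continuous (being composed of the continuous limit kernels $\bar K$ and $\mathcal{K}_0$), that process admits a continuous version, and because the cylinder sets generate the Borel $\sigma$-algebra of the separable space $C(\mathcal{X})$, the measure $\nu_\infty$ is uniquely pinned down as the law of $\{f_t^{(\infty)}(\,\cdot\,)\}$; by the portmanteau theorem $\nu_\infty(\mathcal{A}_\delta)\ge 1-\delta$ for every $\delta$, so it is indeed supported on $C(\mathcal{X})$. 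As every subsequential limit equals this same measure, $\nu_m\Rightarrow\nu_\infty$, which is the asserted convergence in distribution.

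I expect the main technical work to be the $m$- and $t$-uniform equicontinuity and equiboundedness estimates (Lemmas \ref{equicont} and \ref{equibound}) — but these are already available in the excerpt, so within this theorem the only delicate point is the tightness-plus-marginals identification, and in particular verifying that the finite-dimensional convergence furnished by the proof of Theorem \ref{qnngpn} genuinely holds for arbitrary finite subsets of an infinite $\mathcal{X}$ (which it does, since that proof never used finiteness of $\mathcal{X}$ except in its final line). A secondary subtlety is ensuring consistency of the limiting marginals and that they determine a unique Borel measure on $C(\mathcal{X})$; this is resolved by the continuity of $\mu_t$ and $\mathcal{K}_t$ together with the separability of $C(\mathcal{X})$.
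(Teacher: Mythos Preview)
Your proposal is correct and follows essentially the same approach as the paper: both arguments establish tightness of the laws on $C(\mathcal{X})$ via the equicontinuity and equiboundedness Lemmas \ref{equicont} and \ref{equibound} together with Ascoli--Arzel\`a, invoke Prokhorov's theorem, and then identify every subsequential limit through its finite-dimensional marginals using the finite-$\mathcal{F}$ convergence from the proof of Theorem \ref{qnngpn} (via Lemma \ref{convloc}, Theorem \ref{confronto_finale}, and Slutsky). Your write-up is slightly more explicit than the paper's about why the finite-dimensional distributions determine a unique Borel law on the separable space $C(\mathcal{X})$, but the underlying strategy is the same.
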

\end{mdframed}

\begin{proof}
The hypothesis (\ref{hpinfinite}) is the same of Theorem \ref{qnngpn}, so by Lemma \ref{llhp2} we immediately see that the hypotheses of Lemma \ref{equicont} and Lemma \ref{equibound} are satisfied.
We recall that both the parameter space and the model function depends on the number of qubits:
\[\Theta^{(m)}\in\mathscr{P}=\mathscr{P}^{(m)},\qquad f(\Theta,x)=f^{(m)}(\Theta^{(m)},x).\]
Let $\mu_{m,t}$ the probability distribution of the function generated by the quantum neural network with $m$ qubits at time $t$; it is the probability distribution on $C^0(\mathcal{X},\mathbb{R})$, the set continuous function from $\mathcal{X}$ to $\mathbb{R}$ (endowed with the sup norm), induced by the random initialization of the parameters of the circuit and, in case, by the statistical noise during the training. Let us fix $\delta>0$. Then, by Lemma \ref{equicont} and Lemma \ref{equibound}, there exist $M_\delta>0$ and $C_\delta>0$ such that, for any $m\geq 1$ and $t\geq 0$
\[\mathbb{P}\left(\sup_{\substack{x\in\mathcal{X}\\1\leq i\leq \dim\mathcal{X}}}|\partial_{x_i}f^{(m)}(\Theta_t,x)|\leq M_\delta, \quad \sup_{x\in\mathcal{X}} |f^{(m)}(\Theta_t,x)|\leq C_\delta \right)>1-\delta.\]
 The family
\[K_\delta :=\{g\in C^0(\mathcal{X},\mathbb{R}): |g(x)-g(y)|\leq M_\delta\dim\mathcal{X}\|x-y\|_\infty,\, |g(x)|\leq C_\delta\,\forall\, x,y\in\mathcal{X}\}\]
is a set of equicontinuous and equibounded continuous functions, so by Ascoli-Arzelà's theorem it is compact \cite{Rudin1976}. In particular, for any $m$ and $t$,
\[\mu_{m,t}\left(K_\delta\right)>1-\delta\]
This means that, fixing $t\geq 0$, the family of measures $\{\mu_{m,t}\}_{m\geq 1}$ is tight. Then, by Prokhorov's theorem \cite{Billingsley1999}, each subsequence $\mu_{m(k),t}$ of $\{\mu_{m,t}\}_{m\geq 1}$ has a convergence subsubsequence $\mu_{m(k(n)),t}$ to a measure $\mu_{\infty,t}$ on $C^0(\mathcal{X},\mathbb{R})$ according to the weak topology. This implies the convergence in distribution of the entire sequence $\{f^{(m)}(\Theta_t,\,\cdot\,)\}_{m\geq 1}$ to a unique stochastic process $\{g_t^{(\infty)}(x)\}_{x\in \mathcal{X}}$:
\[f^{(m)}(\Theta_t,\,\cdot\,)\xrightarrow{d}g_t^{(\infty)}(\,\cdot\,).\]
Let us characterize this stochastic process. Fixed any finite family $\mathcal{F}$ of inputs, by Theorem \ref{qnngpn} we \modifica{know}

that the limit distribution of $\{f(\Theta_t,x)\}_{x\in \mathcal{F}}$ is a multivariate Gaussian with mean function $\mu_t(x)$ and covariance $\mathcal{K}_t(x,x')$. Therefore, arbitrariness of $F$, by the limit distribution is
\[f^{(m)}(\Theta_t,\,\cdot\,)\xrightarrow{d}f_t^{(\infty)}(\,\cdot\,)\]
i.e., the unique limit is $\{g_t^{(\infty)}(x)\}_{x\in \mathcal{X}}\equiv\{f_t^{(\infty)}(x)\}_{x\in \mathcal{X}}$; this proves the claim.
\end{proof}

 \section{Conclusions}\label{concl}

We have proved that in the limit of infinite width, quantum neural networks can always be trained in polynomial time with respect to the number of qubits as long as they do not suffer from barren plateaus, and that the probability distribution of the function generated by the trained network converges in distribution to a Gaussian process, whose mean and covariance can be computed analytically.

More precisely, we have proved the following results.
First, the probability distribution of the function generated by a quantum neural network with randomly initialized parameters converges in distribution to a Gaussian process when the cardinalities of the light cones are sufficiently small (Theorem \ref{init}).
Second, the trained model is able to perfectly fit the training set exponentially fast in time, and the probability distribution of the function generated by the trained network converges in distribution to a Gaussian process whose mean and covariance can be computed analytically (Theorem \ref{qnngp}). We have also provided a quantitative bound (\ref{grad2}) that shows that the training happens in the lazy regime (Theorem \ref{gradfl} and Theorem \ref{gronwall}).
Furthermore, we have taken into account the statistical noise due to the finite number of measurements to estimate the gradients of the cost function. In particular, we have studied the evolution of the model under gradient descent with an unbiased estimator of the gradient of the cost function (\ref{sgdeqn}). We have proved that a sufficiently large number of measurements ensures all the convergence results of the noiseless setting with high probability (Theorem \ref{unbgraddesc} and Theorem \ref{qnngpn}). Such number grows polynomially with the number of qubits (\modifica{Proposition} \ref{polynmeas}), so that any quantum advantage is not hindered by the training procedure.
From the mathematical point of view we also have given a novel contribution in this field: going beyond the standard assumption that the feature space is a finite set, we have rigorously proved that the convergence to a Gaussian process, both at initialization and during the training, is valid also when we consider an infinite feature space (Theorem \ref{final}).

For a real quantum circuit, the number of qubits will unlikely be large enough to completely suppress the non-Gaussianities in the probability distribution of the function generated by the quantum neural network. As in the case of classical neural networks, this should not be seen as a problem: on the contrary, the deviations from the Gaussian limit make the neural tangent kernel trainable and can help to improve the generalization power of the neural network \cite{misiakiewicz2023lectures}. However, a rigorous understanding of such deviations is still an open problem in both the classical and the quantum case.

We stress that, despite Ref. \cite{cerezo2023does} conjectured that all the architectures for quantum neural networks that do not suffer from barren plateaus can be simulated efficiently on a classical computer, our results are valid also in regimes that have hope of quantum advantage.
Indeed, from the discussion in \autoref{combinazioni}, a quantum neural network with the qubits arranged on a $d$-dimensional lattice and the two-qubit gates acting between neighboring qubits does not allow for naive efficient classical simulations whenever $d\ge2$ and the depth grows at least logarithmically with the number of qubits.
On the other hand, assuming that the variance of the function generated by the quantum neural networks considered in our paper behaves as in the architecture of Ref. \cite{napp2022quantifying}\footnote{Formally, Ref. \cite{napp2022quantifying} considers random nonparametric two-qubit gates sampled from a 2-design.
However, there are no reasons to believe that the architectures that we consider, where the two-qubit gates are fixed, will have a different scaling of the variance of the generated function.}, for any $d$ the hypotheses of our theorems are always satisfied for a logarithmic scaling of the depth $L \simeq \epsilon\log m$ with $\epsilon$ small enough but independent on $m$.

Our results open several questions:
\begin{itemize}
\modifica{
\item Is it possible to determine quantitative bounds for the distance between the probability distribution of the function generated by a given trained quantum neural network with a finite number of qubits and the Gaussian process associated to the covariance at initialization and the analytic neural tangent kernel of the network?
Answering this question would avoid the need to consider a sequence of quantum neural networks with increasing number of qubits and would determine the width at which the results of this paper become valid.
}
\item How does the convergence of the probability distribution of the generated function to a Gaussian process depends on the size of the training set, \emph{i.e.}, how should the number of qubits scale with the number of examples for our results to be valid?
For classical neural networks, a width polynomial in the number of examples is enough \cite{arora2019on}, and we expect the same to hold in the quantum setting.
\item Is it possible to prove nontrivial bounds to the normalization constant $N(m)$ of a generic quantum circuit and therefore determine whether the circuit suffers from barren plateaus?
The study of the detailed architecture of the class of circuits satisfying the hypotheses of our theorems would be essential to determine which architectures can have hopes of quantum advantage.

\item Can the class of circuits satisfying of our theorems be extended? More precisely, we wonder whether there are circuits whose light cones grow faster than the requirements of the form \eqref{condizione} that appears in the hypotheses of our theorems, but whose output is trainable and converges to a Gaussian process. We do expect that the hypotheses of our theorems can be relaxed, but, in order to envisage a larger class of variational circuits for which our results hold, it may be necessary to formulate new proofs using different strategies.
\end{itemize}

All these questions will be considered in our future works, with the hope to answer to some of them and to give new insights on the working principles of quantum neural networks.
Finally, we hope that our results can open the way to answering the main open problem in quantum machine learning: are there problems of practical relevance that variationally trained quantum neural networks can solve better than any classical algorithm?

\appendix

\section{A counterexample for large light cones}\label{ch5-5}
In this section, we will show that, if we relax the hypothesis of Theorem \ref{init} concerning the growth of the light cones, we can exhibit a circuit which generates a function whose distribution at initialization is not a Gaussian process.

\begin{figure}[ht]
\centering
\includegraphics[width=0.82\textwidth]{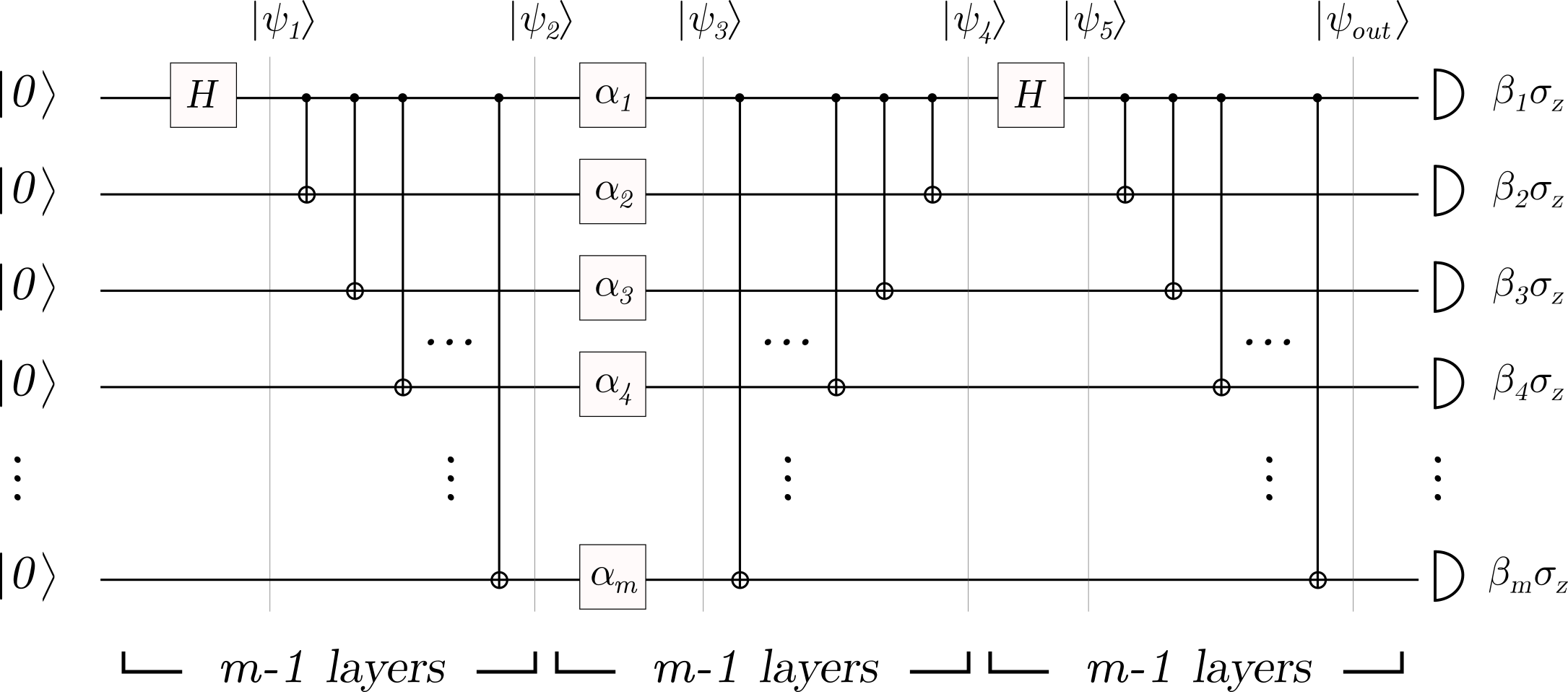}
\caption{Pathological circuit}
\end{figure}
Let us consider the circuit in the figure. It is a variational circuit in which the feature encoding gates and most parameter gates are trivial, since they act as the identity on the qubits. For this reason they are not represented in the figure. The only exception is the layer $\ell = m$, in which we identify $\alpha_k = \theta_{[mk]}$.  Each layer has one C-NOT as an entangling gate and, in the layers $\ell=1$ and $\ell=2m-1$, a Hadamard gate appears. The total number of layers is $L=3m-3=O(m)$ and the light cones have cardinality $O(m)$. The local observables are Pauli $Z$ rescaled by a bounded factor $0\leq \beta_k\leq 1$.

The initial state, after the Hadamard gate, becomes
\begin{align}
|00\dots 0\rangle \xmapsto{H_1} \ket{\psi_1}\frac{1}{\sqrt 2}\left(|00\dots 0\rangle+|10\dots 0\rangle\right).
\end{align}
After the first $m-1$ layers the state becomes
\begin{align}
\ket{\psi_2}=\frac{1}{\sqrt 2}\left(|00\dots 0\rangle+|11\dots 1\rangle\right)
\end{align}
The single qubit gates labelled by $\alpha_k$ are phase gates
\begin{align}
P(\alpha_i)=\begin{pmatrix} 1 & \\ & e^{i\alpha_i} \end{pmatrix} \qquad \text{in the computational basis.}
\end{align}
The state after the action of these gates is
\begin{align}
\ket{\psi_3}=\frac{1}{\sqrt 2}\left(|00\dots 0\rangle+e^{i\sum_{k=1}^m\alpha_k}|11\dots 1\rangle\right).
\end{align}
If we assume that $\alpha_k$ are i.i.d. random variables distributed as
\begin{align}
\mathbb{P}(\alpha_k=0)=\frac{1}{2},\qquad \mathbb{P}(\alpha_k=\pi)=\frac{1}{2},
\end{align}
we have, by symmetry,
\begin{align}
\mathbb{P}\left(\sum_{k=1}^m\alpha_k\equiv 0\text{ mod } 2\pi \right)=\mathbb{P}\left(\sum_{k=1}^m\alpha_k\equiv \pi\text{ mod } 2\pi\right)=\frac{1}{2}.
\end{align}
So the following states are equally likely
\begin{align}
\frac{1}{\sqrt 2}\left(|00\dots 0\rangle\pm|11\dots 1\rangle\right).
\end{align}
The following C-NOTs produce the state
\begin{align}
\ket{\psi_4}=\frac{1}{\sqrt 2}\left(|00\dots 0\rangle\pm|10\dots 0\rangle\right)=\frac{1}{\sqrt 2}(|0\rangle\pm|1\rangle)\otimes|0\dots 0\rangle.
\end{align}
Then, by the Hadamard gate,
\begin{align}
\ket{\psi_4}=\frac{1}{\sqrt 2}(|0\rangle\pm|1\rangle)\otimes|0\dots 0\rangle \xmapsto{H_1} 
\ket{\psi_5}=\begin{cases}
|0\rangle\otimes |0\dots 0\rangle\\
|1\rangle\otimes |0\dots 0\rangle
\end{cases}.
\end{align}
The final C-NOTs yield
\begin{align}
\ket{\psi_5}=\begin{cases}
|0\rangle\otimes |0\dots 0\rangle\\
|1\rangle\otimes |0\dots 0\rangle
\end{cases}
\mapsto\quad|\psi_{out}\rangle=
\begin{cases}
|00\dots 0\rangle\\
|11\dots 1\rangle
\end{cases}.
\end{align}
The local observables are given by
\begin{align}
O_k=\beta_k\sigma_z^{(k)}\qquad\text{with}\qquad \beta_k=\sqrt k-\sqrt{k-1}.
\end{align}
Then, the output function is
\begin{align}
\nonumber f(\Theta)&=\frac{1}{N(m)}\smatrixel{\psi_{out}}{\mathcal{O}}{\psi_{out}}\qquad\text{with}\qquad N(m)=\sqrt m\\
\nonumber &=\frac{\pm 1}{\sqrt m}\big(\sqrt 1 + (\sqrt 2 - \sqrt 1) + (\sqrt 3-\sqrt 2)+\cdots +(\sqrt m-\sqrt{m-1})\big)\\&=\pm 1,
\end{align}
which means that
\begin{align}
\mathbb{P}(f(\Theta)=1)=\mathbb{P}(f(\Theta)=-1)=\frac{1}{2}.
\end{align}
This distribution does not converge to a Gaussian process as $m\to \infty$.

This circuit has a remarkable property: if we perturb the first qubit by a local operation which cancels the first Hadamard gate, the result is completely changed and the result does not depend on the parameters $\alpha_i$.
\begin{figure}[H]
\centering
\includegraphics[width=0.82\textwidth]{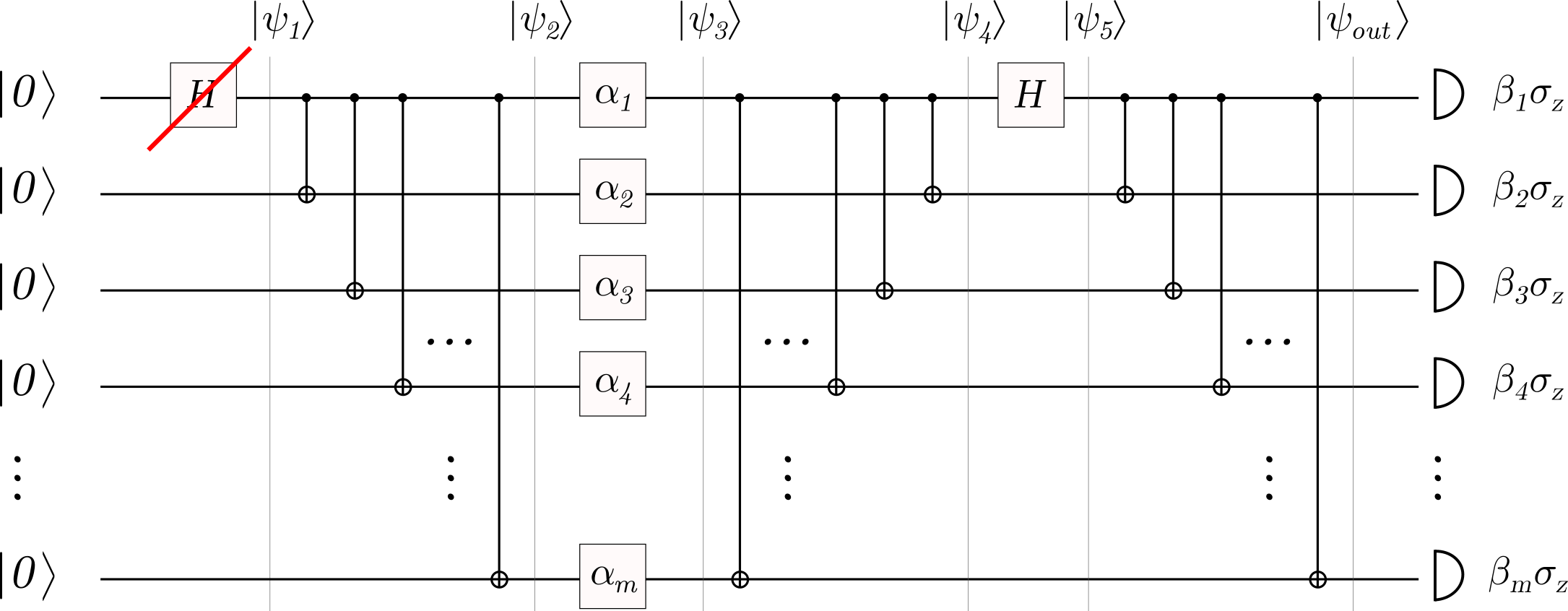}
\end{figure}
\[\ket{\psi_1}=\ket{\psi_2}=\ket{\psi_3}=\ket{\psi_4}=\ket{00\cdots 0},\qquad \ket{\psi_5}=\ket{+}\otimes\ket{0\cdots 0},\]
\[\ket{\psi_{out}}=\frac{1}{\sqrt 2}(\ket{00\cdots 0}+\ket{11\cdots 1}).\]

If we also cancel the second Hadamard gate, the output changes again: it is equal to the input.
\begin{figure}[H]
\centering
\includegraphics[width=0.82\textwidth]{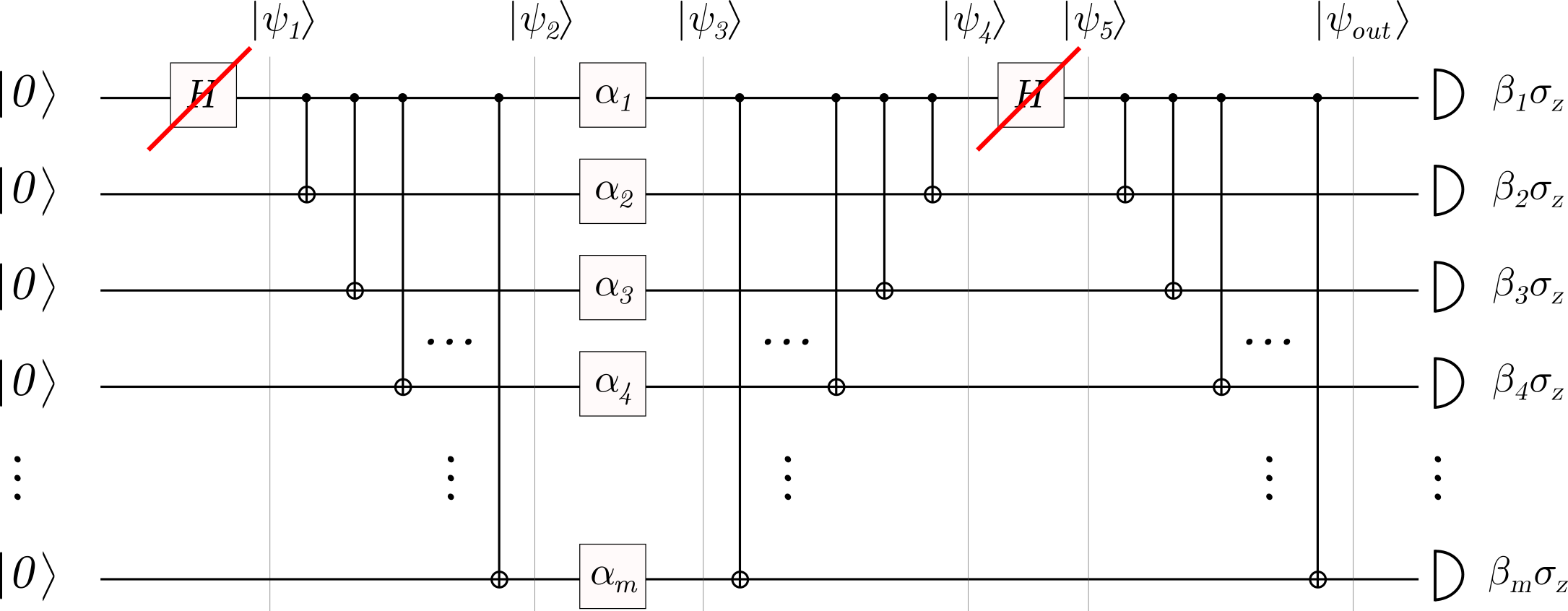}
\end{figure}
\[\ket{\psi_1}=\ket{\psi_2}=\ket{\psi_3}=\ket{\psi_4}=\ket{\psi_5}=\ket{\psi_{out}}=\ket{00\cdots 0}.\]
In both cases, a ``small'' perturbation of the circuit (i.e., a unitary transformation acting on a single qubit) has important consequences in the final state of all the qubits. 
A more robust behaviour of the output state with respect to such perturbations would be a very desirable property for a variational circuit.

\section{Architecture-independent bounds for light cones} \label{architecture}

\begin{lemma}[Architecture-independent bounds are exponential in $L$] \label{cardinalities}
Without any knowledge on the family of interactions $\mathcal{I}_U$ of a circuit $U$, we can estimate the cardinalities of the light cones as follows:
\begin{align}
|\mathcal{N}_k| &\leq 2^{L+1},\label{stimaN}\\
|\mathcal{M}_{[\ell k]}|&\leq 2^{L+1-\ell}.\label{stimaM}
\end{align}
\end{lemma}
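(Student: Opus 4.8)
The plan is to exploit the single structural constraint imposed by Definition \ref{deflayer}: within any layer each qubit is acted on by at most one two-qubit gate, so each auxiliary set $\mathcal{I}_{\ell,k}$ consists of $k$ together with at most one other qubit, whence $|\mathcal{I}_{\ell,k}|\le 2$ for all $\ell,k$. Both inequalities then follow by a downward induction on the layer index, starting from $\ell=L$, where the recursions defining the light cones bottom out.

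For the future light cone bound (\ref{stimaM}), I would use the constructive formula of Lemma \ref{constructive}. Put $M_\ell=\max_{1\le k\le m}|\mathcal{M}_{[\ell k]}|$. The base case $M_L=\max_k|\mathcal{I}_{L,k}|\le 2=2^{L+1-L}$ is immediate. For $\ell<L$, the identity $\mathcal{M}_{[\ell k]}=\bigcup_{k'\in\mathcal{I}_{\ell,k}}\mathcal{M}_{[(\ell+1)\,k']}$ together with the union bound gives $|\mathcal{M}_{[\ell k]}|\le|\mathcal{I}_{\ell,k}|\,M_{\ell+1}\le 2M_{\ell+1}$, hence $M_\ell\le 2M_{\ell+1}$; iterating down to the base case yields $M_\ell\le 2^{L+1-\ell}$, which is exactly (\ref{stimaM}).

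For the past light cone bound (\ref{stimaN}), I would first estimate $|\mathcal{J}^\ell_k|$ the same way: $|\mathcal{J}^L_k|=|\mathcal{I}_{L,k}|\le 2$, and from $\mathcal{J}^\ell_k=\bigcup_{k'\in\mathcal{J}^{\ell+1}_k}\mathcal{I}_{\ell,k'}$ one gets $|\mathcal{J}^\ell_k|\le 2|\mathcal{J}^{\ell+1}_k|$, so that $|\mathcal{J}^\ell_k|\le 2^{L-\ell+1}$ for every $\ell$. Since $k'\mapsto[\ell\,k']$ is injective we have $|\mathcal{N}^\ell_k|=|\mathcal{J}^\ell_k|$, and because $\mathcal{N}_k=\bigcup_{\ell=1}^L\mathcal{N}^\ell_k$ is a (in fact disjoint, the layers being distinct) union of these sets, summing the geometric series gives
\[
|\mathcal{N}_k|\le\sum_{\ell=1}^L|\mathcal{J}^\ell_k|\le\sum_{\ell=1}^L 2^{L-\ell+1}=2^{L+1}-2\le 2^{L+1},
\]
which is (\ref{stimaN}). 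The same bookkeeping also yields the estimate $\Sigma_n=\sum_{\ell,k}|\mathcal{M}_{[\ell k]}|^n\le m\sum_{\ell=1}^L 2^{n(L+1-\ell)}\le 2m\,2^{nL}$ quoted in \autoref{sec:lattice}, if one wishes to record it here.

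I do not expect any genuine obstacle: the only points requiring care are getting the direction of the induction right (it must start from the last layer $\ell=L$, where $\mathcal{J}^L_k$ and $\mathcal{M}_{[Lk]}$ are defined outright) and keeping track of the exponent offset, while the sole nontrivial input is the observation $|\mathcal{I}_{\ell,k}|\le 2$, which is precisely the ``disjoint qubits'' clause of Definition \ref{deflayer}.
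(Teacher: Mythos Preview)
Your proposal is correct and follows essentially the same approach as the paper's proof: both arguments rest on the observation $|\mathcal{I}_{\ell,k}|\le 2$, set up the same downward recursions $M_\ell\le 2M_{\ell+1}$ and $|\mathcal{J}^\ell_k|\le 2|\mathcal{J}^{\ell+1}_k|$ from the base layer $\ell=L$, and sum the resulting geometric series. The only cosmetic difference is the order in which you treat the two bounds and your slightly more explicit evaluation $2^{L+1}-2$ of the sum.
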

\begin{proof}
Since we use only two-qubit gates, $|\mathcal{I}_{\ell,k}|\leq 2$.\\
By (\ref{J}) and (\ref{Nl}), we have
\[|\mathcal{N}_k^\ell|=|\mathcal{J}_k^\ell|\leq
\begin{dcases}
|\mathcal{I}_{L,k}|\leq 2 &\quad \ell=L,\\
|\mathcal{J}^{\ell+1}_k|\max_{k'}|\mathcal{I}_{\ell,k'}|\leq 2|\mathcal{J}^{\ell+1}_k| &\quad \ell<L.
\end{dcases}\]
So, inductively,
\[\begin{dcases}
|\mathcal{N}_k^\ell|\leq 2\big|\mathcal{N}_k^{(\ell+1)}\big|\\
|\mathcal{N}_k^L|\leq 2
\end{dcases}
\qquad\to\qquad \big|\mathcal{N}_k^{(L-\ell)}\big|\leq 2^{\ell+1}.\]
Whence, by Definition \ref{lightcones},
\[|\mathcal{N}_k|\leq \sum_{\ell=0}^{L-1}\big|\mathcal{N}_k^{(L-\ell)}\big|\leq \sum_{\ell=1}^{L}2^\ell\leq 2^{L+1},\]
which is (\ref{stimaN}). Now, by Lemma \ref{constructive}, if $\ell< L$,
\[|\mathcal{M}_{[\ell k]}|\leq |\mathcal{I}_{\ell,k}|\max_{k'}|\mathcal{M}_{[(\ell+1)\,k']}|\leq 2\max_{k'}|\mathcal{M}_{[(\ell+1)\,k']}|.\]
and, since
\[\mathcal{M}_{[L k]}=\mathcal{I}_{L,k} \quad \to \quad |\mathcal{M}_{[L k]}|\leq 2,\]
we have the recurrence relation
\[
\begin{dcases}
\max_k|\mathcal{M}_{[\ell k]}|\leq 2\max_{k}|\mathcal{M}_{[(\ell+1)\,k]}|& \quad\ell<L\\
\max_k|\mathcal{M}_{[L k]}|\leq 2
\end{dcases}.
\]
Therefore
\[|\mathcal{M}_{[\ell k]}|\leq \max_{\bar k}|\mathcal{M}_{[\ell \bar k]}|\leq 2^{L-\ell+1}.\]
\end{proof}

\begin{corollary} The bounds of Lemma \ref{cardinalities} imply the following bounds on the maximal cardinalities and the sums defined above
\[ |\mathcal{M}|\leq 2^L\qquad |\mathcal{N}|\leq 2^{L+1}\qquad \Sigma_n\leq 2m2^{nL}\]
\end{corollary}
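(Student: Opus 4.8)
The plan is to derive both inequalities for $|\mathcal{M}|$ and $|\mathcal{N}|$ directly by taking maxima in Lemma \ref{cardinalities}, and to obtain the bound on $\Sigma_n$ by summing the per-parameter estimate \eqref{stimaM} over all layers and qubits and collapsing the resulting geometric series into the constant factor $2$.

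First I would dispatch $|\mathcal{N}|$: by Definition \ref{defmax}, $|\mathcal{N}|=\max_k|\mathcal{N}_k|$, and \eqref{stimaN} of Lemma \ref{cardinalities} gives $|\mathcal{N}_k|\le 2^{L+1}$ for every $k$, so $|\mathcal{N}|\le 2^{L+1}$ with nothing further to check. For $|\mathcal{M}|$ I would use the layer–qubit representation of Definition \ref{lqrep}: every parameter index can be written $i=[\ell m]$ with $1\le\ell\le L$, hence $|\mathcal{M}|=\max_{1\le\ell\le L}\max_{k}|\mathcal{M}_{[\ell k]}|$, and \eqref{stimaM} bounds this by $\max_{1\le\ell\le L}2^{L+1-\ell}$. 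Since the exponent is largest at $\ell=1$, this maximum equals $2^{L}$, giving $|\mathcal{M}|\le 2^L$.

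Then I would turn to $\Sigma_n$. Recalling $|\Theta|=Lm$, I rewrite the sum over parameters as a double sum over layers and qubits and insert \eqref{stimaM}:
\[
\Sigma_n=\sum_{i=1}^{|\Theta|}|\mathcal{M}_i|^n=\sum_{\ell=1}^{L}\sum_{k=1}^{m}|\mathcal{M}_{[\ell k]}|^n\le m\sum_{\ell=1}^{L}2^{(L+1-\ell)n}.
\]
Re-indexing the inner sum by $j=L+1-\ell$ (so $j$ runs from $1$ to $L$) turns it into $\sum_{j=1}^{L}2^{jn}=2^n\frac{2^{nL}-1}{2^n-1}\le\frac{2^n}{2^n-1}\,2^{nL}$, and because $2^n-1\ge 2^{n-1}$ for every integer $n\ge 1$ we have $\frac{2^n}{2^n-1}\le 2$. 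Combining, $\Sigma_n\le 2m\,2^{nL}$, as claimed.

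I do not expect any real obstacle here: the corollary is a bookkeeping consequence of Lemma \ref{cardinalities}. The only two points needing a moment of care are remembering that layer indices start at $\ell=1$, which is precisely what sharpens the $|\mathcal{M}|$ bound from $2^{L+1}$ to $2^{L}$, and verifying the crude estimate $2^n/(2^n-1)\le 2$ that absorbs the sum over layers into the prefactor.
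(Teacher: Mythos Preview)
Your proof is correct and follows essentially the same route as the paper: the first two bounds are read off directly from Lemma \ref{cardinalities}, and for $\Sigma_n$ you rewrite the sum over parameters as a double sum over layers and qubits, insert \eqref{stimaM}, and bound the resulting geometric series using $2^n/(2^n-1)\le 2$. The only cosmetic differences are your explicit re-indexing $j=L+1-\ell$ and your justification of the last inequality via $2^n-1\ge 2^{n-1}$, both of which are fine.
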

\begin{proof}
The first two bounds are immediate, while the third can be proved as follows:
\begin{align}
\nonumber\Sigma_n&=\sum_{i=1}^{Lm}|\mathcal{M}_i|^n= \sum_{k=1}^{m}\sum_{\ell=1}^{L}|\mathcal{M}_{[\ell k]}|^n\leq \sum_{k=1}^{m}\sum_{\ell=1}^{L}\left(2^{L+1-\ell}\right)^n\\
&=m\sum_{\ell=1}^{L}\left(2^{\ell}\right)^n
\nonumber \leq m\sum_{\ell=0}^{L}\left(2^n\right)^\ell = m\frac{\left(2^n\right)^{L+1}-1}{2^n-1}\\
&\leq\left(\frac{2^n}{2^n-1}\right) m2^{nL}\leq 2m2^{nL}.
\end{align}
\end{proof}

\modifica{
\section{Computation of the analytic NTK}\label{app:NTK}
In this appendix we explain how $K(x,x')\coloneqq \mathbb{E}\left[\hat K_\Theta(x,x')\right]$ can be analytically computed \rimodifica{for any given circuit with finite width}. Using the fundamental rules of quantum mechanics, a laborious but elementary linear algebra computation allows to write $f(\Theta,x)$, and therefore $\hat K_\Theta(x,x')$, as an analytic function of $\Theta$ for any fixed $x$ and $x'$. Clearly, this naive method does not provide a classically efficient algorithm, since it is based on computations in the local Hilbert spaces of each observable; such spaces typically have superpolynomial dimension with respect to the number of qubits. In the spirit of the proof of Lemma \ref{doppiastima}, we already know which kind of dependence on $\Theta$ will appear: since $\hat K_\Theta(x,x')$ is defined in terms of the derivatives of $f(\Theta,x)$ and $f(\Theta,x')$, by virtue of \eqref{eq:dependence_theta} we will obtain an expression of the form
\begin{align}
    \hat K_\Theta(x,x')&=\frac{1}{N_K(m)}\left(\sum_{v\in\{0,\pm 1\}^{Lm}}\tilde f_v(x)e^{2i\Theta\cdot v}2iv\right)\cdot \left(\sum_{w\in\{0,\pm 1\}^{Lm}}\tilde f_w(x')e^{2i\Theta\cdot w}2iw\right)\\
    &=\frac{1}{N_K(m)}\sum_{v,w\in\{0,\pm 1\}^{Lm}}\tilde f_v(x)\tilde f_w(x')e^{2i\Theta\cdot (v+w)}(-4v\cdot w),\\
\end{align} 
where the coefficients $f_v(\,\cdot\,)$ are the results of the linear algebra computations in the local Hilbert spaces of each observable.
Therefore, we see that the computation of the expectation value at initialization is elementary as well, since the parameters are uniform random variables (see Assumption \ref{uniform}):
\begin{align}
    K(x,x')&=\mathbb{E}\left[\hat K_\Theta(x,x')\right]\\
    &=\frac{-4}{N_K(m)}\sum_{v,w\in\{0,\pm 1\}^{Lm}}\tilde f_v(x)\tilde f_w(x')(v\cdot w)\prod_{j=1}^{Lm}\mathbb{E}_{\theta_j}\left[e^{2i(v_j+w_j)\theta_j}\right]\\
    &=\frac{-4}{N_K(m)}\sum_{v,w\in\{0,\pm 1\}^{Lm}}\tilde f_v(x)\tilde f_w(x')(v\cdot w)\prod_{j=1}^{Lm}\id_{v_j=-w_j}\\
    &=\frac{4}{N_K(m)}\sum_{v\in\{0,\pm 1\}^{Lm}}\tilde f_v(x)\tilde f_{-v}(x')(v\cdot v),
\end{align} 
where $\id_{v_j=-w_j}$ is the characteristic function
\begin{align}
    \id_{v_j=-w_j}=\begin{cases}
        1 & v_j=-w_j \\
        0 & v_j\neq-w_j
    \end{cases}.
\end{align}
This shows that an analytical form for $K(x,x')$ can be computed.

We stress that the algorithm outlined above will in general have a runtime that grows polynomially in the dimension of the local Hilbert spaces of Definition \ref{locH} and will not be efficient.
However, our results are not meant as an alternative algorithm to replace the sequence of quantum neural networks, but just as a means to mathematically characterize the probability distribution of the functions generated by the trained quantum neural networks.
Indeed, if the covariance at initialization and the analytic neural tangent kernel could be computed efficiently on a classical computer, the associated kernel regression algorithm would also be efficient to run on a classical computer and therefore quantum neural networks would have no advantage in the regime where our results apply.
}

\section*{Declarations}
\subsection*{Funding and/or Conflicts of interests/Competing interests}
GDP has been supported by the HPC Italian National Centre for HPC, Big Data and Quantum Computing - Proposal code CN00000013 - CUP J33C22001170001 and by the Italian Extended Partnership PE01 - FAIR Future Artificial Intelligence Research - Proposal code PE00000013 - CUP J33C22002830006 under the MUR National Recovery and Resilience Plan funded by the European Union - NextGenerationEU.
Funded by the European Union - NextGenerationEU under the National Recovery and Resilience Plan (PNRR) - Mission 4 Education and research - Component 2 From research to business - Investment 1.1 Notice Prin 2022 - DD N. 104 del 2/2/2022, from title ``understanding the LEarning process of QUantum Neural networks (LeQun)'', proposal code 2022WHZ5XH – CUP J53D23003890006.
GDP is a member of the ``Gruppo Nazionale per la Fisica Matematica (GNFM)'' of the ``Istituto Nazionale di Alta Matematica ``Francesco Severi'' (INdAM)''.\\
No funding was received by FG for conducting this study.\\

The authors have no competing interests to declare that are relevant to the content of this article.

\subsection*{Data availability}
We do not analyse or generate any datasets, because our work proceeds within a theoretical and mathematical approach.

\subsection*{Acknowledgements}
We thank Dario Trevisan for useful discussions.

\bibliography{biblio}

@article{QLazy,
   title={Quantum Lazy Training},
   volume={7},
   ISSN={2521-327X},
   url={http://dx.doi.org/10.22331/q-2023-04-27-989},
   DOI={10.22331/q-2023-04-27-989},
   journal={Quantum},
   publisher={Verein zur Forderung des Open Access Publizierens in den Quantenwissenschaften},
   author={Abedi, Erfan and Beigi, Salman and Taghavi, Leila},
   year={2023},
   month=apr, pages={989} }

@Inbook{ModPHI,
author="F{\'e}ray, Valentin
and M{\'e}liot, Pierre-Lo{\"i}c
and Nikeghbali, Ashkan",
title="Dependency graphs and mod-Gaussian convergence",
bookTitle="Mod-$\phi$ Convergence: Normality Zones and Precise Deviations",
year="2016",
publisher="Springer International Publishing",
address="Cham",
pages="95--110",
abstract="The theory of dependency graphs is a powerful toolbox to prove asymptotic normality. A dependency graph encodes the dependency structure in a family of random variables: roughly we take a vertex for each variable in the family and connect dependent random variables by edges. The idea is that, if the degrees in a sequence of dependency graphs do not grow too fast, then the corresponding variables behave as if independent and the sum of the corresponding variables is asymptotically normal. Precise normality criteria using dependency graphs have been given by Petrovskaya/Leontovich, Janson, Baldi/Rinott and Mikhailov [PL83, Jan88, BR89, Mik91]. These results are black boxes to prove asymptotic normality of sums of partially dependent variables and can be applied in many different contexts. The original motivation of Petrovskaya and Leontovich comes from the mathematical modelisation of cell populations [PL83]. On the contrary, Janson was interested in random graph theory: dependency graphs are used to prove central limit theorems for some statistics, such as subgraph counts, in G(n,{\thinspace}p) [BR89, Jan88, J{\L}R00]; see also [Pen02] for applications to geometric random graphs. The theory has then found a field of application in geometric probability [AB93, PY05, BV07]. More recently it has been used to prove asymptotic normality of pattern counts in random permutations [B{\'o}n10, HJ10]. Dependency graphs also generalise the notion of m-dependence [HR48, Ber73], widely used in statistics [Das08].",
isbn="978-3-319-46822-8",
doi="10.1007/978-3-319-46822-8_9",
url="https://doi.org/10.1007/978-3-319-46822-8_9"
}

@misc{mcdiarmid_1989, place={Cambridge}, series={London Mathematical Society Lecture Note Series}, title={On the method of bounded differences}, DOI={10.1017/CBO9781107359949.008}, booktitle={Surveys in Combinatorics, 1989: Invited Papers at the Twelfth British Combinatorial Conference}, publisher={Cambridge University Press}, author={McDiarmid, Colin}, editor={Siemons, J.Editor}, year={1989}, pages={148–188}, collection={London Mathematical Society Lecture Note Series}}

@article{Lee2020,
doi = {10.1088/1742-5468/abc62b},
url = {https://dx.doi.org/10.1088/1742-5468/abc62b},
year = {2020},
month = {12},
publisher = {IOP Publishing and SISSA},
volume = {2020},
number = {12},
pages = {124002},
author = {Jaehoon Lee and Lechao Xiao and Samuel S. Schoenholz and Yasaman Bahri and Roman Novak and Jascha Sohl-Dickstein and Jeffrey Pennington},
title = {Wide neural networks of any depth evolve as linear models under gradient descent*},
journal = {Journal of Statistical Mechanics: Theory and Experiment},
abstract = {A longstanding goal in deep learning research has been to precisely characterize training and generalization. However, the often complex loss landscapes of neural networks (NNs) have made a theory of learning dynamics elusive. In this work, we show that for wide NNs the learning dynamics simplify considerably and that, in the infinite width limit, they are governed by a linear model obtained from the first-order Taylor expansion of the network around its initial parameters. Furthermore, mirroring the correspondence between wide Bayesian NNs and Gaussian processes (GPs), gradient-based training of wide NNs with a squared loss produces test set predictions drawn from a GP with a particular compositional kernel. While these theoretical results are only exact in the infinite width limit, we nevertheless find excellent empirical agreement between the predictions of the original network and those of the linearized version even for finite practically-sized networks. This agreement is robust across different architectures, optimization methods, and loss functions.}
}

@InProceedings{PLBook,
author="Karimi, Hamed
and Nutini, Julie
and Schmidt, Mark",
editor="Frasconi, Paolo
and Landwehr, Niels
and Manco, Giuseppe
and Vreeken, Jilles",
title="Linear Convergence of Gradient and Proximal-Gradient Methods Under the Polyak-{\L}ojasiewicz Condition",
booktitle="Machine Learning and Knowledge Discovery in Databases",
year="2016",
publisher="Springer International Publishing",
address="Cham",
pages="795--811",
abstract="In 1963, Polyak proposed a simple condition that is sufficient to show a global linear convergence rate for gradient descent. This condition is a special case of the {\L}ojasiewicz inequality proposed in the same year, and it does not require strong convexity (or even convexity). In this work, we show that this much-older Polyak-{\L}ojasiewicz (PL) inequality is actually weaker than the main conditions that have been explored to show linear convergence rates without strong convexity over the last 25 years. We also use the PL inequality to give new analyses of coordinate descent and stochastic gradient for many non-strongly-convex (and some non-convex) functions. We further propose a generalization that applies to proximal-gradient methods for non-smooth optimization, leading to simple proofs of linear convergence for support vector machines and L1-regularized least squares without additional assumptions.",
isbn="978-3-319-46128-1"
}

@article{McClean_2018,
   title={Barren plateaus in quantum neural network training landscapes},
   volume={9},
   ISSN={2041-1723},
   url={http://dx.doi.org/10.1038/s41467-018-07090-4},
   DOI={10.1038/s41467-018-07090-4},
   number={1},
   pages={4812},
   journal={Nature Communications},
   publisher={Springer Science and Business Media LLC},
   author={McClean, Jarrod R. and Boixo, Sergio and Smelyanskiy, Vadim N. and Babbush, Ryan and Neven, Hartmut},
   year={2018},
   month=nov }

@article{Anschuetz2022,
title = {Quantum variational algorithms are swamped with traps},
  volume = {13},
  ISSN = {2041-1723},
  url = {http://dx.doi.org/10.1038/s41467-022-35364-5},
  DOI = {10.1038/s41467-022-35364-5},
  number = {1},
  journal = {Nature Communications},
  publisher = {Springer Science and Business Media LLC},
  author = {Anschuetz,  Eric R. and Kiani,  Bobak T.},
  year = {2022},
  month = dec 
}

@misc{napp2022quantifying,
      title={Quantifying the barren plateau phenomenon for a model of unstructured variational ans\"{a}tze}, 
      author={John Napp},
      year={2022},
      eprint={2203.06174},
      archivePrefix={arXiv},
      primaryClass={quant-ph}
}

@article{Schuld_2021,
	doi = {10.1103/physreva.103.032430},
	url = {https://doi.org/10.1103%2Fphysreva.103.032430},
	year = 2021,
	month = {03},
	publisher = {American Physical Society ({APS})},
	volume = {103},
	number = {3},
	author = {Maria Schuld and Ryan Sweke and Johannes Jakob Meyer},
	title = {Effect of data encoding on the expressive power of variational quantum-machine-learning models},
	journal = {Physical Review A}
}

@article{Cerezo_2021,
  title = {Cost function dependent barren plateaus in shallow parametrized quantum circuits},
  volume = {12},
  ISSN = {2041-1723},
  url = {http://dx.doi.org/10.1038/s41467-021-21728-w},
  DOI = {10.1038/s41467-021-21728-w},
  number = {1},
  journal = {Nature Communications},
  publisher = {Springer Science and Business Media LLC},
  author = {Cerezo,  M. and Sone,  Akira and Volkoff,  Tyler and Cincio,  Lukasz and Coles,  Patrick J.},
  year = {2021},
  month = mar 
}

@misc{marrero2021entanglement,
      title={Entanglement Induced Barren Plateaus}, 
      author={Carlos Ortiz Marrero and Mária Kieferová and Nathan Wiebe},
      year={2021},
      eprint={2010.15968},
      archivePrefix={arXiv},
      primaryClass={quant-ph}
}

@misc{you2022convergence,
      title={A Convergence Theory for Over-parameterized Variational Quantum Eigensolvers}, 
      author={Xuchen You and Shouvanik Chakrabarti and Xiaodi Wu},
      year={2022},
      eprint={2205.12481},
      archivePrefix={arXiv},
      primaryClass={quant-ph}
}

@misc{lee2018deep,
      title={Deep Neural Networks as Gaussian Processes}, 
      author={Jaehoon Lee and Yasaman Bahri and Roman Novak and Samuel S. Schoenholz and Jeffrey Pennington and Jascha Sohl-Dickstein},
      year={2018},
      eprint={1711.00165},
      archivePrefix={arXiv},
      primaryClass={stat.ML}
}

@misc{doring2021method,
      title={The Method of Cumulants for the Normal Approximation}, 
      author={Hanna D\"oring and Sabine Jansen and Kristina Schubert},
      year={2021},
      eprint={2102.01459},
      archivePrefix={arXiv},
      primaryClass={math.PR}
}

@misc{Goodfellow-et-al-2016,
    title={Deep Learning},
    author={Ian Goodfellow and Yoshua Bengio and Aaron Courville},
    publisher={MIT Press},
    note={\url{http://www.deeplearningbook.org}},
    year={2016}
}

@article{Cerezo_review,
  title = {Variational quantum algorithms},
  volume = {3},
  ISSN = {2522-5820},
  url = {http://dx.doi.org/10.1038/s42254-021-00348-9},
  DOI = {10.1038/s42254-021-00348-9},
  number = {9},
  journal = {Nature Reviews Physics},
  publisher = {Springer Science and Business Media LLC},
  author = {Cerezo,  M. and Arrasmith,  Andrew and Babbush,  Ryan and Benjamin,  Simon C. and Endo,  Suguru and Fujii,  Keisuke and McClean,  Jarrod R. and Mitarai,  Kosuke and Yuan,  Xiao and Cincio,  Lukasz and Coles,  Patrick J.},
  year = {2021},
  month = aug,
  pages = {625–644}
}

@Inbook{Kallenberg2021,
author="Kallenberg, Olav",
title="Gaussian and Poisson Convergence",
bookTitle="Foundations of Modern Probability ",
year="2021",
publisher="Springer International Publishing",
address="Cham",
pages="125--146",
abstract="Characteristic functions and Laplace transforms, equi-continuity andtightness, linear projections, null arrays, Poisson convergence, positive and symmetric terms, central limit theorem, local CLT, Lindeberg condition, Gaussian convergence, weak laws of large numbers, domain ofGaussian attraction, slow variation, Helly's selection theorem, vague and weak convergence, tightness and weak compactness, extended continuity theorem",
isbn="978-3-030-61871-1",
doi="10.1007/978-3-030-61871-1_7",
url="https://doi.org/10.1007/978-3-030-61871-1_7"
}

@BOOK{Rasmussen2005,
  title     = "Gaussian processes for machine learning",
  author    = "Rasmussen, Carl Edward and Williams, Christopher K I",
  publisher = "MIT Press",
  series    = "Adaptive Computation and Machine Learning series",
  month     =  nov,
  year      =  2005,
  address   = "London, England",
  language  = "en"
}

@misc{Romano2017,
  title     = "Counterexamples in probability and statistics",
  author    = "Romano, Joseph P and Siegel, Andrew F",
  abstract  = "Probability spaces. Random variables, densities and distribution
               functions. Moments of random variables. Properties of real
               random variables. Sequences of random variables. Conditional
               expectation, Martingales, and almost sure convergence of sums of
               independent random variables. Properties of statistical
               experiments. Construction of estimators. Optimality of
               estimators. Confidence intervals and hypothesis testing.
               Appendix: Additional counter-examples.",
  publisher = "Routledge",
  month     =  nov,
  year      =  2017
}

@BOOK{Hogg2004,
  title     = "Introduction to mathematical statistics",
  author    = "Hogg, Robert and Craig, Allen T",
  publisher = "Pearson",
  edition   =  6,
  month     =  6,
  year      =  2004,
  address   = "Upper Saddle River, NJ",
  language  = "en"
}

@article{10.1214/aop/1176991903,
author = {Svante Janson},
title = {{Normal Convergence by Higher Semiinvariants with Applications to Sums of Dependent Random Variables and Random Graphs}},
volume = {16},
journal = {The Annals of Probability},
number = {1},
publisher = {Institute of Mathematical Statistics},
pages = {305 -- 312},
keywords = {central limit theorem, Convergence in distribution, Cumulants, method of moments, Random graphs, semiinvariants},
year = {1988},
doi = {10.1214/aop/1176991903},
URL = {https://doi.org/10.1214/aop/1176991903}
}

@article{D_ring_2012,
	doi = {10.1007/s10959-012-0437-0},
	url = {https://doi.org/10.1007%2Fs10959-012-0437-0},
	year = 2012,
	month = {06},
	publisher = {Springer Science and Business Media {LLC}
},
	volume = {26},
	number = {2},
	pages = {360--385},
	author = {Hanna D\"oring and Peter Eichelsbacher},
	title = {Moderate Deviations via Cumulants},
	journal = {Journal of Theoretical Probability}
}

@article{De_Palma_2021,
	doi = {10.1109/tit.2021.3076442},
	url = {https://doi.org/10.1109%2Ftit.2021.3076442},
	year = 2021,
	month = {10},
	publisher = {Institute of Electrical and Electronics Engineers ({IEEE})},
	volume = {67},
	number = {10},
	pages = {6627--6643},
	author = {Giacomo De Palma and Milad Marvian and Dario Trevisan and Seth Lloyd},
	title = {The Quantum Wasserstein Distance of Order 1},
	journal = {{IEEE} Transactions on Information Theory}
}

@misc{williams_1991, place={Cambridge}, title={Probability with Martingales}, DOI={10.1017/CBO9780511813658}, publisher={Cambridge University Press}, author={Williams, David}, year={1991}}

@misc{vaart_1998, place={Cambridge}, series={Cambridge Series in Statistical and Probabilistic Mathematics}, title={Asymptotic Statistics}, DOI={10.1017/CBO9780511802256}, publisher={Cambridge University Press}, author={Vaart, A. W. van der}, year={1998}, collection={Cambridge Series in Statistical and Probabilistic Mathematics}}

@article{stirling,
 ISSN = {00029890, 19300972},
 URL = {http://www.jstor.org/stable/2308012},
 author = {Herbert Robbins},
 journal = {The American Mathematical Monthly},
 number = {1},
 pages = {26--29},
 publisher = {Mathematical Association of America},
 title = {A Remark on Stirling's Formula},
 urldate = {2023-07-30},
 volume = {62},
 year = {1955}
}

@article{Havl_ek_2019,
	doi = {10.1038/s41586-019-0980-2},
	url = {https://doi.org/10.1038%2Fs41586-019-0980-2},
        ISSN = {1476-4687},
	year = 2019,
	month = {3},
	publisher = {Springer Science and Business Media {LLC}
},
	volume = {567},
	number = {7747},
	pages = {209--212},
	author = {Vojt{\v{e}}ch Havl{\'{i}}{\v{c}}ek and Antonio D. C{\'{o}}rcoles and Kristan Temme and Aram W. Harrow and Abhinav Kandala and Jerry M. Chow and Jay M. Gambetta},
	title = {Supervised learning with quantum-enhanced feature spaces},
	journal = {Nature}
}

@article{Liu_2021,
  title = {A rigorous and robust quantum speed-up in supervised machine learning},
  volume = {17},
  ISSN = {1745-2481},
  url = {http://dx.doi.org/10.1038/s41567-021-01287-z},
  DOI = {10.1038/s41567-021-01287-z},
  number = {9},
  journal = {Nature Physics},
  publisher = {Springer Science and Business Media LLC},
  author = {Liu,  Yunchao and Arunachalam,  Srinivasan and Temme,  Kristan},
  year = {2021},
  month = jul,
  pages = {1013–1017}
}

@article{chernoff,
author = {Herman Chernoff},
title = {{A Measure of Asymptotic Efficiency for Tests of a Hypothesis Based on the sum of Observations}},
volume = {23},
journal = {The Annals of Mathematical Statistics},
number = {4},
publisher = {Institute of Mathematical Statistics},
pages = {493 -- 507},
year = {1952},
doi = {10.1214/aoms/1177729330},
URL = {https://doi.org/10.1214/aoms/1177729330}
}

@Misc{Slutsky,
 Author = {Slutsky, E.},
 Title = {{\"U}ber stochastische {Asymptoten} und {Grenzwerte}.},
 Year = {1925},
 Language = {German},
 HowPublished = {Metron 5, {Nr}. 3, 3-89 (1925).},
 zbMATH = {2592216},
 JFM = {51.0380.03}
}

@misc{giaquinta2010mathematical,
  title={Mathematical Analysis: An Introduction to Functions of Several Variables},
  author={Giaquinta, M. and Modica, G.},
  isbn={9780817646127},
  lccn={2009922164},
  url={https://books.google.it/books?id=0YE\_AAAAQBAJ},
  year={2010},
  publisher={Birkh{\"a}user Boston}

}

@BOOK{Fristedt1996-ps,
  title     = "A modern approach to probability theory",
  author    = "Fristedt, Bert E and Gray, Lawrence F",
  publisher = "Birkhauser Boston",
  series    = "Probability and Its Applications",
  edition   =  1997,
  month     =  12,
  year      =  1996,
  address   = "Secaucus, NJ",
  language  = "en"
}

@misc{Hanin,
  author        = {B. Hanin},
  title         = {Neural Networks and Large Width and Depth (lecture notes of the mini-course given by the author in the Rome Center on Mathematics for Modeling and Data Science at Tor Vergata in January 2023)},
  month         = {1},
  year          = {2023},
}

@misc{JGH18,
      title={Neural Tangent Kernel: Convergence and Generalization in Neural Networks}, 
      author={Arthur Jacot and Franck Gabriel and Clément Hongler},
      year={2020},
      eprint={1806.07572},
      archivePrefix={arXiv},
      primaryClass={cs.LG}
}

@misc{CB18,
      title={On Lazy Training in Differentiable Programming}, 
      author={Lenaic Chizat and Edouard Oyallon and Francis Bach},
      year={2020},
      eprint={1812.07956},
      archivePrefix={arXiv},
      primaryClass={math.OC}
}

@misc{Han18,
      title={Which Neural Net Architectures Give Rise To Exploding and Vanishing Gradients?}, 
      author={Boris Hanin},
      year={2018},
      eprint={1801.03744},
      archivePrefix={arXiv},
      primaryClass={stat.ML}
}

@misc{Han21a,
author = {Hanin, Boris},
year = {2022},
month = {4},
pages = {},
title = {Correlation Functions in Random Fully Connected Neural Networks at Finite Width}
}

@misc{Han21b,
      title={Random Neural Networks in the Infinite Width Limit as Gaussian Processes}, 
      author={Boris Hanin},
      year={2021},
      eprint={2107.01562},
      archivePrefix={arXiv},
      primaryClass={math.PR}
}

@Inbook{Neal1996,
author="Neal, Radford M.",
title="Priors for Infinite Networks",
bookTitle="Bayesian Learning for Neural Networks",
year="1996",
publisher="Springer New York",
address="New York, NY",
pages="29--53",
abstract="In this chapter, I show that priors over network parameters can be defined in such a way that the corresponding priors over functions computed by the network reach reasonable limits as the number of hidden units goes to infinity. When using such priors,there is thus no need to limit the size of the network in order to avoid ``overfitting''. The infinite network limit also provides insight into the properties of different priors. A Gaussian prior for hidden-to-output weights results in a Gaussian process prior for functions,which may be smooth, Brownian, or fractional Brownian. Quite different effects can be obtained using priors based on non-Gaussian stable distributions. In networks with more than one hidden layer, a combination of Gaussian and non-Gaussian priors appears most interesting.",
isbn="978-1-4612-0745-0",
doi="10.1007/978-1-4612-0745-0_2",
url="https://doi.org/10.1007/978-1-4612-0745-0_2"
}

@article{Schuld_2019,
	doi = {10.1103/physreva.99.032331},
	url = {https://doi.org/10.1103%2Fphysreva.99.032331},
	year = 2019,
	month = {3},
	publisher = {American Physical Society ({APS})},
	volume = {99},
	number = {3},
	author = {Maria Schuld and Ville Bergholm and Christian Gogolin and Josh Izaac and Nathan Killoran},
	title = {Evaluating analytic gradients on quantum hardware},
	journal = {Physical Review A}
}

@article{Schuld_2019a,
	doi = {10.1103/physrevlett.122.040504},
	url = {https://doi.org/10.1103%2Fphysrevlett.122.040504},
	year = 2019,
	month = {2},
	publisher = {American Physical Society ({APS})},
	volume = {122},
	number = {4},
	author = {Maria Schuld and Nathan Killoran},
	title = {Quantum Machine Learning in Feature Hilbert Spaces},
	journal = {Physical Review Letters}
}

@misc{yang2020scaling,
      title={Scaling Limits of Wide Neural Networks with Weight Sharing: Gaussian Process Behavior, Gradient Independence, and Neural Tangent Kernel Derivation}, 
      author={Greg Yang},
      year={2020},
      eprint={1902.04760},
      archivePrefix={arXiv},
      primaryClass={cs.NE}
}

@BOOK{Boucheron2013,
  title     = "Concentration inequalities",
  author    = "Boucheron, Stephane and Lugosi, Gabor and Massart, Pascal",
  publisher = "Oxford University Press",
  month     =  03,
  year      =  2013,
  address   = "London, England",
  language  = "en"
}

@BOOK{Rudin1976,
  title     = "Principles of mathematical analysis",
  author    = "Rudin, Walter",
  publisher = "McGraw-Hill Professional",
  series    = "International series in pure and applied mathematics",
  edition   =  3,
  month     =  2,
  year      =  1976,
  address   = "New York, NY",
  language  = "en"
}

@BOOK{Billingsley1999,
  title     = "Convergence of probability measures",
  author    = "Billingsley, Patrick",
  publisher = "John Wiley \& Sons",
  series    = "Wiley Series in Probability and Statistics",
  edition   =  2,
  month     =  7,
  year      =  1999,
  address   = "Nashville, TN",
  language  = "en"
}

@article{mnih2015human,
  title = {Human-level control through deep reinforcement learning},
  volume = {518},
  ISSN = {1476-4687},
  url = {http://dx.doi.org/10.1038/nature14236},
  DOI = {10.1038/nature14236},
  number = {7540},
  journal = {Nature},
  publisher = {Springer Science and Business Media LLC},
  author={Mnih, Volodymyr and Kavukcuoglu, Koray and Silver, David and Rusu, Andrei A and Veness, Joel and Bellemare, Marc G and Graves, Alex and Riedmiller, Martin and Fidjeland, Andreas K and Ostrovski, Georg and others},
  year = {2015},
  month = feb,
  pages = {529–533}
}

@article{lecun2015deep,
  title = {Deep learning},
  volume = {521},
  ISSN = {1476-4687},
  url = {http://dx.doi.org/10.1038/nature14539},
  DOI = {10.1038/nature14539},
  number = {7553},
  journal = {Nature},
  publisher = {Springer Science and Business Media LLC},
  author = {LeCun,  Yann and Bengio,  Yoshua and Hinton,  Geoffrey},
  year = {2015},
  month = may,
  pages = {436–444}
}

@misc{radford2015unsupervised,
      title={Unsupervised Representation Learning with Deep Convolutional Generative Adversarial Networks}, 
      author={Alec Radford and Luke Metz and Soumith Chintala},
      year={2016},
      eprint={1511.06434},
      archivePrefix={arXiv},
      primaryClass={cs.LG}
}

@article{schmidhuber2015deep,
  title={Deep learning in neural networks: An overview},
  author={Schmidhuber, J{\"u}rgen},
  journal={Neural networks},
  volume={61},
  pages={85--117},
  year={2015},
  publisher={Elsevier}
}

@misc{vaswani2017attention,
 author = {Vaswani, Ashish and Shazeer, Noam and Parmar, Niki and Uszkoreit, Jakob and Jones, Llion and Gomez, Aidan N and Kaiser, \L ukasz and Polosukhin, Illia},
 booktitle = {Advances in Neural Information Processing Systems},
 editor = {I. Guyon and U. Von Luxburg and S. Bengio and H. Wallach and R. Fergus and S. Vishwanathan and R. Garnett},
 pages = {},
 publisher = {Curran Associates, Inc.},
 title = {Attention is All you Need},
 url = {https://proceedings.neurips.cc/paper_files/paper/2017/file/3f5ee243547dee91fbd053c1c4a845aa-Paper.pdf},
 volume = {30},
 year = {2017}
}

@misc{you2021exponentially,
  title = 	 {Exponentially Many Local Minima in Quantum Neural Networks},
  author =       {You, Xuchen and Wu, Xiaodi},
  booktitle = 	 {Proceedings of the 38th International Conference on Machine Learning},
  pages = 	 {12144--12155},
  year = 	 {2021},
  editor = 	 {Meila, Marina and Zhang, Tong},
  volume = 	 {139},
  series = 	 {Proceedings of Machine Learning Research},
  month = 	 {7},
  publisher =    {PMLR},
  pdf = 	 {http://proceedings.mlr.press/v139/you21c/you21c.pdf},
  url = 	 {https://proceedings.mlr.press/v139/you21c.html},
  abstract = 	 {Quantum Neural Networks (QNNs), or the so-called variational quantum circuits, are important quantum applications both because of their similar promises as classical neural networks and because of the feasibility of their implementation on near-term intermediate-size noisy quantum machines (NISQ). However, the training task of QNNs is challenging and much less understood. We conduct a quantitative investigation on the landscape of loss functions of QNNs and identify a class of simple yet extremely hard QNN instances for training. Specifically, we show for typical under-parameterized QNNs, there exists a dataset that induces a loss function with the number of spurious local minima depending exponentially on the number of parameters. Moreover, we show the optimality of our construction by providing an almost matching upper bound on such dependence. While local minima in classical neural networks are due to non-linear activations, in quantum neural networks local minima appear as a result of the quantum interference phenomenon. Finally, we empirically confirm that our constructions can indeed be hard instances in practice with typical gradient-based optimizers, which demonstrates the practical value of our findings.}
}

@article{Kiani_2022,
doi = {10.1088/2058-9565/ac79c9},
url = {https://dx.doi.org/10.1088/2058-9565/ac79c9},
year = {2022},
month = {7},
publisher = {IOP Publishing},
volume = {7},
number = {4},
pages = {045002},
author = {Bobak Toussi Kiani and Giacomo De Palma and Milad Marvian and Zi-Wen Liu and Seth Lloyd},
title = {Learning quantum data with the quantum earth mover’s distance},
journal = {Quantum Science and Technology},
abstract = {Quantifying how far the output of a learning algorithm is from its target is an essential task in machine learning. However, in quantum settings, the loss landscapes of commonly used distance metrics often produce undesirable outcomes such as poor local minima and exponentially decaying gradients. To overcome these obstacles, we consider here the recently proposed quantum earth mover’s (EM) or Wasserstein-1 distance as a quantum analog to the classical EM distance. We show that the quantum EM distance possesses unique properties, not found in other commonly used quantum distance metrics, that make quantum learning more stable and efficient. We propose a quantum Wasserstein generative adversarial network (qWGAN) which takes advantage of the quantum EM distance and provides an efficient means of performing learning on quantum data. We provide examples where our qWGAN is capable of learning a diverse set of quantum data with only resources polynomial in the number of qubits.}
}

@misc{yang2021tensorI,
      title={Tensor Programs I: Wide Feedforward or Recurrent Neural Networks of Any Architecture are Gaussian Processes}, 
      author={Greg Yang},
      year={2021},
      eprint={1910.12478},
      archivePrefix={arXiv},
      primaryClass={cs.NE}
}

@misc{yang2020tensorII,
      title={Tensor Programs II: Neural Tangent Kernel for Any Architecture}, 
      author={Greg Yang},
      year={2020},
      eprint={2006.14548},
      archivePrefix={arXiv},
      primaryClass={stat.ML}
}

@misc{yang2021tensorIIb,
      title={Tensor Programs IIb: Architectural Universality of Neural Tangent Kernel Training Dynamics}, 
      author={Greg Yang and Etai Littwin},
      year={2021},
      eprint={2105.03703},
      archivePrefix={arXiv},
      primaryClass={cs.LG}
}

@misc{yang2021tensorIII,
      title={Tensor Programs III: Neural Matrix Laws}, 
      author={Greg Yang},
      year={2021},
      eprint={2009.10685},
      archivePrefix={arXiv},
      primaryClass={cs.NE}
}

@misc{yang2023tensorIVb,
      title={Tensor Programs IVb: Adaptive Optimization in the Infinite-Width Limit}, 
      author={Greg Yang and Etai Littwin},
      year={2023},
      eprint={2308.01814},
      archivePrefix={arXiv},
      primaryClass={cs.LG}
}

@misc{yang2022tensorV,
      title={Tensor Programs V: Tuning Large Neural Networks via Zero-Shot Hyperparameter Transfer}, 
      author={Greg Yang and Edward J. Hu and Igor Babuschkin and Szymon Sidor and Xiaodong Liu and David Farhi and Nick Ryder and Jakub Pachocki and Weizhu Chen and Jianfeng Gao},
      year={2022},
      eprint={2203.03466},
      archivePrefix={arXiv},
      primaryClass={cs.LG}
}

@misc{yang2023tensorVI,
      title={Tensor Programs VI: Feature Learning in Infinite-Depth Neural Networks}, 
      author={Greg Yang and Dingli Yu and Chen Zhu and Soufiane Hayou},
      year={2023},
      eprint={2310.02244},
      archivePrefix={arXiv},
      primaryClass={cs.NE}
}

@misc{yang2022featureIV,
      title={Feature Learning in Infinite-Width Neural Networks}, 
      author={Greg Yang and Edward J. Hu},
      year={2022},
      eprint={2011.14522},
      archivePrefix={arXiv},
      primaryClass={cs.LG}
}

@article{Larocca_2023,
   title={Theory of overparametrization in quantum neural networks},
   volume={3},
   ISSN={2662-8457},
   url={http://dx.doi.org/10.1038/s43588-023-00467-6},
   DOI={10.1038/s43588-023-00467-6},
   number={6},
   journal={Nature Computational Science},
   publisher={Springer Science and Business Media LLC},
   author={Larocca, Martín and Ju, Nathan and García-Martín, Diego and Coles, Patrick J. and Cerezo, Marco},
   year={2023},
   month={6},
pages={542–551} }

@article{liu2022representation,
  title = {Representation Learning via Quantum Neural Tangent Kernels},
  author = {Liu, Junyu and Tacchino, Francesco and Glick, Jennifer R. and Jiang, Liang and Mezzacapo, Antonio},
  journal = {PRX Quantum},
  volume = {3},
  issue = {3},
  pages = {030323},
  numpages = {12},
  year = {2022},
  month = {8},
  publisher = {American Physical Society},
  doi = {10.1103/PRXQuantum.3.030323},
  url = {https://link.aps.org/doi/10.1103/PRXQuantum.3.030323}
}

@article{liu2023analytic,
  title = {Analytic Theory for the Dynamics of Wide Quantum Neural Networks},
  author = {Liu, Junyu and Najafi, Khadijeh and Sharma, Kunal and Tacchino, Francesco and Jiang, Liang and Mezzacapo, Antonio},
  journal = {Phys. Rev. Lett.},
  volume = {130},
  issue = {15},
  pages = {150601},
  numpages = {7},
  year = {2023},
  month = {4},
  publisher = {American Physical Society},
  doi = {10.1103/PhysRevLett.130.150601},
  url = {https://link.aps.org/doi/10.1103/PhysRevLett.130.150601}
}

@misc{cerezo2023does,
      title={Does provable absence of barren plateaus imply classical simulability? Or, why we need to rethink variational quantum computing}, 
      author={M. Cerezo and Martin Larocca and Diego García-Martín and N. L. Diaz and Paolo Braccia and Enrico Fontana and Manuel S. Rudolph and Pablo Bermejo and Aroosa Ijaz and Supanut Thanasilp and Eric R. Anschuetz and Zoë Holmes},
      year={2023},
      eprint={2312.09121},
      archivePrefix={arXiv},
      primaryClass={quant-ph}
}

@misc{misiakiewicz2023lectures,
      title={Six Lectures on Linearized Neural Networks}, 
      author={Theodor Misiakiewicz and Andrea Montanari},
      year={2023},
      eprint={2308.13431},
      archivePrefix={arXiv},
      primaryClass={stat.ML}
}

@misc{arora2019on,
 author = {Arora, Sanjeev and Du, Simon S and Hu, Wei and Li, Zhiyuan and Salakhutdinov, Russ R and Wang, Ruosong},
 booktitle = {Advances in Neural Information Processing Systems},
 editor = {H. Wallach and H. Larochelle and A. Beygelzimer and F. d\textquotesingle Alch\'{e}-Buc and E. Fox and R. Garnett},
 pages = {},
 publisher = {Curran Associates, Inc.},
 title = {On Exact Computation with an Infinitely Wide Neural Net},
 url = {https://proceedings.neurips.cc/paper_files/paper/2019/file/dbc4d84bfcfe2284ba11beffb853a8c4-Paper.pdf},
 volume = {32},
 year = {2019}
}

@misc{you2023analyzing,
      title={Analyzing Convergence in Quantum Neural Networks: Deviations from Neural Tangent Kernels}, 
      author={Xuchen You and Shouvanik Chakrabarti and Boyang Chen and Xiaodi Wu},
      year={2023},
      eprint={2303.14844},
      archivePrefix={arXiv},
      primaryClass={quant-ph}
}

@misc{garciamartin2023deep,
      title={Deep quantum neural networks form Gaussian processes}, 
      author={Diego García-Martín and Martin Larocca and M. Cerezo},
      year={2023},
      eprint={2305.09957},
      archivePrefix={arXiv},
      primaryClass={quant-ph}
}

@misc{rad2023deep,
      title={Deep Quantum Neural Networks are Gaussian Process}, 
      author={Ali Rad},
      year={2023},
      eprint={2305.12664},
      archivePrefix={arXiv},
      primaryClass={quant-ph}
}

@article{brandao2016product,
  title={Product-state approximations to quantum states},
  author={Brandao, Fernando GSL and Harrow, Aram W},
  journal={Communications in Mathematical Physics},
  volume={342},
  pages={47--80},
  year={2016},
  publisher={Springer}
}

@article{seeger2002pac,
  title={PAC-Bayesian generalisation error bounds for Gaussian process classification},
  author={Seeger, Matthias},
  journal={Journal of machine learning research},
  volume={3},
  number={Oct},
  pages={233--269},
  year={2002}
}

\end{document}